\documentclass[twoside, openright, 11pt]{memoir}

\usepackage[utf8]{inputenc}
\usepackage[british]{babel}
\usepackage[babel]{microtype}
\usepackage{amsmath}	% In case of Charter.
\usepackage[bitstream-charter]{mathdesign}
\usepackage{inconsolata}
\usepackage{textcomp}	% For a good looking copyright symbol.
\usepackage{mathtools}
\usepackage{tikz}
\usepackage{tikz-cd}	% Typesetting commutative diagrams.
\usepackage[backend=biber, style=alphabetic, alldates=long]{biblatex}
\usepackage{csquotes}	% Needed for biblatex.
\usepackage{hyperref}	% Except for a couple of special ones, hyperref should be loaded after all other packages.
\usepackage{amsthm}	% The amsthm package needs to be loaded in between hyperref and cleveref to prevent "destination with the same identifier..." warnings. It will still compile if you do not do this, though. See http://tex.stackexchange.com/questions/25047/how-do-i-get-rid-of-particular-pdftex-warning-message .
\usepackage[capitalise]{cleveref}	% cleveref should be loaded after hyperref and amsthm.
%\usepackage{mathrsfs}	% In case of Computer Modern.

%=============================================
% Settings specific to the memoir class.
%=============================================

\setsecnumdepth{subsection}

\setstocksize{297mm}{210mm}
\settrimmedsize{\stockheight}{\stockwidth}{*}
\settrims{0mm}{0mm}
\setlrmarginsandblock{40mm}{40mm}{*}	% In order: spine and edge.
\setulmarginsandblock{46mm}{43mm}{*}	% In order: upper and lower.
\checkandfixthelayout

\raggedbottom

\makepagestyle{bicoltoruslpgp}
\makeheadrule{bicoltoruslpgp}{\textwidth}{\normalrulethickness}
\makeevenhead{bicoltoruslpgp}{\small\thepage}{\small\fscshape\leftmark}{}
\makeevenfoot{bicoltoruslpgp}{}{}{}
\makeoddhead{bicoltoruslpgp}{}{\small\rightmark}{\small\thepage}
\makeoddfoot{bicoltoruslpgp}{}{}{}
\makepsmarks{bicoltoruslpgp}{%
	\nouppercaseheads
	\createmark{chapter}{both}	{nonumber}{}{}
	\createmark{section}{right}	{shownumber}{}{\hskip.7em}
	\createplainmark{toc}{both}{\contentsname}
	\createplainmark{bib}{both}{\bibname}
}

% Give 3-digit page numbers in the TOC the space they need.
\setpnumwidth{3em}
\setrmarg{4em}
% Remove dotted lines in the TOC.

%=============================================
% Bibliography.
%=============================================

\addbibresource{bicoltoruslpgp.bib}
\nocite{*}

% A mastersthesis entry will by default make the label `type' `MA thesis'. The following changes it into `Master's thesis' instead.
\DefineBibliographyStrings{english}{%
	mathesis = {Master's thesis},
}

%=============================================
% Hyperref settings.
%=============================================

\hypersetup{
	pdftitle={Bicoloured torus loop groups},
	pdfauthor={Shan H. Shah},
	colorlinks=true,
	citecolor=magenta	% The same colour as that of linked URL's.
}

%=============================================
% Layout back sides of the bastard title and title pages.
%=============================================

% Tab space for the back sides of the bastard title and title pages.
\newlength{\tabspace}
\setlength{\tabspace}{2em}

%=============================================
% Lists.
%=============================================

%=============================================
% Theorems, definitions, etcetera.
%=============================================

\theoremstyle{plain}	% Italic.
\newtheorem{thm}{Theorem}[section]
\crefname{thm}{Theorem}{Theorems}
\newtheorem{prop}[thm]{Proposition}
\newtheorem{lem}[thm]{Lemma}
\newtheorem{cor}[thm]{Corollary}
\newtheorem*{mainthm}{Main theorem}

\theoremstyle{definition}	% Upright.
\newtheorem{dfn}[thm]{Definition}
\crefname{dfn}{Definition}{Definitions}
\newtheorem{constr}[thm]{Construction}
\newtheorem{ingreds}[thm]{Ingredients}
\newtheorem{exmp}[thm]{Example}
\crefname{exmp}{Example}{Examples}

\theoremstyle{remark}	% Upright.
\newtheorem{rmk}[thm]{Remark}
\crefname{rmk}{Remark}{Remarks}

\newtheorem*{refs}{References}

\newcommand{\defn}[1]{\emph{#1}}

%=============================================
% Use the eucal package for a new calligraphic font. It needs to be scaled down a bit so as to match with Bitstream Charter. This is copied from the preamble of The Scrivener's Pursuing Stacks manuscript.
%=============================================

\DeclareFontFamily{U}{pseus}{\skewchar\font'60}
\DeclareFontShape{U}{pseus}{m}{n}{%
     <-6> s * [0.95] eusm5%
    <6-8> s * [0.95] eusm7%
    <8-> s * [0.95] eusm10%
}{}
\DeclareFontShape{U}{pseus}{b}{n}{%
     <-6> s * [0.95] eusb5%
    <6-8> s * [0.95] eusb7%
    <8-> s * [0.95] eusb10%
}{}
\DeclareMathAlphabet\EuScript{U}{pseus}{m}{n}
\SetMathAlphabet\EuScript{bold}{U}{pseus}{b}{n}
\renewcommand{\mathcal}{\EuScript}

%=============================================
% Math.
%=============================================

\renewcommand{\epsilon}{\varepsilon}
\renewcommand{\phi}{\varphi}

\newcommand{\field}[1]{\mathbb{#1}}	% Fields.

\newcommand{\ZZ}{\field Z}
\newcommand{\QQ}{\field Q}
\newcommand{\RR}{\field R}
\newcommand{\CC}{\field C}

\newcommand{\hilb}[1]{\mathscr{#1}}	% Hilbert spaces.
\newcommand{\liealg}[1]{\mathfrak{#1}}	% Lie algebra's.
\newcommand{\net}[1]{\mathcal{#1}}	% Conformal nets.

\newcommand{\liegp}[1]{\mathrm{#1}}	% Lie groups.

\newcommand{\lieU}{\liegp U}
\newcommand{\liePU}{\liegp{PU}}
\newcommand{\liePSU}{\liegp{PSU}}
\newcommand{\lieSL}{\liegp{SL}}

\newcommand{\lieSU}{\liegp{SU}}
\newcommand{\phasegp}{\mathbb{T}}

\newcommand{\cat}[1]{\mathsf{#1}}	% Categories.
\newcommand{\Ab}{\cat{Ab}}
\newcommand{\INT}{\cat{INT}}
\newcommand{\LieGp}{\cat{Lie}}
\newcommand{\VN}{\cat{VN}}

\newcommand{\abs}[1]{\lvert#1\rvert}
\newcommand{\conj}[1]{\overline{#1}}
\newcommand{\dd}{\mathop{}\!\mathrm{d}}
\newcommand{\norm}[1]{\lVert#1\rVert}

\DeclareMathOperator{\Ad}{Ad}
\DeclareMathOperator{\Aut}{Aut}
\DeclareMathOperator{\avg}{avg}
\DeclareMathOperator{\Bi}{Bi}
\DeclareMathOperator{\ch}{ch}	% (Graded) character.
\DeclareMathOperator{\Co}{Co}	% Conway's groups.

\DeclareMathOperator{\Diff}{Diff}
\DeclareMathOperator{\disc}{disc}

\DeclareMathOperator{\Hom}{Hom}
\DeclareMathOperator{\id}{id}
\let\Im\relax	% Undo the definition of \Im before redefining it.
\DeclareMathOperator{\Im}{Im}	% Imaginary part of a complex number.
\DeclareMathOperator{\Ind}{Ind}	% Induction of a representation.
\DeclareMathOperator{\Mob}{M\ddot ob}	% Entering ö directly will make this character disappear. The command \ddot is a workaround.
\DeclareMathOperator{\Pth}{Pth}
\DeclareMathOperator{\rank}{rank}
\let\Re\relax	% Undo the definition of \Re before redefining it.
\DeclareMathOperator{\Re}{Re}	% Real part of a complex number.
\DeclareMathOperator{\Rep}{Rep}
\DeclareMathOperator{\Res}{Res}	% Restriction of a representation.
\DeclareMathOperator{\Rot}{Rot}
\DeclareMathOperator{\Sect}{Sect}
\DeclareMathOperator{\sgn}{sgn}
\DeclareMathOperator{\supp}{supp}
\DeclareMathOperator{\Sym}{Sym}
\DeclareMathOperator{\TMF}{TMF}
\DeclareMathOperator{\Vir}{Vir}
\DeclareMathOperator{\vN}{vN}

	% (Co)chains.
	% (Co)boundaries.
	% (Co)cycles.
	% (Co)homology.

\newcommand{\wh}{\circ}		% Circle filled with white.
\newcommand{\bl}{\bullet}		% Circle filled with black.
\newcommand{\whbl}{{\circ\!\text{\tiny /}\!\!\:\bullet}}
\newcommand{\whblsm}{{\circ\!/\!\!\:\bullet}}

\newcommand{\bound}{\mathscr{B}}
\newcommand{\match}{\text{m}}
\newcommand{\opp}{\text{op}}
\newcommand{\Leech}{\text{24}}
\newcommand{\Eta}{\mathrm H}
\newcommand{\unitnet}{\underline{\CC}}
\newcommand{\centL}{\widetilde L}
\newcommand{\centP}{\widetilde P}
\newcommand{\centV}{\widetilde V}
\newcommand{\centext}{\widetilde{\protect\phantom x}}	% \protect is necessary because \phantom is fragile. It will give error messages when used in section titles otherwise.
\newcommand{\centiota}{\tilde \iota}
\newcommand{\centker}{\widetilde \ker}

%=============================================
% Tikz.
%=============================================

\usetikzlibrary{decorations.markings}
\usetikzlibrary{arrows}

\newcommand{\tikzmath}[2][]{%
	\vcenter{\hbox{\begin{tikzpicture}[#1]#2%
                    \end{tikzpicture}}}%
}

\newcommand{\circpq}{%
	{\tikzmath[scale=0.04]{
		\filldraw[black] (0,0) circle (0.4)
						 (0,-8) circle (0.4);
	}}
}
\newcommand{\circl}{%
	{\tikzmath[scale=0.04, line width=rule_thickness]{
		\draw (0,0) arc (90:270:4);
		\filldraw[black] (0,0) circle (0.4)
						 (0,-8) circle (0.4);
	}}
}
\newcommand{\circlsm}{%
	{\tikzmath[scale=0.027, line width=rule_thickness]{
		\draw (0,0) arc (90:270:4);
		\filldraw[black] (0,0) circle (0.4)
						 (0,-8) circle (0.4);
	}}
}
\newcommand{\circr}{%
	{\tikzmath[scale=0.04, line width=rule_thickness]{
		\draw (0,0) arc (-90:90:4);
		\filldraw[black] (0,0) circle (0.4)
						 (0,8) circle (0.4);
	}}
}
\newcommand{\circrsm}{%
	{\tikzmath[scale=0.027, line width=rule_thickness]{
		\draw (0,0) arc (-90:90:4);
		\filldraw[black] (0,0) circle (0.4)
						 (0,8) circle (0.4);
	}}
}
\newcommand{\circldir}{%
	{\tikzmath[scale=0.04, line width=rule_thickness]{
		\draw[postaction={decorate},
			  decoration={
				  markings,
					  mark=at position 0.5 with {
						  \arrow[xshift=1.5, >={Straight Barb}]{>}
					  }
				  }
			 ] (0,0) arc (90:270:4);
		\filldraw[black] (0,0) circle (0.4)
						 (0,-8) circle (0.4);
	}}
}
\newcommand{\circrdir}{%
	{\tikzmath[scale=0.04, line width=rule_thickness]{
		\draw[postaction={decorate},
			  decoration={
				  markings,
					  mark=at position 0.5 with {
						  \arrow[xshift=1.5, >={Straight Barb}]{>}
					  }
				  }
			 ] (0,0) arc (-90:90:4);
		\filldraw[black] (0,0) circle (0.4)
						 (0,8) circle (0.4);
	}}
}

%=============================================
% Tikz-cd.
%=============================================

\tikzcdset{column sep/normal=1.2em, % Narrower columns in tikz-cd diagrams.
			arrow style=math font}	% In case of Charter.

% As explained in tikzlibrarycd.code.tex, the following command is necessary in order to use mapsto and mapsfrom in tikz-cd.

%=============================================
% Start of the document.
%=============================================

\begin{document}

%=============================================
% Pages before the TOC.
%=============================================

\frontmatter

\pagestyle{empty}

{
\centering
\null	% To make the subsequent \vfill work.
\vfill
{\huge Bicoloured torus loop groups}\bigskip\bigskip

{\Large Shan H. Shah}\par
\vfill
\vfill
\vfill
}
\cleartorecto

\renewcommand{\abstractname}{Preamble}
\begin{abstract}
This document is a modified version of the author's PhD thesis as it was defended on 15 March 2017 at Utrecht University in the Netherlands. The changes made from the latter, officially accepted variant, available from the Utrecht University Library, are minor. The page layout has been adjusted and hyperlinks have been added for the benefit of screen reading. The pages before the table of contents have been remodelled, an abstract has been added and the author's curriculum vitae has been removed. Lastly, errors of typography and in the design of figures have been corrected. No changes in the scientific content have been made and the organisation of the material has not been altered either.
\end{abstract}
\cleartoverso

{
\small
\setlength{\parindent}{0pt}
\textit{Promotor:}\medskip

Prof.dr. E.P. van den Ban\bigskip\medskip

\textit{Copromotor:}\medskip

Dr. A.G. Henriques\bigskip\medskip

\textit{Roles of promotor and copromotor:}\medskip

The primary and secondary PhD supervisors, as these terms are commonly understood in the international research community, are the people listed above as `copromotor' and `promotor', respectively. The reason for this reversal is that Dutch law requires a PhD thesis to be approved by a full professor.\bigskip\medskip

\textit{Assessment committee:}\medskip

Prof.dr. M.N. Crainic, Universiteit Utrecht

Prof.dr. C.L. Douglas, University of Oxford

Prof.dr. K.-H. Neeb, Friedrich-Alexander-Universität Erlangen-Nürnberg

Prof.dr. K.-H. Rehren, Georg-August-Universität Göttingen

Prof.dr. C. Schweigert, Universität Hamburg\bigskip\medskip

\textit{Funding:}\medskip

The research that resulted in this thesis was supported by a Graduate Funding Grant for the project `The classification of abelian Chern--Simons theories', obtained from the `Geometry and Quantum Theory' (\textsc{gqt}) mathematics cluster of the Netherlands Organisation for Scientific Research (\textsc{nwo}).
\vfill
}
\cleartorecto

\renewcommand{\abstractname}{Abstract}
\begin{abstract}
For every finite dimensional Lie group one can consider the group of all smooth loops on it, called its \emph{loop group}. Such loop groups have long been studied for, among other reasons, their relations to conformal field theories and topological quantum field theories. In this thesis we introduce a new generalisation of the loop groups associated to tori, which we name \emph{bicoloured torus loop groups}. An element of such a group consists mainly of two paths, each lying on two (possibly different) fixed tori. The definition of the group additionally imposes a constraint on the endpoints of these paths with respect to each other. We study these bicoloured torus loop groups by demonstrating that they have a theory analogous to that of ordinary torus loop groups. We namely construct certain central extensions of them by the group $\lieU(1)$ and prove that they have many properties in common with specific, known, central extensions of ordinary loop groups. Notably, we are able to construct and classify the irreducible, positive energy representations of these extensions.
\end{abstract}
\cleartorecto

%=============================================
% The TOC and beyond.
%=============================================

\pagestyle{bicoltoruslpgp}

\microtypesetup{protrusion=false}
\tableofcontents*
\microtypesetup{protrusion=true}

\mainmatter

\chapter{Introduction}
\label{chap:intro}

The material collected and developed in this thesis is motivated by the following question:
\begin{quote}
Given that certain centrally extended loop groups and their positive energy representations give rise to conformal nets, does there exist a generalisation of the theory of loop groups which allows one to analogously construct defects between conformal nets?
\end{quote}
We focus exclusively on a special class of loop groups, namely the \defn{torus loop groups} which give rise to \defn{lattice conformal nets}, and manage to make progress towards answering the above question. Our results can be summarised by the claim
\begin{quote}
Torus loop groups generalise to so called \defn{bicoloured torus loop groups}. The latter enjoy many properties one expects them to have for constructing defects between lattice conformal nets.
\end{quote}

Our first aim in this Introduction will be to explain the various terms used in the above question. We begin with a short review of von Neumann algebras in \cref{sec:vnalgs} because these feature prominently in the definition of conformal nets that follows next in \cref{sec:confnets}. In that section we list some (mathematical) motivations for the theory of conformal nets and basic examples of these objects. We notably give an overview of the construction of certain conformal nets from central extensions of loop groups.

We then turn to introducing the notion of a defect between two conformal nets in \cref{sec:defects} of which we present elementary examples. The scarcity of these examples will immediately spark the question of finding other methods of constructing defects. To answer this question we formulate in \cref{sec:bicol} a loose, hypothetical notion of a bicoloured loop group as a method of producing conformal net defects. Our proposal starts properly in \cref{sec:mainresults}, where we make precise the special case of a definition of a bicoloured torus loop group and summarise the results we obtain in this thesis on this new notion.

We refer to \cref{sec:notsconvs} and \cref{chap:app} for various notations, conventions and definitions used in this Introduction.

\begin{rmk}[Advice to the reader]
The question posed above and the material on von Neumann algebras, conformal nets and defects treated in \crefrange{sec:vnalgs}{sec:defects} has been included only to explain the context of our studies, and it will not make a relevant reappearance until \cref{sec:defectslatticenets}. The bulk of our investigations in \cref{chap:unicol,chap:bicol} does not strictly require knowledge of these matters. The reader may therefore safely skip \crefrange{sec:vnalgs}{sec:defects} without missing technical background needed for the rest of the thesis.
\end{rmk}

\section{Von Neumann algebras}
\label{sec:vnalgs}

Von Neumann algebras were introduced by F.~Murray and J.~von Neumann in a series of papers in the 1930s and 40s with applications to representation theory and physics in mind. Their relevance to this Introduction is their appearance in the definitions of conformal nets and defects we give in \cref{sec:confnets,sec:defects}, respectively, and they will make a brief reappearance in \cref{chap:outlook}. We assume a passing familiarity with the weak, the strong operator and ultraweak topology on the algebra of bounded operators on a Hilbert space from the reader, as treated in for example \parencite[Chapters IV and IX]{conway:funcana}.

Let $\hilb H$ be a Hilbert space and denote by $\bound(\hilb H)$ its algebra of bounded operators. For a subset $S \subseteq \bound(\hilb H)$, its \defn{commutant} $S'$ is defined as the set of bounded operators on $\hilb H$ which commute with all operators of $S$. We list some algebraic properties of taking commutants. It reverses inclusions of subsets, and we have $S \subseteq S''$ and $S''' = S'$. This means that the operation of taking the commutant does not continue indefinitely. It is always true that $S'$ is a unital subalgebra of $\bound(\hilb H)$ and if $S$ is \defn{self-adjoint}, meaning that $S^* = S$, then $S'$ is a unital $*$-subalgebra.

A topological property of forming the commutant is that $S'$, and hence also $S''$ is weakly closed in $\bound(\hilb H)$. The following important result strengthens this fact dramatically when $S$ is a unital $*$-algebra.

\begin{thm}[Von Neumann's Bicommutant Theorem]
The double commutant $A''$ of a unital $*$-subalgebra $A$ of $\bound(\hilb H)$ is equal to both the closure of $A$ in the weak, and in the strong operator topology.
\end{thm}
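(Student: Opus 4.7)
The plan is to prove a chain of inclusions
\[
A \;\subseteq\; \overline{A}^{\,s} \;\subseteq\; \overline{A}^{\,w} \;\subseteq\; A'' \;\subseteq\; \overline{A}^{\,s},
\]
where $\overline{A}^{\,s}$ and $\overline{A}^{\,w}$ denote the closures of $A$ in the strong and weak operator topologies, respectively. The first inclusion is trivial, the second holds because the strong topology is finer than the weak one, and the third follows from the preceding remark that $A''$ is weakly closed together with $A \subseteq A''$. All the work is therefore in the final inclusion $A'' \subseteq \overline{A}^{\,s}$.

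To establish this, I would fix $T \in A''$ together with a basic strong neighbourhood of $T$, determined by $\epsilon > 0$ and vectors $\xi_1, \dotsc, \xi_n \in \hilb H$, and look for $a \in A$ satisfying $\norm{(T-a)\xi_i} < \epsilon$ for every $i$. First I would settle the case $n = 1$, which is the geometric heart of the proof. Given $\xi \in \hilb H$, consider the closed subspace $K \coloneqq \overline{A\xi}$ and let $P \in \bound(\hilb H)$ denote the orthogonal projection onto it. Because $A$ is a unital $*$-algebra, both $K$ and $K^\perp$ are $A$-invariant, so $P$ commutes with $A$, that is, $P \in A'$. Since $T \in A''$, the operator $T$ commutes with $P$ and therefore preserves $K$. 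The unitality of $A$ forces $\xi \in K$, hence $T\xi \in K = \overline{A\xi}$, which yields the desired approximation.

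For arbitrary $n$, I would reduce to the previous case by the standard amplification trick. Consider the Hilbert space $\hilb H^{\oplus n}$ and the diagonal embedding $A \hookrightarrow \bound(\hilb H^{\oplus n})$ sending $a$ to $\diag(a,\dotsc,a)$; call the image $A^{(n)}$. A direct matrix computation shows that the commutant $(A^{(n)})'$ consists of all $n \times n$ matrices with entries in $A'$, and taking commutants once more yields $(A^{(n)})'' = \{\diag(b,\dotsc,b) : b \in A''\}$. Since $A^{(n)}$ is again a unital $*$-subalgebra, applying the $n=1$ case to $\diag(T,\dotsc,T) \in (A^{(n)})''$ and the vector $(\xi_1,\dotsc,\xi_n) \in \hilb H^{\oplus n}$ produces the required $a \in A$.

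The main obstacle is the $n = 1$ case, and within it the non-obvious point that $P \in A'$; this is exactly where the self-adjointness hypothesis on $A$ enters, by way of $A$-invariance of the orthogonal complement $K^\perp$. The unitality hypothesis is needed only to secure $\xi \in K$, while the commutation identification $(A^{(n)})' = M_n(A')$ in the amplification step is a routine block-matrix calculation that I would not spell out in detail.
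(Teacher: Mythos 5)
Your proof is correct and complete: the chain of inclusions, the projection argument onto $\overline{A\xi}$ using self-adjointness and unitality, and the amplification to $\hilb H^{\oplus n}$ with $(A^{(n)})' = M_n(A')$ constitute the standard argument. The paper itself does not prove this theorem (it is quoted as background, with the reader referred to the literature), and your argument is exactly the classical proof found in those references, so there is nothing to compare beyond noting that your write-up fills in a proof the paper omits.
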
 

That is, the algebraic operation of taking the double commutant of a unital $*$-subalgebra of $\bound(\hilb H)$ can be expressed in two, equivalent topological terms. It implies that the following definition is unambiguous.

\begin{dfn}[(Concrete) von Neumann algebras]
\label{dfn:concvnalgs}
A \defn{(concrete) von Neumann algebra} is a unital $*$-subalgebra $A$ of the algebra $\bound(\hilb H)$ of bounded operators on a Hilbert space $\hilb H$ which equivalently
\begin{itemize}
\item is closed in the weak operator topology on $\bound(\hilb H)$,
\item is closed in the strong operator topology on $\bound(\hilb H)$, or
\item satisfies $A'' = A$.
\end{itemize}
\end{dfn}

One can furthermore use the Bicommutant Theorem to prove

\begin{cor}
The smallest von Neumann algebra $\vN(S)$ containing a subset $S \subseteq \bound(\hilb H)$ equals both $(S \cup S^*)''$ and the closure of the unital $*$-subalgebra generated by $S$ in either the weak or strong operator topology.
\end{cor}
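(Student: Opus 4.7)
The plan is to reduce all three descriptions to the double commutant of a single conveniently chosen $*$-subalgebra and then invoke the Bicommutant Theorem. First I would introduce the unital $*$-subalgebra $A \subseteq \bound(\hilb H)$ algebraically generated by $S$ (that is, finite linear combinations of finite products of elements of $S \cup S^* \cup \{I\}$). The goal then becomes to prove the two identifications $\vN(S) = A''$ and $A'' = (S \cup S^*)''$; the two topological descriptions drop out immediately after, since the Bicommutant Theorem applied to $A$ equates $A''$ with the weak and strong operator closures of $A$.

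For $\vN(S) = A''$, I would argue existence and minimality separately. For existence, note that $A$ is self-adjoint, so $A'$ is a unital $*$-subalgebra, and therefore so is $A''$; moreover $A''$ is weakly closed as a commutant, hence $A''$ is a von Neumann algebra containing $S$. For minimality, let $B$ be any von Neumann algebra with $S \subseteq B$. Since $B$ is a unital $*$-subalgebra it contains all elements of $A$, whence $B' \subseteq A'$ and $A'' \subseteq B'' = B$, the last equality being the defining property of a von Neumann algebra.

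For $A'' = (S \cup S^*)''$, the key observation is that taking the commutant of a set is unchanged under taking unital $*$-algebraic closures: any operator that commutes with every element of $S \cup S^*$ automatically commutes with all sums, scalar multiples, products, and adjoints formed from them, so $(S \cup S^*)' = A'$. Taking commutants once more yields the claimed equality.

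I do not anticipate any genuine obstacle here, as the Bicommutant Theorem does essentially all the work; the only point that needs a moment's care is verifying that $A''$ really is a von Neumann algebra in the sense of \cref{dfn:concvnalgs}, which hinges on the self-adjointness of $A$ (and hence of $A'$). Everything else is a straightforward double application of the inclusion-reversing and idempotent-after-two-steps behaviour of the commutant operation recorded just before the Bicommutant Theorem.
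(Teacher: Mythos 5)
Your argument is correct and follows exactly the route the paper intends (it states the result as a consequence of the Bicommutant Theorem without writing out the details): pass to the unital $*$-algebra $A$ generated by $S$, note $(S \cup S^*)' = A'$ because the generating set is self-adjoint, identify $\vN(S)$ with $A''$ by the minimality argument, and apply the Bicommutant Theorem to $A$ for the two topological descriptions. No gaps; the one point needing care --- that commuting with $S \cup S^*$ gives commuting with adjoints of words only because that set is itself closed under adjoints --- is handled correctly by your choice of generating set.
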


So elements of $\vN(S)$ can be thought of as limits (in the weak or strong operator topology) of polynomials with (non-commuting) variables in $S \cup S^*$.

We give some examples of von Neumann algebras. The most obvious one is of course $\bound(\hilb H)$ itself. Slightly more interesting are

\begin{exmp}[Algebras of essentially bounded functions]
Let $(X, \Sigma, \mu)$ be a $\sigma$-finite measure space and $L^2(X) := L^2(X, \Sigma, \mu)$ the Hilbert space of all measurable functions $f\colon X \to \CC$ which are square integrable, modulo functions that are zero almost everywhere. Define next $L^\infty(X) := L^\infty(X, \Sigma, \mu)$ to be the set of measurable functions $f\colon X \to \CC$ which are essentially bounded, divided out by the same equivalence relation as for $L^2(X)$. With the pointwise multiplication, the $*$-operation $f^* := \overline f$ and the essential supremum norm this is a $C^*$-algebra. There is a unital, isometric $*$-homomorphism
\[
L^\infty(X) \hookrightarrow \bound\bigl(L^2(X)\bigr), \qquad f \mapsto m_f,
\]
where $m_f$ is left multiplication by $f$. It can then be proved that $L^\infty(X)' = L^\infty(X)$ (see \parencite[Theorem 6.6]{conway:funcana}), which implies that $L^\infty(X)$ is a von Neumann subalgebra of $\bound(L^2(X))$.
\end{exmp}

The above are examples of \defn{abelian} von Neumann algebras. It can in fact be shown that \emph{all} abelian von Neumann algebras on a separable Hilbert space are of this form (see \parencite[Theorem 7.8]{conway:funcana}). This is the reason why the general theory of von Neumann algebras is sometimes referred to as `non-commutative measure theory'.

The von Neumann algebras that are relevant to us in this Introduction are far from abelian, though. Their construction will instead be more similar in spirit to that of

\begin{exmp}[Group von Neumann algebras]
Let $Q\colon G \to \lieU(\hilb H)$ be a unitary representation of a topological group $G$. Then $Q(G)'$ and $Q(G)''$ are both von Neumann algebras. When $G$ is locally compact and $Q_L$ and $Q_R$ are the left, respectively, right regular representation with respect to Haar measures on $G$ we call $Q_L(G)''$ and $Q_R(G)''$ the left and right \defn{group von Neumann algebras} of $G$. They have the special feature that they are each others commutant \parencite[Proposition VII.3.1]{takesaki:volII}. An elementary example showing that a group von Neumann algebra reflects properties of $G$ is the fact that if $G$ is countable then it is amenable if and only if the algebra $Q_L(G)''$ is the weak closure of an ascending sequence of finite-dimensional $*$-subalgebras.
\end{exmp}

In order to define homomorphisms of von Neumann algebras it is useful to introduce a fourth, `coordinate-free' definition of the latter, which can be shown to be equivalent to the ones listed in \cref{dfn:concvnalgs}:

\begin{dfn}[(Abstract) von Neumann algebras]
(Taken from \parencite[Definition A.1]{bartels:confnetsI}.)
An \defn{(abstract) von Neumann algebra} is a topological, unital $*$-algebra $A$ (we do not require the multiplication to be continuous) for which there exists an injective unital $*$-homomorphism $A \hookrightarrow \bound(\hilb H)$ for some Hilbert space $\hilb H$ that is a homeomorphism onto the image of $A$ and such that the image of $A$ is closed with respect to the ultraweak topology on $\bound(\hilb H)$.

A \defn{homomorphism} between two (abstract) von Neumann algebras is a unital, continuous $*$-homomorphism.
\end{dfn}

We close this section with a brief discussion of the \defn{standard form} of a von Neumann algebra. Recall that, given a $C^*$-algebra $A$ and a positive state $f$ on it, the so called \textsc{gns}-construction allows one to build a Hilbert space $L^2(A, f)$ and a left action $Q$ of $A$ on $L^2(A, f)$ such that $Q$ is cyclic for some cyclic vector $\Omega$ and $f(a) = \langle Q(a) \Omega, \Omega \rangle$ for all $a \in A$. This construction is unique in the sense that it only depends on $f$.

However, when $A$ is a von Neumann algebra and $f$ a faithful, continuous state, remarkably, the \textsc{gns}-construction is endowed with extra structures sufficient to characterise $L^2(A, f)$ even independently from $f$. We call it the \defn{standard form} of $A$ and denote it by $L^2 A$. We refer to \parencite[46]{bartels:confnetsI} and \parencite{haagerup:stdform} for a listing of these characterising structures and we only point out one of them here, namely that $L^2 A$ is an $A$--$A$-bimodule. In general, $A$ might not have any faithful, continuous states, so that the \textsc{gns}-construction can not be used as a model for the standard form. A different, more widely applicable model can be found in \parencite[Section IX.1]{takesaki:volII}.

\section{Conformal nets}
\label{sec:confnets}

The notion of a conformal net has its origins in the field of \defn{algebraic quantum field theory} (\textsc{aqft})---a topic initiated in the 1960s by R.~Haag and D.~Kastler which seeks to describe and study quantum field theories (\textsc{qft}s) in a mathematically rigourous way through operator algebraic methods. More specifically, conformal nets are one possible mathematical model, based on general axioms formulated by Haag and Kastler, of a class of \textsc{qft}s called \defn{chiral conformal field theories} (chiral \textsc{cft}s). We refer to \parencite{araki:aqft} for general background material on \textsc{aqft} and to \parencite{kong:algstructs} for a physical motivation of the study of conformal nets, and we continue by giving their definition.

Consider the topological group
\[
\lieSU(1,1) := \Biggl\{
\begin{bmatrix*}[r]
\alpha & \beta \\
\overline\beta & \overline\alpha
\end{bmatrix*}
\Biggm\vert
\alpha, \beta \in \CC, \quad \abs\alpha^2 - \abs\beta^2 = 1
\Biggr\}.
\]
It acts by orientation-preserving diffeomorphisms on $S^1$ by setting for $g \in \lieSU(1,1)$ and\footnote{This is the sole exception we make in this thesis on our convention of denoting points on $S^1$ as $\theta$, $p$ or $q$.} $z \in S^1$,
\[
g\cdot z := \frac{\alpha z + \beta}{\overline \beta z + \overline \alpha}.
\]
Clearly, the central subgroup $\{\pm I\} \subseteq \lieSU(1,1)$ then acts trivially and we therefore get a well-defined action of the quotient group $\liePSU(1,1) := \lieSU(1,1)/\{\pm I\}$. We call $\liePSU(1,1)$ the \defn{Möbius group} and abbreviate its notation to $\Mob$. The group $\Rot(S^1)$ of counterclockwise rotations of $S^1$ embeds into $\Mob$ via
\[
\phi_\theta \mapsto
\begin{bmatrix*}[l]
e^{\pi i \theta} & 0 \\
0 & e^{-\pi i \theta} 
\end{bmatrix*},
\]
where $\phi_\theta$ is the rotation along an angle $\theta \in [0,1]$.

All representations considered here are strongly continuous and unitary, and some elementary notions regarding them are presented in \cref{app:posenergyreps}. In this Introduction we will also have use for projective representations.

\begin{dfn}
A \defn{projective representation} of a topological group $G$ on a Hilbert space $\hilb H$ is a continuous homomorphism of $G$ into the quotient group $\liePU(\hilb H) := \lieU(\hilb H)/\phasegp$, where $\phasegp$ is the topological group of complex numbers of modulus~$1$. Here, we give $\liePU(\hilb H)$ the quotient topology inherited from the strong operator topology on $\lieU(\hilb H)$.
\end{dfn}

Denote by $\INT_{S^1}$ the poset of subintervals of $S^1$ and for a Hilbert space $\hilb H$ by $\VN_{\hilb H}$ the poset of von Neumann subalgebras of $\bound(\hilb H)$, both ordered by inclusion.

\begin{dfn}
\label{dfn:concconfnet}
A \defn{(concrete, positive energy) conformal net} consists of a Hilbert space $\hilb H$ called the \defn{vacuum sector}, a unit vector $\Omega \in \hilb H$ called the \defn{vacuum vector}, a map of posets
\begin{equation}
\label{eq:cnet-posetmap}
\net A\colon \INT_{S^1} \to \VN_{\hilb H}, \qquad I \mapsto \net A(I)
\end{equation}
and a representation $U$ of $\Mob$ on $\hilb H$ satisfying the following properties:
\begin{enumerate}
\item \defn{(Locality)} if $I, J \in \INT_{S^1}$ are intervals with disjoint interiors, then the algebras $\net A(I)$ and $\net A(J)$ commute,
\item \defn{(Diffeomorphism covariance)} $U$ extends to a projective representation (which we will still denote by $U$) of $\Diff_+(S^1)$ on $\hilb H$ such that\footnote{Even though $U(\phi)$ for $\phi \in \Diff_+(S^1)$ might not be a well-defined operator on $\hilb H$, the two demands that we ask of $U(\phi)$ in this axiom remain unambiguous.}
\[
U(\phi) \net A(I) U(\phi)^* = \net A\bigl(\phi(I)\bigr)
\]
for all $\phi \in \Diff_+(S^1)$ and $I \in \INT_{S^1}$, and, moreover, if $\phi$ has support in $I$ then $U(\phi)$ commutes with $\net A(I')$,
\item \defn{(Positivity of energy)} the restriction of $U$ to the subgroup $\Rot(S^1)$ of $\Mob$ is of positive energy in the sense of \cref{dfn:posenergyrep},
\item \defn{(Vacuum axiom)} the vector $\Omega$ is invariant under the action $U$ of $\Mob$ and cyclic for the von Neumann algebra $\vN(\cup_{I \subseteq S^1} \net A(I))$.
\end{enumerate}
When $I \subseteq J$ is an inclusion of intervals the corresponding inclusion $\net A(I) \hookrightarrow \net A(J)$ of von Neumann algebras is called an \defn{isotony homomorphism}.

An \defn{isomorphism} from $\net A$ to another conformal net $\net A'$ is an isomorphism of Hilbert spaces $\hilb H \to \hilb H'$ which sends $\Omega$ to $\Omega'$ and intertwines the respective poset maps.
\end{dfn}

When referring to a conformal net we will often omit the data $\hilb H$, $\Omega$ and $U$ and simply write $\net A$, as we already did in the definition of an isomorphism of nets.

\begin{rmk}
We do not include the extension of $U$ to $\Diff_+(S^1)$ as a datum because it is shown in \parencite[Theorem 5.5]{carpi:uniqueness} that, under an additional, mild assumption, the restriction to $\Mob$ and the listed other properties already determine this extension uniquely. Those authors even speculate that that assumption is not necessary.
\end{rmk}

\begin{rmk}
For the definition of an isomorphism of conformal nets we do not require the isomorphism of Hilbert spaces to also intertwine the representations of $\Mob$ because it is a (highly non-trivial) fact that this holds automatically (see \parencite[499]{kawahigashi:classiflocalc1}).
\end{rmk}

\begin{rmk}
The usage of the word `net' is slightly strange because a conformal net is not a net in the point-set topological sense. The poset $\INT_{S^1}$ namely does not form a directed set because two intervals that cover $S^1$ do not have an upper bound in $\INT_{S^1}$. The alternative term \defn{conformal pre-cosheaf} that some authors use might be more suitable.
\end{rmk}

The definition of a conformal net involves by definition a Hilbert space on which the von Neumann algebras are represented. When the algebras can be made to act on a different Hilbert space as well, this is given its own name.

\begin{dfn}
\label{dfn:sector}
Let $\net A$ be a conformal net. A \defn{sector}\footnote{The literature often uses this term for an isomorphism class of what we call sectors. We follow the terminology of \parencite{bartels:confnetsI} instead.} $Q$ of $\net A$ consists of a Hilbert space $\hilb K$ and a family of von Neumann algebra homomorphisms $Q_I\colon \net A(I) \to \bound(\hilb K)$ for all $I \in \INT_{S^1}$ which is compatible with the isotony homomorphisms of $\net A$, meaning that $Q_I|_{\net A(J)} = Q_J$ if $J \subseteq I$. A \defn{morphism} from $Q$ to another sector $Q'$ with underlying Hilbert space $\hilb K'$ is a bounded linear map $\hilb K \to \hilb K'$ which intertwines the homomorphisms $Q_I$ and $Q_I'$ for all $I \in \INT_{S^1}$.
\end{dfn}

\begin{rmk}
One may ask why the definition of a sector does not include an action of $\Mob$ which, just like for the vacuum sector, intertwines covariantly with the algebra homomorphisms $Q_I$ and restricts to a positive energy representation of $\Rot(S^1)$. The answer is that this has been shown to hold automatically by \parencite[Theorem 5]{dantoni:sectorsconfcovariance} and \parencite[Theorem 3.8]{weiner:sectorsposenergy}, with the subtlety added that on a general sector usually only the universal covering group of $\Mob$ acts.
\end{rmk}

\subsection{Examples of conformal nets}
\label{subsec:confnets-exmps}

Let us present some basic examples of families of conformal nets. Our discussions will be deliberately cursory. We start with a family that does not have the focus of this thesis, but whose relatively simple construction will serve to illustrate features common to the constructions of the other, more complicated families.

\begin{exmp}[Heisenberg nets]
\label{exmp:heisnets}
For every finite-dimensional real vector space $F$ with a non-degenerate, symmetric, positive definite bilinear form $\langle \cdot, \cdot \rangle\colon F \times F \to \RR$ on it there exists an associated \defn{Heisenberg} conformal net $\net A_F$. We outline its construction and we refer to \parencite{dong:latticeconfnets} and \parencite{bischoff:models} for details.

One first forms the \defn{loop group} $LF := C^\infty(S^1, F)$ of all smooth maps from $S^1$ to $F$, equipped with the point-wise multiplication. It admits a canonical decomposition $LF \xrightarrow{\sim} F \oplus VF$, where $VF$ is the real vector space of all loops in $LF$ whose average over $S^1$ is zero. We will explain in \cref{subsec:unicol-Vtirreps} how the form $\langle \cdot, \cdot \rangle$ together with the holomorphic structure on the unit disc can be used to turn $VF$ into a complex pre-Hilbert space---hence making $LF$ a topological group. 

Next, the form $\langle \cdot, \cdot \rangle$ is used to construct a central extension $\centL F$ of $LF$ by the group $\phasegp$ through an explicit continuous $2$-cocycle on $LF$. The extension satisfies the so called \defn{disjoint-commutativity} property: if $\centL_I F$ denotes for an interval $I \subseteq S^1$ the pre-image in $\centL F$ of those loops in $LF$ with support in $I$, then $\centL_I F$ and $\centL_J F$ are commuting subgroups of $\centL F$ whenever $I$ and $J$ have disjoint interiors. The left action of $\Diff_+(S^1)$ on $LF$ given by precomposition with inverses of circle diffeomorphisms lifts to $\centL F$ in such a way that if $\phi \in \Diff_+(S^1)$ then $\phi \cdot \centL_I F = \centL_{\phi(I)} F$ and if $\phi$ has support in $I$ then it acts trivially on $\centL_{I'} F$.

In particular, $\Rot(S^1)$ acts on $\centL F$ and it therefore makes sense to discuss its \defn{positive energy} representations. For every $\alpha \in F$ there exists a particular irreducible such representation $W_\alpha$ with underlying Hilbert space $\hilb S_\alpha$ and it turns out that every irreducible, positive energy representation of $\centL F$ is isomorphic to one of this form up to the character by which the central subgroup $\phasegp\subseteq \centL F$ acts. A general positive energy representation is hence a direct sum of these $W_\alpha$'s.

The representation $\hilb S_0$, which we call the \defn{vacuum representation} of $\centL F$ in anticipation, carries even more structure than the action $R$ of $\Rot(S^1)$ that is part of the positive energy property. There exists a projective representation $U$ of $\Diff_+(S^1)$ on $\hilb S_0$ whose interaction with $W_0$ is described by the intertwining relation
\[
U(\phi) W_0\bigl(\centL_I F\bigr) U(\phi)^* = W_0\bigl(\centL_{\phi(I)} F\bigr)
\]
for all $I \in \INT_{S^1}$. It restricts to an honest representation of $\Mob$ fixing a certain unit vector $\Omega \in \hilb S_0$ and extends $R$. (See \parencite[Section 5]{segal:unitary} and \parencite[Subsection 5.3.2]{vromen:circle} for these facts.)

We now create for every interval $I \in \INT_{S^1}$ the von Neumann algebra
\[
\net A_F(I) := \vN\Bigl(W_0\bigl(\centL_I F\bigr)\Bigr)
\]
acting on the Hilbert space $\hilb S_0$. That is, we take the $*$-closed set $W_0(\centL_I F)$ of unitary operators and then form its von Neumann-algebraic completion inside $\bound(\hilb S_0)$. This obviously defines a map of posets $\net A_F$ as in~\eqref{eq:cnet-posetmap}. We claim that this is a conformal net. The locality axiom is deduced from the disjoint-commutative property of $\centL F$. Given the cited results, diffeomorphism covariance, positivity of energy and invariance of $\Omega$ of course also follow. Finally, the cyclicity of $\Omega$ holds because $\Omega$ is cyclic for $\centL F$ and this group is generated by its subgroups $\centL_I F$. This construction of $\net A_F$ only makes use of the representation $W_0$. The role of the other positive energy representations of $\centL F$ is that each of them can be equipped with the structure of a sector of the net $\net A_F$.

Heisenberg nets derive their name from the fact that the restriction of $\centL F$ to $VF$ is a \defn{Heisenberg group} in the sense of \cref{dfn:heisgp}.
\end{exmp}

\begin{exmp}[Lattice nets]
\label{exmp:latticenets}
The contents of this thesis are exclusively motivated by the class of \defn{lattice nets}. The rough idea behind creating one of them is to replace the real vector space $F$ in \cref{exmp:heisnets} by a different real Lie group, namely a torus $T$. Furthermore, the role of the $\RR$-valued form $\langle \cdot, \cdot \rangle$ on $F$ is taken over by the structure of an even, positive definite lattice on the free, finite rank $\ZZ$-module $\ker(\exp)$, where $\exp\colon \liealg t \to T$ is the exponential map on the Lie algebra $\liealg t$ of $T$. (See \cref{app:lattices} for an introduction to lattices.) The construction of the lattice net then proceeds very similar to that of a Heisenberg net, except that the fact that a torus is not simply-connected introduces many new complications. (Essentially, the construction of a lattice net breaks up into two steps: first forging the Heisenberg net $\net A_{\liealg t}$ and, next, enlarging $\net A_{\liealg t}$ appropriately to form the lattice net. We will not emphasise this viewpoint, though.) Note that we might as well take an even lattice $\Lambda$ as the primary datum and then define $T := \Lambda \otimes_\ZZ \phasegp$.

We begin again by forming the loop group $LT := C^\infty(S^1, T)$, which can be made into a topological group. It is not connected: each connected component is labelled by an element of $\Lambda$ which signifies how (many times, when $T$ is $1$-dimensional) a loop winds around $T$. The bi-additive form on $\Lambda$ is then, together with some minor extra data, used to construct a $\phasegp$-central extension $\centL T$ through a $2$-cocycle on $LT$ that is continuous on its identity component and similar to, but more complex than the one on $LF$. Up to non-canonical isomorphism, the extension is independent of the aforementioned extra data. It is disjoint-commutative and admits an action of $\Diff_+(S^1)$ with identical properties as for $\centL F$. A major difference of $\centL T$ with $\centL F$ is that the former possesses only finitely many isomorphism classes of irreducible, positive energy representations after fixing the character by which $\phasegp \subseteq \centL T$ acts. These classes are namely labelled by the \defn{discriminant group} $D_\Lambda := \Lambda^\vee/\Lambda$ of $\Lambda$.

We single out a particular representation denoted by $\Ind W_0$ of $\centL T$ of the type just mentioned, representing the class corresponding to $0 \in D_\Lambda$, and name it the \defn{vacuum representation}. It carries a projective representation of $\Diff_+(S^1)$, intertwining with $\Ind W_0$ and satisying the same properties as mentioned in \cref{exmp:heisnets}. One then constructs a conformal net $\net A_\Lambda$ via the same method as for a Heisenberg net. (We conjecture that this procedure works for an odd lattice as well, but that this will lead to a $\ZZ/2\ZZ$-graded net in the sense of \parencite[Definition 3.7]{douglas:geomstring}.) The other positive energy representations of $\centL T$ carry the structure of a sector of $\net A_\Lambda$ (see \parencite[Proposition 3.15]{dong:latticeconfnets}).

\cref{chap:unicol} of this thesis is devoted to explaining the details on $\centL T$ and its representation theory. For example, the crucial formula for the cocycle defining it is stated in~\eqref{eq:unicol-cocycle}. We refer to \parencite{segal:unitary}, \parencite{pressley:loopgps}, \parencite{buchholz:currentalg} \parencite{staszkiewicz}, \parencite{dong:latticeconfnets} and \parencite{bischoff:models} for further material. (We warn that the statement of \parencite[Proposition (9.5.14)]{pressley:loopgps} is likely wrong for representations that are not irreducible because the direct sum of two projective representations is in general not again a projective representation.) Some of these listed authors do not explicitly state that their constructions are valid for an arbitrary even, positive definite lattice, instead of only for one of $ADE$-type.
\end{exmp}

\begin{exmp}[Affine Kac--Moody nets]
\label{exmp:affinenets}
An \defn{affine Kac--Moody net} $\net A_{G,k}$ is associated to a pair of a compact, connected, simple, simply-connected Lie group $G$ and a choice of a multiple of a certain normalisation of the Killing form on the Lie algebra of $G$ by a positive integer $k$ called a \defn{level}. The literature sometimes refers to this as a \defn{loop group net}, but we believe this to be slightly confusing since a Heisenberg or lattice net deserves that name as well.

The construction of this net starts again with the loop group $LG$, that is, $C^\infty(S^1, G)$. The level $k$ then determines a $\phasegp$-central extension $\centL G$. Unlike in the situations of \cref{exmp:heisnets,exmp:latticenets}, $\centL G$ is not topologically trivialisable as a $\phasegp$-bundle over $LG$. It is therefore not possible to build $\centL G$ by prescribing a globally continuous $2$-cocycle on $(LG)^{\times 2}$---continuity in a neighbourhood around $(1,1) \in (LG)^{\times 2}$ is the best one can hope for. Instead, \parencite[Sections 4.4 and 4.5]{pressley:loopgps}, \parencite[Section III.3]{gabbiani:opalgs} and \parencite[Example 2.6]{waldorf:transgres} present means to build $\centL G$ in a more geometric manner. Again, $\centL G$ satisfies disjoint-commutativity and it carries a $\Diff_+(S^1)$-action with desirable properties regarding supports of loops.

Depending on the group $G$, there exist various methods for constructing the irreducible, positive energy representations of $\centL G$ and their intertwining projective $\Diff_+(S^1)$-actions. A uniform procedure is explained in \parencite[Section 4.\textsc{c}]{bartels:confnetsI} which involves building representations of a dense subalgebra of the complexified Lie algebra of $\centL G$ first and integrating these next using the techniques in \parencite{toledanolaredo:integrating}. Up to isomorphism, $\centL G$ has only finitely many irreducible, positive energy representations after fixing the character by which $\phasegp \subseteq \centL G$ acts (see \parencite[Theorem (9.3.5)]{pressley:loopgps}). (We again warn about the absence of the irreducibility assumption in \parencite[Theorems (9.3.1)(v) and (13.4.2)]{pressley:loopgps}.)

The further construction of the affine Kac--Moody net $\net A_{G,k}$ using a vacuum representation of $\centL G$ is formally similar to that of Heisenberg and lattice nets and we refer for its details to \parencite[Section 4.\textsc c]{bartels:confnetsI}. It is worth mentioning that there exists a precise relationship between affine Kac--Moody and lattice nets: when $G$ is \defn{simply-laced} $\net A_{G,1}$ is isomorphic to a net associated to a lattice of $ADE$-type (see \parencite[Proposition 3.19]{bischoff:models}).

We named affine Kac--Moody nets so because the dense subalgebra of the complexified Lie algebra of $\centL G$ mentioned earlier is a \defn{Kac--Moody algebra} of \defn{affine type}.
\end{exmp}

\begin{rmk}
The positive (definite) assumptions in \cref{exmp:heisnets,exmp:latticenets,exmp:affinenets} are not needed for the constructions of the central extensions of the respective loop groups. They are, however, necessary and sufficient for the natural bilinear forms on their representation spaces to be positive definite, so that the latter can be completed into Hilbert spaces.
\end{rmk}

Loop groups are not the only source of conformal nets:

\begin{exmp}[Virasoro nets]
For certain real numbers $c$ one can define a so called \defn{Virasoro net} $\net A_{\Vir,c}$. The range of admissible values of $c$ is the union of the closed half-line $[1,\infty)$ with a particular infinite discrete set in the interval $[0, 1)$, namely
\[
\biggl\{1 - \frac{6}{m(m+1)} \biggm\vert m = 2,3,4,\ldots\biggr\}.
\]

The general idea of constructing Virasoro nets is to replace the loop groups in \cref{exmp:heisnets,exmp:latticenets,exmp:affinenets} with $\Diff_+(S^1)$. There exists a particular non-trivial $\phasegp$-central extension, named the \defn{Virasoro--Bott group}, of $\Diff_+(S^1)$ which is characterised by the fact that its restriction over the subgroup $\liePSU(1,1)$ of $\Diff_+(S^1)$ is trivial. Because it is topologically trivialisable as a $\phasegp$-bundle over $\Diff_+(S^1)$ it can be defined by a continuous $2$-cocycle on $\Diff_+(S^1)$ called the \defn{Thurston--Bott cocycle} (see \parencite[Section II.2]{khesin:geometryinfdimgps}).

One way to build irreducible, positive energy representations of the Virasoro--Bott group is similar to the method mentioned in \cref{exmp:affinenets}: the integration of certain representations of a dense subalgebra of its complexified Lie algebra using the results of, for example, \parencite{toledanolaredo:integrating}. The peculiar restrictions on the values of $c$ are the result of investigating when these Lie algebra representations are unitary. We refer to \parencite{carpi:virasororepth} and \parencite[Section 3.3]{weiner:thesis} for further information on the construction of Virasoro nets. A proof that $\Diff_+(S^1)$ is generated by its subgroups of diffeomorphisms supported in an interval can be found in \parencite[Proposition 1.2]{loke}. The irreducible representations of $\net A_{\Vir, c}$ have been classified in \parencite[Corollary 3.9]{weiner:sectorsposenergy} and \parencite{weiner:localequivalencediffS1}.

Virasoro nets play a special role in the general theory of conformal nets because it can be shown that every net, thanks to the axiom of diffeomorphism covariance and the \defn{Haag duality} theorem, contains some Virasoro net. In turn, Virasoro nets cannot contain subnets strictly themselves \parencite{carpi:virasorominimal}. This is why (at least for $c < 1$) they are sometimes named \defn{minimal models} in the physics literature.
\end{exmp}

The examples we listed above by no means exhaust all conformal nets. They have in common that they are constructed in terms of `external' data, such as a loop group or the group $\Diff_+(S^1)$ and a choice of level. There exist, however, also plenty of constructions one can perform `internally' to the category of conformal nets, meaning that they take one or more nets as input to produce another. Examples of such constructions are direct sums, tensor products, extensions, orbifolds, mirror extensions and coset constructions. Using these one can construct many more examples of nets. We will not discuss these techniques further.

A final method of producing conformal nets we mention is the recent work \parencite{carpi:voasandconfnets} in which those authors show how to construct nets from different mathematical models of chiral \textsc{cft}s: the unitary \defn{vertex algebras}.% (Even though it is not clear whether for example applying their procedure to lattice vertex algebras gives the lattice conformal nets that Dong-Xu construct.)

The technique of orbifolding shows that a naive classification of all conformal nets is as infeasible as that of all finite groups. Nevertheless, successes have been booked on the `relative' problem of classifying nets that contain a fixed one when the inclusion satisfies a certain finiteness condition. See for example \parencite{kawahigashi:classiflocalc1} for a tabulation of all nets for which the embedded Virasoro net $\net A_{\Vir,c}$ satisfies $c<1$.

\subsection{Mathematical applications of conformal nets}
\label{subsec:confnets-applic}

While the early literature on conformal nets has a large focus on answering questions motivated by physical considerations, their theory found relevance to areas closer to pure mathematics also. We mention three of these.

\begin{description}
\item[Sporadic groups and Moonshine] Even though the original definition of a vertex algebra in \parencite{borcherds:vertalgs} heavily used notions inspired by the physics literature, R.~Borcherds was at the time not motivated by the question of giving a mathematical formalisation of a chiral \textsc{cft} \parencite{borcherds:vertalgsmotiv}. He instead attempted to axiomatise, and exhibit further structure on certain constructions made by I.~Frenkel, V.~Kac, J.~Lepowsky and A.~Meurman in order to define and study the so called \defn{Monster vertex algebra}. It is a vertex algebra of which the automorphism group is precisely the \defn{Monster group}---the largest of the $26$ sporadic finite simple groups---and it played an important role in Borcherds' eventual proof in \parencite{borcherds:monstmoonshine} of J.H.~Conway and S.P.~Norton's main \defn{Monstrous Moonshine conjecture}.

Because (unitary) vertex algebras and conformal nets are both roughly mathematical models of the same physical notion, it is reasonable to expect that the Monster vertex algebra has a conformal net counterpart. Such a net has indeed been constructed in \parencite{kawahigashi:netsfromframedvoas} by taking an orbifold subnet of the conformal net associated to the Leech lattice followed by a net extension. A different construction has been given in \parencite{carpi:voasandconfnets}, in which its authors also construct a conformal net analogue of G.~Höhn's \defn{Baby Monster vertex algebra} of which the automorphism group is the \defn{Baby Monster}---the second largest of the sporadic finite simple groups.

\item[Modelling topological field theories] The category $\Sect \net A$ of sectors of a conformal net $\net A$ carries far more structure than we indicated in this Introduction.
Most significantly, there is a way of taking the tensor product of two sectors which makes $\Sect \net A$ into a \defn{tensor category}. That this tensor product is very different from the naive one already follows from the facts that it is in general not symmetric, but braided, and that the vacuum sector of $\net A$ is a monoidal unit object.

When $\net A$ satisfies a certain finiteness condition $\Sect \net A$ is of an even more special type: it is a \defn{modular tensor category} (\textsc{mtc}) (see \parencite[Corollary 37]{kawahigashi:modularity} and \parencite[Theorem 3.9]{bartels:confnetsII}). Modular tensor categories are `rare' mathematical objects (in the words of \parencite{maier:equivmodcats}) in the sense that few methods are known for producing them. One motivation for studying them is that each \textsc{mtc} gives rise to a \defn{$3$-dimensional topological quantum field theory} ($3$d \textsc{tqft}) and hence to an invariant of knots and smooth $3$-manifolds. The objects of the \textsc{mtc} then correspond to the bulk line operators of the \textsc{tqft} and the morphisms can be interpreted as local operators sitting at the junction between two bulk line operators.

Lattice nets are known to satisfy this finiteness criterion \parencite[Corollary 3.19]{dong:latticeconfnets} and the same is conjectured to hold for affine Kac--Moody nets, even though this has only been shown in particular cases (see for example \parencite[Theorem 4.18]{bartels:confnetsI} and the references cited therein). The \textsc{tqft}s arising from lattice and affine Kac--Moody nets are expected to equal the \defn{abelian Chern--Simons}, and \defn{(ordinary) Chern--Simons theories}, respectively. (Because these theories are not defined mathematically, this claim contains ample room for interpretation.)

This relation between finite conformal nets and $3$d \textsc{tqft}s is in general many-to-one, though: there might exist non-isomorphic conformal nets of which their categories of sectors are equivalent as \textsc{mtc}s and hence produce the same \textsc{tqft}s. For example, we conjecture following the work of \parencite{hohn:vertalgsgenera} that this occurs for nets associated to lattices that lie in the same \defn{genus}.

\item[Defining elliptic cohomology geometrically] In the last quarter of the 20th century a new family of generalised cohomology theories was discovered in the field of algebraic topology---the so called \defn{elliptic cohomology theories}. There is in general no reason for an arbitrary generalised cohomology theory to have an interpretation in terms of the geometry of the spaces one evaluates it on in the same way that, for example, topological $K$-theory can be defined using vector bundles over the spaces. However, a web of conjectures has gradually been formed suggesting that the elliptic cohomology theories could have a quantum field-theoretic, geometric definition. Based on joint work with A.~Bartels, C.~Douglas and A.~Henriques have started a project investigating whether a certain `universal' elliptic cohomology theory, named $\TMF$, can be modelled via conformal nets (see \parencite{douglas:tmfconfnets} for a survey of some of their results). An early success they and B.~Janssens achieved in unpublished work is a definition in terms of conformal nets of the \defn{String groups}. These groups are related to $\TMF$ in a way that is similar to the relation between the Spin groups and real topological $K$-theory.
\end{description}

\section{Defects between conformal nets}
\label{sec:defects}

Up to this point we have mostly been discussing the theory of conformal nets as it is described in the more traditional literature. We now turn our focus to recent work of A.~Bartels, C.~Douglas and A.~Henriques which revolves around their new notion of a \defn{defect} between conformal nets. Before giving its definition, we first introduce an alternative definition of a conformal net devised by those authors which simultaneously is a `coordinate-free' formulation of \cref{dfn:concconfnet} and mildly generalises it.

Denote by $\INT$ the category of which the objects are intervals\footnote{See \cref{sec:notsconvs} for our definition of an interval.} and the morphisms are smooth embeddings and write $\VN$ for the category of which the objects are (abstract) von Neumann algebras and the morphisms are either von Neumann algebra homomorphisms or antihomomorphisms. If $I \in \INT$ is an interval, $\Diff_+(I)$ stands for the group of its orientation-preserving diffeomorphisms.

\begin{dfn}
\label{dfn:abstrconfnet}
(Taken from \parencite[Definition 1.1]{bartels:confnetsI}.)
An \defn{(abstract) conformal net} is a continuous covariant functor $\net A\colon \INT \to \VN$ sending orientation-preserving and reversing embeddings to injective algebra homomorphisms and antihomomorphisms, respectively, which satifies the following properties:
\begin{enumerate}
\item \defn{(Locality)} if $I \hookrightarrow K$ and $J \hookrightarrow K$ are embeddings of intervals of which the images have disjoint interiors, then the images of the corresponding homomorphisms or antihomomorphisms $\net A(I) \hookrightarrow \net A(K)$ and $\net A(J) \hookrightarrow \net A(K)$ commute inside $\net A(K)$,
\item \defn{(Inner covariance)} if $\phi \in \Diff_+(I)$ is the identity in a neighbourhood of the endpoints of $I$, then there exists a unitary operator $U \in \net A(I)$ such that $\Ad(U) = \net A(\phi)$,
\item \defn{(Existence of a vacuum sector)} assume that $I \subsetneq K$ is a pair of an interval and a properly included subinterval such that $I$ contains a boundary point $p$ of $K$ and write $\bar I$ for the same manifold as $I$, but equipped with the reversed orientation. Then $\net A(I)$ and $\net A(\bar I)$ both act on the left of the standard form $L^2\net A(K)$---the former via the homomorphism $\net A(I) \hookrightarrow \net A(I)$ and the latter through the homomorphism $\net A(\bar I) \xrightarrow{\sim} \net A(I)^\opp \hookrightarrow \net A(K)^\opp$.

We then demand that the left action of the algebraic tensor product $\net A(I) \otimes \net A(\bar I)$ on $L^2\net A(K)$ extends to an action of the algebra $\net A(I \cup_p \bar I)$ along the homomorphism $\net A(I) \otimes \net A(\bar I) \to \net A(I \cup_p \bar I)$. Here, we mean by $I \cup_p \bar I$ any interval obtained by gluing $I$ and $\bar I$ along $p$ with a smooth structure that extends those of $I$ and $\bar I$ and such that the involution which swaps $I$ and $\bar I$ is smooth.
\end{enumerate}
A \defn{morphism} between two abstract conformal nets is defined to be a natural transformation.
\end{dfn}

The meaning of the term \defn{continuous} in the above definition can be found in \parencite{bartels:confnetsI}.

\begin{rmk}
The authors of \parencite{bartels:confnetsI} include two additional axioms in their definition of an abstract conformal net, namely those of \defn{strong additivity} and \defn{splitness}. We did not do so both to simplify the definition for the uninitiated reader and because one would otherwise be excluding some natural, non-pathological examples of conformal nets. For example, we did not include strong additivity since this property is not satisfied by Virasoro nets $\net A_{\Vir,c}$ when $c > 1$ \parencite{buchholz:haagduality}. However, we warn that many structural results in \parencite{bartels:confnetsI} do rely on these two axioms.
\end{rmk}

It is at first sight far from clear in what way the \cref{dfn:abstrconfnet} of an abstract conformal net is a generalisation of that of a concrete, positive energy net in \cref{dfn:concconfnet}. Three features in which they differ immediately stand out:
\begin{itemize}
\item for an abstract net one no longer posits the existence of an ambient Hilbert space on which the von Neumann algebras associated to intervals act. The reason for this is that the \defn{Reeh--Schlieder} and \defn{Bisognano--Wichmann theorem} imply that the axioms of a concrete, positive energy net $\net A$ are sufficiently strong to reconstruct the vacuum sector as the standard form $L^2 \net A(K)$ for \emph{any} interval $K \subseteq S^1$. Conversely, the purpose of the vacuum sector axiom for an abstract net $\net A$ is to ensure that, given an interval $K \subseteq S^1$, the standard form $L^2 \net A(K)$ not only carries actions of $\net A(K)$ and $\net A(K')$ via its $\net A(K)$--$\net A(K)$-bimodule structure, but also of all algebras $\net A(I)$ with $I \subseteq S^1$,
\item an abstract net assigns algebras to all intervals, instead of only to those that are embedded in $S^1$. This is not a stronger demand, though, since, given a concrete, positive energy net, one can assign algebras to any interval $I$ by choosing embeddings of $I$ in $S^1$. Using the diffeomorphism covariance, these algebras can then be glued together into a single algebra which is independent of the choice of embedding,
\item there is no assumption of positivity of energy in the definition of an abstract net, and it is in this direction that such a net is a genuine generalisation of a concrete, positive energy net. The motivation given in \parencite{bartels:confnetsI} for dropping this requirement is to assemble conformal nets into a symmetric monoidal category with duals, and the natural dual of a positive energy net is of negative energy.
\end{itemize}

Even though some axioms were left out in \cref{dfn:concconfnet,dfn:abstrconfnet} for the sake of exposition, we approximately cite

\begin{thm}
(See \parencite[Proposition 4.9]{bartels:confnetsI} for the precise statement.)
Every concrete, positive energy conformal net can be extended to an abstract conformal net.
\end{thm}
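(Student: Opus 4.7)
The plan is to construct a functor $\net A' \colon \INT \to \VN$ from the data $(\net A, \hilb H, \Omega, U)$ of a concrete, positive energy net and verify the three axioms of \cref{dfn:abstrconfnet} in turn. For an abstract interval $I \in \INT$ I would set $\net A'(I)$ to be the colimit along a canonical system of isomorphisms of the algebras $\net A(\phi(I))$ as $\phi$ ranges over orientation-preserving smooth embeddings $I \hookrightarrow S^1$. Given two such embeddings $\phi_1, \phi_2$, the identifying isomorphism is conjugation by $U(\tilde\psi)$ for any $\tilde\psi \in \Diff_+(S^1)$ extending $\phi_2 \circ \phi_1^{-1}$: diffeomorphism covariance ensures this carries $\net A(\phi_1(I))$ onto $\net A(\phi_2(I))$, while the clause that any diffeomorphism supported in $I'$ acts trivially on $\net A(I)$ guarantees independence of the choice of extension. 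On orientation-preserving embeddings $f \colon I \hookrightarrow J$ the functor is defined via the isotony homomorphism after picking compatible embeddings of $I, J$ into $S^1$, and continuity is inherited from strong continuity of $U$.

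To handle orientation-reversing embeddings---the genuinely new feature of the abstract formulation---I would invoke the Bisognano--Wichmann theorem, which produces for every interval $I \subseteq S^1$ a modular conjugation $J_I$ on $\hilb H$ implementing a specific smooth reflection $r_I$ of $S^1$ that swaps $I$ with $I'$ and fixes $\partial I$ pointwise, and which satisfies $J_I \net A(I) J_I = \net A(r_I(I)) = \net A(I')$. Conjugation by $J_I$ thus yields a canonical antihomomorphism $\net A'(\bar I) \to \net A'(I)$ and, via the transport of $J_I$ under diffeomorphisms, extends the functor to all orientation-reversing morphisms. Locality is then immediate: for embeddings $I, J \hookrightarrow K$ with disjoint-interior images I would embed $K$ into $S^1$ and apply the concrete locality axiom. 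For inner covariance I would extend a diffeomorphism $\phi \in \Diff_+(I)$ trivial near $\partial I$ by the identity to $\tilde\phi \in \Diff_+(S^1)$ supported in the image of $I$; diffeomorphism covariance then produces a projective unitary $U(\tilde\phi)$ implementing $\net A(\phi)$ which commutes with $\net A(I')$, and Haag duality---a consequence of Reeh--Schlieder and Bisognano--Wichmann---places $U(\tilde\phi)$ into $\net A(I')' = \net A(I)$, up to a harmless central phase.

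The main obstacle is the vacuum sector axiom. Given $I \subsetneq K$ with $p \in I \cap \partial K$, the Reeh--Schlieder theorem implies that $\Omega$ is cyclic and separating for $\net A(K)$, producing a canonical isomorphism $L^2 \net A(K) \xrightarrow{\sim} \hilb H$ of $\net A(K)$--$\net A(K)$-bimodules sending the standard cyclic vector to $\Omega$. Under this identification the action of $\net A(I)$ on $L^2 \net A(K)$ is its original representation on $\hilb H$, while the action of $\net A(\bar I)$ factors through $\net A(K)^\opp$ as right multiplication---equivalently, as left multiplication by $J_K \net A(I) J_K$. By Bisognano--Wichmann the latter algebra equals $\net A(r_K(I))$, where $r_K$ is the reflection fixing $\partial K$ pointwise, so that $r_K(I) \subseteq K'$ shares the endpoint $p$ with $I$. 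The union $I \cup r_K(I) \subseteq S^1$ is therefore an interval with $p$ in its interior, canonically isomorphic to $I \cup_p \bar I$, and the original representation of $\net A(I \cup r_K(I))$ on $\hilb H$ provides exactly the extension of the $\net A(I) \otimes \net A(\bar I)$-action required by the axiom, with compatibility along the homomorphism $\net A(I) \otimes \net A(\bar I) \to \net A(I \cup_p \bar I)$ following from the explicit identification of the two factors as the left actions of $\net A(I)$ and $\net A(r_K(I))$ on $\hilb H$.
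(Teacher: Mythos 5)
Your sketch is correct and follows exactly the route the paper indicates: the paper itself gives no proof, deferring to \parencite[Proposition 4.9]{bartels:confnetsI}, and its surrounding discussion names precisely the ingredients you use---algebras for abstract intervals glued from embeddings into $S^1$ via diffeomorphism covariance, and the Reeh--Schlieder and Bisognano--Wichmann theorems to identify the vacuum sector with $L^2\net A(K)$ and verify the vacuum sector axiom. Your handling of orientation-reversing morphisms and of inner covariance via Haag duality matches the standard argument of Bartels--Douglas--Henriques, so there is nothing essentially different to compare.
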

%
%With this reformulation of the axioms of a conformal net, it resembles more closely that of the Atiyah-Segal axioms of a \textsc{tqft}, except that it is still observable-based, rather than state-based.

From this point onwards, whenever we will speak of a \defn{conformal net}, we will mean an abstract conformal net in the sense of \cref{dfn:abstrconfnet}.

Physically, a defect between two field theories or condensed-matter systems can be thought of as a `$2$-sided' boundary condition, designating a submanifold at the junction between the domains of the theories where the fields of one theory experience a discontinuity and transition to those of the other theory. Defects have also been named \defn{domain walls} or \defn{surface operators} in the literature and studies of them (for theories we do not treat in this thesis) can be found in for example \parencite{frohlich:dualitydefects}, \parencite{kapustin:abchernsimons} and \parencite{gukov:surfops}.

The formulation of a defect between two conformal nets introduced by \parencite{bartels:confnetsIII} is a straightforward translation of their \cref{dfn:abstrconfnet} when the category $\INT$ of intervals is enhanced to the category $\INT_{\wh\bl}$ of \defn{bicoloured intervals}. An object of this category is an interval equipped with a covering of two subintervals, one of which is seen as being coloured white, the other one as being coloured black. The subintervals are required to overlap in one point, at which the colour changes. We allow that a bicoloured interval is coloured entirely white or entirely black, meaning that the other subinterval is empty, but we rule out that one of the two subintervals is a singleton. \defn{Morphisms} between bicoloured intervals are smooth embeddings that preserve bicolourings. The full subcategories of entirely white and entirely black intervals are denoted by $\INT_\wh$ and $\INT_\bl$, respectively, and we call a bicoloured interval which is not contained in either of them \defn{genuinely bicoloured}.

After these preliminaries\footnote{The authors of \parencite{bartels:confnetsIII} require one more piece of structure of bicoloured intervals which we left out for brevity, namely a local coordinate around the colour-changing point, and morphisms are required to respect this.} we are ready for

\begin{dfn}
\label{dfn:defect}
(Taken from \parencite[Definition 1.7]{bartels:confnetsIII}.)
If $\net A$ and $\net B$ are two conformal nets, then an \defn{$\net A$--$\net B$-defect} $D$ is a covariant functor $D\colon \INT_{\wh\bl} \to \VN$ sending orientation-preserving and reversing morphisms to algebra homomorphisms and antihomomorphisms, respectively, whose restrictions to $\INT_\wh$ and $\INT_\bl$ are equal to $\net A$ and $\net B$, respectively, and which satisfies the following properties:
\begin{itemize}
\item \defn{(Isotony)} if $I \hookrightarrow K$ is a morphism of genuinely bicoloured intervals, then the corresponding homomorphism or antihomomorphism $D(I) \to D(K)$ is injective,
\item \defn{(Locality)} if $I \hookrightarrow K$ and $J \hookrightarrow K$ are morphisms of bicoloured intervals of which the images have disjoint interiors, then the images of $D(I)$ and $D(J)$ in $D(K)$ commute,
\item \defn{(Existence of a vacuum sector)} let $K$ be a genuinely bicoloured interval and $I \in \INT_\wh \cup \INT_\bl$ a subinterval of $K$ containing a boundary point $p$ of $K$. Write $\bar I$ for the same manifold as $I$, but equipped with the reversed orientation. Then $D(I)$ and $D(\bar I)$ both act on the left of the standard form $L^2 D(K)$---the former via the homomorphism $D(I) \to D(K)$ and the latter through the homomorphism $D(\bar I) \xrightarrow{\sim} D(I)^\opp \to D(K)^\opp$.

We then demand that the left action of the algebraic tensor product $D(I) \otimes D(\bar I)$ on $L^2 D(K)$ extends to an action of the algebra $D(I \cup_p \bar I)$ along the homomorphism $D(I) \otimes D(\bar I) \to D(I \cup_p \bar I)$.\footnote{See \cref{dfn:abstrconfnet} for the definition of the unicoloured interval $I \cup_p \bar I$.}
\end{itemize}
When $I \hookrightarrow K$ is a morphism of bicoloured intervals the corresponding algebra homomorphism $D(I) \to D(K)$ is called an \defn{isotony homomorphism}.

A \defn{morphism} between two $\net A$--$\net B$-defects is defined to be a natural transformation.
\end{dfn}

\begin{rmk}
The reason the above \cref{dfn:defect} does not include requirements of continuity and inner covariance analogous to those in \cref{dfn:abstrconfnet} is that these can be shown to hold automatically (see \parencite[Proposition 1.21]{bartels:confnetsIII} and \parencite[Proposition 1.10]{bartels:confnetsIII}, respectively) when a \defn{strong additivity} axiom of the nets and the defect is assumed. We removed the latter axiom for brevity.
\end{rmk}

We stress that the non-trivial information contained in an $\net A$--$\net B$ defect consists of the algebras assigned to genuinely bicoloured intervals and the isotony homomorphisms associated to inclusions of white, black or genuinely bicoloured into genuinely bicoloured ones. The rest of the data is already determined by the nets $\net A$ and $\net B$.

The notion of a defect between two conformal nets bears a formal similarity to that of a bimodule between two rings. This is not meant to suggest that, just like for a bimodule one can forget the action of one ring and obtain an ordinary module for the other ring, also a net defect possesses an underlying sector structure for both nets. Instead, it is more fruitful to think of defects as a generalisation of morphisms of conformal nets. Indeed, we will see in \cref{exmp:embeddingdefects} that suitable morphisms give rise to defects.

The analogy between defects and ring bimodules runs deeper still in that both admit a form of composition. Recall that when $M$ is a $Q$--$R$-bimodule and $N$ is an $R$--$S$-bimodule for rings $Q$, $R$ and $S$ the tensor product $M \otimes_R N$ is a $Q$--$S$-bimodule. This operation is weakly associative and weakly unital. Taking bimodules as $1$-morphisms and equivariant bimodule maps as $2$-morphisms, rings become in this way a symmetric monoidal bicategory.

Similarly, it is shown in \parencite{bartels:confnetsIII,bartels:confnetsIV} that defects (under certain conditions) can be composed and that a restricted class of conformal nets becomes a particular kind of symmetric monoidal $3$-category when defects are taken as the $1$-morphisms. The $2$-morphisms used in this $3$-category are not the ordinary morphisms between defects as stated in \cref{dfn:defect}, though. They are rather a natural generalisation of the notion of sectors as defined in \cref{dfn:sector}, which explains the increase in the categorical level compared to the situation of rings.

The study of this $3$-category has the potential to reveal a richer structure on the family of conformal nets---one that is not easily visible at the level of objects and ordinary morphisms. Unfortunately, there is very little known about it so far. For example, it is very simple to point out for every two nets $\net A$ and $\net B$ \emph{some} $\net A$--$\net B$-defect $D$. As explained in \parencite[Proposition 1.23]{bartels:confnetsI}, one namely always has an $\net A$--$\unitnet$- and a $\unitnet$--$\net B$-defect and $D$ can then be defined as their composition over the net $\unitnet$. However, this defect is `large' in the sense that it is not \defn{dualisable} (in the higher categorical sense), and no necessary conditions are known to impose on a pair of nets for a dualisable defect to exist between them.

\begin{rmk}[Alternative definitions of defects]
\cref{dfn:defect} is not the only notion of a conformal net defect put forward in the literature: in independent work the authors of \parencite{bischoff:phasebounds} offer an alternative, but more restrictive, formulation. (Their definition applies to $1$-dimensional defects between 2d \emph{full} \textsc{cft}s as well.) At the moment of this writing the precise relation between their theory and that of \parencite{bartels:confnetsIII} is not entirely clear, but we refer to \parencite[Remark 1.28]{bartels:confnetsIII} and \parencite[Remark 3.4]{bischoff:phasebounds} for interesting discussions on this matter.
\end{rmk}

\subsection{Examples of defects}

A few elementary examples of defects between conformal nets were obtained in \parencite{douglas:geomstring} and \parencite[Section 1.\textsc c]{bartels:confnetsIII}. We summarise a selection of those here.

\begin{exmp}[(Twisted) identity defects]
For every conformal net $\net A$ there is an \defn{identity} $\net A$--$\net A$-defect denoted by $1_{\net A}$. It is defined by simply setting $1_{\net A}(I) := \net A(I)$ for every bicoloured interval $I$, where we make sense of the evaluation $\net A(I)$ by ignoring the bicolouring of $I$. Furthermore, the isotony homomorphisms induced by inclusions of white, black and genuinely bicoloured intervals into genuinely bicoloured ones are also defined to be those we get from $\net A$ by forgetting bicolourings. This defect serves as the identity $1$-morphism in the $3$-category of conformal nets.

This example can be tweaked as follows. Let $g$ be an automorphism of $\net A$. Then define an $\net A$--$\net A$-defect $D_g$ again by $D_g(I) := \net A(I)$ for every bicoloured interval $I$, except that we `twist' the isotony homomorphisms on, say, the black side by $g$. More precisely: if $I$ is a black interval, $J$ is genuinely bicoloured and $f\colon I \hookrightarrow J$ is an embedding then the homomorphism $D_g(f)\colon D_g(I) \to D_g(J)$ is set to be $\net A(f) \circ g_I$, where $g_I$ is the automorphism of $\net A(I)$ given by conjugation by $g$. All other types of isotony homomorphisms are inherited from $\net A$, unaffected by $g$.

The construction of such a twisted identity defect can be compared to the situation when we consider a ring $R$ as an $R$--$R$-bimodule. Twisting the right action by an automorphism of $R$ then namely also produces a new $R$--$R$-bimodule.
\end{exmp}

\begin{exmp}[Defects from conformal embeddings]
\label{exmp:embeddingdefects}
Continuing the analogy of defects with ring bimodules, we note that if $R$ is a subring of a ring $S$, then $S$ is an $R$--$S$-, $S$--$R$- and an $R$--$R$-bimodule. The counterpart of this observation for defects is as follows.

A morphism $\tau\colon \net A \to \net B$ between two conformal nets is called a \defn{conformal embedding} in \parencite[Definition 1.45]{bartels:confnetsI} if it respects unitary operators that implement algebra automorphisms induced by interval diffeomorphisms. That is, we demand of $\tau$ that $\Ad(U) = \net A(\phi)$ implies that $\Ad(\tau(U))) = \net B(\phi)$ for every interval $I$, diffeomorphism $\phi$ of $I$ which is the identity near $\partial I$ and unitary operator $U \in \net A(I)$. It is then shown in \parencite[Proposition 1.24]{bartels:confnetsIII} that there exists an $\net A$--$\net B$-defect\footnote{Those authors actually only need the morphism $\tau$ to be a conformal embedding to prove the \defn{strong additivity} property of the resulting defect.} which evaluates genuinely bicoloured intervals like $\net B$ does after forgetting bicolourings and of which the isotony homomorphisms are the obvious ones induced by $\tau$. Similarly, $\tau$ gives rise to a $\net B$--$\net A$- and an $\net A$--$\net A$-defect.
\end{exmp}

\begin{exmp}[Defects from $Q$-systems]
\label{exmp:qsystemdefects}
Let us return to the construction in \cref{exmp:embeddingdefects} of an $\net A$--$\net A$-defect from a conformal embedding $\tau\colon \net A \to \net B$ of nets. A closer inspection of the locality axiom of a defect teaches us that it is not necessary for the net $\net B$ to be local as well---it is sufficient for $\net B$ to be local \defn{relative} to $\net A$, meaning that if $I \hookrightarrow K$ and $J \hookrightarrow K$ are embeddings of intervals of which the images have disjoint interiors, then the images of $\tau(\net A(I))$ and $\net B(J)$ in $\net B(K)$ commute. It is known that, upon imposing a certain finiteness condition on $\tau$, such relatively local extensions $\net B$ of $\net A$ can be classified in terms of certain algebra objects called \defn{$Q$-systems} in the category of sectors of $\net A$. In \parencite[15--17]{bartels:confnetsIII} those authors demonstrate how to produce an $\net A$--$\net A$-defect from a $Q$-system directly, without first needing to build the corresponding net extension.
\end{exmp}

\begin{exmp}[Defects realising invertibility of nets]
The authors of \parencite{douglas:geomstring} call a conformal net $\net A$ \defn{invertible} if there exist another net $\net A^{-1}$ and an invertible defect $D$ from $\net A \otimes \net A^{-1}$ to $\unitnet$, where $\unitnet$ is the trivial conformal net that is constantly $\CC$ and of which all isotony homomorphisms are $\id_\CC$. (Here, the invertibility of a defect should be interpreted in the higher categorical sense.) Together with the result \parencite[Corollary 3.26]{bartels:confnetsI} they proved that $\net A$ is invertible if and only if all its algebras have trivial centre and its representation theory is trivial. This was done by exhibiting for the `if' direction an explicit defect implementing the invertibility.
\end{exmp}

Lastly, just like conformal nets, there is also a direct sum operation for defects (see \parencite[Lemma 1.29]{bartels:confnetsI}), which can be used to produce more defects from a set of given ones.

\section{Bicoloured loop groups}
\label{sec:bicol}

The examples of defects presented in the previous \namecref{sec:defects} are fairly formal in nature. They are built `internally' to the relevant categories, as opposed to the examples of conformal nets presented in \cref{subsec:confnets-exmps} that were constructed in terms of `external' data. In order to gain a greater understanding of defects it is natural to ask for a larger source of richer examples of them.

We now outline a proposal suggesting that the construction of conformal nets from loop groups in \cref{exmp:heisnets,exmp:latticenets,exmp:affinenets} might admit a generalisation that allows one to build defects between these loop group nets. Let us repeat the essential features of those loop groups:
\begin{quote}
If $G$ is a real vector space, a real torus or a compact, connected, simple, simply-connected Lie group, then its associated loop group $LG := C^\infty(S^1, G)$ admits certain disjoint-commutative $\phasegp$-central extensions $\centL G$. Such an extension carries an action of $\Diff_+(S^1)$ which lifts the natural one on $LG$ and is compatible with respect to supports of loops and circle diffeomorphisms. One is able to classify and explicitly construct the irreducible, positive energy representations of $\centL G$.
\end{quote}
We use the notations $\circl$\,, $\circr$, $p$ and $q$ as they are explained in \cref{sec:notsconvs}. That is, $\circl$ and $\circr$ are the closed left and right halves of $S^1$, respectively, and $p$ and $q$ denote the two points $i$ and $-i$ on $S^1$. We shall think of $\circl$ as being coloured white and $\circr$ as having the colour black. An interval on $S^1$ which either does not contain the points $p$ and $q$, or contains exactly one of them, not on its boundary, is then a bicoloured interval.

For every two real Lie groups $G_\wh$ and $G_\bl$, we propose that there should exist a list of \defn{matching conditions}. (In this generality, when $G_\wh$ and $G_\bl$ are not necessarily tori, we will remain deliberately vague on their precise nature.) To every such matching condition $M$ there should be associated a \defn{bicoloured loop group} $L(G_\wh, M, G_\bl)$. The data of an element of this group consists (at least) of two smooth maps $\gamma_\wh\colon \circl \to G_\wh$ and $\gamma_\bl\colon \circr \to G_\bl$. The endpoints of these two paths are not placed arbitrarily, though, and this is where the matching condition comes in: the two points $\gamma_\wh(p)$ and $\gamma_\bl(p)$ are constrained with respect to each other as dictated by $M$, and the same holds for $\gamma_\wh(q)$ and $\gamma_\bl(q)$. This constraint need not necessarily be a property satisfied by the pair $(\gamma_\wh, \gamma_\bl)$---it might also be expressed in terms of a \defn{matching datum} $\gamma_\match$, which is an additional piece of information attached to $(\gamma_\wh, \gamma_\bl)$. In that case we denote an element of $L(G_\wh, M, G_\bl)$ as a triple $(\gamma_\wh, \gamma_\match, \gamma_\bl)$.

We call elements of $L(G_\wh, M, G_\bl)$ \defn{bicoloured loops} and we refer to the elements of an ordinary loop group $LG$ as being \defn{unicoloured} to make the distinction with our new notion.

With an eye towards the axioms in \cref{dfn:defect} for conformal net defects, we postulate that bicoloured loop groups should satisfy the following properties:
\begin{itemize}
\item for every real Lie group $G$ there should exist a particular matching condition $M_G$ from $G$ to itself such that $L(G, M_G, G)$ is isomorphic to the loop group $LG$. This is the first property allowing one to consider bicoloured loop groups as generalisations of unicoloured ones. It is inspired by the demand that, for every loop group from which one can construct a Heisenberg, lattice or loop group conformal net, there should be a bicoloured loop group which reproduces the identity defect from this net to itself,
\item if we write $L_\circlsm G_\wh$ for those loops in $LG_\wh$ which have support in $\circl$ and define $L_\circrsm G_\bl$ similarly, then there should exist two injective group homomorphisms
\[
L_\circlsm G_\wh \hookrightarrow L(G_\wh, M, G_\bl) \hookleftarrow L_\circrsm G_\bl.
\]
This corresponds to the axiom that a defect between two conformal nets restricts to each of the nets on the full subcategories of white and black intervals respectively,
\item elements of $L(G_\wh, M, G_\bl)$ should have a notion of \defn{support} in a bicoloured interval that is embedded in $S^1$ (see \cref{dfn:bicolinterval} for the meaning we will assign to the latter term). The support of a bicoloured loop in the image of the homomorphisms $LG \xrightarrow{\sim} L(G, M_G, G)$, $L_\circlsm G_\wh \hookrightarrow L(G_\wh, M, G_\bl)$ or $L_\circrsm G_\bl \hookrightarrow L(G_\wh, M, G_\bl)$ should coincide with that of its unicoloured pre-image, and
\item the group $L(G_\wh, M, G_\bl)$ should carry an action of the group of orientation preserving circle diffeomorphisms which fix the points $p$ and $q$. Moreover, there should exist an action of $\Rot(S^1)$ or a cover thereof as well.
\end{itemize}

Adding to these properties, we require that a bicoloured loop group admits a certain privileged list of $\phasegp$-central extensions $\centL(G_\wh, M, G_\bl)$ such that
\begin{itemize}
\item when $G$ is a real vector space, a real torus or a compact, connected, simple, simply-connected Lie group and $\centL G$ is a central extension of $LG$ of the type discussed in \cref{exmp:heisnets,exmp:latticenets,exmp:affinenets}, then the central extension $\centL(G, M_G, G)$ that is pulled back from $\centL G$ under the isomorphism $L(G, M_G, G) \cong LG$ is present on the aforementioned list,
\item let $\centL G_\wh$ and $\centL G_\bl$ be centrally extended unicoloured loop groups of the types in \cref{exmp:heisnets,exmp:latticenets,exmp:affinenets}. Then there should exist a central extension $\centL(G_\wh, M, G_\bl)$ which admits lifted injective group homomorphisms into it:
\[
\centL_\circlsm G_\wh \hookrightarrow \centL(G_\wh, M, G_\bl) \hookleftarrow \centL_\circrsm G_\bl,
\]
\item suppose that two elements of the central extension $\centL(G_\wh, M, G_\bl)$ are such that the supports of their images in $L(G_\wh, M, G_\bl)$ are contained in two disjoint bicoloured intervals on $S^1,$ respectively. Then we require that these two elements commute,
\item the action of the group of orientation preserving circle diffeomorphisms which fix the points $p$ and $q$ we expect to exist on $L(G_\wh, M, G_\bl)$ should lift to $\centL(G_\wh, M, G_\bl)$. The same should hold for the action of (some cover of) $\Rot(S^1)$ on $L(G_\wh, M, G_\bl)$, and
\item having the action of (some cover of) $\Rot(S^1)$ on $\centL(G_\wh, M, G_\bl)$ at our disposal, we are able to speak about positive energy representations of the latter group. One should be able to classify and explicitly construct the irreducible such representations up to isomorphism.
\end{itemize}

Having these properties in hand one can attempt to mimic the constructions in \cref{exmp:heisnets,exmp:latticenets,exmp:affinenets} to produce a defect between the nets associated to $\centL G_\wh$ and $\centL G_\bl$. We will elaborate on this further in a special case in \cref{sec:defectslatticenets}.

\begin{rmk}
We give a motivation for requiring $\centL(G_\wh, M, G_\bl)$ to be disjoint-commutative that is unrelated to the axioms of conformal net defects.

For every real Lie group $G$ there exists a procedure, named \defn{transgression}, which produces $\phasegp$-central extensions of $LG$ from certain geometric objects, namely \defn{multiplicative bundle gerbes} (with connection), that are situated over $G$. For example, the central extensions that are discussed in \cref{exmp:affinenets} can be obtained in this way. It is shown in \parencite{waldorf:transgres} that transgression always results in a disjoint-commutative central extension. Hence, we require disjoint-commutativity also in the bicoloured situation as a prerequisite for a possible interpretation of $\centL(G_\wh, M, G_\bl)$ in terms of finite-dimensional, `higher' geometry.
\end{rmk}

\section{Main results}
\label{sec:mainresults}

The goal of this thesis is to test whether our speculative idea of bicoloured loop groups explained in \cref{sec:bicol} can be made sense of when $G_\wh$ and $G_\bl$ are tori $T_\wh := \Lambda_\wh \otimes_\ZZ \phasegp$ and $T_\bl:= \Lambda_\bl \otimes_\ZZ \phasegp$, where $\Lambda_\wh$ and $\Lambda_\bl$ are two even, positive definite lattices of the same rank. One reason for specialising to this case is that the central extensions of unicoloured torus loop groups discussed in \cref{exmp:latticenets} can be specified through explicit $2$-cocycles, as opposed to the geometric methods that are needed in the non-abelian case. A second reason is that also their representations can be built through simple, explicit means without requiring, for example, involved results on integration of Lie algebra representations. Both these reasons therefore lower the barrier of attempting to generalise these constructions. A third reason is that the wealth of examples of lattices and the breadth of their theory makes it plausible that interesting defects between lattice nets can be found as well.

Our proposal for a definition of a matching condition in this situation is as follows. We present it in a slightly simplified form and we give more details in \cref{chap:bicol}.

Let $\Gamma$ be an even, positive definite lattice of the same rank as $\Lambda_\wh$ and $\Lambda_\bl$ and let $\pi_\wh$ and $\pi_\bl$ be two lattice morphisms as in
\[
\Lambda_\wh \xhookleftarrow{\pi_\wh} \Gamma \xhookrightarrow{\pi_\bl} \Lambda_\bl.
\]
If $H$ is the torus $\Gamma \otimes_\ZZ \phasegp$, then $\pi_\wh$ and $\pi_\bl$ induce two surjective torus homomorphisms
\[
\begin{tikzcd}
T_\wh &[1.5em] \ar[l, twoheadrightarrow, "\phasegp\pi_\wh"'] H \ar[r, two heads, "\phasegp\pi_\bl"] &[1.5em] T_\bl
\end{tikzcd}.
\]
We now define the \defn{bicoloured torus loop group} $L(T_\wh, H, T_\bl)$ to be the abelian group of all triples $(\gamma_\wh, \gamma_\match, \gamma_\bl)$ of smooth maps fitting in a commutative diagram
\[
\begin{tikzcd}
\circl \ar[d, "\gamma_\wh"'] &[1.5em] \ar[l, hook'] \circpq \ar[d, "\gamma_\match"] \ar[r, hook] &[1.5em] \circr \ar[d, "\gamma_\bl"] \\
T_\wh & \ar[l, twoheadrightarrow, "\phasegp\pi_\wh"] H \ar[r, two heads, "\phasegp\pi_\bl"'] & T_\bl
\end{tikzcd},
\]
where $\circpq$ is the subset $\{p,q\}$ of $S^1$. That is, $\gamma_\match$ is a matching datum for the pair $(\gamma_\wh, \gamma_\bl)$ dictating the placements of each of the pairs of points $(\gamma_\wh(p), \gamma_\bl(p))$ and $(\gamma_\wh(q), \gamma_\bl(q))$ relative to each other. We call $(\gamma_\wh, \gamma_\match, \gamma_\bl)$ a \defn{bicoloured (torus) loop}.

Our inspiration for this definition has several origins:
\begin{description}
\item[Surface operators between Chern--Simons theories] We briefly commented in \cref{subsec:confnets-applic} on a method by which lattice conformal nets can be used to model a class of $3$d \textsc{tqft}s that are conjecturally the abelian Chern--Simons theories. Each such theory is determined by an even, positive definite lattice as well. The study of defects between abelian Chern--Simons theories was initiated in \parencite{kapustin:abchernsimons}, where these are named \defn{surface operators}, and its authors argue that, at least at the classical level, surface operators between two theories associated to $\Lambda_\wh$ and $\Lambda_\bl$ are classified by Lagrangian subgroups of the pseudo-Riemannian torus $T_\wh \oplus \overline{T}_\bl$. Here, $\overline{T}_\bl$ is the torus $\overline{\Lambda}_\bl \otimes_\ZZ \phasegp$ and $\overline{\Lambda}_\bl$ is the negative definite lattice obtained from $\Lambda_\bl$ by negating its bi-additive form. We then note that in the case when $\Gamma$ is not just contained in the intersection of $\Lambda_\wh$ and $\Lambda_\bl$ but equal to it, the homomorphism
\[
(\phasegp\pi_\wh, \phasegp\pi_\bl)\colon H \to T_\wh \oplus \overline{T}_\bl
\]
is injective with a Lagrangian image. Our definition of a bicoloured torus loop group is an attempt to translate aspects of the study done in \parencite{kapustin:abchernsimons} to the language of loop groups.

\item[A number-theoretical motivation] With the analogy between bimodules and defects in mind, the question emerges naturally whether there exist examples of pairs of conformal nets that are `Morita-equivalent', meaning that there exists an invertible defect between them. We suspect that such a defect induces a braided monoidal equivalence between the categories of sectors of the respective nets (see \parencite[Proposition 8.6]{gordon:coherence}), and hence one must only consider those nets for which these categories are (conjectured to be) braided equivalent. As we stated earlier, we expect this to be true for two lattice nets $\net A_\wh$ and $\net A_\bl$ associated to $\Lambda_\wh$ and $\Lambda_\bl$ when these lattices are in the same \defn{genus}. (See \parencite{hohn:vertalgsgenera} for similar thoughts in the context of vertex algebras.) This is to say that they both have the same signature (which we already assumed to be true) and they become isomorphic after tensoring over $\ZZ$ with the $p$-adic integers for all prime numbers $p$. An equivalent definition is to require that they become isomorphic after summing onto them a single copy of the indefinite hyperbolic plane lattice from \cref{exmp:hyperbolicplane}. This second formulation is indeed used in \parencite{kapustin:abchernsimons} to construct an invertible surface operator between the abelian Chern--Simons theories associated to $\Lambda_\wh$ and $\Lambda_\bl$. However, their method also makes use of the theory assigned to the hyperbolic plane lattice. Because its conformal net counterpart does not exist it remains unclear how their proof can be translated to the language of conformal nets.

We abandon our hope of finding Morita-equivalent lattice nets and ask more generally whether there exists a not-necessarily invertible defect between $\net A_\wh$ and $\net A_\bl$ when $\Lambda_\wh$ and $\Lambda_\bl$ satisfy the weaker demand of becoming isomorphic after tensoring with the $p$-adic \emph{rational} numbers for all $p$. By the \defn{Hasse--Minkowski theorem} this is equivalent to the existence of an isomorphism $\Lambda_\wh \otimes_\ZZ \QQ \cong \Lambda_\bl \otimes_\ZZ \QQ$, which is in turn the same as saying that $\Lambda_\wh$ and $\Lambda_\bl$ share a common sublattice $\Gamma$ of finite index.

\item[Moduli spaces of flat connections over quilted surfaces] A third influence on our definition of $L(T_\wh, H, T_\bl)$ is the work of \parencite{li-bland:quilted} on the study of so called moduli spaces of flat connections over \defn{quilted surfaces}. Such surfaces are divided into regions which are each `coloured' by different structure groups equipped with invariant quadratic forms on their Lie algebras. A flat connection over a quilted surface is then supposed to break down to a co-isotropic relation at the edges where the regions meet. The relevance of that work to loop groups is that the representations of centrally extended loop groups discussed in \cref{exmp:heisnets,exmp:latticenets,exmp:affinenets} can be considered as geometric quantisations of moduli spaces of flat connections over the unit disc. It is hence reasonable to ask whether there exists some bicoloured generalisation of these moduli spaces of which the group of gauge transformations gives rise to a conformal net defect. The group $L(T_\wh, H, T_\bl)$ is an attempt at defining such a group of gauge transformations directly.

The work of \parencite{fuchs:geometricdwtheories} on defects between certain types of 3d \textsc{tqft}s, using notions of \defn{relative} principal bundles over \defn{relative} manifolds, looks to be highly related as well.
\end{description}

Our study of $L(T_\wh, H, T_\bl)$ depends on the auxiliary abelian group $P(H, (\Lambda_\wh - \Lambda_\bl)/\Gamma)$ of all smooth paths $\gamma\colon [0,1] \to H$ such that
\[
\gamma(1) - \gamma(0) \in \frac{\Lambda_\wh - \Lambda_\bl}{\Gamma} \subseteq H.
\]
Here, we consider $\Lambda_\wh$ and $\Lambda_\bl$ as submodules of the Lie algebra of $H$ and we define
\[
\Lambda_\wh - \Lambda_\bl := \{\lambda_\wh - \lambda_\bl \mid \lambda_\wh \in \Lambda_\wh, \lambda_\bl \in\Lambda_\bl\}.
\]
(A more precise description of $\Lambda_\wh - \Lambda_\bl$ can be found at the start of \cref{sec:bicol-struct}.) Many results on the unicoloured torus loop group $LH$ easily generalise to this larger group $P(H, (\Lambda_\wh - \Lambda_\bl)/\Gamma)$. These can be transported next to $L(T_\wh, H, T_\bl)$ via a surjective homomorphism
\begin{equation}
\label{eq:Pth-hom}
\Pth\colon L(T_\wh, H, T_\bl) \twoheadrightarrow P\bigl(H, (\Lambda_\wh - \Lambda_\bl)/\Gamma\bigr)
\end{equation}
with a finite kernel, which is constructed in \cref{subsec:bicol-discontunicolloops}. Among other material, we show in \cref{sec:bicol-struct} that our definition of a bicoloured torus loop group satisfies the first list of demands in \cref{sec:bicol}, namely
\begin{itemize}
\item (\cref{subsec:bicol-unicolisspecialcase}) when $\Lambda_\wh = \Lambda_\bl = \Gamma$ and the morphisms $\pi_\wh$ and $\pi_\bl$ are the identity, the group $L(H, H, H)$ is canonically isomorphic to the unicoloured torus loop group $LH$,
\item (\cref{subsec:bicol-support,subsec:bicol-unicolisotony}) there exist natural injective homomorphisms of abelian groups
\begin{equation}
\label{eq:bicolisotony}
L_\circlsm T_\wh \hookrightarrow L(T_\wh, H, T_\bl) \hookleftarrow L_\circrsm T_\bl,
\end{equation}
and a notion of support for bicoloured loops that is compatible with these injections, and
\item (\cref{subsec:bicol-actionsdiffnS1}) there exists an action of a certain covering group $\Diff_+^{(n)}(S^1)$ on $L(T_\wh, H, T_\bl)$. In particular, the orientation preserving circle diffeomorphisms which fix the points $p$ and $q$ and the elements of $\Rot^{(n)}(S^1)$ act on $L(T_\wh, H, T_\bl)$.
\end{itemize}

In \cref{sec:bicol-centext} we are able to adapt the $2$-cocycles on unicoloured torus loop groups discussed in \cref{exmp:latticenets} to $P(H, (\Lambda_\wh - \Lambda_\bl)/\Gamma)$ in such a way that the pullback of such a cocycle along the homomorphism $\Pth$ in~\eqref{eq:Pth-hom} results in a $\phasegp$-central extension $\centL(T_\wh, H, T_\bl)$ which satisfies the second list of demands in \cref{sec:bicol}. That is, we show in \crefrange{sec:bicol-centext}{sec:bicol-reptheory} that
\begin{itemize}
\item (\cref{subsec:bicol-unicolcentext-isspecialcase}) when $\Lambda_\wh = \Lambda_\bl = \Gamma$ and the morphisms $\pi_\wh$ and $\pi_\bl$ are the identity the isomorphism $LH \cong L(H,H,H)$ lifts to an isomorphism of non-abelian groups $\centL H \cong \centL(H,H,H)$,
\item (\cref{subsec:bicol-isot}) the injections~\eqref{eq:bicolisotony} lift to homomorphisms of non-abelian groups as well,
\item (\cref{sec:bicol-centext-diffnS1action}) the action of $\Diff_+^{(n)}(S^1)$ on $L(T_\wh, H, T_\bl)$ can be extended to one on $\centL(T_\wh, H, T_\bl)$, and
\item (\cref{sec:bicol-reptheory}) we are able to classify and explicitly construct the irreducible, positive energy representations of $\centL(T_\wh, H, T_\bl)$. We find that, up to isomorphism, there exist only finitely many of these.
\end{itemize}

Let us state the last of these points more precisely as our chief result:

\begin{mainthm}[\cref{thm:bicol-classifyirreps}]
Every irreducible, positive energy representation of $\centL(T_\wh, H, T_\bl)$ such that the central subgroup $\phasegp$ acts as $z \mapsto z$ is (unitarily) isomorphic to a certain such representation $\Ind W_{\chi, l}$ for some characters $\chi$ and $l$ of $(\Lambda_\wh \cap \Lambda_\bl)/\Gamma$ and $H$, respectively. The isomorphism classes of such representations are therefore labelled by two parameters: one is an element of the dual group of the finite abelian group $(\Lambda_\wh \cap \Lambda_\bl)/\Gamma$ and the other is an element of the finite abelian group $\Gamma^\vee/(\Lambda_\wh - \Lambda_\bl)$.
\end{mainthm}

We finally outline in \cref{sec:defectslatticenets} how the construction of a defect from such a central extension $\centL(T_\wh, H, T_\bl)$ might proceed.

\section{Organisation of the text}

The rest of this thesis is organised as follows.

We repeat that our discussion of conformal nets and their defects in this Introduction has solely been used as a motivation for the study of bicoloured loop groups. They will hence not play any role in \cref{chap:unicol,chap:bicol}, which treat the theories of uni- and bicoloured torus loop groups, respectively. \cref{chap:unicol} is mainly expository in nature. It is intended to elaborate on some claims made in \cref{exmp:latticenets} and to collect and setup results from the literature in a way that eases the step of generalisation to \cref{chap:bicol}. This latter chapter starts the investigation of bicoloured torus loop groups properly and contains the proofs of the claims made in \cref{sec:mainresults}. It cannot be read without knowledge of \cref{chap:unicol}. Both chapters assume familiarity with the notions and results in \cref{chap:app}. Finally, we make some remarks in \cref{chap:outlook} on possible further steps that can be taken in the study of bicoloured torus loop groups.

\cref{app:lattices,app:centexts,app:posenergyreps} stand alone from each other and can also be read independently from the rest of the thesis. \cref{app:heisgps}, though, on Heisenberg groups, does depend on \cref{app:centexts,app:posenergyreps}.

\section{Notations and conventions}
\label{sec:notsconvs}

\paragraph{Distinguishing $S^1$, $\phasegp$ and $\Rot(S^1)$}
The notation $S^1$ will be used for the unit circle, considered as the smooth manifold embedded in the complex plane. Its role in this thesis is that of the domain of loops (which will be valued in certain Lie groups---tori, specifically). A generic point on it is denoted by $\theta$. We will often make use of some parametrisation of $S^1$ by the interval $[0,1]$. This will always be done at unit speed and counterclockwise, but the starting point of the parametrisation will not necessarily be $1 \in S^1$. Instead, it will be some point denoted by $q$ which is either arbitrary, or a specific one, depending on the context.

When we want to consider $S^1$ as an abelian topological group we will write it as $\phasegp$ instead. It will show up as the group of `phases' other groups we will be studying will be centrally extended by. Elements of $\phasegp$ will be denoted as $z$ or $w$.

The topological group of counterclockwise rotations of the manifold $S^1$ will be named $\Rot(S^1)$. This group (or finite covers of it) will turn out to act on the loop groups and the representations of them that we will be considering. A typical element will be written as $\phi_\theta$, where $\theta \in [0,1]$ is an angle.

\paragraph{Intervals and points}
An \defn{interval} is defined to be an oriented, smooth manifold that is diffeomorphic to $[0,1]$. This means in particular that it is closed, as opposed to the more common convention in the literature on conformal nets of being open. For a subinterval $I \subseteq S^1$, we denote by $I'$ the subinterval that is the closure of $S^1\backslash I$.

The symbols $\circl$ and $\circr$ will be used for the closed left and right half of $S^1$, respectively. When we want to emphasise the orientations they inherit if $S^1$ is oriented counterclockwise, we will add arrows as follows: $\circldir$ and $\circrdir$\,. In \cref{chap:bicol} the two points $i$ and $-i$, whose union $\circpq$~forms the intersection of $\circl$ and $\circr$\,, will be respectively called $p$ and $q$.

\paragraph{Hilbert spaces, actions and representations}
Our convention is that a Hermitian inner product on a complex vector space is linear in its first variable and antilinear in the second one, and all group actions will be left actions.

Whenever we will speak of a \defn{group representation} we will follow the convention stated in \cref{app:posenergyreps}, which is that the group is topological, the underlying vector space is a complex Hilbert space and that the representation is strongly continuous and unitary, unless stated otherwise.

\chapter{Unicoloured torus loop groups}
\label{chap:unicol}

Let $\Lambda$ be an even, positive definite lattice and $T$ the torus $\Lambda \otimes_\ZZ \phasegp$. In this \namecref{chap:unicol} we study a certain $\phasegp$-central extension $\centL T$ of the torus loop group $LT := C^\infty(S^1, T)$ and its positive energy representations. The title of this \namecref{chap:unicol} and its sections refer to $LT$ as being \defn{unicoloured}, but we do not use this adjective in the text of the current \namecref{chap:unicol}. Much of this material is already covered in some form in the literature, but our aim is to provide a level of detail sufficient for the needs of \cref{chap:bicol} where results and notions in this \namecref{chap:unicol} will be used and generalised. References to the literature will be given per section and we make extensive use of material collected in \cref{chap:app}.

The study of the representation theory of the central extension $\centL T$ will require knowledge of certain subgroups and a direct sum decomposition of $LT$. This is treated in \cref{sec:unicol-struct}, where it is not yet necessary to assume the presence of a bi-additive form on $\Lambda$. We treat $LT$ as an abstract group until \cref{sec:unicol-reptheory}, but we will refer to certain of its subsets earlier already as `connected components', in anticipation. Next, the construction of $\centL T$ is described in \cref{sec:unicol-centext} which also includes a proof of its important disjoint-commutativity property.

\cref{sec:unicol-diffs1action,sec:unicol-latisoms} each have dual purposes. They are firstly intended to clarify to what extent $\centL T$ depends on certain extra data needed to perform its construction. One namely requires a choice of a point on $S^1$ and the fact shown in \cref{sec:unicol-diffs1action} that $\Diff_+(S^1)$ acts on $\centL T$ implies that this choice is irrelevant. On the other hand, we also need to pick a certain $\{\pm 1\}$-central extension $\tilde\Lambda$ of $\Lambda$. We prove in \cref{sec:unicol-latisoms} that it is the automorphism group of $\tilde\Lambda$ which acts on $\centL T$ and not that of $\Lambda$. This choice is therefore not immaterial.

A further goal of \cref{sec:unicol-diffs1action} is that we learn that $\centL T$ admits in particular an action of the group $\Rot(S^1)$, so that we may speak of its positive energy representations. The second aim of \cref{sec:unicol-latisoms}, which admittedly lies outside the scope of this thesis, is that it exhibits automorphisms of the conformal net associated to $\Lambda$.

This chapter culminates with the classification and construction of the irreducible, positive energy representations (up to isomorphism) of $\centL T$ in \cref{sec:unicol-reptheory}.

Modest novelties of our exposition are that we lay out in what way $\centL T$ is a topological group, we occasionally explain how the various constructions can be made to work for odd lattices as well, and that we show how the automorphisms of $\tilde\Lambda$ are symmetries of the collection of irreducible, positive energy representations of $\centL T$. We will not return to the latter two topics in \cref{chap:bicol}, though.

\section{Unicoloured torus loop groups and their structure}
\label{sec:unicol-struct}

Let $\Lambda$ be a free $\ZZ$-module of finite rank. Associated to it is a torus $T := \Lambda \otimes_\ZZ \phasegp$ whose Lie algebra $\Lambda \otimes_\ZZ \RR$ we denote by $\liealg t$. In this section we explain the structure of the torus loop group $LT := C^\infty(S^1, T)$ of $T$. We will show that it can be decomposed (non-canonically) as a direct sum of $T$, a certain infinite-dimensional vector space $V\liealg t$ and $\Lambda$.

\begin{refs}
The structure of torus loop groups was found in \parencite[Section 4]{segal:unitary}. We elaborate on it further by emphasising $\Diff_+(S^1)$-equivariance.
\end{refs}

We first observe that, given a choice of a privileged point $q$ on $S^1$, $LT$ has an alternative description in terms of paths in the Lie algebra $\liealg t$, namely as follows:
\begin{equation}
\label{eq:unicol-liealgpaths}
LT \cong \Bigl\{\xi \in C^\infty\bigl([0,1], \liealg t\bigr) \Bigm\vert \text{$\xi(1) - \xi(0) \in \Lambda$, $\xi^{(k)}(1) = \xi^{(k)}(0)$ for all $k\geq 1$}\Bigr\}\Big/\Lambda.
\end{equation}
Here we used the unit speed parametrisation of $S^1$ by $[0,1]$, starting at $q$.

This description allows us to define for a loop $\gamma \in LT$ its \defn{winding element} as
\[
\Delta_\gamma := \xi(1) - \xi(0) \in \Lambda
\]
for any path $\xi \in C^\infty([0,1], \liealg t)$ which represents $\gamma$. This is not only independent of the choice of representative, but also of our choice of point $q$. The winding element is therefore a canonical homomorphism $\Delta\colon LT \to \Lambda$ of abelian groups. It is surjective because if $\lambda \in \Lambda$, the loop $\gamma_\lambda \in LT$, defined as the projection on $T$ of the path $[0,1] \to \liealg t$, $\theta \mapsto \theta\lambda$, has winding element $\lambda$. The connected components of $LT$ are the fibres of $\Delta$.

Write $(LT)_0$ for the identity component of $LT$. It consists of all loops of winding element zero and it fits into a short exact sequence
\[
\begin{tikzcd}
0 \ar[r] & (LT)_0 \ar[r] & LT \ar[r, "\Delta"] &[0.5em] \Lambda \ar[r] & 0.
\end{tikzcd}
\]
(As an aside, any loop of $(LT)_0$ can be lifted to a map $S^1 \to \liealg t$, without needing to break $S^1$ at a point.) A choice of $q$ splits this sequence. With it, we are namely able to define our standard choices of loops $\gamma_\lambda$ with winding element $\lambda$. We then get an isomorphism
\begin{equation}
\label{eq:unicol-splitlattice}
LT \xrightarrow{\sim} (LT)_0 \oplus \Lambda,
\end{equation}
given by $\gamma \mapsto (\gamma - \gamma_{\Delta_\gamma}, \Delta_\gamma)$ with inverse map $(\gamma, \lambda) \mapsto \gamma + \gamma_\lambda$. This isomorphism is not `natural', though, which can be made precise as follows.

The group $\Diff_+(S^1)$ of orientation preserving circle diffeomorphisms acts\footnote{Recall that by our convention all group actions are left actions.} on $LT$ by $\phi\cdot \gamma := \phi^* \gamma$, where $\phi \in \Diff_+(S^1)$, $\gamma \in LT$ and $\phi^* \gamma \in LT$ is the loop defined by
\[
(\phi^* \gamma)(\theta) := \gamma\bigl(\phi^{-1}(\theta)\bigr), \qquad \theta \in S^1.
\]
Because this action preserves the winding element of any loop, it restricts to each connected component of $LT$---in particular to $(LT)_0$. If we let $\Diff_+(S^1)$ act on $\Lambda$ trivially, then the isomorphism~\eqref{eq:unicol-splitlattice} is not $\Diff_+(S^1)$-equivariant because in general $\phi^* \gamma_\lambda \neq \gamma_\lambda$.

Let us now study the structure of $(LT)_0$ further. Define the real vector space
\begin{equation}
\label{eq:unicol-Vt}
V\liealg t := \biggl\{\xi \in C^\infty(S^1, \liealg t) \biggm\vert \int_{S^1} \xi(\theta) \dd\theta = 0 \biggr\}.
\end{equation}
It carries a (left) action of $\Rot(S^1)$, defined similarly as for $LT$. The groups $V\liealg t$ and $(LT)_0$ are related via a canonical isomorphism of abelian groups
\begin{equation}
\label{eq:unicol-decompidcomp}
(LT)_0 \xrightarrow{\sim} T \oplus V\liealg t,
\end{equation}
given by $\gamma \mapsto (\avg \gamma, \xi - \avg \xi)$, where
\[
\avg \gamma := \exp \avg \xi, \qquad \avg \xi := \int_{S^1} \xi(\theta) \dd \theta
\]
for a choice $\xi\colon S^1 \to \liealg t$ of lift of $\gamma$. Its inverse is given by $(x, \xi) \mapsto x + \exp \xi$ for $x \in T$ and $\xi \in V\liealg t$. It is important to note that the isomorphism~\eqref{eq:unicol-decompidcomp} \emph{is} $\Rot(S^1)$-equivariant if we let $\Rot(S^1)$ act on $T \oplus V\liealg t$ by only affecting the $V\liealg t$-summand.

Putting~\eqref{eq:unicol-splitlattice} and~\eqref{eq:unicol-decompidcomp} together, we conclude with a decomposition
\begin{equation}
\label{eq:unicol-decompfullgp}
LT \cong T \oplus V\liealg t \oplus \Lambda
\end{equation}
that depends on a choice of $q \in S^1$. Despite $LT$ being infinite-dimensional its structure therefore resembles that of a class of finite-dimensional groups, namely the compactly generated, locally compact, Hausdorff, abelian ones. Those admit a decomposition into a compact abelian group, a vector space and a free $\ZZ$-module of finite rank as well (see for example \parencite[Theorem 9.8]{hewitt:harmana1}).

\section{Central extensions associated to lattices}
\label{sec:unicol-centext}

In this section we will explain how, to the data of an even lattice $\Lambda$ together with a choice of a particular kind of $\{\pm 1\}$-central extension of it and a $2$-cocycle for this extension, we can construct a certain $\phasegp$-central extension of the loop group $LT$.

\begin{refs}
The formula for the $2$-cocycle~\eqref{eq:unicol-cocycle} on $LT$ and the study of the resulting central extension originates in \parencite{segal:unitary}. See also \parencite[Proposition 4.8.3]{pressley:loopgps} for a variant of the formula, which we do not use in this thesis. However, these sources are not explicit in noting that the construction makes sense just as well for an arbitrary even lattice $\Lambda$ instead of only one of $ADE$-type. To our knowledge, that generalisation was first applied in \parencite[Section 3]{dong:latticeconfnets}.

\cref{thm:unicol-disjcomm} was stated, respectively proven, in \parencite[Proposition 13.1.3]{pressley:loopgps} and \parencite[Proposition 3.4]{dong:latticeconfnets} for variants of the central extensions we consider. The term \defn{disjoint-commutativity} is due to \parencite[Theorem 3.3.1]{waldorf:transgres}.

Our solution to the problem of constructing central extensions associated to odd lattices  in \cref{rmk:unicol-centextoddcase} is inspired by the literature on vertex algebras, namely \parencite[Corollary 5.5]{kac:beginners}. Central extensions of loop groups that are graded by $\ZZ/2\ZZ$ appeared before in \parencite{freed:loopgpsktheoryIII}.
\end{refs}

So let $\Lambda$ be a lattice with bi-additive form $\langle \cdot, \cdot \rangle$ and let $\tilde \Lambda$ be a central extension of the underlying abelian group of $\Lambda$ by the group $\{\pm 1\}$ with commutator map $(\lambda, \mu) \mapsto (-1)^{\langle \lambda, \mu \rangle}$. This determines $\tilde \Lambda$ up to non-unique isomorphism, as discussed in \cref{subsec:centexts-abgps}. The assumption that this is a commutator map forces $\Lambda$ to be even. A second piece of data we will need is a choice of a $2$-cocycle $\epsilon\colon \Lambda \times \Lambda \to \{\pm 1\}$ for $\tilde \Lambda$. Finally, choose a privileged point $q$ on $S^1$ so that we can make use of the description~\eqref{eq:unicol-liealgpaths} of $LT$.

\begin{constr}[Central extensions of $LT$]
We will construct the $\phasegp$-central extension $\centL T$ of $LT$ by letting its underlying set be $LT \times \phasegp$ and writing down an explicit cocycle. Let $\gamma, \rho \in LT$, $z,w \in \phasegp$ and pick lifts $\xi$ and $\eta$ as in~\eqref{eq:unicol-liealgpaths} of $\gamma$ and $\rho$ respectively. We define the multiplication on $\centL T$ by
\begin{equation}
\label{eq:unicol-mult}
(\gamma, z) \cdot (\rho, w) := \bigl(\gamma + \rho, zw \cdot c(\gamma, \rho)\bigr)
\end{equation}
where $c$ is the $2$-cocycle on $LT$ defined by
\begin{equation}
\label{eq:unicol-cocycle}
\begin{gathered}
c(\gamma, \rho) := \epsilon(\Delta_\gamma, \Delta_\rho) e^{2\pi iS(\xi, \eta)} \\
S(\xi,\eta) := \frac12 \int_0^1 \bigl\langle \xi'(\theta), \eta(\theta) \bigr\rangle \dd\theta + \frac12 \bigl\langle \Delta_\gamma, \eta(0) \bigr\rangle.
\end{gathered}
\end{equation}
Here, $\langle \cdot, \cdot \rangle$ stands for the bilinear extension of the form $\langle \cdot, \cdot \rangle$ on $\Lambda$ to $\liealg t$. Notice that $S$ is bi-additive.

We need to check that this multiplication does not depend on the choices of lifts $\xi$ and $\eta$. If we replace $\xi$ by $\xi + \lambda$ for a $\lambda \in \Lambda$, then obviously the value of $S$ does not change. If we perturb $\eta$ by $\lambda$, however, the following term gets added to $S(\xi, \eta)$:
\begin{align*}
S(\xi, \lambda) &= \frac12 \int_0^1 \bigl\langle \xi'(\theta), \lambda \bigr\rangle \dd\theta + \frac12 \bigl\langle \Delta_\gamma, \lambda \bigr\rangle \\
	&= \frac12 \bigl\langle \Delta_\gamma, \lambda \bigr\rangle + \frac12 \bigl\langle \Delta_\gamma, \lambda \bigr\rangle \\
	&= \bigl\langle \Delta_\gamma, \lambda \bigr\rangle.
\end{align*}
Since this is an integer, we have $e^{2\pi iS(\xi, \lambda)} = 1$. This shows that $c$ is a well-defined function on $LT \times LT$. Lastly, it is a $2$-cocycle since $\epsilon$ is and $S$ is bi-additive. It is a normalised $2$-cocycle if and only if $\epsilon$ is.
\end{constr}

\begin{ingreds}
\label{ingreds:unicol}
We summarise the ingredients used in the construction of the central extension $\centL T$ for clarity:
\begin{itemize}
\item an even lattice $(\Lambda, \langle \cdot, \cdot \rangle)$,
\item a choice of a $\{\pm 1\}$-central extension $\tilde \Lambda$ of $\Lambda$ such that it has commutator map $(\lambda, \mu) \mapsto (-1)^{\langle \lambda, \mu \rangle}$,
\item a choice of a $2$-cocycle $\epsilon\colon \Lambda \times \Lambda \to \{\pm 1\}$ for $\tilde \Lambda$ (although not needed for the construction itself, we will always choose $\epsilon$ to be normalised to make calculations easier),
\item a choice of a privileged point $q$ on $S^1$.
\end{itemize}
Perhaps confusingly, the notation $\centL T$ omits references to any of these ingredients. We promise that whenever we will use this notation we will always explain in the surrounding context which choices were made.
\end{ingreds}

%\begin{rmk}
%There also exists a construction of a $\phasegp$-central extension $\centL T_{\text{univ}}$ of $LT$ which does not depend on a choice of $\epsilon$; it only depends on $\Lambda$, $\tilde \Lambda$ and a point on $S^1$. A choice of $\epsilon$ will then give rise to a canonical isomorphism of $\phasegp$-central extensions between $\centL T_{\text{univ}}$ and the above central extension $\centL T$ associated to $\epsilon$.
%\end{rmk}
%
This constructed central extension satisfies the following important property:

\begin{thm}[Disjoint-commutativity of central extensions]
\label{thm:unicol-disjcomm}
Let $(\gamma,z)$ and $(\rho, w)$ be two elements of $\centL T$ such that the supports of $\gamma$ and $\rho$ are contained in two disjoint intervals on $S^1$ respectively. Then $(\gamma,z)$ and $(\rho, w)$ commute.
\end{thm}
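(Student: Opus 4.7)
The plan is to reduce the claim to a purely numerical identity about the cocycle $c$, and then to exploit the support hypothesis at the level of lifts in $\liealg t$.

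First, since $LT$ is abelian, the products $(\gamma,z)\cdot(\rho,w)$ and $(\rho,w)\cdot(\gamma,z)$ land in the same component $\gamma+\rho$ of $LT$, so commutativity in $\centL T$ is equivalent to the cocycle symmetry
\[
c(\gamma,\rho)=c(\rho,\gamma).
\]
I would first observe that a loop with support in a proper subinterval of $S^1$ is constant (equal to $e\in T$) on a non-empty arc, hence null-homotopic in $T$, so its winding element vanishes. Thus both $\Delta_\gamma$ and $\Delta_\rho$ are zero. Assuming $\epsilon$ is the chosen normalised cocycle, this kills the lattice factor $\epsilon(\Delta_\gamma,\Delta_\rho)/\epsilon(\Delta_\rho,\Delta_\gamma)=1$ and also kills the boundary term $\tfrac12\langle\Delta_\gamma,\eta(0)\rangle$ in $S(\xi,\eta)$ (and the analogous one in $S(\eta,\xi)$). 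The problem therefore reduces to showing
\[
\frac12\int_0^1\bigl\langle\xi'(\theta),\eta(\theta)\bigr\rangle\dd\theta-\frac12\int_0^1\bigl\langle\eta'(\theta),\xi(\theta)\bigr\rangle\dd\theta\in\ZZ.
\]

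Next I would choose the lifts $\xi,\eta\in C^\infty([0,1],\liealg t)$ so that $\xi(0)=\eta(0)=0$; since $\Delta_\gamma=\Delta_\rho=0$ this also forces $\xi(1)=\eta(1)=0$. By integration by parts one gets
\[
\int_0^1\langle\xi',\eta\rangle\dd\theta+\int_0^1\langle\xi,\eta'\rangle\dd\theta=\langle\xi(1),\eta(1)\rangle-\langle\xi(0),\eta(0)\rangle=0,
\]
so the difference of the two half-integrals above collapses to a single integral $\int_0^1\langle\xi'(\theta),\eta(\theta)\rangle\dd\theta$, and it suffices to prove this is actually $0$.

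The main step, and the place where the disjoint-support hypothesis does all the work, is to notice that $\xi$ (respectively $\eta$) takes values in the discrete set $\Lambda\subset\liealg t$ on the complement of the support of $\gamma$ (respectively $\rho$). Continuity and the normalisations $\xi(0)=\eta(0)=0$ then force $\xi$ to be locally constant and lattice-valued on every connected component of the complement of $\supp\gamma$, and similarly for $\eta$, and one checks that the specific values on the components adjacent to $q$ are forced to be $0$. Hence $\xi'$ is supported inside $\supp\gamma\subseteq I$ and $\eta$ is supported inside $\supp\rho\subseteq J$; parametrising $S^1$ by $[0,1]$ starting at $q$, the corresponding subsets of $[0,1]$ are again disjoint (one handles the cases $q\in I\cup J$ and $q\notin I\cup J$ separately, but the conclusion is the same). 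The integrand $\langle\xi'(\theta),\eta(\theta)\rangle$ therefore vanishes identically on $[0,1]$, the integral is $0$, and commutativity follows.

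I expect the only delicate point to be the bookkeeping when the chosen basepoint $q$ happens to lie inside $I$ or $J$, because then $I$ breaks into two arcs under the parametrisation of $S^1$ by $[0,1]$; one must be careful about which constant values $\xi$ takes on which component of the complement of $\supp\gamma$ before claiming that $\xi'$ and $\eta$ are supported in disjoint subsets of $[0,1]$. The rest of the argument is elementary integration by parts and bookkeeping with the definition of the cocycle.
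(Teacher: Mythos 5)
There is a genuine gap, and it is right at the start: your claim that a loop supported in a proper subinterval of $S^1$ is null-homotopic, so that $\Delta_\gamma=\Delta_\rho=0$, is false. Being constant equal to $0_T$ on an arc does not trivialise the homotopy class, because the loop can wind entirely inside its support: take any $\lambda\in\Lambda$ and a smooth function $f\colon[0,1]\to[0,1]$ with $f\equiv 0$ near $0$, $f\equiv 1$ near $1$, and $f$ increasing only on a small subinterval $[a,b]$; then $\gamma:=\exp\circ(f\lambda)$ has $\supp\gamma\subseteq[a,b]$ (since $\exp(\lambda)=0_T$) but $\Delta_\gamma=\lambda\neq 0$. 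Indeed the whole construction depends on such loops existing: the subgroups $\centL_I T$ hit every connected component of $\centL T$. Once the winding elements are allowed to be nonzero, your reduction collapses: the factor $\epsilon(\Delta_\gamma,\Delta_\rho)\epsilon(\Delta_\rho,\Delta_\gamma)^{-1}=(-1)^{\langle\Delta_\gamma,\Delta_\rho\rangle}$ does not drop out, the boundary terms $\tfrac12\langle\Delta_\gamma,\eta(0)\rangle$ and $\tfrac12\langle\Delta_\rho,\xi(0)\rangle$ survive, and the content of the theorem is precisely that the antisymmetrised integral contributes $(-1)^{\langle\Delta_\gamma,\Delta_\rho\rangle}$ modulo integers, which is then cancelled by the commutator map of the chosen double cover $\tilde\Lambda$. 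This is where evenness of $\Lambda$ and the specific choice of $\tilde\Lambda$ enter; a quick sanity check is that your argument, if correct, would prove disjoint-commutativity for an \emph{arbitrary} $2$-cocycle $\epsilon$, contradicting \cref{rmk:unicol-centextoddcase}, which stresses that the commutator map $(\lambda,\mu)\mapsto(-1)^{\langle\lambda,\mu\rangle}$ is exactly what makes the statement true.

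A secondary, fixable slip: even in the zero-winding case your assertion that the lift $\eta$ is supported inside $\supp\rho$ is not right when the basepoint $q$ lies in $J$. On the component of the complement of $\supp\rho$ lying between the two arcs of $J$ in the $[0,1]$-parametrisation, $\eta$ is a constant \emph{lattice} value which need not be $0$, so the integrand $\langle\xi',\eta\rangle$ need not vanish pointwise; what one gets instead is that $\int_0^1\langle\xi',\eta\rangle\dd\theta$ is an integer (a pairing of lattice vectors), which still suffices since only $e^{2\pi iS}$ matters. The paper's proof handles exactly this bookkeeping in general: it partially integrates one of the two integrals, splits into the cases $q\notin I$ and $q\notin J$, and uses integrality of $\Lambda$ to discard everything except the half-integer term $\tfrac12\langle\Delta_\rho,\Delta_\gamma\rangle$, which is then absorbed by $\epsilon$. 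You would need to redo your argument along these lines, keeping the winding elements throughout.
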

\begin{proof}
This amounts to showing that $c(\gamma, \rho)c(\rho, \gamma)^{-1} = 1$. Let us first write out $c(\gamma, \rho)c(\rho, \gamma)^{-1}$ without making any assumptions about supports. We have
\[
c(\gamma, \rho)c(\rho, \gamma)^{-1} = \epsilon(\Delta_\gamma, \Delta_\rho) \epsilon(\Delta_\rho, \Delta_\gamma)^{-1} e^{2\pi i\bigl(S(\xi, \eta) - S(\eta, \xi)\bigr)},
\]
where $\xi$ and $\eta$ are lifts of $\gamma$ and $\rho$ respectively as in~\eqref{eq:unicol-liealgpaths} and
\begin{equation}
\begin{split}
S(\xi, \eta) - S(\eta, \xi) &= \frac12 \int_0^1 \bigl\langle \xi'(\theta), \eta(\theta) \bigr\rangle \dd\theta - \frac12 \int_0^1 \bigl\langle \eta'(\theta), \xi(\theta) \bigr\rangle \dd\theta + \phantom{{}} \\
	&\phantom{{}={}} \frac12 \bigl\langle \Delta_\gamma, \eta(0) \bigr\rangle - \frac12 \bigl\langle \Delta_\rho, \xi(0) \bigr\rangle.
\end{split}
\label{eq:disjcomm}
\end{equation}

Let us now make the assumptions about the supports as in the statement of the \namecref{thm:unicol-disjcomm} and write $I$ for the interval containg the support of $\gamma$ and $J$ for the one corresponding to $\rho$. We distinguish two cases: either $I$ does not contain the point $q \in S^1$, or $J$ does not.

In the first case, we perform a partial integration on the second integral in~\eqref{eq:disjcomm} to get:
\begin{align*}
\int_0^1 \bigl\langle \eta'(\theta), \xi(\theta) \bigr\rangle \dd\theta &= \Bigl[\bigl\langle \eta(\theta), \xi(\theta) \bigr\rangle\Bigr]_0^1 - \int_0^1 \bigl\langle \eta(\theta), \xi'(\theta) \bigr\rangle \dd\theta \\
	&= \bigl\langle \eta(1), \xi(1) \bigr\rangle - \bigl\langle \eta(0), \xi(0) \bigr\rangle - \int_0^1 \bigl\langle \eta(\theta), \xi'(\theta) \bigr\rangle \dd\theta.
\end{align*}
After substituting $\eta(1) = \Delta_\rho + \eta(0)$ and $\xi(1) = \Delta_\gamma + \xi(0)$, these first two terms become
\[
\bigl\langle \eta(1), \xi(1) \bigr\rangle - \bigl\langle \eta(0), \xi(0) \bigr\rangle = \langle \Delta_\rho, \Delta_\gamma \rangle + \bigl\langle \Delta_\rho, \xi(0) \bigr\rangle + \bigl\langle \eta(0), \Delta_\gamma \bigr\rangle.
\]
Putting all of this back into~\eqref{eq:disjcomm} gives
\begin{equation}
\label{eq:unicol-commmap}
S(\xi, \eta) - S(\eta, \xi) = \int_0^1 \bigl\langle \xi'(\theta), \eta(\theta) \bigr\rangle \dd\theta - \frac12 \langle \Delta_\rho, \Delta_\gamma \rangle - \bigl\langle \Delta_\rho, \xi(0)\bigr\rangle.
\end{equation}
Because outside of $I$, $\xi$ is constant, the above integral is actually only taken over $I$ instead of over all of $[0,1]$. Since $J$ is disjoint from $I$ and $q \notin I$, $\eta$ is constant with value, say, $\lambda$, in $\Lambda$ on $I$. Therefore we may write
\[
S(\xi, \eta) - S(\eta, \xi) = \bigl\langle \xi(\text{end of $I$}) - \xi(\text{start of $I$}), \lambda \bigr\rangle - \frac12 \langle \Delta_\rho, \Delta_\gamma \rangle - \bigl\langle \Delta_\rho, \xi(0)\bigr\rangle.
\]
Our assumption that $q \notin I$ also implies that
\[
\xi(\text{end of $I$}) = \xi(1) \qquad \text{and} \qquad \xi(\text{start of $I$}) = \xi(0),
\]
so $\xi(\text{end of $I$}) - \xi(\text{start of $I$}) = \Delta_\gamma \in \Lambda$, and that $\xi(0) \in \Lambda$. We now see that thanks to the integrality of $\Lambda$,
\[
\bigl\langle \xi(\text{end of $I$}) - \xi(\text{start of $I$}), \lambda \bigr\rangle - \bigl\langle \Delta_\rho, \xi(0)\bigr\rangle \in \ZZ
\]
and so
\[
e^{2\pi i\bigl(S(\xi, \eta) - S(\eta, \xi)\bigr)} = e^{- 2\pi i \cdot \frac12 \langle \Delta_\rho, \Delta_\gamma \rangle} = (-1)^{\langle\Delta_\gamma, \Delta_\rho\rangle}.
\]
Together with the fact that, by definition of the $2$-cocycle $\epsilon$,
\[
\epsilon(\Delta_\gamma, \Delta_\rho) \epsilon(\Delta_\rho, \Delta_\gamma)^{-1} = (-1)^{\langle\Delta_\gamma, \Delta_\rho\rangle},
\]
this shows that $c(\gamma, \rho)c(\rho, \gamma)^{-1} = 1$.

In the second case where $J$ does not contain $q$, we can perform a calculation similar to the one above by partially integrating the first integral in~\eqref{eq:disjcomm} instead.
\end{proof}

\begin{rmk}[Central extensions associated to odd lattices]
\label{rmk:unicol-centextoddcase}
Observe that the choice of a double cover $\tilde \Lambda$ with commutator map $(\lambda, \mu) \mapsto (-1)^{\langle\lambda, \mu\rangle}$ is exactly what makes the disjoint-commutativity work. If $\Lambda$ is odd, then $(\lambda, \mu) \mapsto (-1)^{\langle\lambda, \mu\rangle}$ is not a commutator map. However, $(\lambda, \mu) \mapsto (-1)^{\langle\lambda, \mu\rangle + \langle \lambda, \lambda \rangle \langle \mu, \mu \rangle}$ \emph{is}. So if we let $\tilde \Lambda$ be a double cover with this commutator map instead, pick a cocycle $\epsilon$ for it and replace it in the definition~\eqref{eq:unicol-cocycle} of the cocycle $c$, then the calculation in the proof of \cref{thm:unicol-disjcomm} shows that
\[
(\gamma, z) \cdot (\rho, w) = (-1)^{p(\gamma) p(\rho)} (\rho, w) \cdot (\gamma, z)
\]
if $\gamma$ and $\rho$ have support in disjoint intervals, where $p\colon LT \twoheadrightarrow \{0,1\}$ is the parity homomorphism of abelian groups given by $p(\gamma) := \langle \Delta_\gamma, \Delta_\gamma \rangle \bmod 2\ZZ$. So $\centL T$ is a \defn{graded} central extension of a \defn{graded} abelian group which satisfies a form of disjoint \defn{graded} commutativity. If we write $\centL T(i) \subseteq \centL T$ for $i \in \{0,1\}$ for the pre-images of $p^{-1}(i) \subseteq LT$, then $\centL T(0)$ is a subgroup and
\[
\centL T(i) \cdot \centL T(j) \subseteq \centL T(i+j),
\]
where the indices are read modulo $2$. We call the elements of $\centL T(0)$ \defn{even} and those of $\centL T(1)$ \defn{odd}.
\end{rmk}

\subsection{The groups $(\centL T)_0$ and $\centV \liealg t$}
\label{subsec:unicol-centext-subgps}

In our study of the structure of $LT$ in \cref{sec:unicol-struct} two of its subgroups were singled out: its identity component $(LT)_0$ and a real vector space $V\liealg t$. We spell out what the restriction of the central extension $\centL T$ to these subgroups looks like.

Consider first the restriction $(\centL T)_0$ to $(LT)_0$. It is the identity component of $\centL T$ and its underlying set is $(LT)_0 \times \phasegp$, meaning that $(\gamma, z) \in (\centL T)_0$ if and only if $\gamma$ has winding element zero. Moreover, recall that such a loop $\gamma$ can be lifted to a map $S^1 \to \liealg t$ without needing to break $S^1$ at a point. These facts imply that the cocycle $c$ on $\centL T$ simplifies to
\begin{equation}
\label{eq:unicol-cocycle-idcomp}
c(\gamma, \rho) := e^{2\pi i S(\xi, \eta)}, \qquad S(\xi, \eta) := \frac12 \int_{S^1} \langle \dd \xi, \eta \rangle
\end{equation}
when restricted to $(\centL T)_0$, where $\gamma, \rho \in (LT)_0$ and $\xi, \eta \colon S^1 \to \liealg t$ are choices of respective lifts.

Now define a $\phasegp$-central extension $\centV \liealg t$ of the underlying abelian group of $V\liealg t$ via the $2$-cocycle\footnote{The usage of the same symbol $c$ for the cocycles on both $LT$ and $V\liealg t$ might cause momentary confusion, but this will clear up in a moment.}
\[
c\colon V\liealg t \times V\liealg t \to \phasegp, \qquad c(\xi, \eta) := e^{2\pi i S(\xi, \eta)}.
\]
So this group has as its underlying set $V\liealg t \times \phasegp$ and its multiplication is given by
\[
(\xi, z) \cdot (\eta, w) := \bigl(\xi + \eta, zw\cdot c(\xi, \eta)\bigr),
\]
for $\xi, \eta \in V\liealg t$ and $z,w \in \phasegp$. It is then easily checked, using the fact that $S(\xi - \avg \xi, \eta - \avg \eta) = S(\xi, \eta)$ for all $\xi, \eta \in V\liealg t$, that the isomorphism of abelian groups~\eqref{eq:unicol-decompidcomp} lifts to an isomorphism of groups
\begin{equation}
\label{eq:unicol-decompidcomp-centext}
(\centL T)_0 \xrightarrow{\sim} T \times \centV \liealg t
\end{equation}
which is the identity on the central subgroup $\phasegp$.

We close this section by making an important observation about the group $\centV \liealg t$. Note that the function $S$ in~\eqref{eq:unicol-cocycle-idcomp} is an $\RR$-valued bilinear form on the real vector space of all smooth maps $S^1 \to \liealg t$.

\begin{prop}
\label{thm:unicol-centextVt-heis}
The bilinear form $S$ in~\eqref{eq:unicol-cocycle-idcomp} is skew and, when the lattice $\Lambda$ is definite, non-degenerate upon restriction to $V\liealg t$.
\end{prop}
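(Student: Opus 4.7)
The plan is to handle the two claims separately, both by elementary manipulations.

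For skewness, I would use the fact that the bi-additive form $\langle \cdot, \cdot \rangle$ on $\liealg t$ is symmetric, so for smooth $\xi, \eta\colon S^1 \to \liealg t$ we have the product rule $d\langle \xi, \eta \rangle = \langle d\xi, \eta \rangle + \langle \xi, d\eta \rangle$. Since $S^1$ has no boundary, $\int_{S^1} d \langle \xi, \eta \rangle = 0$, which yields
\[
\int_{S^1} \langle d\xi, \eta \rangle = - \int_{S^1} \langle d\eta, \xi \rangle,
\]
and hence $S(\xi, \eta) = -S(\eta, \xi)$ for \emph{any} pair of smooth maps (not just those in $V\liealg t$).

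For non-degeneracy on $V\liealg t$, I would expand elements in Fourier series. Any $\xi \in V\liealg t$ can be written as
\[
\xi(\theta) = \sum_{n \geq 1} \bigl(a_n \cos(2\pi n \theta) + b_n \sin(2\pi n \theta)\bigr),
\]
with $a_n, b_n \in \liealg t$, the zero-mean condition on $V\liealg t$ accounting precisely for the absence of a constant term. A direct computation using the standard orthogonality relations on $S^1$, together with $\xi'(\theta) = \sum_{n\geq 1} 2\pi n \bigl(-a_n \sin(2\pi n \theta) + b_n \cos(2\pi n \theta)\bigr)$, gives
\[
S(\xi, \eta) = \frac{\pi}{2} \sum_{n \geq 1} n \bigl(\langle b_n, c_n \rangle - \langle a_n, d_n \rangle\bigr),
\]
where $c_n, d_n$ are the corresponding Fourier coefficients of $\eta \in V\liealg t$.

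Now suppose $\xi \in V\liealg t$ lies in the radical of $S$. Fix $m \geq 1$ and test against $\eta(\theta) = v \cos(2\pi m \theta)$ and $\eta(\theta) = v \sin(2\pi m \theta)$ for arbitrary $v \in \liealg t$; this forces $\langle a_m, v \rangle = 0$ and $\langle b_m, v \rangle = 0$ for all $v \in \liealg t$. When $\Lambda$ is definite, the $\RR$-linear extension of $\langle \cdot, \cdot \rangle$ to $\liealg t = \Lambda \otimes_\ZZ \RR$ is non-degenerate, so $a_m = b_m = 0$ for every $m \geq 1$, giving $\xi = 0$. The Fourier bookkeeping is the main (and only mild) obstacle; everything else is formal. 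Note that the definiteness hypothesis enters only in this last step, and the weaker assumption of non-degeneracy of $\langle \cdot, \cdot \rangle$ on $\liealg t$ would in fact suffice.
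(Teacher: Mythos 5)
Your proof is correct. The skewness argument is essentially the paper's: integrating the identity $\dd\langle\xi,\eta\rangle = \langle \dd\xi,\eta\rangle + \langle\xi,\dd\eta\rangle$ over $S^1$ is the same as the paper's partial integration over $[0,1]$ using periodicity of the lifts. For non-degeneracy, however, you take a genuinely different route. The paper exploits that $V\liealg t$ is closed under differentiation: if $\xi$ lies in the radical, then in particular $S(\xi,\xi') = \frac12\int_0^1 \langle \xi'(\theta),\xi'(\theta)\rangle \dd\theta = 0$, and definiteness of $\langle\cdot,\cdot\rangle$ together with continuity forces $\xi' \equiv 0$, so $\xi$ is constant and hence zero by the zero-average condition. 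You instead expand $\xi$ in a (real) Fourier series and test against the trigonometric monomials $v\cos(2\pi m\theta)$ and $v\sin(2\pi m\theta)$, which lie in $V\liealg t$ and make the pairing a finite sum, so the convergence bookkeeping is harmless. Your computation is correct and, as you note, it only uses non-degeneracy of the extended form on $\liealg t$, so it proves a slightly stronger statement than the paper's argument, which genuinely needs definiteness; the price is a somewhat longer computation. Your Fourier formula is also consistent with the complex-coefficient identity $S(\xi,\eta) = \pi i \sum_{k\neq 0} k\langle \xi_k,\eta_{-k}\rangle$ that the paper derives later when introducing the complex structure on $V\liealg t$, so the two approaches reconnect there.
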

\begin{proof}
To see that $S$ is skew we return to parametrising $S^1$ at unit speed by $[0,1]$ starting at the privileged point $q \in S^1$ so that we can write
\[
S(\xi, \eta) = \frac12 \int_0^1 \bigl\langle \xi'(\theta), \eta(\theta) \bigr\rangle \dd\theta.
\]
Applying partial integration, we get
\begin{align*}
S(\xi, \eta) &= \frac12 \bigl\langle \xi(1), \eta(1) \bigr\rangle - \frac12 \bigl\langle \xi(0), \eta(0) \bigr\rangle - \frac12 \int_0^1 \bigl\langle \xi(\theta), \eta'(\theta) \bigr\rangle \dd\theta,
\end{align*}
and since $\xi(1) = \xi(0)$ and $\eta(1) = \eta(0)$ there indeed holds $S(\xi, \eta) = -S(\eta, \xi)$.

Turning to the second claim, suppose $\xi \in V\liealg t$ is such that $S(\xi, \eta) = 0$ for all $\eta \in V\liealg t$. Because $V\liealg t$ is closed under the operation of taking derivatives we have $\xi' \in V\liealg t$, so in particular $S(\xi, \xi') = 0$. By continuity of $\xi'$ we must have $\langle \xi'(\theta), \xi'(\theta) \rangle = 0$ for all $\theta \in [0,1]$. The definiteness of the form $\langle \cdot, \cdot \rangle$ on $\liealg t$ implies that $\xi'(\theta) = 0$ for all $\theta \in [0,1]$ and so $\xi$ is constant. Since $\int_{S^1} \xi(\theta) \dd\theta = 0$, it follows that $\xi$ is identically zero.
\end{proof}

This \namecref{thm:unicol-centextVt-heis} implies that when $\Lambda$ is a definite lattice, $\centV \liealg t$ is the \defn{Heisenberg group} associated to the pair $(V\liealg t, -S)$ as in \cref{dfn:heisgp} (without yet satisfying the topological requirements).

\section{Actions of \texorpdfstring{$\Diff_+(S^1)$}{Diff+(S\textonesuperior)} on central extensions}
\label{sec:unicol-diffs1action}
%\todo{Maybe this sec can be split up into subsecs? `A naive attempt', `The failure/lack of $\Diff_+(S^1)$-invariance of (the cocycle) $c$', `The correct action'}
We assume the setup of \cref{sec:unicol-centext}. That is, $\Lambda$, $\tilde\Lambda$, $\epsilon$ and $q$ are as in \cref{ingreds:unicol}, $T$ is the torus $\Lambda \otimes_\ZZ \phasegp$ and, finally, $\centL T$ is the $\phasegp$-central extension of $LT := C^\infty(S^1, T)$ defined by the $2$-cocycle $c$ in~\eqref{eq:unicol-cocycle} on $LT$ constructed from all this data. In this section we show that there is an action of the group $\Diff_+(S^1)$ of orientation preserving diffeomorphisms of the circle on $\centL T$.

\begin{refs}
The observation that, in the case of an even lattice, there exists such a $\Diff_+(S^1)$-action and the formula~\eqref{eq:unicol-diffs1action} which achieves it are stated in \parencite[325]{segal:unitary} (as a right action).
\end{refs}

Recall from \cref{sec:unicol-struct} that the non-centrally extended loop group $LT$ carries an obvious (left) action of $\Diff_+(S^1)$. A naive guess for an action on $\centL T$ would be to set
\begin{equation}
\label{eq:diffs1action-naive}
\phi \cdot (\gamma, z) := (\phi^* \gamma, z), \qquad (\gamma, z) \in \centL T.
\end{equation}
for $\phi \in \Diff_+(S^1)$. While this does define an action on the underlying set, it respects the multiplication on $\centL T$ if and only if $c(\phi^* \gamma, \phi^* \rho) = c(\gamma, \rho)$ for all $\gamma, \rho \in LT$. We will see in~\cref{thm:unicol-diffs1fail} that this does not hold, which means that~\eqref{eq:diffs1action-naive} is not the correct action to consider. This misbehaviour is to be expected if we recall that the definition of $c$ involves the choice of a privileged point $q$ on $S^1$ in order to define lifts $[0,1] \to \liealg t$ of elements of $LT$ which have non-zero winding element.

To find a correct formula for the action we adjust the guess~\eqref{eq:diffs1action-naive} as follows. We propose that $\Diff_+(S^1)$ should act on $\centL T$ in a way that lifts the action on $LT$ and fixes the embedding $\phasegp \hookrightarrow \centL T$. It is easily checked that the most general such action is of the form\footnote{That is, if we consider $\centL T \twoheadrightarrow LT$ as a principal $\phasegp$-bundle, then we are proposing that $\Diff_+(S^1)$ should act by $\phasegp$-bundle automorphisms over the existing action on the base, but not in a `parallel' way. This suggests that our construction of $\centL T$ as having its underlying bundle \emph{the} trivial one, instead of merely a trivialisable one, is somewhat artificial.}
\begin{equation}
\label{eq:diffs1action-ansatz}
\phi \cdot (\gamma, z) := \bigl(\phi^* \gamma, d(\phi, \gamma) \cdot z\bigr)
\end{equation}
for some $d(\phi, \gamma) \in \phasegp$. In order for this to be an action on the underlying set of $\centL T$, that is,
\[
(\psi \circ \phi) \cdot (\gamma, z) = \psi \cdot \bigl(\phi \cdot (\gamma, z)\bigr)
\]
for all $\phi, \psi \in \Diff_+(S^1)$, $d$ should satisfy
\begin{equation}
\label{eq:diffs1action-d1}
d(\psi \circ \phi, \gamma) = d(\psi, \phi^*\gamma)d(\phi, \gamma).
\end{equation}
If~\eqref{eq:diffs1action-ansatz} respects the multiplication on $\centL T$ we see from~\eqref{eq:unicol-mult} that we should additionally have the following compatibility with the cocycle $c$:
\begin{equation}
\label{eq:diffs1action-d2}
d(\phi, \gamma + \rho)c(\gamma, \rho) = d(\phi, \gamma) d(\phi, \rho) c(\phi^*\gamma, \phi^*\rho).
\end{equation}
In other words, the $1$-cochain $d(\phi, \cdot)\colon LT \to \phasegp$ should exhibit the $2$-cocycle $(\gamma, \rho) \mapsto c(\phi^*\gamma, \phi^*\rho)c(\gamma, \rho)^{-1}$ as a $2$-coboundary.

To find such a function $d\colon \Diff_+(S^1) \times LT \to \phasegp$ we will calculate the difference between $c(\phi^* \gamma, \phi^* \rho)$ and $c(\gamma, \rho)$, that is, the failure for $\Diff_+(S^1)$ to preserve $c$.

Whereas in \cref{sec:unicol-struct,sec:unicol-centext} we represented a loop $\gamma \in LT$ as a smooth map $\xi\colon [0,1] \to \liealg t$ such that $\xi(1) - \xi(0) \in \Lambda$, we will now work instead with the quasi-periodic extension $\Xi\colon \RR \to \liealg t$ to all of $\RR$ of the latter. That is, $\Xi$ is defined as $\Xi(\theta) = \xi(\theta)$ for $\theta \in [0,1]$ and $\Xi(\theta + 1) = \Xi(\theta) + \Delta_\gamma$ for all $\theta \in \RR$. The definition~\eqref{eq:unicol-cocycle} of the cocycle $c$ remains unchanged if the maps $\xi$ and $\eta$ are replaced by their quasi-periodic extensions $\Xi$ and $\Eta$ everywhere.

Let $\Diff_+^{(\infty)}(S^1)$ be the universal covering group of $\Diff_+(S^1)$. It fits in the following short exact sequence:
\[
\begin{tikzcd}
0 \ar[r] & \ZZ \ar[r] & \Diff_+^{(\infty)}(S^1) \ar[r] & \Diff_+(S^1) \ar[r] & 1.
\end{tikzcd}
\]
An explicit model for $\Diff_+^{(\infty)}(S^1)$ is given by
\begin{equation}
\begin{split}
\Diff_+^{(\infty)}(S^1) &\cong \bigl\{\Phi \colon \RR \xrightarrow{\sim} \RR \bigm\vert \text{$\Phi$ an orientation preserving diffeomorphism}, \bigr. \\
	&\phantom{{}\cong \bigl\{\Phi \colon \RR \xrightarrow{\sim} \RR \bigm\vert{}} \text{$\Phi(\theta + 1) = \Phi(\theta) + 1$ for all $\theta \in \RR$}\bigr\},
\end{split}
\label{eq:univcoverdiffs1model}
\end{equation}
and then the homomorphism $\ZZ \to \Diff_+^{(\infty)}(S^1)$ sends $1 \in \ZZ$ to the shift diffeomorphism $\theta \mapsto \theta + 1$. Using this model, it is easy to check that $\Diff_+^{(\infty)}(S^1)$ acts on the abelian group of smooth maps $\Xi\colon \RR \to \liealg t$ with the property that $\Xi(\theta + 1) - \Xi(\theta)$ is a constant element in $\Lambda$, via $\Phi \cdot \Xi := \Phi^* \Xi$, where
\[
(\Phi^* \Xi)(\theta) := \Xi\bigl(\Phi^{-1}(\theta)\bigr), \qquad \theta \in \RR.
\]

With these preliminaries in place, we can start the calculation.

\begin{prop}
\label{thm:unicol-diffs1fail}
Let $\gamma, \rho \in LT$ and $\Xi, \Eta\colon \RR \to \liealg t$ be the quasi-periodic extensions to $\RR$ of choices of lifts $\xi, \eta\colon [0,1] \to \liealg t$ of $\gamma$ and $\rho$ respectively. Take $\phi \in \Diff_+(S^1)$ and make a choice of lift $\Phi \in \Diff_+^{(\infty)}(S^1)$ of $\phi$. Then
\[
c(\phi^* \gamma, \phi^* \rho) = \epsilon(\Delta_\gamma, \Delta_\rho) e^{2\pi i S(\Phi^* \Xi, \Phi^* \Eta)}
\]
and
\begin{align*}
S(\Phi^* \Xi, \Phi^* \Eta) &= S(\Xi, \Eta) + \frac12 \Bigl\langle \Delta_\gamma, \Eta\bigl(\Phi^{-1}(0)\bigr) - \Eta(0) \Bigr\rangle + \phantom{{}} \\
	&\phantom{{}= S(\Xi, \Eta) + {}} \frac12 \Bigl\langle \Xi\bigl(\Phi^{-1}(0)\bigr) - \Xi(0), \Delta_\rho \Bigr\rangle.
\end{align*}
\end{prop}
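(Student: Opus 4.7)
The plan is to work entirely with the quasi-periodic extensions $\Xi, \Eta$ and to verify the formula by a change of variables in the integral defining $S$.

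First I would dispatch the easy part. The $\Diff_+(S^1)$-action on $LT$ preserves the winding element, so $\Delta_{\phi^*\gamma} = \Delta_\gamma$ and $\Delta_{\phi^*\rho} = \Delta_\rho$, which accounts for the factor $\epsilon(\Delta_\gamma, \Delta_\rho)$. Next I would check that $\Phi^*\Xi$ is a legitimate quasi-periodic lift of $\phi^*\gamma$: since $\Phi \in \Diff_+^{(\infty)}(S^1)$ satisfies $\Phi^{-1}(\theta+1) = \Phi^{-1}(\theta) + 1$, we have $(\Phi^*\Xi)(\theta+1) = \Xi(\Phi^{-1}(\theta)+1) = \Xi(\Phi^{-1}(\theta)) + \Delta_\gamma = (\Phi^*\Xi)(\theta) + \Delta_\gamma$, and similarly for $\Phi^*\Eta$. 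Hence both sides of the claimed identity for $c$ make sense, and it suffices to compute $S(\Phi^*\Xi, \Phi^*\Eta)$.

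The core calculation is a change of variables. Setting $a := \Phi^{-1}(0)$ and substituting $u = \Phi^{-1}(\theta)$ in
\[
\tfrac12 \int_0^1 \bigl\langle (\Phi^*\Xi)'(\theta), (\Phi^*\Eta)(\theta) \bigr\rangle \dd\theta,
\]
the chain rule cancels the factor of $(\Phi^{-1})'(\theta)$ against $\Xi'(\Phi^{-1}(\theta))$, turning the integral into $\tfrac12 \int_a^{a+1} \langle \Xi'(u), \Eta(u) \rangle \dd u$. I would then compare this to $\tfrac12 \int_0^1 \langle \Xi'(u), \Eta(u) \rangle \dd u$ by splitting the integration intervals and shifting by $1$; the key identity is that $\Xi'$ is $1$-periodic while $\Eta(u+1) = \Eta(u) + \Delta_\rho$, so the difference of the two integrals collapses to $\tfrac12 \langle \Xi(a) - \Xi(0), \Delta_\rho \rangle$ by the fundamental theorem of calculus. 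Finally, I would rewrite the boundary term $\tfrac12 \langle \Delta_\gamma, (\Phi^*\Eta)(0) \rangle = \tfrac12 \langle \Delta_\gamma, \Eta(a) \rangle$ as $\tfrac12 \langle \Delta_\gamma, \Eta(0) \rangle + \tfrac12 \langle \Delta_\gamma, \Eta(a) - \Eta(0) \rangle$, at which point recombining everything yields precisely the claimed formula.

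The main obstacle, as far as I can see, is purely bookkeeping: in the change-of-variables step one must be careful that the shifted integration window $[a, a+1]$ is compared to $[0,1]$ using the correct quasi-periodicity relations for $\Xi'$ and $\Eta$ (and not mix them up), and that the derivation works for an arbitrary real $a$ rather than only for $a \in [0,1]$. As a consistency check at the end, one should verify that replacing $\Phi$ by $\Phi \circ (\theta \mapsto \theta + n)$ changes $S(\Phi^*\Xi, \Phi^*\Eta)$ by an integer multiple of $\langle \Delta_\gamma, \Delta_\rho \rangle \in \ZZ$, so that the exponential $e^{2\pi i S(\Phi^*\Xi, \Phi^*\Eta)}$ is genuinely independent of the choice of lift $\Phi$ even though the affine function $S(\Phi^*\Xi, \Phi^*\Eta)$ itself is not.
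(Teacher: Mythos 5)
Your proposal is correct and follows essentially the same route as the paper's proof: the chain rule plus the substitution $u = \Phi^{-1}(\theta)$ turns the integral into one over $[\Phi^{-1}(0), \Phi^{-1}(0)+1]$, which is then compared to the integral over $[0,1]$ by splitting and shifting using the $1$-periodicity of $\Xi'$ and the quasi-periodicity $\Eta(u+1) = \Eta(u) + \Delta_\rho$, yielding exactly the two correction terms $\tfrac12\langle \Xi(\Phi^{-1}(0)) - \Xi(0), \Delta_\rho\rangle$ and $\tfrac12\langle \Delta_\gamma, \Eta(\Phi^{-1}(0)) - \Eta(0)\rangle$. Your closing consistency check on replacing $\Phi$ by a shifted lift is not part of this proposition but matches the paper's subsequent verification that $d(\phi,\gamma)$ is well defined.
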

\begin{proof}
The winding element of a torus loop does not change after precomposition with an orientation preserving circle diffeomorphism, so
\[
\epsilon(\Delta_{\phi^*\gamma}, \Delta_{\phi^*\rho}) = \epsilon(\Delta_\gamma, \Delta_\rho).
\]

Using the chain rule for differentiation and again the fact that $\Delta_{\phi^* \gamma} = \Delta_\gamma$, we can write
\begin{equation}
\begin{split}
S(\Phi^* \Xi, \Phi^* \Eta) &= \frac12 \int_0^1 \Bigl\langle \Xi'\bigl(\Phi^{-1}(\theta)\bigr) \cdot (\Phi^{-1})'(\theta), \Eta\bigl(\Phi^{-1}(\theta)\bigr) \Bigr\rangle \dd\theta + \phantom{{}} \\
	&\phantom{{}={}} \frac12 \Bigl\langle \Delta_\gamma, \Eta\bigl(\Phi^{-1}(0)\bigr) \Bigr\rangle.
\end{split}
\label{eq:unicol-diffs1fail-calc1}
\end{equation}
By substitution, the first term in the above can be rewritten as
\[
\frac12 \int_{\Phi^{-1}(0)}^{\Phi^{-1}(1)} \bigl\langle \Xi'(\theta), \Eta(\theta) \bigr\rangle \dd\theta.
\]
In turn, this integral can be broken up as
\begin{align*}
&\frac12 \int_0^1 \bigl\langle \Xi'(\theta), \Eta(\theta) \bigr\rangle \dd\theta - \frac12 \int_0^{\Phi^{-1}(0)} \bigl\langle \Xi'(\theta), \Eta(\theta) \bigr\rangle \dd\theta + \phantom{{}} \\
&\phantom{\frac12 \int_0^1 \bigl\langle \Xi'(\theta), \Eta(\theta) \bigr\rangle \dd\theta - {}} \frac12 \int_1^{\Phi^{-1}(1)} \bigl\langle \Xi'(\theta), \Eta(\theta) \bigr\rangle \dd\theta,
\end{align*}
and we can replace this first term by
\[
S(\Xi, \Eta) - \frac12 \bigl\langle \Delta_\gamma, \Eta(0) \bigr\rangle.
\]
Putting everything back into~\eqref{eq:unicol-diffs1fail-calc1} gives
\begin{equation}
\begin{split}
S(\Phi^* \Xi, \Phi^* \Eta) &= S(\Xi, \Eta) + \frac12 \Bigl\langle \Delta_\gamma, \Eta\bigl(\Phi^{-1}(0)\bigr) - \Eta(0) \Bigr\rangle - \phantom{{}} \\
	&\phantom{{}={}} \frac12 \int_0^{\Phi^{-1}(0)} \bigl\langle \Xi'(\theta), \Eta(\theta) \bigr\rangle \dd\theta + \frac12 \int_1^{\Phi^{-1}(1)} \bigl\langle \Xi'(\theta), \Eta(\theta) \bigr\rangle \dd\theta.
\end{split}
\label{eq:unicol-diffs1fail-calc2}
\end{equation}

We now focus on the second integral in~\eqref{eq:unicol-diffs1fail-calc2}. Its upper limit $\Phi^{-1}(1)$ is equal to $\Phi^{-1}(0) + 1$, so, using substitution first, the fourth term in~\eqref{eq:unicol-diffs1fail-calc2} becomes
\begin{align*}
\MoveEqLeft \frac12 \int_0^{\Phi^{-1}(0)} \bigl\langle \Xi'(\theta + 1), \Eta(\theta + 1) \bigr\rangle \dd\theta \\
	&= \frac12 \int_0^{\Phi^{-1}(0)} \bigl\langle \Xi'(\theta), \Eta(\theta) + \Delta_\rho \bigr\rangle \dd\theta \\
	&= \frac12 \int_0^{\Phi^{-1}(0)} \bigl\langle \Xi'(\theta), \Eta(\theta) \bigr\rangle \dd\theta + \frac12 \int_0^{\Phi^{-1}(0)} \bigl\langle \Xi'(\theta), \Delta_\rho \bigr\rangle \dd\theta \\
	&= \frac12 \int_0^{\Phi^{-1}(0)} \bigl\langle \Xi'(\theta), \Eta(\theta) \bigr\rangle \dd\theta + \frac12 \Bigl\langle \Xi\bigl(\Phi^{-1}(0)\bigr) - \Xi(0), \Delta_\rho \Bigr\rangle.
\end{align*}
Putting this back into~\eqref{eq:unicol-diffs1fail-calc2} now gives the desired formula for $S(\Phi^* \Xi, \Phi^* \Eta)$.
\end{proof}

From the result of this computation we see that~\eqref{eq:diffs1action-d2} will be satisfied if we define
\begin{equation}
\label{eq:unicol-diffs1action}
d(\phi, \gamma) := e^{\pi i \bigl\langle \Xi(\Phi^{-1}(0)) - \Xi(0), \Delta_\gamma \bigr\rangle} \in \phasegp.
\end{equation}
Thanks to the presence of the minus sign, this definition does not depend on the choice of $\Xi$. It also does not depend on the choice of the lift $\Phi$ of $\phi$. A different choice would namely be of the form $\Phi + k$ for some $k \in \ZZ$. Since $(\Phi + k)^{-1} = \Phi^{-1} - k$, we calculate
\begin{align*}
\Bigl\langle \Xi\bigl(\Phi^{-1}(0) - k\bigr) - \Xi(0), \Delta_\gamma \Bigr\rangle &= \Bigl\langle \Xi\bigl(\Phi^{-1}(0)\bigr) - k\Delta_\gamma - \Xi(0), \Delta_\gamma \Bigr\rangle \\
	&= \Bigl\langle \Xi\bigl(\Phi^{-1}(0) \bigr) - \Xi(0), \Delta_\gamma \Bigr\rangle - k \langle \Delta_\gamma, \Delta_\gamma \rangle.
\end{align*}
Because $\Delta_\gamma \in \Lambda$ and $\Lambda$ is even, $k \langle \Delta_\gamma, \Delta_\gamma \rangle \in 2\ZZ$. This shows what we wanted.

It can furthermore be checked that $d$ satisfies the requirement~\eqref{eq:diffs1action-d1}:
\begin{align*}
d(\psi \circ \phi, \gamma) &= e^{\pi i \bigl\langle \Xi(\Phi^{-1}\Psi^{-1}(0)) - \Xi(0), \Delta_\gamma \bigr\rangle} \\
	&= e^{\pi i \bigl\langle \Xi(\Phi^{-1}\Psi^{-1}(0)) - \Xi(\Phi^{-1}(0)) + \Xi(\Phi^{-1}(0)) - \Xi(0), \Delta_\gamma \bigr\rangle} \\
	&= e^{\pi i \bigl\langle (\Xi \circ \Phi^{-1})(\Psi^{-1}(0)) - (\Xi \circ \Phi^{-1})(0), \Delta_\gamma \bigr\rangle} e^{\pi i \bigl\langle \Xi(\Phi^{-1}(0)) - \Xi(0), \Delta_\gamma \bigr\rangle} \\
	&= d(\psi, \phi^*\gamma) d(\phi, \gamma),
\end{align*}
where $\Psi \in \Diff_+^{(\infty)}(S^1)$ is a choice of a lift of $\psi \in \Diff_+(S^1)$. So this $d$ defines via~\eqref{eq:diffs1action-ansatz} an action of $\Diff_+(S^1)$ on $\centL T$.

\begin{rmk}[A $\Diff_+^{(2)}(S^1)$-action in the case of an odd lattice]
\label{rmk:unicol-diff2s1action-odd}
In our above proof that the $\Diff_+^{(\infty)}(S^1)$-action descends to one of $\Diff_+(S^1)$ we explicitly used that $\Lambda$ is even. Recall from \cref{rmk:unicol-centextoddcase} that if $\Lambda$ is odd it also gives rise to a central extension $\centL T$, which is a $\ZZ/2\ZZ$-graded group. The $\Diff_+^{(\infty)}(S^1)$-action then only descends to one of $\Diff_+^{(2)}(S^1)$, a group that fits into a short exact sequence
\[
\begin{tikzcd}
0 \ar[r] & \ZZ/2\ZZ \ar[r] & \Diff_+^{(2)}(S^1) \ar[r] & \Diff_+(S^1) \ar[r] & 1
\end{tikzcd}
\]
and which can be modelled by $\Diff_+^{(\infty)}(S^1)/2\ZZ$. This action obviously respects the $\ZZ/2\ZZ$-grading on $\centL T$.
\end{rmk}

\begin{rmk}
\label{rmk:unicol-diffS1-compatiblewithsupp}
If we denote for an interval $I \subseteq S^1$ by $\centL_I T$ the normal subgroup of $\centL T$ of those elements $(\gamma, z)$ for which $\supp \gamma \subseteq I$, then
\[
\phi \cdot L_I T = L_{\phi(I)} T \qquad \text{and} \qquad \phi \cdot \centL_I T = \centL_{\phi(I)} T.
\]
\end{rmk}

The following \namecref{thm:unicol-diffS1-actslocally} explains the relation between the notion of support for elements of $\Diff_+(S^1)$ with that for elements of $\centL T$.

\begin{prop}
\label{thm:unicol-diffS1-actslocally}
If $\phi \in \Diff_+(S^1)$ has support in an interval $I \subseteq S^1$, then $\phi$ acts trivially on $\centL_{I'} T$, where $I'$ is the subinterval that is the closure of $S^1 \backslash I$.
\end{prop}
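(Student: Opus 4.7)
The plan is to verify that both components of the action formula $\phi \cdot (\gamma, z) = (\phi^*\gamma, d(\phi,\gamma)\cdot z)$ behave trivially on $\centL_{I'} T$: first that $\phi^* \gamma = \gamma$ for every $\gamma \in L_{I'} T$, and then that $d(\phi, \gamma) = 1$ for the same $\gamma$.

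For the first part, I would note that, since $\supp \phi \subseteq I$, the diffeomorphism $\phi$ restricts to the identity on $I'$ and also preserves $I$ (as it fixes $\partial I = \partial I'$). For $\theta \in I'$ we have $\phi^{-1}(\theta) = \theta$, so $(\phi^*\gamma)(\theta) = \gamma(\theta)$. For $\theta$ in the interior of $I$, both $\theta$ and $\phi^{-1}(\theta)$ lie in the interior of $I$, which is disjoint from $\supp \gamma \subseteq I'$; hence $\gamma(\theta) = 0 = (\phi^*\gamma)(\theta)$. The two functions therefore coincide.

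For the second part, I would split according to whether the basepoint $q$ lies in $I'$ or in $I$. If $q \in I'$ (including its boundary), then $\phi$ fixes $q$, so one can choose the lift $\Phi \in \Diff_+^{(\infty)}(S^1)$ so that $\Phi(0) = 0$, whence $\Xi(\Phi^{-1}(0)) - \Xi(0) = 0$ and $d(\phi,\gamma) = 1$ directly. If $q$ lies in the interior of $I$, parametrise so that the connected component of the preimage of $I$ in $\RR$ containing $0$ is an interval $[a_1, a_2]$ with $a_1 \le 0 \le a_2$, and choose the lift $\Phi$ that is the identity outside $\bigcup_{k\in\ZZ}[a_1+k, a_2+k]$; then $\Phi$ preserves $[a_1, a_2]$ and so $\Phi^{-1}(0) \in [a_1, a_2]$. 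The key observation is that, since $\gamma$ vanishes on the interior of $I$, any smooth lift $\xi$ of $\gamma$ is locally constant in $\Lambda$ on each component of the interior of $I$; using the quasi-periodicity $\Xi(\theta-1) = \Xi(\theta) - \Delta_\gamma$ to translate the part of $[a_1, a_2]$ lying in $[a_1, 0]$ into $[a_1+1, 1] \subset I$, one checks that $\Xi$ is actually constant equal to $\xi(0)$ on the whole interval $[a_1, a_2]$. Hence $\Xi(\Phi^{-1}(0)) = \Xi(0)$, and $d(\phi,\gamma) = 1$ again.

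The only mildly subtle step is the quasi-periodic bookkeeping in the second sub-case; this is where the evenness of $\Lambda$ (which was used earlier to make $d$ well-defined on $\Diff_+(S^1)$) is invisible but implicit in the setup. Once the constancy of $\Xi$ on $[a_1, a_2]$ is established the conclusion is immediate, and combining with the first part gives $\phi \cdot (\gamma, z) = (\gamma, z)$ for all $(\gamma, z) \in \centL_{I'} T$.
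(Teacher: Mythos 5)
Your proof is correct and follows essentially the same route as the paper's: first show $\phi^*\gamma = \gamma$, then split into the cases $q \in I'$ and $q$ in the interior of $I$ and deduce $d(\phi,\gamma)=1$ from the constancy of $\Xi$ (valued in the discrete $\Lambda$) on the relevant component of the preimage of $I$. The only, harmless, deviation is that you normalise the lift $\Phi$ so that the exponent vanishes outright, whereas the paper keeps an arbitrary lift and kills the resulting integer multiple of $\langle \Delta_\gamma, \Delta_\gamma\rangle$ using the evenness of $\Lambda$; both are legitimate since $d$ was already shown to be independent of the choice of lift.
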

\begin{proof}
Let $\gamma \in L_{I'} T$. We first claim that $\phi^* \gamma = \gamma$. Indeed, if $\theta \in I'$, then $\phi^{-1}(\theta) = \theta$ and so $\gamma(\phi^{-1}(\theta)) = \gamma(\theta)$. Suppose on the other hand that $\theta \in I$, then $\phi^{-1}(\theta) \in I$, and so $\gamma(\phi^{-1}(\theta)) = 0_T = \gamma(\theta)$.

Our second claim is that $d(\phi, \gamma) = 1 \in \phasegp$. We distinguish two (not mutually exclusive) cases depending on the privileged point $q$ along which we cut $S^1$: either $q \in I'$, or $q \in I$. If $q \in I'$ holds, then $\phi^{-1}(q) = q$ and so for a lift $\Phi \in \Diff_+^{(\infty)}(S^1)$ of $\phi$ we must have $\Phi^{-1}(0_\RR) \in \ZZ$. This means that for a lift $\Xi\colon \RR \to \liealg t$,
\[
\Xi\bigl(\Phi^{-1}(0_\RR)\bigr) - \Xi(0_\RR) = \Phi^{-1}(0_\RR) \cdot \Delta_\gamma,
\]
implying that $d(\phi, \gamma) = 1$ because $\Phi^{-1}(0_\RR) \cdot \langle \Delta_\gamma, \Delta_\gamma \rangle \in 2\ZZ$ since $\Lambda$ is even.

Suppose now that $q \in I$. Consider the covering map $\RR \twoheadrightarrow [0,1] \twoheadrightarrow S^1$ that sends $0_\RR$ to $q$, which we have been using all the time in this section, and the subintervals of $\RR$ that are the pre-images of $I$ under this map. The assumption that $q \in I$ means that $0_\RR$ lies in such a subinterval of $\RR$. Let us call that one $J$. Because $\phi^{-1}$ has support in $I$ we must have that $\Phi^{-1}(J) = J + k$ for some $k \in \ZZ$. Therefore,
\[
\Xi\bigl(\Phi^{-1}(J)\bigr) = \Xi(J+k) = \Xi(J) + k\cdot \Delta_\gamma.
\]
Since $\gamma$ is identically $0_T$ on $I$ this implies that $\Xi$ is constant (with value an element of $\Lambda \subseteq \liealg t$) on $J$. This shows that
\[
\Xi\bigl(\Phi^{-1}(0_\RR)\bigr) - \Xi(0_\RR) = k \cdot \Delta_\gamma,
\]
and so $d(\phi, \gamma) = 1$ holds once more.
\end{proof}

\subsection{Actions of $\Rot(S^1)$ on $(\centL T)_0$ and $\centV \liealg t$}
\label{subsec:unicol-diffS1action-subgps}
Notice from the formula~\eqref{eq:unicol-diffs1action} that, because $\Delta_\gamma = 0$ when $\gamma \in (LT)_0$, the subtle $\Diff_+(S^1)$-action we constructed on $\centL T$ restricts to the obvious one on $(\centL T)_0$. Remember furthermore the remark made in \cref{sec:unicol-struct} that the vector space $V\liealg t$ carries a $\Rot(S^1)$-action. The cocycle defining the central extension $\centV \liealg t$ from \cref{subsec:unicol-centext-subgps} is $\Rot(S^1)$-invariant, which is easily seen from an application of the chain rule followed by a substitution, and therefore the $\Rot(S^1)$-action on $V\liealg t$ lifts in the obvious way to $\centV \liealg t$. It is now immediate that the group isomorphism~\eqref{eq:unicol-decompidcomp-centext} is $\Rot(S^1)$-equivariant.

\section{Actions of lifts of lattice automorphisms on central extensions}
\label{sec:unicol-latisoms}
%\todo{Maybe this sec can be split up into subsecs? `A naive attempt', `The failure/lack of $\Aut(\Lambda; \langle \cdot, \cdot \rangle)$-invariance of (the cocycle) c', `The group $\tilde\Lambda^\epsilon$ and its automorphisms', `The correct action'}
Take again the input data for the construction of a torus loop group central extension $\centL T$ in \cref{sec:unicol-centext}, as summarised in the first paragraph of \cref{sec:unicol-diffs1action}, as given. In \cref{sec:unicol-diffs1action} we learned that a torus loop $S^1 \to T$ can be precomposed with an orientation preserving circle diffeomorphism, and that this lifts to an action of $\Diff_+(S^1)$ on $\centL T$. One can ask with what kind of automorphism of the torus loops can be postcomposed to obtain a group automorphism of $\centL T$ also. In this section we show that this can be done for torus automorphisms induced by automorphisms of a certain $\{\pm 1\}$-central extension of the lattice $\Lambda$.

\begin{refs}
Our idea that such an action on $\centL T$ should exist comes from the vertex algebra literature, namely the claim in \parencite[Section 12]{borcherds:monstmoonshine} that in the case when $\Lambda$ is the Leech lattice $\Lambda_\Leech$ the group $\Aut(\tilde\Lambda^\epsilon; \langle \cdot, \cdot \rangle)$ acts on the Leech lattice vertex algebra. This fact is further elaborated upon for general positive definite, even lattices in for example \parencite[Section 2.4]{dong:automslatticevalgs}. The current section is simultaneously intended to clarify the first paragraph of \parencite[Section 4]{dong:latticeconfnets}. The details regarding the group $\Aut(\tilde\Lambda^\epsilon; \langle \cdot, \cdot \rangle)$ are taken from the start of \parencite[Section 6.4]{frenkel:monster}.
\end{refs}

An automorphism $g$ of $\Lambda$ is, as discussed in \cref{app:lattices}, a $\ZZ$-module automorphism of $\Lambda$ which preserves the form $\langle \cdot, \cdot \rangle$ on $\Lambda$. There is an obvious action of $\Aut(\Lambda; \langle \cdot, \cdot \rangle)$ on the non-centrally extended loop group $LT$ given by $g\cdot \gamma := g_*\gamma$, where $\gamma \in LT$, $g_*\gamma \in LT$ is the loop defined by
\[
(g_* \gamma)(\theta) := g_T\bigl(\gamma(\theta)\bigr), \qquad \theta \in S^1,
\]
and $g_T$ is the induced automorphism of $T := \Lambda \otimes_\ZZ \phasegp$ by $g$. Just like in \cref{sec:unicol-diffs1action}, a naive attempt at defining an action on $\centL T$ would be
\[
g\cdot (\gamma, z) := (g_* \gamma, z), \qquad z \in \phasegp.
\]
Again, this respects the group multiplication on $\centL T$ if and only if for the $2$-cocycle $c$ in~\eqref{eq:unicol-cocycle} defining $\centL T$ there holds $c(g_*\gamma, g_* \rho) = c(\gamma, \rho)$ for all $\gamma, \rho \in LT$. Let us check whether this is true.

Recall that $c$ is defined in terms of choices of lifts $\xi,\eta \colon [0,1] \to \liealg t$ of $\gamma$ and $\rho$ respectively. We have
\[
\Delta_{g_*\gamma} = (\RR g \circ \xi)(1) - (\RR g \circ \xi)(0) = \RR g\bigl(\xi(1) - \xi(0)\bigr) = g\Delta_\gamma,
\]
where $\RR g$ denotes the linear extension of $g$ to $\liealg t$. Similarly, $\Delta_{g_*\rho} = g\Delta_\rho$. Therefore,
\begin{gather*}
c(g_*\gamma, g_* \rho) = \epsilon(g\Delta_\gamma, g\Delta_\rho) e^{2\pi i S(\RR g \circ \xi, \RR g \circ \eta)} \\
S(\RR g \circ \xi, \RR g \circ \eta) = \frac12 \int_0^1 \bigl\langle (\RR g \circ \xi)'(\theta), (\RR g \circ \eta)(\theta) \bigr\rangle \dd \theta + \frac12 \bigl\langle g \Delta_\gamma, (g \circ \eta)(0) \bigr\rangle.
\end{gather*}
Since $(\RR g \circ \xi)'(\theta) = \RR g(\xi'(\theta))$ and $\RR g$ preserves the form $\langle \cdot, \cdot \rangle$, we find that
\[
S(\RR g \circ \xi, \RR g \circ \eta) = S(\xi, \eta).
\]
However, it is not necessarily true that $\epsilon(g\Delta_\gamma, g\Delta_\rho) = \epsilon(\Delta_\gamma, \Delta_\rho)$. Explicit counterexamples can be found. We conclude that in general $c(g_*\gamma, g_* \rho) \neq c(\gamma, \rho)$ and that the blame lies on the extra piece of data that is the $2$-cocycle $\epsilon$, which is not invariant under automorphisms of $\Lambda$.

In our next attempt we will try not to let $\Aut(\Lambda; \langle \cdot, \cdot \rangle)$ act on $\centL T$, but a certain bigger group instead.

Let $\tilde\Lambda^\epsilon$ be the $\{\pm 1\}$-central extension of $\Lambda$ associated to $\epsilon$. Its underlying set is $\Lambda \times \{\pm 1\}$ and its group multiplication is defined by
\[
(\lambda, z) \cdot (\mu, w) := \bigl(\lambda + \mu, zw \cdot \epsilon(\lambda, \mu)\bigr),
\]
where $\lambda, \mu \in \Lambda$ and $z,w \in \{\pm 1\}$. An automorphism $\tilde g$ of the group $\tilde\Lambda^\epsilon$ induces an automorphism $g$ of the underlying abelian group of $\Lambda$ by picking for $\lambda \in \Lambda$ an arbitrary lift $\tilde \lambda \in \tilde\Lambda^\epsilon$ and defining $g\lambda$ to be the image in $\Lambda$ of $\tilde g\tilde \lambda$. This is independent of the choice of $\tilde\lambda$. Write $\Aut(\tilde \Lambda^\epsilon; \langle \cdot, \cdot \rangle)$ for the group of those automorphisms $\tilde g$ of $\tilde\Lambda$ such that $g$ lies in $\Aut(\Lambda; \langle \cdot, \cdot \rangle)$, that is, $g$ preserves the form $\langle \cdot, \cdot \rangle$. Notice that such a $\tilde g$ fixes the embedding $\{\pm 1\} \hookrightarrow \tilde\Lambda^\epsilon$, because its image $\{(0,\pm 1)\}$ consists of the only elements of $\tilde\Lambda^\epsilon$ of finite order.

This group of automorphisms of $\tilde\Lambda$ lies in the following short exact sequence of groups:
\[
\begin{tikzcd}
1 \ar[r] & \Hom_\Ab\bigl(\Lambda, \{\pm 1\}\bigr) \ar[r] & \Aut\bigl(\tilde \Lambda^\epsilon; \langle \cdot, \cdot \rangle\bigr) \ar[r] & \Aut\bigl(\Lambda; \langle \cdot, \cdot \rangle\bigr) \ar[r] & 1.
\end{tikzcd}
\]
The second arrow sends $f \in \Hom_\Ab(\Lambda, \{\pm 1\})$ to the automorphism $\tilde\lambda \mapsto \tilde\lambda\cdot f(\lambda)$, where $\tilde\lambda \in \tilde\Lambda^\epsilon$ and $\lambda$ is its image in $\Lambda$. The third arrow sends $\tilde g \in \Aut(\tilde \Lambda^\epsilon; \langle \cdot, \cdot \rangle)$ to the induced automorphism $g \in \Aut(\Lambda, \langle \cdot, \cdot \rangle)$. Note that a choice of a basis for $\Lambda$ gives an isomorphism
\[
\Hom_\Ab\bigl(\Lambda, \{\pm 1\}\bigr) \cong \{\pm 1\}^{\rank \Lambda},
\]
so $\Aut(\tilde \Lambda^\epsilon; \langle \cdot, \cdot \rangle)$ has order $2^{\rank \Lambda}$ times that of $\Aut(\Lambda; \langle \cdot, \cdot \rangle)$.

In \cref{sec:unicol-struct} we learned that, given a choice of a privileged point $q$ on $S^1$ (which we already made to construct $\centL T$), there are standard choices of loops $\gamma_\lambda$ with winding element $\lambda \in \Lambda$. This gave an injective homomorphism of abelian groups $\Lambda \hookrightarrow LT$, $\lambda \mapsto \gamma_\lambda$. It lifts to a homomorphism $\tilde\Lambda^\epsilon \hookrightarrow \centL T$, $(\lambda, z) \mapsto (\gamma_\lambda, z)$ of non-abelian groups. We would like to find an action of $\Aut(\tilde \Lambda^\epsilon; \langle \cdot, \cdot \rangle)$ on $\centL T$ which restricts to the one on $\tilde \Lambda^\epsilon$.

Let us therefore begin by describing the elements of $\Aut(\tilde \Lambda^\epsilon; \langle \cdot, \cdot \rangle)$ a bit more explicitly. If $\tilde g$ is such an automorphism and $(\lambda, z) \in \tilde\Lambda^\epsilon$, then
\[
\tilde g \cdot (\lambda, z) = \bigl(g\lambda, e(\tilde g, \lambda) \cdot z\bigr)
\]
for some $g \in \Aut(\Lambda; \langle \cdot,\cdot \rangle)$ and $e(\tilde g, \lambda) \in \{\pm 1\}$, which does not depend on $z$. If we namely assume that $\epsilon$ is a normalised cocycle, then we can write
\[
(\lambda, z) = \bigl(\lambda, z \cdot \epsilon(\lambda, 0)\bigr) = (\lambda, 1) \cdot (0,z).
\]
So therefore
\begin{align*}
\tilde g (\lambda, z) &= \tilde g (\lambda, 1) \cdot \tilde g (0,z) \\
	&= \bigl(g\lambda, e(\tilde g, \lambda)\bigr) \cdot (0,z) \\
	&= \bigl(g\lambda, e(\tilde g, \lambda) \cdot z \cdot \epsilon(g\lambda, 0)\bigr) \\
	&= \bigl(g\lambda, e(\tilde g,\lambda) \cdot z\bigr).
\end{align*}
Here we used in the second equality that $\tilde g$ fixes the embedding $\{\pm 1\} \hookrightarrow \tilde\Lambda^\epsilon$, and in the fourth equality it was used again that $\epsilon$ is normalised.

This $e$ satisfies the following two identities:
\begin{gather*}
e(\tilde g \circ \tilde h, \lambda) = e(\tilde g, h\lambda) e(\tilde h, \lambda) \\
e(\tilde g, \lambda + \mu) \epsilon(\lambda, \mu) =  e(\tilde g, \lambda) e(\tilde g, \mu) \epsilon(g\lambda, g\mu).
\end{gather*}
The latter should be read as $e(\tilde g, \cdot)\colon \Lambda \to \{\pm 1\}$ being a $1$-cochain which exhibits the $2$-cocycle $(\lambda, \mu) \mapsto \epsilon(g \lambda, g \mu) \epsilon(\lambda, \mu)^{-1}$ as a $2$-coboundary.

Let us now define an action of $\Aut(\tilde\Lambda^\epsilon; \langle \cdot,\cdot \rangle)$ on $\centL T$ by setting
\begin{equation}
\label{eq:unicol-latisomaction}
\tilde g \cdot (\gamma, z) := \bigl(g_* \gamma, e(\tilde g, \Delta_\gamma) \cdot z\bigr)
\end{equation}
for $\tilde g \in \Aut(\tilde\Lambda^\epsilon; \langle \cdot,\cdot \rangle)$ and $(\gamma, z) \in \centL T$. One can check that, thanks to the two identities satisfied by $e$ above, this does respect the multiplication on $\centL T$ and that it defines a group action.

\begin{prop}
\label{thm:unicol-rotandisometry-commute}
The action of $\Diff_+(S^1)$ on $\centL T$ given by~\eqref{eq:diffs1action-ansatz} and~\eqref{eq:unicol-diffs1action} commutes with the action of $\Aut(\tilde\Lambda^\epsilon; \langle \cdot, \cdot \rangle)$ given in~\eqref{eq:unicol-latisomaction}. 
\end{prop}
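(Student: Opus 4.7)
The plan is to verify the commutation directly from the explicit formulas for the two actions. Writing out both compositions, one gets
\[
\tilde g \cdot \bigl(\phi \cdot (\gamma, z)\bigr) = \bigl(g_* \phi^* \gamma,\; e(\tilde g, \Delta_{\phi^* \gamma})\, d(\phi, \gamma)\, z\bigr)
\]
and
\[
\phi \cdot \bigl(\tilde g \cdot (\gamma, z)\bigr) = \bigl(\phi^* g_* \gamma,\; d(\phi, g_* \gamma)\, e(\tilde g, \Delta_\gamma)\, z\bigr).
\]
So there are two things to check: equality of the underlying loops, and equality of the phase factors.

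The equality $g_* \phi^* \gamma = \phi^* g_* \gamma$ is immediate, since $g$ acts pointwise on values in $T$ while $\phi$ acts by reparametrising $S^1$; these operations are independent and commute. For the phase factors, I will first use the fact recorded in \cref{sec:unicol-diffs1action} that precomposition with an orientation preserving diffeomorphism of $S^1$ preserves winding elements, so $\Delta_{\phi^* \gamma} = \Delta_\gamma$ and hence $e(\tilde g, \Delta_{\phi^* \gamma}) = e(\tilde g, \Delta_\gamma)$. It therefore only remains to show that $d(\phi, g_* \gamma) = d(\phi, \gamma)$.

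For this, I pick a lift $\xi\colon [0,1] \to \liealg t$ of $\gamma$ and its quasi-periodic extension $\Xi\colon \RR \to \liealg t$. Then $\RR g \circ \Xi$ is the quasi-periodic extension of a lift of $g_* \gamma$, and $\Delta_{g_* \gamma} = g\Delta_\gamma$. Substituting into the definition~\eqref{eq:unicol-diffs1action} of $d$ gives
\[
d(\phi, g_* \gamma) = e^{\pi i \bigl\langle \RR g\bigl(\Xi(\Phi^{-1}(0)) - \Xi(0)\bigr),\, g\Delta_\gamma \bigr\rangle}.
\]
Because $g$ preserves the form $\langle \cdot, \cdot \rangle$, the exponent is equal to $\pi i \langle \Xi(\Phi^{-1}(0)) - \Xi(0), \Delta_\gamma \rangle$, and so $d(\phi, g_* \gamma) = d(\phi, \gamma)$. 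Combining these observations, both compositions agree, proving the \namecref{thm:unicol-rotandisometry-commute}.

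There is no genuine obstacle in this argument: the key structural input is that the cocycle-theoretic twist $d$ depends on the winding element only through the form $\langle \cdot, \cdot \rangle$ and on the lift $\Xi$ only through a difference on which $\RR g$ acts linearly. The isometry property of $g$ is therefore exactly what is needed to make $d$ invariant under the $\Aut(\tilde\Lambda^\epsilon; \langle \cdot, \cdot \rangle)$-action.
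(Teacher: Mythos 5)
Your proof is correct and follows essentially the same route as the paper's: both sides are written out explicitly, the underlying loops agree because pre- and postcomposition commute, and the phase factors agree because $\Delta_{\phi^*\gamma} = \Delta_\gamma$ together with the computation $d(\phi, g_*\gamma) = d(\phi,\gamma)$ using that $\RR g \circ \Xi$ lifts $g_*\gamma$ and that $g$ preserves $\langle \cdot, \cdot \rangle$. No gaps; this matches the paper's argument step for step.
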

\begin{proof}
Let $(\gamma, z) \in \centL T$, $\phi \in \Diff_+(S^1)$ and $\tilde g \in \Aut(\tilde\Lambda^\epsilon; \langle \cdot,\cdot \rangle)$. We have to show that the equality
\[
\phi \cdot \bigl(\tilde g \cdot (\gamma, z)\bigr) = \tilde g \cdot \bigl(\phi\cdot (\gamma,z)\bigr)
\]
holds true. The left hand side is
\[
\phi \cdot \bigl(g_* \gamma, e(\tilde g, \Delta_\gamma) \cdot z\bigr) = \bigl(\phi^*g_*\gamma, d(\phi, g_*\gamma) \cdot e(\tilde g, \Delta_\gamma) \cdot z\bigr),
\]
while the right hand side is
\[
\tilde g \cdot \bigl(\phi^*\gamma, d(\phi, \gamma) \cdot z\bigr) = \bigl(g_*\phi^*\gamma, e(\tilde g, \Delta_{\phi^*\gamma}) \cdot d(\phi, \gamma) \cdot z\bigr).
\]

We obviously have that $\phi^*g_*\gamma = g_*\phi^*\gamma$, since pre- and postcompositions commute. There furthermore holds that $\Delta_{\phi^*\gamma} = \Delta_\gamma$, and
\[
d(\phi, g_*\gamma) = e^{\pi i \bigl\langle (\RR g \circ \Xi)(\Phi^{-1}0) - (\RR g \circ \Xi)(0), g\Delta_\gamma\bigr\rangle} 
	= e^{\pi i \bigl\langle \Xi(\Phi^{-1}0) - \Xi(0), \Delta_\gamma\bigr\rangle} 
	= d(\phi, \gamma).
\]
This shows what we wanted.
\end{proof}

\begin{rmk}
It is easy to see that an element $\tilde g \in \Aut(\tilde\Lambda^\epsilon; \langle \cdot, \cdot \rangle)$ restricts to an automorphism of the normal subgroup $\centL_I T$ for every interval $I \subseteq S^1$. We say that $\Aut(\tilde\Lambda^\epsilon; \langle \cdot, \cdot \rangle)$ acts \defn{locally}.
\end{rmk}

\subsection{Actions of related groups on $(\centL T)_0$ and $\centV \liealg t$}
\label{subsec:unicol-isometryactions-subgps}

The failure for the group $\Aut(\Lambda; \langle \cdot, \cdot \rangle)$ to act on $\centL T$ can, instead of being blamed on the unnatural choice of the $2$-cocycle $\epsilon$, alternatively be pinned on the presence of non-zero winding elements of loops. Hence, this group of lattice isomorphisms \emph{does} act on the identity component $(\centL T)_0$.

Write $\Aut(\liealg t; \langle \cdot, \cdot \rangle)$ for the compact group of automorphisms of the Lie algebra $\liealg t$ that preserve the bilinear extension of the form $\langle \cdot, \cdot \rangle$ on $\Lambda$ to $\liealg t$. It acts through post-composition of loops via $\RR$-linear automorphisms on $V\liealg t$. Let $O_*$ be such an automorphism for $O \in \Aut(\liealg t; \langle \cdot, \cdot \rangle)$. Then $O_*$ preserves the skew form $S$ in~\eqref{eq:unicol-cocycle-idcomp} on $V\liealg t$ since $\dd(O \circ \xi) = O \circ \dd \xi$ for all $\xi \in V\liealg t$. Therefore, the action of $\Aut(\liealg t; \langle \cdot, \cdot \rangle)$ lifts in the obvious way to $\centV \liealg t$.

We summarise these findings in the following figure:
\[
\begin{tikzcd}
\centV \liealg t \ar[r, hook] & (\centL T)_0 \ar[r, hook] & \centL T \\
\Aut\bigl(\liealg t; \langle \cdot, \cdot \rangle\bigr) \ar[u, phantom, "\circlearrowleft"]  & \ar[l, hook'] \Aut\bigl(\Lambda; \langle \cdot, \cdot \rangle\bigr) \ar[u, phantom, "\circlearrowleft"] & \ar[l, two heads] \Aut\bigl(\tilde\Lambda^\epsilon; \langle \cdot, \cdot \rangle\bigr) \ar[u, phantom, "\circlearrowleft"] \\[-2em]
\RR g & \ar[l, mapsto] g & \ar[l, mapsto] \tilde g
\end{tikzcd}.
\]

\section{Irreducible, positive energy representations}
\label{sec:unicol-reptheory}

Assume once more the input data for the construction of a torus loop group central extension $\centL T$ in \cref{sec:unicol-centext} summarised in the first paragraph of \cref{sec:unicol-diffs1action}. For the purpose of the discussion that is about to follow we add the extra condition of being positive definite on the lattice $\Lambda$. This will be necessary for certain vector spaces we will consider to be equipped with positive definite inner products. Their Hilbert space completions will then carry group representations.\footnote{Recall our convention that representations are always meant to be strongly continuous and unitary.}

In this section we will construct and classify a certain type of irreducible representations of $\centL T$, namely the \defn{positive energy} ones (see \cref{dfn:posenergyrep}).

\begin{refs}
The construction and classification we perform here is due to \parencite{segal:unitary}. See also \parencite[Section 9.5]{pressley:loopgps} for an alternative exposition. The outcomes of the calculation of the characters of these representations were learned from \parencite[Remark 7.1.2]{frenkel:monster}, \parencite[Section 4.2.1]{staszkiewicz} and \parencite[356]{mason:voas}.
\end{refs}

We saw in \cref{sec:unicol-diffs1action} that the group $\centL T$ carries an action of the group $\Diff_+(S^1)$ and hence also of the subgroup $\Rot(S^1)$. It therefore indeed makes sense to talk about positive energy representations\footnote{We are taking the central extension $\centL T$ for the group $N$ in \cref{dfn:posenergyrep}, not $LT$.} of $\centL T$, once we make $\centL T$ into a topological group. The decomposition~\eqref{eq:unicol-decompidcomp-centext} will be the key to the construction and classification of such representations because it allows us to do so for the identity component $(\centL T)_0$ first. We will later transfer these results to $\centL T$.

\subsection{Irreducible representations of $\centV \liealg t$}
\label{subsec:unicol-Vtirreps}

Recall from \cref{subsec:unicol-centext-subgps} the definition of the subgroup $\centV \liealg t$ of $\centL T$, made in terms of the skew form $S$ in~\eqref{eq:unicol-cocycle-idcomp}. Because we are assuming $\Lambda$ to be positive definite, the observation at the end of \cref{subsec:unicol-centext-subgps} applies. That is, $\centV \liealg t$ is the Heisenberg group associated to the pair $(V\liealg t, -S)$, in the non-topological sense. Our first step towards studying its representation theory will be to define a specific complex structure on $V\liealg t$ that is compatible with the skew form $S$, so as to turn $\centV \liealg t$ into a topological group. This complex structure will then also be used to construct a Weyl representation of $\centV \liealg t$.

\subsubsection{A complex structure on $V\liealg t$}

Because each loop $\xi\colon S^1 \to \liealg t$ of $V\liealg t$ is smooth, it admits a Fourier decomposition
\[
\xi(\theta) = \sum_{\mathclap{k \in \ZZ \backslash \{0\}}} \xi_k e^{2\pi i k \theta},
\]
where the Fourier coefficient $\xi_k$ is a vector of the complexification $\CC\liealg t := \liealg t \otimes_\RR \CC$. (The condition $k \neq 0$ is there because $\int_{S^1} \xi(\theta) \dd \theta = 0$ and so $\xi_0 = 0$.) We therefore see that the complexified vector space $\CC(V\liealg t) := V\liealg t \otimes_\RR \CC$, which can be identified with the space of smooth loops $\xi\colon S^1 \to \CC\liealg t$ such that $\int_{S^1} \xi(\theta) \dd\theta = 0$, can be decomposed as a direct sum of two $\CC$-linear subspaces
\begin{equation}
\label{eq:compstruct-Vtdecomp}
\CC(V\liealg t) = V\liealg t^+ \oplus V\liealg t^-,
\end{equation}
where $V\liealg t^+$ is the subspace consisting of the loops whose negative Fourier coefficients are zero and $V\liealg t^-$ is defined similarly. Equivalently, $V\liealg t^+$ is the subspace of loops which admit an extension to the closed unit disc in the complex plane such that the restriction to the open unit disc is holomorphic, and a similar condition holds for the loops of $V\liealg t^-$ with respect to the other half of the Riemann sphere.

The standard complex conjugation $\conj{\xi \otimes z} := \xi \otimes \conj z$ on $\CC(V\liealg t)$ interchanges $V\liealg t^+$ and $V\liealg t^-$. The image of the canonical $\RR$-linear injection $V\liealg t \hookrightarrow \CC(V\liealg t)$ given by $\xi \mapsto \xi \otimes 1$ therefore equals $\{\xi^+ + \conj{\xi^+} \mid \xi^+ \in V\liealg t^+\}$ and so the compositions of this injection with the two $\CC$-linear projections $\CC(V\liealg t) \twoheadrightarrow V\liealg t^\pm$ are $\RR$-linear isomorphisms.

Now we define a complex structure $J\colon V\liealg t \to V\liealg t$ by setting $V\liealg t^+$ to be the $+i$-eigenspace of the $\CC$-linear extension $\CC J$ of $J$ to $\CC(V\liealg t)$ and $V\liealg t^-$ the $-i$-eigenspace, meaning that $\CC J$ multiplies a vector of $\CC(V\liealg t)$ of the form $\xi_k e^{2\pi i k \theta}$ with $\sgn(k) \cdot i$. In other words, $J$ is determined by
\[
(\CC J)(\xi \otimes 1) := J(\xi) \otimes 1 := i \xi^+ - i \conj{\xi^+},
\]
where $\xi^+ \in V\liealg t^+$ is the image of $\xi$ under $V\liealg t \xrightarrow{\sim} V\liealg t^+$.  It is then clear that $J$ squares to $-\id_{V\liealg t}$. This makes $(V\liealg t)_J \xrightarrow{\sim} V\liealg t^+$ a $\CC$-linear isomorphism by definition.

\subsubsection{Compatibility of the complex structure with the skew form}

Let us extend the form $\langle \cdot, \cdot \rangle$ on $\liealg t$ complex bilinearly to $\CC\liealg t$. Note that this extended form is still symmetric and not conjugate symmetric. In turn, this then defines a complex bilinear extension of $S$ to $\CC(V\liealg t)$. This extension of $S$ is again skew. We denote these extensions to $\CC \liealg t$ and $\CC(V\liealg t)$, respectively, by $\langle \cdot, \cdot \rangle$ and $S$ as well%\todo{Maybe $\langle \cdot, \cdot \rangle_{\CC\liealg t}$ and $\CC S$ is better.}
. If $\xi$ and $\eta$ are vectors in $V\liealg t$, then it can be checked by expanding them into their Fourier series that
\[
S(\xi ,\eta) =
\pi i \sum_{%
	\mathclap{k \in \ZZ \backslash \{0\}}%
} k \langle \xi_k, \eta_{-k} \rangle.
\]
This expression shows that the subspaces $V\liealg t^+$ and $V\liealg t^-$ are isotropic for $S$ and that
\[
S(\xi, \eta) = S(\xi^+, \eta^-) + S(\xi^-, \eta^+),
\]
where $\xi = \xi^+ + \xi^-$ and $\eta = \eta^+ + \eta^-$ are their decompositions along~\eqref{eq:compstruct-Vtdecomp}.

The complex bilinearity and skewness of $S$ can now be used to calculate
\begin{align*}
S(\xi, J\xi) &=  S\bigl(\xi^+, (\CC J)\xi^-\bigr) + S\bigl(\xi^-, (\CC J)\xi^+\bigr) \\
			&=  S(\xi^+, -i\xi^-) + S(\xi^-, i\xi^+) \\
			&=  -iS(\xi^+, \xi^-) + i S(\xi^-, \xi^+) \\
			&= -2i S(\xi^+, \xi^-) \\
			&= 2\pi \sum_{k=1}^\infty k \langle \xi_k, \xi_{-k} \rangle.
\end{align*}
Because $\xi$ takes its values in the real Lie algebra $\liealg t$ we have $\xi_{-k} = \overline{\xi_k}$. Since the form $\langle \cdot, \cdot \rangle$ on $\liealg t$ is positive definite, there holds $\langle \xi_k, \xi_{-k} \rangle \geq 0$. This proves that if $\xi \neq 0$, then $S(\xi, J\xi) > 0$. It is furthermore easily checked, again using the complex bilinearity of $S$, that $S(J\xi, J\eta) = S(\xi, \eta)$. We conclude that $J$ satisfies the two properties listed in \cref{thm:tamedcompstruct}\ref{thmitm:tamedcompstruct-compat}.

We may now proceed with the constructions described in \cref{app:heisgps}. The norm topology from $\langle \cdot, \cdot \rangle_J$ makes $V\liealg t$ a topological real vector space and the compatibility of $S$ with $J$ implies that $S$ is continuous with respect to this topology. Hence, $\centV \liealg t$ is a Heisenberg group. Furthermore, $(V\liealg t)_J$ is a complex pre-Hilbert space equipped with a Hermitian inner product $\langle \cdot, \cdot \rangle_J$. We may therefore form a bosonic Fock space $\hilb S((V\liealg t)_J)$. It carries an irreducible Weyl representation $W_J$ of the Heisenberg group $\centV \liealg t$. The central subgroup $\phasegp$ of $\centV \liealg t$ acts under $W_J$ as $z \mapsto z$. We will abbreviate this representation and its underlying Fock space by $W$ and $\hilb S$, respectively, that is, we suppress the reference to the specific complex structure $J$.

\subsubsection{Positivity of energy}

As already noted in \cref{subsec:unicol-diffS1action-subgps}, the $\Rot(S^1)$-action on $V\liealg t$ by $\RR$-linear operators preserves the skew form $S$. Moreover, the $\CC$-linear extension of the $\Rot(S^1)$-action to $\CC(V\liealg t)$ preserves the decomposition~\eqref{eq:compstruct-Vtdecomp} of $\CC(V\liealg t)$ because it is given on each vector of the form $\xi_k e^{2\pi i k\theta}$ simply by
\begin{equation}
\label{eq:unicol-rotaction-fourierterm}
(\CC \phi_{\theta'}^*)(\xi_k e^{2\pi i k\theta}) = \xi_k e^{2\pi i k(\theta-\theta')},
\end{equation}
where we write $\CC\phi_{\theta'}^*$ for the $\CC$-linear extension of the rotation operator $\phi_{\theta'}$ along the angle $\theta'$. Therefore, the $\Rot(S^1)$-action on $V\liealg t$ commutes with the complex structure $J$.

This $\Rot(S^1)$-action is strongly continuous with respect to the norm topology on $V\liealg t$ induced by $\langle \cdot, \cdot \rangle_J$. %\todo{Todo.}
Hence, we may apply \cref{thm:weylrep-intertwin} to conclude that the action of $\Rot(S^1)$ on $\centV \liealg t$ is strongly continuous, and that there is a representation $R$ of $\Rot(S^1)$ on $\hilb S$ which extends the action on the subspace $(V\liealg t)_J = \Sym^1((V\liealg t)_J) \subseteq \hilb S$ and such that the intertwining property
\begin{equation}
\label{eq:unicol-rotaction-intertwineheis}
R(\phi_\theta) W(\xi, z) R(\phi_\theta)^* = W\bigl(\phi_\theta \cdot (\xi, z)\bigr)
\end{equation}
is satisfied for all $\phi_\theta \in \Rot(S^1)$ and $(\xi, z) \in \centV \liealg t$.

Let us calculate the \defn{character} of $R$ (see \cref{dfn:gradchar}). For this purpose we might as well temporarily redefine $\hilb S$ to be $\hilb S(V\liealg t^+)$ instead of $\hilb S((V\liealg t)_J)$ because the $\CC$-linear isomorphism $(V\liealg t)_J \xrightarrow{\sim} V\liealg t^+$ intertwines the respective $\Rot(S^1)$-actions. Recall that $V\liealg t^+$ is the complex vector space of smooth loops $S^1 \to \CC\liealg t$ that are of the form
\[
\sum_{k=1}^\infty \xi_k e^{2\pi ik \theta}, \qquad \theta \in S^1.
\]
We then learn from the expression~\eqref{eq:unicol-rotaction-fourierterm} for the rotation action on $V\liealg t^+$, together with the fact that $R$ is defined in the proof of \cref{thm:weylrep-intertwin} by extending the rotation action on $V\liealg t^+$ to monomials factor-wise, that the $k$-th energy eigenspace $\hilb S(k)$ is spanned by monomials of the form
\[
\xi_{1r} e^{2\pi i k_1\theta} \xi_{2r} e^{2\pi i k_2\theta} \cdots \xi_{rr} e^{2\pi i k_r\theta},
\]
where $r \geq 0$, $\xi_{ir} \in \CC\liealg t$, $k_i \geq 1$ and $k_1 + \cdots + k_r = k$. Hence, $\hilb S$ contains no vectors of negative energy:

\begin{prop}
The intertwining $\Rot(S^1)$-action $R$ on the Weyl representation $W$ of $\centV \liealg t$ is of positive energy.
\end{prop}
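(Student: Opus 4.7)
The plan is to verify the three defining properties of a positive energy representation as stated in Definition A.\ref{dfn:posenergyrep}: (a) the Hilbert space $\hilb S$ decomposes as an orthogonal direct sum of the energy eigenspaces $\hilb S(k)$, (b) these eigenspaces vanish for $k<0$, and (c) each $\hilb S(k)$ is finite-dimensional. The explicit description of $\hilb S(k)$ already obtained in the paragraph preceding the statement does most of the work; the task is to package it.

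First I would address (b), which is immediate: in the displayed spanning set for $\hilb S(k)$ one has $k = k_1 + \cdots + k_r$ with each $k_i \geq 1$ and $r \geq 0$, so $k \geq 0$, with $\hilb S(0) = \CC \cdot \Omega$ corresponding to $r=0$. For (c) I would observe that the total number of compositions $(k_1,\ldots,k_r)$ of $k$ into positive parts is finite, and that for each such composition the space of symmetrised monomials $\xi_{1r} e^{2\pi i k_1 \theta} \cdots \xi_{rr} e^{2\pi i k_r \theta}$ with $\xi_{ir} \in \CC \liealg t$ is a quotient of $(\CC\liealg t)^{\otimes r}$, which is finite-dimensional since $\rank \Lambda < \infty$. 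Summing over the finitely many compositions yields $\dim \hilb S(k) < \infty$.

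For (a), I would use that the bosonic Fock space $\hilb S(V\liealg t^+)$ is by definition the Hilbert space completion of the symmetric algebra $\Sym(V\liealg t^+)$, and that the Fourier modes $e^{2\pi i k\theta} \cdot \CC\liealg t \subseteq V\liealg t^+$ for $k \geq 1$ form an orthogonal family with respect to the inner product $\langle \cdot, \cdot \rangle_J$. The extension of this inner product to symmetric powers preserves the grading by total Fourier degree, so the algebraic decomposition $\Sym(V\liealg t^+) = \bigoplus_{k \geq 0} \hilb S(k)^{\mathrm{alg}}$ is orthogonal. Taking Hilbert space completion gives the required orthogonal Hilbert sum decomposition $\hilb S = \widehat{\bigoplus}_{k \geq 0} \hilb S(k)$, with each summand already complete because it is finite-dimensional by (c).

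The main (mild) obstacle is making sure the decomposition in (a) really is the eigenspace decomposition of $R$, rather than just some abstract grading of $\hilb S$. This is pinned down by the computation~\eqref{eq:unicol-rotaction-fourierterm} together with the factor-wise definition of $R$ on monomials: on a monomial of total Fourier degree $k$, $R(\phi_{\theta'})$ acts by the scalar $e^{-2\pi i k \theta'}$, so the subspace $\hilb S(k)^{\mathrm{alg}}$ is indeed contained in the $k$-th energy eigenspace of $R$, and conversely every eigenvector of $R$ lies in one of these subspaces by orthogonality of distinct characters of $\Rot(S^1)$. With this identification, the three conditions assemble into the claim.
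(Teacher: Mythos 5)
Your proof is correct and follows essentially the same route as the paper: both identify the $k$-th energy eigenspace, via~\eqref{eq:unicol-rotaction-fourierterm} and the factor-wise definition of $R$ on monomials, with the span of monomials of total Fourier degree $k$ with all $k_i \geq 1$, whence non-negativity of the energy and finite-dimensionality (which the paper records immediately afterwards through the coloured-partition count giving $\ch_R(q)$). The only difference is that you verify the orthogonal eigenspace decomposition of $\hilb S$ by hand, whereas the paper takes it for granted from the general decomposition of strongly continuous unitary representations of the compact group $\Rot(S^1)$ recalled before \cref{dfn:posenergyrep}; this is harmless.
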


By picking a basis for the Lie algebra $\CC\liealg t$ we see that the dimension of $\hilb S(k)$ equals the coefficient of $q^k$ in the power series
\[
\biggl(\sum_{k=0}^\infty p(k)q^k\biggr)^{\dim \liealg t},
\]
where $p(k)$ is the \defn{Euler partition function}, that is, $p(k)$ for $k \geq 1$ denotes the number of ways to write $k$ as the sum of positive integers, and $p(0) := 1$ by convention. (This coefficient can also be understood as the number of \defn{$(\dim \liealg t)$-coloured partitions} of $k$.) In turn, L.~Euler found that the generating function of the partition function named after him admits an expression as an infinite product:
\[
\sum_{k=0}^\infty p(k)q^k = \prod_{j=1}^\infty (1-q^j)^{-1},
\]
so if we set
\[
\eta(q) := q^{1/24} \prod_{j=1}^\infty (1-q^j),
\]
then we obtain the formula
\[
\ch_R(q) = q^{\dim \liealg t/24} \eta(q)^{-\dim \liealg t}.
\]

Our reason for writing $\ch_R(q)$ in this way is that if the formal variable $q$ is substituted by $e^{2 \pi i z}$ for $z \in \CC$ with $\Im z > 0$, then $\eta(q)$ is known as \defn{Dedekind's eta function}. It is a \defn{holomorphic modular form} for the full group $\lieSL(2,\ZZ)$ of weight $1/2$ (with a highly non-trivial multiplier) (see \parencite[Theorem 3.4]{apostol:modular}). While $\ch_R(q)$ does not satisfy the necessary transformation behaviour to be a modular form as well, the normalised character
\begin{equation}
\label{eq:partfuncfocksp}
Z_W(q) := q^{-\dim \liealg t/24} \ch_R(q) = \eta(q)^{-\dim \liealg t}
\end{equation}
does.

Having confirmed that $W$ is of positive energy, the unicity result \cref{thm:stonevonneumann} now tells us that

\begin{thm}
\label{thm:unicol-classifyVt-irreps}
Every irreducible, positive energy representation of $\centV \liealg t$ such that the central subgroup $\phasegp \subseteq \centV \liealg t$ acts  as $z \mapsto z$ is isomorphic to $W$.
\end{thm}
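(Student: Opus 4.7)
The plan is to invoke the Stone--von Neumann uniqueness theorem \cref{thm:stonevonneumann} for the Heisenberg group $\centV\liealg t$, using the positivity of energy hypothesis to pick out a single isomorphism class. We already know that $W$ itself is irreducible, has $\phasegp$ acting as $z\mapsto z$, and is of positive energy. The real task is the converse: any irreducible positive energy representation $W'\colon\centV\liealg t\to\lieU(\hilb H')$ with the same central character is isomorphic to $W$.

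First I would extract a vacuum vector from $W'$. By an argument analogous to \cref{thm:weylrep-intertwin} (which we may apply since the $\Rot(S^1)$-action on $\centV\liealg t$ is strongly continuous), there exists an intertwining positive energy $\Rot(S^1)$-action $R'$ on $\hilb H'$. Its spectrum lies in $\ZZ_{\geq 0}$ and is discrete, so after twisting $R'$ by a character of $\Rot(S^1)$ we may assume the minimum eigenvalue is $0$, giving a non-zero lowest energy subspace $\hilb H'(0)$. Picking a unit vector $\Omega\in\hilb H'(0)$ and differentiating the intertwining relation \eqref{eq:unicol-rotaction-intertwineheis} in the Fourier decomposition $\CC(V\liealg t)=V\liealg t^+\oplus V\liealg t^-$, one sees that the infinitesimal generators attached to $V\liealg t^-$-vectors strictly lower energy; since $\Omega$ already has lowest energy, they must annihilate it.

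Next I would build an isometric intertwiner $\hilb S\to\hilb H'$ by sending the Fock vacuum to $\Omega$ and extending through the infinitesimal creation operators indexed by $(V\liealg t)_J\xrightarrow{\sim}V\liealg t^+$. The canonical commutation relations coming from the Heisenberg group structure $(V\liealg t,-S)$ reduce inner products of states of the form $a^\dagger(\xi_1)\cdots a^\dagger(\xi_n)\Omega$ to sums over Wick contractions involving $\langle\cdot,\cdot\rangle_J$, which match exactly the inner products on $\hilb S((V\liealg t)_J)$; hence the map is well-defined and isometric. Its image is invariant under all Weyl operators (creation operators by construction, annihilation operators via the commutation relations together with the vacuum condition), so irreducibility of $W'$ forces the image to equal $\hilb H'$.

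The main obstacle is analytical: the creation and annihilation operators are unbounded, so both the existence of a common dense invariant domain on which to differentiate the group representation and the closability of the resulting intertwiner need to be handled carefully. The cleanest workaround is to stay at the bounded level of Weyl operators throughout, comparing matrix coefficients $\langle W'(\xi,1)\Omega,\Omega\rangle$ with those of $W$ on its Fock vacuum; the compatibility of $J$ with $S$ together with the annihilation property of $\Omega$ forces these coefficients to agree, which is what ultimately pins down $W'\cong W$.
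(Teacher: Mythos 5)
Your opening move is exactly the paper's proof: having verified in the preceding subsections that $J$ is compatible with $S$, that the $\Rot(S^1)$-action preserves $S$, commutes with $J$ and is strongly continuous, and that the resulting intertwining action $R$ on $\hilb S$ is of positive energy, the statement follows by a one-line appeal to \cref{thm:stonevonneumann}. Where you diverge is in then treating the uniqueness of $W'$ as a remaining task: that uniqueness \emph{is} the conclusion of \cref{thm:stonevonneumann}, so your subsequent paragraphs do not complete the paper's argument but rather sketch a proof of the Stone--von Neumann theorem itself, which the paper deliberately leaves to the literature (it explicitly declines to prove it, citing Pressley--Segal). As a standalone alternative, your sketch follows the standard lowest-energy-vector/Fock-reconstruction route, which is a legitimate and more self-contained path, but it is not yet a proof. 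Two concrete issues: first, the existence of the intertwining rotation action $R'$ on $\hilb H'$ is simply part of \cref{dfn:posenergyrep} and is not obtained by an argument analogous to \cref{thm:weylrep-intertwin} --- that proposition constructs $R$ on the Fock space, and there is no Fock structure on an abstract $\hilb H'$ to run it on; second, the analytic core --- providing a common dense invariant domain for the differentiated creation/annihilation operators so that ``annihilation operators lower energy, hence kill $\Omega$'' makes sense, or equivalently showing at the bounded level that the vacuum matrix coefficients $\langle W'(\xi,1)\Omega,\Omega\rangle$ are forced to equal the Gaussian $e^{-\frac12\langle\xi,\xi\rangle_J}$ --- is acknowledged but not carried out, and this is precisely the content that distinguishes the infinite-dimensional positive-energy theorem from a formality. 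So either stop after your first step, which is the paper's proof, or commit to a genuine proof of \cref{thm:stonevonneumann}, in which case the unbounded-operator issues must actually be resolved rather than deferred.
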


\subsubsection{The intertwining action of Lie algebra isometries}

We exhibit, on top of the representations of $\centV \liealg t$ and $\Rot(S^1)$, a third piece of structure that the Hilbert space $\hilb S$ possesses.

Recall from \cref{subsec:unicol-isometryactions-subgps} that the group $\Aut(\liealg t; \langle \cdot, \cdot \rangle)$ acts through $\RR$-linear automorphisms on $V\liealg t$ in a way that preserves the skew form $S$ in~\eqref{eq:unicol-cocycle-idcomp}. Furthermore, if $O \in \Aut(\liealg t; \langle \cdot, \cdot \rangle)$ and $O_*$ is the corresponding operator on $V\liealg t$, then the $\CC$-linear extension $\CC(O_*)$ of $O_*$ to $\CC(V\liealg t)$ preserves each of the subspaces $V\liealg t^\pm$ because it is given on each vector of the form $\xi_k e^{2\pi i k \theta}$ by
\[
\CC(O_*)(\xi_k e^{2\pi i k \theta}) = (\CC O)(\xi_k) e^{2\pi i k \theta},
\]
where $\CC O$ is the $\CC$-linear extension of $O$ to $\CC \liealg t$. Therefore, $O_*$ commutes with the complex structure $J$ on $V\liealg t$.

It is strongly continuous with respect to the norm topology on $V\liealg t$ induced by $\langle \cdot, \cdot \rangle_J$. %\todo{Todo.}
Hence the demands of \cref{thm:weylrep-intertwin} are met and $\Aut(\liealg t; \langle \cdot, \cdot \rangle)$ acts strongly continuously on $\centV \liealg t$ (we denote the translate of $(\xi, z) \in \centV \liealg t$ by $O$ as $O \cdot (\xi, z)$) and there is a representation $U$ of $\Aut(\liealg t; \langle \cdot, \cdot \rangle)$ on $\hilb S$ which extends the action on $(V\liealg t)_J$ such that the intertwining property
\begin{equation}
\label{eq:unicol-liealgisomets-intertwin}
U(O) W(\xi, z) U(O)^* = W\bigl(O \cdot (\xi, z)\bigr)
\end{equation}
is satisfied. This action fixes the vacuum vector.

It is obvious from looking at their actions on $\CC(V\liealg t)$ that the actions of $\Rot(S^1)$ and $\Aut(\liealg t; \langle \cdot, \cdot \rangle)$ on $V\liealg t$ commute. The same then holds for their respective extensions $R$ and $U$ to $\hilb S$, given the way they are defined in the proof of \cref{thm:weylrep-intertwin}.

\subsection{Irreducible representations of $(\centL T)_0$}
\label{subsec:unicol-irrepsofidcomp}

In \cref{subsec:unicol-Vtirreps} we made $V\liealg t$ and $\centV\liealg t$ into topological groups. By giving $T \oplus V\liealg t$ and $T \times \centV\liealg t$ the product topologies, $(LT)_0$ and $(\centL T)_0$ become topological groups as well through the decompositions~\eqref{eq:unicol-decompidcomp} and~\eqref{eq:unicol-decompidcomp-centext}, respectively.

The representation $W$ of $\centV \liealg t$ now allows us to easily build irreducible, positive energy representations of $(\centL T)_0$. We namely construct for every character $l \in \Hom_\LieGp(T, \phasegp)$ a representation $W_l$ of $(\centL T)_0$ on the Hilbert space tensor product $\hilb S_l := \CC_l \otimes \hilb S$, where $\CC_l$ denotes a copy of $\CC$, as follows. Let $(\gamma, z) \in (\centL T)_0$ and consider its image
\[
\bigl(\avg \gamma, (\xi - \avg \xi, z)\bigr) \in T \times \centV \liealg t
\]
under the isomorphism~\eqref{eq:unicol-decompidcomp-centext}. Here, $\xi\colon S^1 \to \liealg t$ is any choice of lift of $\gamma$. Then make $\avg \gamma$ act on $\CC_l$ via $l$, and let $(\xi - \avg \xi, z)$ act on $\hilb S$ via $W$. In other words, $W_l$ is the tensor product representation of $l$ and $W$. It is irreducible because $l$ and $W$ are.

\begin{rmk}[Characters of $T$ as elements of the dual lattice of $\Lambda$]
We will often use the isomorphism of $\ZZ$-modules
\[
\Hom_\LieGp\bigl(T, \phasegp\bigr) \xrightarrow{\sim} \Lambda^\vee := \Hom_\Ab(\Lambda, \ZZ).
\]
It is given by first differentiating a character $l\colon T \to \phasegp$ and then restricting to $\Lambda \subseteq \liealg t := \Lambda \otimes_\ZZ \RR$. The inverse isomorphism takes the tensor product of a homomorphism $f\colon \Lambda \to \ZZ$ with $\phasegp$. We will use this identification to give meaning to expressions like $\langle l, l \rangle$, or $\langle l, \alpha \rangle$ if $\alpha \in \liealg t$.
\end{rmk}

Next, we equip such a representation $W_l$ of $(\centL T)_0$ with an intertwining rotation action. Unlike what we did for $\hilb S$, we will not let $\Rot(S^1)$ itself act on $\hilb S_l$, but a certain finite cover of $\Rot(S^1)$ instead.\footnote{See the beginning of \cref{app:posenergycond} for the models and corresponding notations we use for the covering groups of $\Rot(S^1)$.} Let $m$ be the smallest positive integer such that $m \langle l,l \rangle \in 2\ZZ$. We define a representation $R_l$ of $\Rot^{(m)}(S^1)$ on $\hilb S_l$ through the character $[\Phi_\theta] \mapsto e^{\pi i \langle l,l \rangle \theta}$ on $\CC_l$ and the action $R$ of $\Rot(S^1)$ on $\hilb S$ via the covering homomorphism $\Rot^{(m)}(S^1) \twoheadrightarrow \Rot(S^1)$. That is, $R_l$ is the tensor product representation of this character with $R$. Then, because $R$ intertwines in the manner~\eqref{eq:unicol-rotaction-intertwineheis} with $W$, it can be checked that also $R_l$ does so with $W_l$. To prove this one uses that if $\xi\colon S^1 \to \liealg t$ is a lift of $\gamma \in (LT)_0$, then $[\Phi_\theta]^*\xi$ is a lift of $[\Phi_\theta]^*\gamma$.

There is a Hilbert space isomorphism
\[
f_l\colon \hilb S_l \xrightarrow{\sim} \hilb S, \qquad 1 \otimes v \mapsto v,
\]
which obviously intertwines the representations $W_l|_{\centV \liealg t}$ and $W$ of $\centV \liealg t$, but does not do so for $R_l$ and $R$. Instead, for $\Phi_\theta \in \Rot^{(\infty)}(S^1)$ the following square commutes:\footnote{In writing $R[\Phi_\theta]$, the notation $[\Phi_\theta]$ stands for the image of $\Phi_\theta$ in $\Rot(S^1)$, while $[\Phi_\theta]$ in $R_l[\Phi_\theta]$ means its image in $\Rot^{(m)}(S^1)$.}
\[
\begin{tikzcd}
\hilb S_l \ar[r, "f_l", "\sim"'] \ar[d, "R_l{[\Phi_\theta]}"'] &[3em] \hilb S \ar[d, "R{[\Phi_\theta]}"] \\
\hilb S_l \ar[r, "e^{-\pi i \langle l, l \rangle \theta} f_l"', "\sim"] & \hilb S.
\end{tikzcd}
\]
Therefore, $f_l$ restricts for each $a \in (1/m)\ZZ$ on the energy eigenspace $\hilb S(a)$ to an isomorphism
\[
f_l|_{\hilb S_l(a)}\colon \hilb S_l(a) \xrightarrow{\sim} \hilb S\bigl(a - \langle l, l \rangle /2 \bigr)
\]
of finite-dimensional Hilbert spaces. That is, the energy of $\hilb S_l$ is shifted from that of $\hilb S$ by a fractional value. Because $\hilb S(a - \langle l,l \rangle/2) = \{0\}$ for $a < \langle l,l \rangle/2$ and there holds $\langle l, l \rangle \geq 0$ by the positive definiteness of $\Lambda$, this shows

\begin{prop}
\label{thm:unicol-posenergyidcomp}
The intertwining $\Rot^{(m)}(S^1)$-action $R_l$ on the representation $W_l$ of $(\centL T)_0$ is of positive energy.
\end{prop}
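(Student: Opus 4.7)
The plan is to verify directly that the infinitesimal generator of the $\Rot^{(m)}(S^1)$-action $R_l$ has non-negative spectrum, by computing the energy decomposition of $\hilb S_l$ as a tensor product. Since $R_l$ is defined as the tensor product of the character $[\Phi_\theta] \mapsto e^{\pi i \langle l, l \rangle \theta}$ on $\CC_l$ with the pullback of $R$ along the covering $\Rot^{(m)}(S^1) \twoheadrightarrow \Rot(S^1)$, the energy spectrum of $R_l$ should be the Minkowski sum of the energy spectra of these two factors.

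First I would note that the character on $\CC_l$ has a single energy eigenvalue, equal to $\langle l, l \rangle/2$, while the pullback of $R$ to $\Rot^{(m)}(S^1)$ has the same energy eigenspaces as $R$ itself (but now indexed by elements of $(1/m)\ZZ$ instead of $\ZZ$). Therefore, for each $a \in (1/m)\ZZ$, the energy eigenspace decomposes as
\[
\hilb S_l(a) \cong \CC_l \otimes \hilb S\bigl(a - \langle l, l \rangle/2\bigr),
\]
which is precisely the content of the isomorphism $f_l|_{\hilb S_l(a)}$ already recorded in the excerpt.

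Next, I would invoke the previously established positivity of energy of $R$ on $\hilb S$, which gives $\hilb S(b) = \{0\}$ whenever $b < 0$. Consequently $\hilb S_l(a) = \{0\}$ whenever $a < \langle l, l \rangle/2$.

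Finally, the hypothesis that $\Lambda$ is positive definite enters: it ensures $\langle l, l \rangle \geq 0$ for every $l \in \Lambda^\vee$, so $\langle l, l \rangle /2 \geq 0$ and hence $\hilb S_l(a) = \{0\}$ for all $a < 0$. By \cref{dfn:posenergyrep} this is exactly what positivity of energy of $R_l$ means. The only subtle point worth flagging is that the finest admissible energy grading on $\hilb S_l$ lies in $(1/m)\ZZ$, which is precisely why one must pass from $\Rot(S^1)$ to the cover $\Rot^{(m)}(S^1)$ in the definition of $R_l$; however, since this does not change the sign of the energies, there is no real obstacle.
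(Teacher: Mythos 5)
Your proposal is correct and follows essentially the same route as the paper: identify $\hilb S_l(a)$ with $\hilb S\bigl(a - \langle l, l \rangle/2\bigr)$ via the tensor decomposition (the paper's isomorphism $f_l$), then combine the previously established positivity of energy of $R$ on $\hilb S$ with $\langle l, l \rangle \geq 0$ coming from the positive definiteness of $\Lambda$. The only point you leave tacit is the finite-type requirement in \cref{dfn:posenergyrep}, but this is immediate from the same eigenspace identification since $R$ is of finite type.
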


Using the isomorphisms $f_l|_{\hilb S_l(a)}$ we can even calculate the character of $R_l$ knowing that of $R$:
\begin{align*}
\ch_{R_l}(q) &= \sum_{\mathclap{a \in (1/m)\ZZ}} \dim \bigl(\hilb S_l(a)\bigr) q^a \\
	&= \sum_{\mathclap{a \in \langle l, l \rangle/2 + \ZZ_{\geq 0}}} \dim \Bigl(\hilb S\bigl(a - \langle l,l \rangle/2\bigr)\Bigr) q^a \\
	&= \sum_{k = 0}^\infty \dim \bigl(\hilb S(k)\bigr) q^{k + \langle l, l \rangle/2} \\
	&= q^{\langle l, l \rangle/2} \ch_R(q) \\
	&= q^{\dim T/24} q^{\langle l, l \rangle/2} \eta(q)^{-\dim T}.
\end{align*}
Just as we did for the triple $(W, \hilb S, R)$ in~\eqref{eq:partfuncfocksp}, this result becomes nicer if we shift the energy downwards a bit:
\[
Z_{W_l}(q) := q^{-\dim T/24} \ch_{R_l}(q) = q^{\langle l, l \rangle/2} \eta(q)^{-\dim T}.
\]

These representations $W_l$ are clearly mutually non-isomorphic since this holds for their restrictions to $T$. What is more, all irreducible representations of $(\centL T)_0$ of the type we are concerned with are of this form. More precisely,

\begin{thm}
\label{thm:unicol-classifyidcompirreps}
Every irreducible, positive energy representation of $(\centL T)_0$ such that the central subgroup $\phasegp$ acts  as $z \mapsto z$ is isomorphic to $W_l$ for some character $l$ of $T$.
\end{thm}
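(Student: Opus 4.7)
The plan is to exploit the decomposition $(\centL T)_0 \cong T \times \centV\liealg t$ from~\eqref{eq:unicol-decompidcomp-centext} and to reduce the classification to the unicity result \cref{thm:unicol-classifyVt-irreps} for $\centV\liealg t$. Let $V$ be a representation as in the statement, and write $R_V$ for its intertwining positive energy action of $\Rot^{(n)}(S^1)$ for some $n \geq 1$.

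First I would restrict to the subgroup $\centV\liealg t \subseteq (\centL T)_0$. Since $R_V$ continues to intertwine with this restriction and remains positive energy, the restriction $V|_{\centV\liealg t}$ is a positive energy representation of $\centV\liealg t$ on which $\phasegp$ acts as $z \mapsto z$. The next step is to argue that $V|_{\centV\liealg t}$ decomposes as a Hilbert space tensor product $M \otimes \hilb S$, where $\centV\liealg t$ acts trivially on the multiplicity space $M$ and through $W$ on $\hilb S$. The cleanest way to obtain this is to consider the lowest energy eigenspace $M := V(a_0)$ where $a_0$ is the smallest element of the spectrum of $R_V$: positive energy and the presence of a lowest weight allows one to check that vectors in $M$ are annihilated by the Fourier modes of negative frequency, and the span of the $W(\centV\liealg t)$-orbit of $M$ in $V$ is dense. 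Applying \cref{thm:unicol-classifyVt-irreps} to the irreducible subrepresentation generated by any unit vector in $M$ identifies it with $W$, and iterating (or equivalently assembling the isomorphism all at once via the canonical map $M \otimes \hilb S \to V$) yields the factorisation as $\centV\liealg t$-representations.

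Next I would transport the $T$-action. Because the decomposition $(\centL T)_0 \cong T \times \centV \liealg t$ is a direct product of topological groups, the action of $T \subseteq (\centL T)_0$ on $V$ commutes with the action of $\centV\liealg t$. Under the isomorphism $V \cong M \otimes \hilb S$, Schur's lemma applied to the irreducible $\centV\liealg t$-module $\hilb S$ forces every $T$-translation operator to be of the form $A \otimes \id_{\hilb S}$ for some bounded operator $A$ on $M$. Thus the $T$-action descends to a (strongly continuous, unitary) representation of $T$ on $M$.

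Finally, I would use irreducibility of $V$ to conclude. Any closed $T$-invariant subspace $M_0 \subseteq M$ gives a closed $(\centL T)_0$-invariant subspace $M_0 \otimes \hilb S \subseteq V$, so $M$ must be irreducible as a $T$-representation. Since $T$ is a compact abelian group, this means $\dim M = 1$ and the $T$-action is given by a character $l \in \Hom_\LieGp(T, \phasegp)$. Unwinding the definitions, this identifies $V$ with $W_l$ as representations of $(\centL T)_0$. The main obstacle I anticipate is the Step 2 factorisation when $V|_{\centV\liealg t}$ is not assumed irreducible a priori; the key input for handling it rigorously is the positive energy hypothesis, which guarantees a lowest energy subspace from which copies of the irreducible Weyl representation $W$ can be extracted via \cref{thm:unicol-classifyVt-irreps}.
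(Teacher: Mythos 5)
Your route is correct in outline but genuinely different from the paper's, and its one delicate point is exactly the Step~2 factorisation you flag. The paper's proof is shorter because it applies Schur's lemma to the $T$-factor \emph{first}: under the isomorphism~\eqref{eq:unicol-decompidcomp-centext} the subgroup $T$ is a direct factor, so $Q|_T$ commutes with all of $Q$ and, by irreducibility of $Q$, acts by scalars, i.e.\ by a character $l$. It then follows immediately that any closed $\centV\liealg t$-invariant subspace is invariant under all of $T \times \centV\liealg t$, so $Q|_{\centV\liealg t}$ is irreducible; positivity of energy passes to this restriction because~\eqref{eq:unicol-decompidcomp-centext} is $\Rot(S^1)$-equivariant, and \cref{thm:unicol-classifyVt-irreps} then identifies $Q|_{\centV\liealg t}$ with $W$, giving $Q \cong W_l$ with no multiplicity-space analysis at all.

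Concerning your Step~2 as written: the density of the span of the $W(\centV\liealg t)$-orbit of the lowest energy eigenspace does \emph{not} follow from positive energy alone (a direct sum of two positive energy representations with different lowest energies is a counterexample); you need irreducibility of $V$ under the full group, together with the observation that $T$ commutes with the rotation action (rotations fix constant loops) and hence preserves the lowest eigenspace, so that the orbit closure is $(\centL T)_0$-invariant. Moreover the appeal to ``Fourier modes of negative frequency'' annihilating lowest-energy vectors uses Lie-algebra-level operators that this paper deliberately does not develop. Both issues disappear if you instead invoke \cref{thm:posrepcontainsirrep}: the restriction $V|_{\centV\liealg t}$ is of positive energy (by the equivariance of~\eqref{eq:unicol-decompidcomp-centext}) and hence an orthogonal direct sum of irreducible positive energy subrepresentations, each isomorphic to $W$ by \cref{thm:unicol-classifyVt-irreps}; this yields the factorisation $V \cong M \otimes \hilb S$ at once, after which your Steps 3 and 4 (the $T$-operators are $A \otimes \id_{\hilb S}$ by Schur, $M$ is an irreducible hence one-dimensional $T$-representation) go through and give $V \cong W_l$. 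What your approach buys is the standard multiplicity-space argument, which would survive in situations where the complementary factor is not central; what the paper's order of operations buys is that the centrality of $T$ makes all of that machinery unnecessary.
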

\begin{proof}
Let $Q$ be such a representation on a Hilbert space $\hilb H$ and consider it as a representation of $T \times \centV \liealg t$ via the isomorphism~\eqref{eq:unicol-decompidcomp-centext}. Because $Q|_T$ commutes with $Q$, $Q|_T$ is a character, say, $l$, by Schur's lemma. This implies that $Q|_{\centV \liealg t}$ is irreducible. A closed linear subspace of $\hilb H$ which is stable under $\centV \liealg t$ then namely is stable also under $T \times \centV \liealg t$. Furthermore, $Q|_{\centV \liealg t}$ is again of positive energy. Let $R'$ namely be a positive energy representation of $\Rot^{(n)}(S^1)$, for some $n$, which intertwines in the manner~\eqref{eq:rotS1intertwin} with $Q$, now considered as a representation of $(\centL T)_0$ again. As we already noted in \cref{subsec:unicol-diffS1action-subgps}, the isomorphism~\eqref{eq:unicol-decompidcomp-centext} is $\Rot(S^1)$-equivariant. Hence $R'$ also intertwines with $Q|_{\centV \liealg t}$. By the unicity result \cref{thm:unicol-classifyVt-irreps} for $\centV \liealg t$, this proves that $Q|_{\centV \liealg t}$ is isomorphic to $W$.
\end{proof}

Having discussed these representations of $(\centL T)_0$ and $\Rot^{(m)}(S^1)$ on $\hilb S_l$, we finally demonstrate a third piece of structure on the collection of all these Hilbert spaces.

Recall from \cref{subsec:unicol-Vtirreps} that there is a representation $U$ of the group $\Aut(\liealg t; \langle \cdot, \cdot \rangle)$ of Lie algebra isometries on $\hilb S$ which intertwines in a certain way with $W$. Because we learned in \cref{subsec:unicol-isometryactions-subgps} that the smaller group $\Aut(\Lambda; \langle \cdot, \cdot \rangle)$ of lattice automorphisms acts on $(\centL T)_0$, one might intially expect a representation of $\Aut(\Lambda; \langle \cdot, \cdot \rangle)$ on $\hilb S_l$ to exist which similarly intertwines with $W_l$. However, this turns out not to be the case since $\Aut(\Lambda; \langle \cdot, \cdot \rangle)$ translates the parameter $l$ also, as noted in \cref{subsec:duallattice-discgp}. The situation is instead described by

\begin{prop}
\label{eq:unicol-isometry-idcomp}
Let $l$ be a character of $T$ and $g$ a lattice automorphism of $\Lambda$. Then the Hilbert space isomorphism
\[
U_l(g)\colon \hilb S_l \xrightarrow{\sim} \hilb S_{g \cdot l},
	\qquad
1 \otimes v \mapsto 1 \otimes U(\RR g)(v).
\]
satisfies the intertwining properties
\begin{equation}
\label{eq:unicol-Ul-intertwin}
U_l(g) W_l(\gamma, z) U_l(g)^* = W_{g\cdot l}\bigl(g \cdot (\gamma, z)\bigr)
\end{equation}
for all $(\gamma, z) \in (\centL T)_0$, and
\begin{equation}
\label{eq:unicol-Ul-intertwin-rot}
U_l(g) R_l[\Phi_\theta] U_l(g)^* = R_{g \cdot l}[\Phi_\theta]
\end{equation}
for all $[\Phi_\theta] \in \Rot^{(m)}(S^1)$.
\end{prop}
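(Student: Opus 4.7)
The plan is to decompose both $W_l(\gamma, z)$ and $R_l[\Phi_\theta]$ along the tensor product structure $\hilb S_l = \CC_l \otimes \hilb S$ and then verify each factor separately, relying on the already-established intertwining properties for $W$ and $R$ on $\hilb S$.

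First I would record two basic facts about the action of $g$ on characters. Identifying $\Hom_\LieGp(T, \phasegp)$ with $\Lambda^\vee$ and letting $g$ act on $\Lambda^\vee$ by $g \cdot l := l \circ g^{-1}$, we have the translation identity $(g \cdot l)(g_T x) = l(x)$ for all $x \in T$, and, since $g$ preserves $\langle \cdot, \cdot \rangle$, the isometry identity $\langle g \cdot l, g \cdot l \rangle = \langle l, l \rangle$. The second of these already explains why the proposed operator $U_l(g)$ has a chance of intertwining $R_l$ with $R_{g\cdot l}$, since both rotation characters on $\CC_l$ and $\CC_{g\cdot l}$ agree.

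For the intertwining property \eqref{eq:unicol-Ul-intertwin}, I would unwind the decomposition \eqref{eq:unicol-decompidcomp-centext}: given $(\gamma, z) \in (\centL T)_0$ with lift $\xi\colon S^1 \to \liealg t$, write $W_l(\gamma, z) = l(\avg \gamma) \otimes W(\xi - \avg \xi, z)$. On the other side, since $g$ acts on $(\centL T)_0$ by $g \cdot (\gamma, z) = (g_* \gamma, z)$ (no winding correction appears because $\Delta_\gamma = 0$, per \cref{subsec:unicol-isometryactions-subgps}), and since averaging is natural under $\RR g$, the image has $\avg{g_* \gamma} = g_T(\avg \gamma)$ and $V\liealg t$-part $\RR g \circ (\xi - \avg \xi)$. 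Applying $1 \otimes U(\RR g)$ on the left and its adjoint on the right, the $\hilb S$-factor transforms via the already-known intertwining \eqref{eq:unicol-liealgisomets-intertwin} for $W$, producing $W(\RR g \circ (\xi - \avg \xi), z)$. The scalar factor $l(\avg \gamma)$ is left unchanged, but by the translation identity above it equals $(g \cdot l)(\avg{g_* \gamma})$. Reassembling these gives $W_{g \cdot l}(g_* \gamma, z) = W_{g \cdot l}(g \cdot (\gamma, z))$, as required.

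For \eqref{eq:unicol-Ul-intertwin-rot}, the argument is shorter. Writing $R_l[\Phi_\theta] = e^{\pi i \langle l, l \rangle \theta} \cdot (1 \otimes R[\Phi_\theta])$, conjugation by $1 \otimes U(\RR g)$ leaves the scalar untouched and transforms $R[\Phi_\theta]$ to $U(\RR g) R[\Phi_\theta] U(\RR g)^*$. The remark at the end of \cref{subsec:unicol-Vtirreps} that the $\Rot(S^1)$- and $\Aut(\liealg t; \langle \cdot, \cdot \rangle)$-actions on $V\liealg t$ commute, together with the construction of $R$ and $U$ from \cref{thm:weylrep-intertwin} (factor-wise extension to symmetric tensors), implies that this conjugate equals $R[\Phi_\theta]$ itself. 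Finally, the isometry identity $\langle g \cdot l, g \cdot l \rangle = \langle l, l \rangle$ rewrites the scalar as $e^{\pi i \langle g \cdot l, g \cdot l \rangle \theta}$, completing the identification with $R_{g \cdot l}[\Phi_\theta]$.

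No step here is genuinely deep; the only real pitfall is bookkeeping—in particular, correctly tracking the character twist so that the $\CC_l \to \CC_{g \cdot l}$ shift cancels exactly against the $g_T$-translation of $\avg \gamma$. Once the translation and isometry identities for the action on $\Lambda^\vee$ are in hand, both intertwining relations reduce to the corresponding facts for $W$ and $R$ on $\hilb S$ that were established in \cref{subsec:unicol-Vtirreps}.
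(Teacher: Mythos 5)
Your proposal is correct and follows essentially the same route as the paper's proof: decompose along $\hilb S_l = \CC_l \otimes \hilb S$, use the intertwining property \eqref{eq:unicol-liealgisomets-intertwin} of $U$ with $W$ on the $\hilb S$-factor, cancel the character twist via $(g\cdot l)(\avg g_*\gamma) = l(\avg\gamma)$, and reduce \eqref{eq:unicol-Ul-intertwin-rot} to the commutation of $R$ and $U$ together with $\langle g\cdot l, g\cdot l\rangle = \langle l, l\rangle$. No substantive differences.
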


Note that the smallest positive integer $m$ such that $m\langle l, l \rangle \in 2\ZZ$ is also the smallest positive integer $m$ such that $m\langle g \cdot l, g \cdot l \rangle \in 2\ZZ$ because $\langle g \cdot l, g \cdot l \rangle = \langle l, l \rangle$. Hence, $R_l$ and $R_{g \cdot l}$ are both representations of the same covering group $\Rot^{(m)}(S^1)$. Moreover, they act by the same character $[\Phi_\theta] \mapsto e^{\pi i \langle l, l \rangle\theta} = e^{\pi i \langle g\cdot l, g\cdot l \rangle\theta}$ on the tensor factors $\CC_l$ and $\CC_{g \cdot l}$ of $\hilb S_l$ and $\hilb S_{g \cdot l}$, respectively.

\begin{proof}
On the one hand, there holds
\begin{align*}
U_l(g) W_l(\gamma, z) U_l(g)^* (1 \otimes v)
	&= U_l(g) W_l(\gamma, z) \bigl(1 \otimes U(\RR g)^*(v)\bigr) \\
	&= U_l(g) \bigl(l(\avg \gamma) \otimes W(\xi - \avg \xi, z) U(\RR g)^*(v)\bigr) \\
	&= l(\avg \gamma) \otimes U(\RR g)W(\xi - \avg \xi, z) U(\RR g)^*(v),
\end{align*}
and thanks to the intertwining property~\eqref{eq:unicol-liealgisomets-intertwin} of $U$ with $W$ we can write
\begin{align*}
U(\RR g)W(\xi - \avg \xi, z) U(\RR g)^*
	&= W\bigl((\RR g) \cdot (\xi - \avg \xi, z)\bigr) \\
	&= W\bigl((\RR g)_*\xi - \avg (\RR g)_*\xi, z\bigr).
\end{align*}
On the other hand,
\begin{align*}
W_{g \cdot l}\bigl(g \cdot (\gamma, z)\bigr)(1 \otimes v)
	&= W_{g \cdot l}(g_*\gamma, z)(1 \otimes v) \\
	&= (g \cdot l)(\avg g_*\gamma) \otimes W\bigl((\RR g)_*\xi - \avg (\RR g)_*\xi, z\bigr),
\end{align*}
because if $\xi\colon S^1 \to \liealg t$ is a lift of $\gamma \in (\centL T)_0$, then $(\RR g)_*\xi$ is a lift of $g_*\gamma$. Now observe that $(g \cdot l)(\avg g_*\gamma) = l(\avg \gamma)$. This proves~\eqref{eq:unicol-Ul-intertwin}.

The intertwining property~\eqref{eq:unicol-Ul-intertwin-rot} is now easily seen to be equivalent to $U$ commuting with $R$. This is, in turn, a fact we already noted in \cref{subsec:unicol-Vtirreps}.
\end{proof}

\subsection{Irreducible representations of $\centL T$}
\label{subsec:unicol-irrepsLT}

Having given $(LT)_0$ the structure of a topological group in \cref{subsec:unicol-irrepsofidcomp}, the full group $LT$ now acquires by \cref{thm:topgpfromsubgp} a unique structure of a topological group such that $(LT)_0$ is open in $LT$.

\begin{prop}
\label{thm:unicol-topgpstruct}
There exists a unique structure of a topological group on the central extension $\centL T$ such that $(\centL T)_0$ is open in $\centL T$.
\end{prop}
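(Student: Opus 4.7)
The plan is to apply \cref{thm:topgpfromsubgp} to $\centL T$ with distinguished subgroup $(\centL T)_0$, in exactly the same way the preceding (unstated) proposition invoked it for the pair $(LT, (LT)_0)$. What must be verified is
\begin{enumerate}
\item $(\centL T)_0$ already possesses the structure of a topological group,
\item $(\centL T)_0$ is a normal subgroup of $\centL T$, and
\item conjugation by any element of $\centL T$ restricts to a continuous self-map of $(\centL T)_0$.
\end{enumerate}
Point (i) is immediate from the product topology on $T \times \centV \liealg t$ inherited via the isomorphism~\eqref{eq:unicol-decompidcomp-centext}, as discussed at the start of \cref{subsec:unicol-irrepsofidcomp}. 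Point (ii) is immediate too: $(\centL T)_0$ is the preimage under $\centL T \twoheadrightarrow LT$ of the normal (because $LT$ is abelian) subgroup $(LT)_0$.

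The only substance lies in (iii). Since $\phasegp$ is central in $\centL T$, conjugation by $(\gamma, z)$ is independent of $z$ and takes the form
\[
(\gamma, z)(\rho, w)(\gamma, z)^{-1} = \bigl(\rho,\, w \cdot \kappa(\gamma, \rho)\bigr), \qquad \kappa(\gamma, \rho) := c(\gamma, \rho) c(\rho, \gamma)^{-1}.
\]
It therefore suffices to show that for each fixed $\gamma \in LT$, the map $\kappa(\gamma, \cdot)\colon (LT)_0 \to \phasegp$ is continuous with respect to the product topology on $(LT)_0 \cong T \oplus V\liealg t$.

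For this I would reuse the computation from the proof of \cref{thm:unicol-disjcomm}. With $\epsilon$ chosen normalised, the prefactor $\epsilon(\Delta_\gamma, \Delta_\rho)\epsilon(\Delta_\rho, \Delta_\gamma)^{-1}$ is trivial because $\Delta_\rho = 0$, and a single partial integration exploiting the periodicity of any lift $\eta$ of $\rho$ collapses the remainder to
\[
S(\xi, \eta) - S(\eta, \xi) = \int_0^1 \bigl\langle \xi'(\theta), \eta(\theta)\bigr\rangle \dd\theta.
\]
Splitting $\eta = \avg \eta + (\eta - \avg \eta)$ then exhibits $\kappa(\gamma, \cdot)$ as the product of a continuous character of $T$ in $\avg \rho$, namely $\avg \rho \mapsto e^{2\pi i \langle \Delta_\gamma, \avg\rho\rangle}$, with a continuous $\phasegp$-valued functional on $V\liealg t$, the latter continuity following from the fact (noted in \cref{subsec:unicol-Vtirreps}) that $S$ is continuous with respect to the norm topology coming from $\langle\cdot,\cdot\rangle_J$.

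The only subtle point — and what I expect to be the main obstacle if carried out in full — is checking that this prescription is independent of the choice of lift $\eta$ of $\rho$. Shifting $\eta$ by an element $\lambda \in \Lambda$ alters the integral by $\langle \Delta_\gamma, \lambda\rangle \in \ZZ$, which disappears after exponentiation; this is precisely the same integrality mechanism that made $c$ itself well-defined in \cref{sec:unicol-centext}.
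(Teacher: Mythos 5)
Your proposal is correct and follows essentially the paper's own route: the paper invokes \cref{thm:centexttopgp} (itself an application of \cref{thm:topgpfromsubgp}) and checks continuity of $\rho \mapsto c(\gamma,\rho)c(\gamma+\rho,-\gamma)$ on $(LT)_0$ by an explicit computation with lifts, and since $LT$ is abelian this function agrees, up to a constant factor independent of $\rho$, with your commutator map $\kappa(\gamma,\cdot)$, so both arguments bottom out in the continuity of $S$ on $V\liealg t$ exactly as you describe. The only detail you elide is that $\xi$ itself does not lie in $V\liealg t$ when $\Delta_\gamma \neq 0$; writing $\xi(\theta) = \zeta(\theta) + \Delta_\gamma\theta + \text{const}$ with $\zeta \in V\liealg t$ shows your functional is $\eta_0 \mapsto 2S(\zeta, \eta_0)$ (the $\Delta_\gamma$-term drops because $\int_{S^1}\eta_0 = 0$), after which the continuity claim is literally an instance of the continuity of $S$.
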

\begin{proof}
We employ \cref{thm:centexttopgp} with $G := LT$, $A := \phasegp$ and $G_0 := (LT)_0$. What needs to be checked is whether for every fixed loop $\gamma \in LT$ the map $(LT)_0 \to \phasegp$ given by $\rho \mapsto c(\gamma, \rho) c(\gamma+\rho, -\gamma)$ is continuous.

Because of the way the topology on $(LT)_0$ is defined (namely as a product topology) it is convenient to work with elements of $LT$ via the right hand side of~\eqref{eq:unicol-decompfullgp}. That is, we write $\gamma$ as a triple
\[
\bigl([\xi_0], \xi, \Delta_\gamma\bigr) \in T \oplus V\liealg t \oplus \Lambda.
\]
Here, $[\xi_0] \in T \cong \liealg t/\Lambda$ is the equivalence class of an element $\xi_0 \in \liealg t$. Similarly, we will work with a triple $([\eta_0], \eta, 0)$ in the place of $\rho$. Lifts $[0,1] \to \liealg t$ of $\gamma$ and $\rho$ are then given by $\theta \mapsto \xi(\theta) + \xi_0 + \Delta_\gamma\theta$ and $\theta \mapsto \eta(\theta) + \eta_0$, respectively.

Filling in these lifts into $c(\gamma, \rho) c(\gamma+\rho, -\gamma)$ and using that $\Delta_\rho = 0$, we quickly realise that we need to verify whether the real number
\begin{equation}
\label{eq:unicol-contcocycle1}
\begin{multlined}
\frac12 \int_0^1 \bigl\langle \xi'(\theta) + \Delta_\gamma, \eta(\theta) + \eta_0\bigr\rangle \dd\theta + \frac12 \bigl\langle \Delta_\gamma, \eta(0) + \eta_0\bigr\rangle - \phantom{{}}\\
\frac12 \int_0^1 \bigl\langle \xi'(\theta) + \eta'(\theta) + \Delta_\gamma, \xi(\theta) + \xi_0 + \Delta_\gamma \theta\bigr\rangle \dd\theta - \frac12\langle\Delta_\gamma, \Delta_\gamma \rangle
\end{multlined}
\end{equation}
depends continuously on $\eta$ and $\eta_0$. The first term in~\eqref{eq:unicol-contcocycle1} is equal to
\[
\frac12 \int_0^1 \bigl\langle \xi'(\theta), \eta(\theta)\bigr\rangle \dd\theta + \frac12 \langle \Delta_\gamma, \eta_0 \rangle = S(\xi,\eta) + \frac12 \langle \Delta_\gamma, \eta_0 \rangle
\]
and we already noted in \cref{subsec:unicol-Vtirreps} that the restriction of $S$ to $V\liealg t$ is continuous. The third term in~\eqref{eq:unicol-contcocycle1} can be expanded to
\begin{equation}
\label{eq:unicol-contcocycle2}
\begin{multlined}
\frac12 \int_0^1 \bigl\langle \xi'(\theta) + \Delta_\gamma, \xi(\theta) + \xi_0 + \Delta_\gamma\theta\bigr\rangle \dd\theta + \phantom{{}} \\
	\frac12 \int_0^1 \bigl\langle \eta'(\theta), \xi(\theta)\bigr\rangle \dd\theta +
	\frac12\Bigl[\bigl\langle\eta(\theta), \Delta_\gamma\theta\bigr\rangle\Bigr]_0^1
\end{multlined}
\end{equation}
using partial integration. Now note that the third term in~\eqref{eq:unicol-contcocycle2} is equal to $\frac12 \langle \eta(0), \Delta_\gamma\rangle$ and hence cancels against the same term in~\eqref{eq:unicol-contcocycle1}. This concludes the proof.
\end{proof}

With the irreducible, positive energy representations $W_l$ of the identity component $(\centL T)_0$ in hand, we will be able to construct and classify the same class of representations of the full group $\centL T$, starting as follows. Let us take such a $W_l$ for a character $l$ of $T$ and consider the induced representation
\[
\Ind_{(\centL T)_0}^{\centL T} W_l
\]
of $\centL T$, which we will shorten to $\Ind W_l$.

We refer to \cref{subsec:indreps} for generalities regarding induced group representations, and content ourselves here with spelling out some details of this particular one. The underlying Hilbert space of $\Ind W_l$ is the completion of an algebraic direct sum indexed over the left cosets $\sigma$ of $(\centL T)_0$ in $\centL T$:
\begin{equation}
\label{eq:unicol-defIndSl}
\Ind_{(\centL T)_0}^{\centL T} \hilb S_l := \overline{%
	\bigoplus_{%
		\mathclap{%
			\sigma \in \centL T/(\centL T)_0%
		}
	}
	\hilb S_l^\sigma%
}~.
\end{equation}
We will abbreviate it as $\Ind \hilb S_l$. Here, the summand $\hilb S_l^\sigma$ is the Hilbert space
\[
\hilb S_l^\sigma := \sigma \times_{(\centL T)_0} \hilb S_l.
\]
A vector of $\hilb S_l^\sigma$ is an equivalence class $[(\gamma, z), v]$ of pairs $((\gamma, z), v)$ with $(\gamma, z) \in \sigma$ and $v \in \hilb S_l$, and the relations are given by
\[
\bigl((\gamma, z) \cdot (\rho, w), v\bigr) \sim \bigl((\gamma, z), W_l(\rho, w)(v)\bigr)
\]
for $(\rho, w) \in (\centL T)_0$. A general vector of $\Ind \hilb S_l$ is an infinite tuple of vectors
\begin{equation}
\label{eq:unicol-vectorIndSl}
\Bigl(\bigl[(\gamma^\sigma, z^\sigma), v^\sigma\bigr]\Bigr)_{\sigma \in \centL T/(\centL T)_0},
\end{equation}
where $[(\gamma^\sigma, z^\sigma), v^\sigma] \in \hilb S_l^\sigma$, satisfying a square-integrability condition. The action $\Ind W_l$ of $\centL T$ on $\Ind \hilb S_l$ is given by setting for an element $(\gamma, z) \in \centL T$
\[
(\Ind W_l)(\gamma, z) \cdot \Bigl(\bigl[(\gamma^\sigma, z^\sigma), v^\sigma\bigr]\Bigr)_\sigma := \biggl(\biggl[(\gamma,z) \cdot \Bigl(\gamma^{(\gamma, z)^{-1}\sigma}, z^{(\gamma, z)^{-1}\sigma}\Bigr), v^{(\gamma, z)^{-1}\sigma}\biggr]\biggr)_\sigma.
\]

If we again let $m$ be the smallest positive integer such that $m\langle l, l \rangle \in 2\ZZ$, then there is an action, which we denote by $\Ind R_l$, of $\Rot^{(m)}(S^1)$ on $\Ind \hilb S_l$ affecting each summand $\hilb S_l^\sigma$ individually. Namely, if $\Phi_\theta \in \Rot^{(\infty)}(S^1)$ and $[(\gamma, z), v] \in \hilb S_l^\sigma$, then we set
\[
(\Ind R_l)[\Phi_\theta] \cdot \bigl[(\gamma, z), v\bigr] := \bigl[[\Phi_\theta] \cdot (\gamma, z), R_l[\Phi_\theta](v)\bigr].
\]
It can be checked that $\Ind R_l$ is well-defined, linear, unitary and that it satisfies the intertwining property~\eqref{eq:rotS1intertwin} with respect to $\Ind W_l$ because $R_l$ does so with $W_l$. The reason why this $\Rot^{(m)}(S^1)$-action on $\Ind \hilb S_l$ is the correct one to take, given the one we already defined on $\hilb S_l$, is that it is the one which appears when we consider $\Ind W_l$ as a representation induced up from the semidirect product $(\centL T)_0 \rtimes \Rot^{(m)}(S^1)$ to $\centL T \rtimes \Rot^{(m)}(S^1)$.

Note that the pre-image in $\centL T$ of any subgroup of $LT$ is a normal subgroup. In particular, $(\centL T)_0$ is normal in $\centL T$, being the pre-image of the identity component $(LT)_0$ of $LT$. Therefore, the representations conjugate to $W_l$ are again representations of $(\centL T)_0$. We start our study of $\Ind W_l$ by calculating these conjugate representations:

\begin{lem}
\label{thm:unicol-conjrep}
Let $(\gamma, z)$ be an element of $\centL T$ that is not contained in the (normal) subgroup $(\centL T)_0$ and consider the representation $W_l^{(\gamma, z)}$ of $(\centL T)_0$ conjugate to $W_l$, defined by
\[
W_l^{(\gamma, z)}(\rho, w) := W_l\bigl((\gamma, z)^{-1} (\rho, w) (\gamma, z)\bigr)
\]
for $(\rho, w) \in (\centL T)_0$. Then $W_l^{(\gamma, z)}$ is the tensor product representation of $W_l$ and the character
\begin{equation}
\label{eq:unicol-centextchar}
(\centL T)_0 \twoheadrightarrow (LT)_0 \to \phasegp, \qquad (\rho, w) \mapsto \rho \mapsto c(\rho, \gamma) c(\gamma, \rho)^{-1},
\end{equation}
where $\phasegp$ acts on $\CC$ as $z \mapsto z$. In turn, for any lifts $\xi, \eta\colon [0,1] \to \liealg t$ of $\gamma$ and $\rho$, respectively, there holds
\begin{equation}
\label{eq:unicol-conjrep-1}
c(\rho, \gamma) c(\gamma, \rho)^{-1} = e^{2\pi i\bigl(S(\eta, \xi) - S(\xi, \eta)\bigr)},
\end{equation}
where
\begin{equation}
\label{eq:unicol-conjrep-2}
S(\eta, \xi) - S(\xi, \eta) = - \int_0^1 \bigl\langle \xi'(\theta), \eta(\theta) \bigr\rangle \dd\theta.
\end{equation}
\end{lem}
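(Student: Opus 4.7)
The plan is to proceed in three short steps.

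First, I would compute the conjugate $(\gamma, z)^{-1}(\rho, w)(\gamma, z)$ directly inside $\centL T$. Since $LT$ is abelian, the $LT$-component of the product is fixed, and the only question is the phase. From the multiplication rule~\eqref{eq:unicol-mult} we have $(\gamma,z)(\rho,w) = (\gamma+\rho, zw\cdot c(\gamma,\rho))$ and $(\rho,w)(\gamma,z) = (\gamma+\rho, zw\cdot c(\rho,\gamma))$, hence the commutator $(\gamma,z)(\rho,w)(\gamma,z)^{-1}(\rho,w)^{-1}$ lies in $\phasegp$ and equals $c(\gamma,\rho)c(\rho,\gamma)^{-1}$. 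Taking the inverse conjugation yields $(\gamma,z)^{-1}(\rho,w)(\gamma,z) = (\rho, w\cdot c(\rho,\gamma)c(\gamma,\rho)^{-1})$. Because the central subgroup $\phasegp$ acts on $\hilb S_l$ under $W_l$ by scalar multiplication, we immediately read off that $W_l^{(\gamma,z)}$ is the tensor product of $W_l$ with the character $(\rho,w)\mapsto c(\rho,\gamma)c(\gamma,\rho)^{-1}$, as claimed.

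Second, I would specialise the defining formula~\eqref{eq:unicol-cocycle} for $c$ to this situation. Since $(\rho,w)\in(\centL T)_0$ we have $\Delta_\rho = 0$, and, given that $\epsilon$ is normalised, both $\epsilon(\Delta_\rho,\Delta_\gamma)$ and $\epsilon(\Delta_\gamma,\Delta_\rho)$ equal $1$. This reduces $c(\rho,\gamma)c(\gamma,\rho)^{-1}$ to the purely analytic factor $e^{2\pi i(S(\eta,\xi)-S(\xi,\eta))}$, which is the second displayed formula.

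Third, I would evaluate $S(\eta,\xi)-S(\xi,\eta)$ explicitly. Using $\Delta_\rho = 0$, the defining formula~\eqref{eq:unicol-cocycle} gives $S(\eta,\xi) = \tfrac{1}{2}\int_0^1\langle\eta'(\theta),\xi(\theta)\rangle\dd\theta$ while $S(\xi,\eta) = \tfrac{1}{2}\int_0^1\langle\xi'(\theta),\eta(\theta)\rangle\dd\theta + \tfrac{1}{2}\langle\Delta_\gamma,\eta(0)\rangle$. Applying integration by parts to the integral in $S(\eta,\xi)$ and exploiting $\eta(1)=\eta(0)$ together with $\xi(1)=\xi(0)+\Delta_\gamma$, the boundary contribution becomes $\langle\eta(0),\Delta_\gamma\rangle$. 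This exactly cancels the asymmetric half-term $\tfrac{1}{2}\langle\Delta_\gamma,\eta(0)\rangle$ in $S(\xi,\eta)$ (with a matching contribution of the same sign from the transformed integral), and one is left with $-\int_0^1\langle\xi'(\theta),\eta(\theta)\rangle\dd\theta$.

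There is no real obstacle: the computation is entirely mechanical once the normalisation and quasi-periodicity of the lifts are kept track of. The only place care is needed is in the integration by parts, where the boundary terms must be allowed to combine with the $\Delta_\gamma$-term of $S(\xi,\eta)$ rather than being discarded as vanishing; the cancellation is forced by the fact that the commutator map of $\centL T$ on $(\centL T)_0$ must be a skew-symmetric function on $LT \times (LT)_0$.
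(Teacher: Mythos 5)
Your proposal is correct and follows essentially the same route as the paper: the conjugate is computed via the cocycle multiplication (with $c$ normalised) to identify the character $c(\rho,\gamma)c(\gamma,\rho)^{-1}$, the $\epsilon$-factors drop out because $\Delta_\rho=0$ and $\epsilon$ is normalised, and the integral formula comes from the same integration by parts. The only cosmetic difference is that the paper simply specialises the already-established commutator formula~\eqref{eq:unicol-commmap} (from the proof of \cref{thm:unicol-disjcomm}) to $\Delta_\rho=0$, whereas you redo that partial integration in-line.
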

\begin{proof}
Using the definition of the multiplication in $\centL T$ in terms of the cocycle $c$ and the standing assumption that $c$ is normalised, the conjugated element $(\gamma, z)^{-1} (\rho, w) (\gamma, z)$ can be simplified and the first claim follows easily.

Since $\rho \in (LT)_0$, we have $\Delta_\rho = 0 \in \Lambda$ and because the cocycle $\epsilon$ for $\tilde\Lambda$ is assumed to be normalised there holds $\epsilon(\Delta_\gamma, \Delta_\rho) = \epsilon(\Delta_\rho, \Delta_\gamma) = 1$. The equations~\eqref{eq:unicol-conjrep-1} and~\eqref{eq:unicol-conjrep-2} now follow from the expression~\eqref{eq:unicol-commmap} for $S(\eta, \xi) - S(\xi, \eta)$.
\end{proof}

\begin{prop}
\label{thm:unicol-indrepisirrep}
The induced representation $\Ind W_l$ of $\centL T$ is irreducible.
\end{prop}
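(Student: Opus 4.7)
The plan is to invoke Mackey's irreducibility criterion for representations induced from normal subgroups. Since $(\centL T)_0$ is normal in $\centL T$ with discrete quotient isomorphic to $\Lambda$, and $W_l$ is irreducible, $\Ind W_l$ will be irreducible provided that for every $(\gamma, z) \in \centL T \setminus (\centL T)_0$, the conjugate representation $W_l^{(\gamma, z)}$ is not isomorphic to $W_l$. So the whole proof reduces to verifying this inequivalence.

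I will read off this inequivalence from the restriction to the subgroup $T \hookrightarrow (\centL T)_0$ of constant loops. By \cref{thm:unicol-conjrep}, $W_l^{(\gamma, z)} \cong W_l \otimes \chi_\gamma$, where $\chi_\gamma$ is the character of $(\centL T)_0$ given by $\chi_\gamma(\rho, w) = c(\rho, \gamma)c(\gamma, \rho)^{-1}$. Taking $\rho$ to be a constant loop at $[\eta_0] \in T$, the constant lift $\eta(\theta) = \eta_0$ has $\eta'(\theta) = 0$, so formulas~\eqref{eq:unicol-conjrep-1} and~\eqref{eq:unicol-conjrep-2} specialise to
\[
\chi_\gamma\bigl([\eta_0]\bigr) = \exp\Bigl(-2\pi i \int_0^1 \bigl\langle \xi'(\theta), \eta_0\bigr\rangle \dd\theta\Bigr) = e^{-2\pi i \langle \Delta_\gamma, \eta_0\rangle}.
\]
Thus $\chi_\gamma|_T$ corresponds, under the identification of $T$-characters with $\Lambda^\vee$, to $-\Delta_\gamma \in \Lambda \subseteq \Lambda^\vee$. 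Since $(\gamma, z) \notin (\centL T)_0$ means precisely $\Delta_\gamma \neq 0$, this is a nonzero character of $T$.

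Finally, $W_l$ restricted to $T$ acts on every vector of $\hilb S_l = \CC_l \otimes \hilb S$ by the scalar character $l$, whereas $W_l^{(\gamma, z)}|_T$ acts by the character $l - \Delta_\gamma$. As $l$ and $l - \Delta_\gamma$ are distinct elements of $\Lambda^\vee$, the two $T$-representations share no common subrepresentation, and hence $W_l \not\cong W_l^{(\gamma, z)}$ as representations of $(\centL T)_0$. The Mackey criterion then yields the irreducibility of $\Ind W_l$.

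The main obstacle is purely formal: justifying the application of the Mackey criterion in the present topological setting. This is handled by the fact that $(\centL T)_0$ is open in $\centL T$ (\cref{thm:unicol-topgpstruct}), so $\centL T / (\centL T)_0$ is discrete; the decomposition of $\Ind \hilb S_l$ as the $\ell^2$-direct sum~\eqref{eq:unicol-defIndSl} over the cosets makes the criterion a routine computation of commutants, which I would either cite from standard references or verify directly by showing that any bounded operator commuting with $\Ind W_l$ must preserve each isotypic summand for the (discrete) $T$-spectrum $\{l - \mu : \mu \in \Lambda\}$ and then, by the irreducibility of $W_l$, act as a scalar on each, these scalars being determined by each other via the $\centL T$-action, forcing a single global scalar.
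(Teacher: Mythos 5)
Your proposal is correct and follows essentially the same route as the paper: apply Mackey's irreducibility criterion (\cref{thm:mackeyirredcrit}), compute the conjugate representations via \cref{thm:unicol-conjrep}, and restrict to the subgroup of constant loops to see that $T$ acts by $l - \Delta_\gamma \neq l$, so no conjugate is isomorphic to $W_l$. The topological justification you worry about at the end is exactly what the paper's appendix version of the Mackey criterion (proved for a normal subgroup with discrete, countable coset space) supplies, so no extra commutant computation is needed.
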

\begin{proof}
Since $W_l$ is irreducible and $(\centL T)_0$ is a normal subgroup of $\centL T$, it is by Mackey's irreducibility criterion \cref{thm:mackeyirredcrit} sufficient to show that all the conjugate representations $W_l^{(\gamma, z)}$ as in \cref{thm:unicol-conjrep} are not isomorphic to $W_l$. To do this, we will examine the restriction of $W_l^{(\gamma, z)}$ to the subgroup of $(\centL T)_0$ consisting of the elements of the form $(\rho, 1)$ where $\rho$ is a constant loop. Because this subgroup is canonically isomorphic to $T$, we will denote it as such. Say that $\rho = \exp \alpha$ for some $\alpha \in \liealg t$. According to~\eqref{eq:unicol-conjrep-2} we then have
\[
S(\eta, \xi) - S(\xi, \eta) = - \langle \Delta_\gamma, \alpha \rangle.
\]
So~\eqref{eq:unicol-centextchar} has $T$ acting by the character $-\Delta_\gamma$, which implies that $W_l^{(\gamma, z)}$ is letting $T$ act by $l - \Delta_\gamma$. Because $\Delta_\gamma \neq 0$, we have $l - \Delta_\gamma \neq l$ and therefore $W_l^{(\gamma, z)}$ and $W_l$ are not isomorphic.
\end{proof}

So we constructed a countably infinite family of irreducible representations of $\centL T$; one for every element $l$ of $\Hom_\LieGp(T, \phasegp) \cong \Lambda^\vee$. These are far from mutually non-isomorphic, though. It will turn out that they are partitioned into finitely many isomorphism classes. In order to prove this we will first determine the restriction of an induced representation $\Ind W_l$ back to the identity component $(\centL T)_0$.

Recall that $\hilb S_l$ and $\hilb S_{l'}$, where $l$ and $l'$ are characters of $T$, carry different representations of $(\centL T)_0$ but are identical as Hilbert spaces. We may therefore speak of the identity map of Hilbert spaces $\hilb S_l \xrightarrow{\sim} \hilb S_{l'}$, given by $v \mapsto v$. Note furthermore that $(\centL T)_0$ being a normal subgroup of $\centL T$ implies that the restriction of $\Ind W_l$ to $(\centL T)_0$ restricts to each subspace $\hilb S_l^\sigma$ for all $\sigma$. 

\begin{thm}[Restriction of $\Ind W_l$ from $\centL T$ to $(\centL T)_0$]
\label{thm:unicol-indrep-restrtoidcomp}
Fix a character $l$ of $T$, a lattice element $\lambda \in \Lambda$ and let $\sigma$ be the (left) coset of $(\centL T)_0$ in $\centL T$ consisting of all elements $(\gamma, z)$ such that $\gamma$ has winding element $\lambda$. Then the composite unitary map\footnote{The loop $\gamma_\lambda$ was defined in \cref{sec:unicol-struct}. It is the standard choice of representative of $\sigma$ given by the projection on $T$ of the path $[0,1] \to \liealg t$, $\theta \mapsto \theta\lambda$.}
\[
f_l^\sigma\colon \hilb S_l^\sigma \xrightarrow{\sim} \hilb S_l \xrightarrow{\sim} \hilb S_{l-\lambda}, \qquad \bigl[(\gamma_\lambda, 1), v\bigr] \mapsto v \mapsto v
\]
intertwines the representations $(\Res \Ind W_l)|_{\hilb S_l^\sigma}$ and $W_{l-\lambda}$ of $(\centL T)_0$ and the representations $\Ind R_l|_{\hilb S_l^\sigma}$ and $R_{l-\lambda}$ of $\Rot^{(m)}(S^1)$.
\end{thm}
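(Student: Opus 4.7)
The plan is to verify both intertwining properties by explicitly moving group operations through the chosen coset representative $(\gamma_\lambda,1)$ and then using its evident quasi-periodic lift $\xi(t)=t\lambda$ on $[0,1]$. In both cases the calculation will reduce to a cocycle correction that produces precisely the shift of character one expects.

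For the loop group intertwining, given $(\rho,w)\in (\centL T)_0$, I would expand
\[
(\Ind W_l)(\rho,w)\bigl[(\gamma_\lambda,1),v\bigr]
=\bigl[(\rho,w)(\gamma_\lambda,1),v\bigr]
\]
and rewrite $(\rho,w)(\gamma_\lambda,1)=(\gamma_\lambda,1)\cdot(\gamma_\lambda,1)^{-1}(\rho,w)(\gamma_\lambda,1)$, which turns the right hand side into $[(\gamma_\lambda,1),W_l^{(\gamma_\lambda,1)}(\rho,w)v]$. By \cref{thm:unicol-conjrep} the conjugate representation $W_l^{(\gamma_\lambda,1)}$ is the tensor product of $W_l$ with the character $(\rho,w)\mapsto c(\rho,\gamma_\lambda)c(\gamma_\lambda,\rho)^{-1}$. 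Feeding the lift $\xi(t)=t\lambda$ into~\eqref{eq:unicol-conjrep-2} reduces this character to $(\rho,w)\mapsto e^{-2\pi i\langle\lambda,\avg\eta\rangle}$, where $\eta$ is any lift of $\rho$; under the identification $\Hom_\LieGp(T,\phasegp)\cong\Lambda^\vee$ this is exactly the character $-\lambda$ on $T$, trivial on $\centV\liealg t$. Hence $W_l^{(\gamma_\lambda,1)}$ and $W_{l-\lambda}$ coincide as operators on the common Hilbert space $\CC\otimes\hilb S$, and composing with the first isomorphism $\hilb S_l^\sigma\xrightarrow\sim\hilb S_l$ establishes that $f_l^\sigma$ intertwines the two representations of $(\centL T)_0$.

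For the rotation intertwining, I would expand
\[
(\Ind R_l)[\Phi_\theta]\bigl[(\gamma_\lambda,1),v\bigr]
=\bigl[[\Phi_\theta]\cdot(\gamma_\lambda,1),\,R_l[\Phi_\theta]v\bigr]
\]
and use the formula~\eqref{eq:unicol-diffs1action} with lift $\xi(t)=t\lambda$ to compute $[\Phi_\theta]\cdot(\gamma_\lambda,1)=(\phi_\theta^*\gamma_\lambda,\,e^{-\pi i\theta\langle\lambda,\lambda\rangle})$. The quasi-periodic extension $(t-\theta)\lambda$ of $\phi_\theta^*\gamma_\lambda$ shows that $\rho_\theta:=\phi_\theta^*\gamma_\lambda-\gamma_\lambda$ is the constant torus loop $\exp(-\theta\lambda)$ with constant lift $-\theta\lambda$, so rewriting $[\Phi_\theta]\cdot(\gamma_\lambda,1)=(\gamma_\lambda,1)(\rho_\theta,w_\theta)$ and plugging into the cocycle~\eqref{eq:unicol-cocycle} yields $c(\gamma_\lambda,\rho_\theta)=e^{-2\pi i\theta\langle\lambda,\lambda\rangle}$ and therefore $w_\theta=e^{\pi i\theta\langle\lambda,\lambda\rangle}$. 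The identification $\hilb S_l^\sigma\xrightarrow\sim\hilb S_l$ then carries $(\Ind R_l)[\Phi_\theta]$ to $v\mapsto W_l(\rho_\theta,w_\theta)\,R_l[\Phi_\theta]v$. Since $\rho_\theta$ lies in the $T$-summand of $(\centL T)_0$, this operator acts on $\CC_l\otimes\hilb S$ by the scalar
\[
e^{-2\pi i\theta\langle l,\lambda\rangle}\cdot e^{\pi i\theta\langle\lambda,\lambda\rangle}\cdot e^{\pi i\theta\langle l,l\rangle}
=e^{\pi i\theta\langle l-\lambda,\,l-\lambda\rangle}
\]
on the $\CC_l$-factor, and by $R[\Phi_\theta]$ on the $\hilb S$-factor, which is exactly how $R_{l-\lambda}[\Phi_\theta]$ acts on $\CC_{l-\lambda}\otimes\hilb S$.

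The main obstacle I anticipate is simply bookkeeping: the cocycle $c$, the phase $d$ and the character by which $R_l$ acts on $\CC_l$ must conspire to give the single clean phase $e^{\pi i\theta\langle l-\lambda,l-\lambda\rangle}$. The choice of $(\gamma_\lambda,1)$ with lift $\xi(t)=t\lambda$ is what makes every quantity computable in closed form; once that is set up, both intertwinings fall out of direct substitution. Note that the integrality $\langle l,\lambda\rangle\in\ZZ$ and evenness of $\Lambda$ ensure that $R_l$ and $R_{l-\lambda}$ are indeed representations of the same cover $\Rot^{(m)}(S^1)$, so the statement about $\Rot^{(m)}(S^1)$-intertwining is consistent.
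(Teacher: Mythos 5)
Your proposal is correct and follows essentially the same route as the paper: identify $(\Res\Ind W_l)|_{\hilb S_l^\sigma}$ with the conjugate representation $W_l^{(\gamma_\lambda,1)}$, evaluate the commutator character via the lift $\theta\mapsto\theta\lambda$ to obtain the shift $l\mapsto l-\lambda$, and for the rotation action compute $[\Phi_\theta]\cdot(\gamma_\lambda,1)=(\gamma_\lambda,1)\cdot\bigl(\exp(-\lambda\theta),e^{\pi i\langle\lambda,\lambda\rangle\theta}\bigr)$ using $d$ and the cocycle, after which the phases combine to $e^{\pi i\langle l-\lambda,l-\lambda\rangle\theta}$. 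The only difference is cosmetic bookkeeping in how the scalar factors are cancelled at the end.
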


We defined an action of $\Rot^{(m)}(S^1)$ on $\hilb S_l$ and it might not be immediately clear which action is meant, for this same integer $m$, on $\hilb S_{l-\lambda}$. We will explain this in the proof of this \namecref{thm:unicol-indrep-restrtoidcomp}.

\begin{proof}
To show the first claim of the \namecref{thm:unicol-indrep-restrtoidcomp}, we start by using \cref{thm:decomp-inducedrep} to observe that the first map in the composition $f_l^\sigma$ is an isomorphism from the restriction of $\Res \Ind W_l$ to $\hilb S_l^\sigma$ to the conjugate representation $W_l^{(\gamma_\lambda, 1)}$. The latter was calculated partially in \cref{thm:unicol-conjrep}. To obtain a more precise result we substitute the lift $\theta \mapsto \theta\lambda$ of $\gamma_\lambda$ into~\eqref{eq:unicol-conjrep-2}, where $(\rho, w) \in (\centL T)_0$ and $\eta\colon S^1 \to \liealg t$ is a lift of $\rho$. This gives
\begin{align*}
S(\eta, \xi) - S(\xi, \eta) = - \langle \lambda, \avg \eta \rangle
\end{align*}
and therefore,
\[
c(\rho, \gamma_\lambda) c(\gamma_\lambda, \rho)^{-1} = e^{-2\pi i \langle \lambda, \avg \eta \rangle}.
\]
The crucial observation now is that\footnote{On a pedantic note: on the left hand side of this equation we are considering $v$ as a vector in $\hilb S_l$, while on the right hand side we see it as lying in $\hilb S_{l-\lambda}$ again.\label{fn:unicol-rotaction-not}}
\[
e^{2\pi i \langle -\lambda, \avg \eta \rangle} \cdot W_l(\rho, w)(v) = W_{l - \lambda}(\rho, w)(v)
\]
for all $v$. We conclude that the second map in the composition $f_l^\sigma$ is an isomorphism from $W_l^{(\gamma_\lambda, 1)}$ to $W_{l-\lambda}$.

For the second claim of the \namecref{thm:unicol-indrep-restrtoidcomp}, observe that because $\langle l, \lambda \rangle \in \ZZ$ and $\langle \lambda, \lambda \rangle\in 2\ZZ$, the smallest positive integer $m$ such that $m\langle l, l \rangle \in 2\ZZ$ is also the smallest positive integer $m$ such that $m\langle l-\lambda, l-\lambda \rangle \in 2\ZZ$. Therefore, $\Rot^{(m)}(S^1)$ acts on both $\hilb S_l$ and $\hilb S_{l-\lambda}$. However, the two characters $[\Phi_\theta] \mapsto e^{\pi i \langle l, l\rangle \theta}$ and $[\Phi_\theta] \mapsto e^{\pi i \langle l - \lambda, l - \lambda\rangle \theta}$ by which we defined it to act on their respective tensor factors $\CC_l$ and $\CC_{l-\lambda}$ are different: they differ by the character $e^{2\pi i \langle l, \lambda \rangle\theta} e^{-\pi i \langle \lambda, \lambda \rangle \theta}$.

For $[\Phi_\theta] \in \Rot^{(m)}(S^1)$ and $[(\gamma_\lambda, 1), v] \in \hilb S_l^\sigma$ we have
\begin{align}
\label{eq:unicol-indrepcharcalc}
f_l^\sigma \Bigl((\Ind R_l)[\Phi_\theta] \cdot \bigl[(\gamma_\lambda, 1), v\bigr]\Bigr)
	&= f_l^\sigma \Bigl[[\Phi_\theta] \cdot (\gamma_\lambda, 1), R_l[\Phi_\theta](v)\Bigr] \nonumber \\
	&= f_l^\sigma \biggl[\Bigl([\Phi_\theta]^* \gamma_\lambda, d\bigl([\Phi_\theta], \gamma_\lambda\bigr)\Bigr), R_l[\Phi_\theta](v)\biggr].
\end{align}
Because by definition $\gamma_\lambda(\theta') = \exp(\lambda\theta')$ for all $\theta' \in [0,1]$, there holds $[\Phi_\theta]^*\gamma_\lambda = \gamma_\lambda + \exp(-\lambda\theta)$. That is, $[\Phi_\theta]$ shifts $\gamma_\lambda$ by the constant loop $\exp(-\lambda\theta) \in T$. Furthermore, we can calculate from the definition of $d$ in~\eqref{eq:unicol-diffs1action} that $d([\Phi_\theta], \gamma_\lambda) = e^{-\pi i \langle \lambda, \lambda \rangle\theta}$. Hence,
\[
\Bigl([\Phi_\theta]^* \gamma_\lambda, d\bigl([\Phi_\theta], \gamma_\lambda\bigr)\Bigr) = \bigl(\gamma_\lambda + \exp(-\lambda\theta), e^{-\pi i \langle \lambda, \lambda \rangle\theta}\bigr).
\]
In turn, because we can compute that $c(\gamma_\lambda, \exp(-\lambda\theta)) = e^{-2\pi i\langle \lambda, \lambda \rangle \theta}$, the above can be written as a product
\[
\bigl(\gamma_\lambda + \exp(-\lambda\theta), e^{-\pi i \langle \lambda, \lambda \rangle\theta}\bigr) = (\gamma_\lambda, 1) \cdot \bigl(\exp(-\lambda\theta), e^{\pi i \langle \lambda, \lambda \rangle \theta}\bigr).
\]
Filling this back into~\eqref{eq:unicol-indrepcharcalc} gives
\begin{multline*}
f_l^\sigma \Bigl((\Ind R_l)[\Phi_\theta] \cdot \bigl[(\gamma_\lambda, 1), v\bigr]\Bigr) \\
\begin{aligned}
	&= f_l^\sigma \Bigl[(\gamma_\lambda, 1) \cdot \bigl(\exp(-\lambda\theta), e^{\pi i \langle \lambda, \lambda \rangle \theta}\bigr), R_l[\Phi_\theta](v)\Bigr] \\
	&= f_l^\sigma \Bigl[(\gamma_\lambda, 1), W_l\bigl(\exp(-\lambda\theta), e^{\pi i \langle \lambda, \lambda \rangle \theta}\bigr) \cdot R_l[\Phi_\theta](v)\Bigr] \\
	&= W_l\bigl(\exp(-\lambda\theta), e^{\pi i \langle \lambda, \lambda \rangle \theta}\bigr) \cdot R_l[\Phi_\theta](v).
\end{aligned}
\end{multline*}
Now on the one hand, $W_l(\exp(-\lambda\theta), e^{\pi i \langle \lambda, \lambda \rangle \theta})$ acts as multiplication by the scalar
\[
e^{-2\pi i \langle l, \lambda \rangle \theta} e^{\pi i \langle \lambda, \lambda \rangle \theta},
\]
while, on the other hand, our earlier argument in this proof claimed that\footnote{See \cref{fn:unicol-rotaction-not} for a similar remark on notation.}
\[
R_l[\Phi_\theta](v) = e^{2\pi i \langle l, \lambda \rangle \theta} e^{-\pi i \langle \lambda, \lambda \rangle \theta} R_{l-\lambda}[\Phi_\theta](v).
\]
We therefore see that all scalar factors cancel against each other and we conclude that
\[
f_l^\sigma \Bigl((\Ind R_l)[\Phi_\theta] \cdot \bigl[(\gamma_\lambda, 1), v\bigr]\Bigr)
	= R_{l-\lambda}[\Phi_\theta](v). \qedhere
\]
\end{proof}

Summarising, \cref{thm:unicol-indrep-restrtoidcomp}, together with the winding element isomorphism $\centL T/(\centL T)_0 \xrightarrow{\sim} \Lambda$ describes how $\Ind W_l$ combined with the intertwining $\Rot^{(m)}(S^1)$-action breaks up into irreducible subrepresentations after restriction to $(\centL T)_0 \rtimes \Rot^{(m)}(S^1)$. We namely have a unitary isomorphism
\[
\bigoplus_{\lambda \in \Lambda} f_l^{\sigma_\lambda}\colon \Res_{\centL T}^{(\centL T)_0} \Ind_{(\centL T)_0}^{\centL T} W_l \xrightarrow{\sim} \bigoplus_{\lambda \in \Lambda} W_{l-\lambda}
\]
of representations\footnote{As usual, this direct sum of representations has by definition as underlying Hilbert space the completion of an algebraic direct sum of Hilbert spaces.} of $(\centL T)_0 \rtimes \Rot^{(m)}(S^1)$, where $\sigma_\lambda$ is the coset associated to $\lambda$ as in the statement of \cref{thm:unicol-indrep-restrtoidcomp}.

It is now purely formal to prove

\begin{thm}
\label{thm:unicol-indrepsisom}
Two representations $\Ind W_l$ and $\Ind W_{l'}$ of $\centL T$, where $l$ and $l'$ are characters of $T$, are (unitarily) isomorphic if and only if $l' = l-\lambda$ for some $\lambda \in \Lambda$.
\end{thm}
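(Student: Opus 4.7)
The plan is to derive both directions from the decomposition of $\Res_{\centL T}^{(\centL T)_0} \Ind W_l$ as $\bigoplus_{\mu \in \Lambda} W_{l-\mu}$ established in Theorem~\ref{thm:unicol-indrep-restrtoidcomp}, combined with the irreducibility of each $\Ind W_l$ (Proposition~\ref{thm:unicol-indrepisirrep}) and the fact that the family $\{W_\nu\}_{\nu \in \Lambda^\vee}$ is pairwise non-isomorphic as $(\centL T)_0$-representations (which is visible already on the subgroup $T$, since $W_\nu|_T$ is the character $\nu$).

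For the forward implication, suppose $l' = l - \lambda$ for some $\lambda \in \Lambda$. Reindexing the summation $\mu \mapsto \mu + \lambda$ in the decomposition gives an isomorphism $\Res \Ind W_l \cong \Res \Ind W_{l'}$ of $(\centL T)_0 \rtimes \Rot^{(m)}(S^1)$-representations. Applying Frobenius reciprocity (or, equivalently, projecting onto the summand indexed by $\mu = -\lambda$ in the decomposition of $\Res \Ind W_{l'}$), one obtains a non-zero intertwiner $W_l \to \Res \Ind W_{l'}$, which by the adjunction lifts to a non-zero intertwiner $\Ind W_l \to \Ind W_{l'}$. Since both representations are irreducible, Schur's lemma promotes this to a unitary isomorphism after rescaling. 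Alternatively, one can exhibit the isomorphism concretely: left multiplication by $(\gamma_\lambda, 1) \in \centL T$ permutes the cosets $\sigma_\mu \mapsto \sigma_{\mu + \lambda}$ (up to elements of $(\centL T)_0$), and tracing through the definition of the maps $f_l^{\sigma_\mu}$ shows that this gives the desired intertwiner.

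For the converse, assume $\Ind W_l \cong \Ind W_{l'}$ as representations of $\centL T$. Restricting to $(\centL T)_0$ yields an isomorphism
\[
\bigoplus_{\mu \in \Lambda} W_{l - \mu} \cong \bigoplus_{\mu \in \Lambda} W_{l' - \mu}.
\]
Restricting further to $T \subseteq (\centL T)_0$, the two sides decompose as $\bigoplus_{\mu \in \Lambda} \CC_{l-\mu} \otimes \hilb S$ and $\bigoplus_{\mu \in \Lambda} \CC_{l'-\mu} \otimes \hilb S$ respectively, where $T$ acts by the indicated characters on the first tensor factors and trivially on the Fock space factors. Matching $T$-isotypic components forces the equality of subsets $l + \Lambda = l' + \Lambda$ inside $\Lambda^\vee$, so $l - l' =: \lambda \in \Lambda$ and hence $l' = l - \lambda$.

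The main subtlety will be to argue carefully that the restriction-of-induction isomorphism from Theorem~\ref{thm:unicol-indrep-restrtoidcomp} descends to a literal multiplicity-free decomposition into pairwise non-isomorphic irreducibles, so that uniqueness of isotypic decomposition applies, and to verify that Frobenius reciprocity is applicable in the form used above for induction from an open normal subgroup with discrete quotient. Both of these are standard once one observes that $(\centL T)_0$ is open and normal in $\centL T$ with quotient the discrete group $\Lambda$, so the induced representation is unitarily a Hilbertian direct sum indexed by $\Lambda$ and no measure-theoretic complications intervene.
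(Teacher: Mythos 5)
Your proof rests on the same two ingredients as the paper's—the restriction computation of \cref{thm:unicol-indrep-restrtoidcomp} and the irreducibility of $\Ind W_l$ from \cref{thm:unicol-indrepisirrep}—but executes both directions differently. Your converse is correct and slightly more global than the paper's: each $\hilb S_\nu$ restricted to the constant loops is an infinite multiple of the character $\nu$, a unitary $\centL T$-isomorphism preserves $T$-isotypic components of the completed direct sums, and so the character sets $l+\Lambda$ and $l'+\Lambda$ must coincide. The paper instead restricts the given isomorphism to the single summand $\hilb S_{l'}^{\sigma_0}$, applies Schur's lemma to its projections onto the summands $\hilb S_l^\sigma$ (which are pairwise non-isomorphic irreducibles), and then restricts to $T$; both routes are sound and hinge on the same input.

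The step you cannot justify as stated is the lift in the `if' direction ``by the adjunction''. Frobenius reciprocity is \emph{not} available here, and your closing claim that it is standard for an open normal subgroup with discrete countable quotient is contradicted by the paper's own counterexample in \cref{subsec:indreps}: already for $G=\ZZ$, $H=\{0\}$ the natural injection \eqref{eq:indrep-leftadj} fails to be surjective, so an $H$-intertwiner $Q\to\Res Q'$ need not extend to $\Ind Q\to Q'$; the obstruction is convergence of the series $\sum_\sigma Q'(x^\sigma)f(v^\sigma)$, not anything measure-theoretic. In the present situation the lift does exist, but for a specific reason: your intertwiner $W_l\to\Res\Ind W_{l'}$ has image inside one summand of $\Ind\hilb S_{l'}$, its translates under coset representatives land in distinct, hence mutually orthogonal, summands, and so the defining series converges with squared norm $\sum_\sigma\norm{v^\sigma}^2$—this is precisely what the paper means by ``imitating the proof of \cref{thm:indrep-functorial}''. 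Better still, the concrete map you sketch is already a complete and cleaner proof: sending $\bigl[(\gamma,z),v\bigr]\in\Ind\hilb S_{l-\lambda}$ to $\bigl[(\gamma,z)\cdot(\gamma_\lambda,1),v\bigr]\in\Ind\hilb S_l$ is well defined exactly because the conjugate of $W_l$ by $(\gamma_\lambda,1)$ is $W_{l-\lambda}$ (again \cref{thm:unicol-indrep-restrtoidcomp}), it is manifestly unitary, and it intertwines the induced actions since right translation commutes with the left action. I recommend promoting that construction to the main argument and dropping the appeal to reciprocity altogether.
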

\begin{proof}
For the `if' claim, write $\sigma$ for the coset of $(\centL T)_0$ in $\centL T$ corresponding to $\lambda$ and let us postcompose the unitary $(\centL T)_0$-intertwiner $(f_l^\sigma)^{-1}$ from \cref{thm:unicol-indrep-restrtoidcomp} with the isometric $(\centL T)_0$-intertwining inclusion $\hilb S_l^\sigma \hookrightarrow \Ind \hilb S_l$. By imitating the proof of \cref{thm:indrep-functorial} it can be shown that $\Ind \hilb S_l$ is large enough for this composition to lift to a unique isometric $\centL T$-intertwiner from $\Ind \hilb S_{l-\lambda}$ to $\Ind \hilb S_l$. It is obviously non-zero and must therefore be a unitary isomorphism by Schur's lemma after we learned in \cref{thm:unicol-indrepisirrep} that $\Ind W_{l-\lambda}$ and $\Ind W_l$ are irreducible.

Conversely, assume that for two characters characters $l$ and $l'$ of $T$ there is an isomorphism $f\colon\Ind \hilb S_{l'} \xrightarrow{\sim} \Ind \hilb S_l$. Let $\sigma_0 := (\centL T)_0$---in other words: the coset in $\centL T$ consisting of all elements $(\rho, w)$ for which $\rho$ has winding element $0 \in \Lambda$. Then the restriction $f|_{\hilb S_{l'}^{\sigma_0}}\colon \hilb S_{l'}^{\sigma_0} \hookrightarrow \Ind \hilb S_l$ is a non-zero $(\centL T)_0$-intertwiner. Because we learned from \cref{thm:unicol-indrep-restrtoidcomp} that all the subspaces $\hilb S_l^\sigma$ of $\Ind \hilb S_l$ for different cosets $\sigma$ are mutually non-isomorphic and irreducible as representations of $(\centL T)_0$, by Schur's lemma we must have an isomorphism $f|_{\hilb S_{l'}^{\sigma_0}}\colon \hilb S_{l'}^{\sigma_0} \xrightarrow{\sim} \hilb S_l^\sigma$ for some coset $\sigma$. Say that $\sigma$ consists of the elements $(\gamma, z)$ for which $\gamma$ has winding element some $\lambda \in \Lambda$. What is more, we learned that $\hilb S_{l'}^{\sigma_0}$ and $\hilb S_l^\sigma$ are isomorphic to $\hilb S_{l'}$ and $\hilb S_{l-\lambda}$, respectively. Restriction to the subgroup $T$ of $(\centL T)_0$ now shows that $l' = l-\lambda$.
\end{proof}

Because $\Ind R_l$ preserves each of the subspaces $\hilb S_l^\sigma \subseteq \Ind \hilb S_l$, it makes sense to talk about their energy eigenspaces $\hilb S_l^\sigma(a)$ for $a \in (1/m)\ZZ$. They sum to the energy eigenspaces of $\Ind \hilb S_l$ (see~\eqref{eq:energyeigenspacesplits}):
\begin{equation}
\label{eq:unicol-indrepenergydecomp}
(\Ind \hilb S_l)(a) = \overline{\bigoplus_{\mathclap{\sigma \in \centL T/(\centL T)_0}} {\hilb S_l}^\sigma(a)}.
\end{equation}

We are now able to calculate the character of $\Ind R_l$. Let namely a fixed element $\lambda \in \Lambda$ and its corresponding coset $\sigma \subseteq \centL T$ be as in \cref{thm:unicol-indrep-restrtoidcomp}. Then that \namecref{thm:unicol-indrep-restrtoidcomp} gives us the character of the restriction of $\Ind R_l$ to $\hilb S_l^\sigma$ because we learned about that of $R_{l-\lambda}$ already in \cref{subsec:unicol-irrepsofidcomp}:\footnote{In the first following equation we put a Hilbert space as a subscript in $\ch_{\hilb S_l^\sigma}$, contrary to the usual notation, to emphasise that we are only considering the restriction of $\Ind R_l$ to $\hilb S_l^\sigma$.}
\[
\ch_{\hilb S_l^\sigma}(q) = \ch_{R_{l-\lambda}}(q) = q^{\rank \Lambda/24} q^{\langle l-\lambda, l-\lambda \rangle /2} \eta(q)^{-\rank \Lambda},
\]
since $\rank \Lambda = \dim T$. Next, we allow our fixed $\lambda$ and so also its corresponding coset $\sigma$ to vary in order to calculate the character of $\Ind R_l$ on all of $\Ind \hilb S_l$. Using the decomposition~\eqref{eq:unicol-indrepenergydecomp}, we get
\begin{align*}
\ch_{\Ind R_l}(q) &= \sum_{\lambda \in \Lambda} \ch_{\hilb S_l^\sigma}(q) \\
	&= q^{\rank \Lambda/24} \sum_{\lambda \in \Lambda} q^{\langle l-\lambda, l-\lambda \rangle /2} \eta(q)^{-\rank \Lambda} \\
	&= q^{\rank \Lambda/24} \theta_{l+\Lambda}(q) \cdot \eta(q)^{-\rank \Lambda},
\end{align*}
where $\theta_{l+\Lambda}(q)$ is the \defn{theta series} of the translated lattice $l+\Lambda$ that is the coset of $l$ in $\Lambda^\vee$ (see \cref{subsec:thetaseries}). This also proves

\begin{prop}
The intertwining $\Rot^{(m)}(S^1)$-action $\Ind R_l$ on the representation $\Ind W_l$ of $\centL T$ is of positive energy.
\end{prop}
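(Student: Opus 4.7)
The plan is to deduce positivity of energy of $\Ind R_l$ from the already-established positivity of energy of each $R_{l-\lambda}$ (\cref{thm:unicol-posenergyidcomp}) together with the decomposition of $\Ind R_l$ into copies of the latter given by \cref{thm:unicol-indrep-restrtoidcomp}.

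First I would recall that by the energy eigenspace decomposition~\eqref{eq:unicol-indrepenergydecomp} we have, for every $a \in (1/m)\ZZ$,
\[
(\Ind \hilb S_l)(a) = \overline{\bigoplus_{\mathclap{\sigma \in \centL T/(\centL T)_0}} \hilb S_l^\sigma(a)}.
\]
Using the winding element isomorphism $\centL T/(\centL T)_0 \xrightarrow{\sim} \Lambda$ to label cosets by $\lambda \in \Lambda$, \cref{thm:unicol-indrep-restrtoidcomp} gives an $\Rot^{(m)}(S^1)$-equivariant unitary isomorphism $\hilb S_l^\sigma \xrightarrow{\sim} \hilb S_{l-\lambda}$. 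Since $R_{l-\lambda}$ is of positive energy by \cref{thm:unicol-posenergyidcomp}, each $\hilb S_l^\sigma(a)$ vanishes for $a < \langle l-\lambda, l-\lambda\rangle/2$ and is finite-dimensional otherwise.

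Next I would verify the two conditions from \cref{dfn:posenergyrep}. For the non-negativity of energy, note that $\langle l-\lambda, l-\lambda\rangle \geq 0$ by the positive definiteness of $\Lambda$ (extended to $\liealg t$), so $\hilb S_l^\sigma(a) = \{0\}$ whenever $a < 0$, and hence $(\Ind \hilb S_l)(a) = \{0\}$ for $a < 0$. For finite-dimensionality of $(\Ind \hilb S_l)(a)$, the key observation is that only finitely many $\lambda \in \Lambda$ can contribute a non-zero summand: the set
\[
\bigl\{\lambda \in \Lambda \bigm\vert \langle l-\lambda, l-\lambda\rangle/2 \leq a\bigr\}
\]
is finite because $\langle \cdot, \cdot\rangle$ is positive definite on $\liealg t$ and $\Lambda$ is a discrete subset. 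Each of those finitely many summands $\hilb S_l^\sigma(a) \cong \hilb S_{l-\lambda}(a)$ is itself finite-dimensional by \cref{thm:unicol-posenergyidcomp}, so the direct sum is in fact already finite and its closure equals itself.

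There is no real obstacle here: the hard work was done when establishing \cref{thm:unicol-posenergyidcomp} and \cref{thm:unicol-indrep-restrtoidcomp}, and the only new ingredient is the discreteness-plus-positive-definiteness argument that controls the cosets contributing to a given energy level. In fact, this finiteness is already visible in the character formula just computed: the factor $\theta_{l+\Lambda}(q) = \sum_{\lambda \in \Lambda} q^{\langle l-\lambda, l-\lambda\rangle/2}$ converges as a formal series with only non-negative exponents and with finitely many terms of exponent at most any given bound, so $\ch_{\Ind R_l}(q)$ has non-negative powers of $q$ only and finite coefficients, which is precisely the statement of positive energy.
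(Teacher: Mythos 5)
Your proof is correct and follows essentially the same route as the paper: the paper deduces positivity of energy directly from the character computation $\ch_{\Ind R_l}(q) = q^{\rank\Lambda/24}\,\theta_{l+\Lambda}(q)\,\eta(q)^{-\rank\Lambda}$, which rests on exactly the ingredients you use — the decomposition~\eqref{eq:unicol-indrepenergydecomp}, \cref{thm:unicol-indrep-restrtoidcomp}, \cref{thm:unicol-posenergyidcomp}, and the finiteness of lattice vectors of bounded length (the appendix fact making the theta series coefficients finite). You merely make explicit the two conditions of \cref{dfn:posenergyrep} that the paper leaves implicit in ``this also proves''.
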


Finally, it is again appropriate to apply a small energy correction by defining
\[
Z_{\Ind W_l}(q) := q^{-\rank \Lambda/24} \ch_{\Ind R_l}(q) = \theta_{l+\Lambda}(q) \cdot \eta(q)^{-\rank \Lambda}.
\]
Upon substituting $q$ by $e^{2\pi i z}$, for $z \in \CC$ with $\Im z > 0$, this is a meromorphic modular form of weight $0$ for some congruence subgroup of $\lieSL(2,\ZZ)$ (the full group $\lieSL(2,\ZZ)$ if $\Lambda$ is unimodular) \parencite[Exercise 8.12]{mason:vosandmodforms}.

\begin{thm}
\label{thm:unicol-classifyirreps}
Every irreducible, positive energy representation of $\centL T$ such that the central subgroup $\phasegp$ acts as $z \mapsto z$ is (unitarily) isomorphic to $\Ind W_l$ for some character $l$ of $T$.
\end{thm}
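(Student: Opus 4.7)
The plan is to produce, for some character $l$ of $T$, a non-zero $(\centL T)_0$-equivariant embedding $W_l \hookrightarrow \Res Q$, and then to extend it via Frobenius reciprocity to a $\centL T$-intertwiner $\Ind W_l \to Q$. Since $\Ind W_l$ is irreducible by \cref{thm:unicol-indrepisirrep} and $Q$ is irreducible by hypothesis, Schur's lemma will immediately promote this non-zero intertwiner to a unitary isomorphism up to rescaling. This strategy mirrors the restriction--induction calculations already carried out in \cref{thm:unicol-indrep-restrtoidcomp,thm:unicol-indrepsisom}.

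To locate $W_l$ inside $\Res Q$, I would exploit the decomposition $(\centL T)_0 \cong T \times \centV\liealg t$ from \cref{subsec:unicol-centext-subgps}. The subgroup $T$ is both central in $(\centL T)_0$ and pointwise fixed by the rotation action, so the spectral decomposition of $\hilb H$ under the compact abelian group $T$ yields a Hilbert direct sum $\hilb H = \overline{\bigoplus_l \hilb H_l}$ of character eigenspaces that is simultaneously compatible with $Q|_{(\centL T)_0}$ and with the intertwining positive-energy action $R$. I fix any $l$ with $\hilb H_l \neq 0$; viewed as a representation of $\centV\liealg t$, this $\hilb H_l$ is again positive energy with $\phasegp$ still acting as $z \mapsto z$. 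The classification \cref{thm:unicol-classifyVt-irreps} identifies the Fock representation $W$ as the unique such irreducible, and I would extract from $\hilb H_l$ an irreducible $\centV\liealg t$-invariant closed subspace $\hilb K \cong W$; together with the scalar $T$-action by $l$, this exhibits the desired embedding $W_l \hookrightarrow \Res Q$.

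The induced representation $\Ind W_l$ is realised in \cref{subsec:unicol-irrepsLT} as a Hilbert direct sum over the discrete coset space $\centL T/(\centL T)_0 \cong \Lambda$, so the standard Frobenius reciprocity for induction from an open subgroup of countable index is straightforward: given a bounded $(\centL T)_0$-intertwiner $g\colon W_l \to \Res Q$, the formula $[(\gamma, z), v] \mapsto Q(\gamma, z) g(v)$ is well-defined on the algebraic direct sum over cosets (exactly by the $(\centL T)_0$-intertwining property of $g$), extends by continuity to a bounded $\centL T$-intertwiner $\Ind W_l \to Q$, and is non-zero because $g$ is. Schur then closes the argument.

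The one non-formal step is the extraction of an irreducible $\centV\liealg t$-subrepresentation from $\hilb H_l$: \cref{thm:unicol-classifyVt-irreps} handles only irreducibles, and one must still argue that a positive-energy Heisenberg representation contains an irreducible summand. The natural route is to pick a non-zero vector $\Omega$ in the (finite-dimensional, hence non-trivial) minimum-energy eigenspace of $R|_{\hilb H_l}$, pass to the derived action of the Heisenberg Lie algebra on a dense smooth domain, and use positivity of energy to show that $\Omega$ is annihilated by the ``lowering'' subalgebra corresponding to $V\liealg t^+$. The closed cyclic subspace generated by $\Omega$ will then automatically realise a vacuum Fock representation, which is $W$ by uniqueness. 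This is the step where I expect the real analytic care to be required.
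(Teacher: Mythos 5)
Your ``standard Frobenius reciprocity'' step is exactly where the argument breaks. For induction over an infinite discrete coset space, $\Ind$ is \emph{not} left adjoint to $\Res$ in the unitary category: the candidate map $[(\gamma, z), v] \mapsto Q(\gamma, z)\,g(v)$ is indeed well defined on the algebraic direct sum over cosets, but it need not be bounded there, so it does not ``extend by continuity'' for free. This failure is spelled out in \cref{subsec:indreps}, including an explicit counterexample ($G = \ZZ$, $H = \{0\}$, trivial representations). What rescues the present situation is a representation-theoretic input you never invoke: the subspace $Q\bigl((\gamma^\sigma, z^\sigma)\bigr)\,g(\hilb S_l) \subseteq \hilb H$ carries, as a $(\centL T)_0$-representation, the conjugate $W_l^{(\gamma^\sigma, z^\sigma)}$, and by \cref{thm:unicol-conjrep} (as exploited in \cref{thm:unicol-indrepisirrep,thm:unicol-indrep-restrtoidcomp}, this conjugate restricts to $T$ by the character $l - \Delta_{\gamma^\sigma}$) these are pairwise non-isomorphic irreducibles for distinct cosets, hence pairwise orthogonal subspaces of $\hilb H$. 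Only then does the series $\sum_\sigma Q\bigl((\gamma^\sigma, z^\sigma)\bigr) g(v^\sigma)$ have squared norm $\sum_\sigma \norm{v^\sigma}_{\hilb S_l}^2$, which gives the convergence and boundedness you asserted. This orthogonality argument is the actual substance of the paper's proof; without it your intertwiner $\Ind W_l \to Q$ is not known to exist.

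The other incomplete step is the extraction of $W_l$ inside $\Res Q$. Your $T$-isotypic decomposition is legitimate (constant loops are fixed by rotations and $d(\phi,\gamma)=1$ for them, so $R$ preserves each $\hilb H_l$), but you leave the existence of an irreducible $\centV\liealg t$-summand of $\hilb H_l$ to an unproven Lie-algebra lowest-weight sketch, which is precisely the unbounded-operator machinery the paper deliberately avoids (see the remark following \cref{thm:stonevonneumann}). No such analysis is needed: \cref{thm:gprep-findimisotcomp-containsirrep} combined with \cref{thm:irrepofcrossprod-irrepgp} (equivalently, \cref{thm:posrepcontainsirrep}) already yields an irreducible, positive energy subrepresentation at the group level, after which \cref{thm:unicol-classifyVt-irreps} identifies it with $W$. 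This is essentially how the paper argues, except that it applies \cref{thm:unicol-classifyidcompirreps} directly to $Q|_{(\centL T)_0}$ rather than first decomposing under $T$; your variant is workable once you replace the Lie-algebra sketch by these group-level results and then add the missing orthogonality argument above.
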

\begin{proof}
Let $Q$ be such a representation on a Hilbert space $\hilb H$ and restrict it to $(\centL T)_0$. Combining \cref{thm:gprep-findimisotcomp-containsirrep} and \cref{thm:irrepofcrossprod-irrepgp} tells us that $Q|_{(\centL T)_0}$ contains an irreducible, positive energy subrepresentation of $(\centL T)_0$. By the classification result \cref{thm:unicol-classifyidcompirreps} the latter must be isomorphic to $W_l$ for some character $l$ of $T$. Suppose we can show that the $(\centL T)_0$-intertwining inclusion $i\colon W_l \hookrightarrow Q|_{(\centL T)_0}$ lifts to a unique morphism $\Ind W_l \to Q$ of $\centL T$-representations. Because we learned in \cref{thm:unicol-indrepisirrep} that $\Ind W_l$ is irreducible and we assumed the same for $Q$, this morphism is then an isomorphism by Schur's lemma.

As explained in \cref{subsec:indreps}, the question of whether the lift of $i$ to $\Ind W_l$ exists comes down to asking whether for every vector of the form~\eqref{eq:unicol-vectorIndSl} of $\Ind \hilb S_l$ the series
\[
\sum_{\mathclap{\sigma \in \centL T/(\centL T)_0}} Q\bigl((\gamma^\sigma, z^\sigma)\bigr) i(v^\sigma)
\]
converges in $\hilb H$. Note that the representation $Q|_{(\centL T)_0}$ on the subspace 
\[
Q\bigl((\gamma^\sigma, z^\sigma)\bigr) \bigl(i(\hilb S_l)\bigr)
\]
of $\hilb H$ is isomorphic to the conjugate representation $W_l^{(\gamma^\sigma, z^\sigma)}$ of $(\centL T)_0$. We learned in \cref{thm:unicol-indrep-restrtoidcomp} that for different cosets $\sigma$ these conjugate representations are mutually non-isomorphic and hence orthogonal. The squared norm of the above series is therefore equal to $\sum_\sigma \norm{v^\sigma}_{\hilb S_l}^2$ and this converges by assumption.
\end{proof}

We conclude from \cref{thm:unicol-indrepsisom,thm:unicol-classifyirreps} that $\centL T$ possesses only finitely many irreducible, positive energy representations up to isomorphism if we fix the character by which $\phasegp$ acts. The isomorphism classes are labelled by the elements of the finite abelian \defn{discriminant group} $D_\Lambda := \Lambda^\vee/\Lambda$ of the lattice $\Lambda$. There is exactly one isomorphism class, represented by $\Ind W_0$, if and only if $\Lambda$ is \defn{unimodular}.

Moreover, \cref{thm:posrepcontainsirrep} tells us that arbitrary positive energy representations of $\centL T$ are now understood as well. They are the direct sums of the irreducible ones we constructed in this section.

\begin{rmk}[The representation theory in the case of an odd lattice]
If $\Lambda$ is an odd lattice, the construction of the representations $\Ind W_l$ is identical to the one above in the case that $\Lambda$ is even. Recall, however, from \cref{subsec:unicol-diffS1action-subgps,rmk:unicol-diff2s1action-odd} that if $\Lambda$ is odd, then $\Diff_+(S^1)$ does act on $(\centL T)_0$ because the cocycle defining this central extension is preserved by $\Diff_+(S^1)$, but only $\Diff_+^{(2)}(S^1)$ acts on $\centL T$. In particular, there is only a $\Rot^{(2)}(S^1)$-action on $\centL T$ which intertwines with a $\Rot^{(m)}(S^1)$-action $\Ind R_l$ on the Hilbert space $\Ind \hilb S_l$. Here, $m$ is the smallest integer such that both $m\langle l, l \rangle \in 2\ZZ$ and $m \geq 2$. The calculation of its graded character and its outcome is identical to the case in which $\Lambda$ is even, which therefore shows that this $\Rot^{(m)}(S^1)$-action $\Ind R_l$ is of positive energy. Unlike in the odd case, $\ch_{\Ind R_0}(q)$ is now a series in half-integral powers of $q$.

The grading~\eqref{eq:unicol-defIndSl} on $\Ind \hilb S_l$ over the cosets $\sigma$ of $(\centL T)_0$ in $\centL T$ now refines to a further $\ZZ/2\ZZ$-grading
\[
\Ind \hilb S_l = \overline{\bigoplus_{\text{$\sigma$ even}} \hilb S_l^\sigma} \oplus \overline{\bigoplus_{\text{$\sigma$ odd}} \hilb S_l^\sigma},
\]
where we call a coset $\sigma$ \defn{even} if $\sigma \subseteq \centL T(0)$ and \defn{odd} if $\sigma \subseteq \centL T(1)$. (See \cref{rmk:unicol-centextoddcase} for the definitions of the $\centL T(i)$.) Every coset is either even or odd. We denote the two subspaces in the above by $(\Ind \hilb S_l)_0$ and $(\Ind \hilb S_l)_1$ and call their vectors \defn{even} and \defn{odd}, respectively. By the definition of $\Ind W_l$ it is clear that even elements of $\centL T$ preserve the parity of vectors, while odd ones reverse them.
\end{rmk}

We close this chapter by lifting the structure on the Hilbert spaces $\hilb S_l$ described in \cref{eq:unicol-isometry-idcomp} to the induced spaces $\Ind \hilb S_l$.

The winding element isomorphism $\centL T/(\centL T)_0 \xrightarrow{\sim} \Lambda$ implies that the group $\Aut(\Lambda; \langle \cdot, \cdot \rangle)$ of automorphisms of the lattice $\Lambda$ acts on $\centL T/(\centL T)_0$ also. The translate $g\sigma$ of a coset $\sigma$ by an element $g \in \Aut(\Lambda; \langle \cdot, \cdot \rangle)$ consists, by definition, of all elements $(\gamma, z)$ such that $\gamma$ has winding element $g\lambda$ if $\sigma$ corresponds similarly to $\lambda$.

\begin{prop}
(Compare with \parencite[Lemma 1.5(2)]{shimakura:automsvoageneral}.) Let $l$ be a character of $T$ and $\tilde g \in \Aut(\tilde\Lambda^\epsilon; \langle \cdot, \cdot \rangle)$. Then the function
\[
(\Ind U_l)(\tilde g)\colon \Ind \hilb S_l \xrightarrow{\sim} \Ind \hilb S_{g \cdot l}
\]
given by
\begin{equation}
\label{eq:unicol-latticeisomaction-IndSl}
(\Ind U_l)(\tilde g) \cdot \Bigl(\bigl[(\gamma^\sigma, z^\sigma), v^\sigma\bigr]\Bigr)_\sigma
	:= \Bigl(\Bigl[\tilde g \cdot \bigl(\gamma^{g^{-1}\sigma}, z^{g^{-1}\sigma}\bigr), U_l(g)\bigl(v^{g^{-1}\sigma}\bigr)\Bigr]\Bigr)_\sigma
\end{equation}
is well-defined, linear, unitary and satisfies the intertwining properties
\begin{equation}
\label{eq:unicol-latticeisomaction-IndSl-intertwin}
(\Ind U_l)(\tilde g)(\Ind W_l)(\gamma, z)(\Ind U_l)(\tilde g)^* = (\Ind W_{g \cdot l})\bigl(\tilde g \cdot (\gamma, z)\bigr)
\end{equation}
for all $(\gamma, z) \in \centL T$, and
\begin{equation}
\label{eq:unicol-latticeisomaction-IndSl-intertwin-rot}
(\Ind U_l)(\tilde g) (\Ind R_l)[\Phi_\theta] (\Ind U_l)(\tilde g)^* = (\Ind R_{g \cdot l})[\Phi_\theta]
\end{equation}
for all $[\Phi_\theta] \in \Rot^{(m)}(S^1)$.
\end{prop}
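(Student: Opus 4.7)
The proof proceeds in four steps, all of which reduce the statement on induced representations to the corresponding claims about $U_l(g)$ established in \cref{eq:unicol-isometry-idcomp}.

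First I would verify that $(\Ind U_l)(\tilde g)$ is well-defined on each summand $\hilb S_l^{g^{-1}\sigma}$. A vector of $\hilb S_l^{g^{-1}\sigma}$ is an equivalence class $[(\gamma, z), v]$ with $(\gamma, z) \in g^{-1}\sigma$ and $v \in \hilb S_l$, subject to the relation $((\gamma, z)(\rho, w), v) \sim ((\gamma, z), W_l(\rho, w)(v))$ for $(\rho, w) \in (\centL T)_0$. Since $\tilde g$ acts by group automorphisms and sends $(\centL T)_0$ to itself (because $g$ preserves winding elements equal to zero), $\tilde g \cdot ((\gamma, z)(\rho, w)) = (\tilde g \cdot (\gamma, z)) \cdot (\tilde g \cdot (\rho, w))$ and $\tilde g \cdot (\rho, w)$ lies in $(\centL T)_0$. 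The intertwining relation \eqref{eq:unicol-Ul-intertwin} then shows $U_l(g)(W_l(\rho, w)(v)) = W_{g\cdot l}(\tilde g \cdot (\rho, w))(U_l(g)(v))$, so the equivalence relation is respected and the target class sits in $\hilb S_{g\cdot l}^\sigma$. Linearity is immediate from the formula. For unitarity, note that $\tilde g$ permutes the cosets via the bijection $\sigma \mapsto g\sigma$, and on each summand the Hilbert space map $[(\gamma, z), v] \mapsto [\tilde g \cdot (\gamma, z), U_l(g)(v)]$ is unitary because $U_l(g)\colon \hilb S_l \xrightarrow{\sim} \hilb S_{g\cdot l}$ is. Hence $(\Ind U_l)(\tilde g)$ is a unitary isomorphism of the completed direct sums.

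For the intertwining property \eqref{eq:unicol-latticeisomaction-IndSl-intertwin}, I would apply both sides to an elementary vector $([(\gamma^\sigma, z^\sigma), v^\sigma])_\sigma$ and an element $(\gamma, z) \in \centL T$. Unravelling the definition of $\Ind W_l$ and of $(\Ind U_l)(\tilde g)$, the calculation on the $\sigma$-summand reduces to computing $\tilde g \cdot ((\gamma, z) \cdot (\gamma^{(\gamma, z)^{-1} g^{-1} \sigma}, z^{(\gamma, z)^{-1} g^{-1} \sigma}))$, applied to the vector $U_l(g)(v^{(\gamma, z)^{-1} g^{-1}\sigma})$. Since $\tilde g$ is a group homomorphism, this equals $(\tilde g \cdot (\gamma, z)) \cdot (\tilde g \cdot (\gamma^{\ldots}, z^{\ldots}))$; and since $(\tilde g \cdot (\gamma, z))^{-1} \sigma = \tilde g \cdot ((\gamma, z)^{-1} g^{-1}\sigma)$ in the coset space, one recognises this as $(\Ind W_{g \cdot l})(\tilde g \cdot (\gamma, z))$ applied to $(\Ind U_l)(\tilde g) \cdot ([(\gamma^\sigma, z^\sigma), v^\sigma])_\sigma$.

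For the rotation intertwining property \eqref{eq:unicol-latticeisomaction-IndSl-intertwin-rot}, the plan is to use that the actions of $\Rot^{(m)}(S^1)$ and $\Aut(\tilde\Lambda^\epsilon; \langle \cdot, \cdot \rangle)$ on $\centL T$ commute by \cref{thm:unicol-rotandisometry-commute}, and that $R_l$ and $R_{g\cdot l}$ are both representations of $\Rot^{(m)}(S^1)$ since $\langle g\cdot l, g\cdot l\rangle = \langle l,l\rangle$. On the $\sigma$-summand $\hilb S_{g\cdot l}^\sigma$, $(\Ind R_{g \cdot l})[\Phi_\theta]$ acts by $[\Phi_\theta]$ on the coset representative and by $R_{g \cdot l}[\Phi_\theta]$ on the fibre vector. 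Conjugating $(\Ind R_l)[\Phi_\theta]$ by $(\Ind U_l)(\tilde g)$ first transports to $\hilb S_l^{g^{-1}\sigma}$, applies $[\Phi_\theta]$ and $R_l[\Phi_\theta]$, then applies $\tilde g$ and $U_l(g)$; by the commutation of $\Diff_+(S^1)$ and lattice automorphisms on $\centL T$ this matches the coset-representative action of $[\Phi_\theta]$ on $\hilb S_{g\cdot l}^\sigma$, while on the fibre the identity $U_l(g) R_l[\Phi_\theta] U_l(g)^* = R_{g\cdot l}[\Phi_\theta]$ from \eqref{eq:unicol-Ul-intertwin-rot} closes the computation.

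The main obstacle will be bookkeeping: one must consistently carry the coset shift $\sigma \mapsto g^{-1}\sigma$ through every identification, and check at each step that representatives chosen in $g^{-1}\sigma$ land correctly in $\sigma$ after applying $\tilde g$. Once this is handled by noting that $g$ acts as a bijection on the coset space $\centL T/(\centL T)_0 \cong \Lambda$, everything else follows mechanically from the three already-established facts: $\tilde g$ is a group automorphism, \eqref{eq:unicol-Ul-intertwin}, and \eqref{eq:unicol-Ul-intertwin-rot} together with \cref{thm:unicol-rotandisometry-commute}.
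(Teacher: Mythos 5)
Your proposal is correct and follows essentially the same route as the paper: well-definedness, linearity and unitarity are reduced to the intertwining property~\eqref{eq:unicol-Ul-intertwin} of $U_l$ together with the fact that $\tilde g$ is an automorphism preserving $(\centL T)_0$; the group-intertwining identity hinges on the same coset identification $(\tilde g \cdot (\gamma, z))^{-1}\sigma = g(\gamma,z)^{-1}g^{-1}\sigma$ (which the paper checks via winding elements and you justify by $\tilde g$ descending to $g$ on $\centL T/(\centL T)_0$); and the rotation statement is deduced, exactly as in the paper, from \cref{thm:unicol-rotandisometry-commute} and~\eqref{eq:unicol-Ul-intertwin-rot}.
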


We refer to the proof of \cref{eq:unicol-isometry-idcomp} for an explanation of why $\Ind R_{g \cdot l}$ is a representation of the same covering group $\Rot^{(m)}(S^1)$ as $\Ind R_l$ is.

\begin{proof}
Verifying well-definedness comes down to checking whether, if we choose for every coset $\sigma$ of $(\centL T)_0$ in $\centL T$ an arbitrary element $(\rho^\sigma, w^\sigma) \in (\centL T)_0$, implying that
\[
\Bigl(\bigl[(\gamma^\sigma, z^\sigma), v^\sigma\bigr]\Bigr)_\sigma
=
\Bigl(\bigl[(\gamma^\sigma, z^\sigma) \cdot (\rho^\sigma, w^\sigma)^{-1}, W_l(\rho^\sigma, w^\sigma)(v^\sigma)\bigr]\Bigr)_\sigma,
\]
then the left hand side of~\eqref{eq:unicol-latticeisomaction-IndSl} equals
\begin{equation}
\label{eq:unicol-latticeisomaction-IndSl-1}
(\Ind U_l)(\tilde g) \cdot \Bigl(\bigl[(\gamma^\sigma, z^\sigma) \cdot (\rho^\sigma, w^\sigma)^{-1}, W_l(\rho^\sigma, w^\sigma)(v^\sigma)\bigr]\Bigr)_\sigma.
\end{equation}
If we fill in the definition of $(\Ind U_l)(\tilde g)$ in~\eqref{eq:unicol-latticeisomaction-IndSl-1} this gives
\begin{multline*}
\Bigl(\Bigl[\tilde g \cdot \bigl(\gamma^{g^{-1}\sigma}, z^{g^{-1}\sigma}\bigr) \cdot g \cdot \bigl(\rho^{g^{-1}\sigma}, w^{g^{-1}\sigma}\bigr)^{-1}, U_l(g) W_l\bigl(\rho^{g^{-1}\sigma}, w^{g^{-1}\sigma}\bigr)\bigl(v^{g^{-1}\sigma}\bigr)\Bigr]\Bigr)_\sigma \\
\begin{aligned}
	&= \Bigl(\Bigl[\tilde g \cdot \bigl(\gamma^{g^{-1}\sigma}, z^{g^{-1}\sigma}\bigr), \phantom{{}} \\
	&\phantom{{} = \Bigl(\Bigl[{}} W_{g \cdot l}\bigl(g \cdot \bigl(\rho^{g^{-1}\sigma}, w^{g^{-1}\sigma}\bigr)\bigr)^* U_l(g) W_l\bigl(\rho^{g^{-1}\sigma}, w^{g^{-1}\sigma}\bigr)\bigl(v^{g^{-1}\sigma}\bigr)\Bigr]\Bigr)_\sigma,
\end{aligned}
\end{multline*}
and there indeed holds
\[
W_{g \cdot l}\Bigl(g \cdot \bigl(\rho^{g^{-1}\sigma}, w^{g^{-1}\sigma}\bigr)\Bigr)^* U_l(g) W_l\bigl(\rho^{g^{-1}\sigma}, w^{g^{-1}\sigma}\bigr) = U_l(g)
\]
thanks to the intertwining property~\eqref{eq:unicol-Ul-intertwin} of $U_l$, which settles the question.

Showing linearity and unitarity can similarly be reduced to an application of the intertwining property of $U_l$. We therefore turn to proving the intertwining relation~\eqref{eq:unicol-latticeisomaction-IndSl-intertwin} for $\Ind U_l$.

Applying the left hand side of~\eqref{eq:unicol-latticeisomaction-IndSl-intertwin} to a vector $([(\gamma^\sigma, z^\sigma), v^\sigma])_\sigma \in \Ind \hilb S_l$ gives
\begin{multline*}
(\Ind U_l)(\tilde g)(\Ind W_l)(\gamma, z) \cdot \Bigl(\bigl[\tilde g^{-1} \cdot (\gamma^{g\sigma}, z^{g\sigma}), U_l(g)^*(v^{g\sigma})\bigr]\Bigr)_\sigma \\
\begin{aligned}
	&= (\Ind U_l)(\tilde g) \cdot \Bigl(\Bigl[(\gamma, z) \cdot \tilde g^{-1} \cdot \bigl(\gamma^{g(\gamma, z)^{-1}\sigma}, z^{g(\gamma, z)^{-1}\sigma}\bigr), U_l(g)^*\bigl(v^{g(\gamma, z)^{-1}\sigma}\bigr)\Bigr]\Bigr)_\sigma \\
	&= \Bigl(\Bigl[\tilde g \cdot (\gamma, z) \cdot \tilde g^{-1} \cdot \bigl(\gamma^{g(\gamma, z)^{-1}g^{-1}\sigma}, z^{g(\gamma, z)^{-1}g^{-1}\sigma}\bigr), v^{g(\gamma, z)^{-1}g^{-1}\sigma}\Bigr]\Bigr)_\sigma \\
	&= \Bigl(\Bigl[\bigl(\tilde g \cdot (\gamma, z)\bigr) \cdot \bigl(\gamma^{g(\gamma, z)^{-1}g^{-1}\sigma}, z^{g(\gamma, z)^{-1}g^{-1}\sigma}\bigr), v^{g(\gamma, z)^{-1}g^{-1}\sigma}\Bigr]\Bigr)_\sigma.
\end{aligned}
\end{multline*}
The crucial insight is now that the coset $g(\gamma, z)^{-1}g^{-1}\sigma$ of $(\centL T)_0$ in $\centL T$ can equivalently be written as $(\tilde g \cdot (\gamma, z))^{-1} \sigma$. If $\sigma$ is namely the coset associated to a lattice element $\lambda \in \Lambda$, then $g(\gamma, z)^{-1}g^{-1}\sigma$ is associated to $g(g^{-1}\lambda - \Delta_\gamma)$, which can be simplified to $\lambda - g\Delta_\gamma$. And, indeed, the image of $(\tilde g \cdot (\gamma, z))^{-1}$ in $LT$ has winding element $-g\Delta_\gamma$.

The intertwining property~\eqref{eq:unicol-latticeisomaction-IndSl-intertwin-rot} has an analogous, but easier proof which uses \cref{thm:unicol-rotandisometry-commute} and the fact stated in \cref{eq:unicol-isometry-idcomp} that $U_l$ intertwines $R_l$ and $R_{g \cdot l}$.
\end{proof}

This result implies in particular that $\Aut(\tilde\Lambda^\epsilon; \langle \cdot, \cdot \rangle)$ acts on the Hilbert space $\Ind \hilb S_0$, intertwining with the representation $\Ind W_0$ in the same way as $\Ind R_0$ and commuting with $\Ind R_0$. This has been proved at the beginning of \parencite[Section 4]{dong:latticeconfnets} for a projective action of the larger group $\Diff_+(S^1)$ instead of only for the honest action of $\Rot(S^1)$ that we studied here.

\chapter{Bicoloured torus loop groups}
\label{chap:bicol}

In this \namecref{chap:bicol} we introduce and study our new notion of a bicoloured torus loop group. Since this is a generalisation of a unicoloured torus loop group as studied in \cref{chap:unicol}, the format of this \namecref{chap:bicol} will largely mirror that of the previous one. We assume that the reader is familiar with the notations from \cref{sec:mainresults} for the next two paragraphs.

We start by laying out the structure of a bicoloured torus loop group $L(T_\wh, H, T_\bl)$ in \cref{sec:bicol-struct} and comparing it to various related unicoloured torus loop groups. We do not regard it as a topological group until \cref{sec:bicol-reptheory}, but we will nevertheless speak freely in advance of its connected components. Two differences between that section and \cref{sec:unicol-struct} are that we introduce another group $P(H, (\Lambda_\wh - \Lambda_\bl)/\Gamma)$ as an auxiliary tool to analyse the group of actual interest $L(T_\wh, H, T_\bl)$, and that certain constructions, such as our description of an action of a covering group of $\Diff_+(S^1)$ on $L(T_\wh, H, T_\bl)$, are much more involved than in the unicoloured situation. All this material does not yet require $\Lambda_\wh$, $\Lambda_\bl$ and $\Gamma$ to be endowed with bi-additive forms.

Next, we construct in \cref{sec:bicol-centext} from the data of the lattices $\Lambda_\wh$, $\Lambda_\bl$ and $\Gamma$ a $\phasegp$-central extension $\centL(T_\wh, H, T_\bl)$ of $L(T_\wh, H, T_\bl)$ which we show to be disjoint-commutative. We also again relate it to the central extensions of unicoloured torus loop groups defined in \cref{chap:unicol}. The action of a covering group of $\Diff_+(S^1)$ on $L(T_\wh, H, T_\bl)$ is proved to lift to $\centL(T_\wh, H, T_\bl)$ in \cref{sec:bicol-centext-diffnS1action}, opening the door to the study of its positive energy representations. The irreducible such representations are finally classified and constructed in \cref{sec:bicol-reptheory}.

\section{Bicoloured torus loop groups and their structure}
\label{sec:bicol-struct}

Consider the following setup. Let
\begin{equation}
\label{eq:bicol-span}
\Lambda_\wh \xhookleftarrow{\pi_\wh} \Gamma \xhookrightarrow{\pi_\bl} \Lambda_\bl
\end{equation}
be a \defn{span} of free $\ZZ$-modules, by which we mean that $\Lambda_\wh$, $\Lambda_\bl$ and $\Gamma$ share the same finite rank, and that the $\pi_\whbl$ are injective module homomorphisms.\footnote{A statement containing the subscript $\wh/\bl$ should be read as being valid when this symbol is replaced everywhere in the sentence with either $\wh$ or $\bl$.} Write $T_\whbl := \Lambda_\whbl \otimes_\ZZ \phasegp$ and $H := \Gamma \otimes_\ZZ \phasegp$ for the three associated tori and $\liealg t_\whbl := \Lambda_\whbl \otimes_\ZZ \RR$ and $\liealg h := \Gamma \otimes_\ZZ \RR$ for their respective Lie algebras. The homomorphisms $\pi_\whbl$ induce surjective Lie group homomorphisms $\phasegp\pi_\whbl\colon H \twoheadrightarrow T_\whbl$ and Lie algebra isomorphisms $\RR\pi_\whbl\colon \liealg h \xrightarrow{\sim} \liealg t_\whbl$. Denote the closed left half of $S^1$ by $\circl$~, the closed right half by $\circr$ and their intersection $\{i, -i\}$ as $\circpq$~. From now on we will call the points $i$ and $-i$, $p$ and $q$ respectively.\footnote{This alludes to the fact that the particular cutting of $S^1$ into two arcs we use is not important.}

We record the following notations for later use:\footnote{The reason why we write $\Lambda_\wh - \Lambda_\bl$ instead of $\Lambda_\wh + \Lambda_\bl$ is that this notation suggests the correct way to map $\Lambda_\wh \oplus \Lambda_\bl$ into $\Lambda_\wh - \Lambda_\bl$, as we will see in \cref{thm:bicol-windeltshoms}.}
\begin{align*}
\Lambda_\wh \cap \Lambda_\bl &:= (\RR \pi_\wh)^{-1}(\Lambda_\wh) \cap (\RR \pi_\bl)^{-1}(\Lambda_\bl) \\
\Lambda_\wh - \Lambda_\bl &:= (\RR \pi_\wh)^{-1}(\Lambda_\wh) - (\RR \pi_\bl)^{-1}(\Lambda_\bl) \\
	&:= \bigl\{(\RR\pi_\wh)^{-1}(\lambda_\wh) - (\RR\pi_\bl)^{-1}(\lambda_\bl) \in \liealg h \bigm\vert \lambda_\wh \in \Lambda_\wh, \lambda_\bl \in \Lambda_\bl\bigr\}.
\end{align*}
These are both free $\ZZ$-submodules of $\liealg h$, of the same rank as $\Lambda_\whbl$ and $\Gamma$. There are inclusions
\[
\begin{tikzcd}[column sep=0.6em, row sep=0.5em]
	& & (\RR\pi_\wh)^{-1}(\Lambda_\wh) \ar[dr, phantom, "\subseteq", sloped, start anchor=south east, end anchor=north west] & &[0.6em] \\
\Gamma \ar[r, phantom, "\subseteq"] & \Lambda_\wh \cap \Lambda_\bl \ar[ur, phantom, "\subseteq", sloped, start anchor=north east, end anchor=south west] \ar[dr, phantom, "\subseteq", sloped, start anchor=south east, end anchor=north west] & & \Lambda_\wh - \Lambda_\bl \ar[r, hook] & \Gamma^\vee. \\
	& & (\RR\pi_\bl)^{-1}(\Lambda_\bl) \ar[ur, phantom, "\subseteq", sloped, start anchor=north east, end anchor=south west] & &
\end{tikzcd}
\]
Associated to the span~\eqref{eq:bicol-span}, we define $n$ to be the smallest positive integer such that $n(\Lambda_\wh - \Lambda_\bl) \subseteq \Gamma$.\label{dfn:bicol-integern}

\begin{dfn}
\label{dfn:bicol-gp}
The \defn{bicoloured torus loop group} $L(T_\wh, H, T_\bl)$ associated to the quintuple
\[
(\Lambda_\wh, \Gamma, \Lambda_\bl, \pi_\wh, \pi_\bl)
\]
is the abelian group of all triples $(\gamma_\wh, \gamma_\match, \gamma_\bl)$, where $\gamma_\wh\colon \circl \to T_\wh$ and $\gamma_\bl\colon \circr \to T_\bl$ are smooth maps and $\gamma_\match\colon \circpq \to H$ is a function\footnote{The subscript `m' stands for `matching datum' or `middle'.} such that the following diagram commutes:
\begin{equation}
\label{eq:bicol-commdiag}
\begin{tikzcd}
\circl \ar[d, "\gamma_\wh"'] &[1.5em] \ar[l, hook'] \circpq \ar[d, "\gamma_\match"] \ar[r, hook] &[1.5em] \circr \ar[d, "\gamma_\bl"] \\
T_\wh & \ar[l, twoheadrightarrow, "\phasegp\pi_\wh"] H \ar[r, two heads, "\phasegp\pi_\bl"'] & T_\bl
\end{tikzcd},
\end{equation}
and such that for all $k\geq 1$ the left and right derivatives of $\gamma_\wh$ and $\gamma_\bl$ respectively at $p$ and $q$ agree:
\begin{equation}
\label{eq:bicol-derivatives-cond}
\begin{split}
(\RR\pi_\wh)^{-1}\bigl(\gamma_\wh^{(k)}(p)\bigr) = (\RR\pi_\bl)^{-1}\bigl(\gamma_\bl^{(k)}(p)\bigr),\\
(\RR\pi_\wh)^{-1}\bigl(\gamma_\wh^{(k)}(q)\bigr) = (\RR\pi_\bl)^{-1}\bigl(\gamma_\bl^{(k)}(q)\bigr).
\end{split}
\end{equation}
(In this notation we silently identify tangent spaces of the tori $T_\whbl$ with the Lie algebras $\liealg t_\whbl$ via the translation homomorphisms, so that we can say that $\gamma_\whbl^{(k)}(p), \gamma_\whbl^{(k)}(q) \in \liealg t_\whbl$.) Such triples $(\gamma_\wh, \gamma_\match, \gamma_\bl)$ are called \defn{bicoloured (torus) loops}.
\end{dfn}

In order to reduce clutter, the notation $L(T_\wh, H, T_\bl)$ does not mention the homomorphisms $\pi_\whbl$, although they are part of the data needed to construct the group.

Paralleling the above terminology, we will often refer to the elements of an ordinary torus loop group $LT$ as \defn{unicoloured (torus) loops}.

\subsection{Bicoloured loops as unicoloured loops with a discontinuity}
\label{subsec:bicol-discontunicolloops}
We now offer an alternative point of view on the group $L(T_\wh, H, T_\bl)$. Let $\gamma = (\gamma_\wh, \gamma_\match, \gamma_\bl)$ be a bicoloured loop. There is a unique lift $\hat\gamma_\whbl$ to $H$ of $\gamma_\whbl$ along $\phasegp\pi_\whbl$ such that $\hat\gamma_\whbl(p) = \gamma_\match(p)$. Glue these two lifts together at $p$ by defining a smooth map $\Pth(\gamma)\colon [0,1] \to H$ as
\[
\Pth(\gamma)(\theta) :=
\begin{cases}
\hat\gamma_\wh(\theta) & \text{if $\theta \in [1/2, 1]$,} \\
\hat\gamma_\bl(\theta) & \text{if $\theta \in [0, 1/2]$,}
\end{cases}
\]
where we used the unit speed parametrisations of $\circrdir$ and $\circldir$ by $[0, 1/2]$ and $[1/2, 1]$ respectively. Then, because $\Pth(\gamma)(1)$ and $\Pth(\gamma)(0)$ differ from $\gamma_\match(q)$ by an element of the subgroups $(\RR\pi_\wh)^{-1}(\Lambda_\wh)/\Gamma$ and $(\RR\pi_\bl)^{-1}(\Lambda_\bl)/\Gamma$ of $H$ respectively, $\Pth(\gamma)$ lies in the following group of paths on $H$:
\begin{equation}
\begin{split}
P\bigl(H, (\Lambda_\wh - \Lambda_\bl)/\Gamma\bigr) &:= \biggl\{\gamma \in C^\infty\bigl([0,1], H\bigr) \biggm\vert \gamma(1) - \gamma(0) \in \frac{\Lambda_\wh - \Lambda_\bl}{\Gamma}, \\
	&\phantom{{}:= \biggl\{\gamma \in C^\infty\bigl([0,1], H\bigr) \biggm\vert {}}  \gamma^{(k)}(1) = \gamma^{(k)}(0) \text{ for all $k \geq 1$}\biggr\}.
\end{split}
\label{eq:bicol-discontpaths}
\end{equation}
This gives a homomorphism
\[
\Pth\colon L(T_\wh, H, T_\bl) \to P\bigl(H, (\Lambda_\wh - \Lambda_\bl)/\Gamma\bigr), \qquad \gamma \mapsto \Pth(\gamma)
\]
of abelian groups. It is surjective because if, conversely, a path $\gamma$ lies in~\eqref{eq:bicol-discontpaths}, then $(\gamma_\wh, \gamma_\match, \gamma_\bl)$ is in $L(T_\wh, H, T_\bl)$, where $\gamma_\wh$ is defined as the composition $\phasegp\pi_\wh \circ \gamma|_{[1/2, 1]}$, $\gamma_\bl$ is $\phasegp\pi_\bl \circ \gamma|_{[0, 1/2]}$, $\gamma_\match(p) := \gamma(1/2)$ and $\gamma_\match(q)$ is an arbitrary element constructed as follows. Choose a decomposition $\gamma(1) - \gamma(0) \equiv \lambda_\wh - \lambda_\bl \bmod \Gamma$ for some $\lambda_\whbl \in (\RR \pi_\whbl)^{-1}(\Lambda_\whbl) \subseteq \liealg h$. Then $\gamma(1) - \lambda_\wh \equiv \gamma(0) - \lambda_\bl \bmod \Gamma$, so setting $\gamma_\match(q) := \gamma(1) - \lambda_\wh \bmod \Gamma$ makes $\gamma_\match$ fit into a commutative diagram~\eqref{eq:bicol-commdiag}.

Note that the datum $\gamma_\match(q)$ is not used by $\Pth$ and, moreover, that this is the only information about $\gamma$ that is thrown away. Said more precisely, the kernel of $\Pth$ is the finite abelian group consisting of triples $(\gamma_\wh, \gamma_\match, \gamma_\bl)$ where the $\gamma_\whbl$ are identically $0_{T_\whblsm}$, $\gamma_\match(p) = 0_H$ and $\gamma_\match(q) \in (\Lambda_\wh \cap \Lambda_\bl)/\Gamma$. We conclude that $L(T_\wh, H, T_\bl)$ is part of a short exact sequence of abelian groups
\begin{equation}
\label{eq:bicol-discontpaths-ses}
\begin{tikzcd}
0 \ar[r] & \displaystyle\frac{\Lambda_\wh \cap \Lambda_\bl}{\Gamma} \ar[r] & L(T_\wh, H, T_\bl) \ar[r, "\Pth"] &[1em] P\bigl(H, (\Lambda_\wh - \Lambda_\bl)/\Gamma\bigr) \ar[r] & 0.
\end{tikzcd}
\end{equation}

In other words, bicoloured loops are almost ordinary, unicoloured loops $S^1 \to H$, except for two differences. First, a bicoloured loop has a specific kind of discontinuity at the point $q \in S^1$, namely one which takes values in the finite abelian group $(\Lambda_\wh - \Lambda_\bl)/\Gamma \subseteq H$. Second, there is a small piece of extra data consisting of a `reference point' in $(\Lambda_\wh \cap \Lambda_\bl)/\Gamma \subseteq H$ with respect to which the two loose ends of the loop at the discontinuity jump.

\subsection{Support of a bicoloured loop}
\label{subsec:bicol-support}
Bicoloured loops have the following natural notion of support:

\begin{dfn}
\label{dfn:bicol-support}
The \defn{support} of a bicoloured loop $\gamma = (\gamma_\wh, \gamma_\match, \gamma_\bl) \in L(T_\wh, H, T_\bl)$ is the closed subset of $S^1$ given by
\[
\supp \gamma := \supp \gamma_\wh \cup \supp \gamma_\bl.
\]
\end{dfn}

Note that a path $\gamma$ in $P(H, (\Lambda_\wh - \Lambda_\bl)/\Gamma)$ has a notion of support also, namely the closure of the points on $S^1$ where it is not $0_H$. If such a path $\gamma$ has support in an interval, then any bicoloured loop in its pre-image under $\Pth$ has support there as well in the sense of \cref{dfn:bicol-support}, but the converse does not necessarily hold. It will be crucial in \cref{subsec:bicol-disjcomm} that, even if $\Gamma = \Lambda_\wh \cap \Lambda_\bl$ in which case $\Pth$ is an isomorphism, we use the \cref{dfn:bicol-support} of support for bicoloured loops.

\subsection{Unicoloured loop groups are a special case}
\label{subsec:bicol-unicolisspecialcase}
Consider the case where $\Lambda_\wh = \Lambda_\bl = \Gamma$ and the homomorphisms $\pi_\whbl\colon \Gamma \hookrightarrow \Lambda_\whbl$ are the identity. Then of course $T_\wh = T_\bl = H$ and the homomorphisms $\phasegp\pi_\whbl\colon H \twoheadrightarrow H$ are the identity. We claim that the bicoloured torus loop group $L(H, H, H)$ is then merely a complicated way to describe the unicoloured torus loop group $LH$. Indeed, in this case $(\Lambda_\wh \cap \Lambda_\bl)/\Gamma = 0$, so the homomorphism $\Pth$ in~\eqref{eq:bicol-discontpaths-ses} is an isomorphism, and we furthermore now have $P(H, (\Lambda_\wh - \Lambda_\bl)/\Gamma) \cong LH$.

Let us describe the identification more directly. There is a homomorphism of abelian groups $\Bi\colon LH \to L(H, H, H)$ given by $\gamma \mapsto (\gamma|_\circlsm, \gamma|_{\{p,q\}}, \gamma|_\circrsm)$. It can be thought of as `bicolourising' unicoloured loops in a canonical way. Its inverse $\Pth$ sends a bicoloured loop $\gamma = (\gamma_\wh, \gamma_\match, \gamma_\bl)$ to $\Pth(\gamma)$, where for $\theta \in S^1$
\[
\Pth(\gamma)(\theta) := 
\begin{cases}
\gamma_\wh(\theta) & \text{if $\theta \in \circl$~,} \\
\gamma_\bl(\theta) & \text{if $\theta \in \circr$~.}
\end{cases}
\]
This unicoloured loop $\Pth(\gamma)$ is well-defined at the points $\theta = p$ and $\theta = q$ because of the commutativity of the diagram~\eqref{eq:bicol-commdiag}. Smoothness at those two points is satisfied thanks to the conditions~\eqref{eq:bicol-derivatives-cond} on the (higher) derivatives of $\gamma_\wh$ and $\gamma_\bl$.

\subsection{The inclusion of $LH$}
\label{subsec:bicol-inclLH}
Generalising the observation made in \cref{subsec:bicol-unicolisspecialcase}, we may see unicoloured loops with values in $H$ as certain bicoloured loops in $L(T_\wh, H, T_\bl)$. Indeed, there is a homomorphism $\Bi\colon LH \to L(T_\wh, H, T_\bl)$ given by $\gamma \mapsto (\gamma_\wh, \gamma|_{\{p,q\}}, \gamma_\bl)$, where $\gamma_\wh$ is the composition $\phasegp\pi_\wh \circ \gamma|_\circlsm$ and $\gamma_\bl$ is $\phasegp\pi_\bl\circ \gamma|_\circrsm$. This homomorphism is injective.

Suppose namely that $\Bi(\gamma) = (0_{T_\wh}, 0_H, 0_{T_\bl})$. This means that the image of $\gamma|_\circlsm$ is in the kernel $(\RR\pi_\wh)^{-1}(\Lambda_\wh)/\Gamma \subseteq H$ of $\phasegp\pi_\wh$ and the image of $\gamma|_\circrsm$ is in the kernel $(\RR\pi_\bl)^{-1}(\Lambda_\bl)/\Gamma \subseteq H$ of $\phasegp\pi_\bl$. Since these kernels are finite abelian groups, $\gamma|_\circlsm$ and $\gamma|_\circrsm$ must both be constant. However, we also assumed that $\gamma|_{\{p,q\}}$ vanishes. So $\gamma|_\circlsm$ and $\gamma|_\circrsm$ are identically $0_H$ as well, proving our claim.

The image of $\Bi$ consists of those bicoloured loops $(\gamma_\wh, \gamma_\match, \gamma_\bl)$ for which the unique lifts $\hat\gamma_\whbl$ to $H$ of the $\gamma_\whbl$ that match at $p$ also match with each other, \emph{and} with $\gamma_\match$, at $q$.

\subsection{Isotony with respect to unicoloured loop groups}
\label{subsec:bicol-unicolisotony}
Write $L_\circlsm T_\wh$ for those loops in $LT_\wh$ which have support in $\circl$ and define $L_\circrsm T_\bl$ similarly. Then there are injective homomorphisms of abelian groups
\begin{equation}
\label{eq:bicol-noncentext-isot}
L_\circlsm T_\wh \hookrightarrow L(T_\wh, H, T_\bl) \hookleftarrow L_\circrsm T_\bl,
\end{equation}
where the homomorphism from $L_\circlsm T_\wh$ is given by $\gamma_\wh \mapsto (\gamma_\wh, 0_H, 0_{T_\bl})$ and the one from $L_\circrsm T_\bl$ by $\gamma_\bl \mapsto (0_{T_\wh}, 0_H, \gamma_\bl)$. These triples indeed make the diagram~\eqref{eq:bicol-commdiag} commute, since $\gamma_\whbl(p) = \gamma_\whbl(q) = 0_{T_\whblsm}$.

\subsection{The special case when $\Gamma = \Lambda_\wh \cap \Lambda_\bl$}
Write $H^\cap$ for the torus $(\Lambda_\wh \cap \Lambda_\bl) \otimes_\ZZ \phasegp$ associated to the $\ZZ$-module $\Lambda_\wh \cap \Lambda_\bl$. The two obvious injections $\Lambda_\wh \cap \Lambda_\bl \hookrightarrow \Lambda_\whbl$, which we will denote by $\pi_\whbl^\cap$, allow us to define in particular a bicoloured torus loop group $L(T_\wh, H^\cap, T_\bl)$. By the short exact sequence~\eqref{eq:bicol-discontpaths-ses} it is isomorphic to the group of paths
\[
P\bigl(H^\cap, (\Lambda_\wh - \Lambda_\bl)/(\Lambda_\wh \cap \Lambda_\bl)\bigr).
\]

An equivalent way to formulate this special feature in this case is to observe that the homomorphism
\[
\bigl(\phasegp\pi_\wh^\cap, \phasegp\pi_\bl^\cap\bigr) \colon H^\cap \to T_\wh \oplus T_\bl
\]
is injective. This implies that for maps $\gamma_\wh\colon \circl \to T_\wh$ and $\gamma_\bl\colon \circr \to T_\bl$ the existence of a function $\gamma_\match\colon \circpq \to H^\cap$ making the diagram~\eqref{eq:bicol-commdiag} (with $H^\cap$ in place of $H$) commute determines $\gamma_\match$ uniquely. That is, $\gamma_\match$ is not genuine extra data. Its existence is merely a property of the pair $(\gamma_\wh, \gamma_\bl)$.\footnote{This fact was already used in \cref{subsec:bicol-unicolisspecialcase}, where we considered a special case of the situation we are in now.} The definition of $L(T_\wh, H^\cap, T_\bl)$ therefore simplifies to
\[
L(T_\wh, H^\cap, T_\bl) =
\begin{multlined}[t]
\Bigl\{(\gamma_\wh, \gamma_\bl) \in C^\infty(\circl, T_\wh) \times C^\infty(\circr, T_\bl) \Bigm\vert \\
\bigl(\gamma_\wh(p), \gamma_\bl(p)\bigr), \bigl(\gamma_\wh(q), \gamma_\bl(q)\bigr) \in \bigl(\phasegp\pi_\wh^\cap, \phasegp\pi_\bl^\cap\bigr)(H^\cap)\Bigr\}.
\end{multlined}
\]
(Here we omitted the conditions on the derivatives of $\gamma_\wh$ and $\gamma_\bl$ at $p$ and $q$ for brevity.)

If we return to the general case considered at the beginning of \cref{sec:bicol-struct} when $\Gamma$ is a possibly-properly included submodule of $\Lambda_\wh \cap \Lambda_\bl$, then its associated bicoloured torus loop group $L(T_\wh, H, T_\bl)$ is related to $L(T_\wh, H^\cap, T_\bl)$ via a short exact sequence of abelian groups
\[
\begin{tikzcd}
0 \ar[r] & \displaystyle\biggl(\frac{\Lambda_\wh \cap \Lambda_\bl}{\Gamma}\biggr)^{\oplus 2} \ar[r] & L(T_\wh, H, T_\bl) \ar[r] & L(T_\wh, H^\cap, T_\bl) \ar[r] & 0.
\end{tikzcd}
\]
The second arrow sends $([\nu_p], [\nu_q])$ to $(0, \gamma_\match, 0)$, where $\gamma_\match(p) := [\nu_p]$ and $\gamma_\match(q) := [\nu_q]$. The third arrow is given by $(\gamma_\wh, \gamma_\match, \gamma_\bl) \mapsto (\gamma_\wh, \gamma_\bl)$. This `forgetting' of $\gamma_\match$ can also be seen as postcomposing $\gamma_\match$ with the surjective homomorphism $H \twoheadrightarrow H^\cap$ induced by the inclusion $\Gamma \subseteq \Lambda_\wh \cap \Lambda_\bl$.

\subsection{Actions of covers of $\Diff_+(S^1)$}
\label{subsec:bicol-actionsdiffnS1}

Recall from \cref{sec:unicol-struct} that a unicoloured torus loop group has an obvious (left) action of $\Diff_+(S^1)$ on it. Similarly, $L(T_\wh, H, T_\bl)$ is naturally acted upon by the circle diffeomorphisms which fix the points $p$ and $q$ and preserve $\circl$ and $\circr$~. However, using~\eqref{eq:bicol-discontpaths-ses} we can prove something more. Denote for any integer $m \geq 1$ by $\Diff_+^{(m)}(S^1)$ the $m$-fold covering group of $\Diff_+(S^1)$. It fits in a short exact sequence
\[
\begin{tikzcd}
0 \ar[r] & \ZZ/m\ZZ \ar[r] & \Diff_+^{(m)}(S^1) \ar[r] & \Diff_+(S^1) \ar[r] & 1
\end{tikzcd}
\]
and can be modelled by the quotient group $\Diff_+^{(\infty)}(S^1)/m\ZZ$, where in turn we use the model~\eqref{eq:univcoverdiffs1model} for the universal covering group $\Diff_+^{(\infty)}(S^1)$. 

Remember the definition of the integer $n$ on \cpageref{dfn:bicol-integern}.

\begin{prop}
\label{thm:bicol-diffnS1action}
There is a (left) action of $\Diff_+^{(n)}(S^1)$ on $P(H, (\Lambda_\wh - \Lambda_\bl)/\Gamma)$ which lifts to one on $L(T_\wh, H, T_\bl)$ in a way that fixes the subgroup $(\Lambda_\wh \cap \Lambda_\bl)/\Gamma$ of $L(T_\wh, H, T_\bl)$.
\end{prop}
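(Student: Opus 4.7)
My plan is to emulate the construction of the $\Diff_+(S^1)$-action on unicoloured torus loop groups in \cref{sec:unicol-struct} and \cref{sec:unicol-diffs1action}. I will first build the action on $P(H, (\Lambda_\wh - \Lambda_\bl)/\Gamma)$ by pulling back quasi-periodic extensions along elements of $\Diff_+^{(\infty)}(S^1)$, and then lift it to $L(T_\wh, H, T_\bl)$ using the presentation of a bicoloured loop as a pair $(\hat\gamma, m_q)$ with $\hat\gamma := \Pth(\gamma) \in P(H, (\Lambda_\wh - \Lambda_\bl)/\Gamma)$ and $m_q := \gamma_\match(q) \in H$ subject to the compatibility $\hat\gamma(1) - m_q \in (\RR\pi_\wh)^{-1}(\Lambda_\wh)/\Gamma$ (equivalently $\hat\gamma(0) - m_q \in (\RR\pi_\bl)^{-1}(\Lambda_\bl)/\Gamma$).

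For a path $\hat\gamma \in P(H, (\Lambda_\wh - \Lambda_\bl)/\Gamma)$, I pick a smooth lift $\xi\colon [0,1] \to \liealg h$ and extend it quasi-periodically to $\Xi\colon \RR \to \liealg h$ with shift $\Delta := \xi(1) - \xi(0) \in \Lambda_\wh - \Lambda_\bl$. Using the model~\eqref{eq:univcoverdiffs1model} of $\Diff_+^{(\infty)}(S^1)$, I define $(\Phi^*\hat\gamma)(\theta) := \Xi(\Phi^{-1}(\theta)) \bmod \Gamma$ for $\theta \in [0,1]$. Since $\Phi^{-1}(\theta+1) = \Phi^{-1}(\theta)+1$, the resulting path is again smooth with endpoint discrepancy $\Delta \bmod \Gamma$, and independence from the choice of lift $\xi$ is immediate because any two lifts differ by a constant in $\Gamma$. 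To lift this to $L(T_\wh, H, T_\bl)$, I set $\Phi\cdot(\hat\gamma, m_q) := (\Phi^*\hat\gamma, m_q')$ with
\[
m_q' := \Xi\bigl(\Phi^{-1}(1)\bigr) - \lambda_\wh \bmod \Gamma,
\]
where $\lambda_\wh \in (\RR\pi_\wh)^{-1}(\Lambda_\wh)$ is any element satisfying $\xi(1) \equiv m_q + \lambda_\wh \pmod \Gamma$. Such a $\lambda_\wh$ exists by the compatibility condition on the original $m_q$ and is unique modulo $\Gamma$, so the formula is unambiguous mod $\Gamma$. A direct check then gives $(\Phi^*\hat\gamma)(1) - m_q' \equiv \lambda_\wh \pmod \Gamma \in (\RR\pi_\wh)^{-1}(\Lambda_\wh)/\Gamma$, so the new pair is again a valid bicoloured loop; the analogous identity on the black side follows from quasi-periodicity and the decomposition $\Delta = \lambda_\wh - \lambda_\bl$.

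The crux of the argument is the descent from $\Diff_+^{(\infty)}(S^1)$ to $\Diff_+^{(n)}(S^1)$. Replacing $\Phi$ by $\Phi + k$ for $k \in \ZZ$ shifts $\Phi^{-1}$ by $-k$, hence changes every value $\Xi(\Phi^{-1}(\theta))$ by $-k\Delta$, and the same for $\Xi(\Phi^{-1}(1))$; the shift of $m_q'$ is therefore also $-k\Delta$. Both ambiguities vanish simultaneously modulo $\Gamma$ for every bicoloured loop iff $k(\Lambda_\wh - \Lambda_\bl) \subseteq \Gamma$, and by the definition of $n$ the smallest positive such $k$ is exactly $n$. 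Hence the $\Diff_+^{(\infty)}(S^1)$-action descends precisely to $\Diff_+^{(n)}(S^1)$, both on $P(H, (\Lambda_\wh - \Lambda_\bl)/\Gamma)$ and on $L(T_\wh, H, T_\bl)$, and $\Pth$ is equivariant by construction. The group-action axioms and the fact that each $\Phi$ acts by group automorphisms in the loop variable are routine consequences of the analogous properties of pullback on quasi-periodic functions $\RR \to \liealg h$.

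It remains to check that the subgroup $(\Lambda_\wh \cap \Lambda_\bl)/\Gamma \subseteq L(T_\wh, H, T_\bl)$ is fixed pointwise. An element of this kernel corresponds to the pair $(\hat\gamma, m_q) = (0, m_q)$ with $m_q \in (\Lambda_\wh \cap \Lambda_\bl)/\Gamma$. Taking $\xi \equiv 0$ gives $\Xi \equiv 0$, and since $(\Lambda_\wh \cap \Lambda_\bl)/\Gamma \subseteq (\RR\pi_\wh)^{-1}(\Lambda_\wh)/\Gamma$, one may choose $\lambda_\wh \equiv -m_q \pmod \Gamma$, whence $m_q' = 0 - (-m_q) = m_q$ as required. (Alternatively, once continuity is established, this follows formally from the fact that a continuous action of the connected group $\Diff_+^{(n)}(S^1)$ on the finite group $(\Lambda_\wh \cap \Lambda_\bl)/\Gamma$ must be trivial.) The main obstacle in the proof is keeping the choices of $\xi$, $\Phi$ and $\lambda_\wh$ consistent so that the formula for $m_q'$ is genuinely well-defined on $\Diff_+^{(n)}(S^1)$; once this bookkeeping is under control, the remaining verifications are formal.
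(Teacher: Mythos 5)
Your proof is correct and follows essentially the same route as the paper: the action on $P(H, (\Lambda_\wh - \Lambda_\bl)/\Gamma)$ by pulling back quasi-periodic extensions, descent to $\Diff_+^{(n)}(S^1)$ via the definition of $n$, and a lift prescribed by adjusting $\gamma_\match(q)$ only — indeed your formula $m_q' = \Xi(\Phi^{-1}(1)) - \lambda_\wh \bmod \Gamma$ coincides with the paper's choice $(\Phi\cdot\gamma)_\match(q) = \gamma_\match(q) - (\Pth\gamma)(0) + (\Phi^*(\Pth\gamma))(0)$ by quasi-periodicity. One minor slip: the white and black endpoint conditions on the pair $(\hat\gamma, m_q)$ are \emph{not} equivalent in general (jointly they cut out a coset of $(\Lambda_\wh \cap \Lambda_\bl)/\Gamma$), but since you in fact verify both conditions for the translated pair, this parenthetical inaccuracy does not affect the argument.
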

\begin{proof}
We start by explaining how $\Diff_+^{(n)}(S^1)$ acts on $P(H, (\Lambda_\wh - \Lambda_\bl)/\Gamma)$. If $\Phi \in \Diff_+^{(\infty)}(S^1)$ and a path $\gamma$ lies in $P(H, (\Lambda_\wh - \Lambda_\bl)/\Gamma)$, then the path $\Phi^*\gamma\colon [0,1] \to H$ is again in $P(H, (\Lambda_\wh - \Lambda_\bl)/\Gamma)$. Here, $\Phi^*\gamma$ is defined by first extending $\gamma$ quasi-periodically to all of $\RR$, precomposing this extension (which we will also denote by $\gamma$) with $\Phi^{-1}$ next and finally restricting to $[0,1]$ again. Indeed, there holds
\begin{align*}
(\Phi^* \gamma)(1) - (\Phi^* \gamma)(0)
	&= \gamma\bigl(\Phi^{-1}(1)\bigr) - \gamma\bigl(\Phi^{-1}(0)\bigr) \\
	&= \gamma\bigl(\Phi^{-1}(0) + 1\bigr) - \gamma\bigl(\Phi^{-1}(0)\bigr) \\
	&= \gamma\bigl(\Phi^{-1}(0)\bigr) + \gamma(1) - \gamma(0) - \gamma\bigl(\Phi^{-1}(0)\bigr) \\
	&= \gamma(1) - \gamma(0) \in (\Lambda_\wh - \Lambda_\bl)/\Gamma.
\end{align*}
This defines a left action $\Phi \cdot \gamma := \Phi^* \gamma$ of $\Diff_+^{(\infty)}(S^1)$ on $P(H, (\Lambda_\wh - \Lambda_\bl)/\Gamma)$. The shift diffeomorphism $\theta \mapsto \theta + n$ then acts trivially, though. Therefore, this action of $\Diff_+^{(\infty)}(S^1)$ descends to $\Diff_+^{(\infty)}(S^1)/n\ZZ$, where $\ZZ$ is the central subgroup of $\Diff_+^{(\infty)}(S^1)$ generated by $\theta \mapsto \theta + 1$. This quotient group is a model for $\Diff_+^{(n)}(S^1)$.

Let us now construct the lift of this $\Diff_+^{(n)}(S^1)$-action to $L(T_\wh, H, T_\bl)$. Take $\Phi \in \Diff_+^{(\infty)}(S^1)$ and a bicoloured loop $\gamma = (\gamma_\wh, \gamma_\match, \gamma_\bl) \in L(T_\wh, H, T_\bl)$. We already just defined the translate $\Phi \cdot (\Pth\gamma)$ for the path $\Pth(\gamma)$, so in order to define $\Phi \cdot \gamma$ in the discrete pre-image under $\Pth$ over $\Phi \cdot (\Pth\gamma)$ all we need to do is prescribe $(\Phi \cdot \gamma)_\match(q)$. That is, we partially define $\Phi \cdot \gamma$ by the demand that $\Pth(\Phi \cdot \gamma) = \Phi \cdot (\Pth \gamma)$. We set
\[
(\Phi \cdot \gamma)_\match(q) := \gamma_\match(q) - (\Pth\gamma)(0) + \bigl(\Phi^* (\Pth\gamma)\bigr)(0).
\]
Since by definition of the homomorphism $\Pth$ we already know that
\begin{align*}
\gamma_\match(q) - (\Pth\gamma)(1) &\in (\RR\pi_\wh)^{-1}(\Lambda_\wh)/\Gamma, \\
\gamma_\match(q) - (\Pth\gamma)(0) &\in (\RR\pi_\bl)^{-1}(\Lambda_\bl)/\Gamma,
\end{align*}
it follows that again $(\Phi \cdot \gamma)_\match(q) - (\Phi^* (\Pth\gamma))(0) \in (\RR\pi_\bl)^{-1}(\Lambda_\bl)/\Gamma$ and that
\begin{align*}
(\Phi \cdot \gamma)_\match(q) - (\Phi^* (\Pth\gamma))(1)
	&= \gamma_\match(q) - (\Pth\gamma)(0) - (\Pth\gamma)(1) + (\Pth\gamma)(0) \\
	&= \gamma_\match(q) - (\Pth\gamma)(1) \in (\RR\pi_\wh)^{-1}(\Lambda_\wh)/\Gamma
\end{align*}
since
\[
\bigl(\Phi^* (\Pth\gamma)\bigr)(0) - \bigl(\Phi^* (\Pth\gamma)\bigr)(1) = - (\Pth\gamma)(1) + (\Pth\gamma)(0).
\]
This shows that $\Phi \cdot \gamma \in L(T_\wh, H, T_\bl)$.

Using the fact that $\Pth$ is a homomorphism and $\Phi$ acts as an automorphism on $P(H, (\Lambda_\wh - \Lambda_\bl)/\Gamma)$ it is easily checked that $\Phi$ also acts as an automorphism on $L(T_\wh, H, T_\bl)$. Furthermore, it can be shown that this is compatible with the composition of the elements of $\Diff_+^{(\infty)}(S^1)$. Therefore the $\Diff_+^{(\infty)}(S^1)$-action on $P(H, (\Lambda_\wh - \Lambda_\bl)/\Gamma)$ lifts to one on $L(T_\wh, H, T_\bl)$. It also descends to $\Diff_+^{(n)}(S^1)$ because $(n \cdot \gamma)_\match(q) = \gamma_\match(q)$. By definition of this $\Diff_+^{(n)}(S^1)$-action, the homomorphism $\Pth$ is $\Diff_+^{(n)}(S^1)$-equivariant.
\end{proof}

\begin{rmk}
The action of $\Diff_+^{(n)}(S^1)$ on $P(H, (\Lambda_\wh - \Lambda_\bl)/\Gamma)$ can alternatively be understood by seeing elements of the latter group as a certain type of maps to the torus $H$ from an $n$-fold cover of $S^1$.
\end{rmk}

\begin{rmk}[$\Diff_+^{(n)}(S^1)$-equivariance with respect to $LH$]
\label{rmk:bicol-diffnS1equivariance-wrtunicol}
Note that the isomorphism $\Bi\colon LH \xrightarrow{\sim} L(H, H, H)$ defined in \cref{subsec:bicol-unicolisspecialcase} is equivariant with respect to the $\Diff_+(S^1)$-actions on both groups.

Regarding the inclusion $\Bi\colon LH \hookrightarrow L(T_\wh, H, T_\bl)$ from \cref{subsec:bicol-inclLH}; breaking $S^1$ at the point $q$ allows us to embed $LH$ into $P(H, (\Lambda_\wh - \Lambda_\bl)/\Gamma)$ as well. We can then namely identify $LH$ with
\begin{equation}
\label{eq:bicol-LH-as-gpofpaths}
\Bigl\{\gamma \in C^\infty\bigl([0,1], H\bigr) \Bigm\vert \text{$\gamma(1) = \gamma(0)$, $\gamma^{(k)}(1) = \gamma^{(k)}(0)$ for all $k\geq 1$}\Bigr\}.
\end{equation}
This inclusion homomorphism $\iota$ is equivariant with respect to the standard $\Diff_+(S^1)$-action on $LH$ described in \cref{sec:unicol-struct} and the $\Diff_+^{(n)}(S^1)$-action on $P(H, (\Lambda_\wh - \Lambda_\bl)/\Gamma)$ constructed in the proof of \cref{thm:bicol-diffnS1action}. The following triangle commutes:
\begin{equation}
\label{eq:bicol-LH-commtriangle}
\begin{tikzcd}
L(T_\wh, H, T_\bl) \ar[rr, "\Pth"] & & P\bigl(H, (\Lambda_\wh - \Lambda_\bl)/\Gamma\bigr) \ar[dl, hookleftarrow, "\iota"] \\
	& LH \ar[ul, hookrightarrow, "\Bi"] &
\end{tikzcd}.
\end{equation}
Together with the equivariance of $\Pth$ this makes it clear that
\[
\Pth\bigl([\Phi] \cdot \Bi(\gamma)\bigr) = \Pth\Bigl(\Bi\bigl([\Phi]^* \gamma\bigr)\Bigr),
\]
where $\gamma \in LH$ and $\Phi \in \Diff_+^{(\infty)}(S^1)$. It is then easily checked that also
\[
\bigl([\Phi] \cdot \Bi(\gamma)\bigr)_\match(q) = \Bi\bigl([\Phi]^* \gamma\bigr)_\match(q)
\]
and therefore $[\Phi] \cdot \Bi(\gamma) = \Bi([\Phi]^* \gamma)$. In other words, $\Bi$ is equivariant with respect to the $\Diff_+(S^1)$-action on $LH$ and the $\Diff_+^{(n)}(S^1)$-action on $L(T_\wh, H, T_\bl)$.
\end{rmk}

\begin{rmk}
It might seem unnatural to consider the action of the full group $\Diff_+^{(n)}(S^1)$ on $L(T_\wh, H, T_\bl)$, instead of for example the smaller group of equivalence classes of elements of $\Diff_+^{(\infty)}(S^1)$ that preserve the intervals $[k, k+1/2]$ for all $k \in \ZZ$. The former group is namely not compatible with the notion of support of bicoloured loops in a way similar to \cref{rmk:unicol-diffS1-compatiblewithsupp} about the unicoloured situation. Instead, $\Diff_+^{(n)}(S^1)$ `mixes' the left and right halves of $S^1$ together.

The reason for our interest in $\Diff_+^{(n)}(S^1)$ is that it contains the subgroup $\Rot^{(n)}(S^1)$, and the action of the latter will allow us to speak about \defn{positive energy} representations of (central extensions of) $L(T_\wh, H, T_\bl)$.
\end{rmk}

\subsection{The connected components of $L(T_\wh, H, T_\bl)$}

Recall from~\eqref{eq:unicol-liealgpaths} how a choice of a privileged point on $S^1$ made it possible to alternatively describe the elements of a unicoloured torus loop group as Lie algebra valued paths. It allowed us to define the winding element of a unicoloured loop and, eventually, to understand the structure of the group. The definition of a bicoloured torus loop group already carries two privileged points $p$ and $q$. We will choose the point $q$ and carry out in this and the coming sections similar steps as in the unicoloured case.

The following \namecref{thm:bicol-liealglifts} explains that, given a triple $\gamma = (\gamma_\wh, \gamma_\match, \gamma_\bl)$ in $L(T_\wh, H, T_\bl)$, which are torus valued maps, we may lift them to Lie algebra valued maps which `glue' to a map that is continuous at the point $p$. This is essentially done by first applying $\Pth$ to $\gamma$, then taking a lift $[0,1] \to \liealg h$ of $\Pth(\gamma)$, writing this as a pair of maps $\circl \to \liealg t_\wh$ and $\circr \to \liealg t_\bl$ and finally putting the datum of a lift of $\gamma_\match$ back in. We will write the proof in a `bicoloured fashion', though, without referring to the homomorphism~$\Pth$.

We first observe that the group $P(H, (\Lambda_\wh - \Lambda_\bl)/\Gamma)$ has an alternative description in terms of paths in the Lie algebra $\liealg t$, namely as follows:
\begin{equation}
\begin{split}
P\bigl(H, (\Lambda_\wh - \Lambda_\bl)/\Gamma\bigr) &\cong \Bigl\{\hat\xi \in C^\infty\bigl([0,1], \liealg h\bigr) \Bigm\vert \hat\xi(1) - \hat\xi(0) \in \Lambda_\wh - \Lambda_\bl, \\
	&\phantom{{}\cong \Bigl\{\hat\xi \in C^\infty\bigl([0,1], \liealg h\bigr) \Bigm\vert {}}  \hat\xi^{(k)}(1) = \hat\xi^{(k)}(0) \text{ for all $k \geq 1$}\Bigr\}\Big/\Gamma.
\end{split}
\label{eq:bicolP-liealgpaths}
\end{equation}

\begin{lem}
\label{thm:bicol-liealglifts}
Let $\gamma = (\gamma_\wh, \gamma_\match, \gamma_\bl) \in L(T_\wh, H, T_\bl)$ be a bicoloured loop. Then
\begin{enumerate}
\item\label{thmitm:bicol-liealglifts-exists}
there exists a triple $\xi = (\xi_\wh, \xi_\match, \xi_\bl)$ of smooth maps
\[
\label{eq:bicoltor-univcover}
\xi_\wh\colon \circl \to \liealg t_\wh, \qquad \xi_\match\colon \circpq \to \liealg h \qquad \text{and} \qquad \xi_\bl\colon \circr \to \liealg t_\bl
\]
such that
\begin{enumerate}
\item\label{thmitm:bicol-liealglifts-exists-lift}
$\xi_\whbl$ is a lift of $\gamma_\whbl$ and $\xi_\match$ is a lift of $\gamma_\match$, that is, ${\exp_\whbl} \circ \xi_\whbl = \gamma_\whbl$ and ${\exp_H} \circ \xi_\match = \gamma_\match$, where $\exp_\whbl\colon \liealg t_\whbl \twoheadrightarrow T_\whbl$ and $\exp_H\colon \liealg h \twoheadrightarrow H$,
\item\label{thmitm:bicol-liealglifts-exists-matchp}
$(\RR\pi_\wh)^{-1}(\xi_\wh(p)) = \xi_\match(p) = (\RR\pi_\bl)^{-1}(\xi_\bl(p))$,
\end{enumerate}
\item\label{thmitm:bicol-liealglifts-ambig}
if $\eta = (\eta_\wh, \eta_\match, \eta_\bl)$ is another triple of maps satisfying~\ref{thmitm:bicol-liealglifts-exists-lift} and~\ref{thmitm:bicol-liealglifts-exists-matchp} above, then
\[
\eta = \xi + \bigl(\pi_\wh(\mu_p), (p,q) \to (\mu_p, \mu_q), \pi_\bl(\mu_p)\bigr)
\]
for some $\mu_p, \mu_q \in \Gamma$,
\item\label{thmitm:bicol-liealglifts-liftofP}
the map $\hat\xi\colon [0,1] \to \liealg h$ defined by
\[
\hat\xi(\theta) :=
\begin{cases}
(\RR\pi_\wh)^{-1}\bigl(\xi_\wh(\theta)\bigr) & \text{if $\theta \in [1/2,1]$,} \\
(\RR\pi_\bl)^{-1}\bigl(\xi_\bl(\theta)\bigr) & \text{if $\theta \in [0,1/2]$,}
\end{cases}
\]
where we used the unit speed parametrisations of $\circrdir$ and $\circldir$ by $[0,1/2]$ and $[1/2, 1]$ respectively again, is a lift of $\Pth(\gamma)$ as in~\eqref{eq:bicolP-liealgpaths}.
\end{enumerate}
\end{lem}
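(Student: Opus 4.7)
The strategy is to construct $\xi$ in three stages: first lift $\gamma_\match$, then use the resulting value at $p$ to prescribe the value of the $\whbl$-lifts at $p$, and finally lift $\gamma_\wh$ and $\gamma_\bl$ uniquely with this prescription. The existence of the last two lifts is straightforward because $\circl$ and $\circr$ are contractible, so the covering maps $\exp_\whbl$ admit smooth sections over them once a base point in the fibre is chosen. By construction condition \ref{thmitm:bicol-liealglifts-exists-matchp} is then immediate. This settles \ref{thmitm:bicol-liealglifts-exists}.

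For \ref{thmitm:bicol-liealglifts-ambig}, given another such triple $\eta$, the differences $\eta_\wh - \xi_\wh$, $\eta_\bl - \xi_\bl$ and $\eta_\match - \xi_\match$ are valued in the discrete kernels $\Lambda_\wh$, $\Lambda_\bl$ and $\Gamma$ respectively. By connectedness of $\circl$ and $\circr$ the first two are constants $\lambda_\wh$ and $\lambda_\bl$, and the third is a pair $(\mu_p, \mu_q) \in \Gamma^{\oplus 2}$. Subtracting the matching condition \ref{thmitm:bicol-liealglifts-exists-matchp} for $\xi$ from the same for $\eta$ at $p$ gives $(\RR\pi_\wh)^{-1}(\lambda_\wh) = \mu_p = (\RR\pi_\bl)^{-1}(\lambda_\bl)$, hence $\lambda_\whbl = \pi_\whbl(\mu_p)$ as required.

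For \ref{thmitm:bicol-liealglifts-liftofP}, four things need checking: that $\hat\xi$ lifts $\Pth(\gamma)$ along $\exp_H$, that it is smooth at $\theta = 1/2$, that its values at the endpoints differ by an element of $\Lambda_\wh - \Lambda_\bl$, and that all its higher derivatives at $0$ and $1$ coincide. The first point follows by chasing the commutative square: on $[1/2, 1]$ we have $\phasegp\pi_\wh \circ \exp_H \circ \hat\xi = \exp_\wh \circ \xi_\wh = \gamma_\wh$, and at $\theta = 1/2$ we evaluate to $\gamma_\match(p)$ by~\ref{thmitm:bicol-liealglifts-exists-matchp}, so uniqueness of the lift $\hat\gamma_\wh$ in \cref{subsec:bicol-discontunicolloops} pins down $\exp_H \hat\xi = \hat\gamma_\wh$ on the right half (and similarly on the left). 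Smoothness at $1/2$ reduces, after identifying tangent spaces with Lie algebras, to the derivative matching condition~\eqref{eq:bicol-derivatives-cond} at $p$. The winding element is
\[
\hat\xi(1) - \hat\xi(0) = (\RR\pi_\wh)^{-1}\bigl(\xi_\wh(q)\bigr) - (\RR\pi_\bl)^{-1}\bigl(\xi_\bl(q)\bigr);
\]
since both $\xi_\wh(q)$ and $\RR\pi_\wh(\xi_\match(q))$ lift $\gamma_\wh(q)$ they differ by an element of $\Lambda_\wh$, and analogously on the black side, so the two $\xi_\match(q)$-contributions cancel and what remains lies in $(\RR\pi_\wh)^{-1}(\Lambda_\wh) - (\RR\pi_\bl)^{-1}(\Lambda_\bl) = \Lambda_\wh - \Lambda_\bl$. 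Matching of the higher derivatives at $0$ and $1$ is again~\eqref{eq:bicol-derivatives-cond}, this time at $q$.

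\textbf{Main obstacle.} None of the individual steps is deep; the only subtlety is bookkeeping. The most delicate point is verifying the winding computation, where one must resist the temptation to think of $\xi_\match(q)$ as being related to $\xi_\wh(q)$ and $\xi_\bl(q)$ by the matching condition (which only holds at $p$, not at $q$) and instead exploit that the two sides are forced to lift the \emph{same} element $\gamma_\match(q)$ modulo the respective lattices, making the $\xi_\match(q)$-terms drop out automatically.
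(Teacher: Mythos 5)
Your proposal is correct and follows essentially the same route as the paper: part~\ref{thmitm:bicol-liealglifts-exists} is the same covering-space argument (you prescribe the value at $p$ before lifting, whereas the paper takes arbitrary lifts and then subtracts suitable $\lambda_\whbl \in \Lambda_\whbl$ to enforce~\ref{thmitm:bicol-liealglifts-exists-matchp}, a purely cosmetic difference), and part~\ref{thmitm:bicol-liealglifts-ambig} is argued identically via the discrete kernels and the matching condition at $p$. For part~\ref{thmitm:bicol-liealglifts-liftofP} the paper simply declares the claim obvious from the definition of $\Pth$, so your detailed verification—in particular the correct handling of the endpoint computation, where the $\xi_\match(q)$-terms cancel because both sides lift the same $\gamma_\match(q)$ rather than by any matching at $q$—is a sound elaboration of the same argument.
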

\begin{proof}
\ref{thmitm:bicol-liealglifts-exists}: Pick an arbitrary triple $(\xi_\wh, \xi_\match, \xi_\bl)$ which satisfies~\ref{thmitm:bicol-liealglifts-exists-lift}. Then the diagram
\[
\begin{tikzcd}
\circl \ar[d, "\xi_\wh"'] &[1em] \ar[l, hook'] \circpq \ar[d, "\xi_\match"] \ar[r, hook] &[1em] \circr \ar[d, "\xi_\bl"] \\
\liealg t_\wh & \ar[l, "\RR\pi_\wh", "\sim"'] \liealg h \ar[r, "\RR\pi_\bl"', "\sim"] & \liealg t_\bl
\end{tikzcd}
\]
does not necessarily commute. Instead, we have
\begin{gather*}
\bigl(\xi_\wh(p), \xi_\bl(p)\bigr) = (\lambda_\wh, \lambda_\bl) + (\RR\pi_\wh, \RR\pi_\bl)\bigl(\xi_\match(p)\bigr) \\
\bigl(\xi_\wh(q), \xi_\bl(q)\bigr) = (\lambda_\wh', \lambda_\bl') + (\RR\pi_\wh, \RR\pi_\bl)\bigl(\xi_\match(q)\bigr)
\end{gather*}
for some $\lambda_\whbl, \lambda_\whbl' \in \Lambda_\whbl$. It follows that the triple $(\xi_\wh - \lambda_\wh, \xi_\match, \xi_\bl - \lambda_\bl)$ now satisfies both~\ref{thmitm:bicol-liealglifts-exists-lift} and~\ref{thmitm:bicol-liealglifts-exists-matchp}.

\ref{thmitm:bicol-liealglifts-ambig}: Since $\eta$ also satisfies~\ref{thmitm:bicol-liealglifts-exists-lift}, we have
\[
\eta = \xi + \bigl(\lambda_\wh, (p,q) \to (\mu_p, \mu_q), \lambda_\bl\bigr)
\]
for some $\lambda_\whbl \in \Lambda_\whbl$ and $\mu_p, \mu_q \in \Gamma$. In particular, $\eta_\whbl(p) = \xi_\whbl(p) + \lambda_\whbl$. But using property~\ref{thmitm:bicol-liealglifts-exists-matchp} for both $\eta$ and $\xi$,
\[
(\RR\pi_\whbl)^{-1}\bigl(\eta_\whbl(p) - \xi_\whbl(p)\bigr) = \eta_\match(p) - \xi_\match(p) = \mu_p,
\]
so $\lambda_\whbl = \pi_\whbl(\mu_p)$.

\ref{thmitm:bicol-liealglifts-liftofP}: This claim is obvious given the definition of the homomorphism $\Pth$.
\end{proof}

Because we will use the Lie algebra valued lifts constructed in the previous \namecref{thm:bicol-liealglifts} often, we give them their own name:

\begin{dfn}
For a bicoloured loop $\gamma = (\gamma_\wh, \gamma_\match, \gamma_\bl) \in L(T_\wh, H, T_\bl)$ a triple of maps $\xi = (\xi_\wh, \xi_\match, \xi_\bl)$ as in \cref{thm:bicol-liealglifts}\ref{thmitm:bicol-liealglifts-exists} is called a \defn{glued lift} of $\gamma$.
\end{dfn}

We will denote the quotient of the direct sum $\Lambda_\wh \oplus \Lambda_\bl$ by the image of $\Gamma$ under the homomorphism $(\pi_\wh, \pi_\bl)$ as $(\Lambda_\wh \oplus \Lambda_\bl)/\Gamma$.

\begin{prop}
\label{thm:bicol-windeltshoms}
There are surjective homomorphisms
\begin{gather}
\Delta'\colon P\bigl(H, (\Lambda_\wh - \Lambda_\bl)/\Gamma\bigr) \twoheadrightarrow \Lambda_\wh - \Lambda_\bl, \nonumber \\
\Delta\colon L(T_\wh, H, T_\bl) \twoheadrightarrow \frac{\Lambda_\wh \oplus \Lambda_\bl}{\Gamma} \label{eq:bicol-windelt}
\end{gather}
of abelian groups which make the following diagram commute:
\begin{equation}
\label{eq:bicol-windelt-commdiag}
\begin{tikzcd}
0 \ar[r] & \displaystyle\frac{\Lambda_\wh \cap \Lambda_\bl}{\Gamma} \ar[r] \ar[d, "-1", pos=0.35] & L(T_\wh, H, T_\bl) \ar[r, "\Pth"] \ar[d, twoheadrightarrow, "\Delta"] &[2.3em] P\bigl(H, (\Lambda_\wh - \Lambda_\bl)/\Gamma\bigr) \ar[r] \ar[d, twoheadrightarrow, "\Delta'" pos=0.325] & 0 \\
0 \ar[r] & \displaystyle\frac{\Lambda_\wh \cap \Lambda_\bl}{\Gamma} \ar[r] & \displaystyle\frac{\Lambda_\wh \oplus \Lambda_\bl}{\Gamma} \ar[r, "(\RR\pi_\wh)^{-1} - (\RR\pi_\bl)^{-1}"'] & \Lambda_\wh - \Lambda_\bl \ar[r] & 0,
\end{tikzcd}
\end{equation}
where the top row is the short exact sequence~\eqref{eq:bicol-discontpaths-ses} and the bottom row is exact as well.
\end{prop}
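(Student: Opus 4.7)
The plan is to construct both $\Delta'$ and $\Delta$ directly from Lie algebra lifts provided by \cref{thm:bicol-liealglifts}, and then read off the commutativity of the diagram as a clean calculation using parts~\ref{thmitm:bicol-liealglifts-exists-matchp} and~\ref{thmitm:bicol-liealglifts-liftofP} of that lemma. First I would define $\Delta'$ via the description~\eqref{eq:bicolP-liealgpaths} of $P(H, (\Lambda_\wh - \Lambda_\bl)/\Gamma)$: for a class represented by $\hat\xi \in C^\infty([0,1], \liealg h)$, set $\Delta'[\hat\xi] := \hat\xi(1) - \hat\xi(0)$. Independence of the representative is immediate since two lifts differ by a constant in $\Gamma$, which cancels in the difference; additivity is obvious; and surjectivity is witnessed by the straight-line paths $\theta \mapsto \theta\nu$ for $\nu \in \Lambda_\wh - \Lambda_\bl$.

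Next I would define $\Delta$ using a glued lift $\xi = (\xi_\wh, \xi_\match, \xi_\bl)$ of $\gamma \in L(T_\wh, H, T_\bl)$, by setting
$$\Delta(\gamma) := \bigl[\bigl(\xi_\wh(q) - \pi_\wh(\xi_\match(q)),\ \xi_\bl(q) - \pi_\bl(\xi_\match(q))\bigr)\bigr] \in (\Lambda_\wh \oplus \Lambda_\bl)/\Gamma.$$
Each component lies in $\Lambda_\whbl$ because $\xi_\whbl(q)$ and $\pi_\whbl(\xi_\match(q))$ both lift $\gamma_\whbl(q) \in T_\whbl$ along $\exp_\whbl$. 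Well-definedness modulo $\Gamma$ is a direct consequence of \cref{thm:bicol-liealglifts}\ref{thmitm:bicol-liealglifts-ambig}: changing the glued lift by $(\pi_\wh(\mu_p), (p,q) \mapsto (\mu_p, \mu_q), \pi_\bl(\mu_p))$ shifts the above pair by $(\pi_\wh(\mu_p - \mu_q), \pi_\bl(\mu_p - \mu_q))$, which lies in the image of $(\pi_\wh, \pi_\bl)(\Gamma)$. Additivity follows from the fact that the sum of two glued lifts is a glued lift of the sum, and surjectivity is easy: for any $(\lambda_\wh, \lambda_\bl) \in \Lambda_\wh \oplus \Lambda_\bl$, one constructs a bicoloured loop by taking explicit straight-line glued lifts that vanish at $p$ and take value $\lambda_\whbl$ at $q$ on each half.

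For the commutativity of the diagram, the right square reduces to the identity
$$\bigl((\RR\pi_\wh)^{-1} - (\RR\pi_\bl)^{-1}\bigr)\bigl(\xi_\wh(q) - \pi_\wh(\xi_\match(q)),\ \xi_\bl(q) - \pi_\bl(\xi_\match(q))\bigr) = (\RR\pi_\wh)^{-1}\xi_\wh(q) - (\RR\pi_\bl)^{-1}\xi_\bl(q),$$
where the two terms involving $\xi_\match(q)$ cancel, and by \cref{thm:bicol-liealglifts}\ref{thmitm:bicol-liealglifts-liftofP} the right-hand side equals $\hat\xi(1) - \hat\xi(0) = \Delta'(\Pth\gamma)$. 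For the left square I would evaluate $\Delta$ on a generator $(0, \gamma_\match, 0)$ with $\gamma_\match(p) = 0_H$ and $\gamma_\match(q) = [\nu]$ for $\nu \in \Lambda_\wh \cap \Lambda_\bl$; the obvious glued lift $\xi_\whbl \equiv 0$, $\xi_\match(p) = 0$, $\xi_\match(q) = \nu$ yields $\Delta([\nu]) = [(-\pi_\wh(\nu), -\pi_\bl(\nu))]$, which coincides with the image of $-\nu$ under the bottom inclusion, accounting for the ``$-1$'' on the left vertical map.

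Finally, exactness of the bottom row is essentially by construction: injectivity of the first map uses the injectivity of $\pi_\wh$; exactness in the middle is the definition of $\Lambda_\wh \cap \Lambda_\bl$ as the intersection inside $\liealg h$ (if $(\RR\pi_\wh)^{-1}\lambda_\wh = (\RR\pi_\bl)^{-1}\lambda_\bl =: \nu$ then $\nu \in \Lambda_\wh \cap \Lambda_\bl$ and $(\lambda_\wh, \lambda_\bl) = (\pi_\wh(\nu), \pi_\bl(\nu))$); and surjectivity of the second map is the definition of $\Lambda_\wh - \Lambda_\bl$. I expect no serious obstacle — the only delicate point is choosing the signs in the definition of $\Delta$ so that the ``$-1$'' on the left vertical map comes out correctly, which is settled by the explicit formula above.
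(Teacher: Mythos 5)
Your proposal is correct and follows essentially the same route as the paper: $\Delta'$ via endpoint differences of Lie-algebra lifts, $\Delta$ via a glued lift with $\lambda_\whbl = \xi_\whbl(q) - \RR\pi_\whbl(\xi_\match(q))$ (which is exactly the paper's implicit definition), well-definedness from \cref{thm:bicol-liealglifts}\ref{thmitm:bicol-liealglifts-ambig}, the right square from \cref{thm:bicol-liealglifts}\ref{thmitm:bicol-liealglifts-liftofP}, the left square by the same explicit glued lift of $(0,\gamma_\match,0)$, and the same identification of the kernel for exactness of the bottom row. The only quibble is notational: where $\xi_\match(q)$ or $\nu$ need not lie in $\Gamma$ you should write $\RR\pi_\whbl$ rather than $\pi_\whbl$.
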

\begin{proof}
The homomorphism $\Delta'$ is defined in exactly the same way as in the unicoloured case, namely by picking for a path $\gamma \in P(H, (\Lambda_\wh - \Lambda_\bl)/\Gamma)$ a lift $\hat\xi\colon [0,1] \to \liealg h$ as in~\eqref{eq:bicolP-liealgpaths}. The element $\Delta_\gamma' := \hat\xi(1) - \hat\xi(0) \in \Lambda_\wh - \Lambda_\bl$ is then independent of the choice of $\hat\xi$ and this clearly defines a homomorphism. It is surjective because $\lambda_\wh - \lambda_\bl \in \Lambda_\wh - \Lambda_\bl$ has as pre-image for example the element of $P(H, (\Lambda_\wh - \Lambda_\bl)/\Gamma)$ defined as the projection on $H$ of the Lie algebra-valued path $[0,1] \to \liealg h$, $\theta \mapsto \theta(\lambda_\wh - \lambda_\bl)$.

To define the homomorphism $\Delta$, let $\gamma = (\gamma_\wh, \gamma_\match, \gamma_\bl) \in L(T_\wh, H, T_\bl)$ be a bicoloured loop and pick a glued lift $\xi = (\xi_\wh, \xi_\match, \xi_\bl)$ of it. As explained in the proof of \cref{thm:bicol-liealglifts}\ref{thmitm:bicol-liealglifts-exists}, there are then elements $\lambda_\whbl\in \Lambda_\whbl$ such that
\[
\bigl(\xi_\wh(q), \xi_\bl(q)\bigr) = (\lambda_\wh, \lambda_\bl) + (\RR\pi_\wh, \RR\pi_\bl)\bigl(\xi_\match(q)\bigr).
\]
If $\eta = (\eta_\wh, \eta_\match, \eta_\bl)$ is another such lift, then by \cref{thm:bicol-liealglifts}\ref{thmitm:bicol-liealglifts-ambig},
\[
\bigl(\eta_\wh(q), \eta_\bl(q)\bigr) = (\lambda_\wh, \lambda_\bl) + \bigl(\pi_\wh(\mu), \pi_\bl(\mu)\bigr) + (\RR\pi_\wh, \RR\pi_\bl)\bigl(\eta_\match(q)\bigr)
\]
for some $\mu \in \Gamma$. Therefore, setting $\Delta_\gamma$ to be the equivalence class $[\lambda_\wh, \lambda_\bl]$ of $(\lambda_\wh, \lambda_\bl)$ gives a well-defined map~\eqref{eq:bicol-windelt}. We show that $\Delta$ is surjective. Let $(\lambda_\wh, \lambda_\bl)$ be an element in $\Lambda_\wh \oplus \Lambda_\bl$. Choose $\xi_\wh\colon \circl \to \liealg t_\wh$ to be any map satisfying $\xi_\wh(p) = 0_{\liealg t_\wh}$ and $\xi_\wh(q) = \lambda_\wh$, pick $\xi_\bl\colon \circr \to \liealg t_\bl$ to be any map with $\xi_\bl(p) = 0_{\liealg t_\bl}$ and $\xi_\bl(q) = \lambda_\bl$ and define $\xi_\match\colon \circpq \to \liealg h$ as $\xi_\match(p) = \xi_\match(q) = 0_{\liealg h}$. Then the triple $(\xi_\wh, \xi_\match, \xi_\bl)$ is a glued lift of a bicoloured loop in $L(T_\wh, H, T_\bl)$ and one has
\[
\bigl(\xi_\wh(q), \xi_\bl(q)\bigr) = (\lambda_\wh, \lambda_\bl) + (\RR\pi_\wh, \RR\pi_\bl)(0),
\]
which shows what we wanted.

The second map in the bottom row of~\eqref{eq:bicol-windelt-commdiag} is given by the obvious inclusion if we identify $(\Lambda_\wh \cap \Lambda_\bl)/\Gamma$ with the quotient group
\begin{equation}
\label{eq:bicol-intersec-altdescr}
\frac{\Bigl\{\bigl(\RR\pi_\wh(\nu), \RR\pi_\bl(\nu)\bigr) \Bigm\vert \nu \in \Lambda_\wh \cap \Lambda_\bl\Bigr\}}{\Bigl\{\bigl(\pi_\wh(\mu), \pi_\bl(\mu)\bigr) \Bigm\vert \mu \in \Gamma\Bigr\}},
\end{equation}
and the third map stands for
\[
[\lambda_\wh, \lambda_\bl] \mapsto (\RR\pi_\wh)^{-1}(\lambda_\wh) - (\RR\pi_\bl)^{-1}(\lambda_\bl),
\]
where $[\lambda_\wh, \lambda_\bl]$ is the equivalence class of $(\lambda_\wh, \lambda_\bl)$ in $(\Lambda_\wh \oplus \Lambda_\bl)/\Gamma$. This third map indeed has~\eqref{eq:bicol-intersec-altdescr} as its kernel.

To explain the commutativity of the first square in~\eqref{eq:bicol-windelt-commdiag}, let $[\nu] \in (\Lambda_\wh \cap \Lambda_\bl)/\Gamma$. Then its image in $(\Lambda_\wh \oplus \Lambda_\bl)/\Gamma$ when travelling along the lower left corner of the square is the equivalence class $-[(\RR\pi_\wh, \RR\pi_\bl)(\nu)]$. On the other hand, if we travel along the upper right corner, recall that the image of $[\nu]$ in $L(T_\wh, H, T_\bl)$ is the bicoloured loop $\gamma = (\gamma_\wh, \gamma_\match, \gamma_\bl)$, where the $\gamma_\whbl$ are identically $0_{T_\whblsm}$, $\gamma_\match(p) = 0_H$ and $\gamma_\match(q) = [\nu]$. A possible glued lift $\xi = (\xi_\wh, \xi_\match, \xi_\bl)$ of $\gamma$ has the maps $\xi_\whbl$ identically $0_{\liealg t_\whblsm}$, $\xi_\match(p) = 0_{\liealg h}$ and $\xi_\match(q) = \nu$. This satisfies
\[
\bigl(\xi_\wh(q), \xi_\bl(q)\bigr) = (0,0) = -(\RR\pi_\wh, \RR\pi_\bl)(\nu) + (\RR\pi_\wh, \RR\pi_\bl)(\nu),
\]
as desired, given the way we defined $\Delta$.

The commutativity of the second square in~\eqref{eq:bicol-windelt-commdiag} follows from the claim of \cref{thm:bicol-liealglifts}\ref{thmitm:bicol-liealglifts-liftofP}.
\end{proof}

The homomorphism $\Delta$ serves the same role in our bicoloured situation as the winding element homomorphism $LT \twoheadrightarrow \Lambda$ does in the unicoloured case in the sense that its fibres are exactly the connected components of the group. The exactness of the bottom row in~\eqref{eq:bicol-windelt-commdiag} implies that the group of connected components $(\Lambda_\wh \oplus \Lambda_\bl)/\Gamma$ of $L(T_\wh, H, T_\bl)$ is torsion-free if and only if $\Gamma = \Lambda_\wh \cap \Lambda_\bl$.

\begin{rmk}
It follows from \cref{thm:bicol-liealglifts}\ref{thmitm:bicol-liealglifts-liftofP} that for a bicoloured loop $\gamma \in L(T_\wh, H, T_\bl)$ we have
\[
\Delta_{\Pth\gamma}' = (\RR\pi_\wh)^{-1}\bigl(\xi_\wh(q)\bigr) - (\RR\pi_\bl)^{-1}\bigl(\xi_\bl(q)\bigr)
\]
for any glued lift $(\xi_\wh, \xi_\match, \xi_\bl)$ of $\gamma$.
\end{rmk}

\subsection{The structure of $L(T_\wh, H, T_\bl)$}
\label{subsec:bicol-struct}
Since we learned in \cref{subsec:bicol-inclLH} that $L(T_\wh, H, T_\bl)$ contains a canonical copy of $LH$ via an injective `bicolouring' homomorphism $\Bi$, there is in particular an inclusion of the identity component $(LH)_0$ of the latter group.

\begin{prop}
The subgroup $\Bi((LH)_0)$ of $L(T_\wh, H, T_\bl)$ is the kernel of the homomorphism $\Delta$, and it therefore also equals the identity component of $L(T_\wh, H, T_\bl)$.
\end{prop}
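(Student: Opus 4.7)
The plan is to establish the equality $\Bi((LH)_0) = \ker \Delta$ first, and then deduce the claim about the identity component formally.

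For the inclusion $\Bi((LH)_0) \subseteq \ker \Delta$, I would take $\gamma \in (LH)_0$, lift it to a smooth map $\xi \colon S^1 \to \liealg h$ (which exists since $\gamma$ has winding element zero in $\Gamma$), and use $\xi$ to exhibit an explicit glued lift of $\Bi(\gamma)$: set $\xi_\wh := \RR\pi_\wh \circ \xi|_\circlsm$, $\xi_\bl := \RR\pi_\bl \circ \xi|_\circrsm$, and $\xi_\match := \xi|_{\circpq}$. Evaluating at $q$ shows that $(\xi_\wh(q), \xi_\bl(q)) = (\RR\pi_\wh, \RR\pi_\bl)(\xi_\match(q))$, so by the definition of $\Delta$ in the proof of \cref{thm:bicol-windeltshoms}, $\Delta_{\Bi(\gamma)} = [0,0]$.

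For the reverse inclusion, I would start with a bicoloured loop $\gamma = (\gamma_\wh, \gamma_\match, \gamma_\bl) \in \ker \Delta$ and pick a glued lift $\xi = (\xi_\wh, \xi_\match, \xi_\bl)$. By definition of $\Delta$ there is $\mu \in \Gamma$ with $(\xi_\wh(q), \xi_\bl(q)) = (\pi_\wh(\mu), \pi_\bl(\mu)) + (\RR\pi_\wh, \RR\pi_\bl)(\xi_\match(q))$. Using the ambiguity statement \cref{thm:bicol-liealglifts}\ref{thmitm:bicol-liealglifts-ambig}, I would replace $\xi$ by the glued lift $\xi - (\pi_\wh(\mu), (p,q) \mapsto (0, \mu), \pi_\bl(\mu))$ (so only shifting $\xi_\match(q)$ and the two interval pieces, not $\xi_\match(p)$), arriving at a glued lift which additionally satisfies $(\RR\pi_\wh)^{-1}(\xi_\wh(q)) = \xi_\match(q) = (\RR\pi_\bl)^{-1}(\xi_\bl(q))$. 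Now both gluing conditions at $p$ and $q$ hold, so the map
\[
\xi(\theta) :=
\begin{cases}
(\RR\pi_\wh)^{-1}\bigl(\xi_\wh(\theta)\bigr) & \theta \in \circl, \\
(\RR\pi_\bl)^{-1}\bigl(\xi_\bl(\theta)\bigr) & \theta \in \circr
\end{cases}
\]
is well-defined and continuous at $p$ and $q$, and the matching-derivative conditions~\eqref{eq:bicol-derivatives-cond} imply smoothness there. Hence $\xi \colon S^1 \to \liealg h$ is smooth, and $\gamma' := {\exp_H} \circ \xi$ lies in $(LH)_0$ since it has a continuous Lie-algebra lift. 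By construction $\Bi(\gamma') = \gamma$.

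For the second assertion, I would argue formally: $\Delta$ is a homomorphism to a discrete group (namely $(\Lambda_\wh \oplus \Lambda_\bl)/\Gamma$), so $\ker \Delta$ is open and closed; on the other hand, $\Bi$ is continuous and $(LH)_0$ is connected, so its image is connected, and an open connected subgroup equals the identity component. The only subtle point is that the topology on $L(T_\wh, H, T_\bl)$ has not yet been formally introduced in this chapter, so I would note that the statement is to be understood in the natural topology to be put on $L(T_\wh, H, T_\bl)$ in \cref{sec:bicol-reptheory}, under which $\Bi$ is continuous and $\Delta$ has discrete codomain; the main content of the \namecref{thm:bicol-windeltshoms} is really the algebraic identification $\Bi((LH)_0) = \ker \Delta$, with the rest being a formal consequence. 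The only mildly nontrivial step is the modification of the glued lift so that both gluing conditions hold simultaneously, which is exactly what \cref{thm:bicol-liealglifts}\ref{thmitm:bicol-liealglifts-ambig} is designed to enable.
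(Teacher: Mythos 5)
Your first inclusion and your remarks about the topology are fine, but the step you yourself identify as the crux — modifying the glued lift — is wrong as written. The shift $\xi \mapsto \xi - \bigl(\pi_\wh(\mu), (p,q) \mapsto (0,\mu), \pi_\bl(\mu)\bigr)$ is \emph{not} of the form permitted by \cref{thm:bicol-liealglifts}\ref{thmitm:bicol-liealglifts-ambig}: that lemma ties the constant shifts of $\xi_\wh$ and $\xi_\bl$ to the shift of $\xi_\match(p)$ (both governed by the same $\mu_p$), whereas you shift the interval pieces by $\pi_\whbl(\mu)$ while leaving $\xi_\match(p)$ fixed. Consequently the resulting triple is not a glued lift, and a direct computation shows that neither of the matching conditions you claim actually holds: at $p$ one gets $(\RR\pi_\wh)^{-1}(\xi_\wh(p)) - \mu = \xi_\match(p) - \mu \neq \xi_\match(p)$, and at $q$, using the kernel condition $(\RR\pi_\wh)^{-1}(\xi_\wh(q)) = \xi_\match(q) + \mu$, the shifted white value is $\xi_\match(q)$ while the shifted matching datum is $\xi_\match(q) - \mu$; both discrepancies are exactly $\mu$. (Your final conclusion $\Bi(\gamma') = \gamma$ happens to survive only because all these discrepancies lie in $\Gamma$, resp.\ $\Lambda_\whbl$, and hence die under $\exp_H$, resp.\ $\exp_\whbl$ — but that is not the argument you gave.)

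There are two clean repairs. The minimal one within your strategy: apply \cref{thm:bicol-liealglifts}\ref{thmitm:bicol-liealglifts-ambig} with $\mu_p = 0$ and $\mu_q = \mu$, i.e.\ replace only $\xi_\match(q)$ by $\xi_\match(q) + \mu$ and leave $\xi_\wh$, $\xi_\bl$, $\xi_\match(p)$ untouched; then the gluing condition holds exactly at both $p$ and $q$ and the rest of your argument goes through. The paper's route is even shorter: no modification is needed at all, since the kernel condition already gives $(\RR\pi_\wh)^{-1}(\xi_\wh(q)) = (\RR\pi_\bl)^{-1}(\xi_\bl(q))$, so the two maps $(\RR\pi_\whbl)^{-1} \circ \xi_\whbl$ glue (using the matching at $p$ and the derivative conditions) to a smooth map $S^1 \to \liealg h$ whose value at $q$ differs from $\xi_\match(q)$ only by $\mu \in \Gamma$; projecting by $\exp_H$ and invoking the characterisation of the image of $\Bi$ from \cref{subsec:bicol-inclLH} then shows $\gamma \in \Bi((LH)_0)$. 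Note also that the paper leaves your easy inclusion $\Bi((LH)_0) \subseteq \ker\Delta$ implicit and treats the identity-component claim informally, components having been declared (as in the unicoloured case) to be the fibres of $\Delta$.
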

\begin{proof}
Recall first that $(LH)_0$ can be canonically identified with the group of all smooth maps $S^1 \to \liealg h$, modulo $\Gamma$. Now suppose that $\gamma$ is a bicoloured loop in the kernel of $\Delta$. According to the definition of $\Delta$ in the proof of \cref{thm:bicol-windeltshoms} this means that if $\xi = (\xi_\wh, \xi_\match, \xi_\bl)$ is a glued lift of $\gamma$, then there exists a $\mu \in \Gamma$ such that
\[
\bigl(\xi_\wh(q), \xi_\bl(q)\bigr) = \bigl(\pi_\wh(\mu), \pi_\bl(\mu)\bigr) + (\RR\pi_\wh, \RR\pi_\bl)\bigl(\xi_\match(q)\bigr).
\]
Therefore,
\[
(\RR\pi_\wh)^{-1}\bigl(\xi_\wh(q)\bigr) = (\RR\pi_\bl)^{-1}\bigl(\xi_\bl(q)\bigr),
\]
which together with \cref{thm:bicol-liealglifts}\ref{thmitm:bicol-liealglifts-exists-matchp} implies that the two maps $(\RR\pi_\whbl)^{-1} \circ \xi_\whbl$ glue together to a smooth map $S^1 \to \liealg h$. Its value at $q$ differs from that of $\xi_\match$ by $\mu$. By the characterisation of the image of $\Bi$ mentioned at the end of \cref{subsec:bicol-inclLH} we see that $\gamma$ lies in $\Bi((LH)_0)$.
\end{proof}

We are now able to understand the structure of $L(T_\wh, H, T_\bl)$ by showing that the short exact sequence
\begin{equation}
\label{eq:bicol-sesconncomponents}
\begin{tikzcd}
0 \ar[r] & (LH)_0 \ar[r, "\Bi"] &[0.8em] L(T_\wh, H, T_\bl) \ar[r, "\Delta"] &[0.5em] \displaystyle\frac{\Lambda_\wh \oplus \Lambda_\bl}{\Gamma} \ar[r] & 0.
\end{tikzcd}
\end{equation}
admits a splitting, which we construct as follows. Let $(\lambda_\wh, \lambda_\bl) \in \Lambda_\wh \oplus \Lambda_\bl$. Define the $\liealg h$-valued straight line segment
\[
\hat\xi\colon [0,1] \to \liealg h, \qquad \theta \mapsto (\RR\pi_\bl)^{-1}(\lambda_\bl) + \theta\bigl((\RR\pi_\wh)^{-1}(\lambda_\wh) - (\RR\pi_\bl)^{-1}(\lambda_\bl)\bigr)
\]
from $(\RR\pi_\bl)^{-1}(\lambda_\bl)$ to $(\RR\pi_\wh)^{-1}(\lambda_\wh)$, and use it to define three Lie algebra valued maps $\xi_\wh$, $\xi_\match$ and $\xi_\bl$ in turn by restricting to the relevant parts of the interval $[0,1]$:
\begin{align*}
\xi_\wh &:= \RR\pi_\wh \circ \hat\xi|_{[1/2, 1]} \colon \circl \to \liealg t_\wh \\
\xi_\match(p) &:= \hat\xi(1/2) = \frac12\bigl((\RR\pi_\wh)^{-1}(\lambda_\wh) + (\RR\pi_\bl)^{-1}(\lambda_\bl)\bigr) \in \liealg h \\
\xi_\match(q) &:= 0_{\liealg h} \in \liealg h \\
\xi_\bl &:= \RR\pi_\bl \circ \hat\xi|_{[0, 1/2]} \colon \circr \to \liealg t_\bl.
\end{align*}
Because a different choice of representative from the equivalence class $[\lambda_\wh, \lambda_\bl]$ in $(\Lambda_\wh \oplus \Lambda_\bl)/\Gamma$ of $(\lambda_\wh, \lambda_\bl)$ would merely translate $\hat\xi$ by an element of $\Gamma \subseteq \liealg h$, the following torus valued maps $\gamma_\wh$, $\gamma_\match$ and $\gamma_\bl$ only depend on $[\lambda_\wh, \lambda_\bl]$:
\begin{align*}
\gamma_\wh &:= {\exp_\wh} \circ \xi_\wh \colon \circl \to T_\wh \\
\gamma_\match(p) &:= \exp_H\bigl(\xi_\match(p)\bigr) \in H \\
\gamma_\match(q) &:= \exp_H\bigl(\xi_\match(q)\bigr) = 0_H \in H \\
\gamma_\bl &:= {\exp_\bl} \circ \xi_\bl \colon \circr \to T_\bl.
\end{align*}
This triple $\gamma_{[\lambda_\wh, \lambda_\bl]} := (\gamma_\wh, \gamma_\match, \gamma_\bl)$ forms a bicoloured loop. The triple $(\xi_\wh, \xi_\match, \xi_\bl)$ is then a glued lift of $\gamma_{[\lambda_\wh, \lambda_\bl]}$ and
\[
\bigl(\xi_\wh(q), \xi_\bl(q)\bigr) = \Bigl(\RR\pi_\wh\bigl(\hat\xi(1)\bigr), \RR\pi_\bl\bigl(\hat\xi(0)\bigr)\Bigr) = (\lambda_\wh, \lambda_\bl) + (\RR\pi_\wh, \RR\pi_\bl)\bigl(\xi_\match(q)\bigr)
\]
shows that $\Delta_{\gamma_{[\lambda_\wh, \lambda_\bl]}} = [\lambda_\wh, \lambda_\bl]$. In other words, the homomorphism
\begin{equation}
\label{eq:bicol-splitting-conncomps}
\frac{\Lambda_\wh \oplus \Lambda_\bl}{\Gamma} \to L(T_\wh, H, T_\bl), \qquad [\lambda_\wh, \lambda_\bl] \mapsto \gamma_{[\lambda_\wh, \lambda_\bl]}
\end{equation}
that we just constructed is a section of $\Delta$. This in particular implies that~\eqref{eq:bicol-splitting-conncomps} is injective. The splitting~\eqref{eq:bicol-splitting-conncomps} gives an isomorphism
\[
L(T_\wh, H, T_\bl) \xrightarrow{\sim} (LH)_0 \oplus \biggl(\frac{\Lambda_\wh \oplus \Lambda_\bl}{\Gamma}\biggr)
\]
sending a bicoloured loop $\gamma = (\gamma_\wh, \gamma_\match, \gamma_\bl)$ to $(\gamma - \gamma_{\Delta_\gamma}, \Delta_\gamma)$. Its inverse takes a pair $(\gamma, [\lambda_\wh, \lambda_\bl])$ to the bicoloured loop $\gamma + \gamma_{[\lambda_\wh, \lambda_\bl]}$. Recalling the isomorphism~\eqref{eq:unicol-decompidcomp} we conclude with a decomposition
\[
L(T_\wh, H, T_\bl) \cong H \oplus V\liealg h \oplus \biggl(\frac{\Lambda_\wh \oplus \Lambda_\bl}{\Gamma}\biggr),
\]
which explains the structure of $L(T_\wh, H, T_\bl)$.

\begin{rmk}
In the proof of \cref{thm:bicol-windeltshoms} it was shown that the homomorphism $\Delta$ is surjective by sketching how to find some bicoloured loop in the pre-image of every element $[\lambda_\wh, \lambda_\bl]$ of $(\Lambda_\wh \oplus \Lambda_\bl)/\Gamma$. One could ask whether that construction could be made more precise so as to obtain a simpler splitting of $\Delta$ than the more complicated one we built in the current section. However, the former construction \emph{does} depend on the choice of representative $(\lambda_\wh, \lambda_\bl)$ from $[\lambda_\wh, \lambda_\bl]$, so this plan seems unreasonable.
\end{rmk}

\section{Central extensions associated to spans of lattices}
\label{sec:bicol-centext}

The bicoloured torus loop group $L(T_\wh, H, T_\bl)$ and the group $P(H, (\Lambda_\wh - \Lambda_\bl)/\Gamma)$ defined in \cref{sec:bicol-struct} can in particular be defined if the $\ZZ$-modules $\Lambda_\whbl$ and $\Gamma$ come with the extra structure of bi-additive forms that make them even lattices. In this section we give, using these lattice structures, constructions of $\phasegp$-central extensions $\centL(T_\wh, H, T_\bl)$ and $\centP(H, (\Lambda_\wh - \Lambda_\bl)/\Gamma)$ of $L(T_\wh, H, T_\bl)$ and $P(H, (\Lambda_\wh - \Lambda_\bl)/\Gamma)$ that are analogous to the central extensions of unicoloured torus loop groups described in \cref{sec:unicol-centext}. As we will see in \cref{subsec:bicol-unicolcentext-isspecialcase}, they will in fact reduce exactly to those unicoloured extensions in the special case that $\Lambda_\wh = \Lambda_\bl = \Gamma$ and the homomorphisms $\pi_\whbl\colon \Gamma \hookrightarrow \Lambda_\whbl$ are the identity. We will construct $\centP(H, (\Lambda_\wh - \Lambda_\bl)/\Gamma)$ through an explicit $2$-cocycle. The formula for it will be nearly identical to the one for unicoloured torus loop groups.  The central extension of $L(T_\wh, H, T_\bl)$ will then simply be defined as the pullback of $\centP(H, (\Lambda_\wh - \Lambda_\bl)/\Gamma)$ along the homomorphism $\Pth$. That is, it will be trivial on the kernel of $\Pth$.

So let $\Lambda_\whbl$ and $\Gamma$ be as in \cref{sec:bicol-struct} but assume additionally that they are even lattices, denoting their forms by $\langle \cdot, \cdot \rangle_\whbl$ and $\langle \cdot, \cdot \rangle_\Gamma$ respectively, and that the $\pi_\whbl$ are lattice morphisms. We will use the same notation for the forms that are extended bilinearly to the Lie algebras $\liealg t_\whbl$ and $\liealg h$.

\begin{rmk}
Special cases of spans of lattices have been considered before in the literature. If namely $\Gamma = \Lambda_\wh \cap \Lambda_\bl$ and $[\Lambda_\wh : \Gamma] = p = [\Lambda_\bl : \Gamma]$ for some prime number $p$, then $\Lambda_\wh$ and $\Lambda_\bl$ have been called \defn{$p$-neighbours} (see \parencite[Section (28.2)]{kneser:quadratische}).

If one has the freedom to choose the middle even lattice $\Gamma$ beforehand, then it is easy to construct and classify spans of lattices via the technique of \defn{discriminant groups}. The even lattices $\Lambda_\whbl$ namely correspond to the $q$-isotropic subgroups of the finite abelian discriminant group $D_\Gamma := \Gamma^\vee/\Gamma$ of $\Gamma$, as explained in \cref{app:latticegluing}.
\end{rmk}

\begin{rmk}
The restriction of the $\RR$-valued form $\langle \cdot, \cdot \rangle_\Gamma$ on $\liealg h$ to $\Lambda_\wh \cap \Lambda_\bl \subseteq \liealg h$ is $\ZZ$-valued, making $\Lambda_\wh \cap \Lambda_\bl$ an even lattice, while the restriction to $\Lambda_\wh - \Lambda_\bl \subseteq \liealg h$ is in general $\QQ$-valued, making this a rational lattice.
\end{rmk}

Just like in the unicoloured case, the construction of central extensions of $L(T_\wh, H, T_\bl)$ and $P(H, (\Lambda_\wh - \Lambda_\bl)/\Gamma)$ will require a little bit more data than just the lattices $\Lambda_\whbl$ and $\Gamma$ and the two embeddings $\pi_\whbl$ between them. We will namely need a certain central extension of the underlying abelian group of the rational lattice $\Lambda_\wh - \Lambda_\bl$ to ensure that the central extension of $L(T_\wh, H, T_\bl)$ will be disjoint-commutative.

\begin{constr}[A central extension of $\Lambda_\wh - \Lambda_\bl$]
Let us define a function\footnote{More precisely, $b$ takes values only in the finite cyclic subgroup of $\phasegp$ of order \[2\cdot [\Lambda_\wh - \Lambda_\bl : \Lambda_\wh \cap \Lambda_\bl].\]}
\begin{equation}
\label{eq:bicol-latcommmap}
\begin{gathered}
b\colon (\Lambda_\wh - \Lambda_\bl) \times (\Lambda_\wh - \Lambda_\bl) \to \phasegp, \\
b(\lambda, \mu) := e^{2\pi i b_0(\lambda, \mu)}, \qquad
b_0(\lambda, \mu) := \frac12 \langle \mu, \lambda \rangle_\Gamma + \langle \mu, \lambda_\bl \rangle_\Gamma \in \QQ,
\end{gathered}
\end{equation}
where $\lambda, \mu \in \Lambda_\wh - \Lambda_\bl$ and $\lambda_\bl$ is defined via a choice of decomposition $\lambda = \lambda_\wh - \lambda_\bl$, with $\lambda_\whbl \in (\RR\pi_\whbl)^{-1}(\Lambda_\whbl)$. Notice that $b_0(\lambda, \mu) = - \frac12 \langle \mu, \lambda \rangle_\Gamma + \langle \mu, \lambda_\wh \rangle_\Gamma$ also, since we can add $-\langle \mu, \lambda \rangle_\Gamma + \langle \mu, \lambda \rangle_\Gamma$ to $b_0(\lambda, \mu)$, so we can use this as an alternative expression for $b_0$ if we so please.

The definition of $b$ is independent of the choice of decomposition of $\lambda$ used. Suppose namely that also $\lambda = \lambda_\wh' - \lambda_\bl'$. Then $\lambda_\bl' = \lambda_\bl + \nu$ for some $\nu \in \Lambda_\wh \cap \Lambda_\bl$. This implies that
\[
\frac12 \langle \mu, \lambda \rangle_\Gamma + \langle \mu, \lambda_\bl' \rangle_\Gamma = b_0(\lambda, \mu) + \langle \mu, \nu \rangle_\Gamma \equiv b_0(\lambda, \mu) \bmod \ZZ.
\]

It is clear that $b$ is bi-additive and we furthermore have that $b(\lambda, \lambda) = 1$ for all $\lambda \in \Lambda_\wh - \Lambda_\bl$ since
\[
b_0(\lambda, \lambda) = \frac12 \langle \lambda_\wh - \lambda_\bl, \lambda_\wh - \lambda_\bl \rangle_\Gamma + \langle \lambda_\wh - \lambda_\bl, \lambda_\bl \rangle_\Gamma = \frac12 \langle \lambda_\wh, \lambda_\wh \rangle_\Gamma - \frac12 \langle \lambda_\bl, \lambda_\bl \rangle_\Gamma,
\]
and $\langle \lambda_\whbl, \lambda_\whbl \rangle_\Gamma/2 \in \ZZ$ because the lattices $\Lambda_\whbl$ are even. Together with the bi-additivity, this property of $b$ implies by the discussion in \cref{subsec:centexts-abgps} that there exists a $\phasegp$-central extension $(\Lambda_\wh - \Lambda_\bl)\centext$ of $\Lambda_\wh - \Lambda_\bl$ which has $b$ as a commutator map. It is determined up to non-unique isomorphism.
\end{constr}

We repeat that the formula~\eqref{eq:bicol-latcommmap} for the commutator map $b$, which looks somewhat ad hoc, is designed specifically to make \cref{thm:bicol-disjcomm} true.

\begin{constr}[Central extensions of $P(H, (\Lambda_\wh - \Lambda_\bl)/\Gamma)$]
\label{constr:bicolP-centext}
We will construct a $\phasegp$-central extension $\centP(H, (\Lambda_\wh - \Lambda_\bl)/\Gamma)$ of $P(H, (\Lambda_\wh - \Lambda_\bl)/\Gamma)$ by letting its underlying set be $P(H, (\Lambda_\wh - \Lambda_\bl)/\Gamma) \times \phasegp$ and writing down an explicit $2$-cocycle. We fix a choice of a $\phasegp$-central extension $(\Lambda_\wh - \Lambda_\bl)\centext$ of $\Lambda_\wh - \Lambda_\bl$ with the commutator map~\eqref{eq:bicol-latcommmap} and a choice of a $2$-cocycle
\[
\epsilon\colon (\Lambda_\wh - \Lambda_\bl) \times (\Lambda_\wh - \Lambda_\bl) \to \phasegp
\]
for it. Let $\gamma, \rho \in P(H, (\Lambda_\wh - \Lambda_\bl)/\Gamma)$, $z,w \in \phasegp$ and pick lifts $\xi, \eta\colon [0,1] \to \liealg h$ of $\gamma$ and $\rho$ respectively using the isomorphism~\eqref{eq:bicolP-liealgpaths}. We define the multiplication on $\centP(H, (\Lambda_\wh - \Lambda_\bl)/\Gamma)$ by
\[
(\gamma, z) \cdot (\rho, w) := \bigl(\gamma + \rho, zw \cdot c'(\gamma, \rho)\bigr)
\]
where $c'$ is the $2$-cocycle on $P(H, (\Lambda_\wh - \Lambda_\bl)/\Gamma)$ given by
\begin{equation}
\label{eq:bicolP-cocycle}
\begin{gathered}
c'(\gamma, \rho) := \epsilon(\Delta_\gamma', \Delta_\rho') e^{2\pi i S'(\xi, \eta)}, \\
S'(\xi, \eta) := \frac12 \int_0^1 \bigl\langle \xi'(\theta), \eta(\theta) \bigr\rangle_\Gamma \dd\theta + \frac12 \bigl\langle \Delta_\gamma', \eta(0) \bigr\rangle_\Gamma.
\end{gathered}
\end{equation}
Notice that $S'$ is bi-additive.

Since $\xi$ and $\eta$ are well-defined up to an element of $\Gamma$ and $\langle \Delta_\gamma', \mu \rangle_\Gamma \in \ZZ$ for all $\mu \in \Gamma$ the proof of the well-definedness of $c'$ is identical to the one in the unicoloured case.
\end{constr}

\begin{rmk}[Failure of disjoint-commutativity for $\centP(H, (\Lambda_\wh - \Lambda_\bl)/\Gamma)$]
One could try to imitate for the central extension $\centP(H, (\Lambda_\wh - \Lambda_\bl)/\Gamma)$ we just constructed the calculation in the proof of \cref{thm:unicol-disjcomm} which shows disjoint-commutativity for central extensions of unicoloured torus loop groups. A possible adjustment, needed to even make sense of the statement of that \namecref{thm:unicol-disjcomm}, would be to consider subintervals of $[0,1]$ which are allowed to contain the endpoints $\{0,1\}$ and to use the notion of support for elements of $P(H, (\Lambda_\wh - \Lambda_\bl)/\Gamma)$ already mentioned in \cref{subsec:bicol-support}. The intermediate result would then be that
\begin{equation}
\label{eq:bicolP-disjcomm-fail}
e^{2\pi i \bigl(S'(\xi, \eta) - S'(\eta, \xi)\bigr)} = e^{-\pi i \langle \Delta_\rho', \Delta_\gamma' \rangle_\Gamma}.
\end{equation}
On the other hand, we have that
\[
\epsilon(\Delta_\gamma', \Delta_\rho') \epsilon(\Delta_\rho', \Delta_\gamma')^{-1} = b(\Delta_\gamma', \Delta_\rho') = e^{\pi i \langle \Delta_\rho', \Delta_\gamma' \rangle_\Gamma} e^{2\pi i \langle \Delta_\rho', \lambda_\bl \rangle_\Gamma},
\]
where $\Delta_\gamma' = \lambda_\wh - \lambda_\bl$ with $\lambda_\whbl \in (\RR\pi_\whbl)^{-1}(\Lambda_\whbl)$. If $\lambda_\bl \in \Lambda_\wh \cap \Lambda_\bl$, which happens for example if $\supp \gamma$ does not contain $1 \in [0,1]$, then $\langle \Delta_\rho', \lambda_\bl \rangle_\Gamma \in \ZZ$ and so $\epsilon(\Delta_\gamma', \Delta_\rho') \epsilon(\Delta_\rho', \Delta_\gamma')^{-1}$ would absorb~\eqref{eq:bicolP-disjcomm-fail}. Therefore $(\gamma, z)$ and $(\rho, w)$ would indeed commute. However, it need not be true in general that $\lambda_\bl \in \Lambda_\wh \cap \Lambda_\bl$ when $1 \in \supp \gamma$. We conclude that with the definitions we made $\centP(H, (\Lambda_\wh - \Lambda_\bl)/\Gamma)$ is not disjoint-commutative.

We do not know how to adjust the group $\centP(H, (\Lambda_\wh - \Lambda_\bl)/\Gamma)$ in such a way that it becomes disjoint-commutative also for intervals that do contain the point~$1$, except by inheriting via the homomorphism $\Pth$ the notion of support from $L(T_\wh, H, T_\bl)$ as in \cref{dfn:bicol-support}. %\todo{And its definition of \defn{intervals}. We want to consider intervals containing defect pts. in their interior.}
The central extension of $L(T_\wh, H, T_\bl)$ that we will construct in a moment namely \emph{will} turn out to have the desired disjoint-commutativity property.
\end{rmk}

\begin{constr}[Central extensions of $L(T_\wh, H, T_\bl)$]
\label{constr:bicol-centext}
Given the central extension $\centP(H, (\Lambda_\wh - \Lambda_\bl)/\Gamma)$ from \cref{constr:bicolP-centext}, we define a $\phasegp$-central extension $\centL(T_\wh, H, T_\bl)$ of $L(T_\wh, H, T_\bl)$ as the pullback of $\centP(H, (\Lambda_\wh - \Lambda_\bl)/\Gamma)$ along the homomorphism $\Pth$. That is, the $2$-cocycle $c$ defining $\centL(T_\wh, H, T_\bl)$ is set to be
\begin{equation}
\label{eq:bicol-cocycle-viaP}
c(\gamma, \rho) := c'(\Pth\gamma, \Pth\rho)
\end{equation}
for two bicoloured loops $\gamma, \rho \in L(T_\wh, H, T_\bl)$.
\end{constr}

By construction there is a short exact sequence of groups
\[
\begin{tikzcd}
0 \ar[r] & \displaystyle\frac{\Lambda_\wh \cap \Lambda_\bl}{\Gamma} \ar[r] & \centL(T_\wh, H, T_\bl) \ar[r, "\widetilde\Pth"] &[1em] \centP\bigl(H, (\Lambda_\wh - \Lambda_\bl)/\Gamma\bigr) \ar[r] & 1,
\end{tikzcd}
\]
where $\widetilde\Pth$ is the obvious lift of the homomorphism of abelian groups $\Pth$, sending $(\gamma, z) \in \centL(T_\wh, H, T_\bl)$ to $(\Pth\gamma, z)$. The second arrow in this sequence sends an equivalence class $[\nu] \in (\Lambda_\wh \cap \Lambda_\bl)/\Gamma$ to the element
\[
\Bigl(\bigl(0_{T_\wh}, (p, q) \to \bigl(0_H, [\nu]\bigr), 0_{T_\bl}\bigr), 1\Bigr) \in \centL(T_\wh, H, T_\bl).
\]
Having defined $\widetilde \Pth$, we can now state that we have a pullback diagram in the category of groups:
\begin{equation}
\label{eq:bicol-centext-pullback}
\begin{tikzcd}
\centL(T_\wh, H, T_\bl) \ar[r, "\widetilde\Pth", twoheadrightarrow] \ar[d, twoheadrightarrow] \arrow[dr, phantom, "\lrcorner"] &[1em] \centP\bigl(H, (\Lambda_\wh - \Lambda_\bl)/\Gamma\bigr) \ar[d, twoheadrightarrow] \\
L(T_\wh, H, T_\bl) \ar[r, "\Pth"', twoheadrightarrow] & P\bigl(H, (\Lambda_\wh - \Lambda_\bl)/\Gamma\bigr).
\end{tikzcd}
\end{equation}

The cocycle $c$ can also be written in a `bicoloured fashion' without directly using the group $\centP(H, (\Lambda_\wh - \Lambda_\bl)/\Gamma)$, namely by picking glued lifts $\xi = (\xi_\wh, \xi_\match, \xi_\bl)$ and $\eta = (\eta_\wh, \eta_\match, \eta_\bl)$ of $\gamma$ and $\rho$ respectively. If we then split the integral in~\eqref{eq:bicolP-cocycle} in the middle, appeal to \cref{thm:bicol-liealglifts}\ref{thmitm:bicol-liealglifts-liftofP}, which relates $\xi$ and $\eta$ to lifts of $\Pth(\gamma)$ and $\Pth(\rho)$, and finally use the unit speed parametrisations of $\circldir$ and $\circrdir$ by $[0,1/2]$ and $[1/2, 1]$, we get
\begin{equation}
\label{eq:bicol-cocycle}
\begin{gathered}
c(\gamma, \rho) := \epsilon(\Delta_{\Pth\gamma}', \Delta_{\Pth\rho}') e^{2\pi i S(\xi, \eta)}, \\
S(\xi, \eta) := \frac12 \biggl[\int_\circldir \langle \dd\xi_\wh, \eta_\wh\rangle_\wh + \int_\circrdir \langle \dd\xi_\bl, \eta_\bl\rangle_\bl + \Bigl\langle \Delta_{\Pth\gamma}', (\RR \pi_\bl)^{-1}\bigl(\eta_\bl(q)\bigr) \Bigr\rangle_\Gamma\biggr].
\end{gathered}
\end{equation}
Using this description of $c$ instead, its well-definedness can alternatively be checked via \cref{thm:bicol-liealglifts}\ref{thmitm:bicol-liealglifts-ambig} that explains the ambiguity in the choices of the glued lifts $\xi$ and $\eta$.

\begin{ingreds}
\label{ingreds:bicol}
We summarise the ingredients used in the construction of the central extensions $\centP(H, (\Lambda_\wh - \Lambda_\bl)/\Gamma)$ and $\centL(T_\wh, H, T_\bl)$ for clarity:
\begin{itemize}
\item three even lattices $(\Lambda_\wh, \langle \cdot, \cdot \rangle_\wh)$, $(\Lambda_\bl, \langle \cdot, \cdot \rangle_\bl)$ and $(\Gamma, \langle \cdot, \cdot \rangle_\Gamma)$ of the same rank,
\item two lattice morphisms $\pi_\whbl\colon \Gamma \hookrightarrow \Lambda_\whbl$,
\item a choice of a $\phasegp$-central extension $(\Lambda_\wh - \Lambda_\bl)\centext$ of $\Lambda_\wh - \Lambda_\bl$ such that it has commutator map~\eqref{eq:bicol-latcommmap},
\item a choice of a $2$-cocycle $\epsilon$ for $(\Lambda_\wh - \Lambda_\bl)\centext$ (we will always choose $\epsilon$ to be normalised to make calculations easier),
\item (for the central extension $\centL(T_\wh, H, T_\bl)$) a choice of one of the two points $p$ or $q$ on $S^1$ as being privileged.
\end{itemize}
Just like for a unicoloured central extension $\centL T$, also the notations $\centP(H, (\Lambda_\wh - \Lambda_\bl)/\Gamma)$ and $\centL(T_\wh, H, T_\bl)$ do not refer to these ingredients, so it will be important in the sequel to explain this separately when necessary.
\end{ingreds}

\subsection{Unicoloured central extensions are a special case}
\label{subsec:bicol-unicolcentext-isspecialcase}
Recall from \cref{subsec:bicol-unicolisspecialcase} that if $\Lambda_\wh = \Lambda_\bl = \Gamma$ and the morphisms $\pi_\whbl\colon \Gamma \hookrightarrow \Lambda_\whbl$ are the identity, there exists an isomorphism of abelian groups $\Bi\colon LH \xrightarrow{\sim} L(H, H, H)$ given by $\gamma \mapsto (\gamma|_\circlsm, \gamma|_{\{p,q\}}, \gamma|_\circrsm)$. In this case the $\ZZ$-module $\Lambda_\wh - \Lambda_\bl \subseteq \liealg h$ equals $\Gamma$, and the commutator map $b$ on $\Lambda_\wh - \Lambda_\bl = \Gamma$ defined in~\eqref{eq:bicol-latcommmap} is nothing but the commutator map $(\lambda, \mu) \mapsto (-1)^{\langle \lambda, \mu\rangle_\Gamma}$ since $\Gamma$ is an integral lattice. Now note that a choice of a $\{\pm 1\}$-central extension $\tilde \Gamma$ of $\Gamma$ with this as commutator map, a choice of $2$-cocycle $\epsilon\colon \Gamma \times \Gamma \to \{\pm 1\}$ for it and the choice of the point $q$ among $p$ and $q$, as needed in \cref{constr:bicol-centext}, is also exactly the data we needed in \cref{sec:unicol-centext} to construct a $\phasegp$-central extension $\centL H$ of $LH$. We now claim that the obvious lift to the underlying sets
\begin{equation}
\label{eq:bicol-unicolcentext-isspecialcase}
\centL H \xrightarrow{\sim} \centL(H, H, H), \qquad (\gamma, z) \mapsto \bigl(\Bi(\gamma), z\bigr)
\end{equation}
of $\Bi$ is an isomorphism of (non-abelian) groups. It amounts to showing that for all $\gamma, \rho \in LH$,
\begin{equation}
\label{eq:bicol-unicolcentext-isspecialcase-1}
c\bigl(\Bi(\gamma), \Bi(\rho)\bigr) = c_H(\gamma, \rho),
\end{equation}
where we write the cocycle defining $\centL H$ as $c_H$ and still write $c$ for the one defining $\centL(H, H, H)$. We will also denote by $S_H$ the map in the definition of $c_H$ that takes two $\liealg h$-valued maps as input.

The first observation to make is that if $\xi\colon [0,1] \to \liealg h$ is a lift of $\gamma$ as in~\eqref{eq:unicol-liealgpaths}, made by cutting $S^1$ at the privileged point $q$, then we may choose the triple of maps
\[
(\xi_\wh, \xi_\match, \xi_\bl) := \Bigl(\xi|_{[1/2, 1]}, (p, q) \to \bigl(\xi(1/2), \xi(0)\bigr), \xi|_{[0,1/2]}\Bigr)
\]
as a glued lift of $\Bi(\gamma)$. Knowing this, we see that\footnote{Here, $\Delta_\gamma$ refers to the winding element homomorphism for $LH$ as defined in \cref{sec:unicol-struct}.\label{fn:bicol-unicolwindelt}}
\[
\Delta_{\Pth(\Bi\gamma)}' = \xi_\wh(q) - \xi_\bl(q) = \xi(1) - \xi(0) = \Delta_\gamma.
\]
One chooses a glued lift $(\eta_\wh, \eta_\match, \eta_\bl)$ for $\Bi(\rho)$ similarly using a lift $\eta\colon [0,1] \to \liealg h$ of $\rho$. This gives us
\begin{equation}
\label{eq:bicol-unicolcentext-isspecialcase-2}
\epsilon(\Delta_{\Pth(\Bi\gamma)}', \Delta_{\Pth(\Bi\rho)}') = \epsilon(\Delta_\gamma, \Delta_\rho).
\end{equation}

Next, using again that $\eta_\bl(q) = \eta(0)$ by our definition of $\eta_\bl$, we observe that the second line in the definition~\eqref{eq:bicol-cocycle} of the bicoloured cocycle now reads
\begin{multline*}
S\bigl((\xi_\wh, \xi_\match, \xi_\bl), (\eta_\wh, \eta_\match, \eta_\bl)\bigr) \\
\begin{aligned}
	&= \frac12 \biggl[\int_\circldir \langle \dd\xi_\wh, \eta_\wh \rangle_\Gamma + \int_\circrdir \langle \dd\xi_\bl, \eta_\bl \rangle_\Gamma + \bigl\langle \Delta_{\Pth(\Bi\gamma)}', \eta_\bl(q) \bigr\rangle_\Gamma\biggr] \\
	&= \frac12 \biggl[\int_{\frac12}^1 \bigl\langle \xi'(\theta), \eta(\theta) \bigr\rangle_\Gamma \dd\theta + \int_0^{\frac12} \bigl\langle \xi'(\theta), \eta(\theta) \bigr\rangle_\Gamma \dd\theta + \bigl\langle \Delta_\gamma, \eta(0) \bigr\rangle_\Gamma\biggr].
\end{aligned}
\end{multline*}
This is exactly the formula for $S_H(\xi, \eta)$ in~\eqref{eq:unicol-cocycle}, which together with~\eqref{eq:bicol-unicolcentext-isspecialcase-2} proves the claim~\eqref{eq:bicol-unicolcentext-isspecialcase-1}.

\subsection{The inclusion of $\centL H$}
\label{subsec:bicol-inclcentextLH}
We return to the general setup assumed at the beginning of \cref{sec:bicol-centext}. Recall from \cref{subsec:bicol-inclLH} that there is a canonical inclusion of abelian groups $\Bi\colon LH \hookrightarrow L(T_\wh, H, T_\bl)$ given by $\gamma \mapsto (\gamma_\wh, \gamma|_{\{p,q\}}, \gamma_\bl)$, where $\gamma_\wh := \phasegp\pi_\wh \circ \gamma|_\circlsm$ and $\gamma_\bl := \phasegp\pi_\bl\circ \gamma|_\circrsm$. Define a central extension $\centL H$ of $LH$ via the construction in \cref{sec:unicol-centext} using the following input data:
\begin{itemize}
\item the even lattice $(\Gamma, \langle \cdot, \cdot \rangle_\Gamma)$,
\item the restriction $\tilde\Gamma$ of the chosen $\phasegp$-central extension $(\Lambda_\wh - \Lambda_\bl)\centext$ of $\Lambda_\wh - \Lambda_\bl$ to $\Gamma$,
\item the restriction $\epsilon_\Gamma$ to $\Gamma$ of the chosen $2$-cocycle $\epsilon$ on $\Lambda_\wh - \Lambda_\bl$ for $(\Lambda_\wh - \Lambda_\bl)\centext$,
\item the point $q$ as a choice of privileged point on $S^1$.
\end{itemize}
The use of $\tilde\Gamma$ is permitted, meaning that it indeed has commutator map $(\mu, \mu') \mapsto (-1)^{\langle \mu, \mu' \rangle_\Gamma}$. The function $b_0$ in~\eqref{eq:bicol-latcommmap}, when restricted to $\Gamma$, namely becomes $\frac12 \langle \mu', \mu \rangle_\Gamma$ modulo $\ZZ$ since the restriction of the form $\langle \cdot, \cdot \rangle_\Gamma$ on $\Lambda_\wh - \Lambda_\bl$ to $\Gamma$ is integral, and so for $\mu, \mu' \in \Gamma$ we have
\[
b(\mu, \mu') = e^{2\pi i b_0(\mu, \mu')} = (-1)^{\langle \mu, \mu' \rangle_\Gamma}.
\]

We claim that the obvious lift to the underlying sets
\begin{equation}
\label{eq:bicol-liftofinclLHtocentexts}
\widetilde\Bi\colon \centL H \hookrightarrow \centL(T_\wh, H, T_\bl), \qquad (\gamma, z) \mapsto \bigl(\Bi(\gamma), z\bigr)
\end{equation}
of $\Bi$ is a homomorphism of (non-abelian) groups, meaning that for all $\gamma, \rho \in LH$,
\begin{equation}
\label{eq:bicol-inclLH-centext-1}
c\bigl(\Bi(\gamma), \Bi(\rho)\bigr) = c_H(\gamma, \rho).
\end{equation}
(We will use the same notations $c_H$ and $S_H$ as we did in \cref{subsec:bicol-unicolcentext-isspecialcase}.)

Indeed, if $\xi\colon [0,1] \to \liealg h$ is a lift of $\gamma$ as in~\eqref{eq:unicol-liealgpaths}, made by cutting $S^1$ at $q$, then we may choose as a glued lift $(\xi_\wh, \xi_\match, \xi_\bl)$ of $\Bi(\gamma)$ the triple of maps
\begin{gather*}
\xi_\wh := \RR\pi_\wh \circ \xi|_{[1/2,1]} \colon \circl \to \liealg t_\wh \\
\xi_\match(p) := \xi(1/2), \qquad \xi_\match(q) := \xi(0) \\
\xi_\bl := \RR\pi_\bl \circ \xi|_{[0,1/2]} \colon \circr \to \liealg t_\bl.
\end{gather*}
Knowing this, we see that\footnote{See \cref{fn:bicol-unicolwindelt}.}\footnote{This equality of winding elements also follows from the commutative triangle~\eqref{eq:bicol-LH-commtriangle}.}
\[
\Delta_{\Pth(\Bi\gamma)}' = (\RR\pi_\wh)^{-1}\bigl(\xi_\wh(q)\bigr) - (\RR\pi_\bl)^{-1}\bigl(\xi_\bl(q)\bigr) = \xi(1) - \xi(0) = \Delta_\gamma \in \Gamma.
\]
One chooses a glued lift $(\eta_\wh, \eta_\match, \eta_\bl)$ for $\Bi(\rho)$ similarly using a lift $\eta\colon [0,1] \to \liealg h$ of $\rho$. This gives us
\begin{equation}
\label{eq:bicol-inclLH-centext-2}
\epsilon(\Delta_{\Pth(\Bi\gamma)}', \Delta_{\Pth(\Bi\rho)}') = \epsilon_\Gamma(\Delta_\gamma, \Delta_\rho).
\end{equation}
Next, using again that $(\RR\pi_\bl)^{-1}(\eta_\bl(q)) = \eta(0)$, the second line in~\eqref{eq:bicol-cocycle} reads
\begin{multline*}
S\bigl((\xi_\wh, \xi_\match, \xi_\bl), (\eta_\wh, \eta_\match, \eta_\bl)\bigr) \\
\begin{aligned}
	&= \frac12 \biggl[\int_\circldir \langle \dd\xi_\wh, \eta_\wh \rangle_\wh + \int_\circrdir \langle \dd\xi_\bl, \eta_\bl \rangle_\bl + \Bigl\langle \Delta_{\Pth(\Bi\gamma)}', (\RR\pi_\bl)^{-1}\bigl(\eta_\bl(q)\bigr) \Bigr\rangle_\Gamma\biggr] \\
	&= \frac12 \biggl[\int_{\frac12}^1 \bigl\langle \xi'(\theta), \eta(\theta) \bigr\rangle_\Gamma \dd\theta + \int_0^{\frac12} \bigl\langle \xi'(\theta), \eta(\theta) \bigr\rangle_\Gamma \dd\theta + \bigl\langle \Delta_\gamma, \eta(0) \bigr\rangle_\Gamma\biggr].
\end{aligned}
\end{multline*}
This equals the formula for $S_H(\xi, \eta)$ in~\eqref{eq:unicol-cocycle}, which together with~\eqref{eq:bicol-inclLH-centext-2} proves~\eqref{eq:bicol-inclLH-centext-1}.

\begin{rmk}
\label{rmk:bicol-Bi-centext}
The existence of the homomorphisms~\eqref{eq:bicol-unicolcentext-isspecialcase} and~\eqref{eq:bicol-liftofinclLHtocentexts} of central extensions can be understood in a more conceptual, but less explicit way, namely via the pullback diagram~\eqref{eq:bicol-centext-pullback}.

For the lifting~\eqref{eq:bicol-unicolcentext-isspecialcase} of the isomorphism $\Bi$ we can observe that in the situation of \cref{subsec:bicol-unicolcentext-isspecialcase} we have $P(H, (\Lambda_\wh - \Lambda_\bl)/\Gamma) \cong LH$ and $\Pth = (\Bi)^{-1}$, if we identify $\circldir$ and $\circrdir$ with $[1/2,1]$ and $[0,1/2]$ respectively. So because $\Pth$ is an isomorphism, the same holds for its lift $\widetilde \Pth$. That the cocycle $c'$ for $\centP(H, (\Lambda_\wh - \Lambda_\bl)/\Gamma)$ is identical to the one defining $\centL H$ then shows that $\widetilde{\Pth}^{-1}$ is the desired lift of $\Bi$.

For the lifting~\eqref{eq:bicol-liftofinclLHtocentexts} of the inclusion $\Bi$, we can think of the cocycle $c'$ as generalising the one defining $\centL H$ to a bigger group if we consider $LH$ as a subgroup of $P(H, (\Lambda_\wh - \Lambda_\bl)/\Gamma)$ through~\eqref{eq:bicol-LH-as-gpofpaths}. Together with the commutativity of the triangle~\eqref{eq:bicol-LH-commtriangle} we therefore get a commutative diagram
\[
\begin{tikzcd}
\centL H \ar[rr, hookrightarrow, "\centiota"] \ar[d, twoheadrightarrow] &[1em] &[0.5em] \centP\bigl(H, (\Lambda_\wh - \Lambda_\bl)/\Gamma\bigr) \ar[d, twoheadrightarrow] \\
LH \ar[r, hookrightarrow, "\Bi"'] & L(T_\wh, H, T_\bl) \ar[r, twoheadrightarrow, "\Pth"'] & P\bigl(H, (\Lambda_\wh - \Lambda_\bl)/\Gamma\bigr).
\end{tikzcd}
\]
The universal property of the pullback~\eqref{eq:bicol-centext-pullback} now gives us the lift of $\Bi$.
\end{rmk}

\subsection{Isotony with respect to unicoloured central extensions}
\label{subsec:bicol-isot}
Recall the inclusions~\eqref{eq:bicol-noncentext-isot} of the non-centrally extended groups $L_\circlsm T_\wh$ and $L_\circrsm T_\bl$ of unicoloured loops into $L(T_\wh, H, T_\bl)$ given by $\gamma_\wh \mapsto (\gamma_\wh, 0_H, 0_{T_\bl})$ and $\gamma_\bl \mapsto (0_{T_\wh}, 0_H, \gamma_\bl)$ respectively. Define central extensions $\centL T_\whbl$ of $LT_\whbl$ via the construction in \cref{sec:unicol-centext} using the following input data:
\begin{itemize}
\item the even lattice $(\Lambda_\whbl, \langle \cdot, \cdot \rangle_\whbl)$ (or, more precisely, its pre-image in $\liealg h$ under the isometry $\RR\pi_\whbl$),
\item the restriction $\tilde\Lambda_\whbl$ of the chosen $\phasegp$-central extension $(\Lambda_\wh - \Lambda_\bl)\centext$ of $\Lambda_\wh - \Lambda_\bl$ to $(\RR\pi_\whbl)^{-1}(\Lambda_\whbl)$,
\item the restriction $\epsilon_\whbl$ to $(\RR\pi_\whbl)^{-1}(\Lambda_\whbl)$ of the chosen $2$-cocycle $\epsilon$ on $\Lambda_\wh - \Lambda_\bl$ for $(\Lambda_\wh - \Lambda_\bl)\centext$,
\item the point $q$ as a choice of privileged point on $S^1$.
\end{itemize}
Indeed, the commutator map of $\tilde\Lambda_\whbl$ is $(\lambda_\whbl, \mu_\whbl) \mapsto (-1)^{\langle \lambda_\whblsm, \mu_\whblsm \rangle_\whblsm}$. The function $b_0$ when restricted to $\Lambda_\whbl$ namely becomes $\frac12 \langle \mu_\whbl, \lambda_\whbl \rangle_\whbl$ modulo $\ZZ$ since the restriction $\langle \cdot, \cdot \rangle_\whbl$ of the form $\langle \cdot, \cdot \rangle_\Gamma$ on $\Lambda_\wh - \Lambda_\bl$ to $(\RR\pi_\whbl)^{-1}(\Lambda_\whbl)$ is integral.

Write $\centL_\circlsm T_\wh$ for the restriction of $\centL T_\wh$ to $L_\circlsm T_\wh$ and define $\centL_\circrsm T_\bl$ similarly. We claim that the obvious lifts to the underlying sets
\[
\centL_\circlsm T_\wh \hookrightarrow \centL(T_\wh, H, T_\bl) \hookleftarrow \centL_\circrsm T_\bl
\]
of the inclusions of $L_\circlsm T_\wh$ and $L_\circrsm T_\bl$, namely the ones that are the identity on the central subgroups $\phasegp$, are group homomorphisms. That is, for all $\gamma_\wh, \rho_\wh \in L_\circlsm T_\wh$,
\begin{equation}
\label{eq:bicol-unicolisotony-1}
c\bigl((\gamma_\wh, 0_H, 0_{T_\bl}), (\rho_\wh, 0_H, 0_{T_\bl})\bigr) = c_\wh(\gamma_\wh, \rho_\wh),
\end{equation}
where $c_\wh$ is the $2$-cocycle defining $\centL_\circlsm T_\wh$. We will also denote by $S_\wh$ the map in the definition of $c_\wh$ that takes two $\liealg t_\wh$-valued maps as input. Similarly, we claim that
\begin{equation}
\label{eq:bicol-unicolisotony-2}
c\bigl((0_{T_\wh}, 0_H, \gamma_\bl), (0_{T_\wh}, 0_H, \rho_\bl)\bigr) = c_\bl(\gamma_\bl, \rho_\bl)
\end{equation}
for all $\gamma_\bl, \rho_\bl \in L_\circrsm T_\bl$, where $c_\bl$ is the $2$-cocycle defining $\centL_\circrsm T_\bl$.

We will prove~\eqref{eq:bicol-unicolisotony-1}---the proof of~\eqref{eq:bicol-unicolisotony-2} is similar. If $\xi\colon [0,1] \to \liealg t_\wh$ is a lift of $\gamma_\wh$ as in~\eqref{eq:unicol-liealgpaths}, made by cutting $S^1$ at $q$, then we may choose the triple of maps
\[
(\xi_\wh, \xi_\match, \xi_\bl) := \bigl(\xi|_{[1/2, 1]} - \xi(1/2), 0_{\liealg h}, 0_{\liealg t_\bl}\bigr)
\]
as a glued lift of $(\gamma_\wh, 0, 0)$. Indeed, $\xi(1/2) \in \Lambda_\wh$. What is more, $\xi$ is constant on $[0, 1/2]$. Therefore,
\begin{align*}
\Delta_{\Pth(\gamma_\wh, 0, 0)}' &= (\RR\pi_\wh)^{-1}\bigl(\xi_\wh(q)\bigr) - (\RR\pi_\bl)^{-1}\bigl(\xi_\bl(q)\bigr) \\
	&= (\RR\pi_\wh)^{-1}\bigl(\xi(1) - \xi(1/2)\bigr) - 0_{\liealg h} \\
	&= (\RR\pi_\wh)^{-1}\bigl(\xi(1) - \xi(0)\bigr) = (\RR\pi_\wh)^{-1}(\Delta_{\gamma_\wh}).
\end{align*}
One chooses a glued lift $(\eta_\wh, \eta_\match, \eta_\bl)$ for $(\rho_\wh, 0, 0)$ similarly using a lift $\eta\colon [0,1] \to \liealg t_\wh$ of $\rho_\wh$. The result is
\begin{equation}
\label{eq:bicol-unicolisotony-3}
\epsilon(\Delta_{\Pth(\gamma_\wh, 0, 0)}', \Delta_{\Pth(\rho_\wh, 0, 0)}') = \epsilon_\wh\bigl((\RR\pi_\wh)^{-1}(\Delta_{\gamma_\wh}), (\RR\pi_\wh)^{-1}(\Delta_{\rho_\wh})\bigr).
\end{equation}
Filling in these glued lifts in the second line in the definition~\eqref{eq:bicol-cocycle} of the bicoloured cocycle results in
\begin{multline*}
S\bigl((\xi_\wh, \xi_\match, \xi_\bl), (\eta_\wh, \eta_\match, \eta_\bl)\bigr) \\
\begin{aligned}
	&= \frac12 \biggl[\int_\circldir \langle \dd\xi_\wh, \eta_\wh \rangle_\wh + \int_\circrdir \langle \dd\xi_\bl, \eta_\bl \rangle_\bl + \Bigl\langle \Delta_{\Pth(\gamma_\wh, 0, 0)}', (\RR\pi_\bl)^{-1}\bigl(\eta_\bl(q)\bigr) \Bigr\rangle_\Gamma\biggr] \\
	&= \frac12 \int_\circldir \langle \dd\xi_\wh, \eta_\wh \rangle_\wh \\
	&= \frac12 \int_{\frac12}^1 \bigl\langle \xi'(\theta), \eta(\theta) - \eta(1/2) \bigr\rangle_\wh \dd\theta \\
	&= \frac12 \int_{\frac12}^1 \bigl\langle \xi'(\theta), \eta(\theta) \bigr\rangle_\wh \dd\theta - \frac12 \bigl\langle \Delta_{\gamma_\wh}, \eta(0) \bigr\rangle_\wh \\
	&= \frac12 \int_{\frac12}^1 \bigl\langle \xi'(\theta), \eta(\theta) \bigr\rangle_\wh \dd\theta + \frac12 \bigl\langle \Delta_{\gamma_\wh}, \eta(0) \bigr\rangle_\wh - \bigl\langle \Delta_{\gamma_\wh}, \eta(0) \bigr\rangle_\wh.
\end{aligned}
\end{multline*}
Because $\eta(0) \in \Lambda_\wh$ and so $\langle \Delta_{\gamma_\wh}, \eta(0)\rangle \in \ZZ$, this is modulo $\ZZ$ the same formula as for $S_\wh(\xi, \eta)$ in~\eqref{eq:unicol-cocycle}. Together with~\eqref{eq:bicol-unicolisotony-3} this proves the claim~\eqref{eq:bicol-unicolisotony-1}.

\subsection{Disjoint-commutativity of central extensions}
\label{subsec:bicol-disjcomm}
We will prove a disjoint-commutativity property for $\centL(T_\wh, H, T_\bl)$ similar to the one in \cref{thm:unicol-disjcomm} for unicoloured torus loop groups. For this we will need a more restrictive notion of \emph{interval} than we used in the unicoloured case:

\begin{dfn}
\label{dfn:bicolinterval}
(After \parencite[Section 1.\textsc{a}]{bartels:confnetsIII}.\footnote{Our definition does not agree exactly with that of \parencite{bartels:confnetsIII} because those authors additionally require the datum of a local coordinate around the colour-changing point.}) A \defn{bicoloured interval} on $S^1$ is an interval on $S^1$ which does not contain both points $p$ and $q$, and if it does contain one of them it does so in its interior.
\end{dfn}

A bicoloured interval is therefore either contained in the interior of $\circl$ or $\circr$, or it is split into two (non-singleton) subintervals along the point from $\{p, q\}$ it contains.

The precise statement we will show is as follows:

\begin{thm}
\label{thm:bicol-disjcomm}
Let $(\gamma, z)$ and $(\rho, w)$ be two elements of the central extension $\centL(T_\wh, H, T_\bl)$ such that the supports of $\gamma$ and $\rho$ are contained in two disjoint bicoloured intervals on $S^1$, respectively. Then $(\gamma, z)$ and $(\rho, w)$ commute.
\end{thm}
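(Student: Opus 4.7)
Because the cocycle $c$ defining $\centL(T_\wh, H, T_\bl)$ factors through $\Pth$ via $c(\gamma, \rho) = c'(\Pth\gamma, \Pth\rho)$ by \cref{constr:bicol-centext}, it suffices to establish the commutator identity at the level of $c'$ on $\centP(H, (\Lambda_\wh - \Lambda_\bl)/\Gamma)$. Expanding~\eqref{eq:bicolP-cocycle} and using that $b$ is the commutator map of $(\Lambda_\wh - \Lambda_\bl)\centext$ together with the bi-additivity of $S'$ gives
\[
\frac{c'(\Pth\gamma, \Pth\rho)}{c'(\Pth\rho, \Pth\gamma)} = b\bigl(\Delta_{\Pth\gamma}', \Delta_{\Pth\rho}'\bigr) \cdot e^{2\pi i\bigl(S'(\xi,\eta) - S'(\eta,\xi)\bigr)},
\]
where $\xi, \eta \colon [0,1] \to \liealg h$ are any lifts of $\Pth\gamma$ and $\Pth\rho$ via~\eqref{eq:bicolP-liealgpaths}. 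The goal is to show that the exponential factor equals $b(\Delta_{\Pth\gamma}', \Delta_{\Pth\rho}')^{-1}$.

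Let $I, J \subseteq S^1$ denote the two disjoint bicoloured intervals containing the supports of $\gamma$ and $\rho$ respectively. By \cref{dfn:bicolinterval} neither can contain both $p$ and $q$, and since $I$ and $J$ are disjoint, at least one of them must avoid $q$; I will treat the case $q \notin I$, the complementary case being handled by a symmetric argument. A partial integration identical in form to the one in the proof of \cref{thm:unicol-disjcomm} yields
\[
S'(\xi,\eta) - S'(\eta,\xi) = \int_0^1 \bigl\langle \xi'(\theta), \eta(\theta) \bigr\rangle_\Gamma \dd\theta - \tfrac12 \bigl\langle \Delta_{\Pth\rho}', \Delta_{\Pth\gamma}' \bigr\rangle_\Gamma - \bigl\langle \Delta_{\Pth\rho}', \xi(0) \bigr\rangle_\Gamma.
\]
Under $q \notin I$, the gluing construction producing $\Pth\gamma$ from the three pieces of $\gamma$ shows that outside $I$ (viewed as a subset of $[0,1]$) the path $\Pth\gamma$ is locally constant: its values lie in $(\RR\pi_\bl)^{-1}(\Lambda_\bl)/\Gamma$ near $0$ and in $(\RR\pi_\wh)^{-1}(\Lambda_\wh)/\Gamma$ near $1$. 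Choosing $\xi$ so that $\lambda_\bl := \xi(0) \in (\RR\pi_\bl)^{-1}(\Lambda_\bl)$ and $\lambda_\wh := \xi(1) \in (\RR\pi_\wh)^{-1}(\Lambda_\wh)$ produces exactly the decomposition $\Delta_{\Pth\gamma}' = \lambda_\wh - \lambda_\bl$ needed to compute $b_0$ via~\eqref{eq:bicol-latcommmap}, with its $\lambda_\bl$-argument equalling $\xi(0)$. Meanwhile $\rho$ vanishes on $I$, and tracking supports together with the matching condition at $p$ shows that $\eta|_I$ is a constant $\nu$ in a sublattice of $\liealg h$ that pairs integrally with $\Delta_{\Pth\gamma}'$; the integral therefore reduces to $\bigl\langle \Delta_{\Pth\gamma}', \nu \bigr\rangle_\Gamma \in \ZZ$. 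Comparing the resulting expression with~\eqref{eq:bicol-latcommmap} gives $S'(\xi,\eta) - S'(\eta,\xi) \equiv -b_0(\Delta_{\Pth\gamma}', \Delta_{\Pth\rho}') \bmod \ZZ$, precisely the identity sought.

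The main obstacle is the symmetric case $q \in I$, $q \notin J$: then $\xi$ is not required to be locally constant near $\{0, 1\}$, so no distinguished decomposition of $\Delta_{\Pth\gamma}'$ is available and one must instead pick an arbitrary one. The key auxiliary fact is that $I \ni q$ forces the complement $S^1 \setminus I$ to be a single connected interval containing both $J$ and $p$, on which $\gamma$ vanishes identically; the matching condition at $p$ then forces $\xi$ to be constant on this entire complement with value in $\Lambda_\wh \cap \Lambda_\bl$, and in particular $\xi|_J$ is such a constant. After applying partial integration to the other integrand and using the alternative expression $b_0(\lambda, \mu) = -\tfrac12 \langle \mu, \lambda \rangle_\Gamma + \langle \mu, \lambda_\wh \rangle_\Gamma$ for the commutator map, the same cancellation emerges, with all cross-pairings of elements of $(\RR\pi_\wh)^{-1}(\Lambda_\wh)$ and $(\RR\pi_\bl)^{-1}(\Lambda_\bl)$ becoming integer-valued thanks to the constraint on $\xi|_J$.
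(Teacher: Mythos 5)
Your proposal is correct and follows essentially the same route as the paper's proof: the same partial-integration formula for the commutator, the same case split according to which interval avoids $q$, and the same key inputs (the bicoloured notion of support together with the matching condition at $p$ forcing the lift of $\Pth\gamma$ to be constant with value in $\Lambda_\wh \cap \Lambda_\bl$ on the complement of $I$, the endpoint decompositions of the winding elements, and the two expressions for $b_0$), the only difference being that you phrase everything through lifts of $\Pth\gamma$, $\Pth\rho$ on $[0,1]$ while the paper uses glued lifts and half-circle integrals, which it explicitly notes is an equivalent formulation. Note also that, since triviality of the commutator is symmetric in the two elements, your second case already follows from the first by exchanging the roles of $\gamma$ and $\rho$ (as you yourself observe), so its separate treatment—whose asserted cancellation does go through, though the non-integral cross-pairings cancel against each other by symmetry of the form rather than each being integral—is not strictly needed.
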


Remember that here we are using the notion of support of \cref{dfn:bicol-support}.

As preparation for the proof of \cref{thm:bicol-disjcomm} we will first simplify the commutator map of $\centL(T_\wh, H, T_\bl)$ without assuming anything about supports of loops:

\begin{prop}
\label{thm:bicol-commmap}
Let $\gamma = (\gamma_\wh, \gamma_\match, \gamma_\bl)$ and $\rho = (\rho_\wh, \rho_\match, \rho_\bl)$ be in $L(T_\wh, H, T_\bl)$ and let the $2$-cocycle $c$ and the function $S$ be as in~\eqref{eq:bicol-cocycle} defined in terms of glued lifts $\xi = (\xi_\wh, \xi_\match, \xi_\bl)$ and $\eta = (\eta_\wh, \eta_\match, \eta_\bl)$ of $\gamma$ and $\rho$, respectively. Then
\begin{equation}
\label{eq:bicol-commmap-pre}
c(\gamma, \rho) c(\rho, \gamma)^{-1} = b(\Delta_{\Pth\gamma}', \Delta_{\Pth\rho}') e^{2\pi i\bigl(S(\xi, \eta) - S(\eta, \xi)\bigr)},
\end{equation}
where
\begin{equation}
\label{eq:bicol-commmap}
\begin{split}
S(\xi, \eta) - S(\eta, \xi) 
	&= \int_\circldir \langle \dd \xi_\wh, \eta_\wh \rangle_\wh +
		\int_\circrdir \langle \dd \xi_\bl, \eta_\bl \rangle_\bl - \phantom{{}} \\
	&\phantom{{}={}} \frac12 \langle \Delta_{\Pth\rho}', \Delta_{\Pth\gamma}' \rangle_\Gamma - 
		\Bigl\langle \Delta_{\Pth\rho}', (\RR \pi_\bl)^{-1}\bigl(\xi_\bl(q)\bigr) \Bigr\rangle_\Gamma.
\end{split}
\end{equation}
\end{prop}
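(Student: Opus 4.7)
The plan is to compute the two factors of $c(\gamma,\rho)c(\rho,\gamma)^{-1}$ separately. The $\epsilon$-contribution is immediate: since the central extension $(\Lambda_\wh - \Lambda_\bl)\centext$ has commutator map $b$ and $\epsilon$ is a cocycle representing it, one has $\epsilon(\Delta_{\Pth\gamma}', \Delta_{\Pth\rho}')\epsilon(\Delta_{\Pth\rho}', \Delta_{\Pth\gamma}')^{-1} = b(\Delta_{\Pth\gamma}', \Delta_{\Pth\rho}')$. This accounts for the first factor in~\eqref{eq:bicol-commmap-pre}, so the work reduces to establishing the formula~\eqref{eq:bicol-commmap} for $S(\xi,\eta) - S(\eta,\xi)$.

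For this I would perform integration by parts on each of the two integrals appearing in $S(\eta,\xi)$, using that $\langle\cdot,\cdot\rangle_\whbl$ are symmetric. Concretely, on $\circldir$ (which runs from $p$ to $q$) one gets $\int_\circldir \langle \dd\eta_\wh, \xi_\wh\rangle_\wh = \langle \eta_\wh(q), \xi_\wh(q)\rangle_\wh - \langle \eta_\wh(p), \xi_\wh(p)\rangle_\wh - \int_\circldir \langle \dd\xi_\wh, \eta_\wh\rangle_\wh$, and analogously on $\circrdir$ (which runs from $q$ to $p$) one obtains boundary terms with the opposite sign. Substituting these into $S(\xi,\eta) - S(\eta,\xi)$, the two full integrals of $\langle \dd\xi_\whbl, \eta_\whbl\rangle_\whbl$ survive (producing the first line of~\eqref{eq:bicol-commmap}), and one is left with boundary contributions at $p$ and $q$ plus the two asymmetric terms involving $\Delta_{\Pth\gamma}'$ and $\Delta_{\Pth\rho}'$ already present in the definition of $S$.

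The key geometric input is then the use of \cref{thm:bicol-liealglifts}\ref{thmitm:bicol-liealglifts-exists-matchp}: since $\RR\pi_\whbl$ are Lie algebra isometries and $\xi_\match(p), \eta_\match(p) \in \liealg h$ pull back to $\xi_\whbl(p), \eta_\whbl(p)$ respectively, the boundary values at $p$ satisfy $\langle\eta_\wh(p),\xi_\wh(p)\rangle_\wh = \langle\eta_\match(p),\xi_\match(p)\rangle_\Gamma = \langle\eta_\bl(p),\xi_\bl(p)\rangle_\bl$, so the contributions at $p$ cancel exactly. At $q$, writing $\hat\xi_\whbl(q) := (\RR\pi_\whbl)^{-1}(\xi_\whbl(q))$ and similarly for $\eta$, the relation $\hat\xi_\wh(q) - \hat\xi_\bl(q) = \Delta_{\Pth\gamma}'$ from \cref{thm:bicol-liealglifts}\ref{thmitm:bicol-liealglifts-liftofP} (and the analogous one for $\eta$) lets us expand the $q$-boundary contributions bilinearly. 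The main bookkeeping obstacle is just making sure all the $\frac12$-factors line up: the expansion produces $-\frac12\langle \Delta_{\Pth\rho}', \Delta_{\Pth\gamma}'\rangle_\Gamma$ together with cross-terms of the form $\pm\frac12\langle \Delta_{\Pth\rho}', \hat\xi_\bl(q)\rangle_\Gamma$ and $\pm\frac12\langle \hat\eta_\bl(q), \Delta_{\Pth\gamma}'\rangle_\Gamma$. By symmetry of $\langle\cdot,\cdot\rangle_\Gamma$, exactly one of the two $\Delta_{\Pth\gamma}'$-cross-terms cancels against $\frac12\langle \Delta_{\Pth\gamma}', \hat\eta_\bl(q)\rangle_\Gamma - \frac12\langle \Delta_{\Pth\rho}', \hat\xi_\bl(q)\rangle_\Gamma$, and what remains is exactly $-\frac12\langle \Delta_{\Pth\rho}', \Delta_{\Pth\gamma}'\rangle_\Gamma - \langle \Delta_{\Pth\rho}', \hat\xi_\bl(q)\rangle_\Gamma$, matching~\eqref{eq:bicol-commmap}.
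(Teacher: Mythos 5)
Your proposal is correct and is essentially the second of the two routes the paper itself indicates for this proposition: imitating the partial-integration argument from the proof of \cref{thm:unicol-disjcomm} directly on the bicoloured expression~\eqref{eq:bicol-cocycle}, using the matching condition \cref{thm:bicol-liealglifts}\ref{thmitm:bicol-liealglifts-exists-matchp} to cancel the boundary terms at $p$ and the relation $\hat\xi_\wh(q) - \hat\xi_\bl(q) = \Delta_{\Pth\gamma}'$ at $q$. The bookkeeping of the $\tfrac12$-factors and cross-terms works out exactly as you describe (the $\Delta_{\Pth\gamma}'$-cross-term cancels, the $\Delta_{\Pth\rho}'$-cross-term doubles), so your computation reproduces~\eqref{eq:bicol-commmap}, while the $\epsilon$-commutator indeed gives the factor $b(\Delta_{\Pth\gamma}', \Delta_{\Pth\rho}')$.
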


The above \namecref{thm:bicol-commmap} can be proven via the definition~\eqref{eq:bicol-cocycle-viaP} by taking the result~\eqref{eq:unicol-commmap} about unicoloured central extensions and rewriting it in a bicoloured fashion using \cref{thm:bicol-liealglifts}\ref{thmitm:bicol-liealglifts-liftofP}. A different method is to imitate the steps in the proof of \cref{thm:unicol-disjcomm} for the bicoloured expression \eqref{eq:bicol-cocycle} instead, using \cref{thm:bicol-liealglifts}\ref{thmitm:bicol-liealglifts-exists-matchp} once.

With this calculation in hand we are ready for

\begin{proof}[Proof of \cref{thm:bicol-disjcomm}.]
We must show that the commutator~\eqref{eq:bicol-commmap-pre} vanishes. Write $I$ for the bicoloured interval containing the support of $\gamma$ and $J$ for the one corresponding to $\rho$. After possibly enlarging them we may assume that either $I$ contains $p$ and $J$ contains $q$, or the other way around. Suppose we are in the first situation, as for example in the picture
\[
\begin{tikzpicture}[>={Straight Barb}, line width=rule_thickness]	% rule_thickness is defined in tikzlibrarycd.code.tex. So I am copying the line width from the arrow width in tikz-cd diagrams.
\draw[postaction={decorate},
	  decoration={
		  markings,
		  mark=at position 0.0 with {
			  \arrow{>}
		  },
		  mark=at position 0.5 with {
			  \arrow{>}
		  }
	  }
	 ] circle [radius=1];
\draw[{Bracket[]}-{Bracket[]}] ([shift=(-160:1.2)]0,0) arc [radius=1.2, start angle=-160, delta angle=140] node [pos=0.5, below] {$J$};
\draw[{Bracket[]}-{Bracket[]}] ([shift=(20:1.2)]0,0) arc [radius=1.2, start angle=20, delta angle=140] node [pos=0.5, above] {$I$};
\fill (0,1) circle (1.2pt) node [below] {$p$};
\fill (0,-1) circle (1.2pt) node [above] {$q$};
\end{tikzpicture}
\]

Consider the first term in~\eqref{eq:bicol-commmap}. Since $\xi_\wh$ is constant with value $\xi_\wh(q) \in \Lambda_\wh$ outside of $I \cap \circl$, this integral is actually only taken over $I \cap \circl$. Since $I$ and $J$ are disjoint, $\eta_\wh$ is constant with value $\eta_\wh(p) \in \Lambda_\wh$ on $I \cap \circl$. We may therefore write
\begin{equation}
\label{eq:bicol-disjcomm1}
\int_\circldir \langle \dd \xi_\wh, \eta_\wh\rangle_\wh = \bigl\langle \xi_\wh(q) - \xi_\wh(p), \eta_\wh(p) \bigr\rangle_\wh.
\end{equation}
Similarly, the second term in~\eqref{eq:bicol-commmap} is
\begin{equation}
\label{eq:bicol-disjcomm2}
\int_\circrdir \langle \dd\xi_\bl, \eta_\bl\rangle_\bl = \bigl\langle \xi_\bl(p) - \xi_\bl(q), \eta_\bl(p) \bigr\rangle_\bl,
\end{equation}
where $\eta_\bl(p) \in \Lambda_\bl$. Next, apply $(\RR \pi_\wh)^{-1}$ to all entries in the right hand side of~\eqref{eq:bicol-disjcomm1} and $(\RR \pi_\bl)^{-1}$ to all entries in the right hand side of~\eqref{eq:bicol-disjcomm2}. We then recall from \cref{thm:bicol-liealglifts}\ref{thmitm:bicol-liealglifts-exists-matchp} that $(\RR\pi_\wh)^{-1}(\eta_\wh(p))$ and $(\RR\pi_\bl)^{-1}(\eta_\bl(p))$ are equal, and moreover, that they are equal to an element $\nu \in \Lambda_\wh \cap \Lambda_\bl$ since $(\RR\pi_\whbl)^{-1}(\eta_\whbl(p)) \in (\RR\pi_\whbl)^{-1}(\Lambda_\whbl)$. So we can now write the sum of the first two terms of~\eqref{eq:bicol-commmap} as
\[
\Bigl\langle (\RR\pi_\wh)^{-1}\bigl(\xi_\wh(q)\bigr) - (\RR\pi_\bl)^{-1}\bigl(\xi_\bl(q)\bigr), \nu \Bigr\rangle_\Gamma = \langle \Delta_{\Pth\gamma}', \nu \rangle_\Gamma.
\]
This lies in $\ZZ$ since $\Delta_{\Pth\gamma}' \in \Lambda_\wh - \Lambda_\bl$ and $\nu \in \Lambda_\wh \cap \Lambda_\bl$. We may therefore ignore these first two terms of~\eqref{eq:bicol-commmap}.

We will now focus on the last two terms of~\eqref{eq:bicol-commmap}. Because of our assumption on the support of $\gamma$, if we want to write $\Delta_{\Pth\gamma}' = \lambda_\wh - \lambda_\bl$ with $\lambda_\whbl \in (\RR\pi_\whbl)^{-1}(\Lambda_\whbl)$, we may choose $\lambda_\whbl := (\RR\pi_\whbl)^{-1}(\xi_\whbl(q))$. Hence we have
\[
b_0(\Delta_{\Pth\gamma}', \Delta_{\Pth\rho}') = \frac12 \langle \Delta_{\Pth\rho}', \Delta_{\Pth\gamma}' \rangle_\Gamma + \Bigl\langle \Delta_{\Pth\rho}', (\RR\pi_\bl)^{-1}\bigl(\xi_\bl(q)\bigr) \Bigr\rangle_\Gamma.
\]
We therefore see that $b(\Delta_{\Pth\gamma}', \Delta_{\Pth\rho}')$ cancels out the contribution of the exponential of the last two terms of~\eqref{eq:bicol-commmap}. We conclude that in this case indeed $c(\gamma, \rho)c(\rho, \gamma)^{-1} = 1$ and that $(\gamma, z)$ and $(\rho, w)$ commute.

Suppose that instead the interval $I$ contains $q$ and $J$ contains $p$. A similar reasoning as in the first situation shows that we can then write the sum of the first two terms of~\eqref{eq:bicol-commmap} as
\begin{align*}
&\Bigl\langle (\RR \pi_\wh)^{-1} \bigl(\xi_\wh(q)\bigr) - (\RR \pi_\wh)^{-1} \bigl(\xi_\wh(p)\bigr), (\RR \pi_\wh)^{-1} \bigl(\eta_\wh(q)\bigr) \Bigr\rangle_\Gamma + \phantom{{}} \\
&\Bigl\langle (\RR \pi_\bl)^{-1} \bigl(\xi_\bl(p)\bigr) - (\RR \pi_\bl)^{-1} \bigl(\xi_\bl(q)\bigr), (\RR \pi_\bl)^{-1} \bigl(\eta_\bl(q)\bigr) \Bigr\rangle_\Gamma.
\end{align*}
If we substitute $(\RR \pi_\wh)^{-1} (\eta_\wh(q))$ by $\Delta_{\Pth\rho}' + (\RR \pi_\bl)^{-1} (\eta_\bl(q))$ in the above and use that $(\RR \pi_\wh)^{-1} (\xi_\wh(p)) = (\RR \pi_\bl)^{-1} (\xi_\bl(p))$, this becomes
\begin{align*}
&\Bigl\langle (\RR \pi_\wh)^{-1} \bigl(\xi_\wh(q)\bigr) - (\RR \pi_\wh)^{-1} \bigl(\xi_\wh(p)\bigr), \Delta_{\Pth\rho}' \Bigr\rangle_\Gamma + \phantom{{}} \\
&\Bigl\langle (\RR \pi_\wh)^{-1} \bigl(\xi_\wh(q)\bigr) - (\RR \pi_\bl)^{-1} \bigl(\xi_\bl(q)\bigr), (\RR \pi_\bl)^{-1} \bigl(\eta_\bl(q)\bigr) \Bigr\rangle_\Gamma.
\end{align*}
Note that what is in the first slot in this second term is $\Delta_{\Pth\gamma}'$.

Having rewritten these first two terms of~\eqref{eq:bicol-commmap}, we fill it back in there so as to get
\begin{align*}
S(\xi, \eta) - S(\eta, \xi)
	&= \frac12 \langle \Delta_{\Pth\rho}', \Delta_{\Pth\gamma}' \rangle_\Gamma - \Bigl\langle (\RR \pi_\wh)^{-1} \bigl(\xi_\wh(p)\bigr), \Delta_{\Pth\rho}' \Bigr\rangle_\Gamma + \phantom{{}} \\
	 &\phantom{{}={}} \Bigl\langle \Delta_{\Pth\gamma}', (\RR \pi_\bl)^{-1} \bigl(\eta_\bl(q)\bigr) \Bigl\rangle_\Gamma.
\end{align*}
Because $(\RR\pi_\wh)^{-1}(\xi_\wh(p))$ and $(\RR\pi_\bl)^{-1}(\xi_\bl(p))$ are equal, and moreover, they are equal to an element $\nu \in \Lambda_\wh \cap \Lambda_\bl$, we see that the second term in the above lies in $\ZZ$. We may therefore ignore it from now on.

It follows from the definition of $b_0$ and the integrality of the lattices $\Lambda_\whbl$ that
\begin{align*}
b_0(\Delta_{\Pth\gamma}', \Delta_{\Pth\rho}')
	&= -\frac12 \langle \Delta_{\Pth\rho}', \Delta_{\Pth\gamma}' \rangle_\Gamma + \langle \Delta_{\Pth\rho}', \lambda_\wh \rangle_\Gamma \\
	&\equiv -\frac12 \langle \Delta_{\Pth\rho}', \Delta_{\Pth\gamma}' \rangle_\Gamma - \langle \Delta_{\Pth\gamma}', \mu_\bl \rangle_\Gamma \bmod \ZZ,
\end{align*}
where have written $\Delta_{\Pth\gamma}' = \lambda_\wh - \lambda_\bl$ and $\Delta_{\Pth\rho}' = \mu_\wh - \mu_\bl$ for some $\lambda_\whbl, \mu_\whbl \in (\RR\pi_\whbl)^{-1}(\Lambda_\whbl)$. Because of our assumptions on the support of $J$, we may choose $\mu_\whbl := (\RR \pi_\whbl)^{-1}(\eta_\whbl(q))$. Therefore, $b(\Delta_{\Pth\gamma}', \Delta_{\Pth\rho}')$ cancels against
\[
e^{2\pi i \bigl(S(\xi, \eta) - S(\eta, \xi)\bigr)}
\]
in~\eqref{eq:bicol-commmap-pre}. We conclude that also in this case $c(\gamma, \rho)c(\rho, \gamma)^{-1} = 1$.
\end{proof}

\section{Actions of covers of \texorpdfstring{$\Diff_+(S^1)$}{Diff+(S\textonesuperior)} on central extensions}
\label{sec:bicol-centext-diffnS1action}

We assume the setup of \cref{sec:bicol-centext}. That is, we take the input data from \cref{ingreds:bicol} as a given and use the notations $T_\whbl$, $H$, $\liealg t_\whbl$ and $\liealg h$ from \cref{sec:bicol-struct} for the tori and Lie algebras associated to the lattices $\Lambda_\whbl$ and $\Gamma$. We constructed from this data central extensions $\centP(H, (\Lambda_\wh - \Lambda_\bl)/\Gamma)$ and $\centL(T_\wh, H, T_\bl)$ of the group of paths $P(H, (\Lambda_\wh - \Lambda_\bl)/\Gamma)$ and the bicoloured torus loop group $L(T_\wh, H, T_\bl)$, respectively.

Recall from \cref{thm:bicol-diffnS1action} that $P(H, (\Lambda_\wh - \Lambda_\bl)/\Gamma)$ and $L(T_\wh, H, T_\bl)$ both carry actions of the group $\Diff_+^{(n)}(S^1)$, where $n$ is the smallest positive integer such that $n(\Lambda_\wh - \Lambda_\bl) \subseteq \Gamma$. In the proof of that \namecref{thm:bicol-diffnS1action} the action on $L(T_\wh, H, T_\bl)$ was built by first constructing it on $P(H, (\Lambda_\wh - \Lambda_\bl)/\Gamma)$ and then lifting it along the homomorphism $\Pth$ (see \cref{subsec:bicol-discontunicolloops} for the definition of $\Pth$). In this section we will similarly first construct a $\Diff_+^{(n)}(S^1)$-action on the central extension $\centP(H, (\Lambda_\wh - \Lambda_\bl)/\Gamma)$. This will be done in almost exactly the same way as for the unicoloured case in \cref{sec:unicol-diffs1action}. Next, we will lift this action to $\centL(T_\wh, H, T_\bl)$.

\subsection{The action of $\Diff_+^{(n)}(S^1)$ on $\centP(H, (\Lambda_\wh - \Lambda_\bl)/\Gamma)$}
\label{subsec:bicol-diffaction-pathgp}

The proof of \cref{thm:unicol-diffs1fail} also calculates the failure for $\Diff_+^{(n)}(S^1)$ to preserve the cocycle $c'$ defining $\centP(H, (\Lambda_\wh - \Lambda_\bl)/\Gamma)$. The result is as follows.

\begin{prop}
\label{thm:bicol-diffns1failP}
Let $\gamma, \rho \in P(H, (\Lambda_\wh - \Lambda_\bl)/\Gamma)$ and $\Xi, \Eta\colon \RR \to \liealg h$ be the quasi-periodic extensions to $\RR$ of choices of lifts $\xi, \eta\colon [0,1] \to \liealg h$ of $\gamma$ and $\rho$ respectively. Take $[\Phi] \in \Diff_+^{(n)}(S^1)$, where $\Phi \in \Diff_+^{(\infty)}(S^1)$ is a choice of representative. Then
\[
c'\bigl([\Phi]^* \gamma, [\Phi]^* \rho\bigr) = \epsilon(\Delta_\gamma', \Delta_\rho') e^{2\pi i S'(\Phi^* \Xi, \Phi^* \Eta)}
\]
and
\begin{align*}
S'(\Phi^* \Xi, \Phi^* \Eta) &= S'(\Xi, \Eta) + \frac12 \Bigl\langle \Delta_\gamma', \Eta\bigl(\Phi^{-1}(0)\bigr) - \Eta(0) \Bigr\rangle_\Gamma + \phantom{{}} \\
	&\phantom{{}= S'(\Xi, \Eta) + {}} \frac12 \Bigl\langle \Xi\bigl(\Phi^{-1}(0)\bigr) - \Xi(0), \Delta_\rho' \Bigr\rangle_\Gamma.
\end{align*}
\end{prop}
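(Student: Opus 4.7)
The plan is to follow the proof of \cref{thm:unicol-diffs1fail} almost verbatim, replacing the form $\langle \cdot, \cdot \rangle$ on $\liealg t$ with the form $\langle \cdot, \cdot \rangle_\Gamma$ on $\liealg h$, the lattice $\Lambda$ with $\Lambda_\wh - \Lambda_\bl$, and the unicoloured winding element $\Delta_\gamma$ with $\Delta_\gamma'$. No genuinely new idea is required: the formula~\eqref{eq:bicolP-cocycle} for $c'$ has the same shape as the unicoloured cocycle~\eqref{eq:unicol-cocycle}, and only two properties of the setup are used in the unicoloured argument, namely that $\Xi$ is quasi-periodic (which holds here since $\Xi(\theta+1) - \Xi(\theta) = \Delta_\gamma' \in \Lambda_\wh - \Lambda_\bl$ for all $\theta \in \RR$) and that $\Phi$ lifts an orientation preserving circle diffeomorphism (so that $\Phi(\theta+1) = \Phi(\theta) + 1$).

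First I would observe that precomposition with a circle diffeomorphism does not alter the winding element of an element of $P(H, (\Lambda_\wh-\Lambda_\bl)/\Gamma)$, exactly as in the unicoloured case: a quasi-periodic lift of $[\Phi]^*\gamma$ is given by $\Phi^*\Xi$, and
\[
(\Phi^*\Xi)(1) - (\Phi^*\Xi)(0) = \Xi\bigl(\Phi^{-1}(0)+1\bigr) - \Xi\bigl(\Phi^{-1}(0)\bigr) = \Delta_\gamma',
\]
so $\Delta'_{[\Phi]^*\gamma} = \Delta_\gamma'$ and likewise for $\rho$. This immediately gives $\epsilon(\Delta'_{[\Phi]^*\gamma}, \Delta'_{[\Phi]^*\rho}) = \epsilon(\Delta_\gamma', \Delta_\rho')$, accounting for the lattice cocycle factor.

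Next I would compute $S'(\Phi^*\Xi, \Phi^*\Eta)$ from the definition. Applying the chain rule to $(\Phi^*\Xi)'(\theta) = \Xi'(\Phi^{-1}(\theta))\cdot (\Phi^{-1})'(\theta)$ and substituting $\theta \mapsto \Phi(\theta)$ in the integral reduces the bulk term to $\tfrac12\int_{\Phi^{-1}(0)}^{\Phi^{-1}(1)} \langle \Xi'(\theta), \Eta(\theta)\rangle_\Gamma \dd\theta$. Splitting this as $\int_0^1 - \int_0^{\Phi^{-1}(0)} + \int_1^{\Phi^{-1}(1)}$ produces the term $S'(\Xi, \Eta) - \tfrac12\langle \Delta_\gamma', \Eta(0)\rangle_\Gamma$ (absorbing the boundary correction from $S'$) plus two leftover integrals. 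For the integral over $[1, \Phi^{-1}(1)]$ I would use $\Phi^{-1}(1) = \Phi^{-1}(0)+1$ together with the quasi-periodicity relations $\Xi(\theta+1) = \Xi(\theta) + \Delta_\gamma'$ and $\Eta(\theta+1) = \Eta(\theta) + \Delta_\rho'$ to shift the variable back to $[0, \Phi^{-1}(0)]$, picking up a term $\tfrac12\langle \Xi(\Phi^{-1}(0)) - \Xi(0), \Delta_\rho'\rangle_\Gamma$. The other leftover integral over $[0, \Phi^{-1}(0)]$ then cancels, and reassembling leaves precisely the claimed expression for $S'(\Phi^*\Xi, \Phi^*\Eta)$.

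The main obstacle is essentially just bookkeeping in the partial-integration/substitution step, as it was in the unicoloured case; no new geometric or algebraic subtlety arises because the whole computation takes place inside $\liealg h$ with its single form $\langle \cdot, \cdot \rangle_\Gamma$. Note that the proposition does not yet assert that the expression defines an action on $\centP(H, (\Lambda_\wh - \Lambda_\bl)/\Gamma)$ — that will require independence of the choice of representative $\Phi \in \Diff_+^{(\infty)}(S^1)$ of $[\Phi] \in \Diff_+^{(n)}(S^1)$, which is where the integer $n$ (minimal with $n(\Lambda_\wh - \Lambda_\bl) \subseteq \Gamma$) enters, analogously to how evenness of $\Lambda$ was used in \cref{sec:unicol-diffs1action}; here we only record the failure of invariance, so the argument is a direct translation.
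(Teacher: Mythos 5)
Your proposal is correct and is exactly what the paper does: the paper's proof of this proposition consists of the single remark that the computation in the proof of \cref{thm:unicol-diffs1fail} carries over verbatim, with $\langle \cdot, \cdot \rangle_\Gamma$, $\Lambda_\wh - \Lambda_\bl$ and $\Delta'$ in place of their unicoloured counterparts, and your bookkeeping of the substitution, the splitting of the integral and the quasi-periodicity step reproduces that argument faithfully. Your closing remark that independence of the representative $\Phi$ (where $n$ enters) is deferred to the definition of the action also matches the paper's organisation.
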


With this precise expression for the failure in hand we can now define the action of $\Diff_+^{(n)}(S^1)$ on $\centP(H, (\Lambda_\wh - \Lambda_\bl)/\Gamma)$. Let $[\Phi] \in \Diff_+^{(n)}(S^1)$ and $(\gamma, z) \in \centP(H, (\Lambda_\wh - \Lambda_\bl)/\Gamma)$. Then we set
\begin{equation}
\label{eq:bicol-diffns1-actionP}
[\Phi] \cdot (\gamma, z) := \Bigl([\Phi]^*\gamma, d'\bigl([\Phi], \gamma\bigr) \cdot z\Bigr),
\end{equation}
where $[\Phi]^*\gamma$ refers to the action of $\Diff_+^{(n)}(S^1)$ on the non-centrally extended group $P(H, (\Lambda_\wh - \Lambda_\bl)/\Gamma)$ constructed in the proof of \cref{thm:bicol-diffnS1action} and
\begin{equation}
\label{eq:bicol-diffns1-actionP-d}
d'\bigl([\Phi], \gamma\bigr) := e^{\pi i \bigl\langle \Xi(\Phi^{-1}(0)) - \Xi(0), \Delta_\gamma' \bigr\rangle_\Gamma} \in \phasegp.
\end{equation}

The value of $d'$ does not depend on the choice of lift $\Xi$. We show that it neither depends on the representative $\Phi$ of the equivalence class $[\Phi]$ in a similar, but slightly more subtle way compared to the unicoloured case. A different choice would namely be of the form $\Phi + nk$ for some $k \in \ZZ$, and seeing whether $d'([\Phi + nk], \gamma) = d'([\Phi], \gamma)$ comes down to proving that $nk \langle \Delta_\gamma', \Delta_\gamma' \rangle_\Gamma \in 2\ZZ$. We may assume that $k=1$. Because $\Delta_\gamma' \in \Lambda_\wh - \Lambda_\bl$ we can make a choice of decomposition $\Delta_\gamma' = \lambda_\wh - \lambda_\bl$ for some $\lambda_\whbl \in (\RR\pi_\whbl)^{-1}(\Lambda_\whbl)$. Then expand as follows:
\[
n \langle \Delta_\gamma', \Delta_\gamma' \rangle_\Gamma = n \langle \lambda_\wh, \lambda_\wh \rangle_\Gamma -2n \langle \lambda_\wh, \lambda_\bl \rangle_\Gamma + n \langle \lambda_\bl, \lambda_\bl \rangle_\Gamma.
\]
Given that the lattices $\Lambda_\whbl$ are even, the two outer terms on the right hand side are obviously in $2\ZZ$. But so is the middle term since $n \lambda_\wh \in \Gamma$ which gives $\langle n \lambda_\wh, \lambda_\bl \rangle_\Gamma \in \ZZ$. Hence $d'$ is well-defined.

The result of \cref{thm:bicol-diffns1failP} shows that $d'$ satisfies an equation similar to~\eqref{eq:diffs1action-d2}, with $[\Phi]$ and $c'$ in place of $\phi$ and $c$ respectively. That is,
\[
d'\bigl([\Phi], \cdot\bigr) \colon P\bigl(H, (\Lambda_\wh - \Lambda_\bl)/\Gamma\bigr) \to \phasegp
\]
is a $1$-cochain exhibiting the $2$-cocycle
\[
(\gamma, \rho) \mapsto c'\bigl([\Phi]^* \gamma, [\Phi]^* \rho\bigr) c'(\gamma, \rho)^{-1}
\]
as a $2$-coboundary. Therefore \eqref{eq:bicol-diffns1-actionP} defines an automorphism of $\centP(H, (\Lambda_\wh - \Lambda_\bl)/\Gamma)$. That $d'$ is compatible with the composition in $\Diff_+^{(n)}(S^1)$ is proven in literally the same way as in the unicoloured situation. We conclude that~\eqref{eq:bicol-diffns1-actionP} is a well-defined $\Diff_+^{(n)}(S^1)$-action on $\centP(H, (\Lambda_\wh - \Lambda_\bl)/\Gamma)$.

\subsection{The action of $\Diff_+^{(n)}(S^1)$ on $\centL(T_\wh, H, T_\bl)$}
\label{subsec:bicol-diffaction}

We are now ready to lift the $\Diff_+^{(n)}(S^1)$-action from $\centP(H, (\Lambda_\wh - \Lambda_\bl)/\Gamma)$ to $\centL(T_\wh, H, T_\bl)$. Let $[\Phi] \in \Diff_+^{(n)}(S^1)$ and $(\gamma, z) \in \centL(T_\wh, H, T_\bl)$. Then define
\begin{equation}
\label{eq:bicol-diffns1-action}
[\Phi] \cdot (\gamma, z) := \Bigl([\Phi] \cdot \gamma, d'\bigl([\Phi], \Pth\gamma\bigr) \cdot z\Bigr),
\end{equation}
where $[\Phi] \cdot \gamma$ refers to the action of $\Diff_+^{(n)}(S^1)$ on the non-centrally extended group $L(T_\wh, H, T_\bl)$ constructed in the proof of \cref{thm:bicol-diffnS1action}.

It is easily checked that, using that we already confirmed it for the action~\eqref{eq:bicol-diffns1-actionP} on $\centP(H, (\Lambda_\wh - \Lambda_\bl)/\Gamma)$, also~\eqref{eq:bicol-diffns1-action} defines an action of $\Diff_+^{(n)}(S^1)$ on $\centL(T_\wh, H, T_\bl)$. Alternatively, we can use the fact that $\centL(T_\wh, H, T_\bl)$ is the pullback of $\centP(H, (\Lambda_\wh - \Lambda_\bl)/\Gamma)$ along the $\Diff_+^{(n)}(S^1)$-equivariant homomorphism $\Pth$, and then note that such a pullback operation is functorial. By design of the $\Diff_+^{(n)}(S^1)$-actions, also the homomorphism $\widetilde\Pth$ that we defined in \cref{constr:bicol-centext} is $\Diff_+^{(n)}(S^1)$-equivariant.

\begin{prop}[$\Diff_+^{(n)}(S^1)$-equivariance with respect to $\centL H$]
\label{thm:bicol-inclcentextLH-diffnS1equiv}
The inclusion homomorphism $\widetilde{\Bi}$ in~\eqref{eq:bicol-liftofinclLHtocentexts} is equivariant with respect to the $\Diff_+(S^1)$-action on $\centL H$ defined in \cref{sec:unicol-diffs1action} and the $\Diff_+^{(n)}(S^1)$-action on $\centL(T_\wh, H, T_\bl)$ defined by~\eqref{eq:bicol-diffns1-action}.
\end{prop}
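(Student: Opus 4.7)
The plan is to unfold both sides of the claimed equivariance and show they agree componentwise. Fix $\phi \in \Diff_+(S^1)$ and a lift $[\Phi] \in \Diff_+^{(n)}(S^1)$ of $\phi$ along the covering homomorphism $\Diff_+^{(n)}(S^1) \twoheadrightarrow \Diff_+(S^1)$, and let $(\gamma, z) \in \centL H$. First I would expand
\[
\widetilde{\Bi}\bigl(\phi \cdot (\gamma, z)\bigr) = \Bigl(\Bi(\phi^* \gamma), d(\phi, \gamma) \cdot z\Bigr)
\]
using the definition~\eqref{eq:diffs1action-ansatz}--\eqref{eq:unicol-diffs1action} of the action on $\centL H$, and
\[
[\Phi] \cdot \widetilde{\Bi}(\gamma, z) = \Bigl([\Phi] \cdot \Bi(\gamma), d'\bigl([\Phi], \Pth \Bi(\gamma)\bigr) \cdot z\Bigr)
\]
using~\eqref{eq:bicol-diffns1-action}. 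Agreement of the first components is immediate from \cref{rmk:bicol-diffnS1equivariance-wrtunicol}, which records exactly that $\Bi$ intertwines the non-centrally extended actions.

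What remains is to match the phase factors, i.e.\ to verify $d(\phi, \gamma) = d'([\Phi], \Pth \Bi(\gamma))$. Here I would invoke the commutative triangle~\eqref{eq:bicol-LH-commtriangle}, which identifies $\Pth \Bi(\gamma)$ with the image of $\gamma$ under the inclusion $\iota\colon LH \hookrightarrow P(H, (\Lambda_\wh - \Lambda_\bl)/\Gamma)$ obtained by cutting $S^1$ at $q$. A Lie algebra lift $\xi\colon [0,1] \to \liealg h$ of $\gamma$ as in~\eqref{eq:unicol-liealgpaths}, made by cutting at the privileged point $q$, therefore serves simultaneously as a lift of $\Pth \Bi(\gamma)$ in the sense of~\eqref{eq:bicolP-liealgpaths}. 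In particular, the quasi-periodic extension $\Xi\colon \RR \to \liealg h$ used to define $d(\phi, \gamma)$ coincides with the one used to define $d'([\Phi], \Pth \Bi(\gamma))$, and the winding elements match: $\Delta_{\Pth \Bi(\gamma)}' = \Delta_\gamma \in \Gamma \subseteq \Lambda_\wh - \Lambda_\bl$. Since by construction in \cref{subsec:bicol-inclcentextLH} the central extension $\centL H$ is built using the form $\langle \cdot, \cdot \rangle_\Gamma$ and the same privileged point $q$ as $\centL(T_\wh, H, T_\bl)$, the formulas~\eqref{eq:unicol-diffs1action} and~\eqref{eq:bicol-diffns1-actionP-d} evaluate to the same expression
\[
e^{\pi i \bigl\langle \Xi(\Phi^{-1}(0)) - \Xi(0), \Delta_\gamma \bigr\rangle_\Gamma},
\]
completing the verification.

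The argument is essentially bookkeeping once one observes the compatibility of the data underlying $\centL H$ and $\centL(T_\wh, H, T_\bl)$: the same form $\langle \cdot, \cdot \rangle_\Gamma$, the same $q$, and the fact that $\Delta_{\Pth \Bi(\gamma)}'$ lives in $\Gamma$ and equals $\Delta_\gamma$. The only conceptual point that needs care is that one must check that the $d'$ in~\eqref{eq:bicol-diffns1-action}, which a priori takes an element of $\centP(H, (\Lambda_\wh - \Lambda_\bl)/\Gamma)$ through $\Pth$, reduces on $\widetilde{\Bi}(\centL H)$ to the unicoloured recipe; this is precisely what the commutative triangle~\eqref{eq:bicol-LH-commtriangle} delivers. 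There is no genuine obstacle to the proof beyond this identification.
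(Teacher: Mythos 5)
Your proposal is correct and follows essentially the same route as the paper's own proof: reduce the first components to the equivariance of $\Bi$ recorded in \cref{rmk:bicol-diffnS1equivariance-wrtunicol}, and then match the phase factors $d$ and $d'$ by observing, via the commutative triangle~\eqref{eq:bicol-LH-commtriangle}, that a quasi-periodic lift of $\Pth(\Bi\gamma)$ is also a lift of $\gamma$ and that $\Delta'_{\Pth(\Bi\gamma)} = \Delta_\gamma$ as established in \cref{subsec:bicol-inclcentextLH}. Your extra remarks about the shared form $\langle \cdot, \cdot \rangle_\Gamma$ and privileged point $q$ are exactly the implicit compatibilities the paper relies on.
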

\begin{proof}
We need to prove that for all $\gamma \in LH$, $z \in \phasegp$ and $\Phi \in \Diff_+^{(\infty)}(S^1)$ there holds
\[
[\Phi] \cdot \bigl(\Bi(\gamma), z\bigr) = \Bigl(\Bi\bigl([\Phi]^*\gamma\bigr), d\bigl([\Phi], \gamma\bigr) \cdot z\Bigr),
\]
where $d$ is the $1$-cochain on $LH$ defined by~\eqref{eq:unicol-diffs1action}. The left hand side is by definition
\[
\Bigl([\Phi] \cdot \Bi(\gamma), d'\bigl([\Phi], \Pth(\Bi\gamma)\bigr) \cdot z\Bigr),
\]
and we already noted in \cref{rmk:bicol-diffnS1equivariance-wrtunicol} the equivariance of $\Bi$. Hence, what is left to show is that
\[
d'\bigl([\Phi], \Pth(\Bi\gamma)\bigr) = d\bigl([\Phi], \gamma\bigr).
\]
We see that this holds if we compare the expressions~\eqref{eq:bicol-diffns1-actionP-d} and~\eqref{eq:unicol-diffs1action}. Indeed, a quasi-periodic lift $\Xi\colon \RR \to \liealg h$ of $\Pth(\Bi\gamma)$ is also a lift of $\gamma$ thanks to the commutativity of the triangle~\eqref{eq:bicol-LH-commtriangle}, and we furthermore already learned that $\Delta'_{\Pth(\Bi\gamma)} = \Delta_\gamma$ in \cref{subsec:bicol-inclcentextLH}.
\end{proof}

\section{Irreducible, positive energy representations}
\label{sec:bicol-reptheory}

Assume the setup of \cref{sec:bicol-centext} which allowed us to construct a central extension $\centL(T_\wh, H, T_\bl)$ of the bicoloured torus loop group $L(T_\wh, H, T_\bl)$. In this section we will construct and classify the irreducible, positive energy representations of $\centL(T_\wh, H, T_\bl)$. This will be done in a way entirely similar to our work for unicoloured central extensions $\centL T$. The role of the normal subgroup $(\centL T)_0$ will in the bicoloured situation be played by a certain normal subgroup denoted by $\centker(\Delta' \circ \Pth)$. Let us therefore define this group and try to understand its structure.

We begin by studying the situation before taking central extensions. The short exact sequence~\eqref{eq:bicol-discontpaths-ses} involving the groups $L(T_\wh, H, T_\bl)$ and $P(H, (\Lambda_\wh - \Lambda_\bl)/H)$ does not split in general. However, its subsequence
\begin{equation}
\label{eq:bicol-discontpaths-subses}
\begin{tikzcd}
0 \ar[r] & \displaystyle\frac{\Lambda_\wh \cap \Lambda_\bl}{\Gamma} \ar[r] & (\Pth)^{-1}\bigl(\iota(LH)\bigr) \ar[r, "\Pth"] &[1em] \iota(LH) \ar[r] & 0
\end{tikzcd}
\end{equation}
\emph{does} split. The subgroup $(\Pth)^{-1}(\iota(LH))$ namely consists of those bicoloured loops $(\gamma_\wh, \gamma_\match, \gamma_\bl)$ for which the unique lifts $\hat\gamma_\whbl$ to $H$ of $\gamma_\whbl$ (see \cref{subsec:bicol-discontunicolloops} for their definitions) that match at $p$ also match with each other at $q$, but need not necessarily match with $\gamma_\match(q)$. (It might be helpful to compare this with the characterisation of the image of $\Bi$ in \cref{subsec:bicol-inclLH}.) That is, there holds $\hat\gamma_\wh(p) = \gamma_\match(p) = \hat\gamma_\bl(p)$ and $\hat\gamma_\wh(q) = \hat\gamma_\bl(q)$, but in general $\gamma_\match(q)$ is of the form
\[
\hat\gamma_\wh(q) + [\nu] = \hat\gamma_\bl(q) + [\nu]
\]
for some $[\nu] \in (\Lambda_\wh \cap \Lambda_\bl)/\Gamma \subseteq H$. Therefore, if we recall the definition of the first arrow in~\eqref{eq:bicol-discontpaths-subses} from \cref{subsec:bicol-discontunicolloops}, we see that the homomorphism $(\gamma_\wh, \gamma_\match, \gamma_\bl) \mapsto [\nu]$ is a left splitting of~\eqref{eq:bicol-discontpaths-subses}. Equivalently, a right splitting can be given by making the proof in~\cref{subsec:bicol-discontunicolloops} of the surjectivity of the homomorphism $\Pth$ more precise: we send a path $\gamma \in \iota(LH)$ to the bicoloured loop $(\gamma_\wh, \gamma_\match, \gamma_\bl)$, where
\begin{align*}
\gamma_\wh &:= \phasegp \pi_\wh \circ \gamma|_{[1/2, 1]} \colon \circl \to T_\wh \\
\gamma_\match(p) &:= \gamma(1/2) \in H \\
\gamma_\match(q) &:= \gamma(0) = \gamma(1) \in H \\
\gamma_\bl &:= \phasegp \pi_\bl \circ \gamma|_{[0, 1/2]} \colon \circr \to T_\bl.
\end{align*}

These splittings now allow us to understand the structure of $(\Pth)^{-1}(\iota(LH))$. There is an isomorphism of abelian groups
\begin{equation}
\label{eq:bicol-decomp-subgp}
(\Pth)^{-1}\bigl(\iota(LH)\bigr) \xrightarrow{\sim} \frac{\Lambda_\wh \cap \Lambda_\bl}{\Gamma} \oplus \iota(LH)
\end{equation}
sending a bicoloured loop $\gamma = (\gamma_\wh, \gamma_\match, \gamma_\bl)$ to the pair $([\nu], \Pth \gamma)$. The $\Diff_+^{(n)}(S^1)$-action on the full group $L(T_\wh, H, T_\bl)$ descends to $\Diff_+(S^1)$ when restricted to the subgroup $(\Pth)^{-1}(\iota(LH))$. This is because the shift diffeomorphism $\theta \mapsto \theta + 1$ of $\RR$, acting on $P(H, (\Lambda_\wh - \Lambda_\bl)/\Gamma)$, acts trivially when restricted to $\iota(LH)$. The isomorphism~\eqref{eq:bicol-decomp-subgp} is $\Diff_+(S^1)$-equivariant if we let $\Diff_+(S^1)$ act on the right hand side by only affecting the $\iota(LH)$-summand thanks to the equivariance of the homomorphism $\Pth$.

Now define a central extension $\centL H$ of $LH$ using the input data listed in \cref{subsec:bicol-inclcentextLH}. As explained in \cref{rmk:bicol-Bi-centext}, the homomorphism $\iota$ of abelian groups then lifts to a homomorphism
\[
\centiota\colon \centL H \hookrightarrow \centP\bigl(H, (\Lambda_\wh - \Lambda_\bl)/\Gamma\bigr)
\]
of non-abelian groups which is the identity on the central subgroups $\phasegp$. The isomorphism~\eqref{eq:bicol-decomp-subgp} then obviously lifts to a $\Diff_+(S^1)$-equivariant isomorphism from the restriction of $\centL(T_\wh, H, T_\bl)$ to $(\Pth)^{-1}(\iota(LH))$ towards
\begin{equation}
\label{eq:bicol-pathgp-subgp}
\frac{\Lambda_\wh \cap \Lambda_\bl}{\Gamma} \oplus \centiota(\centL H).
\end{equation}
It sends an element $(\gamma, z)$ to $([\nu], (\Pth \gamma, z))$.

The subgroup $(\Pth)^{-1}(\iota(LH)_0)$ of $(\Pth)^{-1}(\iota(LH))$ deserves special attention for the study of the representation theory of $\centL(T_\wh, H, T_\bl)$ we are about to commence. It is the inverse image under $\Pth$ of the identity component $\iota(LH)_0$ of $P(H, (\Lambda_\wh - \Lambda_\bl)/\Gamma)$, and contains the identity component $\ker \Delta$ of $L(T_\wh, H, T_\bl)$ (strictly, unless $\Gamma = \Lambda_\wh \cap \Lambda_\bl$ so that $\Pth$ is an isomorphism). An equivalent characterisation of $(\Pth)^{-1}(\iota(LH)_0)$ is that it is the kernel of the composite homomorphism
\[
\begin{tikzcd}
	L(T_\wh, H, T_\bl) \ar[r, twoheadrightarrow, "\Pth"] &[1em]
	P\bigl(H, (\Lambda_\wh - \Lambda_\bl)/\Gamma\bigr) \ar[r, twoheadrightarrow, "\Delta'"] &[0.5em]
	\Lambda_\wh - \Lambda_\bl.
\end{tikzcd}
\]
We will henceforth use the notation $\ker(\Delta' \circ \Pth)$ instead.

We write $\centker(\Delta' \circ \Pth)$ for the restriction of $\centL(T_\wh, H, T_\bl)$ to $\ker(\Delta' \circ \Pth)$ and observe that the restriction of the aforementioned isomorphism having codomain~\eqref{eq:bicol-pathgp-subgp} takes the form
\begin{equation}
\label{eq:bicol-decompsubgp-centext}
\centker(\Delta' \circ \Pth) \xrightarrow{\sim} \frac{\Lambda_\wh \cap \Lambda_\bl}{\Gamma} \times \centiota(\centL H)_0.
\end{equation}

\subsection{Irreducible representations of $\centker(\Delta' \circ \Pth)$}
\label{subsec:bicol-irrepsofsubgp}

Recall from \cref{subsec:unicol-Vtirreps,subsec:unicol-irrepsofidcomp} that we are able to equip the abelian group $(LH)_0$ and its central extension $(\centL H)_0$ with structures of topological groups using the bi-additive form on $\Gamma$. Via the isomorphisms~\eqref{eq:bicol-decomp-subgp} and~\eqref{eq:bicol-decompsubgp-centext} the groups $\ker(\Delta' \circ \Pth)$ and $\centker(\Delta' \circ \Pth)$ then acquire topological group structures as well if we give the finite abelian group $(\Lambda_\wh \cap \Lambda_\bl)/\Gamma$ the discrete topology.

Now define for every pair of characters $\chi$ and $l$ of the finite abelian group $(\Lambda_\wh \cap \Lambda_\bl)/\Gamma$ and the torus $H$, respectively, a representation $W_{\chi, l}$ of $\centker(\Delta' \circ \Pth)$ on the Hilbert space tensor product
\[
\hilb S_{\chi, l} := \CC_\chi \otimes_\CC \hilb S_l := \CC_\chi \otimes_\CC \CC_l \otimes_\CC \hilb S,
\]
where $\CC_\chi$ and $\CC_l$ denote copies of $\CC$, as follows. Let $(\gamma, z) \in \centker(\Delta' \circ \Pth)$ and consider its image
\[
\bigl([\nu], (\Pth \gamma, z)\bigr) \in \frac{\Lambda_\wh \cap \Lambda_\bl}{\Gamma} \times \centiota(\centL H)_0
\]
under the isomorphism~\eqref{eq:bicol-decompsubgp-centext}. Then make $[\nu]$ (which can alternatively be written as $\gamma_\match(q) - \Pth(\gamma)(1)$) act on $\CC_\chi$ via $\chi$, and let $(\Pth \gamma, z)$ act on $\hilb S_l$ via the representation $W_l$ of $\centL H$ defined in \cref{subsec:unicol-irrepsofidcomp}. That is, $W_{\chi, l}$ is the tensor product representation of $\chi$ and $W_l$. It is irreducible because $\chi$ and $W_l$ are.

\begin{rmk}
\label{rmk:bicol-irrepsubgp-depl}
To clarify how $W_{\chi,l}$ depends on $l$, we note that
\[
W_{\chi,l}(\gamma, z) = e^{2\pi i \langle l, \avg \hat\xi \rangle_\Gamma} \cdot W_{\chi,0}(\gamma, z),
\]
where $\hat\xi\colon [0,1] \to \liealg h$ is any lift of $\Pth \gamma$ as in~\eqref{eq:bicolP-liealgpaths} and
\[
\avg \hat\xi := \int_0^1 \hat\xi(\theta) \dd\theta.
\]
\end{rmk}

Let $m$ be, as in \cref{subsec:unicol-irrepsofidcomp}, the smallest positive integer such that $m\langle l, l \rangle \in 2\ZZ$. Now define a representation $R_{\chi,l}$ of $\Rot^{(m)}(S^1)$ on $\hilb S_{\chi, l}$ by acting as the identity on the tensor factor $\CC_\chi$ and as $R_l$ on $\hilb S_l$. So $R_{\chi,l}$ can be said to be equal to $R_l$ and the subscript $\chi$ is actually irrelevant---it serves, just like in the notation $\hilb S_{\chi, l}$, as a reminder that $R_{\chi,l}$ is associated to the representation $W_{\chi,l}$ and the latter \emph{does} depend on $\chi$.

\begin{prop}
The $\Rot^{(m)}(S^1)$-action $R_{\chi, l}$ on the Hilbert space $\hilb S_{\chi, l}$ intertwines in the manner~\eqref{eq:rotS1intertwin} with the representation $W_{\chi,l}$ of $\centker(\Delta' \circ \Pth)$.
\end{prop}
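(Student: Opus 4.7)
The strategy is to reduce the intertwining property to the already-established one for $R_l$ with $W_l$ on the (image of the) identity component $(\centL H)_0$, by exploiting the product decomposition~\eqref{eq:bicol-decompsubgp-centext} of $\centker(\Delta' \circ \Pth)$ and the fact that both $W_{\chi,l}$ and $R_{\chi,l}$ are tensor products with respect to this decomposition.

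First I would verify that the $\Diff_+^{(n)}(S^1)$-action on $\centL(T_\wh, H, T_\bl)$ restricts to $\centker(\Delta' \circ \Pth)$ and descends there to an action of $\Diff_+(S^1)$. This follows from two observations: (i) if $\gamma \in \ker(\Delta' \circ \Pth)$ then $\Pth \gamma \in \iota(LH)$ satisfies $(\Pth\gamma)(1) = (\Pth\gamma)(0)$, and so does $\Phi^*\Pth\gamma$ for any $\Phi \in \Diff_+^{(\infty)}(S^1)$; (ii) on such paths the twist factor $d'([\Phi], \Pth\gamma) = e^{\pi i \langle \Xi(\Phi^{-1}(0)) - \Xi(0), \Delta_{\Pth\gamma}' \rangle_\Gamma}$ is trivial since $\Delta_{\Pth\gamma}' = 0$, so the action of the shift $\theta \mapsto \theta+1$ is trivial on $\centker(\Delta' \circ \Pth)$. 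In particular, via pullback along $\Rot^{(m)}(S^1) \twoheadrightarrow \Rot(S^1)$ we get an action of $\Rot^{(m)}(S^1)$ on $\centker(\Delta' \circ \Pth)$.

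Next I would show that this rotation action preserves the decomposition~\eqref{eq:bicol-decompsubgp-centext}, acting trivially on the factor $(\Lambda_\wh \cap \Lambda_\bl)/\Gamma$ and as the standard $\Rot(S^1)$-action on the identity component $(\centL H)_0$ (transported via $\centiota$) on the second factor. The second claim is immediate from the vanishing of $d'$ together with the fact that $\centiota$ is $\Diff_+^{(n)}(S^1)$-equivariant (\cref{thm:bicol-inclcentextLH-diffnS1equiv} applied in the reverse direction, via the pullback description in \cref{rmk:bicol-Bi-centext}). For the first claim, writing $[\nu] = \gamma_\match(q) - (\Pth\gamma)(1) \in (\Lambda_\wh \cap \Lambda_\bl)/\Gamma$ and using the formula $(\Phi \cdot \gamma)_\match(q) = \gamma_\match(q) - (\Pth\gamma)(0) + (\Phi^*\Pth\gamma)(0)$ from the proof of \cref{thm:bicol-diffnS1action}, together with $(\Phi^*\Pth\gamma)(0) = (\Phi^*\Pth\gamma)(1)$ (since $\Pth\gamma \in \iota(LH)$), one computes directly that $(\Phi \cdot \gamma)_\match(q) - \Pth(\Phi \cdot \gamma)(1) = [\nu]$.

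With these two ingredients in place the proof of the intertwining relation is immediate. Under the decomposition~\eqref{eq:bicol-decompsubgp-centext} we have $W_{\chi,l} = \chi \otimes W_l|_{\centiota(\centL H)_0}$ and $R_{\chi,l} = \id_{\CC_\chi} \otimes R_l$ by definition. Both conjugation by $R_{\chi,l}[\Phi_\theta]$ and the rotation action on the group respect the tensor decomposition and act trivially on the $\CC_\chi$/$(\Lambda_\wh \cap \Lambda_\bl)/\Gamma$ factor (the former tautologically, the latter by the previous paragraph). So the required equality reduces on the $\CC_\chi$ factor to the trivial identity $\chi([\nu]) = \chi([\nu])$ and on the $\hilb S_l$ factor to the identity $R_l[\Phi_\theta] W_l(\Pth\gamma, z) R_l[\Phi_\theta]^* = W_l(\phi_\theta \cdot (\Pth\gamma, z))$, which is precisely the intertwining of $R_l$ with $W_l$ on $(\centL H)_0$ established in \cref{subsec:unicol-irrepsofidcomp}. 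The only potential obstacle is bookkeeping around the covering groups $\Rot^{(m)}(S^1) \to \Rot(S^1)$ versus $\Rot^{(n)}(S^1) \to \Rot(S^1)$ and ensuring that both sides of the intertwining equation are being read in the correct group, but since both actions ultimately factor through $\Rot(S^1)$ on $\centker(\Delta' \circ \Pth)$, this causes no genuine difficulty.
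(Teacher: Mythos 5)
Your proposal is correct and follows essentially the same route as the paper's proof: the paper's direct computation on vectors $1 \otimes v$ rests on exactly the two facts you isolate, namely that the twist factor $d'([\Phi_\theta], \Pth\gamma)$ is trivial on $\centker(\Delta' \circ \Pth)$ and that $\gamma_\match(q) - (\Pth\gamma)(1)$ is unchanged under the rotation action (computed from the formula in \cref{thm:bicol-diffnS1action} together with $(\Pth\gamma)(0) = (\Pth\gamma)(1)$ and the same for its rotate), after which everything reduces to the unicoloured intertwining of $R_l$ with $W_l$. Your repackaging of this as equivariance of the decomposition~\eqref{eq:bicol-decompsubgp-centext} is just a mild reorganisation of the same argument.
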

\begin{proof}
Let $[\Phi_\theta] \in \Rot^{(m)}(S^1)$, $(\gamma, z) \in \centker(\Delta' \circ \Pth)$ and $1 \otimes v \in \hilb S_{\chi, l}$ a vector with $v \in \hilb S_l$. Then there holds on the one hand
\begin{multline*}
R_{\chi, l}[\Phi_\theta] W_{\chi, l}(\gamma, z) R_{\chi, l}[\Phi_\theta]^*(1 \otimes v) \\
\begin{aligned}
	&= R_{\chi, l}[\Phi_\theta] W_{\chi, l}(\gamma, z) \bigl(1 \otimes R_l[\Phi_\theta]^*(v)\bigr) \\
	&= R_{\chi, l}[\Phi_\theta] \Bigl(\chi\bigl(\gamma_\match(q) - \Pth(\gamma)(1)\bigr) \otimes W_l(\Pth \gamma, z) R_l[\Phi_\theta]^*(v)\Bigr) \\
	&= \chi\bigl(\gamma_\match(q) - \Pth(\gamma)(1)\bigr) \otimes R_l[\Phi_\theta] W_l(\Pth \gamma, z) R_l[\Phi_\theta]^*(v),
\end{aligned}
\end{multline*}
and because of the way that $R_l$ intertwines with $W_l$ we can write
\[
R_l[\Phi_\theta] W_l(\Pth \gamma, z) R_l[\Phi_\theta]^* = W_l\bigl([\Phi_\theta] \cdot (\Pth\gamma, z)\bigr) = W_l\bigl([\Phi_\theta]^*\Pth(\gamma), z\bigr).
\]
On the other hand, since also $[\Phi_\theta] \cdot (\gamma, z) = ([\Phi_\theta]^*\gamma, z)$,
\begin{multline*}
W_{\chi,l}\bigl([\Phi_\theta] \cdot (\gamma, z)\bigr)(1 \otimes v) \\
	= \chi\Bigl(\bigl([\Phi_\theta]^*\gamma\bigr)_\match(q) - \Pth\bigl([\Phi_\theta]^*\gamma\bigr)(1)\Bigr) \otimes W_l\bigl(\Pth [\Phi_\theta]^*\gamma, z\bigr).
\end{multline*}
We have by the definition of the action of $\Diff_+^{(n)}(S^1)$ on $L(T_\wh, H, T_\bl)$ in the proof of \cref{thm:bicol-diffnS1action} that $\Pth ([\Phi_\theta]^*\gamma) = [\Phi_\theta] \cdot (\Pth \gamma)$ and that
\begin{multline*}
\bigl([\Phi_\theta]^*\gamma\bigr)_\match(q) - \Pth\bigl([\Phi_\theta]^*\gamma\bigr)(1) \\
= \gamma_\match(q) - (\Pth \gamma)(0) + \bigl([\Phi_\theta]^*\Pth(\gamma)\bigr)(0) - \bigl([\Phi_\theta]^*\Pth(\gamma)\bigr)(1).
\end{multline*}
Because both $\Pth \gamma$ and $[\Phi_\theta]^*\Pth(\gamma)$ lie in $\iota(LH)$ we have $(\Pth \gamma)(0) = (\Pth \gamma)(1)$ and
\[
\bigl([\Phi_\theta]^*\Pth(\gamma)\bigr)(0) = \bigl([\Phi_\theta]^*\Pth(\gamma)\bigr)(1).
\]
We conclude that indeed
\[
R_{\chi, l}[\Phi_\theta] W_{\chi, l}(\gamma, z) R_{\chi, l}[\Phi_\theta]^*
=
W_{\chi,l}\bigl([\Phi_\theta] \cdot (\gamma, z)\bigr).\qedhere
\]
\end{proof}

If $a \in (1/m)\ZZ$ is an energy level, then the $a$-th energy eigenspace $\hilb S_{\chi, l}(a)$ for $R_{\chi,l}$ is simply $\CC_\chi \otimes \hilb S_l(a)$. Since we already concluded in \cref{thm:unicol-posenergyidcomp} that $\hilb S_l(a)$ is zero for $a<0$ it follows that

\begin{prop}
The intertwining $\Rot^{(m)}(S^1)$-action $R_{\chi, l}$ on the representation $W_{\chi, l}$ of $\centker(\Delta' \circ \Pth)$ is of positive energy.
\end{prop}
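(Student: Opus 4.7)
The plan is to reduce this statement directly to the corresponding unicoloured result, namely \cref{thm:unicol-posenergyidcomp}, which asserts that the intertwining $\Rot^{(m)}(S^1)$-action $R_l$ on the representation $W_l$ of $(\centL T)_0$ (here applied with $T = H$) is of positive energy. Since $R_{\chi,l}$ was defined as the tensor product of the trivial action on $\CC_\chi$ with $R_l$ on $\hilb S_l$, the subscript $\chi$ is essentially cosmetic for the rotation action. The core observation is therefore the factorisation of energy eigenspaces already noted in the paragraph preceding the proposition.

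First I would spell out that, because $\Rot^{(m)}(S^1)$ acts trivially on the first tensor factor and via $R_l$ on the second, the spectral decomposition of $R_{\chi,l}$ inherits directly from that of $R_l$. Concretely, for every energy level $a \in (1/m)\ZZ$ one has
\[
\hilb S_{\chi,l}(a) = \CC_\chi \otimes_\CC \hilb S_l(a),
\]
since the generator of rotations on $\hilb S_{\chi,l}$ is $\id_{\CC_\chi} \otimes (\text{generator on } \hilb S_l)$ and hence has the same spectrum as that on $\hilb S_l$, with eigenspaces tensored with $\CC_\chi$.

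Next I would invoke \cref{thm:unicol-posenergyidcomp} to conclude that $\hilb S_l(a) = \{0\}$ for all $a < 0$. Combined with the display above this immediately gives $\hilb S_{\chi,l}(a) = \{0\}$ for $a < 0$, which is precisely the positive energy condition in the sense of \cref{dfn:posenergyrep}.

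There is no real obstacle here; the only care required is to confirm that the tensor product decomposition of the Hilbert space is compatible with the spectral decomposition of $R_{\chi,l}$, and this is automatic since the two commuting factors of the action are the scalar action on $\CC_\chi$ (which has trivial spectrum, concentrated at $0$) and $R_l$ on $\hilb S_l$. In fact, one can equivalently compute the graded character as $\ch_{R_{\chi,l}}(q) = \ch_{R_l}(q)$, so the result follows from the unicoloured character computation performed earlier, which already exhibited non-negative powers of $q$ only.
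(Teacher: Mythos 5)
Your proposal is correct and follows exactly the paper's own argument: the energy eigenspaces factor as $\hilb S_{\chi,l}(a) = \CC_\chi \otimes \hilb S_l(a)$ because the rotation action is trivial on $\CC_\chi$, so positivity of energy (and indeed the equality of characters $\ch_{R_{\chi,l}} = \ch_{R_l}$) is inherited from \cref{thm:unicol-posenergyidcomp}. The only cosmetic remark is that the positive energy condition of \cref{dfn:posenergyrep} also includes finite type, which follows from the same factorisation since each $\hilb S_l(a)$ is finite-dimensional.
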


The aforementioned observation implies more precisely that the character of $R_{\chi,l}$ equals that of $R_l$:
\begin{equation}
\label{eq:bicol-charsubgprep}
\ch_{R_{\chi,l}}(q) = \ch_{R_l}(q) = q^{\dim H/24} q^{\langle l, l \rangle/2} \eta(q)^{-\dim H}.
\end{equation}
We define the corresponding normalised character to be
\[
Z_{W_{\chi,l}}(q) := q^{-\dim H/24} \ch_{R_{\chi,l}}(q) = q^{\langle l, l \rangle/2} \eta(q)^{-\dim H}.
\]

We have thus constructed a $2$-parameter family of mutually non-isomorphic, irreducible, positive energy representations $W_{\chi,l}$ of $\centker(\Delta' \circ \Pth)$. It exhausts the class of such representations:

\begin{thm}
\label{thm:bicol-classifysubgpirreps}
Every irreducible, positive energy representation of $\centker(\Delta' \circ \Pth)$ such that the central subgroup $\phasegp$ acts as $z \mapsto z$ is isomorphic to $W_{\chi, l}$ for some characters $\chi$ and $l$ of $(\Lambda_\wh \cap \Lambda_\bl)/\Gamma$ and $H$, respectively.
\end{thm}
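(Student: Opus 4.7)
The plan is to mimic exactly the proof of the unicoloured analogue \cref{thm:unicol-classifyidcompirreps}, now exploiting the direct product decomposition~\eqref{eq:bicol-decompsubgp-centext} in place of~\eqref{eq:unicol-decompidcomp-centext}. The key structural input is that \eqref{eq:bicol-decompsubgp-centext} expresses $\centker(\Delta' \circ \Pth)$ as a literal direct product of a finite abelian group with $\centiota(\centL H)_0$, so the finite factor is genuinely central in the whole subgroup, and moreover the isomorphism is $\Rot(S^1)$-equivariant (inherited from the $\Diff_+(S^1)$-equivariance of~\eqref{eq:bicol-decomp-subgp} and the discussion right before the statement).

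Concretely, given an irreducible positive energy representation $Q$ of $\centker(\Delta' \circ \Pth)$ on a Hilbert space $\hilb H$ with $\phasegp$ acting as $z \mapsto z$, I would view $Q$ via~\eqref{eq:bicol-decompsubgp-centext} as a representation of
\[
\frac{\Lambda_\wh \cap \Lambda_\bl}{\Gamma} \times \centiota(\centL H)_0.
\]
The first factor commutes with all of $Q$, so each of its isotypic components is $Q$-stable; irreducibility of $Q$ then forces $Q|_{(\Lambda_\wh \cap \Lambda_\bl)/\Gamma}$ to act through a single character $\chi$. Consequently any $\centiota(\centL H)_0$-stable closed subspace of $\hilb H$ is automatically stable under the full product, so $Q|_{\centiota(\centL H)_0}$ is irreducible as well.

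Next I would transfer the positive energy structure to the second factor. If $R'$ is a $\Rot^{(n')}(S^1)$-representation that intertwines in the manner~\eqref{eq:rotS1intertwin} with $Q$, then the $\Rot(S^1)$-equivariance of~\eqref{eq:bicol-decompsubgp-centext} implies that $R'$ also intertwines with $Q|_{\centiota(\centL H)_0}$. Thus $Q|_{\centiota(\centL H)_0}$ is an irreducible positive energy representation of $(\centL H)_0$ on which the central $\phasegp$ acts as $z \mapsto z$. Applying the unicoloured classification \cref{thm:unicol-classifyidcompirreps} for the lattice $\Gamma$ yields a unitary isomorphism $Q|_{\centiota(\centL H)_0} \xrightarrow{\sim} W_l$ for some character $l$ of $H$.

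Tensoring this isomorphism with the identification of $Q|_{(\Lambda_\wh \cap \Lambda_\bl)/\Gamma}$ with the one-dimensional $\chi$-representation produces a unitary isomorphism $Q \xrightarrow{\sim} W_{\chi,l}$; by construction of $W_{\chi,l}$ in the preceding subsection this is a map of representations of the whole direct product, hence of $\centker(\Delta' \circ \Pth)$. The only genuine subtlety is the bookkeeping for the $\Rot(S^1)$-action: one needs that the equivariance of~\eqref{eq:bicol-decompsubgp-centext} is sharp enough to let the full cover acting on $\hilb H$ descend unambiguously to the cover $\Rot^{(m)}(S^1)$ acting on $\hilb S_l$, but since the first factor is finite (and the $\phasegp$-scaling is fixed) this is essentially automatic and is the only place where one must be careful.
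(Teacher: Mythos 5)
Your proposal is correct and follows essentially the same route as the paper: strip off the central finite factor $(\Lambda_\wh \cap \Lambda_\bl)/\Gamma$ by Schur's lemma to obtain $\chi$, use the $\Rot(S^1)$-equivariance of~\eqref{eq:bicol-decompsubgp-centext} to transfer the positive energy structure, and reduce to the unicoloured situation for the lattice $\Gamma$. The only cosmetic difference is that you invoke \cref{thm:unicol-classifyidcompirreps} directly, whereas the paper unwinds one level further and restricts all the way to the Heisenberg group $\centV\liealg h$, citing the unicity result \cref{thm:unicol-classifyVt-irreps}; these are the same argument at different granularity.
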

\begin{proof}
The proof is identical to that of \cref{thm:unicol-classifyidcompirreps}. We first use the $\Rot(S^1)$-equivariant isomorphism~\eqref{eq:bicol-decompsubgp-centext} to see an arbitrary such representation $Q$ as an irreducible, positive energy representation of $\centV \liealg h$. Here, $\centV \liealg h$ is the Heisenberg group sitting inside $(\centL H)_0$ as described in \cref{subsec:unicol-centext-subgps}. An appeal to the unicity result \cref{thm:unicol-classifyVt-irreps} for $\centV \liealg h$ then concludes the argument.
\end{proof}

\subsection{Irreducible representations of $\centL(T_\wh, H, T_\bl)$}
\label{subsec:bicol-irrepsfullgp}

By \cref{thm:topgpfromsubgp} there exists a unique topological group structure on $L(T_\wh, H, T_\bl)$ that is naturally induced by that of its subgroup $\ker(\Delta' \circ \Pth)$ under the requirement that this subgroup is open in $L(T_\wh, H, T_\bl)$. Because for the homomorphism
\[
\Delta'\colon P\bigl(H, (\Lambda_\wh - \Lambda_\bl)/\Gamma\bigr) \to \Lambda_\wh - \Lambda_\bl
\]
there holds $\ker \Delta' = \iota(LH)_0$, we similarly have a unique structure of a topological group on $P(H, (\Lambda_\wh - \Lambda_\bl)/\Gamma)$ induced by that of $(LH)_0$ such that $\iota(LH)_0$ is open in the former group.

\begin{prop}
There exists a unique structure of a topological group on the central extension $\centL(T_\wh, H, T_\bl)$ such that $\centker(\Delta' \circ \Pth)$ is open in $\centL(T_\wh, H, T_\bl)$.
\end{prop}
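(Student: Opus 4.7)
The plan is to invoke \cref{thm:centexttopgp} with $G := L(T_\wh, H, T_\bl)$, $A := \phasegp$ and $G_0 := \ker(\Delta' \circ \Pth)$, where the latter carries the topological group structure obtained from the isomorphism \eqref{eq:bicol-decompsubgp-centext}: the product of the discrete topology on $(\Lambda_\wh \cap \Lambda_\bl)/\Gamma$ with the topological group structure on $\centiota(\centL H)_0$ inherited from $(\centL H)_0$ (which was itself built in \cref{thm:unicol-topgpstruct}). Under this topology $G_0$ is manifestly a topological group, the central subgroup $\phasegp$ sits as a closed subgroup, and the homomorphism $\widetilde\Pth|_{G_0}\colon \centker(\Delta'\circ \Pth)\to\centiota(\centL H)_0$ is continuous. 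The only hypothesis left to verify in \cref{thm:centexttopgp} is that for every fixed bicoloured loop $\gamma \in L(T_\wh, H, T_\bl)$, the map
\[
G_0 \to \phasegp, \qquad \rho \mapsto c(\gamma, \rho)\, c(\gamma + \rho, -\gamma)
\]
is continuous.

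Because the cocycle $c$ on $\centL(T_\wh, H, T_\bl)$ is by construction \eqref{eq:bicol-cocycle-viaP} the pullback $c'\circ(\Pth\times\Pth)$, and $\Pth$ is continuous on $G_0$ by choice of topology, continuity of the above map reduces to that of the analogous map $\rho' \mapsto c'(\gamma', \rho')\, c'(\gamma' + \rho', -\gamma')$ from $\ker\Delta' = \iota(LH)_0$ to $\phasegp$, for a fixed $\gamma' \in P(H,(\Lambda_\wh - \Lambda_\bl)/\Gamma)$. This is where the discrete summand $(\Lambda_\wh \cap \Lambda_\bl)/\Gamma$ drops out of the discussion: it is killed by $\Pth$ and so never appears in $c$.

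At this point one can repeat the calculation in the proof of \cref{thm:unicol-topgpstruct} with $\Lambda$ replaced by $\Lambda_\wh - \Lambda_\bl$, $\Gamma$ taking the role of the unicoloured lattice for the integrality arguments, and the form $\langle\cdot,\cdot\rangle_\Gamma$ in place of $\langle\cdot,\cdot\rangle$. Concretely: pick a lift $\hat\xi\colon[0,1]\to \liealg h$ of $\gamma'$ as in \eqref{eq:bicolP-liealgpaths}; parametrise $\rho' \in \iota(LH)_0$ via the decomposition $(LH)_0 \xrightarrow{\sim} H\oplus V\liealg h$ of \eqref{eq:unicol-decompidcomp} by a pair $([\hat\eta_0], \hat\eta)$, using as a lift $\theta \mapsto \hat\eta(\theta) + \hat\eta_0$; and insert these into \eqref{eq:bicolP-cocycle}. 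The $\epsilon$-prefactors cancel because $\Delta'_{\rho'} = 0$ and $\epsilon$ is normalised, so only the exponential of
\[
S'(\hat\xi,\hat\eta + \hat\eta_0) - S'(\hat\xi+\hat\eta+\hat\eta_0,-\hat\xi)
\]
remains. After one partial integration and cancellation of boundary terms exactly as in \eqref{eq:unicol-contcocycle1}--\eqref{eq:unicol-contcocycle2}, what survives is the sum of a constant, a continuous linear function of $\hat\eta_0$ of the form $\frac12\langle \Delta'_{\gamma'},\hat\eta_0\rangle_\Gamma$, and the value $S(\hat\xi -\avg\hat\xi,\hat\eta)$ of the skew form $S$ on $V\liealg h$, whose continuity on $V\liealg h$ was noted in \cref{subsec:unicol-Vtirreps}. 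Continuity therefore holds. Uniqueness of the resulting topological group structure on $\centL(T_\wh, H, T_\bl)$ is built into the conclusion of \cref{thm:centexttopgp}.

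I do not foresee a real obstacle here, because the argument is essentially a direct transcription of its unicoloured counterpart: the pullback definition of $c$ allows all delicate continuity checks to be migrated back to the already established situation of the group $(\centL H)_0$. The only points requiring modest care are checking that the finite discrete factor in \eqref{eq:bicol-decompsubgp-centext} does not spoil continuity (it does not, being annihilated by $\Pth$) and tracking that the integrality used to absorb $\langle \Delta'_{\gamma'}, \mu\rangle_\Gamma$-terms for $\mu\in\Gamma$ remains valid, which it does since $\Delta'_{\gamma'} \in \Lambda_\wh - \Lambda_\bl \subseteq \Gamma^\vee$.
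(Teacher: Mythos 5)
Your proposal is correct and follows essentially the same route as the paper: invoke \cref{thm:centexttopgp} with $G_0 := \ker(\Delta' \circ \Pth)$, reduce the continuity of $\rho \mapsto c(\gamma,\rho)c(\gamma+\rho,-\gamma)$ via the pullback description $c = c'\circ(\Pth\times\Pth)$, and rerun the computation of \cref{thm:unicol-topgpstruct} with $\liealg h$, $\langle\cdot,\cdot\rangle_\Gamma$ and $\Lambda_\wh - \Lambda_\bl$ in place of the unicoloured data. The paper only makes the transcription slightly more explicit by observing that $\Delta'$ splits, giving $P\bigl(H, (\Lambda_\wh - \Lambda_\bl)/\Gamma\bigr) \cong H \oplus V\liealg h \oplus (\Lambda_\wh - \Lambda_\bl)$ in analogy with~\eqref{eq:unicol-decompfullgp}, so that the fixed loop $\Pth\gamma$ can be written as a triple exactly as in the unicoloured proof.
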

\begin{proof}
We show this along the same lines as \cref{thm:unicol-topgpstruct}. That is, we need to check whether for every fixed bicoloured loop $\gamma \in L(T_\wh, H, T_\bl)$ the map $\ker(\Delta' \circ \Pth) \to \phasegp$ given by
\[
\rho \mapsto c(\gamma, \rho) c(\gamma+\rho, -\gamma) = c'(\Pth \gamma, \Pth \rho) c'(\Pth \gamma + \Pth\rho, -\Pth\gamma)
\]
is continuous. Note that the homomorphism $\Delta'$ admits a splitting. It is given by defining for $\lambda \in \Lambda_\wh - \Lambda_\bl$ a path $\gamma_\lambda \in P(H, (\Lambda_\wh - \Lambda_\bl)/\Gamma)$ as the projection on $H$ of the Lie algebra-valued path $[0,1] \to \liealg h$, $\theta \mapsto \theta\lambda$. Therefore, there exists a decomposition
\[
P\bigl(H, (\Lambda_\wh - \Lambda_\bl)/\Gamma\bigr)
	\xrightarrow{\sim} \iota(LH)_0 \oplus (\Lambda_\wh - \Lambda_\bl)
	\xrightarrow{\sim} H \oplus V\liealg h \oplus (\Lambda_\wh - \Lambda_\bl)
\]
generalising the isomorphism~\eqref{eq:unicol-decompfullgp} in the unicoloured situation. The remainder of the argument now proceeds as in the proof of \cref{thm:unicol-topgpstruct}.
\end{proof}

The knowledge of the irreducible, positive energy representations $W_{\chi, l}$ of the subgroup $\centker(\Delta' \circ \Pth)$ now allows us to construct and classify the same class of representations of the full group $\centL(T_\wh, H, T_\bl)$, starting as follows. Let us take such a $W_{\chi, l}$ for characters $\chi$ and $l$ of $(\Lambda_\wh \cap \Lambda_\bl)/\Gamma$ and $H$, respectively, and consider the induced representation
\[
\Ind_{\centker(\Delta' \circ \Pth)}^{\centL(T_\wh, H, T_\bl)} W_{\chi, l}
\]
of $\centL(T_\wh, H, T_\bl)$. We will shorten it and its underlying Hilbert space to $\Ind W_{\chi, l}$ and $\Ind \hilb S_{\chi,l}$, respectively. We refer to \cref{subsec:indreps} and the analogous construction for unicoloured torus loop groups in \cref{subsec:unicol-irrepsLT} for details on how this Hilbert space and its action of $\centL(T_\wh, H, T_\bl)$ are defined.

Now take $m$ to be the smallest positive integer such that both $m \geq n$ and $m\langle l, l \rangle \in 2\ZZ$. The first condition implies according to \cref{subsec:bicol-diffaction} that $\Rot^{(m)}(S^1)$ acts on $\centL(T_\wh, H, T_\bl)$, while the second one means by \cref{subsec:bicol-irrepsofsubgp} that $\hilb S_{\chi, l}$ carries a positive energy representation $R_{\chi,l}$ of $\Rot^{(m)}(S^1)$ which intertwines with $W_{\chi, l}$. Then define a representation $\Ind R_{\chi, l}$ on $\Ind \hilb S_{\chi, l}$ in terms of $R_{\chi,l}$ in exactly the same way as done in \cref{subsec:unicol-irrepsLT}. It satisfies the intertwining property~\eqref{eq:rotS1intertwin} with respect to $\Ind W_{\chi, l}$ because $R_{\chi, l}$ does so with respect to $W_{\chi, l}$.

To study $\Ind W_{\chi,l}$ we first calculate the representations of $\centker(\Delta' \circ \Pth)$ that are conjugate to $W_{\chi,l}$:

\begin{lem}
\label{thm:bicol-conjrep}
(Compare with \cref{thm:unicol-conjrep}.)
Let $(\gamma, z)$ be an element of the group $\centL(T_\wh, H, T_\bl)$ that is not contained in the (normal) subgroup $\centker(\Delta' \circ \Pth)$ and consider the representation $W_{\chi, l}^{(\gamma, z)}$ of $\centker(\Delta' \circ \Pth)$ conjugate to $W_{\chi, l}$, defined by
\[
W_{\chi, l}^{(\gamma, z)}(\rho, w) := W_{\chi, l}\bigl((\gamma, z)^{-1} (\rho, w) (\gamma, z)\bigr)
\]
for $(\rho, w) \in \centker(\Delta' \circ \Pth)$. Then $W_{\chi, l}^{(\gamma, z)}$ is the tensor product representation of $W_{\chi, l}$ and the character
\begin{equation}
\label{eq:bicol-centextchar}
\centker(\Delta' \circ \Pth) \twoheadrightarrow \ker(\Delta' \circ \Pth) \to \phasegp, \qquad (\rho, w) \mapsto \rho \mapsto c(\rho, \gamma) c(\gamma, \rho)^{-1},
\end{equation}
where $\phasegp$ acts on $\CC$ as $z \mapsto z$. In turn, for any glued lifts $\xi = (\xi_\wh, \xi_\match, \xi_\bl)$ and $\eta = (\eta_\wh, \eta_\match, \eta_\bl)$ of $\gamma$ and $\rho$, respectively, there holds
\begin{equation}
\label{eq:bicol-conjrep-1}
c(\rho, \gamma) c(\gamma, \rho)^{-1} = e^{2\pi i\bigl(S(\eta, \xi) - S(\xi, \eta)\bigr)},
\end{equation}
where
\begin{equation}
\label{eq:bicol-conjrep-2}
S(\eta, \xi) - S(\xi, \eta) = - \int_\circldir \langle \dd \xi_\wh, \eta_\wh \rangle_\wh - \int_\circrdir \langle \dd \xi_\bl, \eta_\bl \rangle_\bl.
\end{equation}
\end{lem}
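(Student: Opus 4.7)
The plan is to mirror closely the proof of \cref{thm:unicol-conjrep}, but to exploit the commutator formula of \cref{thm:bicol-commmap} together with the fact that elements of $\ker(\Delta' \circ \Pth)$ have vanishing $\Delta'$-winding element after applying $\Pth$.

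First I would establish the tensor product decomposition. Using the definition of the multiplication in $\centL(T_\wh, H, T_\bl)$ in terms of the $2$-cocycle $c$ and the standing assumption that $c$ is normalised (equivalently, that the cocycle $\epsilon$ on $(\Lambda_\wh - \Lambda_\bl)\centext$ is normalised), a short algebraic manipulation of $(\gamma, z)^{-1}(\rho, w)(\gamma, z)$ produces the identity
\[
W_{\chi,l}^{(\gamma, z)}(\rho, w) = c(\rho, \gamma) c(\gamma, \rho)^{-1} \cdot W_{\chi, l}(\rho, w),
\]
exactly as in the unicoloured case. This exhibits $W_{\chi,l}^{(\gamma, z)}$ as the tensor product of $W_{\chi,l}$ with the scalar factor~\eqref{eq:bicol-centextchar}; the factorisation through $\ker(\Delta' \circ \Pth)$ is automatic because the central subgroup $\phasegp$ acts trivially on both sides.

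Next I would verify formulas~\eqref{eq:bicol-conjrep-1} and~\eqref{eq:bicol-conjrep-2} by feeding the hypothesis $\rho \in \ker(\Delta' \circ \Pth)$ into the general commutator expression derived in \cref{thm:bicol-commmap}. Since $\Delta'_{\Pth \rho} = 0$, one has $\epsilon(\Delta'_{\Pth \gamma}, \Delta'_{\Pth \rho}) = \epsilon(\Delta'_{\Pth \rho}, \Delta'_{\Pth \gamma}) = 1$ using that $\epsilon$ is normalised, and hence also $b(\Delta'_{\Pth \gamma}, \Delta'_{\Pth \rho}) = 1$. The formula~\eqref{eq:bicol-commmap-pre} of \cref{thm:bicol-commmap} therefore collapses to
\[
c(\gamma, \rho) c(\rho, \gamma)^{-1} = e^{2\pi i (S(\xi, \eta) - S(\eta, \xi))},
\]
and inverting both sides gives~\eqref{eq:bicol-conjrep-1}. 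Substituting $\Delta'_{\Pth \rho} = 0$ into the expression~\eqref{eq:bicol-commmap} for $S(\xi, \eta) - S(\eta, \xi)$ annihilates the two terms containing $\Delta'_{\Pth \rho}$ and leaves precisely $\int_\circldir \langle \dd \xi_\wh, \eta_\wh \rangle_\wh + \int_\circrdir \langle \dd \xi_\bl, \eta_\bl \rangle_\bl$; negating yields~\eqref{eq:bicol-conjrep-2}.

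There is no genuine obstacle here: all the real work was done in \cref{thm:bicol-commmap}, and the current statement is simply the specialisation of that commutator formula to the case where one of the two loops has trivial image under $\Delta' \circ \Pth$. The only minor care needed is in checking that the various normalisation conventions on $\epsilon$ and on $c$ propagate correctly so that no stray phase factor is introduced when simplifying the conjugation; this is routine bookkeeping.
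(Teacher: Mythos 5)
Your proposal is correct and follows the same route as the paper: the tensor product decomposition comes from simplifying the conjugation with the normalised cocycle (exactly as in \cref{thm:unicol-conjrep}), and \eqref{eq:bicol-conjrep-1}--\eqref{eq:bicol-conjrep-2} are obtained by specialising the commutator formula of \cref{thm:bicol-commmap} to $\Delta'_{\Pth\rho} = 0$, which kills the $b$-factor and the two $\Delta'_{\Pth\rho}$-terms. No gaps.
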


The equations~\eqref{eq:bicol-conjrep-1} and~\eqref{eq:bicol-conjrep-2} follow from the expression for the commutator map associated to $c$ we found in \cref{thm:bicol-commmap}.

\begin{prop}
\label{thm:bicol-indrepisirred}
The induced representation $\Ind W_{\chi, l}$ of $\centL(T_\wh, H, T_\bl)$ is irreducible.
\end{prop}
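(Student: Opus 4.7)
The plan is to apply Mackey's irreducibility criterion in exactly the same fashion as in the unicoloured case treated in \cref{thm:unicol-indrepisirrep}. Since $\Delta' \circ \Pth$ is a homomorphism, the subgroup $\centker(\Delta' \circ \Pth)$ is normal in $\centL(T_\wh, H, T_\bl)$, and $W_{\chi, l}$ was shown to be irreducible in \cref{subsec:bicol-irrepsofsubgp}. Mackey's criterion then reduces the claim to verifying that for every element $(\gamma, z) \in \centL(T_\wh, H, T_\bl) \setminus \centker(\Delta' \circ \Pth)$, the conjugate representation $W_{\chi, l}^{(\gamma, z)}$ is not isomorphic to $W_{\chi, l}$.

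By \cref{thm:bicol-conjrep}, $W_{\chi, l}^{(\gamma, z)}$ is the tensor product of $W_{\chi, l}$ with the character $(\rho, w) \mapsto c(\rho, \gamma) c(\gamma, \rho)^{-1}$ on $\centker(\Delta' \circ \Pth)$. To distinguish the two representations, I would restrict both to a subgroup on which this character can be computed explicitly, taking as candidate the subgroup of constant bicoloured loops obtained by embedding $H \hookrightarrow LH$ via constants and then composing with $\widetilde{\Bi}\colon \centL H \hookrightarrow \centL(T_\wh, H, T_\bl)$ from \cref{subsec:bicol-inclcentextLH} (lifted trivially, since the relevant cocycle vanishes on constant loops). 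This subgroup sits inside $\centker(\Delta' \circ \Pth)$. Taking $\rho$ to be the constant loop $\alpha_H = \exp \alpha \in H$ with $\alpha \in \liealg h$, one chooses a glued lift $\eta = (\RR\pi_\wh(\alpha), \alpha, \RR\pi_\bl(\alpha))$ consisting of constants and substitutes it into \cref{eq:bicol-conjrep-2}. Each integral collapses to a boundary pairing of the form $\langle \xi_\whbl(p) - \xi_\whbl(q), \RR\pi_\whbl(\alpha)\rangle_\whbl$; after applying the isometries $(\RR\pi_\whbl)^{-1}$ the matching condition \cref{thm:bicol-liealglifts}\ref{thmitm:bicol-liealglifts-exists-matchp} causes the contributions at $p$ to cancel, and what remains assembles into $-\langle \Delta'_{\Pth\gamma}, \alpha \rangle_\Gamma$. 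Hence \cref{eq:bicol-conjrep-1} gives $c(\rho, \gamma) c(\gamma, \rho)^{-1} = e^{2\pi i \langle \Delta'_{\Pth\gamma}, \alpha \rangle_\Gamma}$, and combining this with \cref{rmk:bicol-irrepsubgp-depl} I conclude that $W_{\chi, l}|_H$ and $W_{\chi, l}^{(\gamma, z)}|_H$ act via the characters of $H$ associated to $l$ and $l + \Delta'_{\Pth\gamma}$ in $\Gamma^\vee$, respectively.

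The hypothesis $(\gamma, z) \notin \centker(\Delta' \circ \Pth)$ is precisely the statement that $\Delta'_{\Pth\gamma} \neq 0$ in $\Lambda_\wh - \Lambda_\bl$, which embeds into $\Gamma^\vee$ thanks to the integrality of $\langle \cdot, \cdot \rangle_\Gamma$ on $\Gamma$; since $\Gamma^\vee$ is canonically the Pontryagin dual of $H$, this nonzero shift yields distinct characters of $H$. Therefore $W_{\chi, l}^{(\gamma, z)} \not\cong W_{\chi, l}$ and Mackey's criterion applies, establishing irreducibility. The only delicate point in the above is verifying the cancellation of the boundary contributions at $p$ from the two integrals via the matching condition on glued lifts, but this is essentially the bicoloured analogue of the calculation already carried out in the unicoloured case.
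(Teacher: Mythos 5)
Your proof is correct and takes essentially the same route as the paper: Mackey's criterion for the normal subgroup $\centker(\Delta' \circ \Pth)$, with the conjugate representation $W_{\chi,l}^{(\gamma,z)}$ distinguished from $W_{\chi,l}$ by restricting to the constant bicoloured loops $\Bi(H)$, plugging the constant glued lift $\bigl((\RR\pi_\wh)(\alpha), \alpha, (\RR\pi_\bl)(\alpha)\bigr)$ into~\eqref{eq:bicol-conjrep-2}, and using the matching condition at $p$ to identify the resulting character of $H$ as a nonzero shift of $l$ by $\Delta_{\Pth\gamma}' \in \Lambda_\wh - \Lambda_\bl \hookrightarrow \Gamma^\vee$. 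The only blemish is an immaterial sign slip: by~\eqref{eq:bicol-conjrep-1} the commutator equals $e^{-2\pi i \langle \Delta_{\Pth\gamma}', \alpha \rangle_\Gamma}$, so the conjugate representation lets $\Bi(H)$ act by $l - \Delta_{\Pth\gamma}'$ rather than $l + \Delta_{\Pth\gamma}'$, which changes nothing in the conclusion.
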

\begin{proof}
It is sufficient to show that all the conjugate representations $W_{\chi, l}^{(\gamma, z)}$ as in \cref{thm:bicol-conjrep} are not isomorphic to $W_{\chi, l}$. To do this, we will examine the restriction of $W_{\chi, l}^{(\gamma, z)}$ to the subgroup of $\centker(\Delta' \circ \Pth)$ consisting of the elements of the form $(\rho, 1)$ where $\rho$ is a constant bicoloured loop, that is, $\rho \in \Bi(H)$. Because this subgroup is canonically isomorphic to $\Bi(H)$, we will denote it as such and write its elements simply as $\rho$.

Let $\rho \in \Bi(H)$, meaning that there is an $\alpha \in \liealg h$ such that the maps $\rho_\whbl$ are constant with values $\exp_\whbl(\RR\pi_\whbl)(\alpha)$ and $\rho_\match(p) = \rho_\match(q) = \exp_H \alpha$. Then
\[
\eta := \bigl((\RR\pi_\wh)(\alpha), \alpha, (\RR\pi_\bl)(\alpha)\bigr)
\]
is a glued lift of $\rho$. Plugging this into~\eqref{eq:bicol-conjrep-2} gives
\begin{align*}
S(\eta, \xi) - S(\xi, \eta)
	&= - \bigl\langle\xi_\wh(q) - \xi_\wh(p), (\RR\pi_\wh)(\alpha)\bigr\rangle_\wh - \bigl\langle\xi_\bl(p) - \xi_\bl(q), (\RR\pi_\bl)(\alpha)\bigr\rangle_\bl \\
	&= - \bigl\langle \Delta_{\Pth \gamma}', \alpha\bigr\rangle_\Gamma.
\end{align*}

So~\eqref{eq:bicol-centextchar} has $\iota(H)$ acting by the character $-\Delta_{\Pth \gamma}' \in \Lambda_\wh - \Lambda_\bl \hookrightarrow \Gamma^\vee$, which implies that $W_{\chi, l}^{(\gamma, z)}$ is letting $\Bi(H)$ act by $l - \Delta_{\Pth\gamma}'$. Because $\Delta_{\Pth\gamma}' \neq 0$, we have $l - \Delta_{\Pth\gamma}' \neq l$ and therefore $W_{\chi, l}^{(\gamma, z)}$ and $W_{\chi, l}$ are not isomorphic.
\end{proof}

The reason we are inducing up representations not from the identity component of $\centL(T_\wh, H, T_\bl)$, but from the larger subgroup $\centker(\Delta' \circ \Pth)$, is precisely to make the step in the above proof true where we note that $\Delta_{\Pth \gamma}' \neq 0$.

Let us examine when these representations $\Ind W_{\chi,l}$ are isomorphic or not. For the next result we will make use of the bicoloured loops $\gamma_{[\lambda_\wh,\lambda_\bl]}$ defined in \cref{subsec:bicol-struct} associated to elements $[\lambda_\wh, \lambda_\bl] \in (\Lambda_\wh \oplus \Lambda_\bl)/\Gamma$. Note that if
\[
\lambda = (\RR\pi_\wh)^{-1}(\lambda_\wh) - (\RR\pi_\bl)^{-1}(\Lambda_\bl)
			\in \Lambda_\wh - \Lambda_\bl
\]
and $\sigma$ is the (left) coset of $\centker(\Delta' \circ \Pth)$ in $\centL(T_\wh, H, T_\bl)$ consisting of all elements $(\gamma, z)$ such that $\Delta_{\Pth \gamma}' = \lambda$, then the element $(\gamma_{[\lambda_\wh,\lambda_\bl]}, 1)$ is a representative of $\sigma$.

We furthermore observe that because $\centker(\Delta' \circ \Pth)$ is a normal subgroup of $\centL(T_\wh, H, T_\bl)$, the restriction of $\Ind W_{\chi, l}$ to it restricts to each subspace $\hilb S_{\chi, l}^\sigma$ for all cosets $\sigma$.

\begin{thm}[Restriction of $\Ind W_{\chi, l}$ from $\centL(T_\wh, H, T_\bl)$ to $\centker(\Delta' \circ \Pth)$]
\label{thm:bicol-indrep-restrtosubgp}
Fix a character $l$ of $H$, an element $\lambda \in \Lambda_\wh - \Lambda_\bl$ and let $\sigma$ be the (left) coset of $\centker(\Delta' \circ \Pth)$ in $\centL(T_\wh, H, T_\bl)$ corresponding to $\lambda$. Pick any pre-image $[\lambda_\wh, \lambda_\bl]$ of $\lambda$ under the homomorphism
\[
(\RR\pi_\wh)^{-1} - (\RR\pi_\bl)^{-1}\colon \frac{\Lambda_\wh \oplus \Lambda_\bl}{\Gamma} \twoheadrightarrow \Lambda_\wh - \Lambda_\bl.
\]
Then the composite unitary map
\[
f_{\chi, l}^\sigma\colon \hilb S_{\chi, l}^\sigma \xrightarrow{\sim} \hilb S_{\chi, l} \xrightarrow{\sim} \hilb S_{\chi, l-\lambda}, \qquad \bigl[(\gamma_{[\lambda_\wh, \lambda_\bl]}, 1), v\bigr] \mapsto v \mapsto v
\]
intertwines the representations $(\Res \Ind W_{\chi, l})|_{\hilb S_{\chi, l}^\sigma}$ and $W_{\chi, l-\lambda}$ of $\centker(\Delta' \circ \Pth)$ and the representations $\Ind R_{\chi,l}|_{\hilb S_{\chi,l}^\sigma}$ and $R_{\chi,l-\lambda}$ of $\Rot^{(m)}(S^1)$.
\end{thm}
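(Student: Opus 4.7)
The plan is to closely parallel the unicoloured proof of \cref{thm:unicol-indrep-restrtoidcomp}, splitting the verification into two separate intertwining checks: one for the subgroup $\centker(\Delta'\circ\Pth)$ and one for the covering group $\Rot^{(m)}(S^1)$. Throughout, I will work with the explicit glued lift of $\gamma_{[\lambda_\wh,\lambda_\bl]}$ constructed in \cref{subsec:bicol-struct}, namely the straight line segment
\[
\hat\xi(t) = (\RR\pi_\bl)^{-1}(\lambda_\bl) + t\lambda \in \liealg h, \qquad \lambda := (\RR\pi_\wh)^{-1}(\lambda_\wh) - (\RR\pi_\bl)^{-1}(\lambda_\bl),
\]
together with its $\RR\pi_\whbl$-images $\xi_\whbl$ on $\circldir$ and $\circrdir$.

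For the $\centker(\Delta'\circ\Pth)$-intertwining, I would first invoke \cref{thm:decomp-inducedrep} to identify the first leg $\hilb S_{\chi,l}^\sigma\xrightarrow{\sim}\hilb S_{\chi,l}$ as an isomorphism onto the conjugate representation $W_{\chi,l}^{(\gamma_{[\lambda_\wh,\lambda_\bl]},1)}$. Next I would feed the above glued lift into formula~\eqref{eq:bicol-conjrep-2} of \cref{thm:bicol-conjrep}: since the $\xi_\whbl$ have constant derivative $\RR\pi_\whbl(\lambda)$, the two half-integrals recombine, via \cref{thm:bicol-liealglifts}\ref{thmitm:bicol-liealglifts-liftofP}, into the single expression $-\langle\lambda,\avg\hat\eta\rangle_\Gamma$ involving the lift $\hat\eta$ of $\Pth\rho$. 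By \cref{rmk:bicol-irrepsubgp-depl}, this conjugating character is precisely the factor converting $W_{\chi,l}$ into $W_{\chi,l-\lambda}$, and the $\chi$-component remains untouched because the character vanishes on elements of $\ker\Pth$.

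For the $\Rot^{(m)}(S^1)$-intertwining, I would imitate the unicoloured calculation by factoring $[\Phi_\theta]\cdot(\gamma_{[\lambda_\wh,\lambda_\bl]},1)$ as $(\gamma_{[\lambda_\wh,\lambda_\bl]},1)\cdot\rho_\theta$ for an explicit $\rho_\theta\in\centker(\Delta'\circ\Pth)$, unpacking the definitions~\eqref{eq:bicol-diffns1-action} and~\eqref{eq:bicol-diffns1-actionP-d}. Using the piecewise linearity of the quasi-periodic extension $\hat\Xi$, the central scalar $d'([\Phi_\theta],\Pth\gamma_{[\lambda_\wh,\lambda_\bl]})$ reduces to $e^{-\pi i\langle\lambda,\lambda\rangle_\Gamma\theta}$, while the rotated loop differs from $\gamma_{[\lambda_\wh,\lambda_\bl]}$ only by a constant $H$-valued loop paired with a compensating shift of $\gamma_\match(q)$. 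Pushing $\rho_\theta$ through $W_{\chi,l}$ on $v$ then produces a scalar that telescopes against the phase distinguishing $R_{\chi,l-\lambda}$ from $R_{\chi,l}$ on their respective $\CC_{l-\lambda}$- and $\CC_l$-factors, leaving precisely $R_{\chi,l-\lambda}[\Phi_\theta](v)$ on the nose.

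The main obstacle will be the preparatory integrality check: for the statement to make sense one must first verify that $R_{\chi,l-\lambda}$ actually lives on the same covering group $\Rot^{(m)}(S^1)$ as $R_{\chi,l}$, which amounts to $m\langle l-\lambda,l-\lambda\rangle_\Gamma\in 2\ZZ$. This is less automatic than in the unicoloured case: it requires combining $n\lambda\in\Gamma$, $l\in\Gamma^\vee$, the evenness of $\Lambda_\whbl$, and the constraint $m\geq n$ imposed in \cref{subsec:bicol-irrepsfullgp} to control both $2m\langle l,\lambda\rangle_\Gamma$ and $m\langle\lambda,\lambda\rangle_\Gamma$. A secondary technical nuisance is that $\hat\xi$ is only piecewise linear across $p$, so the value $\hat\Xi(\Phi_\theta^{-1}(0))$ must be evaluated according to whether $-\theta\bmod 1$ lands in $[0,1/2]$ or $[1/2,1]$; the symmetry of the line-segment construction ensures that the final scalar depends only on $\lambda$ rather than on the individual lifts $\lambda_\whbl$, which is what makes the chosen section $\gamma_{[\lambda_\wh,\lambda_\bl]}$ genuinely useful for the calculation.
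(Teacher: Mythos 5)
Your proposal is correct and follows essentially the same route as the paper: identify $\hilb S_{\chi,l}^\sigma$ with the conjugate representation via \cref{thm:decomp-inducedrep}, evaluate \eqref{eq:bicol-conjrep-2} on the explicit straight-line glued lift of $\gamma_{[\lambda_\wh,\lambda_\bl]}$ to get the character $e^{-2\pi i\langle\lambda,\avg\hat\eta\rangle_\Gamma}$ and invoke \cref{rmk:bicol-irrepsubgp-depl}, then check the integrality making $\Rot^{(m)}(S^1)$ act on both sides and factor $[\Phi_\theta]\cdot(\gamma_{[\lambda_\wh,\lambda_\bl]},1)$ as the representative times an element of $\centker(\Delta'\circ\Pth)$, with $d'=e^{-\pi i\langle\lambda,\lambda\rangle_\Gamma\theta}$ and the relevant cocycle value $e^{-2\pi i\langle\lambda,\lambda\rangle_\Gamma\theta}$, exactly as in the unicoloured argument. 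Your only superfluous worry is the ``piecewise linearity across $p$'': the quasi-periodic extension $\hat\Xi(\theta)=(\RR\pi_\bl)^{-1}(\lambda_\bl)+\theta\lambda$ is globally linear, so no case analysis on $\Phi_\theta^{-1}(0)$ is needed.
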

\begin{proof}
The first map in the composition $f_{\chi, l}^\sigma$ is an isomorphism from the restriction of $\Res \Ind W_{\chi, l}$ to $\hilb S_{\chi, l}^\sigma$ to the conjugate representation $W_{\chi, l}^{(\gamma_{[\lambda_\wh, \lambda_\bl]}, 1)}$. The latter was calculated partially in \cref{thm:bicol-conjrep}. For a more precise result we substitute a specific glued lift $(\xi_\wh, \xi_\match, \xi_\bl)$ of $\gamma_{[\lambda_\wh, \lambda_\bl]}$ into~\eqref{eq:bicol-conjrep-2}, namely the one defined in \cref{subsec:bicol-struct}. This gives\footnote{See \cref{rmk:bicol-irrepsubgp-depl} for the definition of $\avg \hat\eta$.}
\begin{align*}
S(\eta, \xi) - S(\xi, \eta)
	&= - \biggl\langle \lambda_\wh - (\RR\pi_\wh) \circ (\RR\pi_\bl)^{-1}(\lambda_\bl), \int_\circldir \eta_\wh(\theta) \dd\theta \biggr\rangle_\wh - \phantom{{}} \\
	&\phantom{{}= -} \biggl\langle (\RR\pi_\bl) \circ (\RR\pi_\wh)^{-1}(\lambda_\wh) - \lambda_\bl, \int_\circrdir \eta_\bl(\theta) \dd\theta \biggr\rangle_\bl \\
	&= - \biggl\langle (\RR\pi_\wh)^{-1}(\lambda_\wh) - (\RR\pi_\bl)^{-1}(\lambda_\bl), \phantom{{}} \\
	&\phantom{{}= - \biggl\langle} (\RR\pi_\wh)^{-1}\biggl(\int_\circldir \eta_\wh(\theta) \dd\theta\biggr) + (\RR\pi_\bl)^{-1}\biggl(\int_\circrdir \eta_\bl(\theta) \dd\theta\biggr) \biggr\rangle_\Gamma \\
	&= - \langle \lambda, \avg{\hat\eta} \rangle_\Gamma
\end{align*}
and therefore,
\[
c(\rho, \gamma_{[\lambda_\wh, \lambda_\bl]}) c(\gamma_{[\lambda_\wh, \lambda_\bl]}, \rho)^{-1} = e^{-2\pi i \langle \lambda, \avg{\hat\eta} \rangle_\Gamma}.
\]
By \cref{rmk:bicol-irrepsubgp-depl} we have
\[
e^{2\pi i \langle -\lambda, \avg \hat\eta \rangle_\Gamma} \cdot W_{\chi, l}(\rho, w)(v) = W_{\chi, l - \lambda}(\rho, w)(v)
\]
for all $v$. We conclude that the second map in the composition $f_{\chi, l}^\sigma$ is an isomorphism from $W_{\chi, l}^{(\gamma_{[\lambda_\wh, \lambda_\bl]}, 1)}$ to $W_{\chi, l-\lambda}$.

For the second claim of the \namecref{thm:bicol-indrep-restrtosubgp}, we first assert that $m$, which we defined to be smallest positive integer such that both $m\geq n$ and $m\langle l,l\rangle \in 2\ZZ$, is also the smallest positive integer $m$ such that both $m \geq n$ and $m\langle l-\lambda, l-\lambda\rangle \in 2\ZZ$. We namely have
\[
m\langle l-\lambda, l-\lambda\rangle
=
m\langle l,l \rangle - 2m\langle l,\lambda \rangle + m \langle \lambda, \lambda \rangle.
\]
Because $m\lambda \in \Gamma$, there holds $2m\langle l,\lambda \rangle \in 2\ZZ$. To show that $m \langle \lambda, \lambda \rangle \in 2\ZZ$ we refer to the proof of the well-definedness of $d'$ in \cref{subsec:bicol-diffaction-pathgp}. We conclude that $\Rot^{(m)}(S^1)$ acts on both $\hilb S_{\chi,l}$ and $\hilb S_{\chi,l-\lambda}$, although the restrictions of these actions to their respective tensor factors $\CC_l$ and $\CC_{l-\lambda}$ are different.

In order to now prove that $f_{\chi,l}^\sigma$ intertwines the representations $\Ind R_{\chi,l}|_{\hilb S_{\chi,l}^\sigma}$ and $R_{\chi,l-\lambda}$ by imitating the proof of the analogous statement in \cref{thm:unicol-indrep-restrtoidcomp} we require the following observations and calculations. Given the definition of $\gamma_{[\lambda_\wh, \lambda_\bl]}$, it is clear that
\[
\Pth(\gamma_{[\lambda_\wh, \lambda_\bl]}) = \exp_H \circ \hat \xi.
\]
Evaluating $\Delta'$ on $\Pth(\gamma_{[\lambda_\wh, \lambda_\bl]})$ therefore gives
\[
\hat\xi(1) - \hat\xi(0) = (\RR\pi_\wh)^{-1}(\lambda_\wh) - (\RR\pi_\bl)^{-1}(\lambda_\bl) = \lambda.
\]
(A different proof of this last fact uses that we already learned in \cref{subsec:bicol-struct} that $\Delta_{\gamma_{[\lambda_\wh, \lambda_\bl]}} = [\lambda_\wh, \lambda_\bl]$, together with the commutativity of the diagram~\eqref{eq:bicol-windelt-commdiag}.) It then follows from the formula for $d'$ in~\eqref{eq:bicol-diffns1-actionP-d} that for $[\Phi_\theta] \in \Rot^{(m)}(S^1)$
\[
d'\bigl([\Phi_\theta], \Pth(\gamma_{[\lambda_\wh, \lambda_\bl]})\bigr) = e^{-\pi i \langle \lambda, \lambda \rangle_\Gamma \theta}.
\]
There furthermore holds, by the construction of the $\Diff_+^{(m)}(S^1)$-action on the groups $P(H, (\Lambda_\wh - \Lambda_\bl)/\Gamma)$ and $L(T_\wh, H, T_\bl)$ described in the proof of \cref{thm:bicol-diffnS1action}, that
\[
[\Phi_\theta]^* \Pth(\gamma_{[\lambda_\wh, \lambda_\bl]}) = \Pth(\gamma_{[\lambda_\wh, \lambda_\bl]}) + \exp_H(-\lambda\theta)
\]
and
\[
\bigl([\Phi_\theta] \cdot \gamma_{[\lambda_\wh, \lambda_\bl]}\bigr)_\match(q) = \exp_H(-\lambda\theta),
\]
which implies
\[
[\Phi_\theta] \cdot \gamma_{[\lambda_\wh, \lambda_\bl]} = \gamma_{[\lambda_\wh, \lambda_\bl]} + \Bi\bigl(\exp_H(-\lambda\theta)\bigr).
\]
Collecting the above calculations, we see from the definition of the lifting of the $\Diff_+^{(m)}(S^1)$-action to $\centL(T_\wh, H, T_\bl)$ described in \cref{subsec:bicol-diffaction} that
\begin{align*}
[\Phi_\theta] \cdot (\gamma_{[\lambda_\wh, \lambda_\bl]}, 1)
	&= \Bigl([\Phi_\theta] \cdot \gamma_{[\lambda_\wh, \lambda_\bl]}, d'\bigl([\Phi_\theta], \Pth(\gamma_{[\lambda_\wh, \lambda_\bl]})\bigr)\Bigr) \\
	&= \Bigl(\gamma_{[\lambda_\wh, \lambda_\bl]} + \Bi\bigl(\exp_H(-\lambda\theta)\bigr), e^{-\pi i \langle \lambda, \lambda \rangle_\Gamma \theta}\Bigr).
\end{align*}

Next, we observe that by~\eqref{eq:bicol-cocycle-viaP},
\[
c\Bigl(\gamma_{[\lambda_\wh, \lambda_\bl]}, \Bi\bigl(\exp_H(-\lambda\theta)\bigr)\Bigr) = c'\bigl(\exp_H \circ \hat\xi, \exp_H(-\lambda\theta)\bigr)
\]
and the latter $2$-cocycle value is easily computed to be $e^{-2\pi i \langle \lambda, \lambda \rangle_\Gamma \theta}$. An imitation of the proof of \cref{thm:unicol-indrep-restrtoidcomp} now goes through without a hitch.
\end{proof}

Summarising, \cref{thm:bicol-indrep-restrtosubgp}, together with the isomorphism
\[
\centL(T_\wh, H, T_\bl)\big/\centker(\Delta' \circ \Pth) \xrightarrow{\sim} \Lambda_\wh - \Lambda_\bl
\]
induced by the homomorphisms $\widetilde\Pth$ and $\Delta'$, describe how $\Ind W_{\chi, l}$ combined with the intertwining $\Rot^{(m)}(S^1)$-action breaks up into irreducible subrepresentations after restriction to $\centker(\Delta' \circ \Pth) \rtimes \Rot^{(m)}(S^1)$. We namely have a unitary isomorphism
\[
\bigoplus_{\mathclap{\lambda \in \Lambda_\wh - \Lambda_\bl}} f_{\chi, l}^{\sigma_\lambda}\colon \Res_{\centL(T_\wh, H, T_\bl)}^{\centker(\Delta' \circ \Pth)} \Ind_{\centker(\Delta' \circ \Pth)}^{\centL(T_\wh, H, T_\bl)} W_{\chi, l}
\xrightarrow{\sim}
\bigoplus_{\mathclap{\lambda \in \Lambda_\wh - \Lambda_\bl}} W_{\chi, l-\lambda}
\]
of representations of $\centker(\Delta' \circ \Pth) \rtimes \Rot^{(m)}(S^1)$, where $\sigma_\lambda$ is the coset associated to $\lambda$ as in the statement of \cref{thm:bicol-indrep-restrtosubgp}.

One can now copy the proof of \cref{thm:unicol-indrepsisom} to show the following result. The applications of \cref{thm:unicol-indrep-restrtoidcomp} and \cref{thm:unicol-indrepisirrep} in that proof should be replaced by \cref{thm:bicol-indrep-restrtosubgp} and \cref{thm:bicol-indrepisirred}, respectively, and the role of the winding element homomorphism on a unicoloured torus loop group is taken over by the composition $\Delta' \circ \Pth$.

\begin{thm}
\label{thm:bicol-indrepsisom}
Two representations $\Ind W_{\chi, l}$ and $\Ind W_{\chi', l'}$ of $\centL(T_\wh, H, T_\bl)$, where $\chi$ and $\chi'$ are characters of $(\Lambda_\wh \cap \Lambda_\bl)/\Gamma$ and $l$ and $l'$ are characters of $H$, are (unitarily) isomorphic if and only if both $\chi' = \chi$ and $l' = l-\lambda$ for some $\lambda \in \Lambda_\wh - \Lambda_\bl$.
\end{thm}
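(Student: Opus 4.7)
The plan is to mirror the proof of \cref{thm:unicol-indrepsisom}, adapting each step to the bicoloured setting using the machinery now in place: \cref{thm:bicol-indrep-restrtosubgp} (the restriction theorem for $\Ind W_{\chi,l}$), \cref{thm:bicol-indrepisirred} (irreducibility of $\Ind W_{\chi,l}$), and the isomorphism
\[
\centL(T_\wh, H, T_\bl)/\centker(\Delta' \circ \Pth) \xrightarrow{\sim} \Lambda_\wh - \Lambda_\bl
\]
induced by $\Delta' \circ \widetilde\Pth$, which takes over the role played by the winding element homomorphism $\centL T/(\centL T)_0 \xrightarrow{\sim} \Lambda$ in the unicoloured case.

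For the ``if'' direction, suppose $\lambda \in \Lambda_\wh - \Lambda_\bl$ and let $\sigma$ be the coset of $\centker(\Delta' \circ \Pth)$ in $\centL(T_\wh,H,T_\bl)$ corresponding to $\lambda$. By \cref{thm:bicol-indrep-restrtosubgp} the unitary $f^\sigma_{\chi,l}$ intertwines $(\Res \Ind W_{\chi,l})|_{\hilb S_{\chi,l}^\sigma}$ with $W_{\chi, l-\lambda}$. Composing $(f^\sigma_{\chi,l})^{-1}$ with the isometric $\centker(\Delta' \circ \Pth)$-intertwining inclusion $\hilb S_{\chi,l}^\sigma \hookrightarrow \Ind \hilb S_{\chi,l}$ yields a non-zero $\centker(\Delta' \circ \Pth)$-equivariant isometry $\hilb S_{\chi,l-\lambda} \hookrightarrow \Ind \hilb S_{\chi,l}$. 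By the universal property of induction (proved by imitating the argument in \cref{thm:indrep-functorial}, using that the image subspaces associated to distinct cosets are mutually orthogonal), this lifts uniquely to an isometric $\centL(T_\wh,H,T_\bl)$-intertwiner $\Ind \hilb S_{\chi,l-\lambda} \to \Ind \hilb S_{\chi,l}$. Since both induced representations are irreducible by \cref{thm:bicol-indrepisirred}, Schur's lemma upgrades this non-zero map to a unitary isomorphism.

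For the ``only if'' direction, suppose $f\colon \Ind \hilb S_{\chi',l'} \xrightarrow{\sim} \Ind \hilb S_{\chi,l}$ is an isomorphism of $\centL(T_\wh,H,T_\bl)$-representations. Let $\sigma_0 := \centker(\Delta' \circ \Pth)$. The restriction $f|_{\hilb S_{\chi',l'}^{\sigma_0}}$ is a non-zero $\centker(\Delta' \circ \Pth)$-intertwiner into $\Ind \hilb S_{\chi,l}$. By \cref{thm:bicol-indrep-restrtosubgp} the latter decomposes as $\bigoplus_\sigma \hilb S_{\chi,l}^\sigma$, with summands identified with the pairwise non-isomorphic irreducibles $W_{\chi,l-\lambda}$ as $\lambda$ ranges over $\Lambda_\wh - \Lambda_\bl$. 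Schur's lemma then forces $f|_{\hilb S_{\chi',l'}^{\sigma_0}}$ to land inside a single summand $\hilb S_{\chi,l}^\sigma$, for some coset $\sigma$ corresponding to some $\lambda \in \Lambda_\wh - \Lambda_\bl$, and to be an isomorphism $W_{\chi',l'} \xrightarrow{\sim} W_{\chi, l-\lambda}$ of irreducible representations of $\centker(\Delta' \circ \Pth)$.

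The final step, which is the only genuine departure from the unicoloured argument and hence the main obstacle, is to disentangle the two parameters $\chi$ and $l$ from this single isomorphism. For this I will restrict the isomorphism $W_{\chi',l'} \xrightarrow{\sim} W_{\chi, l-\lambda}$ to two explicit subgroups of $\centker(\Delta' \circ \Pth)$ visible through the decomposition~\eqref{eq:bicol-decompsubgp-centext}. Restriction to the finite subgroup $(\Lambda_\wh \cap \Lambda_\bl)/\Gamma \hookrightarrow \centker(\Delta' \circ \Pth)$ identifies $W_{\chi',l'}$ and $W_{\chi,l-\lambda}$ with multiples of the characters $\chi'$ and $\chi$ respectively (since this subgroup lies in the kernel of $\Pth$ and hence $W_l$ is trivial on it by construction), so Schur forces $\chi' = \chi$. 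Restriction to the subgroup $\Bi(H) \subseteq \centker(\Delta' \circ \Pth)$ of constant loops yields, using that $W_l$ acts on constant loops by the character $l$ and that the $\chi$-factor is trivial here, the characters $l'$ and $l - \lambda$ respectively, so Schur forces $l' = l - \lambda$. This completes the classification.
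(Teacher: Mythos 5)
Your proposal is correct and follows essentially the same route as the paper, which simply instructs the reader to copy the unicoloured proof of \cref{thm:unicol-indrepsisom} with \cref{thm:bicol-indrep-restrtosubgp}, \cref{thm:bicol-indrepisirred} and $\Delta' \circ \Pth$ in place of their unicoloured counterparts. Your final step of restricting to $(\Lambda_\wh \cap \Lambda_\bl)/\Gamma$ and to $\Bi(H)$ to separate $\chi$ from $l$ is exactly the content left implicit in the paper's earlier assertion that the $W_{\chi,l}$ are mutually non-isomorphic, so it is a faithful (and slightly more explicit) rendering of the intended argument.
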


We can calculate the character of $\Ind W_{\chi, l}$ in the same way as we did in the unicoloured situation. Let us fix an element $\lambda \in \Lambda_\wh - \Lambda_\bl$ and its corresponding coset $\sigma \subseteq \centL(T_\wh, H, T_\bl)$ as in the statement of \cref{thm:bicol-indrep-restrtosubgp}. Then we find via~\eqref{eq:bicol-charsubgprep} the character of the subspace $\hilb S_{\chi,l}^\sigma$ to be
\[
\ch_{\hilb S_{\chi,l}^\sigma}(q) = \ch_{R_{\chi,l-\lambda}}(q) = q^{\rank \Gamma/24} q^{\langle l-\lambda, l-\lambda\rangle_\Gamma / 2} \eta(q)^{-\rank \Gamma},
\]
since $\rank \Gamma = \dim H$. The character of $\Ind R_{\chi,l}$ is then obtained by summing over all the subspaces $\hilb S_{\chi,l}^\sigma$:
\begin{align*}
\ch_{\Ind R_{\chi,l}}(q)
	&= \sum_{\mathclap{\lambda \in \Lambda_\wh - \Lambda_\bl}} \ch_{\hilb S_{\chi,l}^\sigma}(q) \\
	&= q^{\rank \Gamma/24} \sum_{\mathclap{\lambda \in \Lambda_\wh - \Lambda_\bl}} q^{\langle l-\lambda, l-\lambda\rangle_\Gamma / 2} \cdot \eta(q)^{-\rank \Gamma} \\
	&= q^{\rank \Gamma/24} \theta_{l+(\Lambda_\wh - \Lambda_\bl)}(q) \cdot \eta(q)^{-\rank \Gamma}.
\end{align*}
Note that, even if $l = 0$, the character will in general be a series in non-integral powers of $q$, in contrast to the unicoloured situation. We learn in particular that

\begin{prop}
The intertwining $\Rot^{(m)}(S^1)$-action $\Ind R_{\chi,l}$ on the representation $\Ind W_{\chi, l}$ of $\centL(T_\wh, H, T_\bl)$ is of positive energy.
\end{prop}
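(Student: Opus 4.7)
The plan is to mimic the proof of the analogous positive-energy statement for the unicoloured case (compare the paragraph immediately before \cref{thm:unicol-classifyirreps}), exploiting the orthogonal decomposition of $\Ind \hilb S_{\chi,l}$ over cosets $\sigma$ of $\centker(\Delta'\circ\Pth)$ in $\centL(T_\wh,H,T_\bl)$. Since $\Ind R_{\chi,l}$ preserves each summand $\hilb S_{\chi,l}^\sigma$ by construction, the full Hilbert space decomposes at each energy level as in~\eqref{eq:unicol-indrepenergydecomp}, and it suffices to show (i) that each restriction $\Ind R_{\chi,l}|_{\hilb S_{\chi,l}^\sigma}$ is of positive energy, and (ii) that the resulting lower bounds on energy, as $\sigma$ varies, admit a uniform lower bound.

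For point (i), I would invoke \cref{thm:bicol-indrep-restrtosubgp}: if $\sigma$ corresponds to $\lambda \in \Lambda_\wh - \Lambda_\bl$, the map $f_{\chi,l}^\sigma$ intertwines $\Ind R_{\chi,l}|_{\hilb S_{\chi,l}^\sigma}$ with $R_{\chi,l-\lambda}$. The latter is already known from \cref{subsec:bicol-irrepsofsubgp} to be of positive energy, with character $\ch_{R_{\chi,l-\lambda}}(q) = q^{\dim H/24}\,q^{\langle l-\lambda,l-\lambda\rangle_\Gamma/2}\,\eta(q)^{-\dim H}$, and in particular with lowest energy level $\langle l-\lambda,l-\lambda\rangle_\Gamma/2 \geq 0$ by positive definiteness of $\langle \cdot,\cdot\rangle_\Gamma$ on $\liealg h$.

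For point (ii), the key observation is that the restriction of $\langle \cdot,\cdot\rangle_\Gamma$ to $\Lambda_\wh - \Lambda_\bl \subseteq \Gamma^\vee \subseteq \liealg h$ is a positive definite (rational) quadratic form. Hence $\lambda \mapsto \langle l-\lambda,l-\lambda\rangle_\Gamma$ is a positive definite quadratic function on the affine translate $l - (\Lambda_\wh - \Lambda_\bl)$ of a lattice, so it attains its minimum on a finite non-empty set. This gives a uniform lower bound on the energies appearing across all $\sigma$, so $\Ind R_{\chi,l}$ is of positive energy. (Equivalently, the already-computed character $\ch_{\Ind R_{\chi,l}}(q) = q^{\rank\Gamma/24}\,\theta_{l+(\Lambda_\wh-\Lambda_\bl)}(q)\,\eta(q)^{-\rank\Gamma}$ is visibly a series in powers of $q$ bounded below, since $\theta_{l+(\Lambda_\wh-\Lambda_\bl)}(q)$ and $\eta(q)^{-\rank\Gamma}$ each have this property.)

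The only mildly non-trivial point is (ii)—the uniform bound on minimum energies—since, in contrast to the unicoloured setting where $\lambda$ ranges over the integral lattice $\Lambda$, here it ranges over the rational lattice $\Lambda_\wh - \Lambda_\bl$, and the energies $\langle l-\lambda,l-\lambda\rangle_\Gamma/2$ are generally non-integral. However, this causes no real obstacle: positive definiteness of the form on $\Lambda_\wh - \Lambda_\bl$ is all that is needed for a discrete quadratic function on an affine lattice translate to attain its minimum, so the argument goes through without modification.
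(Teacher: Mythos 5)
Your proposal is correct and takes essentially the same route as the paper: there, positivity of energy is read off from the character computation $\ch_{\Ind R_{\chi,l}}(q) = q^{\rank \Gamma/24}\,\theta_{l+(\Lambda_\wh - \Lambda_\bl)}(q)\,\eta(q)^{-\rank \Gamma}$, which is exactly your coset-wise decomposition via \cref{thm:bicol-indrep-restrtosubgp} combined with positive definiteness of $\langle \cdot, \cdot \rangle_\Gamma$ on the rational lattice $\Lambda_\wh - \Lambda_\bl$. Your step (ii) is slightly more than is needed for the lower bound (each summand already has energies $\geq 0$, so no uniform-minimum argument is required), while the discreteness of $\Lambda_\wh - \Lambda_\bl$ that you invoke is precisely what guarantees the finite-type condition encoded in the theta series, so the two arguments coincide.
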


We do not know whether $\ch_{\Ind R_{\chi,l}}(q)$ satisfies a type of modular behaviour, but we expect this not to hold in any case unless we apply an energy correction and define
\[
Z_{\Ind W_{\chi,l}}(q) := q^{-\rank \Gamma/24} \ch_{\Ind R_{\chi,l}}(q) = \theta_{l+(\Lambda_\wh - \Lambda_\bl)}(q) \cdot \eta(q)^{-\rank \Gamma}.
\]

\begin{thm}
\label{thm:bicol-classifyirreps}
Every irreducible, positive energy representation of $\centL(T_\wh, H, T_\bl)$ such that the central subgroup $\phasegp$ acts as $z \mapsto z$ is (unitarily) isomorphic to $\Ind W_{\chi, l}$ for some characters $\chi$ and $l$ of $(\Lambda_\wh \cap \Lambda_\bl)/\Gamma$ and $H$, respectively. The isomorphism classes of such representations are therefore labelled by two parameters: one is an element of the dual group of the finite abelian group $(\Lambda_\wh \cap \Lambda_\bl)/\Gamma$ and the other is an element of the finite abelian group $\Gamma^\vee/(\Lambda_\wh - \Lambda_\bl)$.
\end{thm}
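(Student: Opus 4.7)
The plan is to mirror the structure of the proof of Theorem~\ref{thm:unicol-classifyirreps}, using the normal subgroup $\centker(\Delta' \circ \Pth)$ in place of the identity component $(\centL T)_0$. Concretely, let $Q$ be an irreducible, positive energy representation of $\centL(T_\wh, H, T_\bl)$ on a Hilbert space $\hilb H$ such that the central $\phasegp$ acts by $z \mapsto z$. I first restrict $Q$ to $\centker(\Delta' \circ \Pth)$. Combining \cref{thm:gprep-findimisotcomp-containsirrep} with \cref{thm:irrepofcrossprod-irrepgp} (applied to the semidirect product of $\centker(\Delta' \circ \Pth)$ with the covering rotation group $\Rot^{(m)}(S^1)$ which acts with positive energy via some intertwiner), $Q|_{\centker(\Delta' \circ \Pth)}$ contains an irreducible, positive energy subrepresentation. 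By the classification result \cref{thm:bicol-classifysubgpirreps}, this subrepresentation is isomorphic to $W_{\chi, l}$ for some characters $\chi$ of $(\Lambda_\wh \cap \Lambda_\bl)/\Gamma$ and $l$ of $H$. This yields a non-zero intertwining inclusion $i \colon W_{\chi, l} \hookrightarrow Q|_{\centker(\Delta' \circ \Pth)}$.

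Next I want to lift $i$ to a morphism $\Ind W_{\chi, l} \to Q$ of $\centL(T_\wh, H, T_\bl)$-representations. Following the convergence discussion in \cref{subsec:indreps}, the lift exists provided that for every vector $([(\gamma^\sigma, z^\sigma), v^\sigma])_\sigma$ of $\Ind \hilb S_{\chi,l}$ the series $\sum_\sigma Q((\gamma^\sigma, z^\sigma)) i(v^\sigma)$ converges in $\hilb H$. The key observation is that, for each coset $\sigma$ of $\centker(\Delta' \circ \Pth)$ in $\centL(T_\wh, H, T_\bl)$, the subspace $Q((\gamma^\sigma, z^\sigma))(i(\hilb S_{\chi, l}))$ of $\hilb H$ carries the conjugate representation $W_{\chi, l}^{(\gamma^\sigma, z^\sigma)}$ of $\centker(\Delta' \circ \Pth)$. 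By the computation in the proof of \cref{thm:bicol-indrepisirred} (restricting the conjugate representation to $\Bi(H)$ and reading off the winding $\Delta_{\Pth \gamma^\sigma}'$), these conjugate representations are pairwise non-isomorphic as $\sigma$ varies, hence the corresponding subspaces of $\hilb H$ are mutually orthogonal. Therefore the squared norm of the series equals $\sum_\sigma \norm{v^\sigma}_{\hilb S_{\chi,l}}^2$, which converges by the definition of $\Ind \hilb S_{\chi, l}$.

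The resulting non-zero $\centL(T_\wh, H, T_\bl)$-intertwiner $\Ind W_{\chi, l} \to Q$ must be an isomorphism by Schur's lemma, since $\Ind W_{\chi, l}$ is irreducible by \cref{thm:bicol-indrepisirred} and $Q$ is irreducible by hypothesis. This proves the first assertion. For the parametrisation of isomorphism classes, I invoke \cref{thm:bicol-indrepsisom}: the class of $\Ind W_{\chi, l}$ only depends on $\chi$ and on the coset $l + (\Lambda_\wh - \Lambda_\bl)$ in $H^\vee \cong \Gamma^\vee$. Hence the isomorphism classes are labelled by an element of the dual of $(\Lambda_\wh \cap \Lambda_\bl)/\Gamma$ together with an element of $\Gamma^\vee/(\Lambda_\wh - \Lambda_\bl)$.

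The main obstacle in this program is the orthogonality argument needed for convergence: in the unicoloured setting of \cref{thm:unicol-classifyirreps} the mutual non-isomorphism of the conjugates $W_l^{(\gamma, z)}$ for different cosets was detected on the subgroup $T$ via the winding $\Delta_\gamma$; here one must check that the analogous invariant $\Delta_{\Pth \gamma}' \in \Lambda_\wh - \Lambda_\bl$, read off from the restriction to $\Bi(H)$, genuinely distinguishes all non-trivial cosets of $\centker(\Delta' \circ \Pth)$ in $\centL(T_\wh, H, T_\bl)$. That this is the case follows because, by construction of $\centker(\Delta' \circ \Pth)$ as the kernel of $\Delta' \circ \Pth$, the quotient is precisely $\Lambda_\wh - \Lambda_\bl$, so the winding invariant separates cosets exactly.
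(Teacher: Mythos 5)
Your proposal is correct and follows essentially the same route as the paper: the paper's own proof simply runs the argument of \cref{thm:unicol-classifyirreps} with $\centker(\Delta' \circ \Pth)$ and \cref{thm:bicol-classifysubgpirreps} replacing $(\centL T)_0$ and \cref{thm:unicol-classifyidcompirreps}, using \cref{thm:bicol-indrepisirred} for irreducibility and the mutual non-isomorphism of the conjugates (there cited via \cref{thm:bicol-indrep-restrtosubgp}, which you instead detect directly on $\Bi(H)$ through the winding $\Delta_{\Pth\gamma}'$ — an equivalent observation) to get the orthogonality needed for the convergence of the lifted intertwiner, followed by Schur's lemma and \cref{thm:bicol-indrepsisom} for the labelling.
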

\begin{proof}
This is shown along the lines of the proof of \cref{thm:unicol-classifyirreps}. The role of the subgroup $(\centL T)_0$ of $\centL T$ and the knowledge of the representation theory of the former afforded by \cref{thm:unicol-classifyidcompirreps} is taken over by that of the subgroup $\centker(\Delta' \circ \Pth)$ and \cref{thm:bicol-classifysubgpirreps}. We saw in \cref{thm:bicol-indrepisirred} that $\Ind W_{\chi, l}$ is irreducible and it is seen through \cref{thm:bicol-indrep-restrtosubgp} that for different cosets $\sigma$ the subspaces $\hilb S_{\chi,l}^\sigma$ of $\Ind \hilb S_{\chi,l}$ are mutually non-isomorphic representations of $\centker(\Delta' \circ \Pth)$.
\end{proof}

We conclude from \cref{thm:bicol-indrepsisom,thm:bicol-classifyirreps} that, up to the ambiguity of the character by which $\phasegp$ acts, $\centL(T_\wh, H, T_\bl)$ has only finitely many isomorphism classes of irreducible, positive energy representations. There exists precisely one isomorphism class, represented by $\Ind W_{0,0}$, if and only if both inclusions $\Gamma \subseteq \Lambda_\wh \cap \Lambda_\bl$ and $\Lambda_\wh - \Lambda_\bl \hookrightarrow (\Lambda_\wh \cap \Lambda_\bl)^\vee$ are equalities. Since $(\Lambda_\wh \cap \Lambda_\bl)^\vee = \Lambda_\wh^\vee - \Lambda_\bl^\vee$, the latter condition is fulfilled when $\Lambda_\wh$ and $\Lambda_\bl$ are both unimodular.
%
%Next, we want to show that, for a fixed $l$, the bicoloured Hilbert space contains canonical copies of the three unicoloured ones as direct summands, corresponding to the inclusions of $\Gamma$ and the $\Lambda_i$ in $\Lambda_\wh - \Lambda_\bl$, and that these inclusions are compatible with the circle diffeomorphism actions and with the inclusions of $\centL H$, $(L_\circlsm T_\wh)^\sim$ and $(L_\circrsm T_\bl)^\sim$.

\chapter{Outlook}
\label{chap:outlook}

In this \namecref{chap:outlook} we offer some speculative thoughts on further directions in which to continue the study of bicoloured torus loop groups started in this thesis.

\section{Defects between lattice conformal nets}
\label{sec:defectslatticenets}

Recall from \cref{chap:intro} that our motivation for introducing and studying bicoloured torus loop groups is to find new examples of defects in the sense of \cref{dfn:defect} between lattice conformal nets. That is, if $\Lambda_\whbl$ are two even lattices that give rise to conformal nets $\net A_\whbl$ as explained in \cref{exmp:latticenets}, then we would like to construct (some) $\net A_\wh$--$\net A_\bl$-defects.

The results achieved in \cref{chap:bicol} suggest the following approach, which imitates the construction in \cref{exmp:latticenets}. If
\[
\Lambda_\wh \xhookleftarrow{\pi_\wh} \Gamma \xhookrightarrow{\pi_\bl} \Lambda_\bl
\]
is a span of even lattices, then belonging to it (and some minor extra data) there is a centrally extended bicoloured torus loop group $\centL(T_\wh, H, T_\bl)$. Here, $T_\whbl$ and $H$ are the tori associated to the above three lattices. When these lattices are positive definite this group has, up to isomorphism and the character by which the central subgroup $\phasegp$ acts, finitely many irreducible, positive energy representations denoted by $\Ind W_{\chi,l}$. We single out one of them, $\Ind W_{0,0}$, and dub it the \defn{vacuum representation} of $\centL(T_\wh, H, T_\bl)$. If $I \subseteq S^1$ is a bicoloured interval, where we consider $\circl$ as being white and $\circr$ as black, and $\centL_I(T_\wh, H, T_\bl)$ is the subgroup of those elements $(\gamma, z)$ of $\centL(T_\wh, H, T_\bl)$ for which $\supp \gamma \subseteq I$, then we may form the von Neumann algebra
\begin{equation}
\label{eq:outl-candvnalgs}
D_H(I) := \vN\Bigl((\Ind W_{0,0})\bigl(\centL_I(T_\wh, H, T_\bl)\bigr)\Bigr)
\end{equation}
acting on the underlying Hilbert space $\Ind \hilb S_{0,0}$ of $\Ind W_{0,0}$. Let $\centL T_\whbl$ be the centrally extended unicoloured torus loop groups described in \cref{subsec:bicol-isot} and assume that the lattice nets $\net A_\whbl$ are constructed from these. We then conjecture that there exists an $\net A_\wh$--$\net A_\bl$-defect $D_H$ of which the algebras attached to those bicoloured intervals that are embedded in $S^1$ are defined as in~\eqref{eq:outl-candvnalgs}.

This (candidate) family of von Neumann algebras obviously satisfies the functoriality, isotony and locality axioms of a defect. That also the requirement holds stating that $D_H$ restricts to $\net A_\wh$ and $\net A_\bl$ on $\circl$ and $\circr$, respectively, follows from the fact that the vacuum representations $\Ind W_0$ of $\centL T_\whbl$ inject as direct summands into $\Ind W_{0,0}$ in a way that is equivariant with respect to the group homomorphisms 
\[
\centL_\circlsm T_\wh \hookrightarrow \centL(T_\wh, H, T_\bl) \hookleftarrow \centL_\circrsm T_\bl.
\]

With the belief that the formalism of bicoloured torus loop groups produces lattice net defects comes the question whether such a defect can be alternatively constructed through other, more traditional methods. Recall namely that, even though the central extension $\centL(T_\wh, H, T_\bl)$ only contains the two unicoloured `halves' $\centL_\circlsm T_\wh$ and $\centL_\circrsm T_\bl$ of $\centL T_\wh$ and $\centL T_\bl$, respectively, there exists a $\Diff_+^{(n)}(S^1)$-equivariant inclusion of the full group $\centL H$ into it as well (see \cref{subsec:bicol-inclcentextLH} and \cref{thm:bicol-inclcentextLH-diffnS1equiv}). Let $\net A_\Gamma$ denote the lattice conformal net constructed from $\centL H$. One can then speculate that there is an inclusion $\net A_\Gamma(I) \hookrightarrow D_H(I)$ of the associated von Neumann algebras which is of so called \defn{finite index}. Even stronger: the defect $D_H$ might be a relatively local, finite index extension of $\net A_\Gamma$. As we already mentioned in \cref{exmp:qsystemdefects}, such extensions are known to be classified by data internal to the net $\net A_\Gamma$, namely via $Q$-systems. This would open a different angle to constructing these defects---one that follows operator-algebraic means.

\section{Generalising to different tori at the two defect points}

Adding to our conjecture that a centrally extended bicoloured torus loop group $\centL(T_\wh, H, T_\bl)$ gives rise to a defect $D_H$ through the construction~\eqref{eq:outl-candvnalgs}, we predict that any positive energy representation $Q$ of this group can be equipped with the structure of a $D_H$--$D_H$-sector in the sense of \parencite[Definition 2.2]{bartels:confnetsIII}:
\[
\begin{tikzcd}
\net A_\wh \ar[r, bend left=45, "D_H", ""{name=U, below}] \ar[r, bend right=45, "D_H"', ""{name=D}] &[1.5em] \net A_\bl
\ar[Rightarrow, from=U, to=D, "Q"]
\end{tikzcd}.
\]
This would generalise the fact that representations of this type of a centrally extended \defn{unicoloured} torus loop group correspond to representations of the associated conformal net (see \parencite[Proposition 3.15]{dong:latticeconfnets}), and hence to self-sectors of its identity defect.

The authors of \parencite{bartels:confnetsIII} define the more general notion of a sector between two \emph{different} defects as well. One could try to construct such sectors via an analogous generalisation of bicoloured loop groups as follows. In our \cref{dfn:bicol-gp} of a bicoloured torus loop group we placed identical matching conditions at the two defect points $p$ and $q$, defined in terms of a single, third even lattice $\Gamma$. Suppose that we are given instead the more complicated datum of a quadruple $(\Lambda_\wh, \Gamma_p, \Gamma_q, \Lambda_\bl)$ of even lattices, together with four lattice homomorphisms
\[
\begin{tikzcd}[sep=1em]
& \ar[dl, bend right, hook', "\pi_{p,\wh}"'] \Gamma_p \ar[dr, bend left, hook, "\pi_{p,\bl}"] & \\
\Lambda_\wh & & \Lambda_\bl \\
& \ar[ul, bend left, hook, "\pi_{q,\wh}"] \Gamma_q \ar[ur, bend right, hook', "\pi_{q,\bl}"'] & 
\end{tikzcd}.
\]
Tensoring this circle with $\phasegp$ over $\ZZ$ gives a circle of four tori $T_\whbl$, $H_p$, and $H_q$ with four surjective torus homomorphisms $\phasegp\pi_{p,\whbl}$ and $\phasegp\pi_{q,\whbl}$ between them. Define then the abelian group $L(T_\wh, H_p, H_q, T_\bl)$ to consist of all quadruples $(\gamma_\wh, \gamma_p, \gamma_q, \gamma_\bl)$, where $\gamma_\wh\colon \circl \to T_\wh$ and $\gamma_\bl\colon \circr \to T_\bl$ are smooth maps and $\gamma_p\colon \{p\} \to H_p$ and $\gamma_q\colon \{q\} \to H_q$ are functions, such that the following diagram commutes:
\[
\begin{tikzcd}[sep=1em]
&[1em] & \ar[ddll, bend right, two heads, "\phasegp\pi_{p,\wh}"'] H_p \ar[ddrr, bend left, two heads, "\phasegp\pi_{p,\bl}"] & &[1em] \\[1em]
& & \ar[dl, bend right, hook'] \{p\} \ar[dr, bend left, hook] \ar[u, "\gamma_p"'] & & \\
T_\wh & \ar[l, "\gamma_\wh"'] \circl & & \circr \ar[r, "\gamma_\bl"] & T_\bl \\
& & \ar[ul, bend left, hook] \{q\} \ar[ur, bend right, hook'] \ar[d, "\gamma_q"] & & \\[1em]
& & \ar[uull, bend left, two heads, "\phasegp\pi_{q,\wh}"] H_q \ar[uurr, bend right, two heads, "\phasegp\pi_{q,\bl}"'] & &
\end{tikzcd},
\]
and such that appropriate conditions on the derivatives of $\gamma_\whbl$ at $p$ and $q$ are satisfied.

While this definition is easily made, the only test for its usefulness is of course whether $L(T_\wh, H_p, H_q, T_\bl)$ admits $\phasegp$-central extensions with desirable properties. The positive energy representations of such a central extension could then possibly give rise to sectors between the two defects $D_p$ and $D_q$ constructed from $\centL(T_\wh, H_p, T_\bl)$ and $\centL(T_\wh, H_q, T_\bl)$.

Evidence that such central extensions might sometimes exist is given by the observation that in the special case when $\Lambda_\wh$, $\Lambda_\bl$, $\Gamma_p$ and $\Gamma_q$ are all equal to the same lattice $\Gamma$, and the homomorphisms $\pi_{p,\whbl}$ and $\pi_{q,\wh}$ are chosen to be the identity but $\pi_{q,\bl}$ is a lattice automorphism of $\Gamma$, the group $L(T_\wh, H_p, H_q, T_\bl)$ is isomorphic to the so called \defn{twisted} unicoloured torus loop group
\[
L_{(g)} H := \Bigl\{\gamma \in C^\infty\bigl([0,1], H\bigr) \Bigm\vert \gamma(1) = g_H\bigl(\gamma(0)\bigr)\Bigr\},
\]
where $g_H$ is the automorphism of the torus $H := \Gamma \otimes_\ZZ \phasegp$ induced by $g$. (In this definition of $L_{(g)}H$ we again omit conditions on the derivatives at $0$ and $1$ for $\gamma$.) Such twisted loop groups and their central extensions have been studied before (see \parencite[Section 3.7 and p.~175]{pressley:loopgps}) in the case that $g$ has finite order---an assumption that is indeed fulfilled when $\Gamma$ is positive definite, as explained in \cref{app:lattices}.

%Hence, this definition is a simultaneous generalisation of both the bicoloured torus loop groups studied in this thesis and of twisted unicoloured torus loop groups. (All torus automorphisms arise via automorphisms of the kernel of the exponential.) The problem in this situation seems to be that the four Lie algebras can in general not be identified canonically, unlike in the case when we are dealing with three Lie algebras.

\section{Generalising from tori to non-abelian Lie groups}

Except for selected parts of the Introduction, we spent our attention in this thesis exclusively on torus loop groups and lattice conformal nets. One may wonder whether the notion of a bicoloured torus loop group $L(T_\wh, H, T_\bl)$ can be adapted to the situation when the two tori $T_\whbl$ are instead compact, connected, simple Lie groups $G_\whbl$. Of course, the definition of this group itself does make sense upon replacing $H$ with an arbitrary Lie group with homomorphisms towards $G_\whbl$, but the difficulty lies again in finding central extensions with desirable properties.

Motivated by the discussion in \parencite[Section 5]{kapustin:surfops} we suggest that the search for appropriate matching conditions for bicoloured loops in this non-abelian situation could be guided slightly better by the following observation. The authors of \parencite{kawahigashi:classifnonlocalc1} classify the finite index, relatively local extensions of the affine Kac--Moody nets $\net A_{\lieSU(2),k}$ for all levels $k \geq 1$. By the comments made in \cref{exmp:qsystemdefects} we hence obtain a family of defects from $\net A_{\lieSU(2),k}$ to itself. It then seems fruitful to ask whether there exists a family of matching conditions from $\lieSU(2)$ to itself, depending on a level $k$, with which one is able to reproduce these defects. We leave this investigation for future research.

\appendix
\chapter{Background material}
\label{chap:app}

In this \namecref{chap:app} we supply background material on various notions used throughout the thesis.

\section{Lattices}
\label{app:lattices}

This section is a brief introduction to lattices. We present some basic examples and we attempt to give the reader a feeling for their richness by discussing their automorphism groups and techniques on how to construct new examples of lattices from given ones. Their role in this thesis is that a lattice is the major ingredient for building the centrally extended unicoloured torus loop groups studied in \cref{chap:unicol} and, similarly, a span of even lattices is the most important input datum for the central extensions of bicoloured torus loop groups that we construct in \cref{chap:bicol}. We finish by defining theta series of lattices for their appearance when calculating the (graded) characters of the representations of these loop groups in \cref{sec:unicol-reptheory} and \cref{sec:bicol-reptheory}.

\begin{refs}
The material in this section is largely taken from \parencite{ebeling:lattices}, \parencite{conway:splag} and \parencite[§1]{nikulin:bilinforms}. More specific references will be given in the text.
\end{refs}

By a \defn{lattice} we mean a pair of a free $\ZZ$-module $\Lambda$ of finite rank, together with a non-degenerate, symmetric, bi-additive form $\langle \cdot, \cdot \rangle \colon \Lambda \times \Lambda \to \ZZ$. We will often omit the reference to the form and denote the lattice simply by $\Lambda$. We call $\Lambda$ \defn{even} if $\langle \lambda, \lambda \rangle \in 2\ZZ$ for all $\lambda \in \Lambda$, and \defn{odd} otherwise. The property of a lattice being \defn{positive definite}, \defn{negative definite} or \defn{indefinite} is defined in the obvious way. \defn{Morphisms} of lattices are injective $\ZZ$-module homomorphisms which respect the forms.

The automorphism group of $\Lambda$ is denoted by $\Aut(\Lambda; \langle \cdot, \cdot \rangle)$. It is a finite group if $\Lambda$ is definite since it is a discrete group on the one hand, while on the other hand, it is a subgroup of the compact orthogonal group $\Aut(\Lambda \otimes_\ZZ \RR, \langle \cdot, \cdot \rangle)$, where we extended $\langle \cdot, \cdot \rangle$ bilinearly to $\Lambda \otimes_\ZZ \RR$ and used the same notation for it. If $\Lambda$ is indefinite, then $\Aut(\Lambda; \langle \cdot, \cdot \rangle)$ is typically infinite as shown in \parencite[Satz (30.4)]{kneser:quadratische}.

An element $\mu \in \Lambda$ is called a \defn{(long) root} if $\langle \mu, \mu \rangle = 2$. In general, a lattice might not contain any roots. We say that $\Lambda$ is a \defn{root lattice} if it is generated by its roots. Root lattices are automatically even. They have many automorphisms, since for every root $\mu$ the reflection
\[
\lambda \mapsto \lambda - 2\frac{\langle \lambda, \mu\rangle}{\langle \mu, \mu\rangle} \mu = \lambda - \langle \lambda, \mu\rangle \mu, \qquad \lambda \in \Lambda,
\]
in the hyperplane orthogonal to $\mu$ is a lattice automorphism.

\subsection{The dual lattice and the discriminant group}
\label{subsec:duallattice-discgp}

For a lattice $\Lambda$, define the $\ZZ$-module $\Lambda^\vee := \Hom_\Ab(\Lambda, \ZZ)$. Then $\Lambda^\vee$ is also free and finitely generated, of the same rank as $\Lambda$, and there is an injective homomorphism of $\ZZ$-modules $\Lambda \hookrightarrow \Lambda^\vee$ given by $\lambda \mapsto \langle \lambda, \cdot \rangle$. We claim that $\Lambda^\vee$ carries a non-degenerate, symmetric, bi-additive form $\Lambda^\vee \times \Lambda^\vee \to \QQ$, denoted by $\langle \cdot, \cdot \rangle$ as well, which extends the form on $\Lambda$. One way to define it is to first extend the form on $\Lambda$ bilinearly to $\QQ\Lambda$. Next, one notices that the inclusion $\Lambda \hookrightarrow \Lambda^\vee$ induces an isomorphism of vector spaces $\QQ\Lambda \xrightarrow{\sim} \QQ(\Lambda^\vee)$. This can be used to transport the form to $\QQ(\Lambda^\vee)$ and one finally restricts it to $\Lambda^\vee$. We call $\Lambda^\vee$ together with this form the \defn{dual lattice} of $\Lambda$.

We prove that $\langle \lambda, l \rangle := \langle \langle \lambda, \cdot \rangle, l \rangle = l(\lambda) \in \ZZ$ for all $\lambda \in \Lambda$ and $l \in \Lambda^\vee$. Since both sides of this equality are additive in the arguments $\lambda$ and $l$, it is sufficient to show that $\langle \lambda_i, \lambda_j^\vee \rangle = \delta_{ij}$ for all $i,j$, where $\{\lambda_i\}_i$ is a basis for $\Lambda$ and $\{\lambda_j^\vee\}_j$ is the associated dual basis for $\Lambda^\vee$. There is a unique $\lambda^j \in \QQ\Lambda$ such that $\lambda_j^\vee = \langle \lambda^j, \cdot \rangle$, so $\langle \lambda_i, \lambda_j^\vee \rangle = \langle \lambda_i, \lambda^j \rangle = \lambda_j^\vee(\lambda_i)$, which shows what we wanted.

The action of the group $\Aut(\Lambda; \langle \cdot, \cdot \rangle)$ on $\Lambda$ extends to a (left) action on $\Lambda^\vee$ which preserves the $\QQ$-valued form, namely by $g \cdot l := l \circ g^{-1}$ for $g \in \Aut(\Lambda; \langle \cdot, \cdot \rangle)$ and $l \in \Lambda^\vee$.

If the injection $\Lambda \hookrightarrow \Lambda^\vee$ is surjective, we say that $\Lambda$ is \defn{unimodular}.

Concrete examples of positive definite lattices are often constructed as full rank submodules $\Lambda$ of some linear subspace $E$ of $\RR^n$ for some $n$ and the bi-additive form is then defined to be the restriction of the Euclidean inner product $\langle \cdot, \cdot \rangle_E$ on $E$. The dual lattice can then be identified with $\{\alpha \in E \mid \langle \alpha, \lambda \rangle_E \in \ZZ \text{ for all $\lambda \in \Lambda$}\}$. Indeed, this module has an obvious homomorphism $\alpha \mapsto \langle \alpha, \cdot \rangle_E|_\Lambda$ to $\Lambda^\vee$. The inverse is given by first equating for $l \in \Lambda^\vee$ its $\RR$-linear extension $\RR l\colon \Lambda \otimes_\ZZ \RR \to \RR$ with an $\RR$-linear map $E \to \RR$, and next using the isomorphism of the $\RR$-linear dual $E^\vee$ with $E$ that is induced by $\langle \cdot, \cdot \rangle_E$.

Since $\Lambda$ and its dual have the same rank, the quotient group $\Lambda^\vee/\Lambda$ is a finite abelian group which we denote by $D_\Lambda$. Its order is written as $\disc \Lambda$ and named the \defn{discriminant} of $\Lambda$. This group comes with the extra structure of the symmetric bi-additive form
\[
b_\Lambda\colon D_\Lambda \times D_\Lambda \to \QQ/\ZZ, \qquad (l + \Lambda, l' + \Lambda) \mapsto \langle l, l' \rangle + \ZZ,
\]
called the \defn{discriminant bi-additive form} of $\Lambda$. To see that $b$ is non-degenerate we show that if $l \in \Lambda^\vee$ is such that $\langle l, l'\rangle \in \ZZ$ for all $l' \in \Lambda^\vee$, then $l \in \Lambda$. Such an $l$ defines an element $\langle l, \cdot \rangle$ in $(\Lambda^\vee)^\vee := \Hom_\Ab(\Lambda^\vee, \ZZ)$. By the evaluation isomorphism $\Lambda \xrightarrow{\sim} (\Lambda^\vee)^\vee$ of $\ZZ$-modules there is a unique $\lambda \in \Lambda$ such that $\langle l, l' \rangle = l'(\lambda)$ for all $l' \in \Lambda^\vee$. In turn, $l'(\lambda) = \langle \langle \lambda, \cdot \rangle, l' \rangle$, so by the non-degeneracy of the form on $\Lambda^\vee$ we must have $l = \langle \lambda, \cdot \rangle$.

If $\Lambda$ is even, $b$ can be refined to the quadratic form
\[
q_\Lambda\colon D_\Lambda \to \QQ/2\ZZ, \qquad l + \Lambda \mapsto \langle l,l \rangle + 2\ZZ,
\]
called the \defn{discriminant quadratic form} of $\Lambda$. By this, we mean the following:

\begin{dfn}
A \defn{quadratic form} on an abelian group $D$ with values in another abelian group $A$ is a function $q\colon D \to A$ such that
\begin{enumerate}
\item $q(-d) = q(d)$ for all $d \in D$, and
\item the symmetric form $b(d, d') := q(d + d') - q(d) - q(d')$ on $D$ is bi-additive.
%\item $q(a + a') - q(a) - q(a') = 2b(a, a')$ for a bi-additive form $b$ on $A$. % Used by Nikulin, but not really correct I think.
\end{enumerate}
We call $b$ the \defn{bi-additive form of $q$} and say that $q$ is \defn{non-degenerate} when $b$ is.
\end{dfn}

The composition of the bi-additive form $b$ of $q_\Lambda$ with the projection homomorphism $\QQ/2\ZZ \twoheadrightarrow \QQ/\ZZ$ equals $2$ times $b_\Lambda$. Since we learned that $b_\Lambda$ is non-degenerate, so is $b$ and therefore by definition also $q_\Lambda$. Together with the form $b_\Lambda$ if $\Lambda$ is odd, or $q_\Lambda$ if $\Lambda$ is even, $D_\Lambda$ is named the \defn{discriminant group} of $\Lambda$ (after \parencite[§1.3]{nikulin:bilinforms}).

\subsection{Gluing of lattices}
\label{app:latticegluing}
An obvious method of constructing new lattices from given ones is as follows. The \defn{direct sum} of two lattices $(\Lambda_1, \langle \cdot, \cdot \rangle)$ and $(\Lambda_2, \langle \cdot, \cdot \rangle)$ has as its underlying $\ZZ$-module the direct sum of the underlying $\ZZ$-modules of the $\Lambda_i$, and its bi-additive form is set to be
\[
\bigl\langle (\lambda_1, \lambda_2), (\lambda_1', \lambda_2') \bigr\rangle := \langle \lambda_1, \lambda_1' \rangle_1 + \langle \lambda_2, \lambda_2' \rangle_2
\]
for $\lambda_i, \lambda_i' \in \Lambda_i$. So the form is defined in such a way that lattice elements in different summands are orthogonal to each other. Of course, this construction generalises to an arbitrary finite number of lattices.

It is interesting and desirable to have a method of constructing a new lattice from a given one which does not increase the rank, as well. We discuss one such technique known as \defn{self-gluing}. It can be thought of as sliding additional copies of the given lattice in between the points of this same lattice at certain prescribed positions. This is explained in \parencite[Chapter 4, Section 3]{conway:splag} for example, but we will follow the exposition of~\parencite[§1.4]{nikulin:bilinforms} in terms of the discriminant group instead.

Let $\Gamma \hookrightarrow \Lambda$ be a morphism of lattices. Then there are homomorphisms
\[
\Gamma \hookrightarrow \Lambda \hookrightarrow \Lambda^\vee \to \Gamma^\vee,
\]
where the last map is precomposition with $\Gamma \hookrightarrow \Lambda$ and which also respects the ($\QQ$-valued) bi-additive forms. It is injective if $\Gamma$ and $\Lambda$ have the same rank. Suppose namely that $l\in \Lambda^\vee$ and that its image in $\Gamma^\vee$ is zero. Then it factors as a homomorphism $\Lambda/\Gamma \to \ZZ$, but $\Lambda/\Gamma$ is a finite abelian group by our assumption. Since $\ZZ$ has no torsion, $\Lambda/\Gamma \to \ZZ$ must vanish and so the same holds for $l$.

For a given lattice $\Gamma$, we will use the term \defn{overlattice} sometimes to refer to a lattice $\Lambda$ of the same rank as $\Gamma$ together with a specified morphism of lattices $\Gamma \hookrightarrow \Lambda$, and at other times to such a lattice $\Lambda$ alone when the context implies a canonical inclusion.

\begin{lem}
\label{thm:discoverlattices}
(See \parencite[Theorem 2.3.3]{griessjr:introlattices}.)
For an overlattice $\Gamma \hookrightarrow \Lambda$, the discriminants are related by
\[
\disc(\Gamma) = \disc(\Lambda) \cdot [\Lambda : \Gamma]^2.
\]
\end{lem}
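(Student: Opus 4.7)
The plan is to exploit the chain of inclusions $\Gamma \hookrightarrow \Lambda \hookrightarrow \Lambda^\vee \hookrightarrow \Gamma^\vee$ established in the paragraph preceding the lemma. Since all four $\ZZ$-modules share the same finite rank $n$ and $\Gamma^\vee/\Gamma$ is finite, the multiplicativity of the index gives
\[
\disc(\Gamma) = [\Gamma^\vee : \Gamma] = [\Gamma^\vee : \Lambda^\vee] \cdot [\Lambda^\vee : \Lambda] \cdot [\Lambda : \Gamma] = [\Gamma^\vee : \Lambda^\vee] \cdot \disc(\Lambda) \cdot [\Lambda : \Gamma].
\]
The lemma therefore reduces to the single identity
\[
[\Gamma^\vee : \Lambda^\vee] = [\Lambda : \Gamma],
\]
which says that dualising swaps the two indices on either side of the self-paired middle of the chain.

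To prove this identity, I would argue via a change-of-basis computation. Pick a $\ZZ$-basis $e_1, \ldots, e_n$ of $\Lambda$ and a $\ZZ$-basis $f_1, \ldots, f_n$ of $\Gamma$, and let $M = (m_{ij})$ be the integer matrix defined by $f_j = \sum_i m_{ij} e_i$. Then by the standard Smith normal form argument (or by the definition of the index for a full-rank submodule of a free abelian group) one has $[\Lambda : \Gamma] = \lvert \det M \rvert$. Let $e_1^\vee, \ldots, e_n^\vee$ and $f_1^\vee, \ldots, f_n^\vee$ be the associated dual bases of $\Lambda^\vee$ and $\Gamma^\vee$, respectively. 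Under the restriction map $\Lambda^\vee \hookrightarrow \Gamma^\vee$, $\ell \mapsto \ell|_\Gamma$, one computes
\[
e_i^\vee|_\Gamma(f_j) = e_i^\vee\Bigl(\sum_k m_{kj} e_k\Bigr) = m_{ij},
\]
so that $e_i^\vee|_\Gamma = \sum_j m_{ij}\, f_j^\vee$. Hence the matrix expressing the basis of $\Lambda^\vee$ in terms of the basis of $\Gamma^\vee$ is $M^T$, giving $[\Gamma^\vee : \Lambda^\vee] = \lvert \det M^T \rvert = \lvert \det M \rvert = [\Lambda : \Gamma]$, as required.

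The main obstacle is purely bookkeeping: one must keep track of whether the transition matrix or its transpose appears on the dual side, since swapping these would cause no harm here (determinants agree) but would matter if one tried to identify $\Lambda/\Gamma$ with $\Gamma^\vee/\Lambda^\vee$ as abelian groups. For the statement of the lemma only the cardinalities matter, so no such identification is needed. An alternative, more conceptual route would be to apply $\Hom_\Ab(-, \ZZ)$ to the short exact sequence $0 \to \Gamma \to \Lambda \to \Lambda/\Gamma \to 0$ and use that $\Ext^1_\ZZ(\Lambda/\Gamma, \ZZ) \cong \Lambda/\Gamma$ (since $\Lambda/\Gamma$ is a finite abelian group), yielding the exact sequence $0 \to \Lambda^\vee \to \Gamma^\vee \to \Lambda/\Gamma \to 0$ and hence the same equality of indices; I would mention this as a remark but prefer the basis computation for explicitness.
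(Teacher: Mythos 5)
Your proof is correct. Note that the paper does not prove this lemma at all---it simply quotes it from the cited reference (Griess, Theorem 2.3.3)---so there is no internal argument to compare against; your write-up supplies a complete, self-contained proof consistent with the paper's definitions. The reduction via the chain $\Gamma \subseteq \Lambda \subseteq \Lambda^\vee \subseteq \Gamma^\vee$ is legitimate: because the embedding $\Gamma \hookrightarrow \Lambda$ respects the forms, the composite $\Gamma \to \Lambda \to \Lambda^\vee \to \Gamma^\vee$ is the canonical embedding of $\Gamma$ into its own dual, all quotients in the chain are finite (the paper has just checked that $\Lambda^\vee \to \Gamma^\vee$ is injective for equal ranks), and multiplicativity of the index applies; the determinant computation giving $[\Gamma^\vee : \Lambda^\vee] = \abs{\det M^T} = \abs{\det M} = [\Lambda : \Gamma]$ is also right. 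For comparison, the usual textbook route (and essentially the one in the cited source) phrases the same fact through Gram matrices: with the same transition matrix $M$ one has $G_\Gamma = M^T G_\Lambda M$, hence $\disc \Gamma = \abs{\det G_\Gamma} = (\det M)^2 \abs{\det G_\Lambda} = [\Lambda : \Gamma]^2 \disc \Lambda$. Your dual-chain version is an equivalent repackaging whose mild advantage is that it only uses the paper's definition $\disc \Lambda = [\Lambda^\vee : \Lambda]$ and never the identification of the discriminant with a Gram determinant. Your homological remark is also valid and in fact gives slightly more, namely an isomorphism $\Gamma^\vee/\Lambda^\vee \cong \Lambda/\Gamma$ of abelian groups, which---as you correctly observe---the determinant argument neither needs nor claims.
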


As we will see in a moment, overlattices give rise to subgroups of the discriminant group of a special kind:

\begin{dfn}
A subgroup $U$ of the discriminant group $D_\Gamma$ of a lattice $\Gamma$ is called \defn{$b$-isotropic} if $b_\Gamma(u, u') = 0 \in \QQ/\ZZ$ for all $u, u' \in U$. When $\Gamma$ is even $U$ is \defn{$q$-isotropic} if $q_\Gamma(u) = 0 \in \QQ/2\ZZ$ for all $u \in U$.
\end{dfn}

A subgroup that is $q$-isotropic is $b$-isotropic as well by the polarisation identity, but the converse does not necessarily hold.

\begin{thm}[Classification of overlattices]
\label{thm:classifyoverlattices}
(Taken from \parencite[Proposition 1.4.1]{nikulin:bilinforms}.) For a given lattice $\Gamma$, the map $\Lambda \mapsto U_\Lambda := \Lambda/\Gamma$ is an isomorphism of posets between the overlattices $\Lambda$ of $\Gamma$ contained in $\Gamma^\vee$, and the $b$-isotropic subgroups of $D_\Gamma$. If $\Gamma$ is even, even such overlattices correspond to the $q$-isotropic subgroups. Unimodular such overlattices $\Lambda$ correspond to the isotropic subgroups $U_\Lambda$ for which $\abs{U_\Lambda}^2 = \disc \Gamma$.

There is a canonical isomorphism of abelian groups $U_\Lambda^\perp/U_\Lambda \cong D_\Lambda$, where the orthogonal complement is taken inside $D_\Gamma$ with respect to $b_\Gamma$. It respects the forms $b_\Gamma|_{U_\Lambda^\perp}$ and $b_\Lambda$ and, if $\Lambda$ is even, the forms $q_\Gamma|_{U_\Lambda^\perp}$ and $q_\Lambda$.
\end{thm}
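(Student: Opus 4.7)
The plan is to build the two maps of the claimed bijection explicitly and to check that they are mutually inverse, after which the poset, evenness, unimodular and form-preservation statements will follow by direct inspection.

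First I would observe that an overlattice $\Gamma \hookrightarrow \Lambda$ inside $\Gamma^\vee$ sits in a chain $\Gamma \subseteq \Lambda \subseteq \Lambda^\vee \subseteq \Gamma^\vee$, where the inclusion $\Lambda^\vee \hookrightarrow \Gamma^\vee$ is the dual of $\Gamma \hookrightarrow \Lambda$ (injective because $\Gamma$ and $\Lambda$ have the same rank, as noted just before \cref{thm:discoverlattices}) and where all pairings are the restrictions of the $\QQ$-valued form on $\Gamma^\vee$. From this I would deduce that $U_\Lambda := \Lambda/\Gamma$ is a $b$-isotropic subgroup of $D_\Gamma$: for $\lambda,\lambda' \in \Lambda$ the value $\langle \lambda,\lambda'\rangle$ is already integral because $\Lambda$ is a lattice and the form on $\Lambda$ agrees with the restriction of the form from $\Gamma^\vee$, so $b_\Gamma([\lambda],[\lambda']) = \langle\lambda,\lambda'\rangle + \ZZ = 0$. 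If $\Gamma$ and $\Lambda$ are both even then the same computation on the diagonal shows $U_\Lambda$ is $q$-isotropic.

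For the reverse map, given a $b$-isotropic subgroup $U \subseteq D_\Gamma$ I would set $\Lambda_U := \pi^{-1}(U) \subseteq \Gamma^\vee$, where $\pi\colon \Gamma^\vee \twoheadrightarrow D_\Gamma$. Finite generation, freeness and the same rank as $\Gamma$ are immediate from $\Gamma \subseteq \Lambda_U \subseteq \Gamma^\vee$. The $b$-isotropy of $U$ means that the $\QQ$-valued form on $\Gamma^\vee$ takes integer values on $\Lambda_U \times \Lambda_U$, so it restricts to a non-degenerate, symmetric, $\ZZ$-valued bi-additive form on $\Lambda_U$; non-degeneracy is inherited from $\Gamma^\vee$. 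If $U$ is $q$-isotropic then in addition $\langle \lambda,\lambda\rangle \in 2\ZZ$ for every $\lambda \in \Lambda_U$, so $\Lambda_U$ is even. That the assignments $\Lambda \mapsto U_\Lambda$ and $U \mapsto \Lambda_U$ are mutually inverse is tautological, and both manifestly preserve inclusion, yielding the claimed isomorphism of posets.

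The unimodular statement is then a one-line consequence of \cref{thm:discoverlattices}: $\disc\Gamma = \disc\Lambda \cdot [\Lambda:\Gamma]^2 = \disc\Lambda \cdot |U_\Lambda|^2$, so $\Lambda$ is unimodular iff $|U_\Lambda|^2 = \disc\Gamma$.

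The main technical step, and the one I expect to require the most care, is the canonical isomorphism $U_\Lambda^\perp/U_\Lambda \cong D_\Lambda$ together with compatibility of the forms. The key lemma I would prove is that the inclusion $\Lambda^\vee \hookrightarrow \Gamma^\vee$ identifies $\Lambda^\vee$ with the preimage in $\Gamma^\vee$ of $U_\Lambda^\perp \subseteq D_\Gamma$. Indeed, for $l \in \Gamma^\vee$ one has $l \in \Lambda^\vee$ iff $\langle l,\lambda\rangle \in \ZZ$ for every $\lambda \in \Lambda$, which by reduction mod $\ZZ$ is exactly the condition $b_\Gamma([l],[\lambda]) = 0$ for every $[\lambda] \in U_\Lambda$, i.e.\ $[l] \in U_\Lambda^\perp$. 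Taking the quotient by $\Gamma$ gives $U_\Lambda^\perp \cong \Lambda^\vee/\Gamma$, and dividing further by $U_\Lambda = \Lambda/\Gamma$ yields
\[
U_\Lambda^\perp/U_\Lambda \;\cong\; (\Lambda^\vee/\Gamma)\big/(\Lambda/\Gamma) \;\cong\; \Lambda^\vee/\Lambda \;=\; D_\Lambda.
\]
For the compatibility of discriminant forms I would trace through the definitions: the form $b_\Lambda$ on $D_\Lambda$ is induced by the $\QQ$-valued pairing on $\Lambda^\vee$, but that pairing is by construction the restriction of the $\QQ$-valued pairing on $\Gamma^\vee$, so under the identification above it is precisely the restriction of $b_\Gamma$ to $U_\Lambda^\perp$; the same argument, applied to the diagonal, handles $q_\Lambda$ and $q_\Gamma$ when $\Lambda$ is even. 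The subtlety here lies in keeping straight the four different lattices and their duals and verifying that all the pairings really are restrictions of one another, which follows rigorously from the fact that $\Gamma \hookrightarrow \Lambda$ is a morphism of lattices.
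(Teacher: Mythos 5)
Your argument is correct. Note that the paper itself gives no proof of this statement: it is imported verbatim from Nikulin (Proposition 1.4.1), so there is no in-paper argument to compare against. What you wrote is the standard proof of that result, and all the delicate points are handled properly: the identification of the form on $\Lambda$ with the restriction of the $\QQ$-valued form on $\Gamma^\vee$, the check that $q$-isotropy of $U$ gives evenness on all of $\pi^{-1}(U)$ (not just on coset representatives), the unimodularity criterion via \cref{thm:discoverlattices}, and above all the key lemma that $\Lambda^\vee\subseteq\Gamma^\vee$ is exactly the preimage of $U_\Lambda^\perp$, from which $U_\Lambda^\perp/U_\Lambda\cong\Lambda^\vee/\Lambda$ and the compatibility of $b_\Gamma|_{U_\Lambda^\perp}$ (resp.\ $q_\Gamma|_{U_\Lambda^\perp}$) with $b_\Lambda$ (resp.\ $q_\Lambda$) follow because both discriminant forms are reductions of one and the same rational pairing. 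One could add a one-line remark that $b_\Gamma|_{U_\Lambda^\perp}$ indeed descends to the quotient by $U_\Lambda$ (and $q_\Gamma|_{U_\Lambda^\perp}$ does so when $\Lambda$ is even), but this is the same computation you already use and does not affect the validity of the proof.
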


Since the overlattices of a given lattice $\Gamma$ have canonical inclusions into $\Gamma^\vee$, the above \namecref{thm:classifyoverlattices} shows that they can be classified and constructed through the study of the isotropic subgroups of $D_\Gamma$. When an overlattice $\Lambda$ of a direct sum $\Gamma_1 \oplus \cdots \oplus \Gamma_n$ has been constructed in this way from an isotropic subgroup $U$ of the discriminant group of $\Gamma_1 \oplus \cdots \oplus \Gamma_n$ we say that the $\Gamma_i$ are \defn{components} of $\Lambda$ which have been `glued together' along representatives of $U$ in the dual lattices $\Gamma_i^\vee$.

All even, unimodular, positive definite lattices of rank $24$ have been classified in \parencite{niemeier:quadformen} (there are $24$ up to isomorphism) and one use of the technique of gluing lattices is the construction of all but one of them (the \defn{Leech lattice}) from certain simpler lattices of lower or equal rank (see \parencite[Chapter 3]{ebeling:lattices} and \parencite[Chapter 16]{conway:splag} for expositions).

\subsection{Examples of lattices}
It is high time we present some basic examples of lattices.

\begin{exmp}[The $A_n$-series]
One series of lattices is the so called \defn{$A_n$-series}, defined for $n \geq 1$. The underlying $\ZZ$-module of $A_n$ is the following submodule of $\ZZ^{n+1}$:
\[
A_n := \biggl\{(x_1, \ldots, x_{n+1}) \in \ZZ^{n+1} \biggm\vert \sum_{i=1}^{n+1} x_i = 0 \biggr\}
\]
and it inherits a positive definite form from the Euclidean inner product on $\RR^{n+1}$. One proves that $A_n$ is even by noting that for $(x_1,\ldots,x_{n+1}) \in A_n$,
\[
\sum_{i=1}^{n+1} x_i^2 + \sum_{i\neq j} 2 x_i x_j = \biggl(\sum_{i=1}^{n+1} x_i\biggr)^2 = 0.
\]
Its rank is $n$ because if $\epsilon_1, \ldots, \epsilon_{n+1}$ denotes the standard basis of $\RR^{n+1}$, then one possible basis for $A_n$ is
\begin{equation}
\label{eq:lattice-An-basis}
\{\epsilon_2 - \epsilon_1, \epsilon_3 - \epsilon_2, \ldots, \epsilon_{n+1} - \epsilon_n\}.
\end{equation}
It is a root lattice of which the $n(n+1)$ roots are $\epsilon_i - \epsilon_j$, where $1 \leq i,j \leq n+1$ and $i \neq j$.
%Another possible choice is
%\[
%\{\epsilon_2 - \epsilon_1, \epsilon_3 - \epsilon_1, \ldots, \epsilon_{n+1} - \epsilon_1\}.
%\]

For $n\geq 2$, the automorphism group of $A_n$ is isomorphic to a product $\{\pm 1\} \times S_{n+1}$, where $-1$ is the negation of elements and the symmetric group $S_{n+1}$ on $n+1$ letters acts by permuting the $n+1$ coordinates. Equivalently, the transposition that switches the $i$-th and $j$-th coordinates can be seen as a reflection in the hyperplane orthogonal to the root $\epsilon_i - \epsilon_j$. If $n=1$, negation coincides with flipping the two coordinates, so $\Aut(A_1) \cong S_2$.

The dual lattice $A_n^\vee$ is generated by $A_n$ together with the single element
\[
l := \biggl(\frac{n}{n+1}, -\frac{1}{n+1}, \ldots, -\frac{1}{n+1}\biggr) \in \QQ^{n+1}.
\]
Therefore, $D_{A_n}$ is isomorphic to the cyclic group $\ZZ/(n+1)\ZZ$. Each overlattice, respectively even overlattice, of $A_n$ is generated by $A_n$ together with an element $a\cdot l$, where $0 \leq a \leq n$ is an integer such that $a^2 \langle l, l \rangle$ lies in $\ZZ$, respectively in $2\ZZ$.
\end{exmp}

\begin{exmp}[The $D_n$-series]
For $n\geq 3$, another series of positive definite lattices can be defined as
\[
D_n := \biggl\{(x_1,\ldots, x_n) \in \ZZ^n \biggm\vert \sum_{i=1}^n x_i \in 2\ZZ\biggr\}.
\]
It is clear that $D_n$ contains $A_{n-1}$ and that it is generated by it together with the element $\epsilon_1 + \epsilon_2 \in \ZZ^n$. This observation shows that $D_n$ has rank $n$, and gives via~\eqref{eq:lattice-An-basis} a basis for $D_n$. One proves that $D_n$ is even in the same way as one does for $A_n$. It is a root lattice of which the $2n(n-1)$ roots are $\pm(\epsilon_i + \epsilon_j)$ and $\epsilon_i - \epsilon_j$, where $1 \leq i,j \leq n$ and $i \neq j$.%\todo{The automorphism group of $D_n$.}

The dual lattice $D_n^\vee$ is generated by $\ZZ^n$ together with the single element $l_1 := \frac12(1, 1, \ldots, 1) \in \QQ^n$. If $n$ is even the discriminant group $D_{D_n}$ is a direct sum of the two groups of order $2$ generated by the equivalence classes $[l_1]$ and $[l_2]$ of $l_1$ and $l_2 := (1, 0, \ldots, 0) \in \ZZ^n$, while in the odd case $[l_2] = 2[l_1]$, so $D_{D_n}$ is generated by $[l_1]$ alone, which then has order $4$. Summarising, $D_{D_n}$ has the following structure:
\[
D_{D_n} \cong
\begin{cases}
\ZZ/2\ZZ \oplus \ZZ/2\ZZ & \text{if $n$ is even,} \\
\ZZ/4\ZZ 				& \text{if $n$ is odd.}
\end{cases}
\]

If $n$ is odd, then $D_{D_n}$ has no non-zero $b$-isotropic subgroups since
\[
b_{D_n}([l_1], [l_1]) = n/4 \bmod \ZZ
\]
and so $D_n$ has no non-trivial overlattices. If $n$ is even, then $D_{D_n}$ always has at least one $b$-isotropic subgroup that is not $q$-isotropic, namely the one generated by $[l_2]$ because $q_{D_n}([l_2]) = 1 \bmod 2\ZZ$. If $n \in 4\ZZ$, then also the two subgroups generated by $[l_1]$ and $[l_1] + [l_2]$ are $b$-isotropic. If moreover $n \in 8\ZZ$, then both are even $q$-isotropic.
\end{exmp}

\begin{exmp}[The $E_6$, $E_7$ and $E_8$ lattices]
If $n \in 8\ZZ$, the even, positive definite overlattice of $D_n$ corresponding to the $q$-isotropic subgroup of the discriminant group $D_{D_n}$ which is generated by $[l_1]$ is denoted by $D_n^+$. In simpler words, $D_n^+ := D_n + \ZZ \cdot l_1$. Since $[l_1]$ has order $2$ in $D_{D_n}$ and $\disc D_n = 4$, the discriminant group of $D_n^+$ is trivial by \cref{thm:discoverlattices}. That is, $D_n^+$ is a unimodular lattice. If $n = 8$, it is again a root lattice because $D_8$ is and $l_1$ is in this case a root. We denote $D_8^+$ as $E_8$. It is known to be the unique even, unimodular, positive definite lattice of rank $8$ up to isomorphism (see \parencite[Proposition 2.5]{ebeling:lattices}).

Take now any root $\mu \in E_8$ and consider the sublattice $\{\lambda \in E_8 \mid \langle \lambda, \mu \rangle = 0\}$ that is the orthogonal complement of $\mu$ in $E_8$. It turns out that all choices of $\mu$ give in this way an isomorphic sublattice. We denote it by $E_7$. It is a root lattice as well, but it is no longer unimodular. Its discriminant group has order $2$ with generator $[l_1]$ (assuming that $\mu$ was chosen orthogonal to $l_1$).

The lattice $E_8$ contains copies of $A_2$. The orthogonal complements in $E_8$ of all of these are isomorphic and denoted by $E_6$. This is again a root lattice and its discriminant group has order $3$.
\end{exmp}

\begin{exmp}[The hyperbolic plane]
\label{exmp:hyperbolicplane}
The underlying $\ZZ$-module of the \defn{hyperbolic plane lattice} is $\ZZ \oplus \ZZ$ and its form is given by
\[
\bigl\langle (\lambda_1, \lambda_2), (\mu_1, \mu_2) \bigr\rangle := \lambda_1 \mu_2 + \lambda_2 \mu_1.
\]
It is even, unimodular and is an example of an indefinite lattice. One namely has both $\langle (1,1), (1,1) \rangle = 2$ and $\langle (1,-1), (1,-1) \rangle = -2$.
\end{exmp}

\begin{exmp}[Leech lattices]
Any even, unimodular, positive definite lattice of rank $24$ without roots is called a \defn{Leech lattice}. There exists only one Leech lattice up to isomorphism, which we denote by $\Lambda_\Leech$, and many different constructions of it have been devised. This characterisation was proven independently in \parencite[Chapter 12]{conway:splag} and \parencite{niemeier:quadformen}.

Even though the Leech lattice has no roots, which means that it does not have corresponding reflection automorphisms, its automorphism group $\Aut(\Lambda_\Leech, \langle \cdot, \cdot \rangle)$ is large and rich in structure as shown in \parencite{conway:leechautomgp}. It is known as the \defn{zeroth Conway group} and denoted by $\Co_0$. It has order
\[
2^{22} 3^9 5^4 7^2 11 \cdot 13 \cdot 23 = 8\, 315\, 553\, 613\, 086\, 720\, 000
\]
and is known not to be a simple group. Its quotient by the central subgroup $\{\pm \id\}$ \emph{is} simple, though, and is called the \defn{first Conway group} $\Co_1$. It is one of the 26 sporadic finite simple groups. Two other groups in that list, $\Co_2$ and $\Co_3$, can be obtained as subgroups of $\Co_0$ that fix certain elements in $\Lambda_\Leech$.
\end{exmp}

\subsection{The theta series of a lattice}
\label{subsec:thetaseries}

For a lattice $\Lambda$, the series
\[
\theta_\Lambda(q) := \sum_{\lambda \in \Lambda} q^{\langle \lambda, \lambda\rangle/2}
\]
in the formal variable $q^{1/2}$ is called the \defn{theta series} of $\Lambda$. Clearly, if $\Lambda$ is even, $\theta_\Lambda(q)$ contains only integral powers of $q$. If $\Lambda$ is definite it can be seen as a generating function for the lengths of the elements of $\Lambda$ because we can rewrite it as
\[
\theta_\Lambda(q) = \sum_{k\in \ZZ} \abs{\Lambda_k} q^{k/2}, \qquad \Lambda_k := \bigl\{\lambda \in \Lambda \bigm\vert \langle \lambda, \lambda \rangle = k\bigr\},
\]
and $\Lambda_k$ is indeed a finite set since the closed ball of radius $\sqrt k$ in $\Lambda \otimes_\ZZ \RR$ is compact. 
%
%When $\Lambda$ is positive definite, instead of letting $q^{1/2}$ be a formal variable, one often sets $q := e^{2\pi i z}$, where $z \in \CC$ and $\Im z > 0$. This namely makes $\theta_\Lambda(z)$ a holomorphic modular form (for some proper congruence subgroup of $\lieSL(2,\ZZ)$, unless $\Lambda$ is unimodular) of weight $(\rank \Lambda)/2$ (meaning that if $\rank \Lambda$ is odd $\theta_\Lambda(z)$ is of \defn{half-integral weight} in the sense of (Shimura, Koblitz)).\todo{Add a reference.}

In this thesis we will also have use for \defn{rational lattices}. A rational lattice is defined identically as a lattice, except that the bi-additive form may be $\QQ$-valued instead of $\ZZ$-valued. We will sometimes distinguish lattices from rational ones by calling the former \defn{integral}. Examples of rational lattics are the dual $\Lambda^\vee$ of an integral lattice $\Lambda$ and the sum of two integral lattices which contain a common integral sublattice. Of course one can define the theta series of a rational lattice also, which will then be a series in fractional powers of $q$.

\section{Central extensions of groups}
\label{app:centexts}

In this thesis central extensions of certain abelian groups are constructed. This section is devoted to defining central extensions and explaining how the ones we are interested in can be defined by a particular type of maps, called \defn{(group) $2$-cocycles}. Group cocycles of a group $G$ are in general defined for an abelian group $A$ and an action of $G$ on $A$, but for our purposes we merely need the case that $G$ acts trivially on $A$. That is, we will omit this action. Even though only the situation when $G$ is abelian matters to us, we will begin by discussing the non-abelian case as well because doing so does not require any extra effort.

\begin{refs}
The material in this section can be found in \parencite[Section 5.1--5.2]{frenkel:monster}.
\end{refs}

We start our discussion by assuming that $G$ and $A$ are abstract groups, with $A$ abelian. We will write the addition in $A$ multiplicatively.

\begin{dfn}
\label{dfn:centext}
A \defn{central extension} of $G$ by $A$ is a group $\tilde G$ together with two homomorphisms $A \hookrightarrow \tilde G$ and $\tilde G \twoheadrightarrow G$ that make it fit into a short exact sequence
\[
\begin{tikzcd}
1 \ar[r] & A \ar[r] & \tilde G \ar[r] & G \ar[r] & 1,
\end{tikzcd}
\]
such that the image of $A$ is a subgroup of the centre of $\tilde G$. We will often denote the central extension simply by $\tilde G$, thereby suppressing the data of the other two homomorphisms in our notation. A \defn{morphism} between two central extensions $\tilde G$ and $\tilde G'$ is a homomorphism $\tilde G \to \tilde G'$ making the following diagram commute:
\[
\begin{tikzcd}[row sep=1em]
		 &   & \tilde G	\ar[dr] \ar[dd]	&		   & \\
1 \ar[r] & A \ar[ur] \ar[dr] & 				& G \ar[r] & 1. \\
		 &   & \tilde G' \ar[ur]	&		   &
\end{tikzcd}
\]
\end{dfn}

It is not hard to show that a morphism of central extensions is necessarily an isomorphism. In other words, the category of central extensions of $G$ by $A$ is a groupoid.

The algebraic study of central extensions starts with the following observation. Pick a set-theoretic section $s\colon G \to \tilde G$ of the homomorphism $\tilde G \twoheadrightarrow G$. Then $s$ is usually not a homomorphism as well. Instead, $s(g) s(g') s(gg')^{-1}$ lies in (the image of) $A$ for all $g, g' \in G$. The function
\[
c\colon G \times G \to A, \qquad (g, g') \mapsto s(g) s(g') s(gg')^{-1}
\]
can be seen as a measurement for the failure of the multiplicativity of $s$. By writing out both sides of the equation $s((g_1 g_2) g_3) = s(g_1 (g_2 g_3))$ in two different ways it can be checked that $c$ satisfies
\begin{equation}
\label{eq:centexts-cocyclerel}
c(g_1, g_2) c(g_1 g_2, g_3) = c(g_1, g_2 g_3) c(g_2, g_3)
\end{equation}
for all $g_1$, $g_2$ and $g_3$ in $G$. Moreover, if $s(1_G) = 1_{\tilde G}$ then
\begin{equation}
\label{eq:centexts-cocycle-norm}
c(g, 1_G) = c(1_G, g) = 1_A
\end{equation}
for all $g \in G$.

We analyse the dependence of $c$ on $s$ by assuming that $f\colon \tilde G \xrightarrow{\sim} \tilde G'$ is an isomorphism to another central extension. (We allow the case when $\tilde G' = \tilde G$, the two homomorphisms to $G$ are the same and $f$ is the identity.) If $s'$ is a section for $\tilde G'$ we can associate a function $c'$ to it in the same way as we did above. Note that $f \circ s$ is a section of $\tilde G'$ also. Now define an auxiliary function $d\colon G \to A$ as $g \mapsto s'(g) (f \circ s)(g)^{-1}$ for all $g \in G$. If $s'(1_G) = (f\circ s)(1_G)$ then $d(1_G) = 1_A$. Finally setting
\begin{equation}
\label{eq:centexts-boundmap}
\delta d\colon G \times G \to A, \qquad (g,g') \mapsto d(g) d(g') d(g g')^{-1}
\end{equation}
for all $g, g'\in G$, one can prove that $c' = (\delta d) \cdot c$. We thus see that $c$ is not associated canonically to $\tilde G$. Nevertheless, we were able to express its ill-definedness in a precise manner.

We formalise these observations by

\begin{dfn}
A \defn{$1$-cochain} (for the pair $(G, A)$) is a function $d\colon G \to A$. We say that a $1$-cochain $d$ is \defn{normalised} if $d(1_G) = 1_A$. A $1$-cochain is called a \defn{$1$-cocycle} if it is a group homomorphism. The trivial $1$-cocycle is called the \defn{$1$-coboundary}. Given a $1$-cochain $d$, we define a function $\delta d$ as~\eqref{eq:centexts-boundmap} for all $g, g'\in G$.

A \defn{$2$-cochain} is a function $c\colon G \times G \to A$. It is called a \defn{$2$-cocycle} if it satisfies~\eqref{eq:centexts-cocyclerel} for all $g_1$, $g_2$ and $g_3$ in $G$. We say that a $2$-cocycle $c$ is \defn{normalised} if~\eqref{eq:centexts-cocycle-norm} holds for all $g \in G$. A $2$-cochain $c$ is called a \defn{$2$-coboundary} if $c = \delta d$ for some $1$-cochain $d$. Two $2$-cocycles are \defn{cohomologous} if they differ by a $2$-coboundary.
\end{dfn}

It is easily checked that a $2$-coboundary is a $2$-cocycle. A $2$-coboundary is normalised if and only if it comes from a normalised $1$-cochain.
%\begin{dfn}
%The sets of $i$-cochains, $i$-cocycles and $i$-coboundaries, for $i=1,2$, each form abelian groups under pointwise multiplication. We denote them by $\cohomC^i(K; A)$, $\cohomZ^i(K; A)$ and $\cohomB^i(K; A)$ respectively.\todo{Maybe I need not introduce cohom at all.}
%\end{dfn}
%
%This gives a homomorphism $\delta\colon \cohomC^1(K; A) \to \cohomB^2(K; A)$.

\begin{rmk}
If a $2$-cocycle $c$ is normalised, then this implies by the cocycle relation that $c(g, g^{-1}) = c(g^{-1}, g)$ for all $g \in G$. Any $2$-cocycle $c$ is cohomologous to a normalised one. It namely follows from the cocycle relation that $c(g,1) = c(1,g) = c(1,1)$ for all $g \in G$. So let $d$ be the constant $1$-cochain $d(g) := c(1,1)$ for all $g \in G$. Then $(\delta d)^{-1} \cdot c$ is normalised.
\end{rmk}

This procedure of associating (cohomology classes of) $2$-cocycles to central extensions can be reversed. Given a normalised $2$-cocycle $c$ the set $\tilde G_c := G \times A$ namely becomes a group under the multiplication
\[
(g, a) \cdot (g', a') := \bigl(g g', aa' \cdot c(g, g')\bigr).
\]
Its unit element is $(1_G, 1_A)$ and the inverse of $(g,a)$ is given by
\[
\bigl(g^{-1}, a^{-1} c(g, g^{-1})^{-1}\bigr).
\]
It is a central extension of $G$ by $A$ when equipped with the obvious inclusion of $A$ and projection to $G$.

This construction is inverse to the one we studied earlier in the sense that if $c$ comes from a section $s$ of a central extension $\tilde G$ such that $s(1_G) = 1_{\tilde G}$, then there is a canonical isomorphism of central extensions $\tilde G_c \xrightarrow{\sim} \tilde G$ given by $(g, a) \mapsto s(g) a$. If a normalised $2$-cocycle $c'$ is cohomologous to $c$, say, $c' = (\delta d) \cdot c$ for some (normalised) $1$-cochain $d$, then $(g, a) \mapsto (g, d(g)^{-1}\cdot a)$ is an isomorphism $\tilde G_c \xrightarrow{\sim} \tilde G_{c'}$ of the corresponding central extensions.

% Conversely, if the set $K \times A$ forms a group with unit element $(1_K, 1_A)$ under the above multiplication for some $2$-cochain $c$, then $c$ is a normalised $2$-cocycle. 

% Central extensions are classified by $\cohomH^2(K; A)$.

Let us revisit our study by assuming that $G$ and $A$ are topological groups. The definition of a \emph{topological} central extension $\tilde G$ in \cref{dfn:centext} then additionally demands that the two homomorphisms are continuous. The difference with the situation for abstract groups is that there might not exist any globally continuous section $s\colon G \to \tilde G$. In general, $s$ is continuous only on a neighbourhood around $1_G$. A discontinuous section can then result in a discontinuous cocycle $c$ and so the central extension $\tilde G_c$ is not a topological group when we give the underlying set $G \times A$ of $\tilde G_c$ the product topology.

However, it turns out that under certain conditions $\tilde G_c$ is a topological group anyway when $s$ is only locally continuous (see \parencite[Proposition 2.2 and Remark 2.3]{neeb:centexts}). We state a preparatory \namecref{thm:topgpfromsubgp} and then give a particular version of such a result that is sufficient for the purposes in this thesis.

\begin{lem}
\label{thm:topgpfromsubgp}
Let $G$ be an abstract group, $G_0$ a normal topological subgroup and suppose that for every element of $G$ the associated conjugation map on $G_0$ is continuous. Then there exists a unique structure of a topological group on $G$ such that $G_0$ is open in $G$. The topology is given by declaring a subset $U \subseteq G$ to be open when $gU \cap G_0$ is open in $G_0$ for all $g \in G$.
\end{lem}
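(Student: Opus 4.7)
The plan is to take the specified collection
\[
\tau := \bigl\{U \subseteq G \bigm\vert gU \cap G_0 \text{ open in } G_0 \text{ for all } g \in G\bigr\}
\]
as the definition of the topology and verify in sequence (i) that $\tau$ is a topology, (ii) that $G_0$ is $\tau$-open with subspace topology equal to its given topology, (iii) that all left translations are $\tau$-homeomorphisms, (iv) that conjugations are continuous, (v) that multiplication and inversion are jointly continuous, and finally (vi) that $\tau$ is the only topology with the required properties. Parts (i)--(iii) are essentially formal: closedness of $\tau$ under unions and finite intersections is clear since both operations commute with $g(\cdot) \cap G_0$; the set $G_0$ lies in $\tau$ because $gG_0 \cap G_0$ equals either $G_0$ or $\emptyset$; and if $U \subseteq G_0$ is $G_0$-open then $gU \cap G_0$ is either empty (when $g \notin G_0$) or a left translate of $U$ inside $G_0$, so $U \in \tau$, and conversely taking $g=1$ recovers the given topology on $G_0$. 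Left translation continuity is immediate from $g(h^{-1}U)\cap G_0 = (gh^{-1})U\cap G_0$, so the $L_h$ are homeomorphisms and each coset $hG_0$ is open and homeomorphic to $G_0$.

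The decisive step is (iv), which is the only place where the hypothesis on conjugations is consumed. For $g \in G$ the inner automorphism $c_g(x) := gxg^{-1}$ preserves $G_0$ by normality, and on a coset $hG_0$ it decomposes as
\[
c_g|_{hG_0} = L_{ghg^{-1}} \circ \bigl(c_g|_{G_0}\bigr) \circ L_h^{-1}|_{hG_0},
\]
so the hypothesis makes $c_g$ continuous on each coset and hence, since the cosets form an open cover, on all of $G$. Right translation is then $R_g = L_g \circ c_{g^{-1}}$, which is a composition of homeomorphisms.

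For step (v), continuity of the multiplication map at a point $(g_1,g_2)$ reduces to continuity at $(1,1)$ after passing through left translations, by writing a product of open neighbourhoods as
\[
(g_1 U_0)(g_2 V_0) = g_1 g_2 \bigl(g_2^{-1} U_0 g_2\bigr) V_0.
\]
Given an open neighbourhood $W$ of $g_1 g_2$, the set $W'' := (g_1 g_2)^{-1} W \cap G_0$ is an open neighbourhood of $1$ in $G_0$; continuity of multiplication in $G_0$ yields $U_0', V_0$ with $U_0' V_0 \subseteq W''$, and then continuity of $c_{g_2^{-1}}$ at $1 \in G_0$ yields $U_0 \ni 1$ open in $G_0$ with $g_2^{-1} U_0 g_2 \subseteq U_0'$, so $(g_1 U_0)(g_2 V_0) \subseteq W$. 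Continuity of inversion on each coset is argued analogously, using the decomposition $\iota|_{gG_0} = L_{g^{-1}} \circ c_g \circ \iota|_{G_0} \circ L_g^{-1}|_{gG_0}$. The main obstacle I expect is precisely this shuffle in step (v): one cannot simply translate multiplication in $G_0$ to a product of arbitrary cosets without rearranging the translation past a factor, and that rearrangement is what forces the conjugation hypothesis on us.

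Uniqueness in step (vi) is a soft consequence of what has been done: in any topological group structure on $G$ in which $G_0$ is open with its given topology, every coset is open (being a translate of $G_0$) and inherits from $G_0$ precisely the topology carried over by left translation, so a set is open iff its intersection with each coset is open in that coset, which is exactly the condition defining $\tau$.
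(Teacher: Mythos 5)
Your proof is correct: the verification that the prescribed collection $\tau$ is a topology, restricts correctly to $G_0$, makes left translations homeomorphisms, and (via the conjugation hypothesis, used exactly where it must be) makes conjugation, right translation, multiplication and inversion continuous, together with the coset-wise uniqueness argument, is sound in every step. The paper in fact states this lemma without proof, and your argument is precisely the standard verification dictated by the topology given in the statement, so there is no divergence to report.
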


\begin{cor}
\label{thm:centexttopgp}
Let $G$ and $A$ be topological groups with $A$ abelian, $G_0$ an open normal topological subgroup of $G$ and $c$ a normalised $2$-cocycle on $G$ such that
\begin{enumerate}
\item $c$ is continuous when restricted to $G_0$, and
\item for every $g \in G$ the map $G_0 \to A$ given by $g' \mapsto c(g, g') c(gg', g^{-1})$ is continuous.
\end{enumerate}
Then there exists a unique structure of a topological group on $\tilde G_c$ such that $(\tilde G_0)_c$ is open in $\tilde G_c$. Here, $(\tilde G_0)_c$ is the topological group of which its underlying set $G_0 \times A$ carries the product topology. 
\end{cor}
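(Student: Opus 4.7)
The plan is to invoke Lemma~\ref{thm:topgpfromsubgp} with $G := \tilde G_c$ (as an abstract group) and $G_0 := (\tilde G_0)_c$, once we have verified that $(\tilde G_0)_c$ is a topological group under the product topology on $G_0 \times A$, that it is normal in $\tilde G_c$, and that every conjugation map in $\tilde G_c$ restricts to a continuous self-map of $(\tilde G_0)_c$.

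First I would check that $(\tilde G_0)_c$ is a topological group. Multiplication is
\[
\bigl((g', a'), (g'', a'')\bigr) \mapsto \bigl(g'g'', a'a''\cdot c(g', g'')\bigr),
\]
which is continuous as a map $(G_0 \times A)^{\times 2} \to G_0 \times A$ because the multiplications in $G_0$ and $A$ are continuous and $c|_{G_0 \times G_0}$ is continuous by hypothesis~(i). Continuity of inversion $(g', a') \mapsto (g'^{-1}, a'^{-1} c(g', g'^{-1})^{-1})$ follows from continuity of inversion in $G_0$ and of $c|_{G_0 \times G_0}$. Normality of $(\tilde G_0)_c$ in $\tilde G_c$ is automatic, since it is the preimage of the normal subgroup $G_0 \subseteq G$ under the abstract homomorphism $\tilde G_c \twoheadrightarrow G$.

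The key step is computing the conjugation action of a fixed $(g, a) \in \tilde G_c$ on $(\tilde G_0)_c$. Using that $c$ is normalised and the multiplication formula for $\tilde G_c$, a direct calculation gives
\[
(g, a) \cdot (g', a') \cdot (g, a)^{-1} = \Bigl(gg'g^{-1},\; a' \cdot c(g, g')\, c(gg', g^{-1})\, c(g, g^{-1})^{-1}\Bigr)
\]
for $(g', a') \in (\tilde G_0)_c$. The first component $g' \mapsto gg'g^{-1}$ is continuous on $G_0$ because $G$ is a topological group and $G_0$ carries the subspace topology. The second component is continuous in $(g', a')$ because the factor $c(g, g^{-1})^{-1}$ is a constant and the map $g' \mapsto c(g, g') c(gg', g^{-1})$ is continuous by hypothesis~(ii), while multiplication in $A$ is continuous. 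Hence conjugation by $(g, a)$ is a continuous self-map of $(\tilde G_0)_c$.

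With these verifications in place, Lemma~\ref{thm:topgpfromsubgp} produces a unique topology on $\tilde G_c$ making it a topological group for which $(\tilde G_0)_c$ is open and retains its given (product) topology, which is exactly what the corollary asserts. The main obstacle is the conjugation calculation: one must notice that the $A$-valued cocycle correction for conjugation is precisely the expression in hypothesis~(ii) (times the constant $c(g, g^{-1})^{-1}$), so that condition~(ii) is exactly what is needed—and is tailored for this purpose—rather than some stronger continuity assumption on $c$ away from $G_0 \times G_0$.
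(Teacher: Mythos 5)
Your proof is correct and follows essentially the same route as the paper: compute the conjugation of $(g',a') \in (\tilde G_0)_c$ by an element lying over $g$, observe that the resulting $A$-factor $c(g,g')\,c(gg',g^{-1})\,c(g,g^{-1})^{-1}$ is continuous in $g'$ precisely by hypothesis (ii), and then invoke Lemma~\ref{thm:topgpfromsubgp}. The only addition is your explicit check, via hypothesis (i), that $(\tilde G_0)_c$ is itself a topological group, which the paper leaves implicit in the statement.
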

\begin{proof}
Note first that $(\tilde G_0)_c$ is normal in $\tilde G_c$. Next, let $g \in G$ and $(g',a) \in (\tilde G_0)_c$. Then we calculate that
\begin{align*}
(g, 1_A)(g', a)(g, 1_A)^{-1}
	&= \bigl(gg', a \cdot c(g,g')\bigr) \bigl(g^{-1}, c(g, g^{-1})^{-1}\bigr) \\
	&= \bigl(gg'g^{-1}, a \cdot c(g,g') c(g, g^{-1})^{-1} c(gg', g^{-1})\bigr).
\end{align*}
Hence, conjugation by $(g, 1_A)$ is a continuous map on $(\tilde G_0)_c$. The result now follows from \cref{thm:topgpfromsubgp}.
\end{proof}

\subsection{Central extensions of abelian groups}
\label{subsec:centexts-abgps}

If $\Lambda$ is an abelian group, then every central extension $\tilde \Lambda$ of $\Lambda$ by $A$ has a canonically associated \defn{commutator map} $b\colon \Lambda \times \Lambda \to A$. It is defined by picking a section $s\colon \Lambda \to \tilde\Lambda$ and setting $b(\lambda, \mu) := s(\lambda) s(\mu) s(\lambda)^{-1} s(\mu)^{-1}$ for all $\lambda, \mu \in \Lambda$, which is independent of the choice of $s$. It is bi-additive and satisfies $b(\lambda, \lambda) = 0$ for all $\lambda \in \Lambda$. (These two properties together imply that $b$ is skew-symmetric.) There furthermore holds
\[
b(\lambda, \mu) = \epsilon(\lambda, \mu) \epsilon(\mu, \lambda)^{-1}
\]
for any choice of $2$-cocycle $\epsilon$ for $\tilde\Lambda$, so this can be used as an alternative definition of $b$.

If $\Lambda$ is a free $\ZZ$-module of finite rank, then also conversely, every bi-additive map $b\colon \Lambda \times \Lambda \to A$ which satisfies $b(\lambda, \lambda) = 0$ arises in this way. We can namely pick an ordered basis $\{\lambda_1, \ldots, \lambda_n\}$ of $\Lambda$ and define a function $\epsilon\colon \Lambda \times \Lambda \to A$ by first setting
\[
\epsilon(\lambda_i, \lambda_j) :=
\begin{cases}
b(\lambda_i, \lambda_j) & \text{if $i < j$,} \\
1 			& \text{if $i \geq j$,}
\end{cases}
\]
and extending this bi-additively next. Since $\epsilon$ is bi-additive, it is a $2$-cocycle and therefore determines a central extension $\tilde \Lambda$. It satisfies
\[
\epsilon(\lambda, \mu) \epsilon(\mu, \lambda)^{-1} = b(\lambda, \mu)
\]
for all $\lambda, \mu \in \Lambda$, which can be checked by hand for the basis elements $\lambda_i$ and must then hold for arbitrary elements of $\Lambda$ since both sides of this equation are bi-additive. This shows that $b$ is the commutator map of $\tilde \Lambda$. With the assumption about $\Lambda$ still in place one can furthermore show that $b$ determines $\tilde \Lambda$ up to non-unique isomorphism.

\section{Group representations}
\label{app:posenergyreps}

In \cref{sec:unicol-reptheory,sec:bicol-reptheory} representations are constructed and classified of central extensions of uni- and bicoloured torus loop groups, respectively, and of related groups. These representations are of the following kind:

\begin{dfn}
\label{dfn:rep}
A unitary representation $Q\colon G \to \lieU(\hilb H)$ of a topological group $G$ on a Hilbert space $\hilb H$ is called \defn{strongly continuous} if $Q$ is a continuous map with respect to the strong operator topology on $\lieU(\hilb H)$ or, equivalently, if for all vectors $v \in \hilb H$ the map $G \to \hilb H$, $g \mapsto Q(g)(v)$ is continuous at $g=1$.

A \defn{morphism}, or \defn{intertwiner}, between two strongly continuous, unitary representations of $G$ on Hilbert spaces $\hilb H$ and $\hilb K$ is a bounded linear map $\hilb H \to \hilb K$ which intertwines the respective actions of $G$. Two such representations are called \defn{unitarily isomorphic} if there exists a unitary morphism between them.

The category of strongly continuous, unitary representations of $G$ is denoted by $\Rep G$ and if $Q$ and $Q'$ are in $\Rep G$ we write $\Hom_G(Q, Q')$ for the complex vector space $\Hom_{\Rep G}(Q, Q')$.
\end{dfn}

In the rest of this section \defn{groups} will always be topological and \defn{representations} will always be meant to be strongly continuous and unitary, unless stated otherwise.

Using the fact that the operator-theoretic adjoint of a morphism is again a morphism one can show

\begin{prop}
(See \parencite[361]{kirillov:orbitlectures} or \parencite[Proposition I.2.6]{sugiura:harmana}.)
Two representations of a group are isomorphic if and only if they are unitarily isomorphic.
\end{prop}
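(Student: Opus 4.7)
The plan is to exhibit a unitary intertwiner starting from an arbitrary (bounded, invertible) intertwiner, by means of the polar decomposition in the bounded operator calculus. One direction of the equivalence is trivial: a unitary isomorphism is in particular a bounded linear bijection intertwining the two actions, hence an isomorphism in $\Rep G$. So all the content lies in the converse.

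Suppose that $Q$ and $Q'$ are representations of $G$ on Hilbert spaces $\hilb H$ and $\hilb K$, respectively, and let $A\colon \hilb H \to \hilb K$ be a morphism in $\Rep G$ that is an isomorphism in this category. By definition $A$ is bounded, and its inverse $A^{-1}\colon \hilb K \to \hilb H$ is bounded as well (either by invoking the open mapping theorem, or simply because $A^{-1}$ is the two-sided inverse in the category and therefore an intertwiner with a bound built into the data). First I would show that the Hilbert-space adjoint $A^*\colon \hilb K \to \hilb H$ is again an intertwiner. This is the key use of the hypothesis that both representations are unitary: for all $g \in G$ one has
\[
A^* Q'(g) = A^* Q'(g) (Q(g)^{-1})^* Q(g)^{-1\,*} \cdot \text{(suitable rewriting)},
\]
or more cleanly, starting from $Q'(g) A = A Q(g)$ and taking adjoints gives $A^* Q'(g)^* = Q(g)^* A^*$; replacing $g$ by $g^{-1}$ and using $Q(g^{-1})^* = Q(g)$, $Q'(g^{-1})^* = Q'(g)$ yields $A^* Q'(g) = Q(g) A^*$, as required.

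Next, I would form the positive operator $T := A^* A\colon \hilb H \to \hilb H$. It is bounded, self-adjoint, strictly positive (since $A$ is injective and $\langle Tv, v\rangle = \|Av\|^2 > 0$ for $v \neq 0$), and in fact invertible with bounded inverse (since $A$ has bounded inverse, so does $T = A^*A$). Composing intertwiner properties of $A$ and $A^*$, $T$ intertwines $Q$ with itself. By the continuous functional calculus applied to the self-adjoint operator $T$, which takes values in the bicommutant of $T$, the operator $T^{-1/2}$ is defined and likewise intertwines $Q$ with itself.

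Finally, I would set $U := A T^{-1/2}\colon \hilb H \to \hilb K$. This is a composition of intertwiners, hence an intertwiner. A direct computation gives
\[
U^* U = T^{-1/2} A^* A\, T^{-1/2} = T^{-1/2} T\, T^{-1/2} = \id_{\hilb H},
\]
and using $T^{-1} A^* = A^{-1}$ (which follows from $A^* A = T$ on multiplying by $A^{-1}$ on the right and $T^{-1}$ on the left) one obtains
\[
U U^* = A T^{-1/2} T^{-1/2} A^* = A T^{-1} A^* = A A^{-1} = \id_{\hilb K}.
\]
Hence $U$ is a unitary intertwiner, exhibiting $Q$ and $Q'$ as unitarily isomorphic.

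There is no serious obstacle; the only subtlety worth flagging is ensuring that $T$ is genuinely invertible with bounded inverse (so that $T^{-1/2}$ exists as a bounded operator and the computation $UU^* = \id$ goes through), which is why one needs the hypothesis that $A$ is an isomorphism in $\Rep G$ rather than merely an injective intertwiner.
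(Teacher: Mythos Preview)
Your argument is correct and is the standard polar-decomposition proof. The paper does not give its own proof: it only remarks, just before the statement, that ``the operator-theoretic adjoint of a morphism is again a morphism'' and refers to the cited sources; your proof is precisely the expansion of that hint.
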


\begin{dfn}
Let $Q$ be a representation of a group on a Hilbert space $\hilb H$. A \defn{subrepresentation} of $Q$ is a closed linear subspace $\hilb K$ of $\hilb H$ that is invariant under $Q$, equipped with the restriction of $Q$ to $\hilb K$. We say that $Q$ is \defn{irreducible} if $\hilb H$ is non-zero and its only subrepresentations are $\{0\}$ and $\hilb H$ itself.
\end{dfn}

The following equivalent characterisation of the invariance of a closed subspace is often useful:

\begin{lem}
(See \parencite[Proposition 3.4]{folland:harmana}.) Let $Q$ be a representation of a group on a Hilbert space $\hilb H$. Then a closed linear subspace $\hilb K$ of $\hilb H$ is invariant under $Q$ if and only if the orthogonal projection of $\hilb H$ onto $\hilb K$ commutes with $Q$.
\end{lem}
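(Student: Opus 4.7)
The plan is to let $P$ denote the orthogonal projection of $\hilb H$ onto $\hilb K$ and prove the two implications separately, with the unitarity of $Q$ playing a decisive role in the forward direction.

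For the forward direction, I would assume $\hilb K$ is $Q$-invariant and first show that the orthogonal complement $\hilb K^\perp$ is invariant as well. Given $v \in \hilb K^\perp$ and any $w \in \hilb K$, the computation
\[
\bigl\langle Q(g) v, w \bigr\rangle = \bigl\langle v, Q(g)^* w \bigr\rangle = \bigl\langle v, Q(g^{-1}) w \bigr\rangle = 0
\]
uses the unitarity of $Q(g)$ together with the invariance of $\hilb K$ under $Q(g^{-1})$, and shows that $Q(g) v \in \hilb K^\perp$. Now decompose an arbitrary $v \in \hilb H$ as $v = v_1 + v_2$ with $v_1 \in \hilb K$ and $v_2 \in \hilb K^\perp$, so that $Pv = v_1$. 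Then $Q(g) P v = Q(g) v_1 \in \hilb K$, and since $Q(g) v_2 \in \hilb K^\perp$, we obtain $P Q(g) v = Q(g) v_1$ as well, giving $PQ(g) = Q(g)P$.

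For the backward direction, I would assume $P$ commutes with $Q(g)$ for every $g$. If $v \in \hilb K$ then $Pv = v$, and hence $Q(g) v = Q(g) P v = P Q(g) v \in \hilb K$, so $\hilb K$ is invariant.

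The only subtle point is the forward direction, where one must pass from invariance of $\hilb K$ to invariance of $\hilb K^\perp$; this is where unitarity enters through the identity $Q(g)^* = Q(g^{-1})$. No topological hypotheses beyond those already built into the definition of a representation are needed, since the argument is purely algebraic on vectors once $P$ is in hand.
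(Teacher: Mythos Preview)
Your proof is correct and is the standard argument. The paper does not actually supply its own proof of this lemma; it merely cites \parencite[Proposition 3.4]{folland:harmana}, where the reasoning is essentially what you have written.
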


A basic fact is that if $\hilb K \subseteq \hilb H$ is a subrepresentation then the same holds for its orthogonal complement $\hilb K^\perp$.

Morphisms between irreducible representations are rigid in the following sense, as is well known from the theory of finite-dimensional representations:

\begin{lem}[Schur's lemma]
(See \parencite[Lemma 3.5]{folland:harmana}.) A non-zero representation $Q$ of a group is irreducible if and only if every endomorphism of $Q$ is a scalar multiple of the identity. A morphism between two irreducible representations is zero when the representations are non-isomorphic and a multiple of a unitary isomorphism otherwise.
\end{lem}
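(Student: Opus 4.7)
The plan is to prove both parts by repeated use of one central idea: if $Q$ is a unitary representation and $T$ is an endomorphism of $Q$, then $T^*$ is an endomorphism of $Q$ as well. This follows because $Q(g)^{*} = Q(g^{-1})$ by unitarity, so applying the adjoint to the intertwining relation $T\, Q(g) = Q(g)\, T$ gives $Q(g^{-1})\, T^* = T^*\, Q(g^{-1})$ for all $g$. Consequently, the collection of endomorphisms of $Q$ forms a unital $*$-subalgebra of $\bound(\hilb H)$ that is closed under taking self-adjoint/anti-self-adjoint parts.

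For the first statement, the easy direction is: if every endomorphism is a scalar multiple of the identity, then $Q$ is irreducible, since any non-trivial closed invariant subspace would give, via orthogonal projection onto it, a non-scalar endomorphism (a projection is a scalar only if it is $0$ or $I$). For the converse, I would assume $Q$ is irreducible and let $T$ be an endomorphism. Writing $T = A + iB$ with $A := (T+T^*)/2$ and $B := (T-T^*)/(2i)$ self-adjoint endomorphisms, it is enough to treat the self-adjoint case. Here the key tool is the spectral theorem applied to $A$: for every Borel set $E \subseteq \RR$ the spectral projection $P_E(A)$ lies in the von Neumann algebra generated by $A$, and in particular commutes with everything that $A$ commutes with, hence with $Q(G)$. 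So each $P_E(A)$ is an endomorphism of $Q$; being a projection onto a closed invariant subspace, irreducibility forces $P_E(A) \in \{0, I\}$. This means the spectral measure of $A$ is supported at a single real number, so $A$ is a scalar multiple of $I$. The same applies to $B$, and hence to $T$.

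For the second statement, I would let $T \colon Q \to Q'$ be a morphism between irreducibles. Then $T^*T$ is an endomorphism of $Q$ (since $T^*$ is a morphism $Q' \to Q$), so by the first part there exists $c \in \CC$ with $T^*T = cI$. Taking inner products shows $c = \|Tv\|^2/\|v\|^2 \geq 0$ for any nonzero $v$. If $T = 0$ we are done. Otherwise $c > 0$; setting $V := T/\sqrt{c}$ gives an intertwiner satisfying $V^*V = I$, so $V$ is an isometric embedding $Q \hookrightarrow Q'$. Now $VV^*$ is an endomorphism of $Q'$; by the first part applied to $Q'$ it equals $c' I$ for some scalar, and since $VV^*$ is a nonzero orthogonal projection we must have $c' = 1$, i.e.\ $VV^* = I$. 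Hence $V$ is a unitary isomorphism $Q \xrightarrow{\sim} Q'$ and $T = \sqrt{c}\, V$ is a scalar multiple of it; in particular, if $Q$ and $Q'$ are non-isomorphic then no such nonzero $T$ exists.

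The only real technical ingredient is the appeal to the spectral theorem for bounded self-adjoint operators to produce the spectral projections as endomorphisms of $Q$, together with the fact that a unitary representation has the property that the adjoint of an intertwiner is an intertwiner. Everything else is algebraic manipulation with the relation $T^*T = cI$, so I expect no serious obstacle beyond citing the spectral theorem correctly.
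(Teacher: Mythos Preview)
The paper does not actually prove this lemma; it merely states it with a citation to \parencite[Lemma 3.5]{folland:harmana}. Your argument is correct and is essentially the standard proof one finds in that reference: reduce to self-adjoint intertwiners via $T = A + iB$, use the spectral theorem to see that the spectral projections of $A$ lie in $\{A\}''$ and hence commute with $Q(G)$, forcing them to be $0$ or $I$ by irreducibility; then handle morphisms between two irreducibles by applying the first part to $T^*T$ and $TT^*$. There is nothing to compare against in the paper itself.
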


We denote the external or internal direct sum of a family of representations $Q_i$ as $\bigoplus_i Q_i$. If $\hilb H_i$ is the underlying Hilbert space of $Q_i$, then that of $\bigoplus_i Q_i$ is the Hilbert space completion
\[
\overline{\bigoplus_i \hilb H_i}
\]
of the algebraic direct sum of the $\hilb H_i$.

A finite-dimensional representation of a group always contains an irreducible subrepresentation because one can pick a non-zero subrepresentation of minimal dimension. This argument cannot be applied to infinite-dimensional representations. Even worse, the left regular representation of the additive group $\RR$ on the Hilbert space $L^2(\RR)$ is an example which has plenty of subrepresentations but no irreducible ones (as explained in for example \parencite[72]{folland:harmana}), showing that this property is not satisfied in general. When we do have this knowledge at our disposal a general argument can be used for

\begin{prop}
\label{thm:gprep-complred}
A non-zero representation of a group such that every non-zero subrepresentation contains an irreducible subrepresentation is the (internal) direct sum of a set of mutually orthogonal, irreducible subrepresentations.
\end{prop}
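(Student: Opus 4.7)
The plan is to apply a Zorn's lemma argument to a partially ordered set of families of mutually orthogonal irreducible subrepresentations, and then use the hypothesis to exclude the possibility that the orthogonal complement of such a maximal family is non-zero.

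More concretely, let $Q$ be the representation in question, on a Hilbert space $\hilb H$. I would consider the collection $\mathcal{F}$ of all sets $\{\hilb K_i\}_{i \in I}$ of pairwise orthogonal, irreducible subrepresentations of $\hilb H$, partially ordered by inclusion. This collection is non-empty because the hypothesis, applied to the non-zero subrepresentation $\hilb H$ itself, yields at least one irreducible subrepresentation, giving a singleton family. Next, I would verify that every chain in $\mathcal{F}$ has an upper bound given simply by the union of the families in the chain: pairwise orthogonality is a property of pairs, so it is preserved under unions of totally ordered collections. Zorn's lemma then supplies a maximal family $\{\hilb K_i\}_{i \in I}$.

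Let $\hilb K := \overline{\bigoplus_i \hilb K_i} \subseteq \hilb H$ be the closed subspace generated by this family; since each $\hilb K_i$ is $G$-invariant, so is $\hilb K$, and hence also its orthogonal complement $\hilb K^\perp$ (a basic fact mentioned just before Schur's lemma in the excerpt). The heart of the argument is then to show that $\hilb K^\perp = 0$. If it were non-zero, the hypothesis of the proposition, applied to the subrepresentation $\hilb K^\perp$, would produce an irreducible subrepresentation $\hilb L \subseteq \hilb K^\perp$. But then $\{\hilb K_i\}_{i \in I} \cup \{\hilb L\}$ would be a strictly larger family in $\mathcal{F}$, since $\hilb L$ is orthogonal to every $\hilb K_i$ by construction. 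This contradicts maximality, so $\hilb K^\perp = 0$ and hence $\hilb K = \hilb H$, exhibiting $Q$ as the internal direct sum of the $\hilb K_i$.

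The main subtlety, rather than an obstacle, is to be careful about the interpretation of \emph{internal direct sum}: what is really shown is that the algebraic sum $\bigoplus_i \hilb K_i$ is dense in $\hilb H$, which by convention (reiterated in the paragraph preceding the proposition) is precisely what the notation $\bigoplus_i \hilb K_i$ means in the Hilbert space setting. No continuity or measurability issues intervene, since orthogonality and invariance of closed subspaces are purely Hilbert-theoretic properties and irreducibility of each $\hilb K_i$ is preserved throughout the argument. The hypothesis on existence of irreducible subrepresentations is used in exactly two places: once to start the induction (getting $\mathcal{F}$ non-empty) and once to exclude a non-zero orthogonal complement at the maximal step.
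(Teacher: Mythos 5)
Your proof is correct and follows essentially the same route as the paper: a Zorn's lemma argument on families of mutually orthogonal irreducible subrepresentations, using the hypothesis once to get a non-empty poset and once to rule out a non-zero orthogonal complement of the maximal family. No gaps; your remark that the internal direct sum means density of the algebraic sum matches the paper's convention.
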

\begin{proof}
(As for example in \parencite[Proof of Corollary 5.4.2]{kowalski:reptheory}.) Denote the representation and its underlying Hilbert space by $Q$ and $\hilb H$, respectively. Consider the set $\Sigma$ of which each element is a set of mutually orthogonal, irreducible subrepresentations of $Q$. Such sets can be ordered by inclusion, making $\Sigma$ a poset. Our assumption says that $Q$ contains at least one irreducible subrepresentation, so the singleton set that it forms belongs to $\Sigma$ and therefore $\Sigma$ is non-empty. Every non-empty chain in $\Sigma$ has an upper bound in $\Sigma$, namely the union of the sets belonging to that chain. Zorn's lemma now tells us that $\Sigma$ contains a maximal element, say, a set $\{\hilb K_\alpha\}_{\alpha \in A}$ of subrepresentations for some index set $A$. Define $\hilb K$ to be the (internal, Hilbert space) direct sum of the $\hilb K_\alpha$. If $\hilb K = \hilb H$, we are done. If not, then $\hilb K^\perp$ is a non-zero subrepresentation of $Q$ because $Q$ is unitary. Applying the assumption again to exhibit an irreducible subrepresentation of $\hilb K^\perp$ and adding this to the set $\{\hilb K_\alpha\}_{\alpha \in A}$ contradicts its maximality. We conclude that $\hilb K = \hilb H$ after all.
\end{proof}

Of course, the decomposition that we obtain in this way is in no way unique. Moreover, this result and its proof do not tell you anything about how to concretely decompose a given representation.

\subsection{Induction of representations in the case of a discrete coset space}
\label{subsec:indreps}

Throughout this section $G$ will be a group, $H$ a subgroup such that the coset space $G/H$ is discrete and countable and $Q$ a representation\footnote{We inherit the conventions on terminology from \cref{app:posenergyreps}. So groups are meant to be topological and representations are strongly continuous and unitary.} of $H$ on a Hilbert space $\hilb K$. (This implies that $H$ is open, and hence closed in $G$.) We explain a method of \defn{inducing} $Q$ up to a representation of $G$ and we prove some of its basic properties.

The conditions we impose on $G/H$ are sufficient for the purposes of this thesis because for the pairs $(G,H)$ considered in \cref{subsec:unicol-irrepsLT,subsec:bicol-irrepsfullgp} the coset space is isomorphic to the underlying abelian group of a lattice and of a rational lattice, respectively. Furthermore, for those pairs $H$ is normal in $G$. Therefore we will often include that assumption in our results as well, although this is not essential.

We start by constructing from the data $G$, $H$ and $Q$ a new Hilbert space $\Ind_H^G \hilb K$ that is larger than $\hilb K$ as the following (external) Hilbert space direct sum indexed over the left cosets $\sigma$ of $H$ in $G$:
\[
\Ind_H^G \hilb K := \overline{\bigoplus_{\mathclap{\sigma \in G/H}} \hilb K^\sigma}.
\]

Here, $\hilb K^\sigma$ stands for the Hilbert space $\sigma \times_H \hilb K$. Its vectors are equivalence classes of pairs $(x, v) \in \sigma \times \hilb K$ for the relation $(xh, v) \sim (x, Q(h)v)$. We denote such a class as $[x,v]$. Addition of vectors is defined as $[x,v] + [x', v'] := [x, v + Q(x^{-1}x')v']$. The inverse of $[x, v]$ is $[x, -v]$ and the zero vector is $[x, 0]$. Scalar multiplication is defined as $\alpha \cdot [x, v] := [x, \alpha v]$, and, lastly, the inner product is $\langle [x, v], [x', v'] \rangle := \langle v, Q(x^{-1} x') v'\rangle$. Any choice of representative $x \in \sigma$ induces a unitary map $\hilb K^\sigma \xrightarrow{\sim} \hilb K$ given by $[x, v] \mapsto v$. Hence $\hilb K^\sigma$ can be seen as a copy of $\hilb K$ associated to $\sigma$ which is constructed without making any choices.

We will often abbreviate $\Ind_H^G \hilb K$ as $\Ind \hilb K$ when it is clear which groups are under discussion. A general vector of $\Ind \hilb K$ is a tuple of vectors
\[
\bigl([x^\sigma, v^\sigma]\bigr)_{\sigma \in G/H}, \qquad [x^\sigma, v^\sigma] \in \hilb K^\sigma,
\]
such that
\[
\sum_{\mathclap{\sigma \in G/H}} \norm{v^\sigma}_{\hilb K}^2 < \infty.
\]

Next, we define an action $\Ind_H^G Q$ of $G$ on $\Ind \hilb K$ by setting for $g \in G$
\[
(\Ind_H^G Q)(g) \cdot \bigl([x^\sigma, v^\sigma]\bigr)_\sigma := \Bigl(\bigl[gx^{g^{-1}\sigma}, v^{g^{-1}\sigma}\bigr]\Bigr)_\sigma.
\]
It is easily checked that $(\Ind_H^G)(g)$ is well-defined, linear and unitary, and we will often abbreviate this action as $\Ind Q$. Informally, and more geometrically speaking, it can be understood by seeing $G \times_H \hilb K$ as the total space of a complex vector bundle over the space of cosets $G/H$. The action $\Ind Q$ on $\Ind \hilb K$ is then the natural one when the latter is considered as a space of sections of this bundle that are square-integrable with respect to the counting measure on $G/H$.

\begin{prop}
The action $\Ind Q$ of $G$ on $\Ind \hilb K$ is strongly continuous.
\end{prop}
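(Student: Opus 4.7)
The plan is to establish strong continuity at the identity (which suffices by the standard equivalence stated in \cref{dfn:rep}) through a two-step reduction: first to finitely supported vectors, and then to a single summand on which strong continuity of $Q$ can be invoked directly.

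First I would observe that the assumption that $G/H$ is discrete implies that for every coset $\sigma \in G/H$ the stabiliser subgroup $\Sigma_\sigma := \{g \in G \mid g\sigma = \sigma\}$ is open in $G$. Indeed, $\Sigma_\sigma = xHx^{-1}$ for any representative $x \in \sigma$, and $H$ is open in $G$ because $G/H$ is discrete; conjugation by $x$ is a homeomorphism of $G$. Consequently, for any finite subset $F \subseteq G/H$ the intersection $V_F := \bigcap_{\sigma \in F} \Sigma_\sigma$ is an open neighbourhood of $1_G$ on which every element fixes each coset in $F$.

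Next, I would handle vectors $v = ([x^\sigma, v^\sigma])_\sigma \in \Ind \hilb K$ whose support $F := \{\sigma \mid v^\sigma \neq 0\}$ is finite. For $g \in V_F$ one has $g^{-1}\sigma = \sigma$ for every $\sigma \in F$, so the support of $(\Ind Q)(g)(v)$ is again contained in $F$, and on each such summand
\[
(\Ind Q)(g)(v)^\sigma = [g x^\sigma, v^\sigma] = \bigl[x^\sigma, Q\bigl((x^\sigma)^{-1} g x^\sigma\bigr) v^\sigma\bigr],
\]
where $(x^\sigma)^{-1} g x^\sigma$ lies in $H$ precisely because $g$ stabilises $\sigma$. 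As $g \to 1$ in $V_F$, conjugation by the fixed element $x^\sigma$ sends $g$ to $1$ in $G$, and hence in $H$; strong continuity of $Q$ then yields $Q((x^\sigma)^{-1} g x^\sigma) v^\sigma \to v^\sigma$ in $\hilb K$ for each of the finitely many $\sigma \in F$. Summing the squared norms over $F$ gives $(\Ind Q)(g)(v) \to v$ in $\Ind \hilb K$.

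Finally, for an arbitrary $v \in \Ind \hilb K$ and $\epsilon > 0$ I would approximate $v$ by a finitely supported vector $w$ with $\norm{v - w}_{\Ind \hilb K} < \epsilon/3$, which exists by definition of $\Ind \hilb K$ as a Hilbert space completion. Since each operator $(\Ind Q)(g)$ is unitary, $\norm{(\Ind Q)(g)(v) - (\Ind Q)(g)(w)}_{\Ind \hilb K} < \epsilon/3$ uniformly in $g$, and the previous step provides an open neighbourhood $U \subseteq V_{F}$ of $1_G$ (where $F$ is the support of $w$) such that $\norm{(\Ind Q)(g)(w) - w}_{\Ind \hilb K} < \epsilon/3$ for $g \in U$. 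A three-term triangle inequality then gives $\norm{(\Ind Q)(g)(v) - v}_{\Ind \hilb K} < \epsilon$ for all $g \in U$, completing the proof.

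The only potential obstacle is the verification that stabilisers of cosets are open, but this is immediate from the discreteness of $G/H$; no further hypothesis (such as normality of $H$) is needed. The countability of $G/H$ plays no role in this particular argument, though it is of course used to ensure that $\Ind \hilb K$ is itself a Hilbert space in the usual sense rather than requiring a more careful treatment.
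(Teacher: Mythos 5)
Your proof is correct, and it takes a noticeably different route from the one in the paper. The paper's argument fixes a vector and a sequence $g_n \to 1$, notes that discreteness of $G/H$ forces $g_n$ eventually to satisfy $g_n^{-1}\sigma = \sigma$ for \emph{all} cosets $\sigma$ simultaneously, and then appeals to strong continuity of $Q$ in each (conjugated) summand; strictly speaking that simultaneous stabilisation uses that $g_n$ eventually lies in every conjugate $xHx^{-1}$, i.e.\ it leans on the normality of $H$ that holds in the paper's applications, and the final step still tacitly interchanges a limit with an infinite sum (a dominated-convergence estimate with dominating terms $4\norm{v^\sigma}^2$). You instead only need the \emph{finitely many} stabilisers of the cosets in the support of an approximating vector to be open, prove convergence there exactly as in the paper (via $[gx^\sigma, v^\sigma] = [x^\sigma, Q((x^\sigma)^{-1}gx^\sigma)v^\sigma]$), and then pass to general vectors by the standard density-plus-unitarity $\epsilon/3$ argument. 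What your version buys is that it needs no normality hypothesis, it never has to control an infinite sum of error terms, and it is phrased with neighbourhoods rather than sequences, so it works verbatim for non-metrisable $G$; what the paper's version buys is brevity, since in its intended setting ($H$ normal, $G$ metrisable) the reduction you perform is unnecessary. One tiny aside: your closing remark about countability is harmless but not quite on point — Hilbert direct sums over uncountable index sets are unproblematic; countability in the paper mainly matters for separability considerations elsewhere.
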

\begin{proof}
Let $([x^\sigma, v^\sigma])_\sigma$ be a vector in $\Ind \hilb K$ and $(g_n)_n$ a sequence of elements in $G$ converging to $1$. We then wish to show that the sequence of vectors
\[
\Bigr((\Ind Q)(g_n) \cdot \bigl([x^\sigma, v^\sigma]\bigr)_\sigma\Bigl)_n
\]
converges to $([x^\sigma, v^\sigma])_\sigma$. For every $n$ there holds
\begin{multline*}
\bigl([x^\sigma, v^\sigma]\bigr)_\sigma - (\Ind Q)(g_n) \cdot \bigl([x^\sigma, v^\sigma]\bigr)_\sigma \\
\begin{aligned}
	&= \bigl([x^\sigma, v^\sigma]\bigr)_\sigma - \Bigl(\bigl[g_n x^{g_n^{-1}\sigma}, v^{g_n^{-1}\sigma}\bigr]\Bigr)_\sigma \\
	&= \Bigl(\Bigl[x^\sigma, v^\sigma - Q\bigl((x^\sigma)^{-1} g_n x^{g_n^{-1}\sigma}\bigr)\bigl(v^{g_n^{-1}\sigma}\bigr)\Bigr]\Bigr)_\sigma,
\end{aligned}
\end{multline*}
and hence
\begin{multline*}
\Bigl\lVert \bigl([x^\sigma, v^\sigma]\bigr)_\sigma - (\Ind Q)(g_n) \cdot \bigl([x^\sigma, v^\sigma]\bigr)_\sigma \Bigr\rVert_{\Ind \hilb K}^2 \\
	= \sum_{\sigma \in G/H} \Bigl\lVert  v^\sigma - Q\bigl((x^\sigma)^{-1} g_n x^{g_n^{-1}\sigma}\bigr)\bigl(v^{g_n^{-1}\sigma}\bigr) \Bigr\rVert_{\hilb K}^2.
\end{multline*}
Now note that, since $G/H$ is discrete, $(g_n)_n$ converging to $1$ implies that there exists some $N \geq 1$ such that for $n \geq N$ we have $g_n^{-1} \sigma = \sigma$ for all cosets $\sigma$ simultaneously. An appeal to the strong continuity of $Q$ then finishes the argument.
\end{proof}

We call $\Ind Q$ the \defn{representation of $G$ induced from $Q$}.

\begin{prop}
\label{thm:indrep-functorial}
Let $Q_1$ and $Q_2$ be two representations of $H$ with underlying Hilbert space $\hilb K_1$ and $\hilb K_2$, respectively. If $f\colon Q_1 \to Q_2$ is an $H$-intertwiner, then the function $\Ind f\colon \Ind \hilb K_1 \to \Ind \hilb K_2$ defined by
\begin{equation}
\label{eq:indrep-functorial}
(\Ind f)\bigl([x^\sigma, v^\sigma]\bigr)_\sigma := \Bigl(\bigl[x^\sigma, f(v^\sigma)\bigr]\Bigr)_\sigma
\end{equation}
for all vectors $([x^\sigma, v^\sigma])_\sigma \in \Ind \hilb K_1$ is a $G$-intertwiner. It is isometric when $f$ is.
\end{prop}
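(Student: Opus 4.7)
The plan is to verify in turn the well-definedness of $\Ind f$ on each summand $\hilb K_1^\sigma$, its boundedness and extendability to the Hilbert space completion, its compatibility with the $G$-action, and finally the isometry property in the case that $f$ is itself isometric.

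First I would check that the formula~\eqref{eq:indrep-functorial} makes sense on each summand $\hilb K_1^\sigma$. Recall that a vector $[x, v] \in \hilb K_1^\sigma$ is an equivalence class of pairs under the relation $(xh, v) \sim (x, Q_1(h)v)$ for $h \in H$. Because $f$ intertwines $Q_1$ and $Q_2$ we have $f(Q_1(h)v) = Q_2(h)f(v)$, so in $\hilb K_2^\sigma$ there holds
\[
[xh, f(v)] = \bigl[x, Q_2(h)f(v)\bigr] = \bigl[x, f(Q_1(h)v)\bigr].
\]
Hence $[x,v] \mapsto [x, f(v)]$ is a well-defined, linear map $\hilb K_1^\sigma \to \hilb K_2^\sigma$ whose operator norm equals that of $f$, via the unitary identifications $\hilb K_i^\sigma \xrightarrow{\sim} \hilb K_i$ induced by a choice of representative of $\sigma$.

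Next I would assemble these per-coset maps into a map on the algebraic direct sum $\bigoplus_\sigma \hilb K_1^\sigma$ and extend to the Hilbert space completion using the estimate
\[
\bigl\lVert (\Ind f)\bigl([x^\sigma, v^\sigma]\bigr)_\sigma \bigr\rVert_{\Ind \hilb K_2}^2 = \sum_\sigma \norm{f(v^\sigma)}_{\hilb K_2}^2 \leq \norm{f}^2 \sum_\sigma \norm{v^\sigma}_{\hilb K_1}^2.
\]
This shows simultaneously that the image of a square-summable tuple is again square-summable, that $\Ind f$ is a bounded linear map with $\norm{\Ind f} \leq \norm{f}$, and that when $f$ is isometric the inequality is an equality, so $\Ind f$ is isometric as well. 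Equivariance under $G$ is then a direct computation: for $g \in G$,
\[
(\Ind Q_2)(g)(\Ind f)\bigl([x^\sigma, v^\sigma]\bigr)_\sigma = \Bigl(\bigl[g x^{g^{-1}\sigma}, f(v^{g^{-1}\sigma})\bigr]\Bigr)_\sigma = (\Ind f)(\Ind Q_1)(g)\bigl([x^\sigma, v^\sigma]\bigr)_\sigma,
\]
by comparing with the defining formula for the $G$-action on $\Ind \hilb K_i$.

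I do not foresee a genuine obstacle in any of these steps; the only point requiring mild care is the first one, where the intertwining property of $f$ is precisely what ensures that the recipe descends from pairs $(x,v)$ to their equivalence classes. Everything else is a direct unwinding of the definitions of $\Ind \hilb K$ and $\Ind Q$ given before the statement.
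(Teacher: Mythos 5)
Your proposal is correct and follows essentially the same route as the paper's proof: the key estimate $\sum_\sigma \norm{f(v^\sigma)}^2 \leq \norm f^2 \sum_\sigma \norm{v^\sigma}^2$ gives square-summability, boundedness and the isometry claim at once, and the remaining points (well-definedness on equivalence classes via the intertwining property, linearity, $G$-equivariance) are the direct verifications the paper dismisses as "easily checked." Your write-up merely spells out those routine steps in more detail.
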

\begin{proof}
That $\Ind f$ is linear and compatible with the actions of $G$ is easily checked. Because
\[
\sum_\sigma \bigl\lVert f(v^\sigma)\bigr\rVert_{\hilb K_2}^2 \leq \norm f \cdot \sum_\sigma \norm{v^\sigma}_{\hilb K_1}^2 < \infty,
\]
the tuple on the right hand side of~\eqref{eq:indrep-functorial} is indeed a vector of $\Ind \hilb K_2$. This inequality also shows that $\norm{\Ind f} = \norm f$.
\end{proof}

\begin{cor}
The assignment $Q \mapsto \Ind Q$ defines a functor $\Ind_H^G\colon \Rep H \to \Rep G$.
\end{cor}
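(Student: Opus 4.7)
The plan is to verify the two remaining functoriality axioms, since the preceding material has already supplied the rest of the data. On objects, the construction of $\Ind_H^G Q$ together with the strong continuity result for $\Ind Q$ proved earlier shows that $\Ind_H^G$ sends an object of $\Rep H$ to one of $\Rep G$. On morphisms, \cref{thm:indrep-functorial} shows that an $H$-intertwiner $f\colon Q_1 \to Q_2$ is sent to a $G$-intertwiner $\Ind f\colon \Ind Q_1 \to \Ind Q_2$, and that $\Ind f$ is bounded (in fact $\norm{\Ind f} = \norm f$), hence a morphism in $\Rep G$. So the only thing left to check is that $\Ind_H^G$ preserves identities and composition.

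For preservation of identities, I would simply note that if $f = \id_{\hilb K}$, then the defining formula
\[
(\Ind f)\bigl([x^\sigma, v^\sigma]\bigr)_\sigma = \Bigl(\bigl[x^\sigma, f(v^\sigma)\bigr]\Bigr)_\sigma = \bigl([x^\sigma, v^\sigma]\bigr)_\sigma
\]
shows that $\Ind(\id_Q) = \id_{\Ind Q}$. For preservation of composition, given $H$-intertwiners $f_1\colon Q_1 \to Q_2$ and $f_2\colon Q_2 \to Q_3$, a direct unwinding of the formula gives
\[
\bigl(\Ind(f_2 \circ f_1)\bigr)\bigl([x^\sigma, v^\sigma]\bigr)_\sigma = \Bigl(\bigl[x^\sigma, f_2(f_1(v^\sigma))\bigr]\Bigr)_\sigma = (\Ind f_2)\Bigl(\bigl[x^\sigma, f_1(v^\sigma)\bigr]\Bigr)_\sigma = \bigl((\Ind f_2) \circ (\Ind f_1)\bigr)\bigl([x^\sigma, v^\sigma]\bigr)_\sigma,
\]
so $\Ind(f_2 \circ f_1) = \Ind f_2 \circ \Ind f_1$.

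There is no real obstacle here: once the non-trivial content (strong continuity of $\Ind Q$ and boundedness of $\Ind f$) has already been established in the earlier constructions and in \cref{thm:indrep-functorial}, functoriality reduces to a one-line check on each axiom because $\Ind f$ is defined summand-by-summand by the assignment $[x^\sigma, v^\sigma] \mapsto [x^\sigma, f(v^\sigma)]$, which is manifestly compatible with composing and taking identities in the $\hilb K$-slot.
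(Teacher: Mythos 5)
Your proof is correct and matches the paper's intent: the paper states this corollary without proof precisely because, after the strong continuity of $\Ind Q$ and \cref{thm:indrep-functorial}, only the routine checks of identity and composition preservation remain, which you carry out directly from the defining formula. Nothing further is needed.
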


Unfortunately, this induction functor produces in general representations of $G$ that are too large or too small for the functor to be a left or a right adjoint, respectively, to the restriction functor $\Res_G^H\colon \Rep G \to \Rep H$. Regarding the first of these two claims: if $Q'$ is a representation of $G$ there always exists the injective complex linear map
\begin{equation}
\label{eq:indrep-leftadj}
\Hom_G(\Ind Q, Q') \hookrightarrow \Hom_H(Q, \Res Q')
\end{equation}
given by pre-composition with the $H$-intertwining inclusion $\hilb K \hookrightarrow \Ind \hilb K$, $v \mapsto [1,v]$. To show that~\eqref{eq:indrep-leftadj} is surjective means that for an $H$-intertwiner $f\colon Q \to \Res Q'$ we should produce a $G$-intertwiner $\hat f\colon \Ind Q \to Q'$ such that $\hat f[1,v] = f(v)$ for all $v \in \hilb K$. Algebraic considerations force us to define
\[
\hat f[x^\sigma, v^\sigma] := Q'(x^\sigma) f(v^\sigma)
\]
for every coset $\sigma \in G/H$.

That $\hat f$ can have problems of convergence is illustrated by taking $G := \ZZ$, $H = \{0\}$ and letting $Q$ and $Q'$ be the trivial representations of $H$ and $G$, respectively. In this case the underlying Hilbert space $\Ind \hilb K$ of $\Ind Q$ consists of all square-integrable functions $\ZZ \to \hilb K := \CC$. It contains in particular the function $k \mapsto 1/k$. The domain of the $H$-intertwiner $Q \to \Res Q'$ given by $1_\CC \mapsto 1_\CC$ then cannot be enlarged to $\Ind Q$ because the series $\sum_{k \in \ZZ} 1/k$ does not converge in $\hilb K' := \CC$.

To show that $\Ind$ is neither right adjoint to $\Res$ we observe that there always exists the injective complex linear map
\begin{equation}
\label{eq:indrep-rightadj}
\Hom_G(Q', \Ind Q) \hookrightarrow \Hom_H(\Res Q', Q)
\end{equation}
given by post-composition with the $H$-intertwining orthogonal projection $P$ of $\Ind \hilb K$ to $\hilb K$. Showing that~\eqref{eq:indrep-rightadj} is surjective means that for an $H$-intertwiner $f\colon \Res Q' \to Q$ we should produce an $G$-intertwiner $\hat f\colon Q' \to \Ind Q$ such that $P \circ \hat f = f$. We have no choice but to define
\[
\hat f(v) = \sum_{\mathclap{\sigma \in G/H}} \Bigl[x^\sigma, f\bigl(Q'(x^\sigma)^{-1}(v)\bigr)\Bigr]
\]
for every vector $v \in \hilb K'$, where $\{x^\sigma\}_{\sigma \in G/H}$ is any set of representatives of the (left) cosets of $H$ in $G$.

Now choose again $G := \ZZ$, $H = \{0\}$ and $Q$ and $Q'$ to be the trivial representations of $H$ and $G$ respectively. Then the codomain of the $H$-intertwiner $\Res Q' \to Q$ given by $1_\CC \mapsto 1_\CC$ cannot be enlarged to $\Ind Q$ because the vector
\[
\sum_{\mathclap{\sigma \in G/H}} [x^\sigma, 1_\CC]
\]
does not converge in $\Ind \hilb K$.

Nevertheless, these weak forms~\eqref{eq:indrep-leftadj} and~\eqref{eq:indrep-rightadj} of \defn{Frobenius reciprocity} are sufficient to show that the irreducibility of a representation that is induced from a normal subgroup can be tested by calculating conjugate representations. Either one will do, and we will choose to use~\eqref{eq:indrep-leftadj}. We begin with a little preparatory material that is useful for the study of induced representations in general:

\begin{dfn}
Let $g \in G$. The associated representation $Q^g$ of the subgroup $g H g^{-1} \subseteq G$ that is \defn{conjugate} to $Q$ is defined by $Q^g(ghg^{-1}) := Q(h)$ for all $h \in H$ on the Hilbert space $\hilb K$.
\end{dfn}

Obviously, the conjugates of $Q$ are either all reducible or all irreducible. The importance of conjugate representations is their appearance in the restriction back to a normal subgroup of an induced representation:

\begin{lem}
\label{thm:decomp-inducedrep}
Assume that the subgroup $H$ of $G$ is normal. Then for every set of representatives $\{x^\sigma\}_{\sigma \in G/H}$ of the (left) cosets of $H$ in $G$ the conjugate representations $Q^{x^\sigma}$ are representations of $H$ and the unitary maps $\hilb K^\sigma \xrightarrow{\sim} \hilb K$ given by $[x^\sigma, v^\sigma] \mapsto v^\sigma$ induce a unitary isomorphism
\[
\Res_G^H \Ind_H^G Q \cong \bigoplus_{\mathclap{\sigma \in G/H}} Q^{x^\sigma}
\]
of $H$-representations.
\end{lem}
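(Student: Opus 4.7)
The plan is to exploit the fact that, because $H$ is normal in $G$, each coset $\sigma \in G/H$ is preserved under left multiplication by $H$: for $h \in H$ and $\sigma = x^\sigma H$ we have $h\sigma = (x^\sigma)(x^\sigma)^{-1} h x^\sigma H = x^\sigma H = \sigma$, since $(x^\sigma)^{-1} h x^\sigma \in H$ by normality. Consequently, the action of $\Ind_H^G Q$ restricted to $H$ preserves each subspace $\hilb K^\sigma \subseteq \Ind \hilb K$. Since the induced Hilbert space decomposes as the completed direct sum $\overline{\bigoplus_\sigma \hilb K^\sigma}$, it suffices to identify the resulting $H$-representation on each $\hilb K^\sigma$ separately.

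First I would verify that $Q^{x^\sigma}$ is a well-defined (strongly continuous, unitary) representation of $H$. This is immediate: by normality, $x^\sigma H (x^\sigma)^{-1} = H$, so the formula $Q^{x^\sigma}(h) := Q((x^\sigma)^{-1} h x^\sigma)$ gives a map $H \to \lieU(\hilb K)$ which is a composition of the continuous conjugation automorphism with $Q$.

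Next, for a fixed coset $\sigma$, I would compute the action of $h \in H$ on a vector $[x^\sigma, v] \in \hilb K^\sigma$ directly from the definition of $\Ind Q$. Using that $h^{-1} \sigma = \sigma$, one obtains
\[
(\Ind Q)(h) \cdot [x^\sigma, v] = [h x^\sigma, v] = \bigl[x^\sigma \cdot (x^\sigma)^{-1} h x^\sigma,\, v\bigr] = \bigl[x^\sigma,\, Q\bigl((x^\sigma)^{-1} h x^\sigma\bigr)(v)\bigr],
\]
where the last equality uses the defining equivalence relation on $\sigma \times_H \hilb K$. Transporting this along the unitary $\hilb K^\sigma \xrightarrow{\sim} \hilb K$, $[x^\sigma, v] \mapsto v$, the action of $h$ becomes $v \mapsto Q^{x^\sigma}(h)(v)$, which is exactly the conjugate representation.

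Finally, assembling these isomorphisms over all cosets yields a unitary $H$-intertwining bijection between $\Res \Ind Q$ and $\bigoplus_\sigma Q^{x^\sigma}$; the direct sum on the right is to be understood as the Hilbert space completion, matching the definition of $\Ind \hilb K$. There is no real obstacle here beyond keeping track of the conventions: the content of the lemma is entirely encoded in the normality of $H$ (so cosets are $H$-stable) together with the definition of the equivalence relation defining $\hilb K^\sigma$.
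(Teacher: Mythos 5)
Your proof is correct and follows essentially the same route as the paper's: use normality to see that $h^{-1}\sigma = \sigma$ (equivalently $h\sigma = \sigma$) so each $\hilb K^\sigma$ is $H$-invariant, rewrite $[hx^\sigma, v] = [x^\sigma, Q((x^\sigma)^{-1}hx^\sigma)(v)]$ via the defining equivalence relation, and transport along the unitary $\hilb K^\sigma \xrightarrow{\sim} \hilb K$ to identify the summand with $Q^{x^\sigma}$. No gaps to report.
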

\begin{proof}
Since $H$ is normal in $G$, we have $x^\sigma H (x^\sigma)^{-1} = H$ and so $Q^{x^\sigma}$ is a representation of $H$ given by $Q^{x^\sigma}(h) = Q((x^\sigma)^{-1} h x^\sigma)$ for all $h \in H$. Furthermore, if $h \in H$, then $H$ being normal implies that $h^{-1}\sigma = \sigma$ for every coset $\sigma$. Hence,
\begin{align*}
(\Res \Ind Q)(h)\bigl([x^\sigma, v^\sigma]\bigr)_\sigma &= \bigl([h x^\sigma, v^\sigma]\bigr)_\sigma \\
	&= \Bigl(\bigl[x^\sigma (x^\sigma)^{-1} h x^\sigma, v^\sigma\bigr]\Bigr)_\sigma \\
	&= \biggl(\Bigl[x^\sigma, Q\bigl((x^\sigma)^{-1} h x^\sigma\bigr)(v^\sigma)\Bigr]\biggr)_\sigma.
\end{align*}
This proves what was asked.
\end{proof}

\begin{thm}[Mackey's irreducibility criterion in the case of a normal subgroup]
\label{thm:mackeyirredcrit}
Assume that the subgroup $H$ of $G$ is normal. If $Q$ is irreducible and for some set of representatives $\{x^\sigma\}_{\sigma \in G/H}$ the associated conjugate representations $Q^{x^\sigma}$ for $\sigma \neq H$ are not isomorphic to $Q$, then the induced representation $\Ind Q$ is irreducible as well.
\end{thm}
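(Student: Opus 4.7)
The strategy is to verify the equivalent characterisation of irreducibility afforded by Schur's lemma: it suffices to show that every endomorphism of $\Ind Q$ in $\Rep G$ is a scalar multiple of the identity. So let $T \colon \Ind \hilb K \to \Ind \hilb K$ be a $G$-intertwiner. Consider its precomposition with the isometric $H$-intertwining inclusion
\[
\iota \colon \hilb K \hookrightarrow \Ind \hilb K, \qquad v \mapsto [1, v],
\]
corresponding to the coset $\sigma = H$. This composition $T \circ \iota$ is an element of $\Hom_H(Q, \Res \Ind Q)$, and precisely the image of $T$ under the injective weak Frobenius reciprocity map~\eqref{eq:indrep-leftadj}.

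The next step is to analyse $T \circ \iota$ componentwise. Choose the set of representatives $\{x^\sigma\}_{\sigma \in G/H}$ with $x^H := 1$ and apply \cref{thm:decomp-inducedrep} to obtain a unitary $H$-isomorphism $\Res \Ind Q \xrightarrow{\sim} \bigoplus_\sigma Q^{x^\sigma}$. Under this identification $T \circ \iota$ decomposes into components $f_\sigma \colon Q \to Q^{x^\sigma}$ for $\sigma \in G/H$, each of which is an $H$-intertwiner between irreducible representations (the $Q^{x^\sigma}$ being irreducible because $Q$ is, irreducibility being preserved by conjugation). Schur's lemma combined with the hypothesis $Q^{x^\sigma} \not\cong Q$ for $\sigma \neq H$ forces $f_\sigma = 0$ in that case, whereas $f_H \colon Q \to Q^{x^H} = Q$ must be $\lambda \cdot \id_{\hilb K}$ for some scalar $\lambda \in \CC$. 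Consequently $T \circ \iota = \lambda \cdot \iota$.

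The final step is to observe that the endomorphism $\lambda \cdot \id_{\Ind \hilb K}$ also satisfies $(\lambda \cdot \id_{\Ind \hilb K}) \circ \iota = \lambda \cdot \iota$. Since both $T$ and $\lambda \cdot \id_{\Ind \hilb K}$ lie in $\Hom_G(\Ind Q, \Ind Q)$ and have the same image under~\eqref{eq:indrep-leftadj}, the injectivity of that map yields $T = \lambda \cdot \id_{\Ind \hilb K}$. This establishes the Schur criterion and therefore the irreducibility of $\Ind Q$. The delicate aspect of this argument is that we only have weak Frobenius reciprocity at our disposal, but this suffices here because we are not producing an intertwiner from arbitrary $H$-data—we merely need to distinguish two $G$-intertwiners whose restrictions to $\hilb K$ agree, which is exactly what injectivity of~\eqref{eq:indrep-leftadj} provides.
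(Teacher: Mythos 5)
Your proof is correct and follows essentially the same route as the paper's: both pass a $G$-endomorphism of $\Ind Q$ through the injective weak Frobenius reciprocity map~\eqref{eq:indrep-leftadj}, decompose $\Res \Ind Q$ into conjugates via \cref{thm:decomp-inducedrep}, and apply Schur's lemma together with the hypothesis $Q^{x^\sigma} \ncong Q$ for $\sigma \neq H$. The only difference is presentational — you track an individual endomorphism $T$ and identify the scalar explicitly, whereas the paper counts dimensions of the Hom spaces — so the arguments are the same in substance.
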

\begin{proof}
Let us at first not make any assumptions about the representations at hand. It follows from~\eqref{eq:indrep-leftadj} that there is an inclusion
\[
\Hom_G(\Ind Q, \Ind Q) \hookrightarrow \Hom_H(Q, \Res \Ind Q)
\]
of complex vector spaces. Applying \cref{thm:decomp-inducedrep} next to some set of coset representatives $\{x^\sigma\}_{\sigma \in G/H}$, we may expand this as:
\begin{align*}
\Hom_G(\Ind Q, \Ind Q) &\hookrightarrow \Hom_H\Bigl(Q, \bigoplus_{\mathclap{\sigma \in G/H}} Q^{x^\sigma}\Bigr) \\
	&\hookrightarrow \prod_{\mathclap{\sigma \in G/H}} \Hom_H\bigl(Q, Q^{x^\sigma}\bigr) \\
	&= \Hom_H\bigl(Q, Q^{x^H}\bigr) \oplus \prod_{\mathclap{\substack{\sigma \in G/H\\ \sigma \neq H}}} \Hom_H\bigl(Q, Q^{x^\sigma}\bigr).
\end{align*}
Now suppose that $Q$ is irreducible and that $Q^{x^\sigma} \ncong Q$ when $\sigma \neq H$. Then also $Q^{x^\sigma}$ is irreducible for all $x^\sigma$, so by Schur's lemma $\Hom_H(Q, Q^{x^\sigma})$ is zero when $\sigma \neq H$ and $\Hom_H(Q, Q^{x^H})$ is $1$-dimensional. Therefore, $\Hom_G(\Ind Q, \Ind Q)$ is $1$-dimensional and hence $\Ind Q$ is irreducible by Schur's lemma.
\end{proof}

\subsection{The positive energy condition}
\label{app:posenergycond}

The representations of the various groups studied in \cref{sec:unicol-reptheory} and \cref{sec:bicol-reptheory} satisfy one more property besides the ones listed in \cref{dfn:rep}: they are of positive energy. It is this attribute which makes them amenable to classification. We introduce this notion in this section and explain how it interacts with that of irreducibility.

\begin{refs}
The material in this section is largely taken from \parencite[Section 9.2]{pressley:loopgps} and \parencite[Section I.6]{wassermann:opalgsIII}. More specific references will be given in the text.
\end{refs}

Recall from the Introduction that we denote the topological group of counterclockwise rotations of the manifold $S^1$ by $\Rot(S^1)$. We will use the following models for the covering groups of $\Rot(S^1)$:
\[
\Rot^{(\infty)}(S^1) \cong \{\Phi_\theta\colon \RR \xrightarrow{\sim} \RR, \quad \theta' \mapsto \theta' + \theta\}_{\theta \in \RR}
\]
and for $m \geq 1$
\[
\Rot^{(m)}(S^1) \cong \Rot^{(\infty)}(S^1)/m\ZZ,
\]
where $\ZZ$ stands for the subgroup of $\Rot^{(\infty)}(S^1)$ generated by the shift $\theta \mapsto \theta + 1$. We denote elements of $\Rot^{(m)}(S^1)$ as $[\Phi_\theta]$, where $\Phi_\theta \in \Rot^{(\infty)}(S^1)$ and the square brackets stand for its equivalence class in $\Rot^{(m)}(S^1)$. The image $[\Phi_\theta]$ in $\Rot(S^1)$ of $\Phi_\theta \in \Rot^{(\infty)}(S^1)$ is the anti-clockwise rotation by angle $\theta$, so $[\Phi_1] = \id_{S^1}$.

It is a well-known fact that a non-zero representation of a compact group can be written as the (internal) direct sum of a set of mutually orthogonal, irreducible, finite-dimensional subrepresentations. (See for example \parencite[Corollary 5.4.2]{kowalski:reptheory}.) This applies in particular to the group $\Rot^{(m)}(S^1)$ for some $m \geq 1$. Collecting the irreducible representations together into isotypic components, such a representation $R$ on a Hilbert space $\hilb H$ can then be written as the completion
\[
\hilb H = \overline{\bigoplus_{\mathclap{a \in (1/m)\ZZ}} \hilb H(a)},
\]
where $\hilb H(a)$ is the isotypic component
\[
\hilb H(a) := \bigl\{v \in \hilb H \bigm\vert R[\Phi_\theta](v) = e^{-2\pi i a\theta} v \text{ for all $\theta \in [0,1]$}\bigr\}
\]
on which $R$ acts by the $(-a)$-th character of $\Rot^{(m)}(S^1)$.

\begin{dfn}
\label{dfn:posenergyrep}
A representation $R$ of $\Rot^{(m)}(S^1)$ for some $m \geq 1$ on a Hilbert space $\hilb H$ is said to be of \defn{positive energy} if both
\begin{itemize}
\item the isotypic components $\hilb H(a)$ are zero for $a < 0$, and
\item $R$ is of \defn{finite type}, that is, for each $a \in (1/m)\ZZ_{\geq 0}$ the dimension of $\hilb H(a)$ is finite.
\end{itemize}
We then call $\hilb H(a)$ the \defn{$a$-th energy eigenspace} of $\hilb H$.

Let $N$ be a group together with a continuous $\Rot^{(m)}(S^1)$-action on it. A representation $Q$ of $N$ on a Hilbert space $\hilb H$ is said to be of \defn{positive energy} if there exists an extension of $Q$ to a representation of the semidirect product $N \rtimes \Rot^{(m)}(S^1)$ on $\hilb H$ such that its restriction to $\Rot^{(m)}(S^1)$ is of positive energy. A \defn{morphism} between two positive energy $N$-representations is defined to be a morphism of $N$-representations.
\end{dfn}

Notice that the extension to $N \rtimes \Rot^{(m)}(S^1)$ is not part of the data of a positive energy representation which is why, correspondingly, morphisms are not required to intertwine the rotation actions on the respective Hilbert spaces either.

\begin{rmk}
\label{rmk:rots1action-unique}
Write the (left) action of $\Rot^{(m)}(S^1)$ on $N$ as $[\Phi_\theta]^*g := [\Phi_\theta] \cdot g$, where $g \in N$. Recall that $N \rtimes \Rot^{(m)}(S^1)$ has $N \times \Rot^{(m)}(S^1)$ as its underlying topological space and that its multiplication is defined by
\[
\bigl(g, [\Phi_\theta]\bigr) \cdot \bigl(g', [\Phi_{\theta'}]\bigr) := \bigl(g [\Phi_\theta]^*(g'), [\Phi_{\theta + \theta'}]\bigr)
\]
for all $g, g' \in N$ and $[\Phi_\theta], [\Phi_{\theta'}] \in \Rot^{(m)}(S^1)$. It is then easily seen that $Q$ extending to $N \rtimes \Rot^{(m)}(S^1)$ is equivalent to there existing a representation $R$ of $\Rot^{(m)}(S^1)$ on $\hilb H$ satisfying the intertwining property
\begin{equation}
\label{eq:rotS1intertwin}
R[\Phi_\theta]Q(g)R[\Phi_\theta]^* = Q\bigl([\Phi_\theta]^*g\bigr).
\end{equation}
That is, $R[\Phi_\theta]$ is an isomorphism from $Q$ to the `twisted' representation $Q \circ [\Phi_\theta]$.

If $\chi_a$ is the character $[\Phi_\theta] \mapsto e^{2\pi i a \theta}$ of $\Rot^{(m)}(S^1)$, then of course $\chi_a \cdot R$ also satisfies~\eqref{eq:rotS1intertwin} and $\chi_a \cdot R$ is again of positive energy if $a \leq 0$. This shows that the lift of $Q$ to $N \rtimes \Rot^{(m)}(S^1)$ is never unique. However, this is the only indeterminacy if $Q$ is irreducible. Let $R'$ namely be another representation of $\Rot^{(m)}(S^1)$ on $\hilb H$ intertwining like $R$ with $Q$. Then $[\Phi_\theta] \mapsto R[\Phi_\theta]^* R'[\Phi_\theta]$ is a representation of $\Rot^{(m)}(S^1)$ on $\hilb H$ which commutes with $Q$, hence it must be a character by Schur's lemma.
\end{rmk}

\begin{rmk}
Of course, any finite-dimensional representation of any group $N$ on a Hilbert space $\hilb H$ is of positive energy by making $\Rot(S^1)$ act as the identity on both $N$ and $\hilb H$. However, the groups $N$ we study in this thesis already come with natural, non-trivial actions of $\Rot^{(m)}(S^1)$ for some $m$, hence making the demand of positivity of energy for their representations restrictive and interesting.
\end{rmk}

\begin{rmk}
One could obviously relax the assumption of positivity of energy to obtain the more general notion of a representation of which the energy is \defn{bounded from below}. However, for such a representation it would again not be appropriate to require the $\Rot^{(m)}(S^1)$-action $R$ to be part of the data. Therefore we would be free to `shift the energy', that is, to multiply $R$ with a character $\chi_a$, obtaining a positive energy representation for small enough negative $a$. We thus see that this generalisation is vacuous.
\end{rmk}

The notions of irreducibility for the groups $N$ and $N \rtimes \Rot^{(m)}(S^1)$ agree when positivity of energy is assumed:

\begin{lem}
\label{thm:irrepofcrossprod-irrepgp}
Let $N$ be a group together with a continuous $\Rot^{(m)}(S^1)$-action on it. An irreducible representation of $N \rtimes \Rot^{(m)}(S^1)$ such that the restriction to $\Rot^{(m)}(S^1)$ is of positive energy is also irreducible as a representation of $N$.
\end{lem}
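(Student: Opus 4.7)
The plan is to reduce the question to Schur's lemma by showing that every orthogonal projection $P$ onto a closed $N$-invariant subspace $\hilb K \subseteq \hilb H$ automatically commutes with the intertwining $\Rot^{(m)}(S^1)$-action $R$. Once this is established, $P$ lies in the commutant of $N \rtimes \Rot^{(m)}(S^1)$, which equals $\CC I$ by the assumed irreducibility of $Q$, and so $P \in \{0, I\}$, giving $\hilb K \in \{0, \hilb H\}$. The bridge from commuting with $Q(N)$ to commuting with $R$ will be built by Fourier analysis on the compact group $\Rot^{(m)}(S^1)$, with the positive energy hypothesis supplying the one-sidedness that makes the Fourier expansion collapse.

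Concretely, for each $b \in (1/m)\ZZ$ I would form the strong-operator integral
\[
A_b := \int_0^m e^{2\pi i b \theta}\, R[\Phi_\theta]\, P\, R[\Phi_\theta]^*\, \frac{d\theta}{m}.
\]
Using the intertwining relation $R[\Phi_\theta] Q(g) R[\Phi_\theta]^* = Q([\Phi_\theta]^* g)$ one verifies routinely that $A_b \in Q(N)'$ and that $R[\Phi_{\theta_0}] A_b R[\Phi_{\theta_0}]^* = e^{-2\pi i b \theta_0} A_b$, so $A_b$ shifts energy by $b$: an explicit computation gives $A_b v = (Pv)_{a+b}$ for $v \in \hilb H(a)$, where $(Pv)_c$ denotes the component of $Pv$ in $\hilb H(c)$, and this vanishes when $a+b < 0$. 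Since $P$ is self-adjoint one furthermore has $A_b^* = A_{-b}$.

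The crux is then to show $A_b = 0$ whenever $b \neq 0$. The shift identity implies that $A_b^* A_b$ commutes with every $R[\Phi_{\theta_0}]$ as well as with $Q(N)$ (the two phases cancel), so by Schur it equals $c_b I$ for some $c_b \geq 0$. For $b < 0$ positive energy supplies a smallest eigenvalue $a_{\min} \geq 0$ with $\hilb H(a_{\min}) \neq 0$; since $\hilb H(a_{\min}+b) = 0$, the operator $A_b$ annihilates the nonzero space $\hilb H(a_{\min})$, forcing $c_b = 0$ and hence $A_b = 0$. The adjoint relation then yields $A_b = 0$ for $b > 0$ as well. Summing the identity $A_b v = (Pv)_{a+b}$ over $b$ for $v$ in a fixed $\hilb H(a)$ reassembles $Pv$, so $Pv = A_0 v \in \hilb H(a)$, meaning $P$ preserves every energy eigenspace and in particular commutes with $R$.

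The most delicate point I anticipate is the justification of the implicit reassembly $P = \sum_b A_b$ used in the final step, together with the formal handling of the strong-operator integral defining $A_b$; both should reduce to density of the algebraic direct sum $\bigoplus_a \hilb H(a)$ in $\hilb H$ and the finite-dimensionality of each $\hilb H(a)$, but these are precisely the places where positive energy is used most concretely and deserve explicit care rather than being waved away.
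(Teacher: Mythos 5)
Your proof is correct, and it follows the same overall strategy as the paper's: form the Fourier coefficient operators of the $m$-periodic family $\theta \mapsto R[\Phi_\theta]PR[\Phi_\theta]^*$, check that they commute with $Q(N)$ and shift energy, and use positivity of energy to kill all the nonzero modes, so that $P$ commutes with $R$ and Schur's lemma for the irreducible $N \rtimes \Rot^{(m)}(S^1)$-representation finishes the argument. Where you genuinely diverge is in how irreducibility is used to pass from ``$A_b$ annihilates the lowest energy eigenspace'' to ``$A_b = 0$'': the paper argues that the lowest eigenspace, being $R$-invariant, generates $\hilb H$ under $Q$ alone, so the $Q$-invariant kernel of the coefficient operator is all of $\hilb H$; you instead observe that $A_b^*A_b$ commutes with both $Q(N)$ and $R$ (the phases cancelling), hence equals a nonnegative scalar $c_b$ by Schur, and $c_b = 0$ because $A_b$ kills a nonzero vector. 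This is a slicker, more operator-theoretic route, and it also lets you skip the paper's appeal to Fejér's theorem at the end, since the vanishing of all $A_b$ with $b \neq 0$ directly shows that $P$ preserves each eigenspace, whence $P$ commutes with $R$ on the dense algebraic sum and so everywhere. The worries you flag at the end are easily dispatched: no reassembly $P = \sum_b A_b$ is needed, because for $v \in \hilb H(a)$ the identity $A_b v = (Pv)_{a+b}$ together with $A_b = 0$ for $b \neq 0$ already gives $Pv = (Pv)_a$, the orthogonal decomposition of the single vector $Pv$ converging automatically; and the integral defining $A_b$ can be taken weakly via sesquilinear forms, exactly as the paper does, or strongly, since $\theta \mapsto R[\Phi_\theta]PR[\Phi_\theta]^*v$ is norm-continuous. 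Neither step uses finite-dimensionality of the eigenspaces, consistent with the paper's remark that the finite type condition is not needed for this lemma.
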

The proof of this \namecref{thm:irrepofcrossprod-irrepgp} will actually not use that the representation of $\Rot^{(m)}(S^1)$ is of finite type.
\begin{proof}
(Taken from \parencite[Proposition 9.2.3]{pressley:loopgps}.) Denote the underlying Hilbert space of the representation in the \namecref{thm:irrepofcrossprod-irrepgp} by $\hilb H$ and the representations of $N$ and $ \Rot^{(m)}(S^1)$ on $\hilb H$ by $Q$ and $R$ respectively. Let $P$ be the orthogonal projection of $\hilb H$ onto a subrepresentation of $Q$. It commutes with $Q$. We will prove that $P$ also commutes with $R$, which will then imply that the subrepresentation is invariant under $N \rtimes \Rot^{(m)}(S^1)$ and must therefore be either $\{0\}$ or $\hilb H$ itself.

Consider the $m$-periodic function $\theta \mapsto R[\Phi_\theta] P R[\Phi_\theta]^*$ on the real line with values in the  orthogonal projections on $\hilb H$. We want to show that it is constant because filling in $\theta = 0$ will then imply that $P$ commutes with $R$. Using the intertwining relation of $R$ with $Q$ and the fact that $P$ commutes with $Q$ it is easily checked that this family of projections commutes with $Q$. Define for each $a \in (1/m)\ZZ$ the $a$-th Fourier coefficient of this function as the operator
\[
P_a := \frac1m \int_0^m e^{-2\pi i a \theta} R[\Phi_\theta] P R[\Phi_\theta]^* \dd \theta.
\]

We give a precise meaning to this integral as follows. If $v, w \in \hilb H$ are two fixed vectors, then the two functions $\RR \to \hilb H$ given by $\theta \mapsto P R[\Phi_\theta]^*v$ and $\theta \mapsto R[\Phi_\theta]^*w$ are continuous thanks to the strong continuity of $R$ and the continuity of $P$. The function
\begin{equation}
\label{eq:posenergy-1}
\theta \mapsto \bigl\langle P R[\Phi_\theta]^*v, R[\Phi_\theta]^* w \bigr\rangle = \bigl\langle R[\Phi_\theta] P R[\Phi_\theta]^*v, w \bigr\rangle
\end{equation}
is then continuous as well and so the $a$-th Fourier coefficient
\[
\frac1m \int_0^m e^{-2\pi i a \theta} \bigl\langle R[\Phi_\theta] P R[\Phi_\theta]^*v, w \bigr\rangle \dd\theta
\]
of~\eqref{eq:posenergy-1} exists. Next, we observe that the absolute value of this integral is bounded by
\begin{align*}
\frac1m \int_0^m \Bigl\lvert  e^{-2\pi i a \theta} \bigl\langle R[\Phi_\theta] P R[\Phi_\theta]^*v, w \bigr\rangle \Bigr\rvert \dd\theta 
	&\leq \frac1m \int_0^m \bigr\lVert R[\Phi_\theta] P R[\Phi_\theta]^*v \bigr\rVert \cdot \norm w \dd\theta \\
	&\leq \frac1m \int_0^m \norm v \cdot \norm w \dd\theta = \norm v \cdot \norm w,
\end{align*}
where we used that $R$ is unitary and $P$ is bounded. This estimate shows both that for $v \in \hilb H$ there exists a unique vector $P_a(v) \in \hilb H$ satisfying
\[
\langle P_a v, w \rangle = \frac1m \int_0^m e^{-2\pi i a \theta} \bigl\langle R[\Phi_\theta] P R[\Phi_\theta]^*v, w \bigr\rangle \dd\theta
\]
for all $w \in \hilb H$, and that the operator $P_a$ thus defined is bounded.

Using the fact that $R[\Phi_\theta] P R[\Phi_\theta]^*$ commutes with $Q$ one can show that $P_a$ commutes with $Q$ as well. Furthermore, since $R[\Phi_\theta] P R[\Phi_\theta]^*$ is self-adjoint we have $P_a^* = P_{-a}$:
\begin{align*}
\langle P_a v, w \rangle &= \frac1m \int_0^m e^{-2\pi i a \theta} \bigl\langle v, R[\Phi_\theta] P R[\Phi_\theta]^*w \bigr\rangle \dd\theta \\
	&= \frac1m \int_0^m \overline{e^{2\pi i a \theta} \bigl\langle R[\Phi_\theta] P R[\Phi_\theta]^*w, v \bigr\rangle} \dd\theta \\
	&= \overline{\langle P_{-a} w, v \rangle} = \langle v, P_{-a} w \rangle.
\end{align*}
Pick $b \in (1/m)\ZZ$ and a vector $v \in \hilb H(b)$ in the associated energy eigenspace. Then we calculate the energy of $P_a(v)$ as follows. Let $\theta' \in \RR$ and write
\begin{multline*}
\bigl\langle R[\Phi_{\theta'}] P_a v, w \bigr\rangle \\
\begin{aligned}
	&= \frac1m \int_0^m e^{-2\pi i a \theta} \bigl\langle R[\Phi_{\theta' + \theta}] P R[\Phi_\theta]^*v, w \bigr\rangle \dd\theta \\
	&= e^{2\pi i a \theta'} \frac1m \int_0^m e^{-2\pi i a (\theta' + \theta)} \bigl\langle R[\Phi_{\theta' + \theta}] P R[\Phi_{\theta' + \theta}]^* R[\Phi_{-\theta'}]^* v, w \bigr\rangle \dd\theta \\
	&= e^{2\pi i (a-b) \theta'} \frac1m \int_0^m e^{-2\pi i a (\theta' + \theta)} \bigl\langle R[\Phi_{\theta' + \theta}] P R[\Phi_{\theta' + \theta}]^* v, w \bigr\rangle \dd\theta \\
	&= e^{2\pi i (a-b) \theta'} \frac1m \int_{\theta'}^{\theta' + m} e^{-2\pi i a \theta} \bigl\langle R[\Phi_\theta] P R[\Phi_\theta]^* v, w \bigr\rangle \dd\theta \\
	&= e^{-2\pi i (b-a) \theta'} \langle P_a v, w \rangle.
\end{aligned}
\end{multline*}
We conclude from this that $P_a$ maps $\hilb H(b)$ to $\hilb H(b-a)$.

Let $b \in (1/m)\ZZ$ be the lowest energy level of $R$. Then $P_a$ annihilates $\hilb H(b)$ for $a > 0$. Because $\hilb H$ is irreducible as a representation of $N \rtimes \Rot^{(m)}(S^1)$ it is generated by $\hilb H(b)$ under the action of $N \rtimes \Rot^{(m)}(S^1)$. In other words, $\hilb H$ is the closure of the span of the set of vectors
\[
\bigl\{Q(g) R[\Phi_\theta](v) \bigm\vert \text{$g \in N$, $\Phi_\theta \in \Rot^{(\infty)}(S^1)$ and $v \in \hilb H(b)$}\bigr\}.
\]
However, since $R$ leaves $\hilb H(b)$ invariant $\hilb H$ is also generated by $\hilb H(b)$ under the action of $Q$ alone. The continuity of $P_a$ and it commuting with $Q$ then implies that $P_a$ for $a > 0$ annihilates all of $\hilb H$, that is, $P_a = 0$.

From the relation $P_a^* = P_{-a}$ it now follows that $P_a = 0$ for all $a \neq 0$. Because for each two fixed $v$ and $w$ the scalars $\langle P_a v, w \rangle$ are the Fourier coefficients of the continuous function~\eqref{eq:posenergy-1}, this function is constant by Fejér's theorem. In particular, $\langle R[\Phi_\theta] P R[\Phi_\theta]^*v, w \rangle = \langle Pv, w \rangle$ for all $\theta \in \RR$. We conclude that $R[\Phi_\theta] P R[\Phi_\theta]^* = P$ for all $\theta \in \RR$.
\end{proof}

An alternative proof of the above result is demonstrated in \parencite[Theorem 1.5]{neeb:posenergyreps} which uses the Borchers--Arveson theorem from the theory of von Neumann algebras.

The next criterion guarantees existence of an irreducible subrepresentation.

%\begin{prop}
%Let $G$ be a group containing a compact subgroup $K$. Then a non-zero representation of $G$ contains an irreducible subrepresentation when at least one non-zero $K$-isotypic component is finite-dimensional.
%\end{prop}

\begin{prop}
\label{thm:gprep-findimisotcomp-containsirrep}
Let $G$ be a group containing $\Rot^{(m)}(S^1)$. Then a non-zero representation of $G$ contains an irreducible subrepresentation when at least one non-zero isotypic component of the $\Rot^{(m)}(S^1)$-action is finite-dimensional.
\end{prop}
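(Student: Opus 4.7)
The plan is to use Zorn's lemma applied to the poset of $G$-subrepresentations that meet a chosen finite-dimensional isotypic component non-trivially, and to show that a minimal such subrepresentation must be irreducible. The finite-dimensionality hypothesis is what makes Zorn's lemma applicable and also what drives the irreducibility argument.

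Let $Q$ be the representation on $\hilb H$ and let $R$ denote the restriction to $\Rot^{(m)}(S^1)$, with energy decomposition $\hilb H = \overline{\bigoplus_a \hilb H(a)}$. Fix an energy level $a_0 \in (1/m)\ZZ$ such that $\hilb H(a_0)$ is non-zero and finite-dimensional. Let $P_{a_0}$ be the orthogonal projection onto $\hilb H(a_0)$; since $P_{a_0}$ commutes with $R$, any closed $R$-invariant subspace $\hilb L$ satisfies $P_{a_0}(\hilb L) = \hilb L \cap \hilb H(a_0)$, i.e.\ inherits its own energy decomposition. Let
\[
\Sigma := \bigl\{\hilb K \subseteq \hilb H \bigm\vert \text{$\hilb K$ is a closed $G$-invariant subspace with $\hilb K \cap \hilb H(a_0) \neq 0$}\bigr\},
\]
ordered by \emph{reverse} inclusion. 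Since $\hilb H \in \Sigma$, the set $\Sigma$ is non-empty.

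First I will verify Zorn's hypothesis. Given a chain $(\hilb K_\alpha)_\alpha$ in $\Sigma$, the intersection $\hilb K := \bigcap_\alpha \hilb K_\alpha$ is a closed $G$-invariant subspace. The subspaces $\hilb K_\alpha \cap \hilb H(a_0)$ form a chain of non-zero subspaces of the finite-dimensional space $\hilb H(a_0)$, so this chain must stabilise; hence $\bigcap_\alpha (\hilb K_\alpha \cap \hilb H(a_0))$ is non-zero and contained in $\hilb K \cap \hilb H(a_0)$. Thus $\hilb K \in \Sigma$ is an upper bound (in the reverse-inclusion order) for the chain. Zorn's lemma then yields a minimal element $\hilb K_0 \in \Sigma$.

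The main step is to show that $\hilb K_0$ is irreducible as a $G$-representation. Suppose, for contradiction, that $\hilb L \subsetneq \hilb K_0$ is a proper non-zero closed $G$-invariant subspace. Then $\hilb L^\perp \cap \hilb K_0$ is also a proper non-zero closed $G$-invariant subspace of $\hilb K_0$, and by minimality of $\hilb K_0$ neither $\hilb L$ nor $\hilb L^\perp \cap \hilb K_0$ lies in $\Sigma$; that is,
\[
\hilb L \cap \hilb H(a_0) = 0 \qquad \text{and} \qquad (\hilb L^\perp \cap \hilb K_0) \cap \hilb H(a_0) = 0.
\]
Now take any $v \in \hilb K_0 \cap \hilb H(a_0)$ and decompose $v = v_1 + v_2$ with $v_1 \in \hilb L$ and $v_2 \in \hilb L^\perp \cap \hilb K_0$. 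Applying $P_{a_0}$ and using that $P_{a_0}$ preserves both $\hilb L$ and $\hilb L^\perp \cap \hilb K_0$, we get $v_1 = P_{a_0} v_1 \in \hilb L \cap \hilb H(a_0) = 0$ and similarly $v_2 = 0$, so $v = 0$, contradicting $\hilb K_0 \in \Sigma$. Hence $\hilb K_0$ is irreducible.

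The only subtle point, and the one I would double-check carefully, is the fact that a closed $\Rot^{(m)}(S^1)$-invariant subspace is itself the Hilbert space direct sum of its intersections with the energy eigenspaces (equivalently, that $P_{a_0}$ preserves such a subspace). This is immediate from compactness of $\Rot^{(m)}(S^1)$ and the Peter--Weyl decomposition applied to the restriction of the subrepresentation, and is exactly what makes the finite-dimensionality of $\hilb H(a_0)$ usable both in the Zorn step and in the final contradiction.
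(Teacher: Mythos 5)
Your proof is correct, and while it hinges on the same key lemma as the thesis's proof---namely that a finite-dimensional isotypic component splits as $\hilb H(a_0) = \bigl(\hilb H(a_0)\cap\hilb K\bigr)\oplus\bigl(\hilb H(a_0)\cap\hilb K^\perp\bigr)$ for every closed invariant subspace $\hilb K$, equivalently that the isotypic projection $P_{a_0}$ preserves closed invariant subspaces---you organise the extremal argument differently. The thesis first replaces $\hilb H$ by the subrepresentation generated by $\hilb H(a_0)$ under $Q$, then picks a subrepresentation whose intersection with $\hilb H(a_0)$ is non-zero of \emph{minimal dimension}, and proves irreducibility by a two-case analysis that uses the ``generated by $\hilb H(a_0)$'' hypothesis twice. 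You instead apply Zorn's lemma to the poset of $G$-subrepresentations meeting $\hilb H(a_0)$ non-trivially, with finite-dimensionality entering through the stabilisation of chains inside $\hilb H(a_0)$, and obtain an \emph{inclusion-minimal} such subrepresentation; minimality then rules out a putative proper subrepresentation and its orthocomplement simultaneously, so no cyclicity reduction is needed. The trade-off is clear: the integer-valued minimisation in the thesis avoids Zorn's lemma altogether, while your Zorn-based minimality makes the final contradiction immediate and dispenses with the reduction step. One small presentational caveat: the fact that $P_{a_0}$ preserves a closed invariant subspace does not follow merely from $P_{a_0}$ commuting with $R$; the justification you flag at the end (Peter--Weyl applied to the restricted $\Rot^{(m)}(S^1)$-representation, or writing $P_{a_0}$ as an average of the operators $R[\Phi_\theta]$ against the relevant character) is the right one, and is exactly what the thesis verifies by a direct elementary computation.
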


Our only use of the above result in this thesis is when $G$ is of the form $N \rtimes \Rot^{(m)}(S^1)$ for some group $N$ carrying an action of $\Rot^{(m)}(S^1)$.

\begin{proof}
(Compare with \parencite[Proposition I.6(c)]{wassermann:opalgsIII}.) Denote the representation and its underlying Hilbert space by $Q$ and $\hilb H$, respectively. Pick a non-zero, finite-dimensional $\Rot^{(m)}(S^1)$-isotypic component $\hilb H(a)$, where $a \in (1/m)\ZZ$. We first show that for any subrepresentation $\hilb K$ of $Q$ there is a decomposition
\begin{equation}
\label{eq:energyeigenspacesplits}
\hilb H(a) = \bigl(\hilb H(a) \cap \hilb K\bigr) \oplus \bigl(\hilb H(a) \cap \hilb K^\perp\bigr).
\end{equation}
One inclusion is obvious. For the reverse inclusion, let $v \in \hilb H(a)$ and write it as $v = w + w^\perp$ with $w \in \hilb K$ and $w^\perp \in \hilb K^\perp$. We want to prove that $w, w^\perp \in \hilb H(a)$. Take $[\Phi_\theta] \in \Rot^{(m)}(S^1)$. Then on the one hand
\[
Q\bigl([\Phi_\theta]\bigr)(v) = Q\bigl([\Phi_\theta]\bigr)(w) + Q\bigl([\Phi_\theta]\bigr)(w^\perp).
\]
On the other hand,
\[
Q\bigl([\Phi_\theta]\bigr)(v) = e^{2\pi i a \theta} v = e^{2\pi i a \theta} w + e^{2\pi i a \theta} w^\perp,
\]
which proves~\eqref{eq:energyeigenspacesplits} by the unicity of decompositions of vectors along a direct sum.

We may without loss of generality now assume that $\hilb H$ is generated by $\hilb H(a)$ under $Q$. Otherwise we could consider the subrepresentation of $\hilb H$ which \emph{is} generated by $\hilb H(a)$ under $Q$ instead because this satisfies the same hypotheses in the \namecref{thm:gprep-findimisotcomp-containsirrep} as $\hilb H$.

Consider the following set
\[
\bigl\{\hilb K \cap \hilb H(a) \bigm\vert \text{$\hilb K$ is a subrepresentation of $Q$}\bigr\} \backslash \{0\}
\]
of (finite-dimensional) $\Rot^{(m)}(S^1)$-subrepresentations of $\hilb H(a)$. It is non-empty because it contains at least $\hilb H(a) = \hilb H \cap \hilb H(a)$. Let $\hilb K_0$ be a subrepresentation of $Q$ such that $\hilb K_0 \cap \hilb H(a)$ has the smallest dimension of all the elements in the above set. We claim that $\hilb K_0$ is irreducible for $Q$.

Let $\hilb K \subseteq \hilb K_0$ be a subrepresentation of $Q$. Suppose that $\hilb K \cap \hilb H(a) = \hilb K_0 \cap \hilb H(a)$. Then $(\hilb K^\perp \cap \hilb K_0) \cap \hilb H(a) = \{0\}$ and so by~\eqref{eq:energyeigenspacesplits} there holds
\[
\hilb H(a) \subseteq (\hilb K^\perp \cap \hilb K_0)^\perp = \hilb K + \hilb K_0^\perp.
\]
Because $\hilb H$ is generated under $Q$ by $\hilb H(a)$ it follows that $\hilb K + \hilb K_0^\perp = \hilb H$ and so $\hilb K^\perp \cap \hilb K_0 = \{0\}$. Hence we have $\hilb K = \hilb K_0$. Suppose on the contrary that $\hilb K \cap \hilb H(a) \subsetneq \hilb K_0 \cap \hilb H(a)$. Then, by the minimality assumption on $\hilb K_0$, we have $\hilb K \cap \hilb H(a) = \{0\}$. By~\eqref{eq:energyeigenspacesplits} this implies that $\hilb H(a)$ is entirely contained in $\hilb K^\perp$. Using again that $\hilb H$ is generated under $Q$ by $\hilb H(a)$ we conclude that $\hilb K^\perp = \hilb H$ and so $\hilb K = \{0\}$.
\end{proof}

Having proved these results, we show that positive energy representations are similar to finite-dimensional ones in the following regard:

\begin{prop}
\label{thm:posrepcontainsirrep}
(See \parencite[Proposition I.6(d)]{wassermann:opalgsIII}.) A non-zero, positive energy representation of a group is the (internal) direct sum of a set of mutually orthogonal, irreducible, positive energy subrepresentations.
\end{prop}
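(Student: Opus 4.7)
The plan is to reduce the claim to the two tools already in hand: Proposition \ref{thm:gprep-complred} (the Zorn-type decomposition result for representations in which every non-zero subrepresentation contains an irreducible one) and Proposition \ref{thm:gprep-findimisotcomp-containsirrep} (existence of an irreducible subrepresentation whenever some non-zero $\Rot^{(m)}(S^1)$-isotypic component is finite-dimensional). Let $Q$ be the given non-zero positive energy representation of $N$ on $\hilb H$. By definition there exist $m \geq 1$ and an extension of $Q$ to a representation of $N \rtimes \Rot^{(m)}(S^1)$ whose restriction $R$ to $\Rot^{(m)}(S^1)$ is of positive energy, i.e.\ $\hilb H(a) = 0$ for $a < 0$ and $\dim \hilb H(a) < \infty$ for $a \geq 0$.

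First I would verify that for every non-zero closed $\Rot^{(m)}(S^1)$-invariant subspace $\hilb K \subseteq \hilb H$ the restriction $R|_{\hilb K}$ is again of positive energy, and in particular that $\hilb K$ has a non-zero finite-dimensional isotypic component. Since $\Rot^{(m)}(S^1)$ is compact, $\hilb K$ decomposes as the Hilbert direct sum of its isotypic components $\hilb K(a)$; each such $\hilb K(a)$ embeds into $\hilb H(a)$ (vectors that transform by $\chi_{-a}$ in $\hilb K$ do so in $\hilb H$), so $\hilb K(a) = 0$ for $a < 0$ and $\dim \hilb K(a) < \infty$ for $a \geq 0$. Because $\hilb K \neq 0$ there is some smallest $a \geq 0$ with $\hilb K(a) \neq 0$, and this $\hilb K(a)$ is finite-dimensional.

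Next, I would apply Proposition \ref{thm:gprep-findimisotcomp-containsirrep} with $G := N \rtimes \Rot^{(m)}(S^1)$ (which contains $\Rot^{(m)}(S^1)$) to conclude that every non-zero $G$-subrepresentation of $\hilb H$ contains an irreducible $G$-subrepresentation. Proposition \ref{thm:gprep-complred} then supplies a decomposition
\[
\hilb H \;=\; \overline{\bigoplus_{i \in I} \hilb H_i}
\]
as an internal Hilbert direct sum of mutually orthogonal, irreducible $G$-subrepresentations $\hilb H_i$.

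Finally, I would translate this decomposition from $G$ back to $N$. Each $\hilb H_i$ is a closed $G$-invariant subspace, and by the first paragraph the restriction of $R$ to $\hilb H_i$ is again of positive energy; so $\hilb H_i$ is, as an $N$-representation, of positive energy. Moreover, as an irreducible $G$-representation whose restriction to $\Rot^{(m)}(S^1)$ is of positive energy, Lemma \ref{thm:irrepofcrossprod-irrepgp} guarantees that $\hilb H_i$ is already irreducible as a representation of $N$. The expected main obstacle is the bookkeeping at the first step: one must be sure that the isotypic decomposition of $\hilb H$ really does induce a compatible decomposition of every closed invariant subspace $\hilb K$ with $\hilb K(a) \subseteq \hilb H(a)$; once that is in place, the rest is a direct assembly of the two cited propositions together with Lemma \ref{thm:irrepofcrossprod-irrepgp}.
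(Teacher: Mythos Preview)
Your proposal is correct and follows essentially the same route as the paper: work with $G = N \rtimes \Rot^{(m)}(S^1)$, use Proposition~\ref{thm:gprep-findimisotcomp-containsirrep} to find an irreducible in every non-zero $G$-subrepresentation, apply Proposition~\ref{thm:gprep-complred} for the orthogonal decomposition, and then invoke Lemma~\ref{thm:irrepofcrossprod-irrepgp} to see that the summands are already irreducible for $N$. Your write-up is in fact a bit more explicit than the paper's in checking that subspaces inherit positive energy and that the summands are positive energy as $N$-representations.
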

\begin{proof}
Denote the group, the representation and its underlying Hilbert space by $N$, $Q$ and $\hilb H$, respectively. We are given that for some $m \geq 1$ there exists a positive energy representation $R$ of $\Rot^{(m)}(S^1)$ on $\hilb H$ which intertwines with $Q$.

A non-zero subrepresentation $\hilb K \subseteq \hilb H$ for $N \rtimes \Rot^{(m)}(S^1)$ is again the direct sum of its energy eigenspaces for the restriction of $R$ to $\hilb K$, each of which is of course again finite-dimensional. Therefore at least one of these eigenspaces must be non-zero and so $\hilb K$ contains an irreducible subrepresentation for $N \rtimes \Rot^{(m)}(S^1)$ by \cref{thm:gprep-findimisotcomp-containsirrep}. The representation of $N \rtimes \Rot^{(m)}(S^1)$ on $\hilb H$ thus satisfies the criterion of \cref{thm:gprep-complred}. The summands in the resulting orthogonal decomposition into irreducible subrepresentations for $N \rtimes \Rot^{(m)}(S^1)$ are then irreducible for $Q$ as well by \cref{thm:irrepofcrossprod-irrepgp}.
\end{proof}

We close this section by introducing notation and a name for the generating function of the dimensions of the energy eigenspaces of a positive energy representation.

\begin{dfn}
\label{dfn:gradchar}
(After \parencite[Section 1.10]{frenkel:monster} and \parencite[Definition 14.1.1]{pressley:loopgps}.)
Let $R$ be a positive energy representation of $\Rot^{(m)}(S^1)$ for some $m \geq 1$ on a Hilbert space $\hilb H$. The formal Laurent series
\[
\ch_R(q) := \sum_{\mathclap{a \in (1/m)\ZZ_{\geq 0}}} \dim \bigl(\hilb H(a)\bigr) q^a
\]
in a formal variable $q$ is called the \defn{(graded) character} of $R$.
\end{dfn}

\section{Heisenberg groups}
\label{app:heisgps}

In this section we will define Heisenberg groups and their Weyl representations. They are relevant to this thesis because the central extensions of uni- and bicoloured torus loop groups constructed in \cref{sec:unicol-centext,sec:bicol-centext} turn out to contain Heisenberg groups. This fact is the key to the representation theory of the former.

\begin{refs}
The material in this section is taken from \parencite[Section 9.5]{pressley:loopgps}, \parencite[Chapter II]{parthasarathy:quantstoch} and \parencite[Chapter I]{ismagilov:reps-infdim}. More specific references will be given in the text.
\end{refs}

\begin{dfn}
\label{dfn:heisgp}
Let $V$ be a topological real vector space carrying a continuous, non-degenerate, bilinear skew form $S\colon V \times V \to \RR$. The \defn{Heisenberg group} $\centV$ associated to the pair $(V, S)$ is the $\phasegp$-central extension of the underlying topological abelian group of $V$ defined by the $2$-cocycle
\[
c\colon V \times V \to \phasegp, \qquad c(\xi, \eta) := e^{-2\pi i S(\xi, \eta)}
\]
on $V$.
\end{dfn}

Spelling this out, $\centV$ has as underlying topological space $V \times \phasegp$ and its continuous multiplication and inverse are given by
\[
(\xi, z) \cdot (\eta, w) := \bigl(\xi + \eta, zw \cdot c(\xi, \eta)\bigr), \qquad (\xi, z)^{-1} = \bigl(-\xi, z^{-1} \cdot c(\xi, -\xi)^{-1}\bigr),
\]
respectively, for $\xi, \eta \in V$ and $z, w \in \phasegp$.

Note that $c$ is indeed a normalised $2$-cocycle because $S$ is bi-additive. The centre of $\centV$ consists of (the image of) $\phasegp$ alone. If namely an element $(\xi, z) \in \centV$ commutes with all $(\eta, w) \in \centV$, then $c(\xi, \eta)^2 = 1$ using the skew-symmetry of $S$ and so $2S(\xi, \eta) \in \ZZ$ for all $\eta \in V$. The bilinearity of $S$ then implies that $S(\xi, \eta) = 0$ and therefore $\xi = 0$ because $S$ is non-degenerate. The non-degeneracy and skew-symmetry of $S$ hence ensure that the centre of $\centV$ is as small as possible.

Of course, we did not need the structure of a scalar multiplication on $V$ in order to define the Heisenberg group. The vector space structure is needed to get a hold on the representation theory---a topic we turn to now. We will first construct the underlying Hilbert spaces of the representations we will be studying.

\subsection{Bosonic Fock spaces}
\label{subsec:focksp}

Let $V$ be a real vector space carrying a \defn{complex structure} $J\colon V \to V$, meaning a linear endomorphism (which is automatically an automorphism) such that $J^2 = -\id_V$. Let furthermore $\langle \cdot, \cdot \rangle_J$ be a Hermitian inner product on the corresponding complex vector space $V_J$. Then we associate a new, larger complex Hilbert space to the complex pre-Hilbert space $(V_J, \langle \cdot, \cdot \rangle_J)$ as follows.

First, form the \defn{symmetric algebra} $\Sym^*(V_J)$ on $V_J$. It is an algebraic direct sum
\[
\Sym^*(V_J) := \bigoplus_{k=0}^\infty \Sym^k(V_J),
\]
where the \defn{$k$-th symmetric power} $\Sym^k(V_J)$ of $V_J$ is the complex linear span of monomials in the vectors of $V_J$ of degree $k$, which are considered as commuting variables. By definition, $\Sym^0(V_J) := \CC$ and the $0$-th power of any vector in $V_J$ is $1 \in \Sym^0(V_J)$. We will ignore the algebra structure and consider $\Sym^*(V_J)$ as a complex vector space. A general vector of it is a sum
\begin{equation}
\label{eq:fockspvector}
\sum_{k=0}^\infty v_k,
\end{equation}
where $v_k \in \Sym^k(V_J)$ and only finitely many summands are non-zero.

This vector space $\Sym^*(V_J)$ inherits a Hermitian inner product from $V_J$. If $\xi_1 \xi_2 \cdots \xi_k$ and $\xi_1' \xi_2' \cdots \xi_k'$ are namely two monomials of the same degree $k$ then we set their inner product to be
\begin{equation}
\label{eq:fockspinnerprod}
\langle \xi_1 \xi_2 \cdots \xi_k, \xi_1' \xi_2' \cdots \xi_k' \rangle := \sum_{\mathclap{\sigma \in S_k}} \langle \xi_1, \xi_{\sigma(1)}' \rangle_J \cdots \langle \xi_k, \xi_{\sigma(k)}' \rangle_J,
\end{equation}
where $S_k$ is the symmetric group on $k$ symbols. Two monomials of different degrees are set to be orthogonal. This inner product on monomials is finally extended to general vectors of $\Sym^*(V_J)$ by (Hermitian) linearity.

The positive definiteness can be seen by picking an orthonormal basis for $V_J$. The values of the inner product are then determined on the monomials in these basis elements. If in the expression~\eqref{eq:fockspinnerprod} the vectors $\xi_i$ and $\xi_i'$ are such basis elements, this value is only non-zero when there is a permutation $\sigma$ such that $\xi_{\sigma(i)}' = \xi_i$ for all $i$ and in that case it is clearly non-negative because $\langle \cdot, \cdot \rangle_J$ is positive definite.

We conclude that $\Sym^*(V_J)$ in this way becomes a complex pre-Hilbert space.

\begin{dfn}
The \defn{bosonic Fock space} $\hilb S(V_J)$ associated to the triple
\[
(V, J, \langle \cdot, \cdot \rangle_J)
\]
is the Hilbert space completion of $\Sym^*(V_J)$.
\end{dfn}

A general vector of $\hilb S(V_J)$ is a series of the form~\eqref{eq:fockspvector} where possibly infinitely many summands are non-zero, but
\[
\sum_{k=0}^\infty \norm{v_k}^2 < \infty.
\]
The inner product of two such vectors is given summand-wise.

In order to later define unitary operators on $\hilb S(V_J)$ we single out a special class of its vectors.

\begin{dfn}
Let $\xi \in V_J$. The \defn{coherent vector} associated to $\xi$ is the formal power series
\[
e^\xi := \sum_{k=0}^\infty \frac{\xi^k}{k!}
\]
seen as an element of the infinite product $\prod_{k=0}^\infty \Sym^k(V_J)$.
\end{dfn}

\begin{prop}
\label{thm:cohvecs-props}
Coherent vectors satisfy the following properties:
\begin{enumerate}
\item\label{thmitm:cohvecs-props-conv} $e^\xi \in \hilb S(V_J)$ for all $\xi \in V_J$, that is, the partial sums defining $e^\xi$ converge in $\hilb S(V_J)$,
\item\label{thmitm:cohvecs-props-innerprod} if also $\eta \in V_J$ then $\langle e^\xi, e^\eta \rangle = e^{\langle \xi, \eta \rangle_J}$,
\item\label{thmitm:cohvecs-props-lindep} every finite set of coherent vectors is linearly independent,
\item\label{thmitm:cohvecs-props-dense} the space of finite linear combinations of coherent vectors lies densely in $\hilb S(V_J)$,
\item\label{thmitm:cohvecs-props-cont} the function $V_J \to \hilb S(V_J)$ given by $\xi \mapsto e^\xi$ is continuous with respect to the norm topology on $V_J$ induced by $\langle \cdot, \cdot \rangle_J$.
\end{enumerate}
\end{prop}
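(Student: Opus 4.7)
The plan is to treat the five statements in the order (ii)$\to$(i), then (v), then (iii), then (iv), since the inner product formula is the computational heart from which most of the others flow.

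First I would establish (ii), which then gives (i) for free. Starting from the definition~\eqref{eq:fockspinnerprod} of the inner product on symmetric powers, an immediate computation shows $\langle \xi^k, \eta^k \rangle = k!\,\langle \xi, \eta \rangle_J^k$, since each of the $k!$ permutations in $S_k$ contributes the same factor $\langle \xi, \eta \rangle_J^k$. Consequently $\langle \xi^k/k!, \eta^k/k! \rangle = \langle \xi, \eta \rangle_J^k / k!$. The orthogonality of $\Sym^j(V_J)$ and $\Sym^k(V_J)$ for $j\neq k$ means the partial sums of $\langle e^\xi, e^\eta \rangle$ are exactly the partial sums of $\sum_k \langle \xi, \eta \rangle_J^k / k!$, which converges absolutely to $e^{\langle \xi, \eta\rangle_J}$. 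Specialising $\eta := \xi$ shows $\sum_k \|\xi^k/k!\|^2 = e^{\|\xi\|_J^2} < \infty$, so the partial sums of $e^\xi$ form a Cauchy sequence in $\hilb S(V_J)$, proving (i), and the inner product formula for arbitrary $\xi,\eta$ then follows by continuity of the inner product.

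Next, (v) is immediate from (ii) by expanding
\[
\|e^\xi - e^\eta\|^2 = e^{\langle \xi, \xi\rangle_J} - e^{\langle \xi, \eta\rangle_J} - e^{\langle \eta, \xi\rangle_J} + e^{\langle \eta, \eta\rangle_J},
\]
together with the continuity of $\langle \cdot, \cdot \rangle_J$ on $V_J \times V_J$ and of the scalar exponential.

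For (iii), suppose $\sum_{i=1}^n c_i e^{\xi_i} = 0$ for pairwise distinct $\xi_i$. Taking the inner product with $e^{t\eta}$ for $\eta \in V_J$ and $t \in \CC$ and using (ii) gives the identity $\sum_i c_i\, e^{t\,\overline{\langle \eta, \xi_i \rangle_J}} = 0$ in $t$. By non-degeneracy of $\langle \cdot, \cdot \rangle_J$ one can choose $\eta$ so that the scalars $\langle \eta, \xi_i \rangle_J$ are pairwise distinct (the set of bad $\eta$ is a finite union of proper real-linear subspaces, hence proper), and then the functions $t \mapsto e^{t a_i}$ with distinct $a_i \in \CC$ are linearly independent, forcing every $c_i = 0$.

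The main point requiring care is (iv), the density. Here I would argue that the closed linear span $\hilb C \subseteq \hilb S(V_J)$ of all coherent vectors contains every monomial, and is therefore all of $\hilb S(V_J)$ since the algebraic symmetric algebra is dense by construction. For a fixed $\xi \in V_J$ the map $t \mapsto e^{t\xi}$ is $\hilb S(V_J)$-valued analytic in $t \in \CC$ (its Taylor series is exactly $\sum_k (t\xi)^k/k!$, and convergence in operator norm on compacta follows from the estimate $\|e^{t\xi}\|^2 = e^{|t|^2 \|\xi\|_J^2}$ used in (i)). Consequently each coefficient $\xi^k/k!$ lies in $\hilb C$, so $\hilb C$ contains all ``pure powers'' $\xi^k$. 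The product $\xi_1 \xi_2 \cdots \xi_k \in \Sym^k(V_J)$ is recovered from such pure powers by the polarisation identity
\[
\xi_1 \xi_2 \cdots \xi_k = \frac{1}{k!} \sum_{\emptyset \neq I \subseteq \{1,\dots,k\}} (-1)^{k-|I|} \biggl(\sum_{i \in I} \xi_i\biggr)^k,
\]
so every monomial lies in $\hilb C$, completing the proof.
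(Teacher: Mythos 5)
Your proposal is correct and follows essentially the same route as the paper: the computation $\langle \xi^k, \eta^k\rangle = k!\,\langle\xi,\eta\rangle_J^k$ yielding (i) and (ii), a test vector separating the $\xi_i$ combined with linear independence of exponentials for (iii), extraction of the coefficients of $t \mapsto e^{t\xi}$ to place the pure powers $\xi^k$ in the closed span for (iv), and the expansion of $\norm{e^\xi - e^\eta}^2$ for (v). The only divergence is cosmetic: in (iv) you recover general monomials from pure powers via the algebraic polarisation identity, whereas the paper uses mixed partial derivatives of $(\theta_1\xi_1+\cdots+\theta_k\xi_k)^k$ at zero, which amounts to the same thing.
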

\begin{proof}
If $n \geq 0$, then we calculate that, according to the definition of the inner product~\eqref{eq:fockspinnerprod},
\[
\biggl\langle \sum_{k=0}^n \frac{\xi^k}{k!}, \sum_{l=0}^n \frac{\eta^l}{l!} \biggr\rangle = \sum_{k,l=0}^n \frac{\langle \xi^k, \eta^l \rangle}{k! \cdot l!} 
	= \sum_{k=0}^n \frac{k! \cdot \langle \xi, \eta \rangle_J^k}{(k!)^2} 
	= \sum_{k=0}^n \frac{\langle \xi, \eta \rangle_J^k}{k!}.
\]
This proves both~\ref{thmitm:cohvecs-props-conv} and~\ref{thmitm:cohvecs-props-innerprod}.

\ref{thmitm:cohvecs-props-lindep}: Let $e^{\xi_1}, \ldots, e^{\xi_n}$ be a finite set of coherent vectors and suppose that $\sum_{k=1}^n \alpha_k e^{\xi_k} = 0$ for some scalars $\alpha_k \in \CC$. Then define for each (unordered) pair $\{k, l\} \subseteq \{1, \ldots, n\}$ of indices with $k \neq l$ the following subset of vectors of $V_J$:
\[
E_{k,l} := \bigl\{\xi \in V_J \bigm\vert \langle \xi, \xi_k \rangle_J \neq \langle \xi, \xi_k \rangle_J\bigr\}.
\]
One can think of $E_{k,l}$ as the complement of the hyperplane orthogonal to the line $\CC\cdot(\xi_k - \xi_l)$. It is open and lies densely in $V_J$. Since a finite intersection of open and dense subsets is again dense, the intersection
\[
\bigcap_{\mathclap{\substack{1 \leq k, l \leq n\\ k\neq l}}} E_{k,l}
\]
is in particular not empty. In other words: there exists a vector $\xi \in V_J$ such that the $n$ scalars $z_k := \langle \xi, \xi_k \rangle_J$ are all distinct. It is well known that this implies that the $n$ functions $\CC \to \CC$ given by $z \mapsto e^{z \cdot z_k}$ are linearly independent. Because for all $z \in \CC$ there holds
\[
\sum_{k=1}^n \overline{\alpha_k} e^{z \cdot z_k} = \sum_{k=1}^n \overline{\alpha_k} \langle e^{z \cdot \xi}, e^{\xi_k} \rangle = \Bigl\langle e^{z \cdot \xi}, \sum_{k=1}^n \alpha_k e^{\xi_k} \Bigr\rangle = 0
\]
by \ref{thmitm:cohvecs-props-innerprod}, we conclude that $\alpha_k = 0$ for all $k$.

\ref{thmitm:cohvecs-props-dense}: Set $\hilb H \subseteq \hilb S(V_J)$ to be the closure of the span of all the exponential vectors. Let $\xi \in V_J$ and consider the function $\RR \to \hilb H$ given by $\theta \mapsto e^{\theta \cdot \xi}$. It is smooth because the series $e^{\theta \cdot \xi}$ is absolutely convergent. Its $n$-th derivative is valued in $\hilb H$ also and is given by term-wise differentation:
\[
\theta \mapsto \sum_{k=0}^\infty \xi^k \cdot \frac{(\theta \xi)^k}{k!}.
\]
In particular, the monomial
\begin{equation}
\label{eq:cohvecs-dense-power}
\xi^k = \Bigl.\frac{\dd^k}{\dd\theta^k} e^{\theta \cdot \xi}\Bigr\rvert_{\theta=0}
\end{equation}
lies in $\hilb H$ for all $k \geq 0$. Now note that by taking partial derivatives a more general monomial $\xi_1 \cdots \xi_k$ can be expressed in terms of a power of a single vector:
\begin{equation}
\label{eq:cohvecs-dense-genmon}
k! \cdot \xi_1 \cdots \xi_k = \Bigl.\frac{\partial^k}{\partial \theta_1 \cdots \partial \theta_k} (\theta_1 \xi_1 + \cdots + \theta_k \xi_k)^k\Bigr\rvert_{\theta_1 = \cdots = \theta_k = 0}.
\end{equation}
Having just learned that $(\sum_{i=1}^k \theta_i \xi_i)^k \in \hilb H$, we see that $\xi_1 \cdots \xi_k \in \hilb H$ as well. Hence, $\hilb H$ equals $\hilb S(V_J)$.

\ref{thmitm:cohvecs-props-cont}: If $\xi, \eta \in V_J$ then, using \ref{thmitm:cohvecs-props-innerprod},
\[
\norm{e^\xi - e^\eta}^2 = e^{\norm \xi_J^2} + e^{\norm \eta_J^2} - 2 \cdot \Re e^{\langle \xi, \eta\rangle_J}.
\]
Together with the continuity of the inner product $\langle \cdot, \cdot \rangle_J$ this proves what was asked.
\end{proof}

The properties~\ref{thmitm:cohvecs-props-conv}--\ref{thmitm:cohvecs-props-dense} in this \namecref{thm:cohvecs-props} allow one to uniquely specify a unitary operator on $\hilb S(V_J)$ by prescribing its values on the coherent vectors and checking whether it then preserves inner products.

\subsection{The Weyl representations of a Heisenberg group}

Given a real vector space, we saw in \cref{dfn:heisgp} how to associate a Heisenberg group to a skew form, and in \cref{subsec:focksp} how to forge a bosonic Fock space from a complex structure and a Hermitian inner product. In this subsection we will build a representation of the former group on the latter Hilbert space, but in order to do so the skew form and the complex structure will need to be compatible in a certain sense and, moreover, the inner product should be a specific one derived from these two pieces of structure.

\begin{prop}
\label{thm:tamedcompstruct}
Let $V$ be a real vector space carrying a non-degenerate, bilinear skew form $S$ and a complex structure $J\colon V \to V$. Write $V_J$ for the associated complex vector space. Then the following compatibility requirements between $S$ and $J$ are equivalent:
\begin{enumerate}
\item\label{thmitm:tamedcompstruct-compat} $S$ is $J$-invariant, that is, $S(J\xi, J\eta) = S(\xi, \eta)$ for all $\xi, \eta \in V$, and $J$ \defn{tames} $S$, meaning that $S(\xi, J\xi) > 0$ for all non-zero $\xi \in V$,
\item\label{thmitm:tamedcompstruct-realHilb} the bilinear form
\[
g_J\colon V \times V \to \RR, \qquad g_J(\xi,\eta) := S(\xi, J\eta)
\]
is $J$-invariant and makes $V$ a real pre-Hilbert space,
\item\label{thmitm:tamedcompstruct-compHilb} the form
\[
\langle \cdot, \cdot \rangle_J\colon V_J \times V_J \to \CC, \qquad \langle \xi, \eta \rangle_J := 2 \pi \bigl(S(\xi, J\eta) - i S(\xi, \eta)\bigr),
\]
makes $V_J$ a complex pre-Hilbert space, that is, $\langle \cdot, \cdot \rangle_J$ is a Hermitian inner product.\footnote{Recall our convention that a Hermitian inner product is complex linear in its \emph{first} variable.}\footnote{The factor $2\pi$ in the definition of $\langle \cdot, \cdot \rangle_J$ is of course irrelevant for the statement of this \namecref{thm:tamedcompstruct} on its own. It serves as a normalisation to ensure compatibility with our \cref{dfn:heisgp} of a Heisenberg group.}
\end{enumerate}
\end{prop}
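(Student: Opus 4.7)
The strategy is to prove the cycle of implications (i) $\Rightarrow$ (ii) $\Rightarrow$ (iii) $\Rightarrow$ (i). Throughout, I will repeatedly use the identity $J^2 = -\id_V$ together with the skewness of $S$ to translate between identities involving $S$ and identities involving $g_J$.

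First I would show (i) $\Rightarrow$ (ii). Symmetry of $g_J$ follows from a single computation: $g_J(\xi,\eta) = S(\xi, J\eta) = S(J\xi, J^2\eta) = -S(J\xi,\eta) = S(\eta, J\xi) = g_J(\eta,\xi)$, where the second equality uses $J$-invariance of $S$ and the fourth uses skewness. Positive definiteness is immediate from the taming hypothesis $g_J(\xi,\xi) = S(\xi, J\xi) > 0$ for $\xi \neq 0$. For $J$-invariance of $g_J$, apply $J$-invariance of $S$ once: $g_J(J\xi, J\eta) = S(J\xi, J^2\eta) = S(\xi, J\eta) = g_J(\xi, \eta)$ (using $S(J\xi, J \cdot J\eta) = S(\xi, J\eta)$ directly).

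Next I would show (ii) $\Rightarrow$ (iii). Positive definiteness of $\langle \cdot,\cdot\rangle_J$ is immediate from positivity of $g_J$ together with $S(\xi,\xi)=0$. Conjugate symmetry reduces to showing that the real part of $\langle \xi,\eta\rangle_J$ is symmetric in $\xi,\eta$ and the imaginary part is skew; these follow from symmetry of $g_J$ and skewness of $S$ respectively. The delicate part is complex linearity in the first slot, which amounts to the identity $\langle J\xi, \eta\rangle_J = i\langle \xi,\eta\rangle_J$. Separating real and imaginary parts, this requires both $S(J\xi, J\eta) = S(\xi,\eta)$ and $-S(J\xi,\eta) = S(\xi, J\eta)$. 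The second of these follows directly from $J$-invariance of $g_J$: $g_J(J\xi, J\eta) = g_J(\xi,\eta)$ unwinds to $-S(J\xi,\eta) = S(\xi, J\eta)$. Substituting $\eta \mapsto J\eta$ in this identity and using $J^2 = -\id$ then yields $S(J\xi, J\eta) = S(\xi,\eta)$.

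Finally I would close the cycle with (iii) $\Rightarrow$ (i). The taming property is immediate: $S(\xi, J\xi) = (2\pi)^{-1}\langle \xi,\xi\rangle_J > 0$ since the imaginary part $S(\xi,\xi)$ vanishes by skewness. For $J$-invariance of $S$, start from complex linearity $\langle J\xi, \eta\rangle_J = i\langle \xi,\eta\rangle_J$ and compare imaginary parts to get $-S(J\xi,\eta) = S(\xi, J\eta)$; substituting $\eta \mapsto J\eta$ and using $J^2=-\id$ yields $S(J\xi, J\eta) = S(\xi, \eta)$.

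The main obstacle will be bookkeeping: each direction involves the same few manipulations (swapping $J$-factors via $J^2 = -\id$, using skewness to move arguments), and one has to be careful not to invoke a form of compatibility before it has been established. Once the symmetric-versus-skew decomposition of $\langle\cdot,\cdot\rangle_J$ into $2\pi g_J$ and $-2\pi i S$ is made explicit, all three conditions become different packagings of the same pair of facts, and the equivalences are essentially a matter of repackaging.
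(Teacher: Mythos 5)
Your proof is correct; the paper states this proposition without any proof, treating it as a routine verification, and your cyclic argument (i) $\Rightarrow$ (ii) $\Rightarrow$ (iii) $\Rightarrow$ (i) is exactly the standard check that fills that gap. Each step --- symmetry and positivity of $g_J$ from taming and $J$-invariance of $S$, the reduction of complex linearity to $\langle J\xi, \eta\rangle_J = i\langle \xi, \eta\rangle_J$ with its real/imaginary-part bookkeeping, and the recovery of $S(J\xi, J\eta) = S(\xi,\eta)$ by the substitution $\eta \mapsto J\eta$ --- is carried out correctly.
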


\begin{rmk}
If a pair $(S, J)$ satisfies the equivalent properties of \cref{thm:tamedcompstruct}, then so does the pair $(-S, -J)$. One gets a Hermitian inner product on the conjugate complex pre-Hilbert space $V_{-J}$.
\end{rmk}

Let $V$ be a real vector space carrying a non-degenerate, bilinear skew form $S$ and a complex structure $J\colon V \to V$. Write $V_J$ for the associated complex vector space and suppose that $S$ and $J$ satisfy the equivalent compatibility requirements of \cref{thm:tamedcompstruct}. Then $S$ is continuous with respect to the norm topology on $V$ induced by the Hermitian inner product $\langle \cdot, \cdot \rangle_J$ because the latter is continuous and $S$ is up to a scalar its imaginary part:
\[
S(\xi,\eta)
	= \frac{i}{4\pi} \bigl(\langle \xi,\eta \rangle_J - \langle \eta,\xi \rangle_J\bigr) 
	= -\frac{1}{2\pi} \cdot \Im \langle \xi,\eta \rangle_J.
\]
We may therefore construct a Heisenberg group $\centV$ from $S$.

\begin{thm}
Let $\hilb S(V_J)$ be the bosonic Fock space associated to the Hermitian inner product $\langle \cdot, \cdot \rangle_J$ in \cref{thm:tamedcompstruct}\ref{thmitm:tamedcompstruct-compHilb}. Define for $(\xi, z) \in \centV$ and $\kappa \in V_J$ a vector\footnote{On the right hand side of the following equation we consider $\xi$ as a vector of $V_J$.}
\begin{equation}
\label{eq:weylrep}
W_J(\xi, z)(e^\kappa) := z \cdot e^{-\frac12 \langle \xi, \xi \rangle_J - \langle \kappa, \xi \rangle_J} \cdot e^{\kappa + \xi} \in \hilb S(V_J).
\end{equation}
Then
\begin{enumerate}
\item this is an action of the group $\centV$ on the space of finite linear combinations of the coherent vectors in $\hilb S(V_J)$,
\item which preserves inner products.
\end{enumerate}
Therefore, \eqref{eq:weylrep} extends to a unitary representation $W_J$ of $\centV$ on $\hilb S(V_J)$. It is strongly continuous.
\end{thm}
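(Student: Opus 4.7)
The plan is to verify items (1), (2) in order, then derive the extension to a unitary representation from the density of coherent vectors, and finally address strong continuity separately.

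First I would verify the group law. Applying~\eqref{eq:weylrep} twice gives
\[
W_J(\xi, z) W_J(\eta, w)(e^\kappa) = zw \cdot e^{-\frac12 \langle \eta, \eta \rangle_J - \langle \kappa, \eta \rangle_J - \frac12 \langle \xi, \xi \rangle_J - \langle \kappa + \eta, \xi \rangle_J} \cdot e^{\kappa + \eta + \xi},
\]
while applying $W_J$ to the product $(\xi,z)(\eta,w) = (\xi+\eta, zw\cdot e^{-2\pi i S(\xi,\eta)})$ yields a coherent vector $e^{\kappa+\xi+\eta}$ multiplied by $zw$ and an exponential whose exponent expands, via $\langle \xi+\eta, \xi+\eta\rangle_J = \langle\xi,\xi\rangle_J + \langle\xi,\eta\rangle_J + \langle\eta,\xi\rangle_J + \langle\eta,\eta\rangle_J$, into a similar-looking form. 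Matching the two exponents reduces to the single identity
\[
-\langle \eta, \xi\rangle_J = -2\pi i S(\xi, \eta) - \tfrac12 \langle \xi, \eta\rangle_J - \tfrac12 \langle \eta, \xi\rangle_J,
\]
equivalently $\langle \xi, \eta\rangle_J - \langle \eta, \xi\rangle_J = -4\pi i S(\xi,\eta)$. This follows immediately from the definition $\langle \xi, \eta\rangle_J := 2\pi(S(\xi, J\eta) - i S(\xi, \eta))$ together with the $J$-invariance and symmetry of $g_J(\xi,\eta) = S(\xi, J\eta)$ established in \cref{thm:tamedcompstruct}\ref{thmitm:tamedcompstruct-realHilb}. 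So each $W_J(\xi,z)$ maps coherent vectors into finite (in fact scalar) multiples of coherent vectors, and this assignment is a group homomorphism from $\centV$ into the linear automorphisms of the span of coherent vectors.

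Next I would check preservation of inner products on the coherent vector basis. Using \cref{thm:cohvecs-props}\ref{thmitm:cohvecs-props-innerprod} and the fact that $\langle \xi, \xi\rangle_J$ is real,
\[
\langle W_J(\xi, z) e^\kappa, W_J(\xi, z) e^{\kappa'}\rangle = e^{-\langle \xi, \xi\rangle_J - \langle \kappa, \xi\rangle_J - \langle \xi, \kappa'\rangle_J} \cdot e^{\langle \kappa+\xi, \kappa'+\xi\rangle_J},
\]
and expanding the second exponent cancels all $\xi$-terms, leaving $e^{\langle \kappa, \kappa'\rangle_J} = \langle e^\kappa, e^{\kappa'}\rangle$. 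By sesquilinearity, $W_J(\xi, z)$ is therefore an isometry on the span of coherent vectors, which by \cref{thm:cohvecs-props}\ref{thmitm:cohvecs-props-dense} lies densely in $\hilb S(V_J)$. Hence $W_J(\xi,z)$ extends uniquely to an isometric operator on $\hilb S(V_J)$. Because the group law already gave $W_J(\xi, z) W_J(-\xi, \overline{z}\cdot c(\xi,-\xi)^{-1}) = W_J(0,1) = \id$ on coherent vectors, the extended operator is surjective, hence unitary, and the group homomorphism property carries over to all of $\hilb S(V_J)$ by density.

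Finally I would establish strong continuity at the identity $(0, 1) \in \centV$ (which is sufficient since $W_J$ is a homomorphism into the unitary group). The operators $W_J(\xi,z)$ all have operator norm $1$, so by an $\varepsilon/3$-argument it is enough to check continuity of $(\xi, z) \mapsto W_J(\xi, z)(v)$ at $(0,1)$ for $v$ ranging over a dense subset; by \cref{thm:cohvecs-props}\ref{thmitm:cohvecs-props-dense}, the coherent vectors suffice. For $v = e^\kappa$, the explicit formula~\eqref{eq:weylrep} exhibits $W_J(\xi,z)(e^\kappa)$ as a continuous function of $\xi$ and $z$ into $\hilb S(V_J)$: the scalar prefactor is continuous in $\xi$ and $z$ because $\langle \cdot, \cdot \rangle_J$ is continuous and the exponential is entire, while $\xi \mapsto e^{\kappa + \xi}$ is continuous by \cref{thm:cohvecs-props}\ref{thmitm:cohvecs-props-cont}. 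Specialising to $(\xi, z) = (0, 1)$ recovers $e^\kappa$, which gives the required continuity.

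The main obstacle is the group-law calculation: tracking the real versus imaginary parts of $\langle \cdot, \cdot \rangle_J$ carefully so as to recognise the cocycle $c(\xi,\eta) = e^{-2\pi i S(\xi,\eta)}$ as precisely the discrepancy between the naive composition and the actual symmetrisation of the exponents. The identity $\langle \xi, \eta\rangle_J - \langle \eta, \xi\rangle_J = -4\pi i S(\xi,\eta)$, read off from the very definition of $\langle \cdot, \cdot \rangle_J$, is the lynchpin that makes the Weyl formula a homomorphism; all remaining steps are formal consequences of the properties of coherent vectors collected in \cref{thm:cohvecs-props}.
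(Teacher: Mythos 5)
Your proposal is correct and follows essentially the same route as the paper: the group law reduces to the identity $\tfrac12(\langle \xi,\eta\rangle_J - \langle\eta,\xi\rangle_J) = -2\pi i\, S(\xi,\eta)$ read off from the definition of $\langle\cdot,\cdot\rangle_J$, inner-product preservation is the same cancellation using \cref{thm:cohvecs-props}\ref{thmitm:cohvecs-props-innerprod}, the extension uses density of (combinations of) coherent vectors, and strong continuity is checked on coherent vectors via the continuity of $\langle\cdot,\cdot\rangle_J$ and \cref{thm:cohvecs-props}\ref{thmitm:cohvecs-props-cont}. Your version merely spells out a few points the paper leaves implicit (the $\varepsilon/3$ reduction to a dense set and the unitarity via the inverse element), which is fine.
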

\begin{proof}
After filling in the definitions of the multiplication in $\centV$ and of $W_J$, checking the equation
\[
W_J\bigl((\xi, z) \cdot (\eta, w)\bigr)(e^\kappa) = W_J(\xi, z) \bigl(W_J(\eta, w)(e^\kappa)\bigr)
\]
for all $(\eta, w) \in \centV$ quickly comes down to verifying whether
\[
\frac12 \langle \xi, \eta \rangle_J - \frac12 \langle \eta, \xi \rangle_J = -2\pi i \cdot S(\xi, \eta).
\]
This follows from the definition in \cref{thm:tamedcompstruct}\ref{thmitm:tamedcompstruct-compHilb} of $\langle \cdot, \cdot \rangle_J$ in terms of $S$ and $J$:
\[
\frac12 \bigl(\langle \xi, \eta \rangle_J - \langle \eta, \xi \rangle_J\bigr) = \frac12 \bigl(\langle \xi, \eta \rangle_J - \overline{\langle \xi, \eta \rangle}_J\bigr) \\
	= i \cdot \Im \langle \xi, \eta \rangle_J \\
	= - 2 \pi i \cdot S(\xi, \eta).
\]

Proving that inner products are preserved is a matter of writing out
\begin{align*}
\bigl\langle W_J(\xi, z)(e^\eta), W_J(\xi, z)(e^\kappa) \bigr\rangle &=
	\abs z e^{-\frac12 \langle \xi, \xi \rangle_J - \langle \eta, \xi \rangle_J} \cdot \overline{e^{-\frac12 \langle \xi, \xi \rangle_J - \langle \kappa, \xi \rangle_J}} \cdot \langle e^{\eta + \xi}, e^{\kappa + \xi} \rangle \\
	&= e^{- \langle \xi, \xi \rangle_J} e^{- \langle \eta, \xi \rangle_J} e^{- \langle \xi, \kappa \rangle_J} \cdot e^{\langle \eta + \xi, \kappa + \xi \rangle_J} \\
	&= e^{\langle \eta, \kappa \rangle_J} = \langle e^\eta, e^\kappa \rangle,
\end{align*}
where we used \cref{thm:cohvecs-props}\ref{thmitm:cohvecs-props-innerprod} for the last equality.

That the action~\eqref{eq:weylrep} first extends from coherent vectors to finite linear combinations of those is implied by \cref{thm:cohvecs-props}\ref{thmitm:cohvecs-props-lindep}, and that it then extends further to all of $\hilb S(V_J)$ follows from \cref{thm:cohvecs-props}\ref{thmitm:cohvecs-props-dense}.

To prove strong continuity of this representation we show that for all vectors $v \in \hilb S(V_J)$ the map $\centV \to \hilb S(V_J)$ given by $(\xi, z) \mapsto W_J(\xi, z)(v)$ is continuous. It is sufficient to prove this when $v$ is an coherent vector $e^\kappa$ and in that case this can be seen from the continuity of the inner product $\langle \cdot, \cdot \rangle_J$ together with \cref{thm:cohvecs-props}\ref{thmitm:cohvecs-props-cont}, which tells us that the composite function
\[
\centV \twoheadrightarrow V \xrightarrow{\sim} V_J \to \hilb S(V_J), \qquad (\xi, z) \mapsto \xi \mapsto \kappa + \xi \mapsto e^{\kappa + \xi}
\]
is continuous.
\end{proof}

The factor $e^{-\frac12 \langle \xi, \xi \rangle_J - \langle \kappa, \xi \rangle_J}$ in~\eqref{eq:weylrep} can be understood as a correction to make the action unitary. It has a more conceptual meaning also, though, namely that of a certain \defn{Radon--Nikodym derivative}. This is explained in, for example, \parencite[Chapter 1, §1.3]{ismagilov:reps-infdim}.

\begin{dfn}
The representation~\eqref{eq:weylrep} of the Heisenberg group $\centV$ on the bosonic Fock space $\hilb S(V_J)$ is called the \defn{Weyl representation} (associated to the complex structure $J$ on $V$).
\end{dfn}

\begin{thm}
The Weyl representation $W_J$ is irreducible.
\end{thm}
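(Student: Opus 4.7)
The plan is to apply Schur's lemma by showing that every bounded operator $A$ on $\hilb S(V_J)$ commuting with $W_J(\centV)$ is a scalar multiple of the identity. The central role will be played by the vacuum $\Omega := e^0 = 1 \in \Sym^0(V_J)$, for which~\eqref{eq:weylrep} simplifies to $W_J(\xi, 1)\Omega = e^{-\frac12\langle \xi, \xi\rangle_J} e^\xi$. By \cref{thm:cohvecs-props}\ref{thmitm:cohvecs-props-dense} the coherent vectors are total in $\hilb S(V_J)$, so this formula exhibits $\Omega$ as a cyclic vector for $W_J$. Setting $v := A\Omega$, the intertwining relation $A W_J(\xi, 1)\Omega = W_J(\xi, 1) v$ together with cyclicity will reduce scalarity of $A$ to showing that $v$ is a scalar multiple of $\Omega$.

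To control $v$, I will compute all the matrix elements of $A$ on coherent vectors. Rewriting $e^{\xi_j} = e^{\frac12\langle \xi_j, \xi_j\rangle_J}\,W_J(\xi_j, 1)\Omega$, commuting $A$ through $W_J(\xi_1, 1)$, and simplifying $W_J(\xi_2, 1)^* W_J(\xi_1, 1)$ (where the adjoint equals $W_J(-\xi_2, 1)$, since bilinearity of $S$ forces $c(\xi_2, -\xi_2) = 1$) via the Weyl composition law, one arrives at
\[
\langle A e^{\xi_1}, e^{\xi_2}\rangle = e^{\langle \xi_1, \xi_2\rangle_J}\, g(\xi_2 - \xi_1),
\]
where $g(\eta) := \langle v, e^\eta\rangle$. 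The exponent collapses to $\langle \xi_1, \xi_2\rangle_J$ after one applies the identity $\tfrac12\|\xi_1\|_J^2 + \tfrac12\|\xi_2\|_J^2 - \tfrac12\|\xi_2 - \xi_1\|_J^2 = \Re\langle \xi_1, \xi_2\rangle_J$ together with the defining formula $\langle \xi, \eta\rangle_J = 2\pi\bigl(S(\xi, J\eta) - i S(\xi, \eta)\bigr)$ from \cref{thm:tamedcompstruct}\ref{thmitm:tamedcompstruct-compHilb}, which in particular gives $\Im\langle \xi_1, \xi_2\rangle_J = 2\pi S(\xi_2, \xi_1)$.

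The crux is then an analyticity argument. Since $\xi \mapsto e^\xi$ is an entire $\hilb S(V_J)$-valued function on $V_J$ (a convergent power series in $\xi$ with coefficients in the $\CC$-vector spaces $\Sym^k(V_J)$), the left-hand side of the displayed identity is holomorphic in $\xi_1 \in V_J$ and antiholomorphic in $\xi_2 \in V_J$; the factor $e^{\langle \xi_1, \xi_2\rangle_J}$ has the same bi-analyticity type and is nowhere zero. Dividing it out forces $(\xi_1, \xi_2) \mapsto g(\xi_2 - \xi_1)$ to be holomorphic in $\xi_1$. On the other hand $g$ is antiholomorphic on $V_J$ (since the Hermitian inner product is conjugate-linear in its second slot), and the substitution $\xi_1 \mapsto \xi_2 - \xi_1$ is $\CC$-affine, so $g(\xi_2 - \xi_1)$ is also antiholomorphic in $\xi_1$. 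Being simultaneously holomorphic and antiholomorphic on the connected complex vector space $V_J$, this function is constant in $\xi_1$; specialising to $\xi_1 = \xi_2$ identifies the constant as $g(0) = \langle v, \Omega\rangle$. Totality of coherent vectors then yields $v = \langle v, \Omega\rangle \cdot \Omega$, and cyclicity of $\Omega$ propagates this to $A = \langle v, \Omega\rangle \cdot I$.

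The main obstacle I anticipate is keeping the holomorphy bookkeeping clean when $V_J$ is infinite-dimensional: the two-variable function $\langle A e^{\xi_1}, e^{\xi_2}\rangle$ has to be shown to be separately (anti)holomorphic in each argument, which rests on the $\hilb S(V_J)$-valued power series for $e^\xi$ converging absolutely on affine complex lines and on the bounded $\CC$-linear operator $A$ commuting with vector-valued holomorphy in the usual way. Once that is in place, the argument collapses to the Liouville-type observation that a function which is both holomorphic and antiholomorphic on a connected complex vector space must be constant.
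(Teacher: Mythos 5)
Your argument is correct and takes essentially the same approach as the paper's proof: both rest on Schur's lemma, cyclicity of the vacuum $\Omega = 1$ under the Weyl operators, and a holomorphic-versus-antiholomorphic rigidity argument forcing $A\Omega$ into $\CC \cdot \Omega$, after which cyclicity gives $A = \alpha \cdot \mathrm{id}$. The differences are only cosmetic: the paper runs the analyticity argument on the one-variable function $z \mapsto \langle T1, e^{z\xi} \rangle$, using unitarity and $T^*$ to get holomorphy, whereas you factor the coherent-state matrix elements through the nonvanishing kernel $e^{\langle \xi_1, \xi_2 \rangle_J}$ (a small point: $c(\xi_2, -\xi_2) = 1$ uses the skewness $S(\xi,\xi) = 0$, not bilinearity alone).
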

\begin{proof}
(After \parencite[Theorem 7.1]{ismagilov:reps-infdim}.) Let $T$ be an endomorphism of $W_J$. We wish to show that $T$ is a scalar multiple of the identity operator because that will imply the irreducibility of $W_J$ by Schur's lemma.

Let us first study the vector $T(1)$, where $1 \in \CC =\vcentcolon \Sym^0(V_J) \subseteq \hilb S(V_J)$. Define for a fixed vector $\xi \in V_J$ the function
\begin{equation}
\label{eq:weylrep-irred-func}
\CC \to \CC, \qquad z \mapsto \langle T1, e^{z\xi} \rangle.
\end{equation}
By expanding $e^{z\xi}$ into a power series we see that this function is anti-holomorphic. On the other hand, using the unitarity of $W_J$ and that $1 = e^0$, we can rewrite it as
\begin{align*}
\langle T1, e^{z\xi} \rangle &= \bigl\langle W_J(-z\xi, 1)(T1), W_J(-z\xi, 1)(e^{z\xi})\bigr\rangle \\
	&= \bigl\langle T W_J(-z\xi, 1)(1), e^{\frac12 \abs z^2 \langle \xi, \xi \rangle_J} \cdot 1\bigr\rangle \\
	&= \bigl\langle T(e^{-\frac12 \abs z^2 \langle \xi, \xi \rangle_J} \cdot e^{-z\xi}), e^{\frac12 \abs z^2 \langle \xi, \xi \rangle_J} \cdot 1\bigr\rangle \\
	&= \langle T e^{-z\xi}, 1 \rangle = \langle e^{-z\xi}, T^* 1 \rangle.
\end{align*}
Hence~\eqref{eq:weylrep-irred-func} is also holomorphic. Therefore, it is a constant function. It then follows from the expressions~\eqref{eq:cohvecs-dense-power} and~\eqref{eq:cohvecs-dense-genmon} respectively that $\langle T1, \xi^k \rangle = 0$ and $\langle T1, \xi_1 \cdots \xi_k \rangle = 0$ for all $k \geq 1$ and monomials $\xi_1 \cdots \xi_k \in \Sym^k(V_J)$. This means that $T(1)$ is a scalar multiple of $1$, say, $T(1) = \alpha \in \CC$. Then also $T(e^\xi) = \alpha \cdot e^\xi$ holds for all $\xi \in V_J$ since
\[
e^\xi = e^{\frac12 \langle \xi, \xi \rangle_J} W_J(\xi, 1)(1).
\]
The density of the coherent vectors in $\hilb S(V_J)$ finishes the argument. In other words, we use that $1$ (or, more generally, any other coherent vector) is cyclic for $W_J$.
\end{proof}

The following \namecref{thm:weylrep-functor} explains that the construction of a Weyl representation is in a certain sense functorial.

\begin{prop}
\label{thm:weylrep-functor}
Let $(V_i, S_i, J_i)$ for $i=1,2$ be two triples satisfying the demands above and let $g\colon V_1 \xrightarrow{\sim} V_2$ be an $\RR$-linear isomorphism which preserves the skew forms $S_i$ and intertwines the complex structures $J_i$. Write $g^* \xi_1 \in V_2$ for the image of $\xi_1 \in V_1$. Then
\begin{enumerate}
\item $g$ lifts to a continuous group isomorphism $\centV_1 \xrightarrow{\sim} \centV_2$ between the associated Heisenberg groups via $g \cdot (\xi_1, z_1) := (g^* \xi_1, z_1)$, where $(\xi_1, z_1) \in \centV_1$, and
\item there is a unitary operator $U(g)\colon \hilb S((V_1)_{J_1}) \xrightarrow{\sim} \hilb S((V_2)_{J_2})$ which extends the ($\RR$-linear) isomorphism $g$ between the subspaces $(V_i)_{J_i} = \Sym^1((V_i)_{J_i}) \subseteq \hilb S((V_i)_{J_i})$,
\end{enumerate}
such that $U(g)$ intertwines the Weyl representations $W_{J_i}$ of $\centV_i$ on $\hilb S((V_i)_{J_i})$, that is,
\[
U(g) W_{J_1}(\xi_1, z_1) U(g)^* = W_{J_2}\bigl(g\cdot(\xi_1, z_1)\bigr).
\]
\end{prop}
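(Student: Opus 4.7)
The plan is to build each of the three pieces in turn, exploiting the fact that the hypotheses on $g$ make it an isometry in every relevant sense.

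First I would observe that the preservation of the $S_i$ together with the intertwining of the $J_i$ forces $g$ to be an isometry of the complex pre-Hilbert spaces $(V_1)_{J_1}$ and $(V_2)_{J_2}$. Indeed, by the formula for $\langle\cdot,\cdot\rangle_{J_i}$ in \cref{thm:tamedcompstruct}\ref{thmitm:tamedcompstruct-compHilb}, one computes directly
\[
\langle g^*\xi, g^*\eta \rangle_{J_2} = 2\pi\bigl(S_2(g^*\xi, J_2 g^*\eta) - i S_2(g^*\xi, g^*\eta)\bigr) = 2\pi\bigl(S_1(\xi, J_1\eta) - iS_1(\xi,\eta)\bigr) = \langle \xi, \eta \rangle_{J_1}
\]
for all $\xi, \eta \in V_1$, using $J_2 g^* = g^* J_1$ and the preservation of $S_i$. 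In particular, $g$ is continuous for the norm topologies, so it is $\CC$-linear as a map $(V_1)_{J_1} \to (V_2)_{J_2}$.

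For the first item, define the lift on underlying sets by $g\cdot(\xi_1, z_1) := (g^*\xi_1, z_1)$. That this respects the Heisenberg multiplications amounts to the identity $c_2(g^*\xi, g^*\eta) = c_1(\xi, \eta)$, which is immediate from $S_2(g^*\xi, g^*\eta) = S_1(\xi, \eta)$. Continuity follows from the continuity of $g$ just noted, together with the product topology on $\centV_i = V_i \times \phasegp$.

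For the second item, I would extend $g$ factor-wise to monomials by setting
\[
U(g)(\xi_1 \xi_2 \cdots \xi_k) := (g^*\xi_1)(g^*\xi_2) \cdots (g^*\xi_k),
\]
extend complex linearly to all of $\Sym^*((V_1)_{J_1})$, and then verify via the formula~\eqref{eq:fockspinnerprod} that $U(g)$ preserves inner products of monomials; this is a direct consequence of $g$ being a $\CC$-linear isometry on $(V_i)_{J_i}$. Consequently $U(g)$ extends to a unitary isomorphism $\hilb S((V_1)_{J_1}) \to \hilb S((V_2)_{J_2})$. Inspecting the definition of coherent vectors, one sees also that $U(g)(e^\kappa) = e^{g^*\kappa}$ for every $\kappa \in (V_1)_{J_1}$.

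For the third and final item, I would verify the intertwining property on the dense family of coherent vectors, using \cref{thm:cohvecs-props}\ref{thmitm:cohvecs-props-dense} to conclude. On the one hand,
\[
U(g) W_{J_1}(\xi_1, z_1)(e^\kappa) = z_1 \cdot e^{-\tfrac12 \langle \xi_1, \xi_1 \rangle_{J_1} - \langle \kappa, \xi_1 \rangle_{J_1}} \cdot e^{g^*\kappa + g^*\xi_1},
\]
and on the other,
\[
W_{J_2}\bigl(g\cdot(\xi_1, z_1)\bigr)\bigl(U(g)(e^\kappa)\bigr) = z_1 \cdot e^{-\tfrac12 \langle g^*\xi_1, g^*\xi_1 \rangle_{J_2} - \langle g^*\kappa, g^*\xi_1 \rangle_{J_2}} \cdot e^{g^*\kappa + g^*\xi_1}.
\]
The two scalars in front agree because $g$ is a Hermitian isometry, and the two coherent vectors are manifestly the same.

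The computations are all routine, so there is no serious obstacle. The one point that deserves a little care is confirming that $g$ really is an isometry of the complex pre-Hilbert spaces from the bare hypotheses given, because only then does everything downstream (unitarity of $U(g)$, matching of the scalar prefactors in the Weyl action) fall out automatically.
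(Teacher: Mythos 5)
Your proposal is correct and follows essentially the same route as the paper: the paper proves the absolute case (\cref{thm:weylrep-intertwin}) by the identical computations---cocycle invariance from preservation of $S$, factor-wise extension to monomials (equivalently $e^\kappa \mapsto e^{g^*\kappa}$ on coherent vectors) with unitarity via~\eqref{eq:fockspinnerprod}, and verification of the intertwining relation on coherent vectors using preservation of $\langle \cdot, \cdot \rangle_J$---and simply remarks that the relative case is proved the same way. One tiny wording slip: the $\CC$-linearity of $g\colon (V_1)_{J_1} \to (V_2)_{J_2}$ comes from the intertwining relation $g^* J_1 = J_2 g^*$ together with $\RR$-linearity, not from continuity, but since that hypothesis is available your argument stands unchanged.
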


The proof of the claim above is identical to the one in the `absolute' (as opposed to `relative') situation when $(V_1, S_1, J_1) = (V_2, S_2, J_2)$. We will therefore only give a proof in this latter case and we simultaneously include the hypothesis that we have a group of such automorphisms $g$:

\begin{prop}
\label{thm:weylrep-intertwin}
Let $G$ be a topological group acting by $\RR$-linear automorphisms on $V$ which preserves $S$, commutes with $J$ and is strongly continuous with respect to the norm topology on $V$ induced by $\langle \cdot, \cdot \rangle_J$. Write $g^*\xi$ for the translate of $\xi \in V$ by $g \in G$. Then
\begin{enumerate}
\item\label{thmitm:weylrep-intertwin-heis} $G$ acts strongly continuously on $\centV$ as $g \cdot (\xi, z) := (g^* \xi, z)$, where $(\xi, z) \in \centV$, and
\item\label{thmitm:weylrep-intertwin-hilb} there is a representation $U\colon G \to \lieU(\hilb S(V_J))$ which extends the ($\RR$-linear) action on the subspace $V_J = \Sym^1(V_J) \subseteq \hilb S(V_J)$,
\end{enumerate}
such that the intertwining property
\begin{equation}
\label{eq:weylrep-intertwin}
U(g) W_J(\xi, z) U(g)^* = W_J\bigl(g \cdot (\xi, z)\bigr)
\end{equation}
is satisfied.
\end{prop}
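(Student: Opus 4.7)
The plan proceeds in four steps, going from $V$ to $\centV$ to $\hilb S(V_J)$.

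For part~\ref{thmitm:weylrep-intertwin-heis}, the action $g \cdot (\xi, z) := (g^*\xi, z)$ is a group action on the underlying set because the $G$-action on $V$ is one. It respects the multiplication on $\centV$ precisely because $G$ preserves the skew form $S$, hence the cocycle $c(\xi,\eta) = e^{-2\pi i S(\xi,\eta)}$. Strong continuity on $\centV$ is immediate since $\centV \cong V \times \phasegp$ carries the product topology and $G$ already acts strongly continuously on $V$.

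For part~\ref{thmitm:weylrep-intertwin-hilb}, observe first that because $g$ commutes with $J$ it induces a $\CC$-linear operator on $V_J$, and because $g$ additionally preserves $S$, the formula $\langle \cdot,\cdot\rangle_J = 2\pi(S(\cdot, J\cdot) - iS(\cdot,\cdot))$ shows that $g$ is unitary on $V_J$. I then define $U(g)$ on $\hilb S(V_J)$ by acting $\CC$-linearly and factor-wise on monomials,
\[
U(g)(\xi_1 \cdots \xi_k) := (g\xi_1)\cdots (g\xi_k),
\]
which preserves the inner product~\eqref{eq:fockspinnerprod} because $g$ preserves $\langle\cdot,\cdot\rangle_J$, and extends by density to a unitary operator on $\hilb S(V_J)$. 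On coherent vectors this takes the clean form $U(g)(e^\xi) = e^{g\xi}$, from which the multiplicativity $U(gh) = U(g)U(h)$ is transparent.

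The intertwining identity~\eqref{eq:weylrep-intertwin} now reduces to a single calculation on the dense subspace of coherent vectors. Applying the left-hand side to $e^\kappa$ gives
\[
U(g) W_J(\xi, z)\bigl(e^{g^{-1}\kappa}\bigr) = z \cdot e^{-\frac12 \langle \xi, \xi\rangle_J - \langle g^{-1}\kappa, \xi\rangle_J} \cdot e^{\kappa + g\xi},
\]
while the right-hand side equals $W_J(g^*\xi, z)(e^\kappa) = z \cdot e^{-\frac12 \langle g\xi, g\xi\rangle_J - \langle \kappa, g\xi\rangle_J} \cdot e^{\kappa + g\xi}$. These two expressions agree because $\langle g\xi, g\xi\rangle_J = \langle \xi,\xi\rangle_J$ and $\langle g^{-1}\kappa, \xi\rangle_J = \langle \kappa, g\xi\rangle_J$ by unitarity of $g$ on $V_J$.

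The one point requiring care -- and the main obstacle -- is the strong continuity of $U\colon G \to \lieU(\hilb S(V_J))$. On a coherent vector $e^\xi$, the map $g \mapsto U(g)e^\xi = e^{g\xi}$ is continuous by strong continuity of the $G$-action on $V_J$ combined with \cref{thm:cohvecs-props}\ref{thmitm:cohvecs-props-cont}, so continuity holds on the dense subspace of finite linear combinations of coherent vectors. I then upgrade this to all of $\hilb S(V_J)$ by the standard $\epsilon/3$ argument: given $v \in \hilb S(V_J)$ and $\epsilon > 0$, choose a linear combination $v'$ of coherent vectors with $\norm{v - v'} < \epsilon/3$, and estimate $\norm{U(g)v - v} \leq 2\norm{v - v'} + \norm{U(g)v' - v'}$, which is less than $\epsilon$ for $g$ in a sufficiently small neighbourhood of the identity, using unitarity of $U(g)$ on the first term.
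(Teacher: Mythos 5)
Your proposal is correct and follows essentially the same route as the paper's proof: unitarity of $g$ on $(V_J,\langle\cdot,\cdot\rangle_J)$ from $S$-invariance plus commuting with $J$, factor-wise (equivalently coherent-vector) definition of $U(g)$, the intertwining check on coherent vectors, and strong continuity via \cref{thm:cohvecs-props}. The only difference is that you spell out the $\epsilon/3$ density upgrade for strong continuity and the continuity of the action on $\centV$, which the paper leaves implicit.
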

\begin{proof}
Indeed, \ref{thmitm:weylrep-intertwin-heis} is true because $G$ preserves $S$, and so it also does not perturb the cocycle $c$ defining $\centV$.

To prove \ref{thmitm:weylrep-intertwin-hilb}, we first note that since $G$ commutes with $J$ it acts by $\CC$-linear operators on $V_J$. Moreover, it commuting with $J$ and preserving $S$ implies by the definition of the Hermitian inner product $\langle \cdot, \cdot \rangle_J$ in \cref{thm:tamedcompstruct}\ref{thmitm:tamedcompstruct-compHilb} that $G$ preserves $\langle \cdot, \cdot \rangle_J$. That is, $G$ acts by unitary operators on the complex pre-Hilbert space $(V_J, \langle \cdot, \cdot \rangle_J)$.

By applying $G$ factor-wise to monomials in the vectors of $V_J$, we see from the definition of the inner product~\eqref{eq:fockspinnerprod} that we get a unitary $G$-action on $\Sym^*(V_J)$ and hence also on its completed Hilbert space $\hilb S(V_J)$. We denote the latter action by $U$. Equivalently, we may define $U$ first argument-wise on the coherent vectors by setting $U(g)(e^\kappa) := e^{g^*\kappa}$. \cref{thm:cohvecs-props}\ref{thmitm:cohvecs-props-innerprod} says that this respects inner products and therefore the action extends to all of $\hilb S(V_J)$.

Proving the strong continuity of $U$ comes down to checking whether, if $\kappa \in V_J$, then $g_n \to g$ in $G$ implies that $e^{g_n^* \kappa} \to e^{g^* \kappa}$ in $\hilb S(V_J)$. This follows from the strong continuity of the $G$-action on $V_J$ and \cref{thm:cohvecs-props}\ref{thmitm:cohvecs-props-cont}.

To show the intertwining property~\eqref{eq:weylrep-intertwin} it suffices to equate the actions of the operators on both sides on coherent vectors. If $\kappa \in V_J$, then
\begin{align*}
U(g) W_J(\xi, z) U(g)^*(e^\kappa) &= U(g) W_J(\xi, z) e^{(g^{-1})^* \kappa} \\
	&= z \cdot e^{-\frac12 \langle \xi, \xi \rangle_J - \langle (g^{-1})^* \kappa, \xi \rangle_J} \cdot U(g)\bigl(e^{(g^{-1})^*\kappa + \epsilon}\bigr) \\
	&= z \cdot e^{-\frac12 \langle \xi, \xi \rangle_J - \langle (g^{-1})^* \kappa, \xi \rangle_J} \cdot e^{\kappa + g^*\xi}.
\end{align*}
Because $g$ preserves $\langle \cdot, \cdot \rangle_J$ this is equal to
\[
W_J(g^* \xi, z)(e^\kappa) = W_J\bigl(g \cdot (\xi, z)\bigr)(e^\kappa). \qedhere
\]
\end{proof}

\begin{rmk}[The groupoid of Weyl representations]
While the above result easily allows one to exhibit many symmetries of a Weyl representation---sufficiently many for the purposes in this thesis---it is worth noting that these are not the only ones such a representation possesses. It namely turns out that, in order to implement the elements of a group $G$, they need not necessarily commute with $J$, but conjugating $J$ should merely `not distort $J$ too much'. More precisely, for an element $g \in G$ the commutator $[g, J]$ should be a \defn{Hilbert--Schmidt operator} on $V_J$. This condition is necessary as well. The result is then not a representation of $G$ itself on $\hilb S(V_J)$, but of a certain $\phasegp$-central extension instead. We refer for these claims to \parencite[Proposition 9.5.9]{pressley:loopgps}.

We can therefore say that to the space of complex structures on $V$ that satisfy \cref{thm:tamedcompstruct} there is associated a category of which the objects are the corresponding Weyl representations and the morphisms are the $\centV$-intertwiners. Schur's lemma then tells us that this is a groupoid, while the aforementioned claims (or, rather, their generalisations to the `relative' case) declare that it is in general not connected: the connected components are exactly the full subgroupoids associated to each \emph{polarisation class} of complex structures that differ from each other by a Hilbert--Schmidt operator.
\end{rmk}

With these Weyl representations we have constructed one family of irreducible representations of a Heisenberg group. Having the classical Stone--von Neumann theorem for finite-dimensional Heisenberg groups in mind (see for example \parencite[Theorem 1.50]{folland:phasespace}), it is reasonable to ask to what extent a similar uniqueness result holds for the possibly infinite-dimensional Heisenberg groups we are considering here. Of course, the action~\eqref{eq:weylrep} can be tweaked a little by letting the central subgroup $\phasegp$ act by a non-trivial character instead. It turns out that this is the only freedom we have when we additionally assume the positive energy property:

\begin{thm}[The Stone--von Neumann theorem for positive energy representations]
\label{thm:stonevonneumann}
(See \parencite[Proposition 9.5.10]{pressley:loopgps}.) Suppose there exists a complex structure $J$ and, for some $m \geq 1$, a $\Rot^{(m)}(S^1)$-action on $V$ satisfying the demands of \cref{thm:weylrep-intertwin} such that the resulting representation of $\Rot^{(m)}(S^1)$ on $\hilb S(V_J)$ is of positive energy. Then $W_J$ is, up to isomorphism, the unique irreducible, positive energy representation of $\centV$ such that the central subgroup $\phasegp$ acts as $z \mapsto z$.
\end{thm}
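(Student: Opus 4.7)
The plan is to construct a unitary $\centV$-intertwiner from $W_J$ to any given irreducible, positive energy representation $Q$ of $\centV$ (with $\phasegp$ acting as $z\mapsto z$) by singling out a vacuum vector in each representation and matching the vacuum matrix coefficients. Let $\hilb H$ be the Hilbert space of $Q$, and let $R'$ be an intertwining $\Rot^{(m)}(S^1)$-action on $\hilb H$ making $Q$ of positive energy. Let $a_0$ be the lowest energy level and pick a unit vector $\Omega \in \hilb H(a_0)$ in the finite-dimensional lowest eigenspace. Observe first that $\Omega$ is cyclic for $Q$: the closed linear span $\hilb K$ of $\{Q(\zeta,1)\Omega : \zeta \in V\}$ is invariant under $Q(V\times\{1\})$ because $Q(\zeta',1)Q(\zeta,1)\Omega = c(\zeta',\zeta)\cdot Q(\zeta+\zeta',1)\Omega$, and it is invariant under the central $\phasegp$ (which acts by scalars); irreducibility forces $\hilb K = \hilb H$. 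Using the cocycle identity $(\eta,1)^{-1}(\xi,1) = (\xi-\eta, e^{2\pi i S(\eta,\xi)})$ in $\centV$, one then has
\[
\langle Q(\xi,1)\Omega, Q(\eta,1)\Omega\rangle = e^{-2\pi i S(\eta,\xi)}\, f(\xi-\eta),
\]
where $f(\zeta) := \langle Q(\zeta,1)\Omega,\Omega\rangle$ is the vacuum matrix coefficient. So the whole inner product structure on the cyclic vectors is governed by the single function $f$.

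The plan is then to show $f(\zeta) = e^{-\pi g_J(\zeta,\zeta)}$ for all $\zeta \in V$. This is precisely the vacuum matrix coefficient of $W_J$ computed at the coherent vector $1 = e^0$, using the formula in~\eqref{eq:weylrep} together with \cref{thm:cohvecs-props}\ref{thmitm:cohvecs-props-innerprod} and $\langle\zeta,\zeta\rangle_J = 2\pi g_J(\zeta,\zeta)$. To extract this formula I would pass to the infinitesimal picture: for each $\zeta \in V$ the one-parameter subgroup $t \mapsto Q(t\zeta,1)$ of $\centV$ is strongly continuous, so by Stone's theorem it has a self-adjoint generator $\phi(\zeta)$, and on a common invariant dense domain of analytic vectors these satisfy the canonical commutation relations $[\phi(\zeta),\phi(\eta)] = -4\pi i\, S(\zeta,\eta)\cdot I$. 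Using $J$, define $a(\kappa) := \tfrac12(\phi(\kappa) + i\phi(J\kappa))$ and $a^*(\kappa) := \tfrac12(\phi(\kappa) - i\phi(J\kappa))$ for $\kappa \in V$; the compatibility between $S$ and $J$ in \cref{thm:tamedcompstruct} promotes the CCR to the Fock-style relations $[a(\kappa),a^*(\kappa')] = \langle\kappa,\kappa'\rangle_J\cdot I$ and $[a,a] = [a^*,a^*] = 0$. The positive energy hypothesis on $\hilb S(V_J)$ forces the $\Rot^{(m)}(S^1)$-spectrum of $V_J$ (seen inside $V\otimes\CC$) to be non-negative: otherwise $(V_J)^{\otimes k}$ would generate arbitrarily negative energies in $\hilb S(V_J)$. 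Combined with the intertwining relation between $R'$ and $Q$, the operator $a(\kappa)$ then strictly lowers $R'$-energy, so $a(\kappa)\Omega$ lies at energy below $a_0$ and therefore vanishes. From $a(\kappa)\Omega = 0$ for all $\kappa$, a Baker--Campbell--Hausdorff computation applied to $Q(\zeta,1) = \exp(i\phi(\zeta))$, expressed via $a$ and $a^*$, yields $f(\zeta) = e^{-\pi g_J(\zeta,\zeta)}$.

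Once $f$ is identified, the isomorphism is built as follows. The subspace of $\hilb S(V_J)$ spanned by the vectors $W_J(\zeta,1)\cdot 1 = e^{-\frac12\langle\zeta,\zeta\rangle_J}\,e^\zeta$ is dense by \cref{thm:cohvecs-props}\ref{thmitm:cohvecs-props-dense}. Define
\[
T\bigl(W_J(\zeta,1)\cdot 1\bigr) := Q(\zeta,1)\Omega
\]
and extend linearly. The matrix coefficient identity gives
\[
\bigl\langle T\bigl(W_J(\xi,1)\cdot 1\bigr), T\bigl(W_J(\eta,1)\cdot 1\bigr)\bigr\rangle = \bigl\langle W_J(\xi,1)\cdot 1, W_J(\eta,1)\cdot 1\bigr\rangle
\]
(both sides reduce to $e^{-2\pi i S(\eta,\xi)}\cdot e^{-\pi g_J(\xi-\eta,\xi-\eta)}$ after the same cocycle manipulation done for $Q$), so $T$ is an isometry and extends to a unitary $\hilb S(V_J) \xrightarrow{\sim} \hilb H$. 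By construction $T$ intertwines $W_J$ and $Q$ on the dense subspace, hence everywhere; Schur's lemma and the irreducibility of $W_J$ and $Q$ make it a $\centV$-isomorphism.

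The main obstacle is the analytic work of the second paragraph. The operators $\phi(\zeta), a(\kappa), a^*(\kappa)$ are unbounded, and their commutation relations only hold on dense domains; one must verify that $\Omega$ lies in these domains, that the BCH-type computation for $e^{i\phi(\zeta)}\Omega$ is valid, and most delicately that the $\Rot^{(m)}(S^1)$-spectrum on $V$ splits so that annihilation operators really do lower $R'$-energy on $\Omega$. Once $a(\kappa)\Omega = 0$ is established rigorously, the remainder of the proof is essentially formal.
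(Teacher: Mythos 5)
The paper does not actually prove this theorem: it cites \parencite[Proposition 9.5.10]{pressley:loopgps} and remarks only that the proof ``involves the Lie algebra of $\centV$ and its representation by densely defined, skew-adjoint operators'', which is precisely the route you sketch. Your formal skeleton is the standard one and is sound: cyclicity of a lowest-energy unit vector $\Omega$, reduction of all inner products among the vectors $Q(\zeta,1)\Omega$ to the single vacuum coefficient $f(\zeta) = \langle Q(\zeta,1)\Omega,\Omega\rangle$ via the group law, identification of $f$ with the Fock vacuum coefficient $e^{-\pi g_J(\zeta,\zeta)}$, and the GNS-style isometry $T$ extending to a unitary intertwiner. (One cosmetic slip: with the paper's cocycle $c(\xi,\eta) = e^{-2\pi i S(\xi,\eta)}$ one gets $(\eta,1)^{-1}(\xi,1) = \bigl(\xi-\eta, e^{+2\pi i S(\eta,\xi)}\bigr)$, so your phase has the opposite sign; this is harmless because the same group-law phase appears on both sides of the isometry check.)

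The genuine gap is the one you flag yourself, and it is not peripheral: the analytic middle step \emph{is} the content of the theorem, and it is left unverified. Two points there need explicit attention. First, energy-lowering alone yields $a(\kappa)\Omega = 0$ only if the $\Rot^{(m)}(S^1)$-spectrum on $V_J$ is \emph{strictly} positive; non-negativity is not enough, since a rotation-invariant $\kappa \in V_J$ would make $a(\kappa)$ preserve the lowest eigenspace. Strict positivity does follow from the hypothesis, but through the finite-type clause of \cref{dfn:posenergyrep}: a non-zero invariant $\kappa$ would produce the mutually orthogonal vectors $\kappa^n \in \Sym^n(V_J)$, all of energy zero, contradicting $\dim \hilb S(V_J)(0) < \infty$. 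Your phrase ``strictly lowers energy'' silently uses this. Second, the Stone generators $\phi(\zeta)$, the commutation relations, the energy-shift property of $a(\kappa)$, and the Baker--Campbell--Hausdorff evaluation of $\langle Q(\zeta,1)\Omega,\Omega\rangle$ all concern unbounded operators, so one must exhibit a common invariant dense domain of analytic vectors (built, say, from the finite-dimensional energy eigenspaces of $R'$) on which $\Omega$ lies and on which these manipulations are legitimate; that is exactly where the cited \parencite[Proposition 9.5.10]{pressley:loopgps} does its work. As written, your text is the correct strategy --- the same one the paper defers to --- but not yet a proof.
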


According to \cref{thm:posrepcontainsirrep}, the positive energy assumption additionally gives us complete knowledge of representations of $\centV$ that are not necessarily irreducible: they are simply direct sums of the unique irreducible one.

The proof of the above \namecref{thm:stonevonneumann} involves the Lie algebra of $\centV$ and its representation by densely defined, skew-adjoint operators on any representation of $\centV$. These are topics we did not have the chance to treat.

\backmatter

\printbibliography
\chapter{Samenvatting}

Hier zal ik een poging doen de belangrijkste onderwerpen en resultaten uit dit proefschrift op een meer elementaire manier uiteen te zetten. We zullen eerst de unigekleurde toruslusgroepen uit Hoofdstuk~\hyperref[chap:unicol]{2} behandelen, om vervolgens over te gaan naar de theorie van de bigekleurde toruslusgroepen uit Hoofdstuk~\hyperref[chap:bicol]{3}.

\section{Unigekleurde toruslusgroepen}

Om uit te leggen wat een toruslusgroep is beginnen we met het kiezen van een willekeurige, maar vaste torus $T$ waar de toruslusgroep van af zal hangen. (De betekenis van het bijvoeglijk naamwoord `unigekleurde' in de titel van deze sectie zal toegelicht worden in de volgende sectie.) Vervolgens beschouwen we de verzameling van alle gesloten lussen die op $T$ liggen. Wij eisen van elke lus slechts dat deze in zekere zin glad is, maar het is hem toegestaan om zichzelf te doorsnijden of meerdere keren om de torus te wikkelen vóór hij sluit. Deze collectie van lussen is dus zeer groot, maar blijkt over een interessante structuur te beschikken. Merk namelijk eerst op dat $T$ als torus niet alleen een meetkundig object is, maar ook een algebraïsche structuur bezit. De punten op $T$ kunnen namelijk bij elkaar worden opgeteld, van elkaar worden afgetrokken en ook is er voor deze optelling een neutraal punt $0$. Dit maakt $T$ een \defn{groep}. Stel een lus $\gamma$ op $T$ nu voor als een afbeelding van de eenheidscirkel $S^1$ naar $T$ toe.
\begin{figure}[bth]
\centering%
	\begin{tikzpicture}[>={Straight Barb}, line width=rule_thickness]	% rule_thickness is defined in tikzlibrarycd.code.tex. So I am copying the line width from the arrow width in tikz-cd diagrams.
    \node[anchor=south west,inner sep=0] (image) at (0,0) {\includegraphics[width=0.7\textwidth]{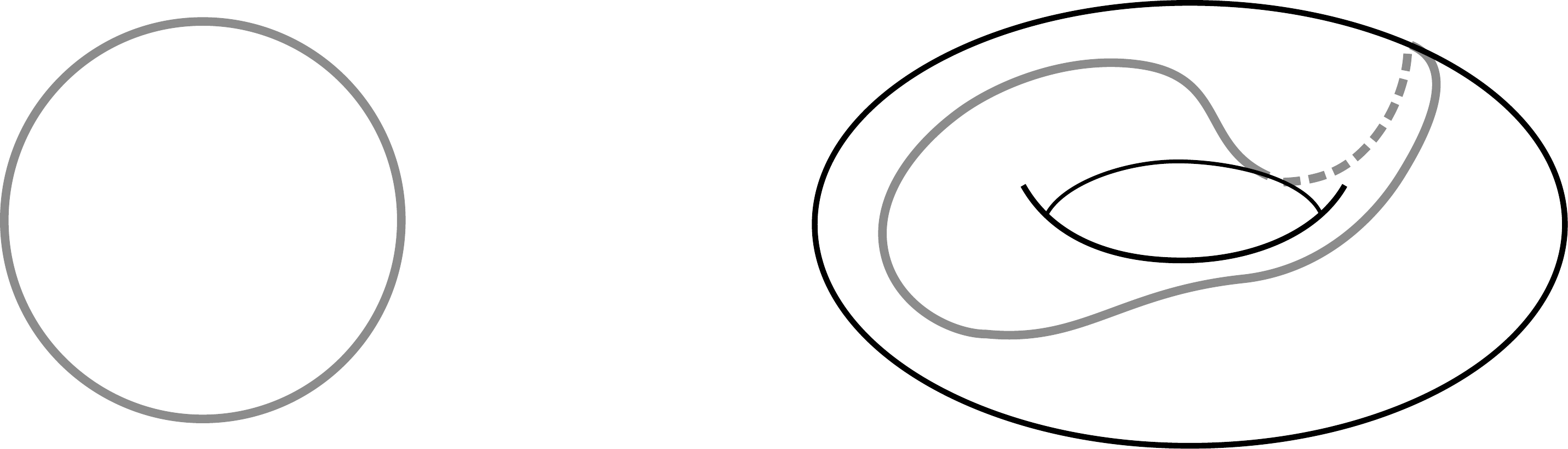}};
    \begin{scope}[x={(image.south east)},y={(image.north west)}]
        \node at (0,0) {$S^1$};
        \node at (1,0) {$T$};
		\draw[white, line width=2.5pt] (0.3, 0.5) to [bend left] node [above] {$\gamma$} (0.55, 0.5);
		\draw[->] (0.3, 0.5) to [bend left] node [above] {$\gamma$} (0.55, 0.5);
    \end{scope}
	\end{tikzpicture}
\end{figure}

(In de illustratie is $T$ als zijnde $2$-dimensionaal weergegeven, maar een hoger-dimensionale torus is in onze discussie zeker ook toegestaan.) Dit wil zeggen dat we de punten op de lus $\gamma$ schrijven als $\gamma(\theta)$, waar $\theta$ een punt op $S^1$ is. Als $\rho$ dan een andere lus op $T$ is kan deze met $\gamma$ puntsgewijs worden opgeteld, wat een nieuwe lus $\gamma + \rho$ oplevert. Hiermee zien we dat de verzameling van alle lussen op $T$ zelf weer een groep is, die we noteren met $LT$ en de \defn{(torus)lusgroep} behorende bij $T$ noemen. Zijn neutrale element voor de optelling is de lus die volledig in het punt $0$ van $T$ geconcentreerd is.

In dit werk hebben niet zo zeer toruslusgroepen zelf onze interesse, als wel bepaalde van hun \defn{centrale uitbreidingen} die geassocieerd zijn aan \defn{roosters}. Een (positief definiet) rooster van dimensie $n$ is een oneindige verzameling regelmatig verdeelde punten in de $n$-dimensionale Euclidische ruimte $\RR^n$ die 1) de oorsprong $0$ bevat, 2) $\RR^n$ opspant, en 3) zodanig gepositioneerd is dat het inprodukt tussen de twee vectoren vanuit $0$ naar iedere twee punten toe een geheel getal is. Onderstaand zijn twee voorbeelden van (delen van) $2$-dimensionale roosters weergegeven.
\begin{figure}[hbt]
\centering
	\begin{minipage}[t]{0.4\textwidth}
	\centering
	\begin{tikzpicture}
	\foreach \i in {0,...,3}
		\foreach \j in {0,...,3} {
			\filldraw (\i,\j) circle (1pt);
		}
	\end{tikzpicture}
	\legend{Een vierkant rooster.}
	\end{minipage}%
	\qquad%
	\begin{minipage}[t]{0.4\textwidth}
	\centering
	\begin{tikzpicture}
	\foreach \i in {0,...,2} {
		\filldraw (\i*1.414, 0) circle (1pt);
	}
	\foreach \i in {0,...,3} {
		\filldraw (-0.707 + \i*1.414, 1.225) circle (1pt);
	}
	\foreach \i in {0,...,2} {
		\filldraw (\i*1.414, 2.449) circle (1pt);
	}
	\foreach \i in {0,...,3} {
		\filldraw (-0.707 + \i*1.414, 3.674) circle (1pt);
	}
	\end{tikzpicture}
	\legend{Een hexagonaal rooster.}
	\end{minipage}
\end{figure}

In hogere dimensies zijn exotischere voorbeelden te vinden en het is vaak vruchtbaar om een rooster als een meetkundig object op zich te beschouwen.

In Sectie~\hyperref[sec:unicol-centext]{2.2} van dit proefschrift wordt verteld hoe, vanuit de toruslusgroep $LT$, met behulp van ieder rooster $\Lambda$ van een speciaal type en van dezelfde dimensie als $T$ een bepaalde grotere groep $\centL T$  gecreëerd kan worden. Een element van $\centL T$ is niet langer slechts een lus, maar een paar $(\gamma, z)$, waar $\gamma$ een lus op $T$ is en dus behoort tot $LT$, terwijl $z$ een complex getal van modulus $1$ is. Merk op dat zulke complexe getallen zelf ook een groep vormen, welke we noteren met $\phasegp$. Voor ieder element van $LT$ bevat $\centL T$ dus zoveel kopieën als er in $\phasegp$ liggen. Tot zover hebben wij $\centL T$ slechts als verzameling beschreven. De rol van het rooster $\Lambda$ ligt in het bepalen van de vermenigvuldiging van de elementen van $\centL T$ met elkaar op zo een manier dat $\centL T$, net zoals $LT$, een groep vormt. We zeggen dat $\centL T$ een \defn{centrale uitbreiding} van $LT$ is (\defn{langs} de groep $\phasegp$).

Een reden om in deze centrale uitbreidingen geïnteresseerd te zijn is het feit dat ze een inzichtelijke \defn{representatietheorie} hebben. Grof gezegd is een \defn{representatie} van een groep een specifieke manier waarop de groep zich kan manifesteren als symmetrieën van een vectorruimte met een inprodukt. Onder een representatie wordt een element van een groep dus voorgesteld als een lineaire transformatie die een gegeven inprodukt behoudt. Het bestuderen van de representaties van een groep kan leiden tot een beter begrip van de groep zelf. Het is handig om hiermee bij de meest elementaire te beginnen, de zogenaamde \defn{irreducibele} representaties, ook omdat deze als bouwstenen kunnen dienen voor algemene representaties.

Voor de centrale uitbreidingen $\centL T$ is deze restrictie helaas toch niet genoeg om inzicht in hun representatietheorie te verkrijgen. Echter blijkt dat wanneer we ook een \defn{positieve energie} conditie eisen deze studie ineens wel behapbaar wordt. In Sectie~\hyperref[sec:unicol-reptheory]{2.5} leggen we namelijk uit dat $\centL T$ slechts eindig veel irreducible, positieve energie representaties bezit en dat deze expliciet te classificeren en construeren zijn.

\section{Bigekleurde toruslusgroepen}

De hier boven uiteengezette theorie van toruslusgroepen, samengevat uit Hoofdstuk~\hyperref[chap:unicol]{2}, is vrij klassiek en al eerder beschreven in de literatuur. In dit proefschrift is een nieuwe generalisatie van toruslusgroepen geïntroduceerd, die we \defn{bigekleurde toruslusgroepen} hebben genoemd, en er is geprobeerd om analoge resultaten over ze te vinden zoals we die kennen over toruslusgroepen.

Voor het definiëren van een bigekleurde toruslusgroep is meer nodig dan een enkele torus. We fixeren in plaats daarvan drie torussen, $T_\wh$, $H$, en $T_\bl$ van dezelfde dimensie, samen met twee surjectieve afbeeldingen, één van $H$ naar $T_\wh$ en één van $H$ naar $T_\bl$. Van deze afbeeldingen wordt geëist dat ze zowel de meetkundige (preciezer: de differentieerbare) structuren van de torussen respecteren, als hun groepsstructuren. We zien $T_\wh$ en $T_\bl$ als respectievelijk een witte en een zwarte torus.

Net zoals een enkele torus heeft ook de zojuist genoemde lijst van data een begrip van `lus', namelijk een \defn{bigekleurde lus}. Deze is gedefinieerd als het geheel van 1) een (glad) pad op $T_\wh$, 2) een (glad) pad op $T_\bl$, en 3) twee aangewezen punten op $H$ zodanig dat ze worden gestuurd naar de eindpunten van de paden op $T_\wh$ en $T_\bl$ onder de afbeeldingen van $H$ naar deze torussen. Het is handig om de eenheidscirkel $S^1$ te zien als zijnde opgeknipt in drie stukken: een linkerhelft, een rechterhelft en hun overlappende twee punten, en vervolgens een bigekleurde lus te beschouwen als een drietal afbeeldingen vanuit deze stukken naar de drie torussen toe.
\begin{figure}[hbt]
\centering%
	\begin{tikzpicture}[>={Straight Barb}, line width=rule_thickness]	% rule_thickness is defined in tikzlibrarycd.code.tex. So I am copying the line width from the arrow width in tikz-cd diagrams.
    \node[anchor=south west,inner sep=0] (image) at (0,0) {\includegraphics[width=0.8\textwidth]{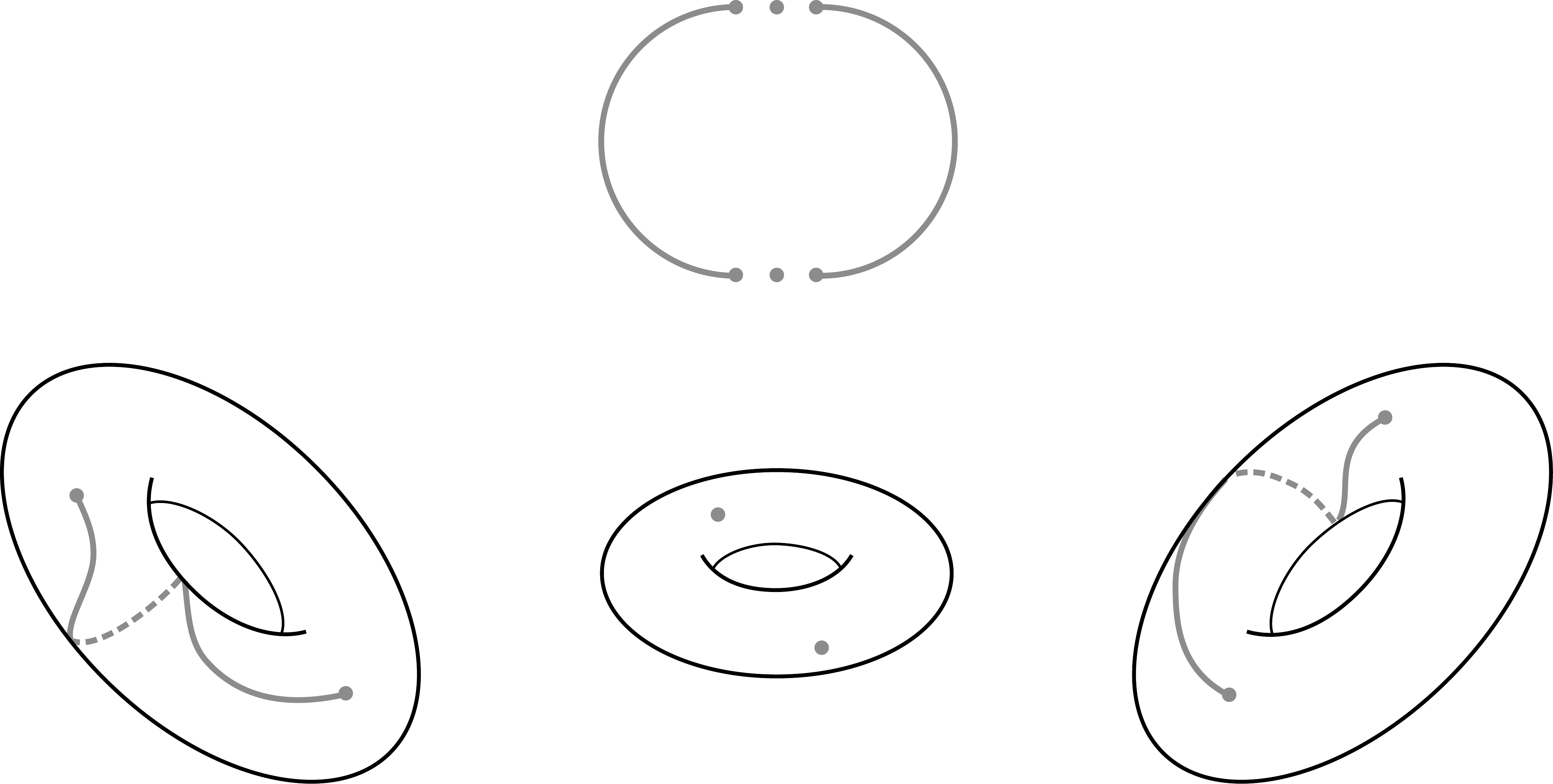}};
    \begin{scope}[x={(image.south east)},y={(image.north west)}]
        \node at (0.04, 0.04) {$T_\wh$};
        \node at (0.96, 0.04) {$T_\bl$};
        \node at (0.5, 0.07) {$H$};
        \node at (0.5, 0.825) {$S^1$};
		\draw[white, line width=2.5pt] (0.5, 0.58) to (0.5, 0.35);
		\draw[->] (0.5, 0.58) to (0.5, 0.35);
		\draw[white, line width=2.5pt] (0.65, 0.75) to [bend left] (0.85, 0.44);
		\draw[->] (0.65, 0.75) to [bend left] (0.85, 0.44);
		\draw[white, line width=2.5pt] (0.35, 0.75) to [bend right] (0.15, 0.44);
		\draw[->] (0.35, 0.75) to [bend right] (0.15, 0.44);
		\draw[->>] (0.64, 0.25) to (0.71, 0.225);
		\draw[->>] (0.36, 0.25) to (0.29, 0.225);
    \end{scope}
	\end{tikzpicture}
\end{figure}

Op dezelfde wijze als in onze eerdere uitleg over gewone lussen op een torus vormt ook de verzameling van bigekleurde lussen een groep onder de puntsgewijze optelling. Deze groep noteren we met $L(T_\wh, H, T_\bl)$ en wordt een \defn{bigekleurde toruslusgroep} genoemd. We zien dat als we de torussen $H$, $T_\wh$ en $T_\bl$ gelijk aan elkaar kiezen en de twee afbeeldingen tussen hen als de identiteit nemen dat $L(T_\wh, H, T_\bl)$ dan niets anders is dan $LH$ (onder enkele extra condities die wij hier niet noemen). Daarom is een bigekleurde toruslusgroep inderdaad een generalisatie van een gewone, en noemen wij dit tweede type achteraf beschouwd \defn{unigekleurd}.

In Sectie~\hyperref[sec:bicol-centext]{3.2} construeren wij $\phasegp$-centrale uitbreidingen $\centL(T_\wh, H, T_\bl)$ van zo een groep $L(T_\wh, H, T_\bl)$ met eigenschappen analoog aan die van unigekleurde groepen. Deze keer vereist dit niet een enkel, maar een drietal roosters $\Lambda_\wh$, $\Gamma$ en $\Lambda_\bl$ van gelijke dimensies, samen met twee injectieve afbeeldingen, één van $\Gamma$ naar $\Lambda_\wh$ en één van $\Gamma$ naar $\Lambda_\bl$. Uiteraard eisen we dat deze afbeeldingen de structuren van de roosters bewaren. Samen met verdere eigenschappen bewezen in Sectie~\hyperref[sec:bicol-centext-diffnS1action]{3.3} toont het bestaan van deze centrale uitbreidingen dat bigekleurde toruslusgroepen een werkelijke generalisatie van de unigekleurde theorie toe laten.

Deze stelling wordt tenslotte kracht bijgezet in Sectie~\hyperref[sec:bicol-reptheory]{3.4} door op een zelfde wijze als in Hoofdstuk~\hyperref[chap:unicol]{2} de irreducibele, positieve energie representaties van zo een groep $\centL(T_\wh, H, T_\bl)$ te classificeren en te construeren. Hier blijkt opnieuw dat er slechts eindig veel van zulke representaties bestaan.

\chapter{Acknowledgements}

Thank you, André, for accepting me as a student and thereby giving me the opportunity to enter the world of research. I am grateful for your trust, patience and the freedom you allowed me. Your efforts of support during trying times and while we were separated geographically and dealt with digital malfunctions were more than a student could ask from an advisor. If I gained any sense of mathematical beauty at all, it owes largely to your lectures, my attempts to view material through your lens and seeing my work slowly evolve by your guiding hand.

The next person to offer thanks to is Erik van den Ban. Erik, while your role as a promotor was initially formal, it expanded greatly in my final year. You became a second sounding board for the ideas developed with André, found crucial, technical errors and helped to repair them. Your experience with administrative matters in the end phase was also much appreciated.

I am grateful to the members of my reading committee, consisting of Marius Crainic, Christopher Douglas, Karl-Hermann Neeb, Karl-Henning Rehren and Christoph Schweigert, for their time and effort spent studying this thesis and their valuable feedback.

Cécile, Jean, Ria and Sylvia, thank you for always keeping your door open for unexpected questions and requests and greeting us with a smile. Without you the workings of the department would immediately grind to a halt. I am also indebted to Gunther Cornelissen for his advice on planning the defence.

I was lucky to share the department with fellow PhD students and postdocs who made work much more enjoyable. Thanks, guys, for the lunch breaks we had and especially Ajinkya, Anshui, Dali, Joost, Pouyan and Sebastian for being great office mates. I would also like to single out all the participants of the \textsc{qft} seminar we organised. Jules, thank you for initiating it, convincing me to take part in it and being a driving force behind it. It gave me much needed practice giving talks and I regret not starting the seminar earlier.

During my visits abroad I had the joy to meet many other students, among whom I want to thank in particular Byung Do, Daniel, Dominic, Lars, Marco, Mark, Massi, Matt and Peter for sharing their knowledge of and passion for mathematics so freely. The staff at the Hausdorff Research Institute for Mathematics deserves praise as well for their hospitality and generosity during my stay in Bonn in the summer of 2015, which created an ideal working atmosphere.

Lastly, I want to thank my parents for all their support throughout my studies, and my sister for teaching me to read.

\end{document}